


\documentclass[leqno, 10pt]{smfart}
\usepackage{amssymb, 
mathrsfs, 
dsfont, 
yfonts, 
amsmath, 
bbm, 
gensymb, 
stmaryrd, 
}
\usepackage{smfthm, smfenum}
\usepackage[dvipsnames]{xcolor}

\usepackage[T1]{fontenc}
\usepackage{graphicx}
\usepackage{enumerate}
\usepackage[english,french]{babel}
\usepackage{geometry}
\usepackage{subcaption}
\usepackage[pdftex, breaklinks, colorlinks]{hyperref} 


\allowdisplaybreaks

\geometry{hmargin=2.5cm, 
vmargin=3cm }
\makeatletter 

\xdefinecolor{definition}{named}{OliveGreen}
\xdefinecolor{proposition}{named}{BrickRed}
\definecolor{dkgreen}{rgb}{0,0.6,0}
\definecolor{gray}{rgb}{0.5,0.5,0.5}
\definecolor{header}{gray}{0.3}


\numberwithin{equation}{section}





\newcommand{\R}{\mathbb{R}}

\newcommand{\Z}{\mathbb{Z}}
\newcommand{\N}{\mathbb{N}}

\newcommand{\F}{{\mathcal F}}
\newcommand{\dir}{{\mathscr{D}}}
\newcommand{\gene}{{\mathcal L}}
\newcommand{\V}{{\mathcal V}}
\newcommand{\W}{{\mathcal W}}
\newcommand{\ctoruspi}{\mathds{S}}

\newcommand{\Prob}{\mathbb{P}}
\newcommand{\1}{\mathbbm{1}}


\newcommand{\abs}[1]{\;\left\vert\;#1 \;\right\vert\;}
\newcommand{\abss}[1]{ \vert \,#1 \,\vert}
\newcommand{\norm}[1]{\left\vert\left\vert #1 \right\vert\right\vert}
\newcommand{\normm}[1]{{\left\vert\kern-0.1ex\left\vert\kern-0.1ex\left\vert\; #1 \; \right\vert\kern-0.1ex\right\vert\kern-0.1ex\right\vert}}
    
\newcommand{\cro}[1]{\left[#1\right]}
\newcommand{\pa}[1]{\left(#1\right)}
\newcommand{\egal}{ \; =\; }

\newcommand{\intro}[1]{{\textit{#1}} \bigskip}

\newcommand{\eqand }{\qquad \mbox{and} \qquad}


\newcommand{\func}[5]{\begin{array}{ccccc}
#1 &~: & #2 & \longrightarrow & #3 \\
& & #4 & \mapsto & #5 \\
\end{array}}

\newcommand{\limep}{\limsup_{\varepsilon\to0}}
\newcommand{\liml}{\limsup_{l\to\infty}}
\newcommand{\limN}{\limsup_{N\to\infty}}

\newcommand{\proofthm}[2]{\begin{proof}[Proof of #1]#2\end{proof}}



\newcommand{\step}[2]{
\bigskip
\noindent \emph{{\large\underline{#1 step~: #2}}}\\
\medskip
}

\newcommand{\aom}{{\alpha_{\omega}}}
\newcommand{\am}{{\alpha}}

\newcommand{\cm}{\mu_{B,\K}}
\newcommand{\cml}{\mu_{l,\K_l}}
\newcommand{\cmlk}{\mu_{l,\K}}
\newcommand{\conf}{\eta}
\newcommand{\confhat}{\widehat{\conf}}
\newcommand{\com}{\conf^{\omega}}
\newcommand{\conftilde}{\conf^{\widehat{\omega}}}

\newcommand{\comp}{\conf^{\omega, p}}
\newcommand{\confrond}{\mathring{\conf}}

\newcommand{\confe}{\xi}
\newcommand{\confehat}{\widehat{\confe}}

\newcommand{\confi}{\zeta}
\newcommand{\confihat}{\widehat{\confi}}

\newcommand{\ctorus}{\mathbb{T}^2}
\newcommand{\cur}{j}
\newcommand{\curom}{\cur^{\omega}}

\newcommand{\curg}{\boldsymbol{\mathfrak{j}}}

\newcommand{\curomz}{\cur^{\widehat{\omega}}}
\newcommand{\comz}{\conf^{\widehat{\omega}}}
\newcommand{\cura}{r}
\newcommand{\curaom}{\cura^{\omega}}

\newcommand{\czero}{{\mathcal C}_0}
\newcommand{\D}{{\mathcal D}}
\newcommand{\ddi}{\boldsymbol \delta_i}

\newcommand{\diffom}{\mathfrak{d}^{\omega}}
\newcommand{\diff}{\mathfrak{d}}
\newcommand{\diffh}{\widehat{\mathfrak{d}}}

\newcommand{\dens}{\widehat{\rho}}
\newcommand{\densep}{\dens_{\varepsilon N}}
\newcommand{\densom}{\rho^{\omega}}
\newcommand{\densl}{\dens_l}\newcommand{\E}{{\mathbb E}}

\newcommand{\Ecm}{\E_{B,\K}}
\newcommand{\Ecml}{\E_{l,\K_l}}
\newcommand{\Ecmlk}{\E_{l,\K}}
\newcommand{\Eref}{\E^*_{\alpha}}
\newcommand{\Egcm}{\E_{\param}}
\newcommand{\epx}{E_{p,x}}
\newcommand{\epxc}{E_{p,x}^c}

\newcommand{\function}{g}
\newcommand{\G}{{\mathscr G}}
\newcommand{\gcm}{\mu_{\widehat{\alpha}}}
\newcommand{\genas}{{\mathcal L}}
\newcommand{\genex}{{\mathcal L}^{\!\!\mbox{\begin{tiny} D\end{tiny}}}}

\newcommand{\genis}{{\mathcal L}^{\!\!\mbox{\begin{tiny} G\end{tiny}}}}
\newcommand{\geniz}{{\mathcal L}^{\!\!\mbox{\begin{tiny} G\end{tiny}}, \beta=0}}
\newcommand{\genizad}{{\mathcal L}^{\!\!\mbox{\begin{tiny} G\end{tiny}}, \beta=0, *}}

\newcommand{\genwa}{{\mathcal L}^{\!\!\mbox{\begin{tiny} W\!\!A\end{tiny}}}}
\newcommand{\genwad}{{\mathcal L}^{\!\!\mbox{\begin{tiny} W\!\!A\end{tiny}},*}}
\newcommand{\grad}{\nabla_{a}}
\newcommand{\halpha}{{\mathcal{H}}^{\omega}_{\param}}
\newcommand{\halphaz}{{\mathcal{H}}^{\omega,0}_{\param}}

\newcommand{\scalstar}[1]{\scal{#1}}
\newcommand{\hx}{H_{x/N}}
\newcommand{\rn}[1]{\rho_{#1}^{N,H}}

\newcommand{\Kset}{{\mathbb{ K}}}
\newcommand{\Ksett}{\widetilde{\mathbb{K}}}
\newcommand{\K}{\widehat{K}}
\newcommand{\cdf}{{\mathfrak{F}}}
\newcommand{\M}{\cdf}
\newcommand{\Mhat}{\meas^{[0,T]}}
\newcommand{\meas}{{\mathcal M}}

\newcommand{\mesinv}{\mu_{\widehat{\alpha}}}
\newcommand{\mesref}{\mu_{\alpha}^*}
\newcommand{\modphi}{\widetilde{\varphi}}

\newcommand{\param}{\widehat{\alpha}}

\newcommand{\pset}{\mathcal{M}_1(\ctoruspi)}

\newcommand{\rhoep}{\rho_{\varepsilon N}}
\newcommand{\rdir}{D}

\newcommand{\rb}{\overline{\rho}}

\newcommand{\Sp}{{\mathcal S}}

\newcommand{\drift}{\mathfrak{s}}
\newcommand{\drifth}{\widehat{\mathfrak{s}}}

\newcommand{\scal}[1]{\ll #1\gg_{\widehat{\alpha}}}

\newcommand{\statespace}{{\Sigma}_N}
\newcommand{\statespaceinf}{\Sigma_{\infty}}
\newcommand{\subspace}{\Sigma^{\K}_l}
\newcommand{\torus}{\mathbb{T}_N^2}
\newcommand{\tzero}{{\mathcal{ T}}_0^\omega}
\newcommand{\tzerm}{\tzero}
\newcommand{\tzn}{{\color{purple}\mathcal{T}_{\param}^\omega}}

\newcommand{\zetahat}{\widehat{\zeta}}

\newcommand{\sdc}{d_s}
\newcommand{\ind}[1]{\1_{\{#1\}}}

\newcommand{\ohat}{\widehat{\omega}}
\newcommand{\curohat}{j^{\ohat}}
\newcommand{\tcal}{{T^\omega}}
\newcommand{\tcall}{T_l^\omega}

\newcommand{\bff}[1]{\boldsymbol{\mathfrak{#1}}}
\newcommand{\uf}{\bff{u}}
\newcommand{\ufbar}{\overline{\uf}}

\newcommand{\Ecmnk}{\E_{n,\K}}
\newcommand{\cmnk}{\mu_{n,\K}}
\newcommand{\Ecmnkt}{\widetilde{\E}_{n,K}}

\newcommand{\Dnk}{{\mathscr{D}}_{n,\K}}
\newcommand{\Dnkt}{\widetilde{{\mathscr{D}}}_{n,K}}
\newcommand{\pbar}{\overline{\psi}}
\newcommand{\ptilde}{\widetilde{\psi}}
\newcommand{\Dnkxt}{\widetilde{{\mathscr{D}}}_{n, K}^{x}}
\newcommand{\Dnkxzt}{\widetilde{{\mathscr{D}}}_{n,K}^{x+z}}


\setcounter{secnumdepth}{3}
\setcounter{tocdepth}{3}


\theoremstyle{plain}
\author[C.Erignoux]{Cl\'ement Erignoux}
\address{Universit\`a degli studi Roma Tre, Largo San Leonardo Murialdo 1, 00146 Roma, RM, Italy}
\email{clement[dot]erignoux[at]gmail[dot]com}
\title{Hydrodynamic Limit for an Active Exclusion Process}
\alttitle{Limite hydrodynamique pour un processus d'exclusion actif}
\subjclass{}
\thanks{\textbf{Acknowledgments. } I would first like to warmly thank Thierry Bodineau, my PhD advisor, for his unwavering support and help writing this article. I would also like to thank Jeremy Quastel for his help navigating his original article and solving the spectral gap issue, Claudio Landim for our numerous discussions on the non-gradient techniques, as well as Julien Tailleur for his insight on active matter and MIPS. I would like to thank the anonymous referee, for both the attention given to my work, and for many insightful comments that significantly improved this article. Finally, I gratefully acknowledge funding from the European Research Council under the European Unions Horizon 2020 Programme, ERC Consolidator GrantUniCoSM (grant agreement no 724939)}
\keywords{Statistical physics, Hydrodynamic Limits, Lattice gases, Out-of-equilibrium systems, Non-gradient systems, Exclusion processes}
\altkeywords{Physique statistique, Limites hydrodynamiques, Gaz sur r\'eseau, Syst\`emes hors-equilibre, Syst\`emes non-gradients, Processus d'exclusion}
\subjclass{Primary 60K35 - Secondary 82C22}

\begin{document}
\selectlanguage{english}
\frontmatter

\selectlanguage{french}
\begin{altabstract}
L'\'etude des dynamiques collectives, observables chez de nombreuses esp\`eces animales, a motiv\'e dans les derni\`eres d\'ecennies un champ de recherche actif et transdisciplinaire. De tels comportements sont souvent mod\'elis\'es par de la mati\`ere active, c'est-\`a-dire  par des mod\`eles dans lesquels chaque individu est caract\'eris\'e par une vitesse propre qui tend \`a s'ajuster selon celle de ses voisins. 

De nombreux mod\`eles de mati\`ere active sont li\'es \`a un mod\`ele fondateur propos\'e en 1995 par Vicsek~\and~al.. Ce dernier, ainsi que de nombreux mod\`eles proches, pr\'esentent une transition de phase entre un comportement chaotique \`a haute temp\'erature, et un comportement global et coh\'erent \`a faible temp\'erature. De  nombreuses preuves num\'eriques de telles transitions de phase ont \'et\'e obtenues dans le cadre des dynamiques collectives. D'un point de vue math\'ematique, toutefois, ces syst\`emes actifs sont encore mal compris. Plusieurs r\'esultats  ont \'et\'e obtenus r\'ecemment sous une approximation de champ moyen, mais il n'y a encore \`a ce jour que peu d'\'etudes math\'ematiques de mod\`eles actifs faisant intervenir des interactions purement microscopiques. 

Dans cet article, nous d\'ecrivons un syst\`eme de particules actives sur r\'eseau interagissant localement pour aligner leurs vitesses. Comme premi\`ere \'etape afin d'atteindre une meilleure compr\'ehension des mod\`eles microscopiques de mati\`ere active, nous obtenons rigoureusement, \`a l'aide du formalisme des limites hydrodynamiques pour les gaz sur r\'eseau, la limite macroscopique de ce syst\`eme hors-\'equilibre. 
Nous d\'eveloppons le travail r\'ealis\'e par Quastel \cite{Quastel1992}, en apportant une preuve plus d\'etaill\'ee et en incorporant plusieurs g\'en\'eralisations posant de nombreuses difficult\'es techniques et ph\'enom\'enologiques. 

\end{altabstract}

\selectlanguage{english}
\begin{abstract}
Collective dynamics can be observed among many animal species, and have given rise in the last decades to an active and interdisciplinary field of study. Such behaviors are often modeled by active matter, in which each individual is self-driven and tends to update its velocity depending on the one of its neighbors.

In a classical model introduced by Vicsek \and al., as well as in numerous related active matter models, a phase transition between chaotic behavior at high temperature and global order at low temperature can be observed. Even though ample evidence of these phase transitions has been obtained for collective dynamics, from a mathematical standpoint, such active systems are not fully understood yet. Significant progress has been achieved in the recent years under an assumption of mean-field interactions, however to this day, few rigorous results have been obtained for models involving purely local interactions.

In this paper, as a first step towards the mathematical understanding of active microscopic dynamics, we describe a lattice active particle system, in which particles interact locally to align their velocities. We obtain rigorously, using the formalism developed for hydrodynamic limits of lattice gases, the scaling limit of this out-of-equilibrium system. 
This article builds on the multi-type exclusion model introduced by Quastel \cite{Quastel1992} by detailing his proof and incorporating several generalizations, adding significant technical and phenomenological difficulties.

\end{abstract}

\maketitle
\tableofcontents
\mainmatter

\section{Introduction}
\subsection{Active matter and active exclusion process}

\emph{Active matter} systems, i.e. microscopic interacting particles models in which each particle consumes energy to self-propel, 
have been the subject of intense scrutiny in physics in the recent years. As explained thoroughly in Appendix \ref{sec:contexte}, 
active matter exhibits a rich phenomenology. Its two most studied features are the emergence of global polarization, first 
discovered with Vicsek's seminal model \cite{Vicsek1995}, and the so-called Motility Induced Phase Separation (MIPS, cf. \cite{CT2015}), 
which can be roughly described as the particle's tendency to cluster where they move more slowly. 
As detailed in Appendix \ref{sec:contexte}, these two phenomena have been extensively studied by the physics community in the last decade  (e.g. \cite{SCB2015} \cite{SChT2015} \cite{ST2015} for alignment phase transition , \cite{CT2013} \cite{CT2015} for MIPS).

By essence, active matter models are driven out-of-equilibrium at a microscopic level, and although many are now well-understood from a physics standpoint, 
their mathematical understanding to this day remains partial. Inspired by Vicsek's original model \cite{Vicsek1995}, significant mathematical progress 
has been achieved using analytical tools for  active alignment models submitted to \emph{mean-field} or \emph{local-field} interactions, i.e. for which the particle's interactions are 
locally averaged out over a large number of their neighbors (e.g. \cite{BCC2011}, \cite{DFL2011}, \cite{DM2007}). However, in some cases,
the local-field approximation is not mathematically justified, and deriving exact results on models with purely microscopic interactions can provide
welcome insight for their phenomenological study \cite{BEKH2017}.

\medskip

Let us start by briefly describing a simplified version of the active exclusion process studied in this article before giving some mathematical context.  
On a two-dimensional  periodic lattice, consider two-types of particles, denoted ''$+$'' and ''$-$'', which move and update their type according to their neighbors.  
\begin{itemize}\item Each particle's type is randomly updated by a Glauber dynamics depending on its nearest neighbors. 
\item The motion of any particle is a random walk, weakly biased in one direction depending on its type : the ''$+$'' particles will tend to move to the right, whereas the ''$-$'' particles will tend to move to the left. 
\item The vertical displacement is symmetric regardless of the particle's type.  
\end{itemize}To model hard-core interactions, an \emph{exclusion rule} is imposed, i.e. two particles cannot be present on the same site : a particle jump towards an occupied site will be canceled. This induces the congestion effects which can lead to MIPS, and one can therefore hope that this model 
encompasses both the alignment phase transition and MIPS which are characteristic of many of the active models described in Appendix \ref{sec:contexte}. However, mathematically proving such phenomenology for our microscopic active model is still out of reach.

\medskip

In this article, as a first step towards this goal, we derive the hydrodynamic limit for an extension of the model briefly described above. 
From a mathematical standpoint, a first microscopic dynamics combining alignment and stirring was introduced in \cite{DMFL1986}, where De Masi et al. considered a lattice gas with two types of particles, in which two neighboring particles can swap their positions, and can change type according to the neighboring particles. They derived the hydrodynamic limit, as well as the fluctuations, when the stirring dynamics is accelerated by a diffusive scaling, w.r.t. the alignment dynamics. 
This scale separation is crucial to have both alignment and stirring present in the hydrodynamic limit. 
Generally, the strategy 
to obtain the hydrodynamic limit for a lattice gas depends significantly on the microscopic features of the model, and must be adapted 
on a case-by-case basis to the considered dynamics. For example, the exclusion rule in the active exclusion process makes it non-gradient, thus the proof of 
its hydrodynamic limit is significantly more elaborate. The end of this introduction is dedicated to putting in context the mathematical 
contributions of this article and  describing the difficulties 
occurring  in the derivation of the hydrodynamic limit of our model.

\subsection{Hydrodynamics limits for non-gradients systems} The active exclusion process presented above belongs to a broad class of microscopic lattice dynamics for which the instantaneous particle currents along any edge cannot be written as a discrete gradient. This difficulty appears naturally in exclusion systems, in particular for systems with multiple particle types, or for generalized exclusion processes where only a fixed number $\kappa$ $ (\kappa\geq 2)$ of particles can be present at the same site.   Such systems are called \emph{non-gradients}. A considerable part of this article is dedicated to solving the difficulties posed by the non-gradient nature the active exclusion process.

The first proof for a non-gradient hydrodynamic limit was obtained by Varadhan in  \cite{Varadhan1994b}, and Quastel \cite{Quastel1992} (cf. below). To illustrate the difficulty let us consider a general diffusive particle system of size $N$ in $1$ dimension, evolving according to a Markov generator $\gene_N$. Such a diffusive system must be rescaled in time by a factor $N^2$, therefore each jump in $\gene_N$ should occur at rate $N^2$. Denoting by $\conf_x$ the state of the system at the site $x$ (e.g. number of particles, energy of the site), $\gene_N \conf_x$  is a microscopic gradient,
\[\gene_N \conf_x=N^2(j_{x-1,x}- j_{x,x+1}),\]
where $j_{x,x+1}$ is the instantaneous current along the edge $(x,x+1)$, and the $N^2$ comes from the time-rescaling. This microscopic gradient balances out a first factor $N$, and acts as a spatial derivative on a macroscopic level. In order to obtain a diffusive equation similar to the heat equation, one needs to absorb the second factor $N$ in a second spatial derivative.  This is the main difficulty for non-gradient systems, for which the instantaneous current $j_{x,x+1}$ does not take the form of a microscopic gradient. The purpose of the non-gradient method developed by Varadhan is to establish a so-called  \emph{microscopic fluctuation-dissipation relation}
\[j_{x,x+1}\simeq -D  (\conf_{x+1}-\conf_x)+\gene_N g_x,\]
where $\gene_N g_x$  is a small fluctuation which usually disappears in the macroscopic limit according to Fick's law for diffusive systems. Although the link to the macroscopic fluctuation-dissipation relation   (cf.  Section 8.8, p140-141 in \cite{SpohnB1991} for more detail on this relation) is not apparent, the latter is indeed a consequence of the microscopic identification above.

\subsection{Multi-type lattice gases, and contributions of this article}

The difficulties to derive the hydrodynamic limit of multi-type particle models vary significantly depending on the  specificities of each microscopic dynamics.
Active matter provides natural examples of multi-type particle systems, since each possible velocity can be interpreted as a different type. When the particles evolve in a continuous space domains, (e.g. \cite{DFL2011}, \cite{DFL2014}) and in the absence of hard-core interactions, the density of each type of particles can essentially be considered independently regarding displacement, and the scaling limit usually decouples the velocity variable and the space variable.
 
In the case of lattice gases, however, it becomes necessary to specify the way particles interact when they are on the same site. Dynamically speaking,  multi-type models often allow either
\begin{itemize}\item  swapping particles with different types, as in \cite{Sasada2010} for a totally asymmetric system with velocity flips.
 \item The coexistence on a same site of particles with different velocities, as in \cite{DMF2015} or \cite{Simas2010} for a model closely related to the one investigated in this article with weak driving forces, or in \cite{DSZ2017} for a zero-range model exhibiting MIPS-like behavior. 
\end{itemize}
These simplifications allow to bypass the specific issues arising for diffusive systems with complete exclusion between particles, since the latter often require the non-gradient tools mentioned previously. 

\medskip

The first hydrodynamic limits for non-gradient microscopic systems were studied by Varadhan and Quastel. They developed in \cite{Varadhan1994b} and \cite{Quastel1992} a general method to derive the hydrodynamic limit for non-gradient systems with main requirement a sharp estimate for the Markov generator's spectral gap. Quastel also notably obtained in \cite{Quastel1992} an explicit expression for the diffusion and conductivity matrices for the multi-type exclusion process, as a function of the various particle densities and of the self-diffusion coefficient $d_s(\rho)$ of a tagged particle for the equilibrium symmetric simple exclusion process with density $\rho$. This result was then partially extended to the weakly asymmetric case (in \cite{QRV1999} as a step to obtain a  large deviation principle for the empirical measure of the symmetric simple exclusion process, and where the asymmetry does not depend on the configuration, and in \cite{GLM2000} for a weak asymmetry with a mean-field dependency in the configuration), as well as a more elaborate dynamics with creation and annihilation of particles \cite{Sasada2009}. 

\bigskip

In this article, we derive the hydrodynamic limit for an active matter lattice gas with purely microscopic interactions. 
To do so, we generalize the results obtained by Quastel \cite{Quastel1992} by incorporating many natural extensions, 
and apply in great detail the non-gradient method for multi-type exclusion with a weak drift. 

There are several reasons behind our choice to detail this difficult proof.
First, Quastel's original article suffers from typos which are fixed in this paper, in particular the spectral gap for the multi-type 
exclusion process is not uniform with respect to the density and this required an adaptation of the original proof.
Second, Quastel's proof relied significantly on the structure of the microscopic dynamics which could be controlled by the symmetric 
exclusion. This played a crucial role in \cite{Quastel1992} to ensure that the particle density does not reach 1, because when this is the case, 
the system loses its mixing properties as represented by the decay of the spectral gap.
 When the considered dynamics is a multi-type symmetric exclusion (identical for any particle type, as in \cite{Quastel1992}), 
 the macroscopic density 
 for the total number of particles evolves according to the heat equation, and density control at any given time is ensured 
 by the maximum principle. In our case, the limiting equation is not diffusive and a priori estimates on the density 
 are much harder to derive.
Finally, \cite{Quastel1992} was one of the first examples of hydrodynamic limit for non-gradient systems, 
and to make the proof more accessible, we used the more recent formalism developed in \cite{KLB1999}, 
in which an important upside is the clear identification of the orders of the estimates in the scaling parameter $N$.

We extend the proof of the hydrodynamic limit for the multi-type exclusion process \cite{Quastel1992} to the weakly asymmetric case when the particle types depend on a continuous parameter. The hydrodynamic limit for lattice gases with $K$ particle types takes the form of $K$ coupled partial differential equations. Extending it to a continuum of particle types therefore poses the issue of the well-posedness of the system. To solve this issue, we therefore introduce an angular variable joint to the space variable. Although the global outline of the proof remains similar, this induced numerous technical difficulties. In particular, as opposed to the previous examples, local equilibrium is not characterized by a finite number of real-valued parameters (e.g. density, local magnetization), which required significant adaptation of the proof of the hydrodynamic limit.

\subsection{Active exclusion process and main result}
\label{subsec:intromodel}

The remainder of this section is dedicated to a short description of our model and its hydrodynamic limit. 
For clarity's sake, we first describe in more details the simplified model with only two types of particles briefly presented above, and then introduce the more general active exclusion process studied in this article. Precisely describing the complete model, and rigorously stating its hydrodynamic limit, will be the purpose of Section \ref{sec:2}.

\subsubsection*{Description of a simplified process with two particle types}

For the clarity of notations, we describe and study our model in dimension $d=2$. The simplified version of the model can be considered as an active Ising model \cite{ST2015} with an \emph{exclusion rule} : each site $x$ of the periodic lattice $\torus$  of size $N$ is either 
\begin{itemize}
\item occupied by a particle of type ``$+$'' ($\conf^+_x=1$),
\item occupied by a particle of type ``$-$'' ($\conf^-_x=1$),
\item empty if  $\conf^+_x=\conf^-_x=0$.
\end{itemize}
Each site contains at most one particle, thus the pair $(\conf^+_x,\conf^-_x)$ entirely determines the state of any site $x$, and is either $(1,0)$, $(0,1)$ or $(0,0)$. The initial configuration for our particle system is chosen at local equilibrium and close to a smooth macroscopic profile $\zeta_0=\zeta^+_0+\zeta^-_0:\ctorus\to [0,1]$, where $\ctorus$ is the continuous domain $[0,1]^2$ with periodic boundary conditions, and $\zeta^+_0(x/N)$ (resp. $\zeta^-_0(x/N)$) is the initial probability that the site $x$ contains a ``$+$'' particle (resp. ``$-$''). We denote by $\confhat$ the collection $((\conf^+_x, \conf^-_x))_{x\in \torus}.$

Each particle performs a random walk, which is symmetric in the direction $i=2$, and weakly asymmetric in the direction $i=1$. The asymmetry is tuned via a positive parameter $\lambda$, thus a ``$+$''  (resp. ``$-$'') particle at site $x$ jumps towards $x+e_1$ at rate $1+\lambda/N$ (resp. $1-\lambda/N$) and towards $x-e_1$ at rate $1-\lambda/N$ (resp. $1+\lambda/N$). If a particle tries to jumps to an occupied site, the jump is canceled. In order to obtain a macroscopic contribution of this displacement dynamics, it must be accelerated by a factor $N^2$.

Moreover, the type of the particle at site $x$ is updated at random times,  depending on its nearest neighbors. Typically, to model collective motion, a ``$-$'' particle surrounded by ``$+$'' particles will change type quickly, whereas a ``$-$'' particle surrounded by ``$-$'' particles will change type slowly, to model the tendency of each individual to mimic the behavior of its neighbors. 
Although they determine the shape of the last term of the hydrodynamic limit, the microscopic details of this update dynamics are technically not crucial to the proof of the hydrodynamic limit (in the scaling considered here), we therefore choose general, bounded flip rates $c_{x,\beta}(\confhat)$ parametrized by an inverse temperature $\beta \geq 0$ and depending only on the local configuration around $x$. 

\bigskip

The complete dynamics can be split into three parts, namely the symmetric and asymmetric contributions of the exclusion process, and the Glauber dynamics, evolving on different time scales. For this reason, each corresponding part in the Markov generator has a different scaling in the parameter $N$ : the two-type process is driven by the generator 
\[L_N=N^2 \cro{ \genas+\frac{1}{N}\genwa}+\genis,\]
whose three elements we now define. Fix a function $f$ of the configuration, we denote by  \[\conf_x=\conf^+_x+\conf^-_x\in \{0,1\}\] the total occupation state of the site $x$.
The nearest-neighbor simple symmetric exclusion process generator $\genas$ is
\begin{equation*}\genas f(\confhat)=\sum_{x\in \torus}\sum_{|z|=1}\conf_x\left(1-\conf_{x+z}\right)\left(f(\confhat^{x,x+z})-f(\confhat)\right),\end{equation*}
$\genwa$ encompasses the weakly asymmetric  part of the displacement process,  
\begin{equation*}\genwa f(\confhat)=\sum_{x\in \torus}\sum_{\delta=\pm1}\delta\lambda(\conf^+_x-\conf^-_x)\left(1-\conf_{x+\delta e_
1}\right)\left(f(\confhat^{x,x+\delta e_1})-f(\confhat)\right),\end{equation*}
which is not a Markov generator because of its negative jump rates, but is well-defined once added to the symmetric part of the exclusion process.
Finally,  $\genis$ is the generator which rules the local alignment of the angles
\begin{equation*}\genis f(\confhat)=\sum_{x\in \torus}\conf_x c_{x,\beta}( \confhat)\pa{f(\confhat^{x})-f(\confhat)}.\end{equation*}
In the identities above, $\confhat^{x,x+z}$ is the configuration where the states of $x$ and $x+z$ have been swapped in $\confhat$,  and $\confhat^{x}$ is the configuration where the type of the particle at site $x$ has been changed.

\subsubsection*{Hydrodynamic limit}

Let us denote by $\rho^{+}_t(u)$ (resp. $\rho^{-}_t(u)$) the macroscopic density of ``$+$''  (resp.``$-$'') particles, and by $\rho_t(u)=\rho^{+}_t(u)+\rho^{-}_t(u)$ the total density at any point $u$ in  $\ctorus$. Let us also denote by $m_t(u)=\rho^{+}_t(u)-\rho^{-}_t(u)$ the local average asymmetry.

Then, as a special case of our main result the pair  $(\rho^{+}_t,\rho^{-}_t)$ is solution, in a weak sense, to the partial differential system
\begin{equation}\label{twotypesequadiff}\left\{\begin{matrix}\partial_t \rho^+_t=\nabla\cdot\cro{\diff(\rho_t,\rho^+_t)\nabla\rho_t+\sdc(\rho_t)\nabla\rho^{+}_t}-2\lambda \partial_{u_1}\cro{m_t\drift(\rho_t,\rho^+_t)+\sdc(\rho_t)\rho^+_t}+\Gamma_t,\\
\partial_t \rho^-_t=\nabla\cdot\cro{\diff(\rho_t,\rho^-_t)\nabla\rho_t+\sdc(\rho_t)\nabla\rho^{-}_t}+2\lambda \partial_{u_1}\cro{m_t\drift(\rho_t,\rho^-_t)-\sdc(\rho_t)\rho^-_t}-\Gamma_t
\end{matrix}\right.\end{equation}
with initial profile \begin{equation}\label{twotypesinitialcond}\rho_0^{\pm}(u)=\zeta^{\pm}(u).\end{equation}
In the PDE \eqref{eqcintro}, $\partial_{u_1}$ denotes the partial derivative in the first space variable, $d_s$ is the self-diffusion coefficient for the SSEP in dimension $2$ mentioned in the introduction, the coefficients $\diff$ and $\drift$ are given by
\begin{equation}
\label{diffconddef}  
\diff(\rho,\rho^*)=\frac{\rho^*}{\rho}(1-\sdc(\rho)) \eqand   \drift(\rho,\rho^*)=\frac{\rho^*}{\rho}(1-\rho-\sdc(\rho)),
\end{equation}
and $\Gamma_t$ is the local creation rate of particles with type ``$+$'', which can be written as the expectation under a product measure of the microscopic creation rate. Although it is not apparent, the coefficients $\diff$, $\drift$, and $d_s$ satisfy a Stokes-Einstein relation in a matrix form when the differential equation is written for the vector $(\rho^{+}_t, \rho^{-}_t)$, in the sense that
\begin{multline*}
\pa{\begin{matrix}
\diff(\rho,\rho^+)+d_s(\rho)& \diff(\rho,\rho^+)\\
\diff(\rho,\rho^-)&\diff(\rho,\rho^-)+d_s(\rho)
\end{matrix}}\pa{\begin{matrix}
\rho^+(1-\rho^+)& -\rho^+\rho^-\\
-\rho^+\rho^-&\rho^-(1-\rho^-)
\end{matrix}}\\
=\pa{\begin{matrix}
\rho^+[\drift(\rho,\rho^+)+d_s(\rho)]& \rho^-\drift(\rho,\rho^+)\\
\rho^+\drift(\rho,\rho^-)& \rho^-[\drift(\rho,\rho^-)+d_s(\rho)]
\end{matrix}}.
\end{multline*}
The second matrix above is the compressibility matrix, whose components are $Cov_{\rho^+, \rho^-}(\conf_0^{s_1}, \conf_0^{s_2})$, where both $s_1$ and $s_2$ take value in $\{+, -\}$.

This simplified model is very close to the active Ising model (cf. Appendix \ref{sec:contexte}, and \cite{ST2015}) with a weak driving force. The main difference is the exclusion rule~: in the active Ising model, there is no limit to the number of particles per site, and each particle's type is updated depending on the other particles present at the same site. In our two-type model, the exclusion rule creates a strong constraint on the displacement and therefore changes the form of the hydrodynamic limit, which is no longer the one derived in \cite{ST2015}.

\subsubsection*{Description of the active exclusion process} 

We now describe the \emph{active exclusion process} considered in this article, which is in some form a generalization of the model presented above. 
Indeed, although for technical reasons the proof of our main result cannot be applied verbatim to a finite number of particle types, the overwhole scheme is exremely similar, and under suitable assumptions on the initial profile, one can state an analogous result in the case of a finite number of particle types as well.
Since the active exclusion process is thoroughly introduced in Section \ref{sec:2}, we briefly describe it here, and only give a 
heuristic formulation for our main result. Denoting 
\[\ctoruspi:=[0,2\pi[,\]
the \emph{periodic} set of possible angles, the type of any particle is now a parameter $\theta\in\ctoruspi$ representing the angular direction of its weak driving force. To compare with the simplified model, the ``$+$'' particles correspond to the angle $\theta=0$, whereas the ``$-$'' particles correspond to the angular direction $\theta=\pi$. 
\index{$\ctoruspi$\dotfill  set of angles $[0,2\pi[$}

Any site is now either occupied by a particle with angle $\theta$ ($\conf_x=1$, $\theta_x=\theta$), or empty ($\conf_x=0$, $\theta_x=0$ by default). 
The initial configuration $\confhat(0)$ of the system is chosen at local equilibrium, close to a smooth macroscopic profile 
$\widehat{\zeta}: \ctorus\times\ctoruspi\to \R_+$,  where each site $x$ is occupied by a particle with angle $\theta_x\in[\theta, \theta+d\theta[$ 
with probability $\widehat{\zeta}( x/N,\theta)d\theta$, and the site remains empty w.p. $1-\int_{\ctoruspi}\widehat{\zeta}(x/N, \theta)d\theta$.

Our active exclusion process is driven by the Markov generator 
\[L_N=N^2 \cro{ \genas+\frac{1}{N}\genwa}+\genis,\]
with three parts described below. Fix a function $f$ of the configuration.
The nearest-neighbor simple symmetric exclusion process generator $\genas$ is unchanged with respect to the two-type case, whereas $\genwa$ is now given by
\begin{equation*}\genwa f(\confhat)=\sum_{x\in \torus}\sum_{\substack{|z|=1\\z=\delta e_i}}\delta\lambda_i(\theta_x)\conf_x\left(1-\conf_{x+\delta e_
i}\right)\left(f(\confhat^{x,x+\delta e_i})-f(\confhat)\right),\end{equation*}
where the asymmetry in the direction $i$ for a particle  with angle $\theta$ is encoded by the functions $\lambda_i(\theta)$, \[\lambda_1(\theta)=\lambda \cos(\theta)\eqand \lambda_2(\theta)=\lambda\sin(\theta).\]
To fix ideas, the Glauber generator will be taken of the form
\begin{equation*}\genis f(\confhat)=\sum_{x\in \torus}\conf_x\int_{\ctoruspi}c_{x,\beta}(\theta, \confhat)\pa{f(\confhat^{x,\theta})-f(\confhat)}d\theta,\end{equation*}
where $\confhat^{x,\theta}$ is the configuration where  $\theta_x$ has been set to $\theta$, and we choose alignment rates similar to the Glauber dynamics of the XY model (cf. Appendix \ref{sec:contexte}). More precisely, we consider
\[c_{x,\beta}(\theta, \confhat)=\frac{\exp\pa{\beta\sum_{y\sim x}\conf_y\cos(\theta_y-\theta)}}{\int_{\ctoruspi}\exp\pa{\beta\sum_{y\sim x}\conf_y\cos(\theta_y-\theta')}d\theta'},\]
which tends to align $\theta_x$ with the $\theta_y$'s, for $y$ a neighbor site of $x$. In the jump rates above, we take the value in $[-\pi,\pi]$ of the angle $\theta_y-\theta$. The intensity $\lambda$ and the inverse temperature $\beta\geq 0$ still tune the strength of the drift and the alignment. 

As mentioned before, we settle for now for a heuristic formulation of the hydrodynamic limit. Let us denote by $\rho^{\theta}_t(u)$ the macroscopic density of particles with angle $\theta$, and by $\rho_t(u)=\int_{\theta} \rho^{\theta}_t(u) d\theta$ the total density at any point $u$ in the \emph{periodic domain} $\ctorus :=[0,1]^2$. Let us also denote by $\overset{\rightarrow}{\Omega}_t$ the  direction of the local average asymmetry
\[\overset{\rightarrow}{\Omega}_t(u)=\int_{\ctoruspi}\rho^{\theta}_t(u)\pa{\begin{matrix}\cos(\theta)\\
\sin(\theta)\end{matrix}}d\theta.\]
As expected from \eqref{twotypesequadiff}, the main result (cf. Theorem \ref{thm:mainthm}) of this article is that $\rho^{\theta}_t$ is solution, in a weak sense, to the partial differential equation
\begin{equation}\label{eqcintro}\partial_t \rho^{\theta}_t=\nabla\cdot\cro{\diff(\rho_t,\rho^{\theta}_t)\nabla\rho_t+\sdc(\rho_t)\nabla\rho^{\theta}_t}-2 \nabla\cdot\cro{\drift(\rho_t,\rho^{\theta}_t)\lambda\overset{\rightarrow}{\Omega}_t+\sdc(\rho_t)\rho^{\theta}_t\pa{\begin{matrix}\lambda_1(\theta)\\
\lambda_2(\theta)\end{matrix}}}+\Gamma_t,
\end{equation}
with initial profile \[\rho_0^{\theta}(u)=\widehat{\zeta}(u, \theta).\]
In the PDE \eqref{eqcintro}, $d_s$ is the self-diffusion coefficient for the SSEP in dimension $2$ mentioned previously, the coefficients $\diff$ and $\drift$ are given by \eqref{diffconddef} as in the two-type case, and $\Gamma_t$ is the local creation rate of particles with angles $\theta$, which can be written as the expectation under a product measure of the microscopic creation rate. 

Before properly stating the hydrodynamic limit, let us recall the major difficulties of the proof. The main challenge is the non-gradient nature of the model~: the instantaneous  current of particles with angle $\theta$ between two neighboring sites  $x$ and $x+e_i$ can be written
\[\cur^{\theta}_{x,x+e_i}=\1_{\{\theta_x=\theta\}}\conf_{x}(1-\conf_{x+e_i})-\1_{\{\theta_{x+e_i}=\theta\}}\conf_{x+e_i}(1-\conf_{x}),\] 
which is not a discrete gradient.
One also has to deal with the loss of ergodicity at high densities, and with the asymmetry affecting the displacement of each particle, which drives the system out-of-equilibrium, and complicates the non-gradient method. Finally, the non-linearity of the limiting equation also induces several difficulties throughout the proof.

\subsubsection*{Model extensions}

Several design choices for the model have been made either to simplify the notations, or to be coherent with the collective dynamics motivations (cf. Appendix \ref{sec:contexte}). However, we present now some of the possible changes for which our proof still holds with minimal adaptations.
\begin{itemize}
\item The model can easily be adapted to dimensions $d\geq 2$. The dimension $1$, however, exhibits very different behavior, since neighboring particles with opposite drifts have pathological behavior and freeze the system due to the exclusion rule.
\item {The nearest neighbor jumps dynamics can be replaced by one {with local and irreducible} transition function $p(\cdot)$. This involves minor adjustments of the limiting equation, as solved by Quastel \cite{Quastel1992}. In this case, the total jump generator must be split between a symmetric part scaled as $N^2$, and an asymmetric part scaled as $N$ whose jumps can be decomposed as a succession of jumps from the symmetric part. 
However, providing exact criteria for the validity of the extension to a more general jump kernel would be rather difficult, and such extensions are best checked on a case-by-case basis.
In the case of nearest-neighbor exclusion, the drift functions} can be replaced by any bounded function, and can also involve a spatial dependency, 
as soon as $\lambda_i(u,\theta)$ is a smooth {$C^{1,1}$ function of its two variables $u$ and $\theta$. 
}
\item We chose for our alignment dynamics a jump process, however analogous results would hold for diffusive alignment. 
The jump rates can also be changed to any local and bounded rates, provided they are smooth in the $ \theta_x$'s, and that the overall realignment rate $\int_{\ctoruspi}c_{x,\beta}(\theta, \confhat)d\theta$ only depends on the configuration $\confhat$ through the occupational variable $\eta_x$. The smoothness assumption in the last two comments is there to make sure that the expectation of their microscopic contribution under the grand-canonical measures 
is a Lipschitz-continuous function in the grand-canonical parameter.

\end{itemize}

\subsection{Structure of the article}

Section \ref{sec:2} is dedicated to the full description of the model, to introducing the main notations, and the proper formulation of the hydrodynamic limit for the active exclusion process.

Section \ref{sec:3} is composed of three distinct parts. In Subsection \ref{subsec:canonicalmeasures} we characterize local equilibrium for our process by introducing the set $\pset$ of parameters for the grand-canonical measures of our process. We also give a topological setup for $\pset$, for which some elementary properties are given in Appendix \ref{sec:B}. In Subsection  \ref{subsec:entropy}, we prove using classical tools that the entropy of the measure of our process with respect to a reference product measure is of order $N^2$. The last Subsection \ref{subsec:irreducibility} tackles the problem of irreducibility, which is specific to our model and is one of its major difficulties. Its main result, Proposition \ref{prop:fullclusters}, relies on a-priori density estimates, and states that on a microscopic scale, large local clusters are seldom completely full, which is necessary to ensure irreducibility on a microscopic level.

Section \ref{sec:4} proves a law of large numbers for our process. The so-called Replacement Lemma stated in Subsection \ref{gradientreplacement} relies on the usual one block (Subsection \ref{lem:OBE}) and two blocks (Subsection \ref{subsec:TBE}) estimates. However, even though we use the classical strategy to prove both estimates, some technical adaptations are necessary to account for the specificities of our model.

Section \ref{sec:5} acts as a preliminary to the non-gradient method. The first result of this section is the comparison of the active exclusion process's 
measure to that of an equilibrium process without drift nor alignment 
(Subsection \ref{subsec:FeynmanKac}). We also prove, adapting the classical methods, a  compactness result for the sequence of 
measures of our process, (Subsection \ref{subsec:compactnessQN}) as well as an energy estimate (Subsection \ref{subsec:regularity}) necessary 
to prove our main result.

The non-gradient estimates are obtained in Section \ref{sec:6}. It is composed of a large number of intermediate results which we 
do not describe in this introduction. The application of the non-gradient method to the active exclusion process, however, requires to overcome several issues which are 
specific to our model. 
One such difficulty is solved in Subsection \ref{subsec:k2}, where we estimate the contributions of microscopic full clusters.
In  Subsections \ref{subsec:halpha} and \ref{subsec:drift}, we prove that for our well chosen diffusion and conductivity coefficients, 
the total displacement currents can be replaced by the sum of a gradient quantity and the drift term. For the sake of clarity, we use 
to do so the modern formalism for hydrodynamic limits as presented in \cite{KLB1999} rather than the one used in \cite{Quastel1992}. 
We state in this section a convergence result at the core of the 
non-gradient method (Theorem \ref{thm:limcovariance}) whose proof is intricate and is postponed to the last section.

All these results come together in Section \ref{sec:7}, where we conclude the proof of the hydrodynamic limit for our process. 
Some more specific work is necessary in order to perform the second integration by parts, due to the delicate shape of the diffusive 
part of our limiting differential equation.

Finally, Section \ref{sec8} is dedicated to proving Theorem \ref{thm:limcovariance}, following similar steps as in \cite{KLB1999}. To do so, we estimate in Subsection \ref{subsec:spectralgap} the spectral gap of the active exclusion process on a subclass of functions. We then describe in Subsection \ref{subsec:differentialforms} the notion of 
germs of closed forms for the active exclusion process, and prove using the spectral gap estimate a decomposition theorem for the set of germs of closed forms.  A difficulty of this model is that the spectral gap is not uniform in the density, and decays faster as the density goes to $1$. This issue is solved by cutting off large densities (cf. equation \eqref{def:phin} and Lemma \ref{lem:bulkconvergence}). 
Using the decomposition of closed forms, Theorem \ref{thm:limcovariance} is derived in Subsection \ref{sec:C}.

\section{Notations and Main theorem}
\label{sec:2}
{\it We describe an interacting particle system, where a particle follows an exclusion dynamics with a weak bias depending on an angle associated with this particle. At the same time, each particle updates its angle according to the angles of the neighboring particle. We study the macroscopic behavior of the corresponding 2-dimensional system with a periodic boundary condition. }

\subsection{Main notations and introduction of the Markov generator}
\index{$\conf_x$\dotfill occupation state of the site $x$}
\index{$\theta_x$\dotfill angle of the particle in $x$}
\index{$\confhat_x$\dotfill  the pair $(\conf_x, \theta_x)$}
On the two dimensional discrete set \index{$\torus$\dotfill discrete torus of size $N$}\[\torus=\{1,\ldots ,N\}^2\]
with \emph{periodic boundary conditions,} we define the occupation configuration $\conf=(\conf_x)_{x\in \torus}\in\{0,1\}^{\torus}$ where $\conf_x\in\{0,1\}$ is the number of particles at site $x$. With any \emph{occupied} site $x\in \torus$, we associate an angle $\theta_x\in\ctoruspi$ representing the mean direction of the  velocity in the plane of the particle occupying the site. 
When the site $x$ is empty, we set the angle of the site to  $\theta_x =0$ by default.
\begin{defi}[Configurations, cylinder \& angle-blind functions]\label{def:conf}
For any site $x\in \torus$, we denote by $\confhat_x$ \index{$\confhat$\dotfill family of the $\confhat_x$, $x\in \torus$} the pair $(\eta_x, \theta_x)$, and by  $\confhat=(\confhat_x)_{x\in \torus}$ the complete configuration. The set of all configurations will be denoted by \index{$\statespace$ \dotfill set of configurations on $\torus$} 
\[\statespace=\left\{(\conf_x, \theta_x)_{x\in \torus}\in \pa{ \{0,1\}\times \ctoruspi}^{\torus} \; \Big|\; \theta_x=0 \mbox{ if }\conf_x=0\right\}.\]
Denote by $\Sigma_{\infty}$ the set of infinite configurations above, where $\torus$ is replaced by $\Z^2$. We will call \emph{cylinder function} any function $f$ depending on the configuration only through a finite set of vertices $B_f\subset\Z^2$, and $C^1$ w.r.t. each $\theta_x$, for any $x\in B_f$. The set of cylinder functions on $\Z^2$ will be denoted {by} ${\mathcal C}$\index{$ {\mathcal C}$ \dotfill set of cylinder functions}. Note that a cylinder function is always bounded, and that any function $f\in{\mathcal C}$ admits a natural image as a function on $\statespace$ for any $N$ large enough. This is always the latter that we will consider, and we therefore abuse the notation and denote in the same way both $f$ and its counterpart on $\statespace$.

We will call \emph{angle-blind function} any function depending on $\confhat$ only through the occupation variables ${\conf=(\conf_x)_{x\in\torus}}$. In other words, an angle-blind function depends on the position of particles, but not on their angles. We denote by $\Sp$ the set of angle-blind functions\index{$\Sp $ \dotfill set of angle-blind functions}. 
 \end{defi}
We will use on the discrete torus the notations $\abss{\cdot}$ for the norm $\abss{x}={\sum_{i=1}^2} \abss{x_i}$.
\index{$|z| $ \dotfill $\sum_i \abss{z_i}$ }
\index{$\lambda $ \dotfill real parameter tuning the asymmetry}

\bigskip

Let $T$ be a fixed time, we now introduce the process $(\confhat(t))_{t\in [0,T]}$ on $\statespace$ which is central to our work.  
Our goal is to combine the two dynamics present in Viscek's model \cite{Vicsek1995}~: 
The first part of the process is the  \emph{displacement dynamic{s}}, which rules the \emph{motion of each particle}. The moves occur at rates biased by the angle of the particle, and follows the exclusion rule. 
Thus, for $\delta=\pm1$ the rate $p_x(\delta e_i, \confhat)$ at which the particle at site $x$ moves to an \emph{empty site} $x+\delta e_i$, letting $e_1=(1,0)$, $e_2=(0,1)$ be the canonical basis in $\Z^2$, is given by
\[p_x(\delta e_i, \confhat)=\left\{\begin{array}{ccc}
1 +\lambda \delta\cos(\theta_x)/N &\mbox{  if  } &i=1\\
1 +\lambda \delta\sin(\theta_x)/N &\mbox{  if  } &i=2\\          
\end{array}\right.
,\]
where $\lambda\in \R$ is a positive parameter which characterizes the strength of the asymmetry. For convenience, we will denote throughout the proof
\index{$\lambda_i(\theta) $ \dotfill strength of the asymmetry in the direction $i$  \
 {\color{white}aaaaaaaaaaaaaaa }\hfill on a particle with angle $\theta$ }
\begin{equation}\label{lidef}\lambda_1(\theta)=\lambda\cos(\theta)\eqand \lambda_2(\theta)=\lambda\sin(\theta).\end{equation} 
 
The previous rates indicate that the motion of each particle is biased in a  direction given by its angle. The motion follows an exclusion rule, which means that if the target site is already occupied, the jump is canceled. Note that in order to see the symmetric and asymmetric  contributions in the diffusive scaling limit, we must indeed choose an asymmetry scaling as $1/N$. Furthermore, in order for the system to exhibit a macroscopic behavior in the limit $N\to \infty$, we need to accelerate the whole exclusion process by $N^2$, as discussed further later on.

The second part of the dynamic is the angle update process, which will be from now on referred to as the \emph{Glauber part of the dynamic{s}}. A wide variety of choices is available among  discontinuous angle dynamics (jump process) and  continuous angle dynamics (diffusion). We choose here a Glauber jump process with inverse temperature $\beta\geq0$ described more precisely below. 

The generator of the complete Markov process is given by
\index{$L_N$ \dotfill complete generator of the active exclusion process}
\begin{equation}\label{defgenecomplet}L_N =N^2 \genex+\genis,\end{equation} 
where 
\index{$\genex$ \dotfill displacement part of $L_N$}
\index{$\genas$ \dotfill symmetric part of $L_N$}
\begin{equation}\label{defgenex}\genex= \genas+\frac{1}{N}\genwa\end{equation} is the generator for the displacement process (which two parts are defined below) and $\genis$ is the generator of the Glauber dynamics. The process can therefore be decomposed into three distinct parts, with different scalings in $N$, namely the symmetric part of the motion, with generator $N^2\gene$, the asymmetric contribution to the displacement generator $N\genwa$ with parameter $\lambda\geq0$, and finally the angle-alignment with generator $\genis$ and inverse temperature $\beta\geq 0$, which are defined for any {cylinder (and therefore $C^1$ in the angular variables, cf. Definition \ref{def:conf})} function $f:\statespace\to \R$, by
\index{$\genwa$ \dotfill weakly asymmetric part of $L_N$}
\begin{equation}\label{gensymdef}\genas f(\confhat)=\sum_{x\in \torus}\sum_{|z|=1}\conf_x\left(1-\conf_{x+z}\right)\left(f(\confhat^{x,x+z})-f(\confhat)\right),\end{equation}
\index{$\genis$ \dotfill Glauber part of $L_N$}
\begin{equation*}\label{genwadef}\genwa f(\confhat)=\sum_{x\in \torus}{\sum_{\substack{\delta=\pm 1\\i=1,2}}}\delta\lambda_i(\theta_x)\conf_x\left(1-\conf_{x+\delta e_i}\right)\left(f(\confhat^{x,x+\delta e_i})-f(\confhat)\right),\end{equation*}
\begin{equation}\label{genisdef}\genis f(\confhat)=\sum_{x\in \torus}\conf_x\int_{\ctoruspi}c_{x,\beta}(\theta, \confhat)\pa{f(\confhat^{x,\theta})-f(\confhat)}d\theta.\end{equation}
Note that $\genwa$ alone is not a Markov generator due to the negative jump rates, but considering the complete displacement generator $\genas+N^{-1}\genwa$ solves this issue for any $N$ large enough. 
In the expressions above, we denoted $\confhat^{x,x+z}$ the configuration where the occupation variables $\confhat_x$ and $\confhat_{x+z}$ at sites $x$ and  $x+z$ have been exchanged in $\confhat$
\index{$\beta$ \dotfill inverse temperature for $\genis$}
\index{$\confhat^{x,y}$ \dotfill $\confhat$ after inversion of $\confhat^{x}$ and $\confhat^{y} $}
\[\confhat^{x,x+z}_y=\left\{\begin{array}{ccc}
\confhat_{x+z} &\mbox{  if  } &y=x,\\
\confhat_x &\mbox{  if  } &y=x+z,\\
\confhat_y &\mbox{  otherwise,}& \\            
\end{array}\right.
\]
and $\confhat^{x,\theta}$ the configuration where the angle $\theta_x$ in $\confhat$ has been updated to $\theta$
\index{$\confhat^{x,\theta}$ \dotfill $\confhat$ after setting $\theta_x=\theta$}
\[\confhat^{x,\theta}_y=\left\{\begin{array}{ccc}
(\conf_y, \theta) &\mbox{  if  } &y=x,\\
\confhat_y &\mbox{  otherwise.}& \\            
\end{array}\right.
\]
\index{$c_{x,\beta}$ \dotfill jump rates for $\genis$}
For $x,y\in \torus $, we write $x\sim y$ iff $|x-y|=1$.  We choose for $c_{x,\beta}$ the jump rates 
\[c_{x,\beta}(\theta, \confhat)=\frac{\exp\pa{\beta\sum_{y\sim x}\conf_y\cos(\theta_y-\theta)}}{\int_{\ctoruspi}\exp\pa{\beta\sum_{y\sim x}\conf_y\cos(\theta_y-\theta')}d\theta'},\]
which tend to align the angle in $x$ with the neighboring particles according to XY-like jump rates (cf. Appendix \ref{sec:contexte}) with inverse temperature {$\beta\geq 0$}. Note that by construction, for any non-negative $\beta$, $\int_{\ctoruspi}c_{x,\beta}(\theta, \confhat)d\theta=1$ and that the jump rates $c_{x,\beta}(\theta, \confhat)$ can be uniformly bounded from above and below by two positive constants depending only on $\beta$.

The process defined above will be referred to as active exclusion process.

\subsection{Measures associated with a smooth profile and definition of the Markov process}
\label{subsec:processdef}
We now introduce the important measures and macroscopic quantities appearing in the expression of the hydrodynamic limit. 
Let us denote by $\ctorus$ the continuous periodic domain in dimension $2$\index{$\ctorus$ \dotfill continuous $2$-dimensional torus}, 
\[\ctorus={[0,1)}^2.\]
\begin{defi}[Density profile on $\ctorus$]\label{def:densityprofile}{We denote by $\pset$ the set of non-negative measures $\param$ on $\ctoruspi$ with total mass $\param(\ctoruspi)$ in $[0,1]$}. We call
\index{$\boldsymbol \dens$ \dotfill density profile on the torus} 
\index{$\rho(u)$ \dotfill mass of the measure $\boldsymbol \dens(u,\cdot)$}
{\emph{density profile on the torus} any function 
\[\boldsymbol \dens :(u, d\theta)\mapsto \boldsymbol \dens(u, d\theta)\] 
such that $\boldsymbol \dens(u, .)\in \pset$ $\forall u\in \ctorus$.}
For any density profile $\boldsymbol\dens$ on the torus, $ \boldsymbol\dens(u, d\theta)$ represents the local density in $u$ of particles with angle in $d\theta$, and $\rho(u)$ represents the total density of particles in $u$.
\end{defi}
\begin{defi}[Measure associated with a density profile on the torus]
\label{defi:measuresforaDPT}\index{$\mu_{\boldsymbol \dens}^N$ \dotfill product measure on $\statespace$ associated with $\boldsymbol \dens$}
To any density profile on the torus $\boldsymbol \dens$, we associate $\mu^N_{\boldsymbol \dens}$, the product measure on $\statespace$ such that the distribution of $\confhat_x$ is given for any $x\in \torus$ by 
\begin{equation}\label{refmeasures}\begin{cases}\mu^N_{\boldsymbol\dens}(\conf_x=0 )=1-\rho(x/N),\\
\mu^N_{\boldsymbol\dens}(\conf_x=1 )=\rho(x/N),\\
\mu^N_{\boldsymbol\dens}(\theta_x\in d\theta\mid \conf_x=1 )=\boldsymbol\dens(x/N,d\theta)/\rho(x/N),\end{cases}\end{equation}
and such that $\confhat_x, \confhat_y$ are independent as soon as $x\neq y$. 
\end{defi}
In other words, under $\mu^N_{\boldsymbol\dens}$, the probability that a site $x\in \torus$ is occupied is $\rho(x/N)=\int_{\ctoruspi}\boldsymbol\dens(x/N,\theta)d\theta\in [0,1]$. Furthermore,  the angle of an empty site is set to $0$ by default, and the angle of an occupied site $x$ is distributed according to the probability distribution $\boldsymbol\dens(x/N,\cdot)/\rho(x/N)$. 

\bigskip

\paragraph{Definition of the process}
Let \index{$\statespace^{[0,T]}$ \dotfill space of c\`adl\`ag trajectories on $\statespace$}  $\statespace^{[0,T]}:=D([0,T], \statespace)$ 
denote the space of right-continuous and left-limited (c\`adl\`ag) trajectories $\confhat:t\to \confhat(t)$. 
We will denote by $\confhat^{[0,T]}$ the elements of $\statespace^{[0,T]}$. For any initial measure $\nu$ on ${\statespace}$, any non-negative drift $\lambda\leq N$ (to make the displacement operator $\gene+N^{-1}\genwa$ a Markov generator), and any $\beta\geq0$, we write $\Prob^{\lambda, \beta}_{\nu}$ for the measure on $\statespace^{[0,T]}$ starting from the measure $\confhat(0)\sim\nu$, and driven by the Markov generator $L_N=L_N(\lambda, \beta)$ described earlier. We denote by $\E^{\lambda,\beta}_{\nu}$ the expectation w.r.t. $\Prob^{\lambda,\beta}_{\nu}$. In the case  $\lambda=\beta=0$, there is no drift and the angle of the particles are chosen uniformly in $ \ctoruspi$. In this case, we will omit $\lambda$ and $\beta$ in the previous notation and write $\Prob_{\nu}$ for the measure and $\E_{\nu}$ for the corresponding expectation.
\index{$\confhat^{[0,T]}$ \dotfill element of $\statespace^{[0,T]}$}
\index{$\Prob_{\nu}^{\lambda,\beta}$ \dotfill measure of active exclusion process $(\lambda,\beta)$ started from $\nu$}
Let us now define the initial measure from which we start our process. Let $\widehat{\zeta} \in C(\ctorus\times \ctoruspi)$ be a continuous 
{non-negative} function on $\ctorus\times\ctoruspi$, which will define the initial macroscopic state of our particle system. We assume that for any $u\in \ctorus$,
\index{$\boldsymbol \dens_0$ \dotfill initial density profile on the torus}
\begin{equation}\label{assumption0} \zeta(u):=\int_{\ctoruspi}\widehat{\zeta}(u,\theta)d\theta<1, \end{equation}
i.e. that the initial density is less than one initially everywhere on  $\ctorus$.
This assumption is crucial, because when the local density hits one, because of the exclusion rule, the system loses most of its mixing properties. At density $1$, mixing only comes from the (slow, because of the scaling) Glauber dynamics, which is not sufficient to ensure that local equilibrium is preserved. 
\index{$\mu^N$ \dotfill initial measure of the active exclusion process, fitting $\boldsymbol\dens_0$}

We can now define the initial density profile on the torus $\boldsymbol \dens_0$ by 
\index{$\widehat{\zeta}$ \dotfill initial macroscopic profile}
\begin{equation}\label{initdens}\boldsymbol \dens_0(u, d\theta)=\widehat{\zeta}(u,\theta)d\theta.\end{equation}
We start our process from a random configuration 
\begin{equation}\label{mesinit}\confhat(0)\sim \mu^N:=\mu_{\boldsymbol\dens_0}^N\end{equation}
 fitting the profile $\boldsymbol\dens_0$, according to Definition \ref{defi:measuresforaDPT}.  Given this initial configuration, we define the Markov process $\confhat^{[0,T]}\in \statespace^{[0,T]}\sim\Prob_{\mu^{N}}^{\lambda, \beta}$ driven by the generator $L_N$ introduced in \eqref{defgenecomplet},  starting from $\mu^N$.

\bigskip

\paragraph{Topological setup}
Let  us denote by $\meas(\ctorus\times\ctoruspi)$ the space of {non-negative} measures on the continuous configuration space {endowed with the weak topology}, and \begin{equation}\label{Mhatdef}\Mhat=D\pa{[0,T],\meas(\ctorus\times\ctoruspi)}\end{equation} the space of right-continuous and left-limited trajectories of measures on $\ctorus\times \ctoruspi$. Each trajectory $\confhat^{[0,T]}$ of the process admits a natural image in $\Mhat$ through its empirical measure 
\begin{equation*}
\pi_t^N\pa{\confhat^{[0,T]}}=\frac{1}{N^2}\sum_{x\in \torus}\conf_x(t)\delta_{x/N, \theta_x(t)}.
\end{equation*}
{We further define the projection $\pi^N$, which associates to $\confhat^{[0,T]}$ the trajectory $t\mapsto \pi_t^N\pa{\confhat^{[0,T]}}$.}
\index{$\meas(\ctorus\times \ctoruspi)$ \dotfill space of measures on $\ctorus\times\ctoruspi$}
\index{$\meas^{[0,T]}$ \dotfill space of c\`adl\`ag traj. on $\meas(\ctorus\times \ctoruspi)$}
\index{$\pi^N_t$ \dotfill empirical measure at time $s$}
We endow  $\Mhat$  with Skorohod's metric defined in Appendix \ref{subsec:topo}, and the set $\mathcal{P}(\Mhat)$ of probability measures on $\Mhat$ with the weak topology. We now define $Q^N\in \mathcal{P}(\Mhat)$ the distribution of {the trajectory of the  empirical measure $\pi^N\pa{\confhat^{[0,T]}}$} of  our process $\confhat^{[0,T]}\sim \Prob^{\lambda,\beta}_{\mu^N}$.
\index{$Q^N$ \dotfill distribution of $(\pi^N_t)_{t\in[0,T]}$ for the active exclusion process}

\subsection{Hydrodynamic limit}
\label{subsec:mainthm}

\paragraph{Self-diffusion coefficient}

The hydrodynamic limit for our system involves the diffusion coefficient of a tagged particle for symmetric simple exclusion process (SSEP) in dimension $2$. Let us briefly remind here its definition. On $\Z^2$, consider an infinite equilibrium SSEP with density $\rho$ and a tagged particle placed at time $0$ at the origin. We keep track of the position  $X(t)=(X_1(t), X_2(t))\in \Z^2$ of the tracer particle at time $t$ and denote by  $Q^*_{\rho}$ the measure of the process starting with measure $\mu_{\rho}$ on $\Z^2\setminus\{0\}$ and a particle at the origin.
\begin{defi}[Self-Diffusion coefficient]
The self-diffusion coefficient $\sdc(\rho)$ is defined as the limiting variance of the tagged particle 
\[ \sdc(\rho):= \lim_{t\to \infty}\frac{\E_{Q^*_{\rho}}(X_1(t)^2)}{t}.\]
\index{$\sdc$ \dotfill self-diffusion coefficient}
\end{defi}
{The existence of this limit is a consequence of \cite{KV1986}.} A variational formula for $d_s$ has been obtained later by Spohn \cite{Spohn1990}. The regularity of the self-diffusion coefficient was first investigated in  \cite{Varadhan1994},  where  Varadhan shows that the self-diffusion matrix is Lipschitz-continuous in any dimension $d\geq 3$. Landim, Olla and Varadhan since then proved in \cite{LOV2001} that the self-diffusion coefficient is in fact of class $C^{\infty}$ in any dimension. The matter of self-diffusion being treated in full detail in Section 6, p199-240 of \cite{KLOB2012}, we do not develop it further here. We summarize in appendix \ref{subsec:sdc} some useful results on the matter. 

\paragraph{Diffusion, conductivity and alignment coefficients}
Given a density profile on the torus $\boldsymbol \dens(u, d\theta)$, recall from Definition \ref{def:densityprofile} that 
$\rho(u)=\int_{\ctoruspi}\boldsymbol \dens(u, d\theta)$ is the local density. We introduce the coefficients
\index{$\diff$ \dotfill diffusion coefficient relative to $\nabla \rho$}
\index{$\drift$ \dotfill conductivity coefficient}
\index{$\overset{\to}{\Omega}$ \dotfill local direction of the asymmetry}
\index{$\Gamma$ \dotfill local creation rate of $\theta$-particles}
\begin{equation*}
\diffh({\rho,\boldsymbol \dens})(u, d\theta)=\frac{\boldsymbol \dens(u, d\theta)}{\rho(u)}(1-\sdc(\rho(u))){{\bf{1}}_{\{\rho(u)>0\}}}, \quad\drifth({\rho,\boldsymbol \dens})(u,d\theta)=(1-\rho(u)-\sdc(\rho(u)))\frac{\boldsymbol \dens(u, d\theta)}{\rho(u)}{{\bf{1}}_{\{\rho(u)>0\}}},
\end{equation*}
where $\sdc$ is the self-diffusion coefficient described in the previous paragraph. We also define  $\overset{\rightarrow}{\Omega}(\boldsymbol \dens)$,  the vector representing the mean direction of the asymmetry under $\boldsymbol \dens$, 
\[\overset{\rightarrow}{\Omega}(\boldsymbol \dens)(u)=\int_{\ctoruspi}\boldsymbol \dens(u, d\theta')\pa{\begin{matrix}
\cos(\theta') \\ 
\sin(\theta')
\end{matrix}}.\] 
as well as $\Gamma(\boldsymbol \dens)$ the local creation {and annihilation} rate of particles with angle $\theta$
\[\Gamma(\boldsymbol \dens)(u,d\theta)=\rho(u)\E_{\boldsymbol \dens(u,\cdot)}\cro{c_{0,\beta}(\theta,\confhat)}d\theta-\boldsymbol \dens(u, d\theta),\]
where under $\E_{\boldsymbol \dens(u,\cdot)}$, each site is occupied independently w.p. $\rho(u)$, and the angle of each particle is chosen according to the probability distribution $\boldsymbol \dens(u,\cdot)/\rho(u)$. The precise definition of $\E_{\boldsymbol \dens(u,\cdot)}$ is given  just below in Definition \ref{defi:GCM}. 

\paragraph{Weak solutions of the PDE}In order to state the hydrodynamic limit of our system, we need to describe the notion of weak solutions in our case, which is quite delicate because of the angles. {For any measure $\pi\in \meas(\ctorus\times\ctoruspi)$ and any function $H:\ctorus\times\ctoruspi\to \R$ integrable w.r.t. $\pi$, we shorten $<\pi, H>=\int_{\ctorus\times\ctoruspi}H(u, \theta)d\pi(du, d\theta)$.}
\index{$G_t(u)$ \dotfill smooth function on $[0,T]\times \ctorus$}
\index{$\omega$ \dotfill smooth function on $\ctoruspi$}
\index{$H_t(u,\theta)$ \dotfill smooth function on $[0,T]\times \ctorus\times \ctoruspi$}
\begin{defi}[Weak solution of the differential equation]
\label{defi:weaksol}
Any trajectory of measures $(\pi_t)_{t\in[0,T]}\in \Mhat$ will be called a \emph{weak solution of the differential system}
\begin{equation}\label{equadiff}
\begin{cases}
 \quad \partial_t \boldsymbol\dens_t=\nabla\cdot\cro{\diffh({\rho_t,\boldsymbol\dens_t})\nabla\rho_t+\sdc(\rho_t)\nabla\boldsymbol\dens_t}-2\lambda \nabla\cdot\cro{\drifth({\rho_t,\boldsymbol\dens_t})\overset{\rightarrow}{\Omega}_t+\boldsymbol\dens_t\sdc(\rho_t)\pa{\begin{matrix}
\cos(\theta) \\ 
\sin(\theta)
\end{matrix}}}+\Gamma(\boldsymbol \dens_t)\\
\quad \dens_0(u,d\theta)=\widehat{\zeta}(u, \theta)d\theta
\end{cases}
,\end{equation}
if the following four conditions are satisfied~:
\begin{enumerate}[i)]
\item{$\pi_0(du, d\theta)=\widehat{\zeta}(u,\theta)dud\theta$}
\item{for any fixed time $t\in [0,T]$, the measure $\pi_t$ is absolutely continuous in space w.r.t. the Lebesgue measure on $\ctorus$, i.e. there exists a  density profile on the torus (in the sense of Definition \ref{def:densityprofile}) $\boldsymbol \dens_t$, such that \[\pi_t(du, d\theta)=\boldsymbol \dens_t(u, d\theta) du.\]}
\item{Letting $\rho_t(u)=\int_{\ctoruspi}\boldsymbol \dens_t(u, d\theta) $, $\rho$ is in $H^1([0,T]\times \ctorus)$, i.e. there exists a family of functions $\partial_{u_i}\rho_t$ in $L^2([0,T]\times \ctorus)$ such that for any smooth function $G\in C^{0,1}([0,T]\times \ctorus)$,
\[\int_{[0,T]\times \ctorus}\rho_t(u) \partial_{u_i}G_t(u)dt du=-\int_{[0,T]\times \ctorus}G_t(u) \partial_{u_i}\rho_t(u)dt du\]}
\item{For any function $H\in C^{1,2,1}([0,T]\times \ctorus\times\ctoruspi)$, 
\begin{multline*}<\pi_T,H_T>-<\pi_0,H_0>=\int_0^T<\pi_t,\partial_tH_t>dt\\
+\int_0^T\int_{\ctorus\times \ctoruspi}\Bigg[\sum_{i=1}^2\bigg(-\partial_{u_i}H_t(u, \theta)\big[\diffh({\rho_t,\boldsymbol \dens_t}) -\sdc'(\rho_t)\boldsymbol \dens_t\big](u, d\theta)\partial_{u_i}\rho_t(u)+\partial_{u_i}^2 H_t(u, \theta)\sdc(\rho_t) \boldsymbol \dens_t(u, d\theta)\\
+\partial_{u_i}H_t(u, \theta) \cro{2\lambda\drifth({\rho_t,\boldsymbol \dens_t}){\Omega_i}(\boldsymbol \dens_t)+2\lambda_i(\theta) \sdc(\rho_t) \boldsymbol\dens_t}(u,d\theta)\bigg)+H_t(u,\theta)\Gamma(\boldsymbol \dens_t)(u,d\theta)\Bigg]du dt
,\end{multline*}
\noindent where the various coefficients are those defined just before, and the functions $\lambda_i$ are defined in \eqref{lidef}.
}
\end{enumerate}
\end{defi}
Note that in this Definition, the only quantity required to be in $H^1$ is the total density $\rho$~: indeed, the term $\sdc(\rho_t)\nabla\boldsymbol\dens_t$ is rewritten as \[\sdc(\rho_t)\nabla\boldsymbol\dens_t=\nabla (\sdc(\rho_t) \boldsymbol\dens_t)-\sdc'(\rho_t) \boldsymbol\dens_t\nabla \rho_t,\]
and the first term in the right-hand side above allows another derivative to be applied to the test function $H$, whereas the second term only involves the derivative of $\rho$ as wanted.

We are now ready to state our main theorem~:
\begin{theo}
\label{thm:mainthm}The sequence $(Q^N)_{N\in \N}$ defined at the end of Section \ref{subsec:processdef} is weakly relatively compact, and any of its limit points $Q^*$ is concentrated on trajectories $(\pi_t)_{t\in [0,T]}$ which are solution of \eqref{equadiff} in the sense of Definition \ref{defi:weaksol}.
\end{theo}
 
 \begin{rema}[Uniqueness of the weak solutions of equation \eqref{equadiff}]One of the reasons for our weak formulation of the scaling limit of the active exclusion process is the lack of proof for the uniqueness of weak solutions of equation \eqref{equadiff}. Several features of equation \eqref{equadiff} make the uniqueness difficult to obtain~: First, our differential equation does not take the form of an autonomous differential equation~: the variation of $\dens_t(u,\theta)$ involves the total density $\rho$, therefore the differential equation is in fact a differential system operating on the vector $(\dens_t(u, \theta), \rho_t(u))$. Cross-diffusive systems can exhibit pathological behavior when the diffusion matrix has negative eigenvalues, but in our case, both eigenvalues are non-negative and this issue does not appear.
 
However, although cross-diffusive systems are quite well understood (cf. for example \cite{Amann1993}), our equation involves a drift term which factors in via the vector $\overset{\rightarrow}{\Omega}(\boldsymbol \dens_t)$ the whole profile $(\dens_t(u, \theta))_{\theta\in \ctoruspi}$. One of the consequences of this drift term, which is the main obstacle to prove uniqueness, is that even the uniqueness of the total density $\rho_t(u)$ is not well established. Indeed, contrary to \cite{Quastel1992}, in which the total density evolves according to the heat equation,  the total density in our case is driven by the Burgers-like equation
 \[\partial_t \rho_t(u)=\Delta \rho_t(u)-\lambda \nabla\cdot( m_t(u)(1-\rho_t(u)))\]
where $m$ is a quantity which depends on the whole profile $(\dens_t(u, \theta))_{\theta\in \ctoruspi}$, and for which uniqueness is hard to obtain.
 \end{rema}

\subsection{Instantaneous currents}
 
 \label{subsec:outlineandcurrents}

In order to get a grasp on the delicate points of the proof, and to introduce the particle currents on which rely the proof of Theorem \ref{thm:mainthm}, we need a few more notations.

Throughout the proof, for any function $\varphi:\statespace\to \R$ and $x\in \torus$, we will denote by $\tau_x\varphi:\statespace\to \R$ the function which associates to a configuration $\confhat$ the value $\varphi(\tau_{x}\confhat)$, where $\tau_{x}\confhat\in \statespace$ is the translation of the configuration $\confhat$ by a vector $x$~:
\[(\tau_{x}\confhat)_y=\confhat_{x+y}, \quad \forall y\in \torus.\]
\index{$\tau_x$\dotfill translation by $x$ on the discrete torus}

For any function 
\[\begin{array}{cccc}
H:&[0,T]\times\ctorus\times \ctoruspi&\to &\R\\
&(t,u, \theta)&\mapsto& H_t(u, \theta)\\
\end{array},\]
{in $C^{1,2,1}([0,T]\times\ctorus\times\ctoruspi)$,} and any measure $\pi$ on $\ctorus\times \ctoruspi$, let us denote 
\index{$<\pi,H>$\dotfill integral of $H$ w.r.t. the measure $\pi$}
\[<\pi,H_t>=\int_{\ctorus\times \ctoruspi}H_t(u, \theta)d\pi(u, \theta) \]
the integral of $H$ with respect  to the measure $\pi$.
We consider the  martingale $M_t^{H, N}$
\begin{equation}\label{martingaledef}M_t^{H, N}=<\pi_t^N,H_t>-<\pi_0^N,H_0>-\int_0^t (\partial_s+ L_N)<\pi_s^N,H_s> ds,\end{equation}
where $\pi_s^N$ is the \emph{empirical measure of the process}
\index{$\partial_s,\; \partial_t$\dotfill time derivative}
\[\pi_s^N=\frac{1}{N^2}\sum_{x\in \torus}\conf_x(s)\delta_{x/N, \theta_x(s)}.\]
{The quadratic variation of this martingale can be explicitely computed, and is equal to} (cf. Appendix 1.5 of \cite{KLB1999})
\begin{multline*}
[M^{H, N}]_t=\int_0^t L_N(<\pi_s^N,H_s>^2)-2<\pi_s^N,H_s>L_N<\pi_s^N,H_s> ds\\
=\frac{2}{N^4}\sum_{x\in \torus}\bigg[\int_0^t L_N\cro{\eta_x(s)\eta_{x+1}(s)H_s(x/N, \theta_x(s))H_s((x+1)/N, \theta_{x+1}(s))}\\
-\eta_{x+1}(s)H_s((x+1)/N, \theta_{x+1}(s))L_N\cro{\eta_x(s)H_s(x/N, \theta_x(s))}\\
-\eta_x(s)H_s(x/N, \theta_x(s))L_N\cro{\eta_{x+1}(s)H_s((x+1)/N, \theta_{x+1}(s))}ds\bigg]
.\end{multline*}
Because of the initial factor $N^{-4}$, the contributions of the asymmetric and Glauber parts of the dynamic can be crudely bounded respectively by $CN^{-1}$ and $CN^{-2}$. By computing the symmetric part, we finally obtain  
\begin{align*}
[M^{H, N}]_t=&O(1/N)+\frac{1}{N^2}\sum_{x\in \torus}\bigg[\int_0^t \eta_x(s)\Big[H^2_s(x+1/N, \theta_x(s))+H^2_s(x-1/N, \theta_x(s))-2H^2_s(x/N, \theta_x(s))\\
&+2\eta_x(s)(1-\eta_{x+1}(s))H_s(x/N,\theta_x(s))\big[H_s((x+1)/N,\theta_x(s))-H_s(x/N,\theta_x(s))\big]\\
&+2\eta_{x+1}(s)(1-\eta_{x}(s))H_s((x+1)/N,\theta_{x+1}(s))\big[H_s((x+1)/N,\theta_{x+1}(s))-H_s(x/N,\theta_{x+1}(s))\big].\end{align*}
Because we assumed that $H$ is a smooth function, the three lines above are of order at most $N^{-1}$,  and therefore $[M_t^{H, N}]_t $ vanishes as $N$ goes to infinity. The martingale thus vanishes uniformly in time, in probability under $\Prob^{\lambda,\beta}_{\mu^N}$.

\medskip

Assume now that  the function  $H$ takes the form 
\begin{equation}\label{Hdecomp}H_s( u,\theta)=G_s(u) \omega(\theta),\end{equation}
where $G$ and $\omega$ are respectively functions on $[0,T]\times\ctorus$ and $\ctoruspi$. From now on, for any function $\Phi:\ctoruspi\to \R$, any configuration $\confhat$ and any $x\in \torus$ we will shorten 
\[\conf_x^{\Phi}=\Phi(\theta_x)\conf_x.\]
\index{$\conf_x^{\Phi}$\dotfill $\Phi(\theta_x)\conf_x$}
 With these notations, recalling that
\[L_N =N^2 \left(\genas+N^{-1}\genwa\right)+\genis, \] 
we can write the generator part of the integral term of \eqref{martingaledef} as 
\begin{equation}\label{empiricalmeasure2}\int_0^T L_N<\pi_s^N,H_s> ds=\frac{1}{N^2}\int_0^T \sum_{x\in \torus}G_s(x/N)\left(N^2[\genas\com_x(s)+N^{-1}\genwa\com_x(s)\right)+\genis\com_x(s)]ds.\end{equation}
Let us introduce accordingly the three instantaneous currents in our active exclusion process. Recall that $\tau_x$ represents the translation of a function by $x$.

\begin{defi}\label{defi:currents}
Given a site $x\in \torus$, each part of the generator  $L_N$'s action over $\com_x$ can be written
\begin{equation}\label{currentssym} \genas\com_x=\sum_{i=1}^2(\tau_{x-e_i}\curom_i-\tau_x\curom_{i}) \quad \mbox{ with }\quad \curom_{i}(\confhat)=\com_0\left(1-{\conf_{e_i}}\right)-\com_{e_i}\left(1-{\conf_0}\right),\end{equation}
\index{$\curom_i$\dotfill $\omega$-weighted sym. current on $(0,e_i)$}
\begin{equation}\label{currentsasym} \genwa\com_x=\sum_{i=1}^2(\tau_{x-e_i}\curaom_i-\tau_x\curaom_i)\quad \mbox{ with }\quad\curaom_i(\confhat)=\conf^{\omega\lambda_i}_0(1-\conf_{e_1})+ \conf^{\omega\lambda_i}_{e_i}(1-\conf_{0}),\end{equation}
\index{$\curaom_i$\dotfill $\omega$-weighted asym. current on $(0,e_i)$}
and
\begin{equation}\label{currentsising} \genis\com_x=\tau_x\gamma^{\omega}\quad \mbox{ with }\quad \gamma^{\omega}(\confhat)=\eta_0\int_{\ctoruspi} c_{0,\beta}(\theta,\confhat)(\omega(\theta)-\omega(\theta_0))d\theta.\end{equation}
\index{$\gamma^{\omega}$\dotfill instant. creation rate of $\com_0$ due to $\genis$}
For $i\in\{1, 2\}$ we will at times write $\curom_{x,x+e_i}=\tau_x \curom_i$ (resp. $\curaom_{x,x+e_i}=\tau_x\curaom_i$), which is interpreted as  the \emph{instantaneous current with intensity $\omega$ in the direction $i$} along the edge $(x, x+e_i)$ of the symmetric  (resp. weakly asymmetric)  part of the process. The last quantity $\tau_x\gamma^{\omega}$ is the \emph{local  alignment rate. }

When considering the time process $(\confhat(t))_{t\in [0,T]}$ we will, for the sake of concision, write $  \curom_{i}(t)$ for $\curom_{i}(\confhat(t))$, and in the same fashion $\curaom_i(t)$  instead of  $\curaom_i(\confhat(t))$, and $\gamma^{\omega}(t)$ instead of $\gamma^{\omega}(\confhat(t))$. Finally, in the case where $\omega\equiv 1$, we will denote by \[\cur_i:=\cur_i^1=\conf_0-\conf_{e_i}.\]
\end{defi}
\index{$\cur_i$\dotfill total instant. sym. current on $(0,0+e_i)$}

Performing a first integration by parts on the exclusion part of the right-hand side of  \eqref{empiricalmeasure2}, we obtain thanks to equations \eqref{currentssym}, \eqref{currentsasym} and \eqref{currentsising} 
\begin{align}\label{empiricalmeasure1}\int_0^T L_N<\pi_s^{N},H_s> ds=\frac{1}{N^2}\int_0^T  \sum_{x\in \torus}\tau_x\cro{\sum_{i=1}^2 \Big(N\curom_i(s)+\curaom_i(s)\Big)\partial_{u_i,N} G_s(x/N)+ G_s(x/N)\gamma^{\omega}(s)} ds,
\end{align}
where $\partial_{u_i,N}$ is the discrete partial derivative \[(\partial_{u_i,N} G)(x/N)=N\left[G((x+e_i)/N)-G(x/N)\right].\]
\index{$\partial_{u_i,N}$\dotfill discrete approximation of $\partial_{u_i}$}

\index{$B_l(x)$\dotfill $\{y\in \torus, \abss{y-x}\leq l\}$}
\index{$B_l$\dotfill $B_l(0)$}
\index{$|B|$\dotfill number of sites in $B$}

The spatial averaging is of great importance throughout the proof of the hydrodynamic limit, we need some convenient notation to represent this operation. For any site $x\in \torus$ and any integer $l$, we denote by 
\[B_l(x)=\left\{\;y\in \torus \; \big|\; \norm{y-x}_{\infty}\leq l\;\right\}\] 
the box of side length $2l+1$ around $x$. In the case where $x=0$ is the origin, we will simply write $B_l:=B_l(0)$. For any \emph{finite} subset $B\subset \torus$, $|B|$ denotes the number of sites in $B$.
Given  $\varphi$ a function on $\statespace$, we denote by
\index{$\langle\varphi\rangle_x^l$\dotfill average over $B_l(x)$ of the $\tau_y\varphi$}
\begin{equation}\label{averagedef}\langle\varphi\rangle_x^l=\frac{1}{\abss{B_l(x)}}\sum_{y\in B_l(x)}\tau_y\varphi\end{equation}
the average of the function $\varphi $ over $B_l(x)$. In the case where $\varphi(\confhat)=\com_0$, (resp. $\varphi(\confhat)=\conf_0$), we will write  $\tau_x\rho^{\omega}_l=\langle\varphi\rangle_x^l$ (resp. $\tau_x\rho_l$) for the empirical average of $\com$ (resp. $\conf$) over the box centered in $x$ of side length  $2l+1$.

We will also denote for any integer $l$ by $\dens_l$ the empirical angular density defined by 
\index{$\rho_l$\dotfill empirical particle density in $B_l$}
\index{$\rho^{\omega}_l$\dotfill average of $\com_x$ over $B_l$}
\begin{equation}\label{empiricalprofile}\dens_l=\frac{1}{\abss{B_l}}\sum_{x\in B_l}\conf_x \delta_{\theta_x}\in \pset,\end{equation}
where $\pset$ is the set of non-negative measures on $\ctoruspi$ with total mass in $[0,1]$ (cf. Definition \ref{defi:angleprofile} below).
Finally, to simplify notations throughout the proof, we will write $\varepsilon N$ instead of the integer part $\lfloor \varepsilon N \rfloor$.
\index{$\dens_l$\dotfill empirical angular density over $B_l$}

\section{Canonical measures,  entropy and irreducibility}
\label{sec:3}

\subsection{Definition of the canonical measures}
\label{subsec:canonicalmeasures}

Due to the presence of angles, the canonical product measures for the active exclusion process are not parameterized by the local density $\alpha\in [0,1]$ like the SSEP, but rather by a measure $\param$ on $[0,2\pi]$ whose total mass $\int_{\ctoruspi}\param(d\theta)$ is the local density.
\begin{defi}[Grand-canonical parameters]
\label{defi:angleprofile}
Recall that $\ctorus $ is the $2$-dimensional continuous torus $(\R/\Z)^2$, and let $\meas(\ctoruspi)$ be the set of non-negative measures on $\ctoruspi$. We will call \emph{ grand-canonical parameter} any measure $\param\in \meas(\ctoruspi)$ with total mass $\alpha:=\int_{\ctoruspi} \param(d\theta)\leq 1$. We denote by 
\begin{equation}\label{pseteq}\pset=\left\{\;\param\in \meas(\ctoruspi) \; |\; \alpha\in[ 0,1]\;\right\},\end{equation}
the set of grand-canonical parameters.
\end{defi}
\index{$\param$ \dotfill  grand-canonical parameter, element of $\pset$}
\index{$\am$ \dotfill total mass of $\param$}
\index{$\pset$ \dotfill set of  grand-canonical parameters}

We now define a topological setup on $\pset$. Let us consider on $C^1(\ctoruspi)$, the set of {continuously} differentiable functions, the norm $\norm{g}^*=\max(\norm{g}_{\infty},\norm{g'}_{\infty})$, and let $B^*$ be the unit ball in $(C^1(\ctoruspi),\norm{\cdot}^*)$.
\begin{defi}\label{defi:convparam}
We endow $\mathcal{M}(\ctoruspi)$, the vector space of finite mass signed measures on $\ctoruspi$, with the norm 
\[\normm{\param}=\sup_{g\in B^*}\left\{\int g(\theta)d\param(\theta)\right\},\] 
\index{ $\normm{\cdot}$ \dotfill norm on $\pset$}
and with the corresponding distance \[d(\param, \param'):=\sup_{g\in B^*}\left\{\int_{\ctoruspi}g(\theta)d\param(\theta)-\int_{\ctoruspi}g(\theta)d\param'(\theta)\right\}.\]
We then endow $\pset$  with the topology induced by $\normm{\cdot}$. This distance is a generalization of the Wasserstein distance to measures which are not probability measures.
\end{defi}

\begin{rema}
As checked in Appendix \ref{sec:B}, this topology satisfies
\begin{itemize}
\item for any cylinder function $\psi$, the application $\param\mapsto \Egcm(\psi)$ is Lipschitz-continuous (cf. Proposition \ref{prop:Lipschitzcontinuity}).
\item any {$\param\in \mathcal{M}(\ctoruspi)$} is the limit of combinations of Dirac measures.
\item if $\theta_k\to \theta$, then $\normm{\delta_{\theta_k}-\delta_{\theta}}\to 0$.
\end{itemize}
It is therefore the natural choice for our problem.
\end{rema}

We now introduce the canonical measures of our process, which are translation-invariant particular cases of  measures associated with a density profile, introduced in Definition \ref{defi:measuresforaDPT}.
\begin{defi}[Grand canonical measures]\label{defi:GCM}
Consider a \emph{translation invariant} density profile on the torus $ \boldsymbol \dens$, i.e. such that for any $u\in \ctorus$, \[\boldsymbol \dens(u, d\theta)=\param(d\theta)\]  for some  grand-canonical parameter $\param\in \pset$ independent of $u$. We will write  $\gcm$ for the product measure $\mu^N_{\boldsymbol\dens}$, and $\E_{\param}$ will denote the corresponding expectation. This class of measures will be referred to as \emph{grand-canonical measures}.
\index{$\gcm$ \dotfill grand-canonical measure GCM($\param$)}
\index{$\Egcm$ \dotfill expectation w.r.t. $\gcm$}
Furthermore, for any $\am\in[0,1]$, the measure $\mu_{\param}$ associated with the uniform density profile on the torus \[\boldsymbol \dens(u,d\theta)\equiv\am d\theta/2\pi,\] where the angle of each particle is chosen uniformly in $\ctoruspi$, will be denoted by $\mu_{\am}^*$, and the corresponding expectation will be denoted by $\E_{\am}^*$.
\index{$\mesref$ \dotfill GCM with uniform angles}
\index{$\Eref$ \dotfill expectation w.r.t. $\mesref$}
\end{defi}
Note that these measures are dependent on $ N$, but due to their translation invariant nature, we will omit this in our notation.

\begin{rema}For any density $\am\in [0,1]$, the measure $\mu^*_{\am}$ on $\statespace$ is not invariant for our dynamic, because although it is invariant for the symmetric part of the exclusion, the weakly asymmetric part (as well as the Glauber part as soon as $\beta\neq0$) breaks this property. We will however prove in Section \ref{subsec:entropy} that due to the scaling in $N$, the stationary distribution of our dynamics is locally close to  {$\mu^*_{\am}$}.
\end{rema}

\begin{defi}[Canonical measures]
\label{defi:CM}
Fix a positive integer $l$, an integer $K\leq (2l+1)^2$ and {$\Theta_K=(\theta_1,\ldots ,\theta_K)$ a family of $K$ angles, taken up to reordering of its coordinates}, we shorten by $\K$ the pairs $(K, \Theta_K)$, which we will refer to as \emph{canonical states} on $B_l$. We will denote by $\Kset_l$ the set of canonical states $\K$ on $B_l$, 
\index{$\Theta_K$ \dotfill a family of $K$ angles}
\index{$\K$ \dotfill a pair $(K,\Theta_K)$}
\index{$\Kset_l$ \dotfill the set of possible $\K=\dens_l$}
\[\Kset_l=\{\K=(K, \Theta_K)\; | \; K\leq (2l+1)^2\}.\]
Since\index{$\Ksett_l$ \dotfill the set of $\K$ such that $K\leq \abss{B_l}-2$}
 our process loses its fast mixing properties when there is only one or less empty site (In which case mixing mainly comes from the Glauber dynamics, which is very slow w.r.t. the displacement dynamics, cf. Section \ref{subsec:irreducibility} below), we also introduce 
\begin{equation}\label{Ksettdef}\Ksett_l=\{\K\in \Kset_l \;|\; K\leq (2l+1)^2-2\},\end{equation}
the set of $\K$ for which the exclusion process on $B_l$ is irreducible.
Furthermore, for any fixed $\K\in \Kset_l$, we denote by 
\begin{equation}\label{sousespace}\subspace=\left\{{\confhat \mbox{ config. on }B_l }\quad \left| \quad \sum_{x\in B_l}\conf_x \delta_{\theta_x}=\sum_{k=1}^{K}\delta_{\theta_k}\right. \right\},\end{equation} 
the set of configurations {on $B_l$} with canonical state $\K$ in $B_l$. \index{$\subspace$ \dotfill set of confs. with $\K$ particles in $B_l$} 
\index{$\mu^*_{\am,l}$ \dotfill $\mesref$ restricted to configurations on $B_l$} 
\index{$\E_{\am,l}^*$ \dotfill expectation w.r.t. $\mu^*_{\am,l}$} 
\index{$\mu_{l, \K}$ \dotfill $\mu^*_{\am,l}$ conditioned to $\confhat\in \subspace$} 
\index{$\E_{l,\K}$ \dotfill expectation w.r.t. $\mu_{l, \K}$}

Let $\mu^*_{\am, l}$ denote the measure $\mesref$ on $B_l$, for any density $\am\in ]0,1[$, we will denote by $\mu_{l, \K}$ the conditioning of $\mu^*_{\am, l}$ to $\subspace$ {(which is therefore a measure on the set of local configurations $\confhat\in (\{0,1\}\times\ctoruspi)^{B_l}$)}, and by $\E_{l, \K}$ the corresponding expectation
\[\E_{l, \K}(\function)=\E^*_{\am, l}\pa{\;\function\;\;\left\vert\; \;\confhat\in \subspace\right.}.\]
 These measures will be referred to as \emph{canonical measures of the process.}
\end{defi}

\begin{defi}\label{defi:paramK}
Fix $l\in \N$, we associate to any $\K\in \Kset_l$ the  grand-canonical parameter 
\[\param_{\K,l}=\frac{1}{(2l+1)^2}\sum_{k=1}^{K}\delta_{\theta_k}.\]
When there is no ambiguity, we will drop the dependency in $l$ and simply write $\param_{\K}=\param_{\K,l}$.
 \end{defi}

The pertinent results regarding the metric space $(\pset, \normm{\cdot})$ are regrouped in Appendix \ref{sec:B}~: The equivalence of ensembles is proved in Section \ref{subsec:equivalenceensembles}, the Lipschitz-continuity of the expectation w.r.t. $\mesinv$ in the parameter $\param$ is proved in Section \ref{subsec:lipshitzcontinuity}, and finally, the compactness of the set $(\pset, \normm{\cdot})$ is proved in Section \ref{subsec:compaciteparam}.

\subsection{Entropy production and local equilibrium}
\label{subsec:entropy}

\intro{The proof of the replacement Lemma is based on the control of the entropy production of the process. The difficulty  here is that the invariant measures of the process are not known, and the decay of the relative entropy w.r.t. these measures cannot be computed directly. Thus we consider approximations of these measures, and for a fixed non-trivial density $\am\in ]0,1[$,  our goal is to get an estimate of the entropy of the process's time average with respect to the reference measure $\mesref$ introduced in Definition \ref{defi:GCM}.}

Let us fix $\am \in ]0,1[$, we are going to prove that regardless of the initial density profile, the entropy of the active exclusion process w.r.t the measure of a process started from $\mesref$ and following a symmetric simple exclusion process can be controlled by $CN^2$ for some constant $C$. 

The choice of $\mesref$ among the $\mu^*_{\am'}$, $\am'\in ]0,1[$ is not  important, since for any different angle density  $\am'\in]0,1[$, the relative entropy between the two product measures $\mesref$ and $\mu^*_{\am'}$ is of order $N^2$ as well. 

For some cylinder function $h\in{\mathcal C}$, and some edge $a=(a_1,a_2)$ {in $\torus$ or $\Z^2$}, we denote by $\grad$ the gradient representing the transfer of a particle from site $a_1$ to site $a_2$  under the  exclusion process
\begin{equation}\label{graddef}\grad f(\confhat)={\conf_{a_1}}\pa{1-{\conf_{a_2}}}\pa{f\pa{\confhat^{a_1, a_2}}-f(\confhat)}.\end{equation}
\index{$\grad $\dotfill gradient due to a particle jump $a_1 \to a_2$ }
We will shorten this notation in the case where $a=(0,e_j)$ by writing $\nabla_j:=\nabla_{(0,e_j)}$.
\index{$ \nabla_j$\dotfill gradient due to a particle jump  $0\to e_j$}
Before turning to the control of the entropy itself, we introduce an important quantity in the context of hydrodynamic limits.
\begin{defi}[Dirichlet form of the symmetric dynamics]
\label{defi:Dirichletform}
Let $h$ be a cylinder function, we introduce the Dirichlet form of the process 
\begin{equation}\label{dirdef}\dir_{\param}(h)=-\Egcm(h\gene h),\end{equation}
\index{$\dir$\dotfill Dirichlet form of the exclusion process}
where $\gene$ is the symmetric exclusion generator defined in equation \eqref{gensymdef}.
It can be rewritten thanks to the invariance of $\mesinv$ w.r.t the symmetric exclusion process as
\[\dir_{\param}(h)=\frac{1}{2}\Egcm \pa{\sum_{x\in \torus}\sum_{|z|=1}\pa{\nabla_{x,x+z}h}^2}.\]
If there is no ambiguity, we will omit the dependency in $\param$ of the Dirichlet form, and simply denote it by $\dir$. The Dirichlet form is  convex and  non-negative. Furthermore, any function $f$ in its kernel is such that $f(\confhat)=f(\confhat')$ for any pair $(\confhat,\confhat')$ of configurations with the same number of particles $K\leq N^2-2$ and the same family of angles.
{For any non-negative function $h$, we} also introduce the Dirichlet form 
\begin{equation}\label{reddirdef}\rdir(h)=\dir(\sqrt{h}),\end{equation} 
\index{$ \rdir(h)$\dotfill $\dir(\sqrt{h})$}
which has the same properties as $\dir$. 
\end{defi}

We now investigate the entropy production of the active exclusion process. Let  $P^{N,\lambda, \beta}_t$ be the semi-group of the active exclusion process associated with the complete generator $L_N$ introduced in equation \eqref{defgenecomplet}, and $\mu_t^N=\mu^N P_t^{N,\lambda, \beta}$ the measure of the configuration at time $t$. Because we assume the initial profile to be continuous (and therefore bounded), $\mu^N$ is absolutely continuous with respect to the product measure $\mesref$, with density
\begin{equation}
\label{eq:densitemes}
\frac{d\mu^N}{d\mesref}(\confhat)=\prod_{x\in \torus}\cro{(1-\conf_x)\frac{1-\zeta(x/N)}{1-\am}+\conf_x\frac{2\pi \widehat{\zeta}(x/N, \theta_x)}{\am}}. 
\end{equation}
This, and the fact that the alignment rates $c_{x,\beta}$ are bounded from above and below uniformly in $\theta$, guarantee that for any time $t$, $\mu_t^N$ is also absolutely continuous w.r.t. $\mesref$. We therefore denote by $f_t^N=d\mu_t^N/\mesref$ the density of the measure at time $t$ w.r.t. the reference measure $\mesref$. We now prove the following estimate on the entropy of the function $f_t^N$.

 \begin{prop}[Control on the entropy and the Dirichlet form of $f_t^N$]
 \label{prop:entropyproduction}For any density $f$ w.r.t. $ \mesref$, we denote by $H(f)=\Eref(f \log f)$ the entropy of the density $f$. Then, for any time $t>0$, there exists a constant $K_0=K_0(t,\lambda, \beta, \widehat{\zeta})$ such that 
 \begin{equation*}H\left({\frac{1}{t}\int_0^t f_s^N ds}\right)\leq K_0 N^2\eqand \rdir\left({\frac{1}{t}\int_0^t f_s^N ds}\right)\leq K_0.\end{equation*}
\end{prop}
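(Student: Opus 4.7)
My plan is to follow the standard entropy-production argument for non-reversible lattice systems with a non-invariant product reference measure (in the spirit of Chapter V of \cite{KLB1999}), adapted to the three-scale decomposition $L_N=N^2\genas+N\genwa+\genis$. First, the initial bound $H(f_0^N)\leq C N^2$ follows directly from the explicit product formula \eqref{eq:densitemes}: the assumption \eqref{assumption0} combined with compactness gives $\inf_u(1-\zeta(u))>0$, so the one-site log-densities are bounded uniformly and summing over the $N^2$ sites yields the desired order.

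The next step is to differentiate $H(f_t^N)$ in time. Writing $\partial_t f_t^N=L_N^* f_t^N$ and using $\Eref(L_N^* f_t^N)=0$,
\begin{equation*}
\frac{d}{dt}H(f_t^N) = N^2\,\Eref((\genas f_t^N)\log f_t^N)+N\,\Eref((\genwa^* f_t^N)\log f_t^N)+\Eref((\genis^* f_t^N)\log f_t^N).
\end{equation*}
The reference measure $\mesref$ is invariant under the swap $\confhat\leftrightarrow\confhat^{x,x+z}$ (product measure with uniform angles), so the first term can be symmetrized as $-\tfrac12\sum_{x,z}\Eref(\conf_x(1-\conf_{x+z})(f(\confhat^{x,x+z})-f)(\log f(\confhat^{x,x+z})-\log f))$, and the elementary inequality $(a-b)(\log a-\log b)\geq 4(\sqrt a-\sqrt b)^2$ yields the dissipation bound $-2N^2\rdir(f_t^N)$. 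For the weakly asymmetric part, the factorization $f(\confhat^{x,x+z})-f=(\sqrt{f(\confhat^{x,x+z})}+\sqrt f)(\sqrt{f(\confhat^{x,x+z})}-\sqrt f)$ together with Young's inequality (using $\lambda_i$ bounded) produces
\begin{equation*}
|N\,\Eref((\genwa^* f_t^N)\log f_t^N)|\leq N^2\rdir(f_t^N)+C\lambda^2 N^2,
\end{equation*}
which absorbs at most half of the symmetric dissipation. For the Glauber part, $\mesref$ is not reversible for $\genis$ when $\beta\neq 0$, but the rates $c_{x,\beta}$ are uniformly bounded above and below and $\mesref$ has uniform angles; a site-by-site application of the entropy inequality $\Eref(Xf)\leq\gamma^{-1}H(f)+\gamma^{-1}\log\Eref(e^{\gamma X})$ with $X=O(1)$ gives $|\Eref((\genis^* f_t^N)\log f_t^N)|\leq CN^2+c_0 H(f_t^N)$.

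Combining the three estimates gives the differential inequality
\begin{equation*}
\frac{d}{dt}H(f_t^N)+N^2\rdir(f_t^N)\leq CN^2+c_0 H(f_t^N),
\end{equation*}
and Gronwall's lemma yields $H(f_t^N)\leq K_1(t)N^2$ uniformly on $[0,T]$; integrating once more in time then furnishes $\int_0^t\rdir(f_s^N)\,ds\leq K_2(t)$. The two bounds in the statement now follow from Jensen's inequality: convexity of $f\mapsto f\log f$ gives $H(t^{-1}\!\int_0^t f_s^N\,ds)\leq t^{-1}\!\int_0^t H(f_s^N)\,ds\leq K_1(t)N^2$, and convexity of $\rdir$ (as a squared seminorm in $\sqrt{\cdot}$) gives $\rdir(t^{-1}\!\int_0^t f_s^N\,ds)\leq t^{-1}\!\int_0^t \rdir(f_s^N)\,ds\leq K_2(t)/t$, so both are bounded by a common $K_0=K_0(t,\lambda,\beta,\widehat\zeta)$.

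The main obstacle is the joint treatment of the weakly asymmetric drift and the Glauber dynamics, neither of which is reversible for $\mesref$. The drift is tame because the scaling factor $N$ (rather than $N^2$) lets Young's inequality absorb it into the symmetric dissipation at the cost of only an additive $O(N^2)$ term. The Glauber part, however, acts at rate $O(1)$ per site (hence $O(N^2)$ globally) and cannot be controlled by a Dirichlet form; this forces the linear-in-$H$ term in the differential inequality and is precisely why Gronwall is needed rather than a direct integration.
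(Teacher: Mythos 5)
Your treatment of the initial entropy, the symmetric dissipation, the weakly asymmetric drift (absorbing it via Young's inequality into half the dissipation), and the final Jensen/convexity step all match the paper's argument. The place you diverge, and where there is a genuine gap, is the Glauber contribution. You claim a site-by-site entropy inequality of the form $\Eref(Xf)\leq\gamma^{-1}H(f)+\gamma^{-1}\log\Eref(e^{\gamma X})$ with $X=O(1)$ yields $|\Eref((\genis^* f_t^N)\log f_t^N)|\leq CN^2+c_0H(f_t^N)$, and you then invoke Gronwall. But $(\genis^* f_t^N)\log f_t^N$ is not of the form $Xf$ with $X$ bounded: after applying the same logarithm inequality used on the other pieces, one is left controlling $\Eref(\sqrt{f}\,\genis\sqrt{f})$, and it is not clear what the bounded $X$ in your entropy-inequality step is supposed to be. As written, the step that produces the $c_0H(f_t^N)$ term is not justified.

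The paper avoids this entirely and obtains a \emph{direct} $O(N^2)$ bound on the Glauber contribution, so no Gronwall is needed. After bounding $\Eref(f\,\genis\log f)\leq 2\Eref(\sqrt{f}\,\genis\sqrt{f})$, one expands $\genis$, uses $ab\leq(a^2+b^2)/2$ and the uniform bound $c_{x,\beta}\leq e^{8\beta}/2\pi$, and then exploits the crucial structural fact that $\mesref$ has i.i.d.\ uniform angles, so $\Eref\bigl(\tfrac{1}{2\pi}\int_{\ctoruspi}f(\confhat^{x,\theta})\,d\theta\bigr)=\Eref(f)=1$. Each site then contributes $O(1)$, giving $2\Eref(\sqrt{f_s^N}\,\genis\sqrt{f_s^N})\leq C_\beta N^2$. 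Substituting into the integrated inequality yields $H(\mu_t^N\mid\mesref)+N^2\int_0^t\rdir(f_s^N)\,ds\leq KN^2+t(C_\lambda+C_\beta)N^2$ directly, and the two claimed bounds follow by convexity exactly as you describe. So: keep your overall scheme, but replace the entropy-inequality/Gronwall treatment of the Glauber part by this direct estimate exploiting angle-uniformity of the reference measure.
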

\begin{proof}[Proof of Proposition \ref{prop:entropyproduction}] 
The density $f_t^N$  is solution  to \index{$\mu^N_t $\dotfill measure at $t$ of the active exclusion process started at $\mu^N$} \index{$f_t^N $\dotfill density of $\mu^N_t$ w.r.t. $\mesref$}
\begin{equation}\label{kolmogorov}\begin{cases}
\partial_t f_t^N= L^*_N f_t^N\\
f_0^N=d\mu^N/d\mesref,
\end{cases}
\end{equation}
where $L^*_N$ is the adjoint of $L_N$ in $L^2(\mesref)$.
To clarify the proof, we divide it in a series of steps.

\paragraph{Expression of the entropy production of the system}The relative entropy of  $\mu_t^N$ with respect to the reference measure $\mesref$ is given by
 \index{$H(\mu\mid\nu) $\dotfill entropy of $\mu$ w.r.t. $\nu$ }
\[H(\mu_t^N\mid \mesref)=H(f_t^N)=\Eref\left( f_t^N \log f^N_t \right),\]
which is non-negative due to the convexity on $[0,+\infty[$ of $x\mapsto x\log x$.
According to equation \eqref{kolmogorov}, its time derivative is 
\begin{equation}\label{entropydecomp}\partial_t H(f_t^N)=\Eref\left(\log f_t^N L_N^* f_t^N \right)+\Eref\left( L^*_N f_t^N \right).\end{equation}
The second term on the right-hand side is equal to \[\Eref\left( L^*_N f_t^N \right)=\Eref\left( f_t^N L_N\widetilde{1}  \right)=0,\] since all constant functions are in the kernel of $L_N$. Equation \eqref{entropydecomp} can be rewritten, since $ L_N^*$ is the adjoint of $L_N$ in $L^2(\mesref)$, as
\[\partial_t H(f_t^N)=\Eref\left( f_t^N L_N \log f_t^N \right).\]
Now thanks to the elementary inequality 
\[\log b -\log a\leq \frac{2}{\sqrt{a}}(\sqrt{b}-\sqrt{a}),\]
we can control $L_N \log f_t^N$ by \[\frac{2}{\sqrt{f_t^N}}L_N\sqrt{f_t^N},\]
therefore, the definition of $L_N$ yields
\[\partial_t H(f_t^N)\leq -2N^2 \rdir\left({f_t^N}\right)+2N\Eref\left(\sqrt{f_t^N} \genwa \sqrt{ f_t^N}\right)+2\Eref\left(\sqrt{f_t^N} \genis \sqrt{ f_t^N} \right),\]
where $\rdir$ is the Dirichlet form defined in Definition \ref{defi:Dirichletform}.

Integrating between the times $0$ and $t$, we get
\begin{equation}\label{entropyineq}H(\mu_t^N\mid \mesref)+2N^2\int_0^t  \rdir \left({f_s^N}\right)\leq H(\mu^N\mid \mesref)+2\int_0^t\Eref\left(\sqrt{f_s^N} (N\genwa+\genis) \sqrt{ f_s^N}\right)ds\end{equation}
Since the Dirichlet form of the symmetric exclusion process is non-negative, we now focus on showing that the part of the entropy due to the weakly asymmetric part  and Glauber part do not grow too much in $N$, in order to get an upper bound on the Dirichlet form $\rdir(f)$ and on the entropy $H(\mu_t^N\mid \mesref)$. From here, control over the initial relative entropy should suffice to ensure that the measure of the active exclusion process remains close to a  product measure.

\paragraph{Bound on the entropy production of the asymmetric part of the dynamics} by definition of the asymmetric dynamic, 
\[\Eref\left(\sqrt{f_s^N} \genwa \sqrt{ f_s^N}\right)= \Eref \pa{\sum_{x,i,\delta=\pm1}\lambda_i(\theta_x)\delta\conf_x(1-{\conf_{\delta e_i}})\sqrt{ f_s^N}(\confhat)\pa{\sqrt{ f_s^N}(\confhat^{x,x+\delta e_i})-\sqrt{ f_s^N}(\confhat)}}.\]
Despite the extra factor $N$, the jump rates of the weakly asymmetric dynamics are not very different from symmetric exclusion process jump rates, which allows us to estimate the quantity above in terms of  the Dirichlet form. More precisely, thanks to the elementary inequality 
\[\E(\varphi \psi)\leq \gamma\E(\varphi^2)/2+\E(\psi^2)/2\gamma\]
 which holds for any positive constant $\gamma$, we can write with 
\[\varphi={\conf_x(1-{\conf_{\delta e_i}})}\pa{\sqrt{ f_s^N}(\confhat^{x,x+\delta e_i})-\sqrt{ f_s^N}(\confhat)},\]
and\[\psi={\lambda_i(\theta_x)\delta\sqrt{ f_s^N}(\confhat)}\]
that
\begin{multline*}\Eref\left(\sqrt{f_s^N} \genwa \sqrt{ f_s^N}\right) \\
\leq  \sum_{x,i,\delta=\pm1}\cro{\frac{\Eref\pa{\lambda_i(\theta_x)^2f_s^N}}{2\gamma}+\frac{\gamma}{2}\Eref\pa{ \conf_x(1-{\conf_{\delta e_i}})\pa{\sqrt{ f_s^N}(\confhat^{x,x+\delta e_i})-\sqrt{ f_s^N}(\confhat)}^2}}.\end{multline*}
In right-hand side above, letting $ C_{\lambda}=4\lambda^2$ the first term can be bounded  by $ C_{\lambda}N^2/2\gamma$, since the number of terms in the sum is $4 N^2$, whereas the second sum of terms is $\gamma \rdir(f_s^N)$. We  then let $\gamma=N$ to obtain the upper bound  
\begin{equation}\label{entropwa}2N\Eref\left(\sqrt{f_s^N} \genwa \sqrt{ f_s^N}\right)\leq C_{\lambda}N^2+{ N^2\rdir(f_s^N)}.\end{equation}

\paragraph{Bound on the entropy production of the Glauber part of the dynamics}thanks to the elementary inequality $ab\leq (a^2+b^2)/2$, and since the jump rates $c_{x,\beta}$ are less than $e^{8\beta}/2\pi$, and $\conf_x$ by $1$ 
\begin{align*}\Eref\left(\sqrt{f_s^N} \genis \sqrt{ f_s^N}\right)=&\Eref\pa{ \sqrt{f_s^N}\sum_{x\in \torus}\conf_x\int_{\ctoruspi}c_{x,\beta}(\theta, \confhat)\pa{\sqrt{f_s^N}(\confhat^{x,\theta})-\sqrt{f_s^N}(\confhat)}d\theta}\\
\leq&\frac{e^{8\beta}}{2\pi}\sum_{x\in \torus}\Eref\pa{\frac{1}{2}\int_{\ctoruspi}{f_s^N}(\confhat^{x,\theta})d\theta+\frac{3}{2}{f_s^N}(\confhat)}.\end{align*}
Since $\Eref\pa{\frac{1}{2\pi}\int_{\ctoruspi}{f_s^N}(\confhat^{x,\theta})d\theta}=\Eref\pa{f_s^N}$, the expectation can be bounded from above by $2$, and we can therefore write, letting $C_{\beta}=2e^{8\beta}/\pi$
\begin{equation}\label{entropis}2\Eref\left(\sqrt{f_s^N} \genis \sqrt{ f_s^N}\right)\leq C_{\beta}N^2.\end{equation}

\paragraph{Bound on the Dirichlet form and on the entropy production}at this point, we obtain from \eqref{entropyineq}, \eqref{entropwa} and \eqref{entropis} \[H(\mu_t^N\mid \mesref)+ N^2\int_0^t \rdir\left({f_s^N}\right)ds\leq H(\mu^N\mid \mesref)+t(C_\lambda+C_{\beta})N^2\]
By \eqref{eq:densitemes}, there exists a constant  $K=K(\widehat{\zeta},\alpha)$, such that for any $N\in \N$, $\norm{\log d\mu^N/d\mesref}_{\infty}\leq KN^2$, and we can therefore estimate the relative entropy of the initial measure $\mu^N$ w.r.t. $\mesref$ by
\begin{equation}\label{entropinit}H(\mu^N\mid \mu_{\am}^*)\leq K N^2.\end{equation}
We can therefore write 
\begin{equation}\label{entropycomp3}H(\mu_t^N\mid \mesref)+\int_0^t N^2 \rdir\left({f_s^N}\right)\leq K(t)N^2. \end{equation}
where $K(t)=K+t(C_\lambda+C_{\beta})$ is a positive constant. Since $H(f)=\Eref(f\log f)$ and $D(f)$ are both non-negative and convex, we can deduce from \eqref{entropycomp3}, that for some time-dependent constant $K_0=\int_0^tK(s)ds$, we have
\begin{equation}\label{entropybounds} H\pa{\frac{1}{t}\int_0^tf_s^N}\leq K_0 N^2\eqand \rdir\left({\frac{1}{t}\int_0^t f_s^N ds}\right)\leq K_0.
\end{equation}
This upper bound proves proposition \ref{prop:entropyproduction}, and will be necessary in the next Section to apply the replacement Lemma \ref{lem:replacementlemma} to the active exclusion process. 
\end{proof}

Before taking on the problem of irreducibility, we give a result that will be needed several times throughout the proof, and comes from the entropy inequality.
Let us denote by $\geniz$ the modified Glauber generator with uniform update of the angle in $ \ctoruspi$, (i.e. $\beta=0$)
\[\geniz f(\confhat)=\sum_{x\in \torus}\conf_x\frac{1}{2\pi}\int_{\ctoruspi}(f(\confhat^{x, \theta})-f(\confhat))d\theta\]
\index{$ \geniz$\dotfill Glauber generator with $\beta=0$}
and denote in a similar fashion
\begin{equation}\label{betaz}L_N^{\beta=0}=N^2\genex+\geniz,\end{equation}
which is the complete generator of the active exclusion process with random update of the angles.
\index{$L_N^{\beta=0} $\dotfill generator of the active exclusion process for $\beta=0$}
Then, accordingly to our previous notations,  $\Prob^{\lambda, 0}_{\mesref}$ is the measure on the trajectories started from $\mesref$ and driven by the generator $L_N^{\beta=0}$. We can now state the following result.
\begin{prop}[Comparison of $\Prob_{\mu^N}^{\lambda,\beta}$ and $\Prob_{\mesref}^{\lambda,0}$]
\label{prop:comparisonbetazero}
{We endow $\statespace$  (resp. $\statespace^{[0,T]}$) with the topology induced by the mapping $\pi^N$ and the topology on $\meas(\ctorus\times\ctoruspi)$ (resp. $\Mhat$, cf. topological setup just before Section \ref{subsec:mainthm}).} There exists a  constant $K_0=K_0(T,\beta, {\widehat{\zeta}})>0$ such that for any {bounded and measurable} function  $X:\statespace^{[0,T]}\to\R$ and any  $A>0$, 
\[ \E^{\lambda,\beta}_{\mu^N}\left[X\pa{\confhat^{[0,T]}}\right]\leq \frac{1}{A}\pa{K_0N^2+\log\E^{\lambda,0}_{\mesref}\left[\exp \left(A X\pa{\confhat^{[0,T]}}\right)\right]},\]
where $\confhat^{[0,T]}$ is the notation already introduced at the end of Section \ref{subsec:processdef} for a trajectory $(\confhat(t))_{t\in [0,T]}$.
\end{prop}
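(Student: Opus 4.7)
The plan is to reduce the bound to a relative-entropy estimate on path space, then compute that entropy via a Girsanov-type formula. Specifically, for any two probability measures $\mu \ll \nu$ on $\statespace^{[0,T]}$, any measurable functional $X$ and any $A>0$, the classical entropy inequality reads
\[ \E_\mu[X] \;\leq\; \frac{1}{A}\pa{H(\mu\mid\nu) + \log \E_\nu[e^{AX}]}. \]
Applying it with $\mu=\Prob^{\lambda,\beta}_{\mu^N}$ and $\nu=\Prob^{\lambda,0}_{\mesref}$ reduces the proposition to proving the path-space bound $H(\Prob^{\lambda,\beta}_{\mu^N}\mid \Prob^{\lambda,0}_{\mesref})\leq K_0 N^2$ for some constant $K_0=K_0(T,\beta,\boldsymbol\dens_0)$.

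I would then factorize the Radon--Nikodym density on path space as
\[ \frac{d\Prob^{\lambda,\beta}_{\mu^N}}{d\Prob^{\lambda,0}_{\mesref}}\pa{\confhat^{[0,T]}} \;=\; \frac{d\mu^N}{d\mesref}(\confhat(0))\;\cdot\; Z_T\pa{\confhat^{[0,T]}}, \]
so that $H(\Prob^{\lambda,\beta}_{\mu^N}\mid \Prob^{\lambda,0}_{\mesref}) = H(\mu^N\mid\mesref) + \E^{\lambda,\beta}_{\mu^N}[\log Z_T]$. The first term is already bounded by $KN^2$ through the initial entropy estimate \eqref{entropinit}. For the dynamical factor $Z_T$, observe that the generators $L_N$ and $L_N^{\beta=0}$ agree on their exclusion components $N^2\genex$, so $Z_T$ depends only on the difference between $\genis$ and $\geniz$. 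Since the total Glauber jump rate at each occupied site equals $\conf_x$ in both cases -- because $\int_{\ctoruspi}c_{x,\beta}(\theta,\confhat)\,d\theta = \int_{\ctoruspi}(2\pi)^{-1}d\theta = 1$ -- the standard Girsanov formula for marked point processes gives
\[ \log Z_T \;=\; \sum_{k:\,\tau_k\leq T}\log\!\pa{2\pi\, c_{x_k,\beta}(\theta_k,\confhat(\tau_k^-))}, \]
where $(\tau_k, x_k, \theta_k)$ enumerates the successive Glauber jump times, sites, and newly drawn angles. The compensator integral cancels identically thanks to the normalization above; this algebraic cancellation is the crucial point, as it prevents the time-acceleration factor $N^2$ from entering the entropy.

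To conclude, the rates $c_{x,\beta}$ are bounded uniformly on $\ctoruspi\times\statespace$ between $e^{-8\beta}/(2\pi)$ and $e^{8\beta}/(2\pi)$ (since the exponent $\beta\sum_{y\sim x}\conf_y\cos(\theta_y-\theta)$ takes values in $[-4\beta,4\beta]$), hence each term $\abss{\log(2\pi c_{x_k,\beta})}$ is at most $8\beta$. The expected number of Glauber jumps in $[0,T]$ under $\Prob^{\lambda,\beta}_{\mu^N}$ is at most $TN^2$, since there are at most $N^2$ occupied sites each updating their angle at total rate $1$. Consequently $\abss{\E^{\lambda,\beta}_{\mu^N}[\log Z_T]}\leq 8\beta T N^2$, and combining both estimates yields $H(\Prob^{\lambda,\beta}_{\mu^N}\mid \Prob^{\lambda,0}_{\mesref})\leq K_0 N^2$ with $K_0 = K + 8\beta T$. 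The only mildly delicate step is justifying the Girsanov representation in our pure-jump marked setting, but this is classical for bounded-rate processes; the rest is a bookkeeping exercise once the rate cancellation is identified.
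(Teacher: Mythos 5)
Your approach is correct and takes a genuinely different (and in fact cleaner) route than the paper. Both arguments begin with the entropy inequality, but the paper applies it with reference measure $\Prob^{\lambda,\beta}_{\mesref}$, which only replaces the initial measure and leaves a log-Laplace term under the $\beta$-dynamics; it must then bound the dynamical Radon--Nikodym density pointwise by $e^{8\beta R}$ and use Cauchy--Schwarz followed by a Poisson moment estimate for $R$ to transfer to the $\beta=0$ dynamics, incurring a doubling of $A$ along the way. You instead apply the entropy inequality directly against the full target $\Prob^{\lambda,0}_{\mesref}$ on path space, reducing the problem to a bound on the \emph{relative entropy} $H(\Prob^{\lambda,\beta}_{\mu^N}\mid \Prob^{\lambda,0}_{\mesref})$ rather than on exponential moments of the density; the factorization $d\mu^N/d\mesref(\confhat(0))\cdot Z_T$ then makes the entropy split additively, and the dynamical contribution is controlled by a \emph{first}-moment estimate on $\log Z_T$. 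Since the exclusion generators coincide and $\int c_{x,\beta}\,d\theta = \int c_{x,0}\,d\theta = 1$, the compensator in the Girsanov formula cancels, and $\abs{\log(2\pi c_{x,\beta})}\leq 8\beta$ combined with $\E^{\lambda,\beta}_{\mu^N}[R]\leq TN^2$ gives $\E^{\lambda,\beta}_{\mu^N}[\log Z_T]\leq 8\beta TN^2$. This avoids the Cauchy--Schwarz re-parameterization entirely and yields the sharper constant $K_0=K+8\beta T$ (linear in $\beta$), versus the paper's $K_0 = 2K + T(e^{16\beta}-1)$. The only point to be careful about, which you flag, is the validity of the Girsanov density for the marked pure-jump process, but this is indeed standard given the uniform boundedness of $c_{x,\beta}$ above and below.
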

\begin{proof}[Proof of Proposition \ref{prop:comparisonbetazero}]The proof of this Proposition is rather straightforward thanks to the entropy inequality. In a first step, we compare the same process starting from $\mesref$. First note that for any function $X:\statespace^{[0,T]}\to\R$, we can write 
\[\E^{\lambda, \beta}_{\mu^N}\left[X\pa{\confhat^{[0,T]}}\right]=\E^{\lambda, \beta}_{\mesref}\pa{ \frac{d\mu^N }{d\mesref}(\confhat(0)) X\pa{\confhat^{[0,T]}}}.\]
This yields that 
\begin{align}\label{entropybeta}
\E^{\lambda, \beta}_{\mu^N}\left[ X\pa{\confhat^{[0,T]}}\right]&\leq \frac{1}{A }\pa{H(\mu^N \mid{\mesref})+\log \E^{\lambda, \beta}_{\mesref}\left[\exp \left(A X\pa{\confhat^{[0,T]}}\right)\right]}.
\end{align}
In the entropy inequality above, $ \E^{\lambda, \beta}_{\mu^N}$ is the expectation under the measure of the process started from $\mu^N$, whereas $\E^{\lambda, \beta}_{\mesref}$is that of the process started from the stationary measure $\mesref$.

By \eqref{entropinit}, the first term in the right-hand side above is less than $KN^2/A$ for some fixed constant $K=K(\widehat{\zeta})$. 
Furthermore, the Radon-Nikodym derivative of the process with alignment ($\beta >0$) w.r.t the one without alignment ($\beta=0$) 
can be explicitly computed. Given a c\`adl\`ag trajectory $\confhat^{ [0,T]}\in \statespace^{[0,T]}$, 
consider $\tau_1,\ldots ,\tau_R$ the set of angle jumps between times $0$ and $T$, let us denote by $x_i$ the site at which the angle changed at time $\tau_i$, and by $\theta_i=\theta_{x_i}(\tau_i)$ the new angle at site $x_i$. Then, the density between the measures with and without alignment is given by 
\[\frac{d\Prob^{\lambda, \beta}_{\nu}}{d\Prob^{\lambda, 0}_{\nu}}(\confhat^{[0,T]})=\prod_{i=1}^R\frac{c_{x_i,\beta}(\theta_i, \confhat(\tau_i))}{c_{x_i,0}(\theta_i, \confhat(\tau_i))}\leq e^{8\beta R},\]
where $R$ is the number of angle updates between times $0$ and $T$. To establish the estimate above, we used that $c_{x,\beta}(\theta, \confhat)$ can be uniformly bounded from above by $e^{8\beta}/2\pi$, that $c_{x,0}(\theta, \confhat)=1/2\pi$, and that regardless of the configuration and the inverse temperature $\beta$, each site updates its angle at rate $1$(i.e. $\int_{\theta}c_{x,\beta}(\theta, \confhat)=1$). 
We can now estimate the second term in the right-hand side of equation \eqref{entropybeta} by 
\[\frac{1}{A}\log \E^{\lambda,0}_{\mesref}\left[e^{8\beta R}\exp \left(AX\pa{\confhat^{[0,T]}}\right)\right].\]
Applying the Cauchy-Schwarz inequality yields that the quantity above is less than 
\[\frac{1}{2A}\pa{\log\E^{\lambda,0}_{\mesref}\left[e^{16\beta R}\right]+\log\E^{\lambda,0}_{\mesref}\left[\exp \left(2A X\pa{\confhat^{[0,T]}}\right)\right]}.\]
Since the angle updates happen in each site at rate $1$ except when the site is empty, we can define on the same probability space as our process a family $P_x$ of i.i.d. Poisson variable with mean $T$, and such that $R\leq \sum_{x\in \torus}P_x$. Thanks to the elementary inequality 
\[\log \E\cro{e^{16\beta\sum_{x\in \torus}P_x}}=T(e^{16\beta}-1)N^2 ,\]
we now only have to let  \[K_0(T, \beta,\widehat{\zeta})=2K(\widehat{\zeta})+T(e^{16\beta}-1)\] and replace $A$ by $2A$ to conclude  the proof of Proposition \ref{prop:comparisonbetazero}.
\end{proof}

\subsection{Irreducibility and control on full clusters}
\label{subsec:irreducibility}

{ \textit{Unlike the exclusion process with one type of particles, the multi-type exclusion process is not irreducible when the number of particles is too large, namely when the domain has less than one empty site. When all the sites are occupied for example, the process is stuck in its current configuration, up to realignment, due to the exclusion rule. At high density, we therefore lose the mixing properties we need to reach local equilibrium. To illustrate this statement, consider a square macroscopic domain of size $\varepsilon N$, and  on it a  configuration with the bottom half filled with particles with angle $\theta$, and the top half filled with particles with angle $\theta'\neq \theta$, and letting a finite number of sites be empty, the mixing time of this setup is of order larger than $N^2$ due to the rigidity of the configuration. In order to reach equilibrium, an empty site needs to "fetch'' a particle and transport it in the other cluster, and so on, until the density is homogeneous for both types of particles. The scaling of our alignment dynamics, is, furthermore, not sufficient to ensure sufficiently frequent realignment of the particles to solve this issue. }}

{\textit{ In order to prove the scaling limit of a multi-type exclusion process, it is therefore critical to bound the particle density away from $1$. This issue was solved in \cite{Quastel1992} by using the fact that the total density of the multi-type SSEP (the angle blind model) follows the standard SSEP dynamics (with one specie). Thus the total density could be controlled by the classical argument on the hydrodynamic limit for SSEP. In our case, however, the total density does not follow the SSEP dynamics. In fact, it is not even a Markov chain due to the asymmetric parts which depend on the angles. A different argument is required to control the evolution of the total density, which is the purpose of the subsection.}}

\bigskip

\bigskip 

In the general setup where the number of types of particles in a domain $B$ can reach $|B|$ (which will often be the case when particles take their angles in $\ctoruspi$), it is known that the exclusion process with $|B|-1$ particles is no longer irreducible, as a consequence of a generalization of the $n$-puzzle (cf. Johnson \and Story, 1879, see \cite{JS1879}). We therefore need to consider only the local configurations with two empty sites, on which the exclusion process is irreducible regardless of the number of types of particles, as stated in the following Lemma. For any integers $a,b\in \Z$, $\llbracket a,b\rrbracket=\{a,\dots,b\}$ denotes the segment of integers between $a$ and $b$.
\begin{lemm}[Irreducibility of the displacement process with two empty sites]\label{lem:twoemptysites}
Consider a square domain $B=B_p(x)$, and two configurations $\confhat$, $\confhat'$ two configurations with the same types and number of particles in $B$, i.e. such that \[\sum_{x\in B}\conf_x\delta_{\theta_x}=\sum_{x\in B}\conf'_x\delta_{\theta'_x}.\]
Further assume that the number of empty sites in $\eta$ and $\eta'$ is at least $2$. Then, there exists a sequence of configurations $\confhat^{0},\ldots ,\confhat^n$, such that $\confhat^0=\confhat$, $\confhat^n=\confhat'$, and such that for any $k\in \llbracket0,n-1\rrbracket$, $\confhat^{k+1}$  is reached from $\confhat^k$ by one allowed particle jump, i.e.
\[\confhat^{k+1}=\pa{\confhat^k}^{x_k, x_k+z_k},  \eqand \conf^k_{x_k+z_k}=1-\conf^k_{x_k}=0\eqand \abss{z_k}=1.\]
Furthermore, there exists a constant $C$ such that $n\leq C p^4$.
\end{lemm}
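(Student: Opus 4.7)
The plan is to exhibit a sequence of at most $Cp^4$ legal particle jumps transforming $\confhat$ into $\confhat'$, by placing the correct token (a particle of the prescribed type, or a hole) at each site of $B = B_p(x)$ in turn, following a carefully chosen ordering. First, I would enumerate the sites of $B$ as $s_1,\ldots,s_N$, with $N=(2p+1)^2$, by scanning $B$ shell by shell from outside in: the outermost shell $B_p(x)\setminus B_{p-1}(x)$ traversed counterclockwise from a corner, then $B_{p-1}(x)\setminus B_{p-2}(x)$, and so on. Setting $\confhat^{(0)} = \confhat$, I would build inductively configurations $\confhat^{(k)}$ agreeing with $\confhat'$ on $\{s_1,\ldots,s_k\}$, using only legal jumps supported in the unfrozen region $U_k:=\{s_k,\ldots,s_N\}$.

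The elementary step is a routing subroutine that brings a designated token from its current location $y\in U_k$ to $s_k$ via the classical hole-transport technique: find an empty site in $U_k$, slide it adjacent to the token, swap them, and reposition the empty site for the next advance. Since $y$ and $s_k$ both lie in a shell of diameter $O(p)$, and a single advance of the token costs at most $O(p)$ elementary jumps (the empty site may have to go around the token), each site placement costs $O(p^2)$ jumps. Summing over the $N=O(p^2)$ placements yields the claimed bound $Cp^4$.

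The main obstacle is the endgame at the corners of each shell: once every site of a shell but the last has been frozen, that last site becomes accessible only through already-frozen cells, and a hole cannot turn the corner without violating the freeze constraint. This is precisely the situation resolved by the classical $15$-puzzle corner trick: the \emph{penultimate} and \emph{ultimate} sites of each shell are to be processed together, by pre-loading their target tokens at the two adjacent cells just inside the shell and then sliding both into position at once via a local $2\times 2$ cyclic rotation that consumes \emph{two} empty sites simultaneously. This is exactly where the hypothesis of at least two empty sites is essential; with only one empty site, the $15$-puzzle parity obstruction would leave half of the target configurations unreachable.

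To guarantee that $U_k$ contains the two empty sites needed by both the routing and the corner trick, I would further refine the within-shell order so that, in each shell, the sites whose target in $\confhat'$ is a particle are processed before the sites whose target is empty, so that empty sites are never frozen as long as particles remain to be placed in the same shell. When the peeling reaches a residual region of side at most $3$, all remaining holes are contained in it and irreducibility there is verified directly by inspection, completing the induction and yielding the overall bound $n\leq Cp^4$.
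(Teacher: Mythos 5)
Your approach is genuinely different from the paper's, and in your form it has a gap. The paper's proof reduces everything to a single primitive: it shows that for any edge $a=(a_1,a_2)\subset B$, the contents of $a_1$ and $a_2$ can be exchanged by legal jumps inside $B$ at cost $O(p)$, by transporting both empty sites to form a $2\times 2$ square with $a$, performing a four-step rotation around that square (which swaps $a_1,a_2$ and returns the holes to their sides), and walking the holes back. Since any rearrangement of the site contents is a product of adjacent transpositions, and each particle needs $O(p)$ transpositions to reach its target, one gets $O(p^2)\times O(p)\times O(p)=O(p^4)$ jumps with no freezing, no shell structure, and no corner-trick analysis at all.

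Your shell-peeling scheme, by contrast, commits to freezing sites permanently, and this is where the gap lies. You order sites shell by shell and, within a shell, freeze particle-targets before hole-targets. But after a shell is fully processed, its hole-target sites are frozen \emph{empty}, so those holes are permanently removed from circulation. The interior is then left with only $h_2$ holes, where $h_2$ is the number of hole-targets of $\confhat'$ that lie strictly inside the shell — and nothing forces $h_2\geq 2$. If $h_2=1$ you are reduced to a single-hole $15$-puzzle on the inner box, which carries the classical parity obstruction that the two-hole hypothesis was supposed to kill; if $h_2=0$ the inner box is packed solid and no jump is possible there at all. Your claim that when the peeling reaches the residual $3\times 3$ region ``all remaining holes are contained in it'' is therefore false whenever $\confhat'$ places a hole on an outer shell; the within-shell ordering you propose does not prevent this, since it merely postpones freezing holes until the end of the \emph{current} shell, not of the entire procedure. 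To repair the argument one would have to leave \emph{all} hole-targets unfrozen until every particle-target in $B$ has been filled (and check that the corner trick can still be executed with the holes wherever they happen to be), which is a non-trivial reorganization. The paper's reduction to adjacent transpositions avoids the problem entirely because it never freezes anything: both holes remain available throughout.
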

\begin{proof}[Proof of Lemma \ref{lem:twoemptysites}]The proof of this statement is quite elementary. Fix a configuration $\confhat\in \statespace$ on a rectangular domain $B$ with two empty sites, and let $a=(a_1,a_2)$ be an edge in $\torus$. We are first going to prove that $\confhat^{a_1,a_2}$ can be reached from $\confhat$ using allowed particles jumps. Notice that according to the exclusion rule, we can consider that any empty site is allowed to move freely by exchanging their place with any site next to it. 

We first bring ourselves back to a configuration described in Fig. \ref{Irred2trous}, where the two closest empty sites are brought next to the edge $a$. More precisely, we reach a configuration where the two empty sites and the two sites $a_1$ and $a_2$ are at the vertices of a side-1 square. From here, we are able to invert the two particles in $a_1$ and $a_2$ by a circular motion of the four empty sites along the edges of the square, and then bring back the empty sites along the paths that brought them next to $a$ to their original location. Doing so, one reaches exactly the configuration $\confhat^{a_1,a_2}$  from $\confhat$ with allowed particle jumps in $B$. 

We deduce from this last statement that for any pair of configurations $\confhat$, $\confhat'$ with the same particles in $B$, $\confhat'$ can be reached from $\confhat$ with jumps in $B$ since the transition can be decomposed along switches of nearest neighbor sites. The process is thus irreducible on the sets with fixed numbers $\K$ of particles, as soon as $K$ is smaller than $|B|-2$.  Furthermore, this construction ensures that any two neighboring particles can be switched with a number of particle exchanges of order $p$ where we denoted  by $p$ the size of the box. Since one needs to invert $2p$ pairs of particles at most to move one particle to its final position in $ \confhat'$, this proves the last statement. 
\end{proof}

\begin{figure}
\centering
\input{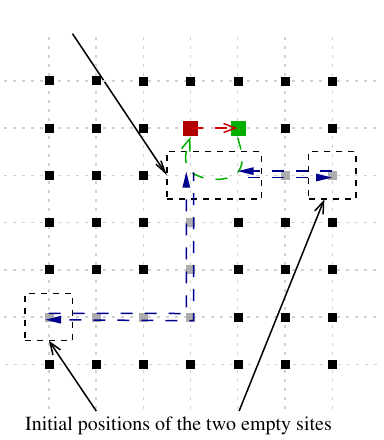tex_t}
\caption{Reaching $\confhat^{a_1,a_2}$ from $\conf$ .}
\label{Irred2trous}
\end{figure} 

We now prove that large microscopic boxes are rarely fully occupied under the dynamics. Let us denote by $\epx$ the event 
\begin{equation} \label{epxdef}\epx=\left\{\sum_{y\in B_p(x)}\conf_y\leq \abss{B_p(x)}-2\right\},\end{equation}
\index{$E_{p,x} $\dotfill  $B_p(x)$ contains at least $2$ empty sites}
\index{$E_p $\dotfill $E_{p,0}$}
on which the box of size $p$ around $x$ contains at least two empty sites. When the site $x$ is the origin, we will simply write  $E_p$ instead of $E_{p,0}$. In order to ensure that full clusters very rarely appear in the dynamics, we need the following Lemma. 
\begin{prop}[Control on full clusters]
\label{prop:fullclusters}For any positive time  $T$,
\begin{equation}\label{fullceq}\lim_{p\to\infty}\lim_{N\to \infty}\E^{\lambda,\beta}_{\mu^N}\pa{\int_0^T\frac{1}{N^2}\sum_{x\in \torus}\1_{\epxc}(t)dt}=0.\end{equation}
\end{prop}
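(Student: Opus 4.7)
The strategy combines the superexponentially small $\mesref$-probability of a nearly-full cluster (a static estimate) with the entropy and Dirichlet form bounds of Proposition~\ref{prop:entropyproduction} (dynamical information). Since the indicator $\1_{\epxc}$ is angle-blind and under the product measure $\mesref$ the occupation variables form i.i.d.\ Bernoulli$(\am)$, a crude union bound gives
\[q_p\;:=\;\mesref(E_{p,0}^c)\;\leq\;|B_p|\,\am^{|B_p|-1},\]
which decays superexponentially in $p^2$.

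The first step is to control the exponential moment of $V:=\frac{1}{N^2}\sum_{x\in\torus}\1_{\epxc}$ under $\mesref$. Partitioning $\torus$ into $|B_p|=(2p+1)^2$ sub-lattices of spacing $2p+1$ guarantees that, within each sub-lattice, the indicators $\1_{\epxc}$ depend on pairwise disjoint occupation variables and are therefore $\mesref$-independent. Applying H\"older's inequality across the $|B_p|$ sub-lattices and the bound $\log(1+u)\leq u$ yields a Chernoff-type estimate
\[\log\Eref\bigl[e^{\gamma V}\bigr]\;\leq\;\frac{N^2}{|B_p|}\bigl(e^{|B_p|\gamma/N^2}-1\bigr)\,q_p.\]
Writing the quantity of interest as $T\,\Eref[\bar f^N V]$ with $\bar f^N:=\frac{1}{T}\int_0^T f_s^N\,ds$ as in Proposition~\ref{prop:entropyproduction}, and applying the entropy inequality together with $H(\bar f^N)\leq K_0N^2$, one obtains, setting $\kappa:=|B_p|\gamma/N^2$,
\[\E^{\lambda,\beta}_{\mu^N}\cro{\tfrac{1}{N^2}\textstyle\sum_{x\in\torus}\int_0^T\1_{\epxc}(t)\,dt}\;\leq\;T\,\frac{|B_p|}{\kappa}\pa{K_0+\frac{(e^\kappa-1)\,q_p}{|B_p|}}.\]

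\emph{The main obstacle} is that the entropy inequality alone is \emph{not} sufficient: demanding that $(e^\kappa-1)q_p\to 0$ forces $\kappa\lesssim c\,p^2$ (with $c=|\log\am|$), so that the residual term $|B_p|K_0/\kappa$ is of order $K_0$, a positive constant that does not vanish as $p\to\infty$. Roughly, the entropic cost $K_0 N^2$ is comparable to the number of sites of $\torus$ and cannot be beaten by the gain $e^{-cp^2}$ alone. To close this gap, one must supplement the entropy bound with the Dirichlet form estimate $\rdir(\bar f^N)\leq K_0$: writing $\bar f^N=h^2$ with $\int h^2 d\mesref=1$ and $\dir(h)=\rdir(\bar f^N)\leq K_0$, the quantity $\int V h^2\,d\mesref$ can be estimated via a Donsker--Varadhan / Rayleigh-Ritz variational formula, in which the spectral gap of the symmetric exclusion generator restricted to a box of side $2p+1$ (cf.\ Section~\ref{subsec:spectralgap}) produces the dynamical smoothing that forces $h^2\,d\mesref$ to be close, in a suitable local sense, to $\mesref$ itself on every box $B_p(x)$. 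This reduces the estimate to the $\mesref$-Chernoff bound above and yields the desired vanishing limit in the order $N\to\infty$ then $p\to\infty$.
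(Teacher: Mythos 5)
Your first two steps — the Chernoff-type bound on $V$ under $\mesref$ via sub-lattice decomposition, and the observation that the entropy inequality alone leaves a residual $O(K_0)$ that does not vanish in $p$ — are correct and useful. However, the final paragraph contains a genuine gap: the Dirichlet form bound and the spectral gap do \emph{not} force the local marginal of $h^2\,d\mesref$ to be close to $\mesref$ on every box $B_p(x)$. What the Dirichlet form bound (through the one-block / Replacement Lemma machinery) actually gives is \emph{local equilibrium}: the local marginal is close to a product measure whose parameter is the \emph{local empirical density} $\tau_x\rho_{\varepsilon N}$, not the fixed density $\am$ of the reference measure. Since $\tau_x\rho_{\varepsilon N}$ can a priori be arbitrarily close to $1$, in which case $\E_{\tau_x\rho_{\varepsilon N}}(\1_{E_p^c})\approx(\tau_x\rho_{\varepsilon N})^{(2p+1)^2}$ is close to $1$, local equilibrium by itself does nothing to rule out full clusters. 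The missing ingredient is an a priori bound showing that the local empirical density rarely gets close to $1$.

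This is precisely what the paper supplies through Proposition~\ref{prop:densityestimate}, which is proved by a completely different mechanism: one applies the generator to $\frac{1}{N^2}\sum_x\phi_\delta(\rho^{N,H}_x)$ with $\phi_\delta(\rho)=(1+\delta-\rho)^{-1}$, shows (Lemmas~\ref{lem:Gronwalldensity} and~\ref{lem:Gronwallasym}) that the symmetric part yields a favourable discrete Laplacian term which absorbs the weak asymmetric contribution, and then invokes Gronwall's lemma. This gives $\limep\limN\E^{\lambda,\beta}_{\mu^N}[\frac{1}{N^2}\sum_x\1_{\{\tau_x\rho_{\varepsilon N}(t)>1-\delta'/2\}}]\leq C\delta'$, which combined with the Replacement Lemma (for the angle-blind function $\1_{F_{p,x}}$) and the elementary bound $[\tau_x\rho_{\varepsilon N}]^{p'}\leq(1-\delta)^{p'}+\1_{\{\tau_x\rho_{\varepsilon N}>1-\delta\}}$ closes the argument. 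Without a substitute for this density estimate, your proof does not go through.
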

\begin{rema}[Scheme of the proof]We first sketch the proof in a continuous idealized setup to explain the general ideas before giving the rigorous proof. To prove that the box of microscopic size $p$ is not full, setting $p'=(2p+1)^2$ the cardinal of $B_p$, it is enough to prove thanks to the microscopic setting that 
\[\iint_{[0,T]\times\ctorus}\rho^{p'}_t(u)du dt\underset{p'\to \infty}{\to }0 ,\]
where $\rho_t(u)$ denotes the macroscopic density in $u$ at time $t$.

We expect the  total density $\rho$ to follow the partial differential equation 
\begin{equation}\label{aqd}\partial_t \rho=\Delta \rho-\nabla\cdot( m(1-\rho)),\end{equation}
where $m$ is an \emph{a priori} random quantity representing the local direction of the asymmetry, which can be represented as the vector field which would satisfy at any time $t$ and for any smooth function $H:\ctorus\to\R$
\[\int_{\ctorus}H(u)m_t(u)du=\lim_{N\to\infty}\frac{1}{N^2}\sum_{x\in \torus}H(x)\eta_x(t)\pa{\begin{matrix}
                                                            cos(\theta_x(t))\\
                                                            sin(\theta_x(t))
                                                           \end{matrix}}
.\]
Naturally, making sense of this quantity is not obvious, and it is not our purpose in this paragraph. For now, we carry on with our heuristic presentation, and therefore assume that \eqref{aqd} holds true. We can therefore formally write, letting $\phi(\rho)=1/(1-\rho)$
\begin{align}\label{milfullc}\partial_t \int_{\ctorus}\phi(\rho_t)du=&\int_{\ctorus}\phi'(\rho_t)\cro{\Delta \rho_t-\nabla\cdot (m_t(1-\rho_t))}du\nonumber\\
=&\int_{\ctorus}\phi''(\rho_t)\cro{-(\nabla \rho_t)^2+m_t(1-\rho_t)\nabla\rho_t}du\nonumber\\
\leq &\int_{\ctorus}\phi''(\rho_t)\cro{-(\nabla \rho_t)^2+\frac{(\nabla \rho_t)^2}{2}+{\norm{m_t}_{\infty}^2(1-\rho_t)^2}}du\\
\leq &\int_{\ctorus}\phi''(\rho_t){\norm{m_t}_{\infty}^2(1-\rho_t)^2}du\nonumber=2\norm{m_t}_{\infty}^2\int_{\ctorus}\phi(\rho_t)du\nonumber
\end{align}
One could then apply Gronwall's Lemma to obtain that for any time $t$, \[\int_{\ctorus}\phi(\rho_t)du\leq  e^{2\norm{m}_{\infty}^2 t} \int_{\ctorus}\phi(\rho_0)du .\]
Furthermore, for any time $t$, \[\int_{\ctorus}\phi(\rho_t)du\geq\frac{1}{\delta}\int_{\ctorus}\1_{\{\rho_t\geq 1-\delta\}}+\int_{\ctorus}\1_{\{\rho_t\leq 1-\delta\}}=\frac{1-\delta}{\delta}\int_{\ctorus}\1_{\{\rho_t\geq 1-\delta\}}+1,\]
therefore, for any time $t$,
\begin{equation}\label{majlimeplimNc}\int_{\ctorus}\1_{\{\rho_t\geq 1-\delta\}}\leq  \frac{\delta}{1-\delta}\cro{e^{2\norm{m}_{\infty}^2 t} \int_{\ctorus}\phi(\rho_0)du-1}\underset{\delta\to 0}{\to} 0 .\end{equation}
As a consequence, for any time $t$, we could therefore write 
\begin{equation}\label{crudemajdelta}\iint_{[0,T]\times\ctorus}\rho^{p'}_t(u)du dt\leq T(1-\delta)^{p'}+\iint_{[0,T]\times\ctorus} \1_{\{\rho_t\geq 1-\delta\}}.\end{equation}
The first term in the right-hand side vanishes for any fixed $\delta$ as $p'\to \infty$, whereas the second becomes as small as needed letting $\delta\to 0$.

Since our macroscopic density does not verify equation \eqref{aqd}, however, the operations above need to be performed in a microscopic setup. The derivation of equation \eqref{majlimeplimNc} is the purpose of Proposition \ref{prop:densityestimate}. Two intermediate Lemmas \ref{lem:Gronwalldensity} and \ref{lem:Gronwallasym} prove the microscopic equivalent of equation \eqref{milfullc}. 
\end{rema}

Before giving the proof of Proposition \ref{prop:fullclusters}, which is postponed to the end of the subsection, we give first the following estimate. 
\begin{prop}[High density estimate]\label{prop:densityestimate} Denote \[\rho_{\varepsilon N}=\frac{1}{2\varepsilon N+1}\sum_{|y|\leq \varepsilon N}\eta_y\]
the average density in a small mesoscopic box centered at $0$.
For any positive $0<\delta'<1/2$,  and any time $t>0$, we have the bound
\begin{equation}\label{deltatoz}\limep \limN\E^{\lambda,\beta}_{\mu^N}\pa{\frac{1}{N^2}\sum_{x\in \torus}\1_{\{\tau_x\rho_{\varepsilon N}(t)>1-\delta'/2\}}}\\
\leq \delta'C,\end{equation}
where $C$ is a finite constant depending continuously on $t$, and also depending on the asymmetry $\lambda$, and the initial profile $\widehat{\zeta}$.
\end{prop}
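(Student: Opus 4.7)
The plan is to faithfully implement at the microscopic level the heuristic sketched in the preceding remark, replacing the macroscopic total density $\rho_t$ by its mesoscopic average $\tau_x\rho_{\varepsilon N}(t)$ and the convex function $\phi(r)=1/(1-r)$ by a bounded regularization $\phi_\eta(r)=(1+\eta-r)^{-1}$, $\eta>0$. The parameter $\eta$ will be sent to zero only at the very end, so that $\phi_\eta(\tau_x\rho_{\varepsilon N})$ remains finite throughout (even when the mesoscopic box is completely full). I introduce
\[
F_N^{\eta,\varepsilon}(t) := \frac{1}{N^2}\sum_{x\in \torus}\phi_\eta\bigl(\tau_x\rho_{\varepsilon N}(t)\bigr),
\]
whose initial expectation is uniformly bounded in $N$, $\varepsilon$, $\eta$ thanks to assumption \eqref{assumption0}, which guarantees $\zeta(u)\leq 1-c$ for some $c=c(\widehat\zeta)>0$.

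Using Dynkin's formula, the derivative $\frac{d}{dt}\E^{\lambda,\beta}_{\mu^N}[F_N^{\eta,\varepsilon}(t)]$ equals $\E^{\lambda,\beta}_{\mu^N}[L_N F_N^{\eta,\varepsilon}(t)]$, and I split $L_N=N^2\genas+N\genwa+\genis$ accordingly. The Glauber part $\genis$ does not alter the occupation variables $\conf_x$, so its contribution vanishes identically. For the symmetric term $N^2\genas$, a discrete integration by parts combined with the convexity of $\phi_\eta$ (Taylor expansion to order two) produces a non-positive ``energy'' contribution
\[
-\frac{C}{N^2}\sum_{x\in \torus}\sum_{i=1}^{2}\phi_\eta''(\tau_x\rho_{\varepsilon N})\bigl(\tau_{x+e_i}\rho_{\varepsilon N}-\tau_x\rho_{\varepsilon N}\bigr)^2,
\]
which plays the role of $-\int \phi''(\rho)|\nabla\rho|^2$ in the heuristic. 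For the asymmetric term $N\genwa$, I rewrite the contribution as a discrete gradient of $\phi_\eta(\tau_x\rho_{\varepsilon N})$ paired against the local drift $\lambda_i(\theta_x)\conf_x(1-\conf_{x+e_i})$ and apply Young's inequality, so that one half of the previous negative quadratic term absorbs the resulting gradient squared, leaving a remainder bounded by $C\lambda^2\phi_\eta''(\tau_x\rho_{\varepsilon N})(1-\tau_x\rho_{\varepsilon N})^2$. The elementary pointwise identity $\phi_\eta''(r)(1-r)^2\leq 2\phi_\eta(r)+O(\eta)$ closes the estimate in terms of $F_N^{\eta,\varepsilon}$ itself, yielding an inequality of the form
\[
\frac{d}{dt}\E^{\lambda,\beta}_{\mu^N}\bigl[F_N^{\eta,\varepsilon}(t)\bigr]\leq C_\lambda\,\E^{\lambda,\beta}_{\mu^N}\bigl[F_N^{\eta,\varepsilon}(t)\bigr]+R_{N,\varepsilon,\eta}(t),
\]
where $R_{N,\varepsilon,\eta}$ gathers discretization errors that vanish as $N\to\infty$ and then $\varepsilon\to 0$.

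Gronwall's lemma, applied together with the uniform initial bound, then gives $\limep\limN\E^{\lambda,\beta}_{\mu^N}[F_N^{\eta,\varepsilon}(t)]\leq e^{C_\lambda t}C(\widehat\zeta)$ uniformly in $\eta>0$. To pass from this estimate to the statement of the proposition, I apply Markov's inequality: on the event $\{\tau_x\rho_{\varepsilon N}(t)>1-\delta'/2\}$ one has $\phi_\eta(\tau_x\rho_{\varepsilon N}(t))\geq 2/(\delta'+2\eta)$, so that
\[
\frac{1}{N^2}\sum_{x\in \torus}\1_{\{\tau_x\rho_{\varepsilon N}(t)>1-\delta'/2\}}\leq \frac{\delta'+2\eta}{2}\,F_N^{\eta,\varepsilon}(t),
\]
and taking the limits in the order $N\to\infty$, $\varepsilon\to 0$, $\eta\to 0$ gives the announced bound with $C=e^{C_\lambda t}C(\widehat\zeta)/2$. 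The main technical obstacle I anticipate lies in the analysis of the asymmetric contribution $N\genwa$: the extra factor $N$ is not dissipated by a single discrete integration by parts, and one has to carefully extract the discrete gradient of $\phi_\eta(\tau_x\rho_{\varepsilon N})$ so that a Young inequality can transfer it into the symmetric negative quadratic term. The configuration-dependent coefficients $\lambda_i(\theta_x)$, absent from the standard weakly asymmetric SSEP treated e.g. in \cite{KLB1999}, make this rewriting noticeably more delicate and require using only $|\lambda_i(\theta)|\leq \lambda$ rather than any cancellation.
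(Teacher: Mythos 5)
Your proposal is correct and follows the same Gronwall strategy as the paper, but it takes a noticeably more direct route. The paper does not apply $\phi_\delta$ directly to $\tau_x\rho_{\varepsilon N}$: it instead works with $\rho_x^{N,H}=\langle\pi^N,H_{x/N}\rangle$ for a \emph{smooth} kernel $H$, proves its two intermediate lemmas (Lemmas \ref{lem:Gronwalldensity} and \ref{lem:Gronwallasym}) in that setting, and only at the very end relates $\rho_x^{N,H}$ back to $\tau_x\rho_{\varepsilon N}$ by approximating the indicator $H^\varepsilon$ with a sequence of $C^1$ functions $H_l^\varepsilon$. The reason for that detour is largely one of bookkeeping convenience: with a smooth $H$, the discrete gradient $\rho^{N,H}_{x+e_i}-\rho^{N,H}_x$ is $O(1/N)$ uniformly, so all Taylor remainders are uniformly small and the error terms $c_N(\delta,H)$ depend only on $\delta$ and $H$. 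Working directly with $\tau_x\rho_{\varepsilon N}$, as you propose, replaces the uniform $O(N^{-1})$ gradient by a sharper but non-smooth structure: the jump $\tau_x\rho_{\varepsilon N}(\confhat^{y,y+e_i})-\tau_x\rho_{\varepsilon N}(\confhat)$ is $\pm(2\varepsilon N+1)^{-2}$ when the bond $(y,y+e_i)$ crosses $\partial B_{\varepsilon N}(x)$, and zero otherwise. To make your remainders vanish you must therefore count more carefully — for each bond only $O(\varepsilon N)$ values of $x$ contribute — which is why your error terms carry an $\varepsilon$-dependence (as you correctly note). Both versions work: the discrete Laplacian identity $\sum_y(\conf_y-\conf_{y+e_i})[K(y+e_i-x)-K(y-x)]=\tau_{x-e_i}\rho_{\varepsilon N}+\tau_{x+e_i}\rho_{\varepsilon N}-2\tau_x\rho_{\varepsilon N}$ holds for any kernel $K$, the normalization $\sum_y K_\varepsilon(y-x)=1$ is exact rather than $1+o_N(1)$, and $\tau_x\rho_{\varepsilon N}\in[0,1]$ is automatically in the domain of $\phi_\eta$, so you even avoid the paper's restriction to $N$ large enough to keep $\rho_x^{N,H}\leq 1+\delta/2$. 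Your version trades the paper's cleaner uniform estimates for a shorter proof that skips the $H\to H_l^\varepsilon\to H^\varepsilon$ approximation chain. One small cosmetic remark: the pointwise bound $\phi_\eta''(r)(1-r)^2\leq 2\phi_\eta(r)$ actually holds exactly on $[0,1]$, so the $O(\eta)$ correction in your statement of it is not needed.
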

\begin{proof}[Proof of Proposition \ref{prop:densityestimate}]For any small $\delta>0$, let us denote by $\phi_{\delta}$ the application 
\[\func{\phi_{\delta}}{[0,1+\delta/2{}]}{\R_+}{\rho}{\frac{1}{1+\delta-\rho}}.\]
Note that all successive derivatives of order less than $k$ of $\phi_{\delta}$ are positive (and increasing) functions, and all are bounded by $C_k/\delta^{k+1}$ for some family of universal constants $(C_k)_{k>0}$.

We now fix a $C^1$ function $ H:\ctorus \to\R_+$, and assume that $\int_{\ctorus} H(u)du=1$. For any $u\in \ctorus$, we denote by $H_u$ the function 
\[H_u:v\mapsto H(u-v).\] 
In order to simplify the notations, for any configuration $ \confhat\in \statespace$, and given its empirical measure $\pi^N$, we shorten 
\begin{equation}\label{rnx}\rn{x}(\confhat):=<\pi^N,\hx>=\frac{1}{N^2}\sum_{y\in \torus}H\pa{\frac{x-y}{N}}\conf_y.\end{equation}
In some cases, this quantity could be larger than $1$, so that we need to take further precautions. For any fixed $\delta$ we will therefore assume that $N$ is large enough for the condition 
\[\frac{1}{N^2}\sum_{x\in \torus}H(x/N)\leq 1+\frac{\delta}{2},\]
to hold, which is possible because we assumed that $H$ is smooth and $\int_{\ctorus} H(u)du=1$. Note that this restriction to $N$ large enough is not an issue, because in all what follows, $H$ will be fixed and $N$ will go to $\infty$.

For $N$ large enough, the density $\rn{x}(\confhat)$ is now in the domain of $\phi_\delta$, we now write
\begin{equation}\label{gronfullceq}\partial_t\E^{\lambda,\beta}_{\mu^N}\pa{\frac{1}{N^2}\sum_{x\in \torus}\phi_{\delta}\pa{\rn{x}(\confhat)}}=\E^{\lambda,\beta}_{\mu^N}\pa{\frac{1}{N^2}\sum_{x\in \torus}L_N\phi_{\delta}\pa{\rn{x}(\confhat)}},\end{equation}
where $L_N$ is the generator of the complete process $L_N=N^2 \gene+ N\genwa+\genis$. Our goal is to apply Gronwall's Lemma to the expectation in the left-hand side, therefore we now need to estimate the right-hand side.

Since $\rn{x}$ does not depend on the angles of the particles, neither does $\phi_{\delta}\pa{\rn{x}}$, and the contribution of the Glauber part $\genis$ of the generator $L_N$ in the right-hand side above vanishes. The two other parts of the generator together yield the wanted bound, and are treated in separate lemmas for the sake of clarity. As mentioned earlier, these two lemmas are the microscopic equivalent of  equation \eqref{milfullc}. 
\begin{lemm}\label{lem:Gronwalldensity}[Contribution of the symmetric part] There exists a  sequence $(c_N(\delta,H))_{N\in \N}$ depending only on $\delta$ and $H$, vanishing as $N\to \infty$, and such that for any configuration $\confhat\in \statespace$
\begin{equation}\label{laplacien}\sum_{x\in \torus}\gene \phi_{\delta}\pa{\rn{x}}(\confhat)
\leq -\sum_{\substack{x\in \torus\\i=1,2}}\frac{\phi_{\delta}''\pa{\rn{x+e_i}}+\phi_{\delta}''\pa{\rn{x}}}{2}\pa{\rn{x+e_i}-\rn{x}}^2(\confhat)+ c_N(\delta,H).\end{equation}
\end{lemm}
\begin{lemm}\label{lem:Gronwallasym}[Contribution of the asymmetric part] There exists a  sequence $(\widetilde{c}_N(\delta,H))_{N\in \N}$ depending only on $\delta$ and $H$, \emph{vanishing as $N\to \infty$}, and such that  for any configuration $\confhat\in \statespace$
\begin{multline}\frac{1}{N}\sum_{x\in \torus}\genwa \phi_{\delta}\pa{\rn{x}}(\confhat)\\
\leq \sum_{x\in \torus}\cro{\sum_{i=1}^2\frac{\phi_{\delta}''\pa{\rn{x+e_i}}+\phi_{\delta}''\pa{\rn{x}}}{2}\pa{\rn{x+e_i}-\rn{x}}^2+\frac{{4}\lambda^2\phi_{\delta}\pa{\rn{x}}}{N^2}}(\confhat)+ \widetilde{c}_N(\delta,H).\end{multline}
\end{lemm}
\begin{proof}[Proof of Lemma \ref{lem:Gronwalldensity}]By definition of the symmetric part of the generator $\gene$, 
\[\sum_{x\in \torus}\gene \phi_{\delta}\pa{\rn{x}(\confhat)}=\sum_{x,y\in \torus}\sum_{i=1}^2\1_{\{\conf_y\conf_{y+e_i}=0\}}\cro{\phi_{\delta}\pa{\rn{x}(\confhat^{y,y+e_i})}-\phi_{\delta}\pa{\rn{x}(\confhat)}}.\]
We now develop the gradient of $\phi_{\delta}$ to the second order, to obtain that the right-hand side above is equal to 
\begin{multline*}\sum_{x,y\in \torus}\sum_{i=1}^2\1_{\{\conf_y\conf_{y+e_i}=0\}}\bigg[\phi_{\delta}'\pa{\rn{x}(\confhat)}\pa{\rn{x}(\confhat^{y,y+e_i})-\rn{x}(\confhat)}\\
+\frac{\phi_{\delta}''\pa{\rn{x}(\confhat)}}{2}\pa{\rn{x}(\confhat^{y,y+e_i})-\rn{x}(\confhat)}^2+o\pa{\pa{\rn{x}(\confhat^{y,y+e_i})-\rn{x}(\confhat)}^2}\bigg].\end{multline*}
Note that since the successive derivatives of order less than $k$ of $\phi_\delta$ are uniformly bounded on $[0,1]$ by $C_k/\delta^k$, the  vanishing quantity $ o\pa{\pa{\rn{x}(\confhat^{y,y+e_i})-\rn{x}(\confhat)}^2}$ can be bounded uniformly in $\confhat$, $x,y$ and $i$ (but not uniformly in $\delta$). 
Since $H$ is a smooth function,  \[\abs{\rn{x}(\confhat^{y,y+e_i})-\rn{x}(\confhat)}=\frac{1}{N^2}\abs{H_{x/N}\pa{\frac{y+e_i}{N}}-H_{x/N}\pa{\frac{y}{N}}} \] is of order $N^{-3}$, the contributions of the second line above are therefore at most of order  $N^{-2}$  and vanish in the limit $N\to \infty$. This yields
\begin{equation}\label{ippphi}\sum_{x\in \torus}\gene \phi_{\delta}\pa{\rn{x}}=\sum_{x\in \torus}\phi_{\delta}'\pa{\rn{x}(\confhat)}\sum_{y\in \torus}\sum_{i=1}^2\1_{\{\conf_y\conf_{y+e_i}=0\}}\pa{\rn{x}(\confhat^{y,y+e_i})-\rn{x}(\confhat)}+o_N(1),\end{equation}
where $o_N(1)$ is less than a vanishing sequence $(c^1_N)_{N\in \N}$ depending on $\delta$ and $H$ only.

Since for any $z\in \ctorus$, $H_u(v+z)=H_{u-z}(v)$, the definition of $\rn{x}$ yields 
\begin{align}\1_{\{\conf_y\conf_{y+e_i}=0\}}\pa{\rn{x}(\confhat^{y,y+e_i})-\rn{x}(\confhat)}=&\frac{1}{N^2}(\conf_y-\conf_{y+e_i})\pa{H_{x/N}\pa{\frac{y+e_i}{N}}-H_{x/N}\pa{\frac{y}{N}}}\nonumber\\
=&\frac{1}{N^2}\conf_{y}\pa{H_{x-e_i/N}\pa{\frac{y}{N}}-H_{x/N}\pa{\frac{y}{N}}}\nonumber\\
&-\frac{1}{N^2}\conf_{y+e_i}\pa{H_{x/N}\pa{\frac{y+e_i}{N}}-H_{x+e_i/N}\pa{\frac{y+e_i}{N}}}\nonumber.
\end{align}
Summing the quantity above over $y$, one obtains exactly $\rn{x-e_i}+\rn{x+e_i}-2\rn{x}$. This is the discrete Laplacian in the variable $x$ of $\rn{x}$, and a discrete integration by parts allows us to rewrite the first term on the right-hand side of equation \eqref{ippphi} as 
\begin{equation*}-\sum_{x\in \torus}\sum_{i=1}^2\pa{\phi_{\delta}'\pa{\rn{x+e_i}}-\phi_{\delta}'\pa{\rn{x}}}\pa{\rn{x+e_i}-\rn{x}}.\end{equation*}
We now write 
\[\pa{\phi_{\delta}'\pa{\rn{x+e_i}}-\phi_{\delta}'\pa{\rn{x}}}=\frac{(\phi_{\delta}''\pa{\rn{x+e_i}}+\phi_{\delta}''\pa{\rn{x}})}{2}\pa{\rn{x+e_i}-\rn{x}}+o\pa{\rn{x+e_i}-\rn{x}},\]
in which $\rn{x+e_i}-\rn{x}$ is of order $1/N$ because $H$ is a smooth function, to finally obtain that 
\begin{equation}\label{lapdisc}
\sum_{x\in \torus}\gene \phi_{\delta}\pa{\rn{x}}=-\sum_{x\in \torus}\sum_{i=1}^2\frac{\phi_{\delta}''\pa{\rn{x+e_i}}+\phi_{\delta}''\pa{\rn{x}}}{2}\pa{\rn{x+e_i}-\rn{x}}^2
+o_N(1),\end{equation}
where once again, the $o_N$ can be bounded by a vanishing sequence $(c_N)_N$ depending only on $\delta$, which completes the proof of Lemma \ref{lem:Gronwalldensity}
\end{proof}

\begin{proof}[Proof of Lemma \ref{lem:Gronwallasym}]
This proof follows the exact same steps as for the previous one. We first obtain by definition of $\genwa$ and developing the discrete gradient of $\phi$ that 
\begin{equation}\label{genwafullc}\frac{1}{N}\sum_{x\in \torus}\genwa \phi_{\delta}\pa{\rn{x}}=o_N(1)+\frac{1}{N}\sum_{x,y\in \torus}\sum_{i=1}^2(\tau_y\cur_i^{\lambda_i})\phi_{\delta}'\pa{\rn{x}(\confhat)}\pa{\rn{x}(\confhat^{y,y+e_i})-\rn{x}(\confhat)},\end{equation}
where $\cur_i^{\lambda_i}$ is defined according to equation \eqref{currentssym} as 
\[\cur_i^{\lambda_i}(\confhat)=\lambda_i(\theta_0)\conf_0(1-\conf_{e_i})-\lambda_i(\theta_{e_i})\conf_{e_i}(1-\conf_{0}),\]
and $o_N(1)$ is less than a vanishing sequence depending only on $\delta$ and $H$.
Once again, similar steps as in the previous case allow us to rewrite 
\begin{align*}(\tau_y\cur_i^{\lambda_i})\Big(\rn{x}&(\confhat^{y,y+e_i})-\rn{x}(\confhat)\Big)\\
=&\frac{1}{N^2}\cro{\lambda_i(\theta_y)\conf_y(1-\conf_{y+e_i})+\lambda_i(\theta_{y+e_i})\conf_{y+e_i}(1-\conf_{y})}\pa{H_{x/N}\pa{\frac{y+e_i}{N}}-H_{x/N}\pa{\frac{y}{N}}}\\
=&\frac{1}{N^2}\lambda_i(\theta_y)\conf_y(1-\conf_{y+e_i})\pa{H_{x/N}\pa{\frac{y+e_i}{N}}-H_{x/N}\pa{\frac{y}{N}}}\\
&+\frac{1}{N^2}\lambda_i(\theta_{y+e_i})\conf_{y+e_i}(1-\conf_{y})\pa{H_{x/N}\pa{\frac{y+e_i}{N}}-H_{x/N}\pa{\frac{y}{N}}}\\
=&\frac{1}{N^2}\lambda_i(\theta_y)\conf_y(1-\conf_{y+e_i})\pa{H_{x/N}\pa{\frac{y+e_i}{N}}-H_{x+e_i/N}\pa{\frac{y+e_i}{N}}}\\
&+\frac{1}{N^2}\lambda_i(\theta_{y+e_i})\conf_{y+e_i}(1-\conf_{y})\pa{H_{x-e_i/N}\pa{\frac{y}{N}}-H_{x/N}\pa{\frac{y}{N}}}
\end{align*}
Summing once again by parts in $x$, we obtain that the second term in the right-hand side of equation \eqref{genwafullc} is
\begin{align}\label{p1fullc}\frac{1}{N}\sum_{x,y\in \torus}&\sum_{i=1}^2(\tau_y\cur_i^{\lambda_i})\phi_{\delta}'\pa{\rn{x}(\confhat)}\pa{\rn{x}(\confhat^{y,y+e_i})-\rn{x}(\confhat)}\nonumber\\
&=\frac{1}{N^3}\sum_{x\in \torus}\sum_{i=1}^2\cro{\phi_{\delta}'\pa{\rn{x+e_i}(\confhat)}-\phi_{\delta}'\pa{\rn{x}(\confhat)}}\times\nonumber\\
&\quad\sum_{y\in\torus}\cro{\lambda_i(\theta_y)\conf_y(1-\conf_{y+e_i})H_{x+e_i/N}\pa{\frac{y+e_i}{N}}+\lambda_i(\theta_{y+e_i})\conf_{y+e_i}(1-\conf_{y})H_{x/N}\pa{\frac{y}{N}}}\nonumber\\
&:=S_1+S_2,\end{align}
where
\[S_1=\frac{1}{N^3}\sum_{x\in \torus}\sum_{i=1}^2\cro{\phi_{\delta}'\pa{\rn{x+e_i}(\confhat)}-\phi_{\delta}'\pa{\rn{x}(\confhat)}}\sum_{y\in\torus}\cro{\lambda_i(\theta_y)\conf_y(1-\conf_{y+e_i})H_{x+e_i/N}\pa{\frac{y+e_i}{N}}}\]
and 
\[S_2=\frac{1}{N^3}\sum_{x\in \torus}\sum_{i=1}^2\cro{\phi_{\delta}'\pa{\rn{x+e_i}(\confhat)}-\phi_{\delta}'\pa{\rn{x}(\confhat)}}\sum_{y\in\torus}\cro{\lambda_i(\theta_{y+e_i})\conf_{y+e_i}(1-\conf_{y})H_{x/N}\pa{\frac{y}{N}}}.\]
These two terms are treated in the exact same fashion, we therefore only treat in full detail the case of $S_1$, $S_2$ will follow straightforwardly. First, we develop the difference $\phi_{\delta}'\pa{\rn{x+e_i}(\confhat)}-\phi_{\delta}'\pa{\rn{x}(\confhat)}$ to the first order, 
\[\phi_{\delta}'\pa{\rn{x+e_i}}-\phi_{\delta}'\pa{\rn{x}}=\phi_{\delta}''\pa{\rn{x+e_i}}\pa{\rn{x+e_i}-\rn{x}}+o\pa{\rn{x+e_i}-\rn{x}}.\]
Once again,  $H$ being a smooth function, $\rn{x+e_i}-\rn{x}$ is of order $1/N$, therefore the $o\pa{\rn{x+e_i}-\rn{x}}$ is also a $o_N(1/N)$, and the corresponding contribution in $S_1$ vanishes in the limit $N\to\infty$. Recall that $\phi_{\delta}''$ is a positive function, we now apply in $S_1$ the elementary inequality $ab\leq a^2/2+b^2/2$ to  
\[a=\sqrt{\phi_{\delta}''}\pa{\rn{x+e_i}}\pa{\rn{x+e_i}-\rn{x}}\]
and 
\[b=\frac{1}{N^3}\sqrt{\phi_{\delta}''}\pa{\rn{x+e_i}}\sum_{y\in\torus}\cro{\lambda_i(\theta_{y+e_i})\conf_{y+e_i}(1-\conf_{y})H_{x/N}\pa{\frac{y}{N}}}.\]
This yields 
\begin{multline*}\abss{S_1}\leq o_N(1)+ \sum_{\substack{x\in \torus\\i=1,2}}\left[\frac{\phi_{\delta}''\pa{\rn{x+e_i}}}{2}\pa{\rn{x+e_i}-\rn{x}}^2\right.\\
+\left.\frac{\phi_{\delta}''\pa{\rn{x+e_i}}}{2N^6}\pa{\sum_{y\in\torus}\lambda_i(\theta_y)\conf_y(1-\conf_{y+e_i})H_{x+e_i/N}\pa{\frac{y+e_i}{N}}}^2\right].\end{multline*}
The function $H$ being non-negative, for any $y$, we can write \[\lambda_i(\theta_y)\conf_y(1-\conf_{y+e_i})H_{x+e_i/N}\pa{\frac{y+e_i}{N}}\leq \lambda(1-\conf_{y+e_i})H_{x+e_i/N}\pa{\frac{y+e_i}{N}}.\] 
Furthermore, since we assumed that $\int_{\ctorus} H=1$, and since $H$ is smooth, we  get that 
\[\frac{1}{N^2}\sum_{y\in \torus}H_{x/N}(y/N)=1+o_N(1),\]
 which yields 
\[\abs{\frac{1}{N^2}\sum_{y\in\torus}\lambda_i(\theta_y)\conf_y(1-\conf_{y+e_i})H_{x+e_i/N}\pa{\frac{y+e_i}{N}}}\leq\lambda (1-\rn{x+e_i})+o_N(1)\]
This, combined with the previous bound, yields that 
\[\abss{S_1}\leq o_N(1)+ \sum_{\substack{x\in \torus\\i=1,2}}\cro{\frac{\phi_{\delta}''\pa{\rn{x+e_i}}}{2}\pa{\rn{x+e_i}-\rn{x}}^2+\frac{\lambda^2\phi_{\delta}''\pa{\rn{x+e_i}}}{2N^2} (1-\rn{x+e_i})^2}.\]
A similar bound can be achieved for $S_2$, this time developing the difference $\phi_{\delta}'\pa{\rn{x+e_i}}-\phi_{\delta}'\pa{\rn{x}}$ in $\rn{x}$ instead of $\rn{x+e_i}$, 
\[\abss{S_2}\leq o_N(1)+ \sum_{\substack{x\in \torus\\i=1,2}}\cro{\frac{\phi_{\delta}''\pa{\rn{x}}}{2}\pa{\rn{x+e_i}-\rn{x}}^2+\frac{\lambda^2\phi_{\delta}''\pa{\rn{x}}}{2N^2} (1-\rn{x})^2}.\]
Combining these two bounds with identities \eqref{genwafullc} and \eqref{p1fullc}, we obtain that 
\begin{multline*}\frac{1}{N}\sum_{x\in \torus}\genwa \phi_{\delta}\pa{\rn{x}}\\
\leq \sum_{\substack{x\in \torus\\i=1,2}}\cro{\frac{\phi_{\delta}''\pa{\rn{x+e_i}}+\phi_{\delta}''\pa{\rn{x}}}{2}\pa{\rn{x+e_i}-\rn{x}}^2+\frac{\lambda^2\phi_{\delta}''\pa{\rn{x}}}{N^2} (1-\rn{x})^2}+o_N(1),\end{multline*}
where the $o_N(1)$ can be bounded by a vanishing sequence $(\widetilde{c}_N)_N$ depending only on $H$ and $\delta$. One easily obtains that for any non-negative $\delta$ and any $\rho\in [0,1+\delta/2]$,  \[(1-\rho)^2\phi_{\delta}''(\rho)\leq 2\phi_\delta(\rho),\] thus concluding the proof of Lemma \ref{lem:Gronwallasym}.
\end{proof}

We are now ready to apply Gronwall's Lemma and complete the proof of Proposition \ref{prop:densityestimate}. For that purpose, let us define \[\Phi(t)=\E^{\lambda,\beta}_{\mu^N}\pa{\frac{1}{N^2}\sum_{x\in \torus}\phi_{\delta}\pa{\rn{x}(t)}}.\]
according to the previous Lemmas \ref{lem:Gronwalldensity}, \ref{lem:Gronwallasym} and to equation \eqref{gronfullceq}, there exists a sequence $k_N=c_N+\widetilde{c}_N$ depending only on $\delta$ and $H$, verifying  
\[k_N\underset{N \to \infty}{\to} 0,\] 
and such that \[\partial_t \Phi(t)\leq {4}\lambda^2 \Phi(t)+k_N.\]
Since $\phi_{\delta}$ is bounded from below by $1/1+\delta$, $\Phi(t)$ also is, and therefore
\[\partial_t \Phi(t)\leq ({4}\lambda^2 +k_N(1+\delta))\Phi(t).\]
Gronwall's Lemma therefore yields that for any non-negative $t$, 
\[\E^{\lambda,\beta}_{\mu^N}\pa{\frac{1}{N^2}\sum_{x\in \torus}\phi_{\delta}\pa{\rn{x}(t)}}\leq \E^{\lambda,\beta}_{\mu^N}\pa{\frac{1}{N^2}\sum_{x\in \torus}\phi_{\delta}\pa{\rn{x}(0)}} e^{({4}\lambda^2 +k_N(1+\delta))t},\]
where this time the right-hand side depends on the trajectory only through its initial state $\confhat(0)$.

Fix a small $\delta'>0$.
$\phi_\delta$ being a non-decreasing function bounded from below by $1/1+\delta$, one can write for any $\rho\in[0,1+\delta/2]$
\begin{align*}\phi_{\delta}(\rho)\geq& \frac{1}{\delta+\delta'}\1_{\{\rho>1-\delta'\}}+\1_{\{\rho\leq 1-\delta'\}}\frac{1}{1+\delta}=\frac{1-\delta'}{(1+\delta)(\delta+\delta')}\1_{\{\rho>1-\delta'\}}+\frac{1}{1+\delta}\end{align*}
We apply this decomposition to the left-hand side of the inequality above, to obtain that 
\begin{multline}\label{gronapp}\E^{\lambda,\beta}_{\mu^N}\pa{\frac{1}{N^2}\sum_{x\in \torus}\1_{\left\{\rn{x}(t)>1-\delta'\right\}}}\\
\leq \frac{(1+\delta)(\delta+\delta')}{1-\delta'}\cro{\E^{\lambda,\beta}_{\mu^N}\pa{\frac{1}{N^2}\sum_{x\in \torus}\phi_{\delta}\pa{\rn{x}(0)}} e^{({4}\lambda^2 +k_N(1+\delta))t}-\frac{1}{1+\delta}}.\end{multline}
Coming back to the definition \eqref{rnx} of $\rn{x}$, for any smooth non-negative function $H$ with integral equal to $1$, taking the $\limsup$ $N\to \infty$, we thus obtain  from equation \eqref{gronapp}
\begin{multline}\label{fullcconc}\limN\E^{\lambda,\beta}_{\mu^N}\pa{\frac{1}{N^2}\sum_{x\in \torus}\1_{\left\{\rn{x}(t)>1-\delta'\right\}}}\\
\leq \limN\frac{(1+\delta)(\delta+\delta')}{1-\delta'}\cro{\E^{\lambda,\beta}_{\mu^N}\pa{\frac{1}{N^2}\sum_{x\in \torus}\phi_{\delta}\pa{\rn{x}(0)}} e^{{4}\lambda^2t}-\frac{1}{1+\delta}}.\end{multline}

Fix a small $\varepsilon>0$, and let us denote for any $u, v\in \ctorus$ \[H^{\varepsilon}(v)=\frac{1}{(2\varepsilon)^2}\1_{[-\varepsilon, +\varepsilon]^2}(v)\eqand H_u^{\varepsilon}(v)=\frac{1}{(2\varepsilon)^2}\1_{[-\varepsilon, +\varepsilon]^2}(v-u).\] 
Recalling that $\rho_{\varepsilon N}(t)$ is the empirical density in a box of size $\varepsilon N$ around the origin at time $t$, we can then write 
\[\tau_x\rho_{\varepsilon N}(t)=\frac{(2\varepsilon N)^2}{(2\varepsilon N+1)^2}\rho_x^{N,H^{\varepsilon}}=\rho_x^{N,H^{\varepsilon}}+o_N(1).\]

At this point, we want to apply equation \eqref{fullcconc} to $H=H^{\varepsilon}$, which is an indicator function, and thus need to be smoothed out. For that purpose, consider a sequence $(H_l^{\varepsilon})_{l\in \N}$ of functions such that 
\begin{itemize}
\item $\forall l\in \N$, $\forall u\in \ctorus$,  $ H_l^{\varepsilon}(u)\geq 0$  and $\underset{\ctorus}{\sup}\;H_l^{\varepsilon}=\underset{\ctorus}{\sup}\;H^{\varepsilon}=1/(2\varepsilon)^2$     .
\item $\forall l\in \N$, $H_l^{\varepsilon}\in C^1(\ctorus)$ and $\int_{\ctorus} H_l^{\varepsilon}(u) du=1$.
\item $H_l^{\varepsilon}(u)\neq H^{\varepsilon}(u)\Rightarrow \varepsilon -1/l <\norm{u }_{\infty}<\varepsilon+1/l$. 
\end{itemize}
The existence of such a sequence of functions is quite clear and is left to the reader.
In particular, the last condition imposes that \[I_l:=\int_{\ctorus}\1_{H_l^{\varepsilon}(u)\neq H^{\varepsilon}(u)}du\leq \frac{16\varepsilon}{l},\]
which is the area of the crown on which the two functions may differ.
The sequence $H_l^{\varepsilon}$ converges for any fixed $\varepsilon$ towards $H^{\varepsilon}$ in $L^1(\ctorus)$. Furthermore, notice that for any $x\in \torus$, since both the $H_l^{\varepsilon}$'s and $H^{\varepsilon}$ are bounded by $1/(2\varepsilon)^2$,
\begin{align*}\abs{\rho_x^{N,H_l^{\varepsilon}}-\rho_x^{N,H^{\varepsilon}}}&\leq\frac{1}{N^2}\sum_{y\in \torus}\conf_y\abs{H^{\varepsilon}_{l,x/N}\pa{\frac{y}{N}} - H_{x/N}^{\varepsilon}\pa{\frac{y}{N}}}\\
&\leq \pa{\frac{16\varepsilon}{l}+o_N(1)}\pa{\norm{H_l^{\varepsilon}}_{\infty}+\norm{H^{\varepsilon}}_{\infty}} = \frac{8}{\varepsilon l}+o_N(1),\end{align*}
where the last line represents the proportion of sites of the discrete torus in the crown around $u=x/N$ on which $H^{\varepsilon}_{l,x/N}$ and $H^{\varepsilon}_{x/N}$ can be different. The last observation yields that for any $x\in \torus$, we can write 
\[\abs{\tau_x\rho_{\varepsilon N}(t)-\rho_x^{N,H_l^{\varepsilon}}(t)}\leq  \frac{8}{\varepsilon l}+o_N(1),\]
where the $o_N(1)$ can be chosen independent of $\confhat$ and $x$. Fix $\varepsilon>0$ and consider $N_0$ and $l_0$ such that for any $N\geq N_0$ and any $l\geq l_0$, 
\[\abs{\tau_x\rho_{\varepsilon N}(t)-\rho_x^{N,H_l^{\varepsilon}}(t)}\leq \frac{\delta'}{2}.\]
For any such pair $l$, $N$, we therefore also have 
\[\1_{\{\tau_x\rho_{\varepsilon N}(t)>1-\delta'/2\}}\leq\1_{\left\{\rho_x^{N,H_l^{\varepsilon}}(t)>1-\delta'\right\}}.\]

 For any $l$, by our assumptions, equation \eqref{fullcconc} holds for $H=H_l^{\varepsilon}$ for any positive $\delta$ and $\delta'$. For any $l\geq l_0$, we can therefore write
\begin{multline}\label{eq:cor1Irred}\limN\E^{\lambda,\beta}_{\mu^N}\pa{\frac{1}{N^2}\sum_{x\in \torus}\1_{\{\tau_x\rho_{\varepsilon N}(t)>1-\delta'/2\}}}\\
\leq \limN \frac{(1+\delta)(\delta+\delta')}{1-\delta'}\cro{\E^{\lambda,\beta}_{\mu^N}\pa{\frac{1}{N^2}\sum_{x\in \torus}\phi_{\delta}\pa{ \rho^{N,H^{\varepsilon}_{l}}_x(0)}} 
e^{{4}\lambda^2t}-\frac{1}{1+\delta}}.\end{multline}
Recall that under $\Prob^{\lambda,\beta}_{\mu^N}$, the initial configuration $\confhat(0)$ is distributed according to a product measure fitting the initial profile $\zeta$ defined before \eqref{assumption0}. By law of large number, and since $\phi_\delta$ is smooth on $[0,1+\delta/2]$, we therefore obtain for any $v\in \ctorus$
\[\limsup_{N\to\infty}\E^{\lambda,\beta}_{\mu^N}\pa{\phi_{\delta}\pa{ \rho^{N,H^{\varepsilon}_{l}}_{\lfloor Nv \rfloor}(0)}}=\phi_{\delta} \pa{\zeta*H_l^{\varepsilon}(v)},\]
where $\lfloor Nv \rfloor=(\lfloor Nv_1 \rfloor,\lfloor Nv_2 \rfloor)\in \torus$ and $"*"$ denotes the convolution operator on $\ctorus$.
By dominated convergence theorem, we thus obtain 
\[\E^{\lambda,\beta}_{\mu^N}\pa{\frac{1}{N^2}\sum_{x\in \torus}\phi_{\delta}\pa{ \rho^{N,H^{\varepsilon}_{l}}_x(0)}}\xrightarrow[N\to \infty]{}\int_{\ctorus}\phi_{\delta}\pa{ \zeta*H_l^{\varepsilon}(v)}dv.\]

Since $\zeta$ and satisfies \eqref{assumption0}, it is bounded away from $1$ uniformly on $\ctorus$, $\zeta*H_l^{\varepsilon}$ is also bounded away from $1$ uniformly in $\varepsilon$, and therefore 
\[\phi_{\delta}\pa{ \zeta*H_l^{\varepsilon}(v)}\leq C^*, \]
where $C^*=C^*(\widehat{\zeta})$ is a constant which does not depend on $l$, $\varepsilon$, $v$ or $\delta$. Letting now $\delta$ go to $0$, we obtain from \eqref{eq:cor1Irred} and the limit above that for any $\varepsilon>0$ and any time $t$,
\[ \limN\E^{\lambda,\beta}_{\mu^N}\pa{\frac{1}{N^2}\sum_{x\in \torus}\1_{\{\tau_x\rho_{\varepsilon N}(t)>1-\delta'/2\}}}\\
\leq\frac{\delta'}{1- \delta'}(e^{{4}\lambda^2t}C^*-1),\]
which concludes the proof of Proposition \ref{prop:densityestimate} since we assumed $\delta'<1/2$.
\end{proof}

With the estimate stated in Proposition \ref{prop:densityestimate}, we are ready to prove Proposition \ref{prop:fullclusters}.
\proofthm{Proposition \ref{prop:fullclusters}}{First notice that in order to prove \eqref{fullceq}, it is sufficient to prove it both for $ F_{p,x}$ and $F'_{p,x}$ instead of $\epxc$, where 
\[F_{p,x}=\left\{\sum_{y\in B_p(x)}\conf_y=\abs{B_p(x)}\right\}\eqand F'_{p,x}=\left\{\sum_{y\in B_p(x)}\conf_y=\abs{B_p(x)}-1\right\}.\]
We focus on the first case, the second is derived in the exact same fashion.

  Unlike in \cite{Quastel1992}, the angle blind process's macroscopic density does not evolve according to the heat equation because of the weak drift. However, thanks to the bound  \eqref{entropybounds} on the entropy of the measure $\mu^N_t$ w.r.t. the reference measure $\mesref$ and on the Dirichlet form of the density $ f_t^N$,  local equilibrium holds for the angle-blind process. As a consequence, the replacement Lemma  \ref{lem:replacementlemma} holds for functions independent of the angles (cf. for example \cite{KLB1999}, p77). One therefore obtains that to prove 
  \begin{equation}\label{fullcF}\lim_{p\to\infty}\lim_{N\to \infty}\E^{\lambda,\beta}_{\mu^N}\pa{\int_0^T\frac{1}{N^2}\sum_{x\in \torus}\1_{F_{p,x}}(s)ds}=0, \end{equation}
one can replace $\1_{F_{p,x}(s)}$ by its expectation under the product measure with parameter $\tau_x\rho_{\varepsilon N}(s)$, namely 
\[\E_{\tau_x\rho_{\varepsilon N}(s)}(\1_{F_{p,x}})=\cro{\tau_x\rho_{\varepsilon N}(s)}^{p'},\]
where $p'=(2p+1)^2$ is the number of sites in $B_p$.

To prove  equation \eqref{fullcF}, it is therefore sufficient to prove that $\forall t\in [0,T]$, \begin{equation}\label{finfullc}\lim_{p'\to\infty}\limep\limN \E^{\lambda,\beta}_{\mu^N}\pa{\frac{1}{N^2}\sum_{x\in \torus}\cro{\tau_x\rho_{\varepsilon N}(t)}^{p'}}=0. \end{equation}
To prove the latter, since $\rho_{\varepsilon N}(t)$ is at most $1$, one only has to write, as outlined in equation \eqref{crudemajdelta},
\[ \E^{\lambda,\beta}_{\mu^N}\pa{\frac{1}{N^2}\sum_{x\in \torus}\cro{\tau_x\rho_{\varepsilon N}(t)}^{p'}}\leq (1-\delta)^{p'}+\E^{\lambda,\beta}_{\mu^N}\pa{\frac{1}{N^2}\sum_{x\in \torus}\1_{\{\tau_x\rho_{\varepsilon N}(t)>1-\delta\}}},\]
which holds for any positive $\delta$.

For any fixed $\delta>0$, the first term on the right-hand side vanishes as $p\to \infty$, whereas the second does not depend on $p$ and we can therefore let $\delta\to 0$ after $N\to \infty$, then  $\varepsilon \to 0$, then $p'\to \infty$. Since the right-hand side of equation \eqref{deltatoz} vanishes as $\delta'=2\delta$ goes to $0$, the left-hand side also does, and \eqref{finfullc} holds for any $t$ thanks to Proposition \ref{prop:densityestimate}. This proves equation \eqref{fullcF}, and the equivalent proposition with $ F'_{p,x}$ instead of $F_{p,x}$ is proved in the exact same fashion, thus concluding the proof of Proposition \ref{prop:fullclusters}.}

\section{Law of large number for the exclusion process with angles}
\label{sec:4}

\subsection{Replacement Lemma }
\label{gradientreplacement}
\intro{Our goal in this section is to close the microscopic equations and to replace in the definition of the martingale $M^{H,N}$ introduced in \eqref{martingaledef} any cylinder (in the sense of Definition \ref{def:conf}) function $\function(\confhat)$  by its spatial average $\E_{\densep}(\function)$, where $\densep$ is the empirical angular density over a small macroscopic box of size $\varepsilon N$. We use this Section to introduce new useful notations. 
The proof of the main result of this section, the Replacement Lemma \ref{lem:replacementlemma}, follows closely the usual strategy (c.f. Lemma 1.10 p.77 of \cite{KLB1999}), however it requires several technical adaptations due to the nature of our canonical and grand-canonical measure. In particular, we will need the topological setup and the various results obtained in Section \ref{sec:3}.
}

Consider a cylinder function $\function\in{ \mathcal C}$, and $l$ a positive integer. Recall from \eqref{averagedef} that $\langle \function \rangle_0^l$ is the average of the translations of $\function$ over a box of side $2l+1$ centered at the origin.  Recall from equation \eqref{empiricalprofile} and Definition \ref{defi:angleprofile} that the empirical angular density $\dens_l$ over the box $B_l$ of side $2l+1$ is the measure on $[0,2\pi[$
\[\dens_l=\frac{1}{\abss{B_l}}\sum_{x\in B_l}\conf_x \delta_{\theta_x}.\]

Define
\begin{equation}\label{Vldef}{\mathcal V}^l(\confhat)=\langle\function(\confhat)\rangle_0^{l}-\E_{\densl}(\function) \eqand {\mathcal W}^l(\confhat)=\function(\confhat)-\E_{\densl}(\function), \end{equation}
and for any smooth function  $G\in C(\ctorus)$, let 
\begin{equation}\label{xln}X^{l,N}(G,\confhat)=\frac{1}{N^2}\sum_{x\in\torus}G(x/N) \tau_x{\mathcal W}^l.\end{equation}

We first state that under the measure of active exclusion process, one can replace the average of $g$ over a small macroscopic box by its expectation w.r.t. the grand-canonical measure with  grand-canonical parameter $\densep$.
\begin{lemm}[Replacement Lemma]
\label{lem:replacementlemma}
For every $\delta>0$, we have with the notation \eqref{Vldef}
\[\limsup_{\varepsilon\to 0}\limsup_{N\to \infty}\Prob^{\lambda,\beta}_{\mu^N}\left[\int_0^T\frac{1}{N^2} \sum_{x\in \torus}\tau_x\abs{{\mathcal V}^{\varepsilon N}(\confhat(t) )}dt>\delta\right]=0.\]
\end{lemm}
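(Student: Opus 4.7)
The plan is to implement the classical one-block / two-blocks decomposition, adapted to our measure-valued parameter $\param\in\pset$ and to the loss of uniform mixing at full density.

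First, I would pass to a simpler reference measure: by Proposition~\ref{prop:comparisonbetazero} it suffices to control the analogous probability under $\Prob^{\lambda,0}_{\mesref}$, at an entropy cost of order $N^2$. Combining the entropy inequality with the Feynman--Kac formula in the standard way (cf.\ Appendix~1 of \cite{KLB1999}) and using the Dirichlet-form bound $\rdir(\bar f_T)\leq K_0$ from Proposition~\ref{prop:entropyproduction} (where $\bar f_T=T^{-1}\int_0^T f_s^N\,ds$), the statement reduces to a variational estimate of the form
$$\limsup_{\varepsilon\to0}\limsup_{N\to\infty}\sup_{f:\rdir(f)\leq K_0}\frac{1}{N^2}\sum_{x\in\torus}\Eref\bigl(\tau_x|{\mathcal V}^{\varepsilon N}|\,f\bigr)=0.$$

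Next comes the one-block step. Introduce a microscopic cube $B_l$ and decompose the integrand by the triangle inequality into three pieces: a boundary term $A_1^x$ of order $l/(\varepsilon N)$; an averaged one-block term $A_2^x=\langle|\langle \function\rangle_y^l-\E_{\tau_y\densl}(\function)|\rangle_x^{\varepsilon N}$; and, using the Lipschitz continuity of $\param\mapsto\E_\param(\function)$ on $\pset$ (Appendix~\ref{sec:B}), a two-blocks term $C\,\langle\normm{\tau_y\densl-\tau_x\densep}\rangle_x^{\varepsilon N}$. Bounding $\Eref(A_2^x f)$ follows the classical convexity argument: translation invariance reduces to a single cube $B_l$, and conditioning on the canonical state $\K$ concentrates the restricted density on $\mu_{l,\K}$; the equivalence of ensembles from Appendix~\ref{sec:B}, together with the compactness of $(\pset,\normm{\cdot})$, allows us to replace $\Ecmlk(\function)$ by $\E_{\param_{\K,l}}(\function)=\E_{\densl}(\function)$ with an error vanishing as $l\to\infty$. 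A key point: this argument requires irreducibility on $B_l$ (Lemma~\ref{lem:twoemptysites}), so it only functions on $\{\K\in\Ksett_l\}$; the complementary event is negligible thanks to the full-cluster estimate Proposition~\ref{prop:fullclusters}.

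For the two-blocks step I would prove
$$\limsup_{l\to\infty}\limsup_{\varepsilon\to0}\limsup_{N\to\infty}\E_{\mesref}^{\lambda,0}\bigg(\int_0^T\frac{1}{N^2}\sum_{x\in\torus}\Big\langle\normm{\tau_y\densl-\tau_x\densep}\Big\rangle_x^{\varepsilon N}dt\bigg)=0$$
along the classical path-swapping argument, whose entropic cost is absorbed by the time-averaged Dirichlet form controlled in Proposition~\ref{prop:entropyproduction}. The specific difficulty here is that $\normm{\cdot}$ is not a scalar difference: to handle it, I would exploit the fact that the unit ball $B^*\subset(C^1(\ctoruspi),\norm{\cdot}^*)$ is relatively compact in $C(\ctoruspi)$, fix a finite $\eta$-net $\{g_1,\ldots,g_M\}$ of $B^*$, and reduce the estimate to finitely many \emph{scalar} two-blocks estimates, one for each test function $g_k$, at a residual cost of order $\eta$ which we then let tend to zero.

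The main obstacle lies in the one-block step. Because the grand-canonical parameter $\param\in\pset$ is now a measure rather than a scalar density, the usual arguments based on a finite list of densities must be reworked inside the metric space $(\pset,\normm{\cdot})$; simultaneously, the loss of mixing as the local density approaches one precludes a direct uniform spectral-gap estimate, and is circumvented only by the density control from Proposition~\ref{prop:fullclusters}. Once these two technicalities are handled, the limits $N\to\infty$, $\varepsilon\to0$, $l\to\infty$, taken in this order, close the argument.
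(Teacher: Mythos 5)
Your overall architecture matches the paper's proof: split $|\mathcal{V}^{\varepsilon N}|$ into a boundary piece, a one-block piece, and a two-blocks piece, control the one-block piece via projection onto canonical states and equivalence of ensembles, and control the two-blocks piece via the Lipschitz continuity of $\param\mapsto\Egcm(\function)$ on $(\pset,\normm{\cdot})$. The finite $\eta$-net of $B^*$ that you propose to reduce $\normm{\cdot}$ to finitely many scalar two-blocks estimates is a valid alternative to what the paper does (the paper projects onto canonical states, on which $\normm{\tau_y\densl-\densl}=\normm{\param_{\K'}-\param_{\K}}$ is deterministic, and then applies the equivalence of ensembles directly); either implementation is fine.

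However, your opening reduction has a genuine gap. You claim that, after applying Proposition~\ref{prop:comparisonbetazero} and Feynman--Kac, the statement reduces to
$\limep\limN\sup_{f:\rdir(f)\leq K_0}N^{-2}\sum_{x\in\torus}\Eref(\tau_x|\mathcal{V}^{\varepsilon N}|\,f)=0$.
This variational supremum is not zero: a density $f$ concentrated on configurations that are fully occupied on $B_{\varepsilon N}(x)$ has $\rdir(f)=0$, yet $\Eref(\tau_x|\mathcal{V}^{\varepsilon N}|\,f)$ stays bounded away from zero for generic $\function$, since on such configurations there is no reason for the spatial average $\langle\function\rangle_0^{\varepsilon N}$ to equal $\E_{\densep}(\function)$. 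This is exactly why Lemma~\ref{lem:OBE} imposes, in addition to the bounds on entropy and on $\rdir$, the full-cluster condition~(ii); that condition is a property of the specific time-averaged density $\overline f_T^N$, established by Proposition~\ref{prop:fullclusters} for the process law $\Prob^{\lambda,\beta}_{\mu^N}$, and it does \emph{not} survive the passage to a supremum over all densities with bounded Dirichlet form. You invoke Proposition~\ref{prop:fullclusters} later to dispose of $\K\notin\Ksett_l$, but by that point the Feynman--Kac step has discarded the identity of the density and the proposition no longer applies. The paper avoids this trap entirely: Markov's inequality reduces the claim to $\limep\limN\Eref\big(\overline f_T^N\,N^{-2}\sum_x\tau_x|\mathcal{V}^{\varepsilon N}|\big)=0$, and Lemmas~\ref{lem:OBE} and~\ref{lem:TBE} are applied directly to this single density, for which Propositions~\ref{prop:entropyproduction} and~\ref{prop:fullclusters} together supply hypotheses (i) \emph{and} (ii). No Feynman--Kac argument is used here; it is reserved for the non-gradient estimates where an extra factor of $N$ must be absorbed.
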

The proof is postponed to Subsection \ref{subsec:replacementlemma}, and requires the control of the full clusters stated in Proposition \ref{prop:fullclusters}. For now, we can deduce from this lemma the following result, which will allow us to replace in \eqref{empiricalmeasure1} the currents by their spatial averages.
\begin{coro}
\label{LargeNumbersCor}
For every $\delta>0$, and any \emph{continuous} function \[\func{G}{[0,T]\times \ctorus}{\R}{(t,u)}{G_t(u)},\]
 we get with the notation \eqref{xln}
\[\limsup_{\varepsilon\to 0}\limsup_{N\to \infty}\Prob^{\lambda,\beta}_{\mu^N}\left[\abs{\int_0^T X^{\varepsilon N,N}(G_t,\confhat(t))dt}>\delta\right]=0.\]

\end{coro}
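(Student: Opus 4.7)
The strategy is to decompose $X^{\varepsilon N,N}(G_t,\confhat)$ into a piece to which Lemma \ref{lem:replacementlemma} applies directly plus a harmless remainder that depends only on the regularity of $G$. Since $\tau_x\mathcal{W}^l-\tau_x\mathcal{V}^l=\tau_x g-\langle g\rangle_x^l$ by \eqref{Vldef}, I would write
\[X^{\varepsilon N,N}(G_t,\confhat)=\frac{1}{N^2}\sum_{x\in\torus}G_t(x/N)\,\tau_x\mathcal{V}^{\varepsilon N}(\confhat)+R^{\varepsilon,N}_t(\confhat),\]
with the remainder $R^{\varepsilon,N}_t(\confhat)=\frac{1}{N^2}\sum_{x}G_t(x/N)\bigl(\tau_x g-\langle g\rangle_x^{\varepsilon N}\bigr)(\confhat)$.

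For the remainder, the plan is to perform a change of summation variable in the local spatial average (using the periodicity of $\torus$) to push the averaging onto the smooth test function; this yields
\[R^{\varepsilon,N}_t(\confhat)=\frac{1}{N^2}\sum_{x\in\torus}\tau_x g(\confhat)\left[G_t(x/N)-\frac{1}{|B_{\varepsilon N}|}\sum_{y\in B_{\varepsilon N}}G_t((x-y)/N)\right].\]
Since $g\in\mathcal{C}$ is a bounded cylinder function and $G$ is continuous, hence uniformly continuous, on the compact set $[0,T]\times\ctorus$, the bracket is dominated by the spatial modulus of continuity $\omega_G(\varepsilon)$ of $G$. This gives the deterministic bound $|R^{\varepsilon,N}_t(\confhat)|\le \|g\|_\infty\,\omega_G(\varepsilon)$, uniformly in $t$, $N$, and $\confhat$, which tends to $0$ as $\varepsilon\to 0$. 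In particular $\int_0^T|R^{\varepsilon,N}_t(\confhat(t))|dt\to 0$ uniformly.

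For the main term, the trivial pointwise estimate
\[\left|\frac{1}{N^2}\sum_{x\in\torus}G_t(x/N)\tau_x\mathcal{V}^{\varepsilon N}(\confhat)\right|\le \|G\|_\infty\cdot\frac{1}{N^2}\sum_{x\in\torus}\tau_x|\mathcal{V}^{\varepsilon N}|(\confhat)\]
combined with the Replacement Lemma \ref{lem:replacementlemma} and Markov's inequality yields that its time integral is arbitrarily small with large $\Prob^{\lambda,\beta}_{\mu^N}$-probability after sending first $N\to\infty$ and then $\varepsilon\to 0$. Splitting $\delta$ in halves and combining both bounds via the triangle inequality concludes the proof.

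No substantial obstacle arises at this stage: the corollary is essentially a routine consequence of Lemma \ref{lem:replacementlemma} together with the uniform continuity of $G$ and a rearrangement of sums. The genuine difficulty lies upstream, in the proof of the Replacement Lemma itself, which relies on the entropy bound of Proposition \ref{prop:entropyproduction} and on the full-cluster control of Proposition \ref{prop:fullclusters} to handle the loss of ergodicity at density close to $1$.
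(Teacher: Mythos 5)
Your proposal is correct and follows essentially the same route as the paper's own proof: decompose $X^{\varepsilon N,N}$ into the $\mathcal{V}^{\varepsilon N}$ term handled by the Replacement Lemma plus a remainder $\tau_x g-\langle g\rangle_x^{\varepsilon N}$, rearrange the latter to transfer the spatial averaging onto $G$, and kill it deterministically by uniform continuity of $G$ (the paper's equation labelled \emph{average1} encodes exactly this rearrangement, stated more tersely). The only cosmetic difference is that you spell out the deterministic bound $\|g\|_\infty\,\omega_G(\varepsilon)$ where the paper merely invokes ``smoothness of $G$''; your phrasing via the modulus of continuity is actually the more accurate one, since the Corollary only assumes $G$ continuous.
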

\begin{proof}[Proof of Corollary \ref{LargeNumbersCor}]Recall that $\varepsilon\to 0$ after $N\to \infty$, which means that the smoothness of $G$ allows us to replace in the limit  $G(x/N)$ by its spatial average on a box of size $\varepsilon$, which is denoted by
\[G^{\varepsilon N}(x/N):=\frac{1}{(2{\varepsilon N} +1)^2}\sum_{y\in B_{\varepsilon N}(x)}G(y/N).\]
More precisely, we can write, using notation \eqref{averagedef} for the local averaging, and since $g$ is a cylinder, hence bounded, function,
\begin{align}
\nonumber\limN \int_0^T\frac{1}{N^2}  \sum_{x\in\torus}G_t(x/N)\tau_x \function\; dt&=\limep\limN\int_0^T \frac{1}{N^2}\sum_{x\in\torus} G_t^{\varepsilon N}(x/N)\tau_x \function \;dt\\
\label{average1}&=\limep\limN\int_0^T\frac{1}{N^2} \sum_{y\in\torus}G_t(y/N)\langle \function\rangle_{y}^{\varepsilon N}\; dt,
\end{align}
where the average $\langle \function\rangle_{y}^{\varepsilon N}$ is defined in equation \eqref{averagedef}.

As a consequence, $\tau_yg$ can be replaced by its average $\langle g\rangle_y^{\varepsilon N}$. Note that
\[{\mathcal V}^{\varepsilon N}(\confhat)={\mathcal W}^{\varepsilon N}(\confhat)+\langle \function \rangle_{y}^{\varepsilon N}-\function,\]
 and that the replacement Lemma  \ref{lem:replacementlemma} implies in particular that for any bounded function $G\in C([0,T]\times \ctorus)$
\[\limsup_{\varepsilon\to 0}\limsup_{N\to \infty}\Prob^{\lambda,\beta}_{\mu^N}\left[\abs{\int_0^T\frac{1}{N^2} \sum_{x\in \torus}G_t(x/N)\tau_x{\mathcal V}^{\varepsilon N}(\confhat(t) )dt}>\delta\right]=0.\]
Therefore, thanks to equality \eqref{average1}, Corollary \ref{LargeNumbersCor} follows directly from Lemma \ref{lem:replacementlemma}.
\end{proof}

\subsection{Proof of the replacement Lemma }
\label{subsec:replacementlemma}

In order to prove the replacement Lemma \ref{lem:replacementlemma}, we will need the two lemmas below. The first one states that the average of any cylinder function $\langle\function(\confhat)\rangle_0^l$ over a large microscopic box (a box of size $l$ which tends to infinity after $N$) can be replaced by its expected value w.r.t. the grand-canonical measure whose parameter is the empirical density $\E_{\densl}(\function)$.

The second states that the empirical angular density does not vary much between a large microscopic box and a small macroscopic box.  We state these two results, namely the one and two-blocks estimates, in a quite general setup, because they are necessary in several steps of the proof of the hydrodynamic limit. 
\begin{lemm}[one-block estimate]
\label{lem:OBE}Consider $\am\in]0,1[$  and a density $f$ w.r.t the translation invariant measure $\mesref$ (cf. Definition \ref{defi:GCM}) satisfying
\begin{enumerate}[i)]
\item{There exists a constant $K_0$ such that for any $N$
\[H(f)\leq K_0 N^2\eqand \rdir\left(f\right)\leq K_0.\]}
\item{\begin{equation}\label{ii}\lim_{p\to \infty}\lim_{N\to \infty}\Eref\pa{f\frac{1}{N^2}\sum_{x\in \torus}\1_{\epxc}}=0.\end{equation}}
\end{enumerate}
Then, for  any cylinder function $\function$, 
\[\liml\limN \Eref\pa{f\frac{1}{N^2} \sum_{x\in \torus}\tau_x{\mathcal V}^{l}}=0,\]
where ${\mathcal V}^{l}$ was defined in \eqref{Vldef}.
\end{lemm}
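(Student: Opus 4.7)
The plan is to follow the classical one-block estimate (see Chapter 5 of \cite{KLB1999}), with adaptations to handle the continuous angle variable and the loss of irreducibility of the exclusion dynamics when a box is nearly full. First, using the translation invariance of $\mesref$ together with the convexity of $H(\cdot)$ and $\rdir(\cdot)$, I would replace $f$ by its spatial average $\bar{f} = N^{-2}\sum_{x\in \torus} \tau_x f$, which is translation invariant and still satisfies both bounds. Since $\mathcal{V}^l$ only depends on the configuration in a bounded neighborhood of $B_l$, it suffices to estimate $\Eref(\bar{f}_l\,\mathcal{V}^l)$ where $\bar{f}_l$ is the marginal of $\bar{f}$ on $B_{l+r}$, with $r$ the support radius of $\function$. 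The classical bound $\rdir_{l}(\bar{f}_l) \leq C\,|B_{l+r}|\,N^{-2}\,\rdir(\bar{f})$ together with the hypothesis $\rdir(\bar{f}) \leq K_0$ shows that the Dirichlet form of $\bar{f}_l$ on $B_{l+r}$ is of order $l^2/N^2$, hence vanishes as $N\to\infty$ at fixed $l$.

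Next, the entropy bound provides compactness and allows me to extract a weak limit point $f^\star$ of the family $\{\bar{f}_l\}_N$, while the vanishing localized Dirichlet form identifies $f^\star$ as a density constant on each irreducible class of the symmetric exclusion restricted to $B_l$. By Lemma \ref{lem:twoemptysites}, these irreducible classes are precisely the canonical sectors $\subspace$ for $\K \in \Ksett_l$, so conditioning decomposes the problem as
\[
\Eref(\bar{f}_l\,\mathcal{V}^l) = \sum_{\K \in \Ksett_l} p_\K\, \E_{l,\K}(\mathcal{V}^l) + R_{l,N},
\]
where $p_\K$ denotes the mass $\bar{f}_l$ puts on $\subspace$ and $R_{l,N}$ is the contribution of the non-irreducible sectors $\K\notin \Ksett_l$. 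Under $\mu_{l,\K}$ the empirical angular density $\dens_l$ is deterministic and equal to $\param_{\K,l}$, so $\E_{l,\K}(\E_{\dens_l}(\function)) = \E_{\param_{\K,l}}(\function)$; the equivalence of ensembles proved in Appendix \ref{subsec:equivalenceensembles}, combined with the Lipschitz continuity of $\param \mapsto \E_\param(\function)$ from Proposition \ref{prop:Lipschitzcontinuity}, then yields $\E_{l,\K}(\langle \function\rangle_0^l) = \E_{\param_{\K,l}}(\function) + o_l(1)$, with the error uniform in $\K$. This gives $\sup_{\K \in \Ksett_l} |\E_{l,\K}(\mathcal{V}^l)| \to 0$ as $l\to\infty$.

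The remainder $R_{l,N}$ is controlled by exploiting the fact that any configuration in $\subspace$ with $\K\notin\Ksett_l$ has at most one empty site in $B_l$, i.e.\ lies in $E_{l,0}^c$. Translation invariance of $\bar{f}$ and of $\mesref$ yield
\[
|R_{l,N}| \leq 2\|\function\|_\infty \Eref\!\left(\bar{f}\,\mathbf{1}_{E_{l,0}^c}\right) = 2\|\function\|_\infty \Eref\!\left(f \cdot \frac{1}{N^2}\sum_{x\in\torus}\mathbf{1}_{E_{l,x}^c}\right),
\]
which vanishes in the iterated limit $\lim_l \lim_N$ by hypothesis (ii) applied with $p=l$. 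The main obstacle is exactly this last point: the non-irreducibility of the exclusion dynamics at high density genuinely breaks the standard one-block scheme and forces the introduction of hypothesis (ii), itself derived from the a priori density estimate of Proposition \ref{prop:densityestimate}. A secondary but essential difficulty is that the canonical sectors $\subspace$ are indexed by the uncountable set $\Ksett_l$, so the equivalence of ensembles in the second step must be made uniform in $\K$, which is precisely the role of the topological setup on $\pset$ developed in Section \ref{sec:3}.
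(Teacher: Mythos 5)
Your proof follows the same architecture as the paper's: passing to the spatial average $\bar f$, localizing to a finite box, using the Dirichlet and entropy bounds together with weak compactness to reduce to densities with vanishing Dirichlet form, decomposing over canonical sectors $\subspace$, and using hypothesis (ii) to discard the non-irreducible sectors. That scaffolding is correct. Where your argument diverges from the paper, and where a gap opens up, is in the last step: you conclude $\sup_{\K}|\E_{l,\K}(\mathcal{V}^l)|\to 0$ by applying the equivalence of ensembles directly to the box average $\langle\function\rangle_0^l$. This works for $|\E_{l,\K}(\mathcal{V}^l)|$ only because the expectation can be commuted past the spatial average (using the exchangeability of $\mu_{l,\K}$ on $B_l$, a point you leave implicit), and it does match the lemma as literally stated. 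But the statement of Lemma \ref{lem:OBE} is applied in the proof of Lemma \ref{lem:replacementlemma} with the absolute value \emph{inside} the expectation — equation \eqref{secondterme} involves $\tau_x\abs{\E_{\dens_l}(\function)-\langle\function\rangle_0^l}$ — and hence what is really needed is
\[
\lim_{l\to\infty}\ \sup_{\K\in\Ksett_l}\ \E_{l,\K}\!\bigl(\bigl|\langle\function\rangle_0^{l'}-\E_{\param_\K}(\function)\bigr|\bigr)\;=\;0.
\]
This cannot be obtained by your shortcut, because the absolute value does not commute with the average over $x$. The paper handles it by subdividing $B_l$ into $q$ sub-boxes of fixed size $k$, invoking the equivalence of ensembles \emph{uniformly} on those fixed-size averages, and then letting $k\to\infty$ to invoke the law of large numbers under the grand-canonical measure. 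That two-scale $(l,k)$ limit plus the LLN is the ingredient your argument drops, and dropping it leaves the one-block estimate too weak for its downstream use.

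A secondary point: your decomposition $\Eref(\bar f_l\,\mathcal{V}^l)=\sum_\K p_\K\,\E_{l,\K}(\mathcal{V}^l)+R_{l,N}$ is not quite well posed as written, because $\mathcal{V}^l$ involves $\tau_x\function$ for $x$ near the boundary of $B_l$ and hence is not $\mathcal{F}_{B_l}$-measurable, while the canonical sectors you sum over are sectors of $B_l$; the paper resolves this by first replacing $\mathcal{V}^l$ by $\widetilde{\mathcal{V}}^l=\langle\function\rangle_0^{l-s_\function}-\E_{\dens_l}(\function)$, which is supported in $B_l$ and differs from $\mathcal{V}^l$ by an $o(1)$ term uniform in the configuration. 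You also invoke the Lipschitz continuity of $\param\mapsto\E_\param(\function)$, which plays no role at this step (under $\mu_{l,\K}$ the density $\dens_l$ is deterministic, so there is nothing to be Lipschitz about); that proposition is used in the two-block estimate, not here.
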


\begin{lemm}[two-block estimate]
\label{lem:TBE}
For any  $\am\in]0,1[$ and any density $f$ satisfying conditions $i)$ and $ii)$ of Lemma \ref{lem:OBE}, 
\[\liml\limep\limN \sup_{y\in B_{\varepsilon N}}\Eref\pa{\frac{1}{N^2} \sum_{x\in \torus}\normm{\tau_{x+y}\densl-\tau_x\densep}f}=0,\]
where $\tau_{z}\dens_k$ is the local empirical angular density in the box of size $k$ centered in $z$ introduced in  \eqref{empiricalprofile}.
\end{lemm}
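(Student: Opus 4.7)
The plan is to adapt the classical proof of the two-blocks estimate (Chapter 5 of \cite{KLB1999}) to our setting. The strategy handles the continuous angular variable via the dual Wasserstein-type norm $\normm{\cdot}$ and the density-dependent ergodicity constraints via the full-cluster control from condition $ii)$. First, I would reduce the comparison of $\densep$ and $\densl$ to a comparison of $\densl$ at nearby sites. Writing $\tau_x\densep$ (modulo boundary corrections that vanish as $l/\varepsilon N\to 0$) as the average $\frac{1}{|B_{\varepsilon N-l}|}\sum_{z\in B_{\varepsilon N-l}}\tau_{x+z}\densl$, the convexity of $\normm{\cdot}$ together with the translation invariance of $\mesref$ reduce the statement to
\[\liml\limep\limN \sup_{y\in B_{\varepsilon N}}\Eref\pa{\frac{1}{N^2}\sum_{x\in \torus}\normm{\tau_{x+y}\densl-\tau_x\densl}f}=0.\]
Using the definition of $\normm{\cdot}$ as a supremum over $g\in B^*$ and the separability of $(C^1(\ctoruspi),\norm{\cdot}^*)$, I would pass to a countable dense family $(g_k)_{k\in\N}$ and treat each $g_k$ separately. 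For a fixed $g\in B^*$, the quantity $\int g\, d\tau_z\densl$ is a bounded cylinder function of the configuration, say $\tau_z\dens_l^g$, and it remains to bound $\sup_{y\in B_{\varepsilon N}}\Eref\bigl(\frac{1}{N^2}\sum_{x\in\torus}\abs{\tau_{x+y}\dens_l^g-\tau_x\dens_l^g}f\bigr)$.

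Second, the classical telescoping argument along a lattice path from $x$ to $x+y$ would reduce this bound to a sum of single-edge box shifts. Each shift $\tau_{x'+e_i}\dens_l^g-\tau_{x'}\dens_l^g$ involves only a bounded number of particle exchanges at the boundary of $B_l(x')$, whose contribution is controlled via the entropy inequality and the Dirichlet form bound from condition $i)$, combined with a standard variational estimate of Rayleigh type. The factor $\varepsilon N$ coming from the length of the path is eventually compensated by the $N^{-2}$ prefactor and by taking $\varepsilon\to 0$.

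Third, the key ergodicity step: on the coupled domain $B_l(x)\cup B_l(x+y)$, Lemma \ref{lem:twoemptysites} ensures that the symmetric exclusion dynamics is irreducible on hyperplanes of fixed canonical state, provided the coupled domain contains at least two empty sites. The full-cluster control, namely condition $ii)$ together with Proposition \ref{prop:fullclusters}, allows us to discard the configurations where this ergodicity fails. Under the canonical measures associated with the coupled system, translation invariance then forces the difference $\tau_{x+y}\dens_l^g-\tau_x\dens_l^g$ to vanish in the limit $l\to\infty$ via the equivalence of ensembles proved in Appendix \ref{subsec:equivalenceensembles}. Summing over canonical states and taking the limits $N\to\infty$, $\varepsilon\to 0$, $l\to\infty$ in that order yields the claim for each $g_k$, and a density argument in $B^*$ recovers the supremum defining $\normm{\cdot}$.

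The main obstacle is specific to our model: the spectral gap of the symmetric exclusion degrades as the local density approaches $1$, so the ergodicity argument of step three applies only on configurations with at least two empty sites in each coupled box. Condition $ii)$ is precisely what allows one to discard the pathological high-density configurations, but the Dirichlet-form estimate in the telescoping step must be established uniformly in $y$ and in the cluster density, which requires careful bookkeeping of the cut-off at full clusters. By contrast, the continuous angular space itself introduces no qualitative new difficulty: the dual norm $\normm{\cdot}$ reduces the control of the measure-valued density to a supremum over test functions $g\in B^*$, each of which gives rise to a bounded cylinder function $\dens_l^g$ to which the classical machinery applies.
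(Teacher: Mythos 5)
Your initial reduction — replacing $\tau_x\densep$ by an average of $\tau_{x+z}\densl$'s, discarding the overlapping terms, and reducing to comparing $\tau_{x+y}\densl$ with $\tau_x\densl$ for $2l<|y|<2\varepsilon N$ — matches the first step of the paper's proof, and the ergodicity/equivalence-of-ensembles conclusion you invoke at the end is also the right endgame. However, your middle step contains a genuine gap and does not follow the paper's route.

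You propose to telescope $\tau_{x+y}\dens_l^g-\tau_x\dens_l^g$ along a lattice path of length $O(\varepsilon N)$ and to bound each single-edge box shift via entropy plus Dirichlet form. This cannot close: each shift is $O(1/l)$ deterministically, and there are $O(\varepsilon N)$ of them, so their sum is $O(\varepsilon N/l)$, which diverges in the prescribed order of limits ($N\to\infty$ before $l\to\infty$). The entropy inequality and Dirichlet form bound from condition $i)$ do not give an $o(1/\varepsilon N)$ estimate per shift — they give $O(1)$ information, which is why they must be used through a localization rather than an accumulation. More fundamentally, the telescoping of the box averages partially cancels and simply reproduces the original difference $\tau_y\densl-\densl$; it contains no new information.

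The paper proceeds differently: it conditions $\overline f$ on the sites in the coupled domain $B_{y,l}=B_l\cup\tau_y B_l$, and considers the Dirichlet form $D_{l,y}$ which allows jumps inside each box \emph{plus one long-range exchange between the two centers $0$ and $y$}. The contribution of that single long-range jump is estimated by decomposing it into a chain of $k\leq 4\varepsilon N$ nearest-neighbor jumps and applying $(\sum_j a_j)^2\leq k\sum_j a_j^2$, which yields $D^{0,y}_l(\overline f)\leq k^2 N^{-2}\rdir(f)\leq 16\varepsilon^2\rdir(f)$. The quadratic gain in $\varepsilon$ is essential and is exactly what your telescoping approach misses. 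Once the restricted Dirichlet form is shown to be $O((2l+1)^2/N^2+\varepsilon^2)$, one passes to the limit, restricts the variational problem to densities with vanishing Dirichlet form on the coupled domain, projects onto sets with fixed canonical state, and applies the equivalence of ensembles. The cutoff $\1_{E_l}$ via condition $ii)$ is indeed used, as you anticipated, to discard nearly full clusters where irreducibility fails. Also, note that the paper works with $\normm{\cdot}$ directly, never needing your reduction to a countable dense family of test functions $g$; that reduction is benign but unnecessary given the Lipschitz continuity of $\param\mapsto\Egcm(\psi)$ in $\normm{\cdot}$.
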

The proofs of these two lemmas will be presented resp. in Section \ref{subsec:OBE} and \ref{subsec:TBE}. For now, let us show that they are sufficient to prove the replacement Lemma \ref{lem:replacementlemma}.

\begin{proof}[Proof of Lemma \ref{lem:replacementlemma}]
Lemma \ref{lem:replacementlemma} follows from applying the two previous lemmas to the density \[\overline{f}_T^N=\frac{1}{T}\int_0^T f^N_t dt,\] where $f_t^N={d\mu_t^N}/{d\mesref}$, defined in Section \ref{subsec:entropy}, is the density of the active exclusion process at time $t$ started from $\mu^N$, and prove that Lemma \ref{lem:replacementlemma} follows. 
Proposition \eqref{prop:entropyproduction} proved that  $\overline{f}_T^N$ satisfies condition $i)$ of Lemma \ref{lem:OBE}.
Furthermore, $\overline{ f}_T^N$ also satisfies condition ii) 
\[\lim_{p\to \infty}\lim_{N\to \infty}\Eref\pa{\overline{f}_T^N\frac{1}{N^2}\sum_{x\in \torus}\1_{\epxc}}=0\]
 thanks to Proposition \ref{prop:fullclusters}, thus the one-block and two-blocks estimates apply to $f=\overline{ f}_T^N $. 

Now let us recall that we want to prove for any $\delta>0$
\[\limsup_{\varepsilon\to 0}\limsup_{N\to \infty}\Prob^{\lambda,\beta}_{\mu^N}\left[\int_0^T\frac{1}{N^2} \sum_{x\in \torus}\tau_x\abs{{\mathcal V}^{\varepsilon N}(\confhat(t) )}dt>\delta\right]=0,\]
where
\[{\mathcal V}^{\varepsilon N}(\confhat)=\langle\function(\confhat)\rangle_0^{\varepsilon N}-\E_{\densep}(\function).\]
Thanks to the Markov inequality, it is sufficient to prove that 
\[\limsup_{\varepsilon\to 0}\limsup_{N\to \infty}\E^{\lambda,\beta}_{\mu^N}\left[\int_0^T\frac{1}{N^2} \sum_{x\in \torus}\tau_x\abs{{\mathcal V}^{\varepsilon N}(\confhat(t) )}dt\right]=0.\]
We can now express the expectation above thanks to the mean density  $\overline{f}_T^N$. Since $T$ is fixed, to obtain the replacement Lemma  it is enough to show that 
\begin{equation}\label{RLexp}\limsup_{\varepsilon\to 0}\limsup_{N\to \infty}\Eref\pa{ \overline{f}_T^N\frac{1}{N^2} \sum_{x\in \torus}\tau_x\abs{{\mathcal V}^{\varepsilon N}(\confhat)}}=0.
\end{equation} 
For any function $\varphi(\cdot)$ on the torus $\torus$, recall that we denoted in \eqref{averagedef} by $\langle \varphi(\cdot) \rangle^l_x$ the average of the function $\varphi$ over a box centered in $x$ of size $l$, and that $\tau_y\densl$ is the empirical angular density in a box of size $l$ centered in $y$ defined in \eqref{empiricalprofile}. Let us add and subtract
\[\Bigl\langle \langle\function(\confhat)\rangle_0^l-\E_{\densl}(\function)\Bigr\rangle_0^{\varepsilon N}=\frac{1}{(2\varepsilon N +1)^2}\sum_{x\in B_{ \varepsilon N}}\left[\frac{1}{(2l+1)^2}\sum_{\abss{y-x}\leq l}\tau_y\function -\E_{\tau_x\densl}(\function)\right]\]
 inside $\abs{{\mathcal V}^{\varepsilon N}(\confhat)}$. We can then write thanks to the triangular inequality
\[\abs{{\mathcal V}^{\varepsilon N}(\confhat )}\leq ({\mathcal Z}^{l,\varepsilon N}_1+{\mathcal Z}^{l,\varepsilon N}_2+{\mathcal Z}^{l,\varepsilon N}_3)(\confhat), \]
where 
\[{\mathcal Z}^{l,\varepsilon N}_1=\abs{\frac{1}{(2\varepsilon N+1)^2}\sum_{x\in B_{\varepsilon N}}\left( \tau_x\function-\frac{1}{(2l+1)^2}\sum_{\abss{y-x}\leq l}\tau_y\function\right)},\] is the difference between $g$ and its local average,
\[{\mathcal Z}^{l,\varepsilon N}_2=\frac{1}{(2\varepsilon N+1)^2}\sum_{x\in B_{\varepsilon N}}\abs{ \E_{\tau_x\densl}(\function)-\frac{1}{(2l+1)^2}\sum_{\abss{y-x}\leq l}\tau_y\function},\]
is the difference between the local average of $g$ and its expectation under the product measure with parameter the local empirical angular density $\dens_l$, and
\[{\mathcal Z}^{l,\varepsilon N}_3=\frac{1}{(2\varepsilon N+1)^2}\sum_{x\in B_{\varepsilon N}}\abs{ \E_{\tau_x\densl}(\function)-\E_{{\densep}}(\function)}\]
is the difference between the expectations of $g$ under the empirical microscopic and macroscopic empirical angular density $\densl$ and $\densep$.

Let us consider the first term,  $N^{-2}\sum_x\tau_x{\mathcal Z}^{l,\varepsilon N}_1$. All the terms in  ${\mathcal Z}^{l,\varepsilon N}_1$ corresponding to the $x$'s in $B_{\varepsilon N-l}$ vanish, since they appear exactly once in both parts of the sum. The number of remaining terms can be crudely bounded by $4\varepsilon N l$, and each term takes the form $\tau_z\function/(2\varepsilon N +1 )^2$. Hence, we have the upper bound
\[\Eref\pa{\overline{f}_T^N \frac{1}{N^2} \sum_{x\in \torus}\tau_x{\mathcal Z}^{l,\varepsilon N}_1}\leq \frac{K l}{\varepsilon N}\Eref\pa{\overline{f}_T^N \frac{1}{N^2} \sum_{x\in \torus}\tau_x \abss{\function}}. \]
Since $\function$ is a bounded function, this expression can be bounded from above by \[\frac{K l\norm{g}_{\infty}}{\varepsilon N}\Eref\pa{\overline{f}_t^N}=C(l, \varepsilon, g)o_N(1),\]
which proves that 
\[\limep \limN\Eref\pa{ \frac{1}{N^2} \sum_{x\in \torus}\tau_x{\mathcal Z}^{l,\varepsilon N}_1\overline{f}_t^N}=0.\]

Now since \[\sum_{x\in \torus} \frac{1}{(2\varepsilon N+1)^2}\sum_{y\in B_{\varepsilon N}(x)}\tau_yg=\sum_{x\in \torus} \tau_x g,\] the two following terms can respectively be rewritten as
\begin{equation}
\label{secondterme}
\Eref\pa{\overline{f}_T^N\frac{1}{N^2} \sum_{x\in \torus}\tau_x{\mathcal Z}^{l,\varepsilon N}_2}=\Eref\pa{\overline{f}_T^N\frac{1}{N^2} \sum_{x\in \torus}\tau_x\abs{ \E_{\densl}(\function)-\langle\function\rangle_0^l}},
\end{equation}
and
\begin{equation}
\label{premierterme}
\Eref\pa{\overline{f}_T^N\frac{1}{N^2} \sum_{x\in \torus}\tau_x{\mathcal Z}^{l,\varepsilon N}_3}=\Eref\pa{\overline{f}_T^N\frac{1}{N^2} \sum_{x\in \torus}\abs{ \E_{{\tau_x}\densl}(\function)-\E_{\densep}(\function)}}.
\end{equation}
The quantity \eqref{secondterme} vanishes in the limit $N\to \infty$ then $l\to \infty$ thanks to the one-block estimate stated in Lemma \ref{lem:OBE}.

 Finally, according to Definition \ref{defi:convparam}, \eqref{premierterme} also vanishes thanks to the two-block estimate of Lemma \ref{lem:TBE} and the Lipschitz-continuity of the application
\[\func{\Psi_{\function}}{(\pset, \normm{\cdot})}{\R}{\param}{\Egcm\pa{\function}},\]
which was proved in Proposition \ref{prop:Lipschitzcontinuity}. 
 The Replacement Lemma \ref{lem:replacementlemma} thus follows from the one and two-blocks estimates.
 \end{proof}

In the next two Sections \ref{subsec:OBE} and \ref{subsec:TBE}, we prove the one-block and two-block estimates. The strategy for these proofs follows closely these presented in \cite{KLB1999}, albeit it requires some adjustments due to the measure-valued nature of the parameter of the product measure $\mesinv$ and the necessity to control the full clusters.

\subsection{Proof of Lemma \ref{lem:OBE}~: The one-block estimate}
\label{subsec:OBE}
\intro{The usual strategy to prove the one block estimate is to project the estimated quantity on sets with fixed number of particles, on which the density of $f$ should be constant thanks to the bound on the Dirichlet form.}

To prove the one-block estimate, thanks to the translation invariance of $\mesref$, it is sufficient to control the limit as $N$ goes to $\infty$, then $l\to \infty$ of 
\[\Eref\pa{f.\frac{1}{N^2} \sum_{x\in \torus}\tau_x{\mathcal V}^{l}}=\Eref({\mathcal V}^{l}\overline{f}),\]
where $\overline{f}=N^{-2} \sum_{\torus} \tau_xf$ is the average over the periodic domain of the translations of the density $f$.
Furthermore, define $s_g$ a fixed integer such that $g$ is measurable w.r.t. $(\confhat_x)_{x\in B_{s_g}}.$ We introduce for $l$ larger than $s_g$
\[\widetilde{{\mathcal V}}^{l}=\langle\function(\confhat)\rangle_0^{l-s_g}-\E_{\densl}(\function)={\mathcal V}^{l}+o_1(l),\]
where the $o_1(l)$ vanishes uniformly in $\confhat$ as $l\to\infty$. Proving the one block estimate for $\widetilde{{\mathcal V}}^{l}$ instead of ${\mathcal V}^{l}$ is therefore sufficient, and $\widetilde{{\mathcal V}}^{l}$ depends on the configuration only through the sites in $B_l.$

We first eliminate the configurations in which the box $B_l$ is almost full. Notice that the average $\widetilde{\mathcal V}^{l}$ is bounded because $\function$ is a cylinder function. We can therefore write \[\Eref(\widetilde{\mathcal V}^{l}\overline{f})\leq \Eref(\widetilde{\mathcal V}^{l}\1_{E_l}\overline{f})+C(g)\Eref(\1_{E_l^c}\overline{f}), \]
where $E_l$ is the event on which at least two sites are empty in $B_l$, defined after equation \eqref{epxdef}, and $E_l^c$ is its complementary event. The second term in the right-hand side vanishes by definition of $\overline{f}$, because $f$ verifies \eqref{ii}, and it is therefore sufficient to prove that \[\liml\limN\Eref(\widetilde{\mathcal V}^{l}\1_{E_l}\overline{f})=0.\]
Furthermore, the convexity of the Dirichlet form and the entropy yield that condition $i)$ of the one-block estimate is also satisfied by $\overline{f}$.
Since $\widetilde{\mathcal V}^{l}\1_{E_l}$ depends on $\confhat$ only through the $\confhat_x$'s in the cube ${B_{l}}$ we can replace the density $\overline{f}$ in the formula above by its conditional expectation $\overline{f}_{l}$, defined, for any configuration $\confhat'$ on $B_{{l}}$ by
\[\overline{f}_{l}(\confhat')=\Eref(\overline{f}\mid \confhat_x=\confhat'_x, \; x\in B_{{l}}).\]
For any function $f$ depending only on sites in $ B_l$ let $\E^*_{\am,l}$ be the expectation with respect to the product measure $\mesref$ over $B_l$.
With the previous notations, and in order to prove the one-block estimate, it is sufficient to prove that 
\[\liml\limN \E^*_{\am,l}\pa{\widetilde{\mathcal V}^{l}\1_{E_l}\overline{f}_l}\leq 0.\]

In order to proceed, we need to estimate the Dirichlet form and the entropy of $\overline{f}_l$ thanks to that of $f$, and prove the following Lemma
\begin{lemm}\label{lem:dirent}We have the following bounds
\begin{equation}\label{direntl}\rdir_{{l}}\left(\overline{f}_l\right)\leq C(l)N^{-2}\eqand H(\overline{f}_l)\leq C(l ).\end{equation}
\end{lemm}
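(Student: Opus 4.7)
The plan is to exploit the translation invariance of $\overline{f}$ combined with convexity/subadditivity, so that the global bounds on $f$ (entropy of order $N^2$, Dirichlet form $O(1)$) descend to the local bounds on $\overline{f}_l$.

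First, I would observe that because the spatial averaging is over the whole torus, $\overline{f}(\tau_y\confhat) = \overline{f}(\confhat)$, i.e. $\overline{f}$ is translation-invariant as a function of the configuration. Since $\mesref$ is itself translation-invariant, $H(\tau_x f) = H(f)$ and $\rdir(\tau_x f) = \rdir(f)$ for every $x\in\torus$. Both $f\mapsto H(f)$ and $f\mapsto \rdir(f)=\dir(\sqrt{f})$ are convex (the latter because $(a,b)\mapsto(\sqrt{a}-\sqrt{b})^2 = a+b-2\sqrt{ab}$ is jointly convex on $[0,\infty)^2$), so Jensen's inequality gives
\[
H(\overline{f})\;\leq\; \frac{1}{N^2}\sum_{x\in\torus}H(\tau_x f)\;=\;H(f)\;\leq\;K_0 N^2,\qquad \rdir(\overline{f})\;\leq\;\rdir(f)\;\leq\;K_0.
\]

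For the Dirichlet form bound, I would use that the translation invariance of $\overline{f}$ and $\mesref$ makes every edge in $\torus$ contribute the same amount to $\rdir(\overline{f})$. Writing $D_z := \int\eta_0(1-\eta_z)(\sqrt{\overline{f}(\confhat^{0,z})}-\sqrt{\overline{f}(\confhat)})^2\,d\mesref$, the change of variable $\confhat\mapsto\tau_{-x}\confhat$ shows that the contribution of edge $(x,x+z)$ to $\rdir(\overline{f})$ is exactly $D_z$, so
\[
\rdir(\overline{f})\;=\;\frac{N^2}{2}\sum_{|z|=1}D_z\;\leq\;K_0,\qquad \text{hence }D_z\;\leq\;\frac{2K_0}{N^2}.
\]
For an edge $(x,x+z)\subset B_{l+1}$, the joint convexity of $(a,b)\mapsto(\sqrt{a}-\sqrt{b})^2$ applied pointwise, combined with the conditional expectation representation $\overline{f}_l=\Eref[\overline{f}\mid\mathcal F_{B_{l+1}}]$, gives
\[
\int(\sqrt{\overline{f}_l(\confhat^{x,x+z})}-\sqrt{\overline{f}_l(\confhat)})^2d\mu^*_{\alpha,l+1}\;\leq\;\int(\sqrt{\overline{f}(\confhat^{x,x+z})}-\sqrt{\overline{f}(\confhat)})^2d\mesref\;=\;D_z.
\]
Summing over the $O(l^2)$ edges of $B_{l+1}$ yields $\rdir_{l+1}(\overline{f}_l)\leq C(l)/N^2$.

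For the entropy bound, the key idea is a tiling and subadditivity argument. Choose a family of disjoint translates $B_{l+1}+y_k$, $k=1,\dots,m$, covering all of $\torus$ up to a boundary layer of negligible width, with $m\geq c\,N^2/(2l+3)^2$ for a universal constant $c$. Since $\mesref$ is a product over sites it factorizes over the tiles, so subadditivity of relative entropy with respect to a product reference measure gives
\[
H(\overline{f})\;\geq\;\sum_{k=1}^{m}H\bigl(\overline{f}_{B_{l+1}+y_k}\bigr),
\]
where each summand is the entropy of the marginal of $\overline{f}$ on $B_{l+1}+y_k$. By the translation invariance of both $\overline{f}$ and $\mesref$, every marginal equals $\overline{f}_l$ in distribution, so each summand equals $H(\overline{f}_l)$. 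Combining with the Step~1 bound,
\[
H(\overline{f}_l)\;\leq\;\frac{H(\overline{f})}{m}\;\leq\;\frac{K_0 N^2}{c\,N^2/(2l+3)^2}\;=\;C(l).
\]

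The main obstacle is simply bookkeeping: verifying that the conditional-expectation version of Jensen for $(\sqrt{a}-\sqrt{b})^2$ commutes with the swap $\confhat\mapsto\confhat^{x,x+z}$ when the edge lies in $B_{l+1}$, and handling the boundary in the tiling (which only worsens the constant $C(l)$ by a bounded factor). Neither is delicate, so the full proof is essentially a repackaging of translation invariance plus the convexity/subadditivity of $H$ and $\rdir$.
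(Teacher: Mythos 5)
Your proof is correct and follows essentially the same route as the paper's: for the Dirichlet form you combine translation invariance of $\overline{f}$ and $\mesref$ with Jensen's inequality for the conditional expectation (the paper phrases this via the edge operators $\gene_{x,y}$ and their convexity); for the entropy you tile the torus and invoke superadditivity of relative entropy over a product reference measure, exactly the identity $H(\overline{f})=H_{\varphi}(\overline{f}/\varphi)+\sum_i H(\overline{f}_l^i)$ the paper uses. The only remarks are cosmetic: what you call ``subadditivity'' here is the superadditivity direction $H(\mu\mid\nu)\geq\sum_iH(\mu_i\mid\nu_i)$, and the factor $\conf_x(1-\conf_{x+z})$ should appear inside the displayed Dirichlet integrands as it does in your definition of $D_z$, but neither affects the argument.
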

\begin{proof}[Proof of Lemma \ref{lem:dirent}]\

\noindent {\bf Estimate on the Dirichlet form of $\overline{f}_l$ - } we denote by $\gene_{x,y}$ the symmetric part of the exclusion generator corresponding to the transfer of a particle between $x$ and $y$ 
\[\gene_{x,y}f(\confhat)=\pa{\conf_x-\conf_y}(f(\confhat^{y,x})-f(\confhat)),\]
\index{$\gene_{x,y} $\dotfill part of $\gene$ due to jumps between $x$ and $y$}
and by $\rdir^{x,y}$ the part of the Dirichlet form of the exclusion process corresponding to $\gene_{x,y}$ 
\[\rdir^{x,y}(f)=-\Eref\left( \sqrt{f} \gene_{x,y} \sqrt{f} \right).\]
\index{$ \rdir^{x,y}$\dotfill part of the Dirichlet form due to $\gene_{x,y}$}
With this notation, we have \[\rdir(f)=\sum_{\abss{x-y}=1}\rdir^{x,y}(f),\]
where $\rdir$ is the Dirichlet form introduced in equation \eqref{reddirdef}.
We denote in a similar fashion the Dirichlet form restricted to the box of size ${l}$ for any function $h$ depending only on the sites in $B_{{l}}$ by
\[D_{{l}}^{x,y}(h)=-\E^*_{\am,{l}}\left( \sqrt{h}\gene_{x,y} \sqrt{h} \right).\]
Since the conditioning  $f\mapsto f_l$ is an expectation, and since the Dirichlet elements $D_l^{x,y}$ are convex, the inequality 
\[D_{{l}}^{x,y}(\overline{f}_{l})\leq \rdir^{x,y}(\overline{f})\]
follows from Jensen's inequality. We deduce from the previous inequality, by summing over all edges $(x,y)\in B_l$, thanks to the translation invariance of $\overline{f}$, that
\[ D_{{l}}(\overline{f}_{l})\leq \sum_{(x,y)\in B_l} \rdir^{x,y}(\overline{f})= 2l(2l+1)\sum_{j=1}^2 \rdir^{0,e_j}(\overline{f})=\frac{(2l+1)^2}{N^{2}}\rdir(\overline{f}),\]
where $D_{{l}}$ is the Dirichlet form of the process restricted to the particle transfers with both the start and end site in $B_l$.
Up to this point, we have proved that for any function $f$ such that $\rdir(\overline{f})\leq \rdir(f)\leq K_0$, we have as wanted
\begin{equation}\label{boundDirOBE} D_{{l}}(\overline{f}_{l})\leq C_1(l)N^{-2}.\end{equation}

\noindent{\bf Estimate on the entropy of $\overline{f}_l$ -} recall that we defined the entropy $H(f)=\Eref(f \log f)$ and that we already established 
$H(\overline{f})\leq K_0N^2$. Let us partition $\torus$ in $q:=\lfloor N/({2l+1})\rfloor^2$  square boxes 
$B^1:=B_{{l}}(x_1),\ldots , B^q:=B_{{l}} (x_q)$, and $B^{q+1}$, which contains all the site that weren't part of any of the boxes. We can thus write \[\torus=\bigsqcup_{i=1}^{q+1} B^i.\]
We denote by $\confhat^i$ the configuration restricted to $B^i$ and by $\hat{\xi}^i$ the complementary  configuration to $\confhat^i$. In other words, for any $i\in \llbracket1,q+1\rrbracket $, we split any configuration on the torus $\confhat$ into  $\confhat^i$ and $\hat{\xi}^i$. We define for any $i\in \llbracket 1,q\rrbracket$ the densities on the $\confhat^i$'s
\[\overline{f}_l^i(\confhat^i)=\Eref\pa{\overline{f}(\confhat^i,\hat{\xi}^i)\left|\confhat^i\right.}.\]
Let us denote by $\varphi$ the product density w.r.t. $\mesref$ with the same marginals as $\overline{f}$, defined by  \[\varphi(\confhat)=\overline{f}_l^1(\confhat^1)\overline{f}_l^2(\confhat^2)\ldots \overline{f}_l^{q+1}(\confhat^{q+1}),\]
elementary entropy computations yield that \[H(\overline{f})=H_{\varphi}\pa{\overline{f}/\varphi}+\sum_{i=1}^{q+1}H\Big(\overline{f}_l^i\Big),\]
where $H_{\varphi}(f)=H( f\mesref \mid\varphi \mesref)$.
Since by construction $\overline{f}$ is translation invariant, for any $i=1,\ldots ,q$, we can write $H\Big(\overline{f}_l^i\Big)=H\Big(\overline{f}_l^1\Big)=H\pa{\overline{f}_l}$,
therefore in particular, the previous bound also yields, thanks to the non-negativity of the entropy, that \[H(\overline{f})\geq qH\pa{\overline{f}_l}.\]
Since $q$ is of order $ N^2/l^2$, this rewrites
\begin{equation}\label{boundEntOBE} H(\overline{f}_{l})\leq\frac{K_0 N^2}{q}\leq C_2(l),\end{equation}
and proves equation \eqref{direntl}.
\end{proof}

Thanks to Lemma \eqref{lem:dirent} we now reduced the proof of Lemma \ref{lem:OBE} to
\begin{equation}\label{pbcomp}\liml\limN \sup_{\substack{D_{{l}}(f)\leq C_1(l)N^{-2}\\
H(f)\leq C_2(l)}}\E^*_{\am,l}\pa{\widetilde{\mathcal V}^{l}\1_{E_l}f}=0.\end{equation}
Since the set of measures with density w.r.t. $\mesref$ such that $H(f)\leq C_2(l)$ is weakly compact, to prove the one block estimate of Lemma \ref{lem:OBE}, it is sufficient to show that
\[\liml \sup_{\substack{D_{{l}}(f)=0\\
H(f)\leq C_2(l)}}\E^*_{\am,l}\pa{\widetilde{\mathcal V}^{l}\1_{E_l}f}.\]

Before using the equivalence of ensembles, we need to  project the limit above over all sets with fixed number of particles $\subspace$ defined in equation \eqref{sousespace}. Recall from Definition \ref{defi:CM} the projection of the grand-canonical measures on the sets with fixed number of particles. For any density $f$ w.r.t. $\mesref$, such that $D_{{l}}(f)=0$, thanks to Section \ref{subsec:irreducibility} and the presence of the indicator function, $f$ is constant on $\Sigma_{l}^{\K}$ for any $\K\in \Ksett_l$. We therefore  denote, for any such $f$, by $f(\K)$ the value of $f$ on the set $\subspace$. Shortening $\int_{\K\in \Kset_l}$ for the sum $\sum_{K\leq (2l+1)^2}\int_{\theta_1\in\ctoruspi}\ldots \int_{\theta_K\in\ctoruspi}$,  we can write thanks to the indicator functions $\1_{E_l}$, for any $f$ satisfying $D_{{l}}(f)=0$,
\begin{equation}\label{dfz4}\E^*_{\am,l}\pa{\widetilde{\mathcal V}^{l}\1_{E_l}f}= \int_{\K\in \Ksett_l} f(\K)\E_{l, \K}(\widetilde{\mathcal V}^{l} ){d\mesref\pa{\subspace}},\end{equation}
where $\Ksett$ was defined in \eqref{Ksettdef}.

Since $\int_{\K\in \Kset_l}f(\K){d\mesref\pa{\subspace}}=1$ and $\E_{l,\K}\pa{\widetilde{\mathcal V}^{l}}\leq\sup_{\K\in \Ksett_l}\E_{l, \K}\pa{\widetilde{\mathcal V}^{l}} $, we obtain 
\[\liml\limN \sup_{\substack{D_{{l}}(f)\leq C_2(l)N^{-2}\\
H(f)\leq C_2(l)}}\E^*_{\am,l}\pa{\widetilde{\mathcal V}^{l}\1_{E_l}f}\leq  \liml\sup_{\K\in \Ksett_l}\E_{l, \K}\pa{\widetilde{\mathcal V}^{l}}.\]
To conclude the proof of equation \eqref{pbcomp} and the one-block estimate, it is therefore sufficient to prove that the right-hand side above vanishes.

\bigskip 

For any $\K\in \Kset_l$, recall that $\param_{\K}\in\pset$ is the grand-canonical parameter
\[\param_{\K}=\frac{1}{(2l+1)^2}\sum_{k=1}^{K}\delta_{\theta_k}\in\pset.\]
\index{$\param_{\K} $\dotfill grand-canonical parameter in $\pset$ associated with $\K$ }
Since the expectation $\E_{l, \K}$ conditions the process to having $K$ particles with angles $\Theta_K$ in $B_l$,  by definition of $\widetilde{\mathcal V}_l$, letting $l'=l-s_g$ we can write
\[\abs{\E_{l, \K}\pa{\widetilde{\mathcal V}^{l}}}\leq\E_{l, \K}\pa{\abs{\frac{1}{(2l'+1)^2}\sum_{x\in B_{l'}}\tau_x\function-\E_{\param_{\K}}(\function)}}.\]
\begin{figure}
\begin{center}
\input{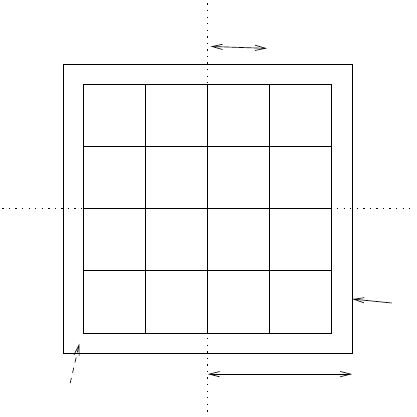tex_t}
\caption{Construction of the $B^i$\label{Bis}}
\end{center}
\end{figure}
Let $k$ be an integer that will go to infinity after $l$, and let us divide $B_l$  according to Figure \ref{Bis} into $q$ boxes $B^1,\ldots ,B^q$, each of size $(2k+1)^2$, with $q=\lfloor \frac{2l+1}{2k+1}\rfloor^2$. let $k'=k-s_g$, $B'^{i}$ denotes the box of size $(2k'+1)$ centered inside $B^i$, and Let $B'^0=B_{l'}-\cup_{i=1}^qB'^{i}$, the number of sites in $B^0$ is bounded for some constant $C:=C(g)$ by $Ckl$.

With these notations, the triangular inequality yields 
\begin{align*}\E_{l, \K}\pa{\abs{\E_{\param_{\K}}(\function)-\frac{1}{(2l'+1)^2}\sum_{x\in B_{l'}}\tau_x\function}}\leq& \frac{\abss{B'^{1}}}{\abss{B_{l'}}}\sum_{i=0}^q\E_{l, \K}\pa{\abs{\E_{\param_{\K}}(\function)-\frac{1}{\abss{B'^{i}}}\sum_{x\in B'^{i}}\tau_x\function}}\\
=&\frac{(2k'+1)^2}{(2l'+1)^2}\sum_{i=1}^q\E_{l, \K}\pa{\abs{\E_{\param_{\K}}(\function)-\frac{1}{(2k'+1)^2}\sum_{x\in B'^{i}}\tau_x\function}}\\
&+O\pa{\frac{k}{l}}\end{align*}
Since the distribution of the quantity inside the expectation does not depend on $i$, the quantity above can be rewritten
\[\underset{\to 1}{\underbrace{q\frac{(2k'+1)^2}{(2l'+1)^2}}}\E_{l, \K}\pa{\abs{\E_{\param_{\K}}(\function)-\frac{1}{(2k'+1)^2}\sum_{x\in B_{k'}}\tau_x\function}}+O\pa{\frac{k}{l}}.\]
Because $\function$ is a cylinder function, and since $k$ goes to $\infty$ after $l$, the quantity inside absolute values is a local function for any fixed $k$. Letting $l$ go to $\infty$, the equivalence of ensembles stated in Proposition \ref{prop:equivalenceofensembles} allows us to replace the expectation above, uniformly in $\K$, by
\[\E_{\param_{\K}}\pa{\abs{\E_{\param_{\K}}(\function)-\frac{1}{(2k'+1)^2}\sum_{x\in B_{k'}}\tau_x\function}}.\]
Finally, since $\cup_{l\in \N}\{\param_{\K}, \K\in \Ksett_l\}\subset \pset$, where $\pset$ is the set of angle density profiles introduced in Definition \ref{defi:angleprofile}, \[\liml\ \sup_{\K\in \Kset_l}\E_{l, \K}(\widetilde{\mathcal V}^{l})\leq \sup_{\param\in \pset}\Egcm\pa{\abs{\Egcm(\function)-\frac{1}{(2k'+1)^2}\sum_{x\in B_{k'}}\tau_x\function}},\]
whose right-hand side vanishes as $k\to \infty$ by the law of large numbers, thus concluding the proof of the one-block estimate.

\subsection{Proof of Lemma \ref{lem:TBE}~: The two-block estimate}
\label{subsec:TBE}
\intro{This Sections follows the usual strategy for the two-block estimate, with small adaptations to the topological setup on the space of parameters $\pset$ introduced in Definition \ref{defi:convparam}.}

{Our goal is to show  that for any density $f$ satisfying conditions $i)$ and $ii)$ in Lemma \ref{lem:OBE},
\[\liml\limep\limN\sup_{y\in B_{\varepsilon N}}\Eref\pa{\frac{1}{N^2} \sum_{x\in \torus}\normm{\tau_{x+y}\densl-\tau_x\densep}f}=0.\]
The previous expectation can be bounded from above by triangle inequality  by
\[\Eref\pa{\frac{1}{N^2} \sum_{x\in \torus}\frac{1}{(2{\varepsilon N}+1)^2}\normm{\sum_{z\in B_{{\varepsilon N} }}\pa{\tau_{x+y}\densl-\tau_{x+z}\densl}}f}+o(l/\varepsilon N).\]
In this way, we reduce the proof to comparing average densities in two boxes of size $l$ distant of less than $2\varepsilon N$.
Let us extract in the sum inside the integral the terms in $z'$s such that $\abs{y-{z'}}\leq 2l$, the number of such terms is {at most $(4l+1)^2$}, and this quantity is bounded from above by 
\[\Eref\pa{\frac{1}{N^2} \sum_{x\in \torus}\frac{1}{(2{\varepsilon N}+1)^2}\normm{\sum_{\substack{z\in B_{{\varepsilon N} }\\
\abss{y-z}>2l}}\pa{\tau_{x+y}\densl-\tau_{x+z}\densl}}f}+o(l/\varepsilon N).\]
This separation was performed in order to obtain independent empirical measures  $\tau_{x+y}\densl$ and $\tau_{x+z}\densl$.
Regarding the expectation above, notice that we now only require to bound each term in the sum in $z$. In order to prove the two-block estimate, it is thus sufficient to show that
\[\liml{\limep}\limN \sup_{ 2l<\abss{y}<2\varepsilon N}\Eref\pa{\frac{1}{N^2} \sum_{x\in \torus}\normm{\tau_{x+y}\densl-\tau_{x}\densl}f}= 0.\]
As in the proof of  the one-block estimate, the expectation above can be rewritten
\[\Eref\pa{\normm{\tau_{y}\densl-\densl}\overline{f}},\]
where $\overline{f}=N^{-2}\sum_{x\in \torus}\tau_x f$ is the average of the density $f$. We can also introduce the cutoff functions $\1_{E_l}$ in the expectation above, thanks to $f$ satisfying \eqref{ii} and $\normm{\tau_{y}\densl-\densl}$ being a bounded quantity.

 Let $B_{y,l}$ be the set $B_l\cup \tau_y B_l$, the quantity under the expectation above is measurable with respect to the sites in $B_{y,l}$. Before going further, let us denote, for any configuration $\confhat\in \statespace$, $\confrond_1$ the configuration restricted to $B_l$ and $\confrond_2$ the configuration restricted to $y+B_l=\tau_yB_l$. We also denote by $\confrond $ the configuration $(\confrond_1, \confrond_2)$ on $B_{y,l}$. Let us finally write $\mu_{y,l}$ for the projection of the product measure $\mesref$  on $B_{y,l}$, and $\E_{y,l}$ the expectation with respect to the latter. 

\begin{figure}
\begin{center}
\input{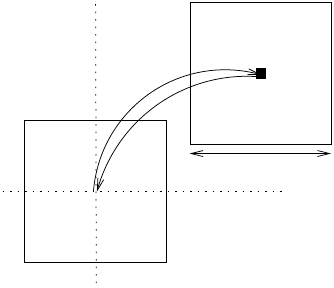tex_t}
\caption{ \label{2BE}}
\end{center}
\end{figure}

With these notations, the expectation above can be replaced by
\[\Eref\pa{\normm{\tau_{y}\densl-\densl}\1_{E_l}\overline{f}_{y,l}},\]
where for any density $f$, $f_{y,l}$ is its {conditional expectation with respect to the sigma-field generated by $(\confhat_{x})_{x\in B_{y,l}},$}
\[f_{y,l}(\confrond)=\Eref\pa{f\mid \confhat_{\mid B_{y,l}}=\confrond},\]
which is well-defined because the two boxes $B_l$ and $\tau_yB_{l}$ are disjoint, thanks to the condition $\abss{y}> 2l$.

As in the proof of the one-block estimate, we now need to estimate the Dirichlet form of $\overline{f}_{y,l}$ in terms of that of $f$, on which we have some control. For that purpose, let us introduce with the notations of the previous Section
\begin{align}\label{dir12BE}D_{l,y}(h)&=-\E_{y,l}(h\gene_{0,y}h)-\sum_{\substack{x,z\in B_l\\
\abss{x-z}=1}}\E_{y,l}(h\gene_{x,z}h)-\sum_{\substack{x,z\in y+B_l\\
\abss{x-z}=1}}\E_{y,l}(h\gene_{x,z}h)\nonumber\\
&:=\quad\quad D^0_{l,y}\quad\quad\quad+\quad\quad\quad D^1_{l,y}\quad\quad\quad+\quad\quad\quad D^2_{l,y}\end{align}
the Dirichlet form corresponding to particle transfers inside the two boxes, and allowing a particle to transfer from the center of one box to the center of the other, according to Figure \ref{2BE}. The work of the previous section allows us to write that 
\[-\E_{y,l}(\overline{f}_{y,l}\gene_{x,z}\overline{f}_{y,l})\leq D^{x,z}(\overline{f}),\]
which implies, if $\rdir\pa{f}\leq C_0$ that 
\begin{equation}\label{dir22BE}D^1_{l,y}(\overline{f}_{y,l})+D^2_{l,y}(\overline{f}_{y,l})\leq 2C_0\frac{(2l+1)^2}{N^2},\end{equation}
by translation invariance of $\mu_{\param}$ and $\overline{f}$. We now only need to estimate the third term $D^0_{l,y}$. Let us consider a path $x_0=0, x_1,\ldots ,x_k=y$ of minimal length, such that $\abs{x_i-x_{i+1}}=1$ for any $i\in\{0,\ldots ,k-1\}. $ For any such path, we have $k\leq 4{\varepsilon N}$, since $\abss{y}\leq 2{\varepsilon N}$, and we can write 
\[D^0_{l,y}(\overline{f})\leq-\Eref(\overline{f}\gene_{0,y}\overline{f})=\frac{1}{2}\Eref \cro{\abs{\conf_0-\conf_y}(\overline{f}(\confhat^{0,y})-\overline{f}(\confhat))^2} \]
where $\confhat^{0,y}$ here is the state where the sites in $0$ and $y$ are inverted regardless of the occupation of either site. Since $\conf_0-\conf_y$ vanishes whenever both sites $0$ and $y$ are occupied or both are empty, we can for example assume that $\conf_0=1$ and $\conf_y=0$. For any configuration $\confhat^0=\confhat$, we let for any $i\in\{1,\ldots ,k\}$ \[{\confhat^{i}}=\pa{\confhat^{i-1}}^{x_{i-1}, x_i}\]
Thanks to the elementary inequality \[\pa{\sum_{j=1}^k a_j}^2\leq k\sum_{j=1}^k a_j^2,\]
and by definition of the sequence $({\confhat^{i}})_{i=0\ldots  k}$ (which yields in particular $\confhat^0=\confhat$ and $\confhat^k=\confhat^{0,y}$), the previous equation yields 
\begin{align*}
\Eref \cro{\conf_{0}(1-\conf_y)(\overline{f}(\confhat^{0,y})-\overline{f}(\confhat))^2}&\leq k \sum_{{i}=0}^{k-1} \Eref\cro{\conf_{0}(1-\conf_y)(\overline{f}({\confhat^{i+1}})-\overline{f}({\confhat^{i}}))^2} \\
&=k \sum_{{i}=0}^{k-1} \Eref\cro{\conf^i_{x_{i}}(1-\conf^i_{x_{i+1}})\cro{\overline{f}(\pa{\confhat^{i}}^{x_i, x_{i+1}})-\overline{f}({\confhat^{i}})}^2} \end{align*}
Since $\mesref$ is invariant through any change of variable $\confhat\to {\confhat^{i}}$, and since we can easily derive the same kind of inequalities with $\conf_y(1-\conf_0)$ instead of $\conf_{0}(1-\conf_y)$, we obtain that 
\begin{equation}\label{dir32BE}D^{0,y}_{l}(\overline{f})\leq k\sum_{i=0}^{k-1}D^{x_{i+1},x_i}(\overline{f})=k^2N^{-2}\rdir\pa{f}\leq 16\varepsilon^2 \rdir\pa{f}\end{equation}
thanks to the translation invariance of $\overline{f}$.
Finally, equations \eqref{dir12BE}, \eqref{dir22BE} and \eqref{dir32BE} yield 
\begin{equation}\label{dirly}D_{l,y}(\overline{f}_{y,l})\leq 2C_0\frac{(2l+1)^2}{N^2}+16C_0\varepsilon^2,\end{equation}
which vanishes as $N\to \infty$ then $\varepsilon\to 0$.
A bound on the entropy analogous to \eqref{direntl} is straightforward to obtain. Finally, to prove the two-block estimate, as in the proof of the one-block estimate, we can get back to proving that
\begin{equation}
\label{eq:TBEfini}
\liml\limep\limN  \sup_{2l<\abss{y}< 2\varepsilon N}\sup_{D_{l,y}(f)\leq 2C_0\frac{(2l+1)^2}{N^2}+16C_0\varepsilon^2}\E_{y,l}\pa{\normm{\tau_y\densl-\densl}\1_{E_l}f}=0. 
\end{equation}
Any {density satisfying the bound $D_{l,y}(f)\leq 2C_0\frac{(2l+1)^2}{N^2}+16C_0\varepsilon^2$} is ultimately constant on any set  with fixed number of particles and angles 
in the set $B_{y,l}$ with at least two empty sites. 
{More precisely, denote 
\[\param_{y, \ell}(\confhat)=\frac{1}{2(2l+1)^2}\sum_{x\in B_l\cup \tau_yB_l}\conf_x\delta_{\theta_x}\] 
the empirical canonical state of the configuration in 
$B_l\cup \tau_yB_l$, and denote by $\widehat{f}(\cdot)$ the conditional expectation of $f$ w.r.t. the canonical state of the configuration in $B_l\cup \tau_y B_l$, 
defined for any $\K$ on $B_l\cup \tau_y B_l$ by
\[\widehat{f}(\K)=\Eref\pa{f\left|\param_{y, \ell}(\confhat)=\param_{\K}\right.}.\]
We can now write for any $|y|> 2l$
\begin{multline*}
\E_{y,l}\pa{\normm{\tau_y\densl-\densl}\1_{E_l}f}\leq \int_{\Kset_{y,l}}\E_{\K,y,l}\pa{\normm{\tau_y\densl-\densl}}\widehat{f}(\K)d\K+\E_{y,l}\pa{\1_{E_l}\abs{f-\widehat{f}(\param_{y, \ell}(\confhat))}}\\
\leq \sup_{\K\in \Kset_{y_l,l}}\E_{\K,y_l,l}\pa{\normm{\tau_{y_l}\densl-\densl}}+\E_{y,l}\pa{\1_{E_l}\abs{f-\widehat{f}(\param_{y, \ell}(\confhat))}}, 
\end{multline*}
where we shortened $y_l=(2l+1)e_1$, $\Kset_{y,l}$ denotes the set of canonical parameters on $B_l\cup \tau_y B_l$, and 
$\E_{\K,y,l}(\cdot)=\Eref(\cdot\mid \param_{y, \ell}(\confhat)=\param_{\K})$. By compactness of the set of densities w.r.t. $\mesref$ on $B_l\cup \tau_y B_l$, 
the supremum over all densities satisfying $D_{l,y}(f)\leq 2C_0\frac{(2l+1)^2}{N^2}+16C_0\varepsilon^2$ of the second term above vanishes uniformly in $|y|> 2l$ 
as $N\to\infty$ and then $\varepsilon\to 0$, whereas the first term does not depend on $y$. To prove \eqref{eq:TBEfini}, it is therefore sufficient to prove that
\[\limsup_{l\to\infty}\sup_{\K\in \Kset_{y_l,l}}\E_{\K,y_l,l}\pa{\normm{\tau_{y_l}\densl-\densl}}=0,\]
which follows from the equivalence of ensembles.
}

}

\section{Preliminaries to the non-gradient method}
\label{sec:5}
\intro{The main focus of  Sections \ref{sec:5} and \ref{sec:6} is the symmetric part of the displacement process, whose contribution to the hydrodynamic limit requires the non-gradient method. Before engaging in the proof of the non-gradient estimates, however, we regroup several results which will be needed throughout the proof.  }

\subsection{Comparison with an equilibrium measure}
\label{subsec:FeynmanKac}
\intro{
In this section, we prove a result that will be used several times throughout the proof, and which allows to control the exponential moments of a functional $X$ by a variational formula involving the equilibrium measure $\mesref$. This control is analogous to the so called \emph{sector condition} for asymmetric processes, which ensures that the mixing due to the symmetric part of the generator is sufficient to balance out the shocks provoked by the antisymmetric part.}

\begin{rema}\label{rema:nonstationnarity}[Non-stationarity of $\mesref$ for the weakly asymmetric process]
It has already been pointed out that $\gene$ is self-adjoint w.r.t any product measure $\mesinv$, which is not in general the case of $\geniz$. However, $\geniz$ is self-adjoint w.r.t. $\mesref$ due to the uniformity in $\theta$ of that measure. Asymmetric generators are usually "almost" anti-self-adjoint, in the sense that one could expect ${\genwa}^*=-\genwa$.
This identity is for example true for the $TASEP$, for which the asymmetry is constant and does not depend on each particle.

It is not true in our case however, due to the exclusion rule and the dependency of the asymmetry in the angle of the particle. To clarify this statement, see the adjoint operator as a time-reversal, and consider a configuration with two columns of particles wanting to cross each other. This configuration would be stuck under $\genwa$, however, under the time-reversed dynamics ${\genwa}^*$, it starts to move. This illustrates that in our model, the asymmetric generator $\genwa$ is not anti-self-adjoint. 

Let us denote accordingly to the previous notation \eqref{currentssym} and recalling the definition of the $\lambda_i's$ \eqref{lidef}, for $ i=1,2$ 
\[\cur_i^{\lambda_i}=\lambda_i(\theta_0)\conf_0(1-\conf_{e_i})-\lambda_i(\theta_{e_i})\conf_{e_i}(1-\conf_0).\]
 Elementary computations yield accordingly that the adjoint in $L^2(\mesref)$ of $\genwa$ is in fact given by 
\begin{equation}\label{asymad}\genwad=-\genwa+2\sum_{x\in \torus}\sum_{i=1,2}\tau_x \cur_i^{\lambda_i}.\end{equation}
This identity will be necessary to prove the following result, which compares the measure of the process with drift to the measure $\mesref$.
\end{rema}

\begin{lemm}
\label{lem:FeynmanKac}
{Recall the topology on $\statespace$ introduced in Proposition \ref{prop:comparisonbetazero}, and fix a bounded measurable function 
\[\func{X}{\statespace\times[0,T]}{\R}{(\confhat,t)}{X_t(\confhat)}.\] 
}
For any $\gamma>0$, we have 
\[\frac{1}{\gamma N^2}\log \E^{\lambda,0}_{\mesref}\left[\exp \left(\gamma N^2\int_0^TX_t(\confhat(t))dt\right)\right]\leq\frac{2 T\lambda^2}{\gamma}+\frac{1}{\gamma }\int_0^Tdt\sup_{\varphi}\left\{\Eref\pa{\varphi \gamma X_t(\confhat)}-\frac{1}{2}\rdir(\varphi)\right\},\]
where the supremum in the right-hand side is taken on the densities w.r.t. $\mesref$.
\end{lemm}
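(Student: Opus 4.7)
The plan is to apply the Feynman--Kac formula in its non-reversible formulation. Let $f_t$ denote the density w.r.t.\ $\mesref$ of the tilted measure
\[\mu_t^{\mathrm{tilt}}(A):=\E^{\lambda,0}_{\mesref}\bigl[\1_A(\confhat(t))\,e^{\gamma N^2\int_0^t X_s(\confhat(s))ds}\bigr],\]
so that $f_t$ solves the forward equation $\partial_t f_t=L_N^{\beta=0,*}f_t+\gamma N^2 X_t f_t$ with $f_0\equiv 1$, and $\Eref[f_T]=\E^{\lambda,0}_{\mesref}[\exp(\gamma N^2\int_0^T X_t\,dt)]$. Because $\mesref$ is not invariant under the weakly asymmetric dynamics, the reversible Feynman--Kac variational bound cannot be used directly; instead I would control $\psi(t):=\Eref[f_t^2]$ and recover $\Eref[f_T]$ by a final Cauchy--Schwarz.

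Differentiating $\psi$ and using $\Eref[fL_N^{\beta=0,*}f]=\frac{1}{2}\Eref[f(L_N^{\beta=0}+L_N^{\beta=0,*})f]$, I would identify the symmetrized generator through \eqref{asymad}: since $\genas$ and $\geniz$ are self-adjoint in $L^2(\mesref)$ and $\genwa+\genwad=2V$ with $V=\sum_{x,i}\tau_x\cur_i^{\lambda_i}$, one obtains $(L_N^{\beta=0}+L_N^{\beta=0,*})/2=N^2\genas+NV+\geniz$. Using $\Eref[f\genas f]=-\dir(f)$ and the non-positivity of $\Eref[f\geniz f]$ (as $\geniz$ is $\mesref$-reversible), the only delicate term is $2N\Eref[Vf_t^2]$, which must be absorbed into the Dirichlet dissipation $2N^2\dir(f_t)$.

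The heart of the proof is a Cauchy--Schwarz bound on $\Eref[Vf_t^2]$. After the change of variable $\confhat\to\confhat^{x,x+e_i}$ in the second term of each $\tau_x\cur_i^{\lambda_i}$, one writes
\[\Eref[Vf_t^2]=\sum_{x,i}\Eref\bigl[\lambda_i(\theta_x)\conf_x(1-\conf_{x+e_i})(f_t-f_t')(f_t+f_t')\bigr]\]
with $f_t'=f_t(\confhat^{x,x+e_i})$. Applying $|ab|\leq a^2/(2c)+cb^2/2$ to $a=\lambda_i(\theta_x)(f_t+f_t')$ and $b=\conf_x(1-\conf_{x+e_i})(f_t-f_t')$, bounding $\Eref[(f_t+f_t')^2]\leq 4\psi(t)$ by swap-invariance of $\mesref$, exploiting the isotropy identity $\sum_{i=1}^2\lambda_i^2(\theta)=\lambda^2(\cos^2\theta+\sin^2\theta)=\lambda^2$, and recognizing $\sum_{x,i}\Eref[\conf_x(1-\conf_{x+e_i})(f_t-f_t')^2]=\dir(f_t)$, I get $|\Eref[Vf_t^2]|\leq \lambda^2 N^2\psi(t)/c+c\dir(f_t)$. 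Choosing $c=N/2$ yields $2N|\Eref[Vf_t^2]|\leq 4\lambda^2 N^2\psi(t)+N^2\dir(f_t)$, and with the density $\widetilde\varphi_t:=f_t^2/\psi(t)$ (which satisfies $\rdir(\widetilde\varphi_t)=\dir(f_t)/\psi(t)$),
\[\frac{\psi'(t)}{\psi(t)}\leq 4\lambda^2 N^2-N^2\rdir(\widetilde\varphi_t)+2\gamma N^2\Eref[X_t\widetilde\varphi_t].\]

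Bounding the right-hand side by its supremum over all densities $\varphi$, integrating on $[0,T]$ with $\psi(0)=1$, applying Cauchy--Schwarz $\Eref[f_T]\leq\psi(T)^{1/2}$, and dividing by $\gamma N^2$ produces the claimed inequality. The main obstacle is the non-invariance of $\mesref$, which forbids a direct reversible variational argument and imposes the $L^2$ detour through $\psi(t)$; obtaining the sharp constant $2T\lambda^2/\gamma$ rather than a larger multiple relies on \emph{simultaneously} exploiting the isotropy $\lambda_1^2+\lambda_2^2=\lambda^2$ and the optimal Cauchy--Schwarz parameter $c=N/2$, so that the asymmetric contribution consumes exactly half of the available Dirichlet dissipation $2N^2\dir(f_t)$.
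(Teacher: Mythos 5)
Your proposal is correct and follows essentially the same path as the paper's proof: both control the $L^2(\mesref)$ norm of the Feynman--Kac tilted density $f_t=P_t^{\lambda,X}1$, symmetrize $L_N^{\beta=0}$ via the identity \eqref{asymad}, estimate the resulting extra term $2N\Eref[Vf_t^2]$ by Cauchy--Schwarz against half the Dirichlet dissipation, use the non-positivity of $\Eref[\varphi\,\geniz\,\varphi]$, apply Gr\"onwall, and finish with $\Eref[f_T]\leq\Eref[f_T^2]^{1/2}$. The only point worth noting is that the step ``exploiting the isotropy $\lambda_1^2+\lambda_2^2=\lambda^2$'' needs the change of variable $\confhat\mapsto\confhat^{x,x+e_i}$ applied to the $f_t'^2$ piece of $(f_t+f_t')^2$ \emph{before} summing over $i$ (since $f_t'$ depends on $i$ through $e_i$), after which your intermediate coefficients differ from mine by a harmless factor of two that is absorbed by a different choice of the Cauchy--Schwarz parameter $c$.
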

\proofthm{Lemma \ref{lem:FeynmanKac}}{
Let us denote by $P_t^{\lambda, X}$  the modified semi-group 
\[P_t^{\lambda, X}=\exp\cro{\int_0^t L_N^{\beta=0}+\gamma N^2X_s ds}.\]
where  $L_N^{\beta=0}$ is the alignment-free generator introduced in \eqref{betaz} and let us denote in this section by $<.,.>_{\alpha}$ the inner product in $L^2(\mesref)$.   For any $i=1,\; 2$, and any $H$, and $T>0$, the Feynman-Kac formula yields 
\begin{align}\label{FC}\E^{\lambda, 0}_{\mesref}\left[\exp \left(\gamma N^2\int_0^TX_t(\confhat(t))dt\right)\right]=\;<1,P_T^{\lambda, X}1>_{\am}\;\leq\; <P_T^{\lambda, X}1,P_T^{\lambda, X}1>_{\am}^{1/2}.
\end{align}
by definition of $P_t^{\lambda, X}$, 
\begin{equation}\label{dertemps}\frac{d}{dt}<P_t^{\lambda, X}1,P_t^{\lambda, X}1>_{\am}=<P_t^{\lambda, X}1,(L_N^{\beta=0}+L_N^{\beta=0, *}+2\gamma N^2 X_t)P_t^{\lambda, X}1>_{\am},\end{equation}
where $M^*$ stands for the adjoint in $L^2(\mesref)$ of $M$. By definition of  $L_N^{\beta=0}$, we have
\[ L_N^{\beta=0,*}=N^2\gene^*+N\genwad+\genizad.\]

We now work to control the weakly asymmetric contribution in  the right-hand side of equation \eqref{dertemps}, which does not vanish in our case, as a consequence of Remark \ref{rema:nonstationnarity}. For that purpose, consider a function $\varphi\in L^{2}(\mesref)$, identity \eqref{asymad} yields 
\[<\varphi, (\genwa+\genwad)\varphi>_{\alpha}=2\sum_{x\in \torus}\sum_{i=1,2}\Eref\cro{\varphi^2\tau_x\cur_i^{\lambda_i}}.\]
Recall the definition of $\grad f$ given in equation \eqref{graddef}. A change of variable $\confhat\mapsto \confhat^{x,x+e_i}$ on the second part of $\tau_x \cur_i^{\lambda_i}$ yields that for any $x$ 
\[\Eref(\varphi^2\tau_x \cur_i^{\lambda_i})=-\Eref(\lambda_i(\theta_x)\nabla_{x,x+e_i}\varphi^2)=-\Eref\cro{\lambda_i(\theta_x)\pa{\varphi(\confhat^{x,x+e_i})+\varphi}\nabla_{x,x+e_i}\varphi}, \]
therefore applying the elementary inequality $ab\leq a^2/2+b^2/2$, to 
\[a=\sqrt{N}\nabla_{x,x+e_i}\varphi \eqand  b=-\frac{\lambda_i(\theta_0)}{\sqrt{N}}\pa{\varphi(\confhat^{x,x+e_i})+\varphi},\]
 we obtain (since $\lambda_i(\theta)$ is either $\lambda \cos(\theta)$ or $\lambda\sin(\theta)$ and is less than  $\lambda$)
\begin{equation*}<\varphi,(\genwa+\genwad)\varphi>_{\alpha}\leq \frac{N}{2}\sum_{x\in \torus}\sum_{i=1,2}\Eref\cro{\pa{\nabla_{x,x+e_i}\varphi}^2}+ \frac{\lambda^2}{2N }\sum_{x\in \torus}\sum_{i=1,2}\Eref\cro{(\varphi(\confhat^{x,x+e_i})+\varphi)^2}.\end{equation*}
Since $(\varphi(\confhat^{x,x+e_i})+\varphi)^2$ is less than $2\varphi^2(\confhat^{x,x+e_i})+2\varphi^2$, we finally obtain that, 
\[<\varphi,N(\genwa+\genwad)\varphi>_{\alpha}\leq - N^2\Eref\cro{\varphi \gene \varphi}+ 4\lambda^2 N^2\Eref\cro{\varphi^2}.\]

In particular, applying this identity to $\varphi=P_t^{\lambda, X}1$, we deduce from equation \eqref{dertemps} that
\begin{align*}\frac{d}{dt}<P_t^{\lambda, X}1,P_t^{\lambda, X}1>_{\am}\leq& <P_t^{\lambda, X}1,\cro{2\gamma N^2X_t+N^2\gene+2\geniz+4\lambda^2N^2}P_t^{\lambda, X}1>_{\am}\\
\leq&\;\pa{\nu_{\gamma} (t)+{4\lambda^2N^2}}<P_t^{\lambda, X}1,P_t^{\lambda, X}1>_{\am}+2<P_t^{\lambda, X}1,\geniz P_t^{\lambda, X}1>_{\am}
,\end{align*}
where $\nu_{\gamma }(t)$ is the largest eigenvalue of the self-adjoint operator $N^2\gene +2\gamma N^2X_t$. It is not hard to see that the second term above is non-positive. Indeed, for any function $\varphi$ on $\statespace$, by definition of $\geniz$ (cf. equation \eqref{genisdef})
\begin{align*}<\varphi,\geniz \varphi>_{\am}&=\sum_{x\in\torus}\Eref\pa{\conf_x\varphi(\confhat)\cro{\frac{1}{2\pi}\int_{\ctoruspi}\varphi(\confhat^{x,\theta})d\theta-\varphi(\confhat)}}\\
&=-\frac{1}{2}\sum_{x\in\torus}\Eref\pa{\conf_x\cro{\frac{1}{2\pi}\int_{\ctoruspi}\varphi(\confhat^{x,\theta})d\theta-\varphi(\confhat)}^2}\leq0.\end{align*}
To establish the last identity, we only used that under $\mesref$, the angles are chosen uniformly, and therefore $\Eref\pa{\conf_x\varphi(\theta_x)}=\Eref(\conf_x)(1/2\pi)\int_{\ctoruspi} \varphi(\theta')d\theta'$.
We thus obtain that  
\[\frac{d}{dt}<P_t^{\lambda, X}1,P_t^{\lambda, X}1>_{\am}\leq\pa{\nu_{\gamma} (t)+{4\lambda^2N^2}}<P_t^{\lambda, X}1,P_t^{\lambda, X}1>_{\am},\]
and Gr\"onwall's inequality therefore yields that
\[<P_T^{\lambda, X}1,P_T^{\lambda, X}1>_{\am}\leq \exp\pa{4T\lambda^2N^2+\int_0^T\nu_{\gamma}(t)dt}.\]
This, combined with \eqref{FC}, allows us to write 
\begin{equation}\label{FKForm}\frac{1}{\gamma N^2}\log \E^{\lambda, 0}_{\mesref}\left[\exp \left(\gamma N^2\int_0^T X_tdt\right)\right]\leq\frac{2 T\lambda^2}{\gamma }+ \int_0^T \frac{\nu_{\gamma}(t)}{2\gamma N^2}dt.\end{equation}
The variational formula for the largest eigenvalue of the self-adjoint operator $N^2(\gene+2\gamma X_t)$ yields that 
\begin{align*}\nu_{\gamma}(t)=&N^2\sup_{\psi, \;\Eref(\psi^2)=1}\Eref\pa{\psi(\gene+2\gamma X_t)\psi}=2N^2 \sup_{\varphi}\left\{\gamma\Eref\pa{X_t \varphi}-\frac{1}{2}\rdir(\varphi)\right\},
\end{align*}
where the second supremum is taken over all densities $\varphi$ w.r.t. $\mesref$, which together with \eqref{FKForm} concludes the proof of Lemma \ref{lem:FeynmanKac}. To prove the last identity, one only has to note that the supremum must be achieved by functions $\psi$ of constant sign, so that we can let $\varphi=\sqrt{\psi}$.

}

\subsection{Relative compactness of the sequence of measures}
\label{subsec:compactnessQN}
\intro{
We prove in this section that the sequence $(Q^N)_{N\in \N}$, defined in equation \eqref{QNdef}, is relatively compact for the weak topology.
It follows from two  properties stated in Proposition \ref{prop:relativecompactness} below. The first one ensures that the fixed-time marginals are controlled, whereas the second ensures that the time-fluctuations of the process's measure are not too wide. }

Given a  function $H:\ctorus\times \ctoruspi\to \R$, we already introduced  in the outline of Section \ref{subsec:outlineandcurrents} the notation  
\[<\pi, H>=\int_{\ctorus\times\ctoruspi}H(u, \theta )\pi(du,d\theta).\]
The following result yields sufficient conditions for the weak relative compactness of the sequence $(Q^N)_N$. Recall from equation \eqref{Mhatdef} the definition of the set of trajectories $\Mhat$.

\begin{prop}[Characterization of the relative compactness on $\mathcal{P}(\Mhat)$]
\label{prop:relativecompactness}
Let $P^N$ be a sequence of probability measures on the set of trajectories $ \Mhat$ defined in \eqref{Mhatdef}, such that
\begin{enumerate}[(1)]
\item{There exists some $A_0>0$ such that for any $A>A_0$, \[\limN P^N\pa{\sup_{s\in [ 0,T]}<\pi_s,1>\;\;\geq A}=0\]}
\item{For any $H\in C^{2,1}(\ctorus\times \ctoruspi)$, $\varepsilon>0$, \[\lim_{\delta\to 0}\limN P^N\pa{\sup_{\substack{\abss{t-t'} \leq \delta\\
0\leq t',t\leq T}}\abs{<\pi_{t'},H>-<\pi_{t},H>}>\varepsilon}=0.\]}
\end{enumerate}
Then, the sequence $(P^N)_{N\in \N}$ is relatively compact for the weak topology. 
\end{prop}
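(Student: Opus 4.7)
The plan is to prove tightness of $(P^N)_{N\in \N}$, which by Prohorov's theorem applied to the Polish space $\Mhat$ is equivalent to weak relative compactness. The argument follows the classical scheme for measure-valued c\`adl\`ag processes (cf. Chapter 4 of \cite{KLB1999}) and decomposes into three steps: compactness of the time marginals via condition (1), reduction to real-valued projections via a Mitoma-type theorem, and oscillation control via condition (2).

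First I would establish that the family of fixed-time marginals $(\pi_t^N)_*P^N$ lives in a weakly compact set uniformly in $t\in[0,T]$. Since $\ctorus\times \ctoruspi$ is compact, for every $A>0$ the set
\[ K_A = \bigl\{\mu \in \meas(\ctorus\times\ctoruspi)\;:\;\mu(\ctorus\times\ctoruspi)\leq A\bigr\} \]
is weakly compact, and condition (1) guarantees that for any $\varepsilon>0$ one can choose $A$ large enough so that $P^N(\pi_s\notin K_A \text{ for some } s\in[0,T])\leq \varepsilon$ uniformly in $N$. In particular this yields tightness of the initial marginals in $\meas(\ctorus\times \ctoruspi)$.

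Next I would invoke the standard reduction (Mitoma's theorem, or its variant for non-negative measure-valued processes valued in a compact space): tightness of $(P^N)$ in $D([0,T],\meas(\ctorus\times\ctoruspi))$ endowed with Skorohod's topology is implied by tightness of $\pi_0^N$ together with tightness in $D([0,T],\R)$ of each sequence $(<\pi_t^N,H>)_{t\in[0,T]}$, for $H$ ranging over a countable subset of $C^{2,1}(\ctorus\times\ctoruspi)$ dense in $C(\ctorus\times\ctoruspi)$. Such a countable family exists by separability of $C(\ctorus\times\ctoruspi)$, and the reduction relies on identifying $\meas(\ctorus\times\ctoruspi)$ (restricted to total mass at most $A$) with a subset of the dual of $C(\ctorus\times\ctoruspi)$ on which the weak topology is metrizable and generated by integration against such test functions.

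Finally, for each fixed such $H$, tightness of $(<\pi_\cdot^N,H>)$ in $D([0,T],\R)$ is a direct consequence of the Aldous-type criterion: condition (1) provides the uniform bound $|<\pi_t^N,H>|\leq A\norm{H}_{\infty}$ on an event of arbitrarily large probability, hence compactness of the one-dimensional marginals, while condition (2) furnishes exactly the modulus-of-continuity control required to rule out oscillations of the trajectories. Since condition (2) is stronger than Aldous's stopping-time criterion, one actually obtains $C$-tightness, so any limit point of $(P^N)$ is in fact concentrated on continuous trajectories of measures. The main conceptual obstacle is the Mitoma-type reduction step; once it is granted the rest is a routine verification of the Aldous criterion from the two hypotheses.
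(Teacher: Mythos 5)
Your sketch is correct, and it is essentially the same route the paper takes: the paper does not prove this proposition at all, writing only that it follows "with minor adjustments" from Theorem 13.2, p.~139 of Billingsley's book, and your three steps (compact containment from (1), reduction of tightness in $\Mhat$ to tightness of the real-valued projections $\langle\pi_\cdot, f_k\rangle$ via the countably generated metric \eqref{metricM}, Billingsley/Aldous criterion for each projection via (2)) are precisely those "minor adjustments" spelled out. Two small remarks: first, condition (1) only yields asymptotic compact containment (the $\limsup$ over $N$ is zero for $A>A_0$), so to get the uniform-in-$N$ statement you assert one should add the standard observation that the finitely many remaining indices $N$ are handled by enlarging $A$, each individual $P^N$ being tight; second, separately checking tightness of the initial marginals $\pi_0^N$ is redundant once one has tightness of each process $\langle\pi^N_\cdot, f_k\rangle$ in $D([0,T],\R)$, though it does no harm. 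Your observation that (2) actually gives $C$-tightness (so that limit points are supported on continuous trajectories) is correct and is a useful by-product not mentioned by the paper.
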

Since this proposition is, with minor adjustments, found in \cite{BillingsleyB1999} (cf. Theorem 13.2,  page 139), we do not give its proof, and refer the reader to the latter. For now, our focus is the case of the active exclusion process, for which both of these conditions are realized. The strategy of the proof follows closely that of Theorem 6.1, page 180 of \cite{KLB1999}, but requires two adjustments. First, our system is driven out of equilibrium by the drift, and we therefore need to use the Lemma \ref{lem:FeynmanKac} stated in the previous section to carry out the proof. The second adaptation comes from the presence of the angles, and since most of the proof is given for a test function $H(u,\theta)=G(u)\omega(\theta)$, we need to extend it in the general case where $H$ cannot be decomposed in this fashion.
\begin{prop}[Compactness of $(Q^N)_{N\in \N}$]
\label{prop:compactnessQN}
The sequence $(Q^N)_{N\in \N}$ defined in equation \eqref{QNdef} of probabilities on the trajectories of the active exclusion process satisfies conditions (1) and (2) above, and is therefore relatively compact.
\end{prop}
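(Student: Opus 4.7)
Condition $(1)$ is immediate: since $\langle \pi_s^N,1\rangle=N^{-2}\sum_{x\in\torus}\conf_x(s)\leq 1$ deterministically, the probability in $(1)$ is zero for every $A_0>1$ and every $N$, so it remains only to verify condition $(2)$. As a preliminary, I will reduce to tensor-product test functions $H(u,\theta)=G(u)\omega(\theta)$, $G\in C^2(\ctorus)$, $\omega\in C^1(\ctoruspi)$. By a Stone--Weierstrass argument on the product algebra $C^{2}(\ctorus)\otimes C^{1}(\ctoruspi)$, any $H\in C^{2,1}(\ctorus\times\ctoruspi)$ can be uniformly approximated, together with the derivatives relevant to the bounds below, by finite sums $\sum_jG_j(u)\omega_j(\theta)$; and since $\langle \pi_t^N,1\rangle\leq 1$, a uniform error $\varepsilon/3$ in $H$ propagates to the same order in the time modulus of $\langle \pi_t^N,H\rangle$.

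Fixing $H=G\otimes\omega$, I will use Dynkin's decomposition
\[\langle\pi_t^N,H\rangle-\langle\pi_s^N,H\rangle=\int_s^tL_N\langle\pi_u^N,H\rangle\,du+M_t^{H,N}-M_s^{H,N}.\]
The martingale is controlled by the quadratic-variation estimate of Section \ref{subsec:outlineandcurrents}: that computation gives $[M^{H,N}]_T\leq C(H)/N$ pathwise for $H\in C^{2,1}$, so Doob's maximal inequality makes $\sup_{t\leq T}|M_t^{H,N}|$ vanish in probability, uniformly in the pair $(s,t)$. For the integral term, formula \eqref{empiricalmeasure1} rewrites it as
\[\frac{1}{N^2}\int_s^t\sum_{x,i}\tau_x\bigl[(N\curom_i+\curaom_i)\partial_{u_i,N}G+G\gamma^\omega\bigr]\,du.\]
The Glauber piece $G\gamma^\omega$ and the weakly-asymmetric piece $\curaom_i\partial_{u_i,N}G$ are pointwise bounded by $C(H)$, so they contribute at most $C(H)(t-s)\leq C(H)\delta$ along every trajectory and cause no problem in the limit $\delta\to 0$.

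The real difficulty lies in the symmetric contribution $N^{-1}\sum_{x,i}\tau_x\curom_i\partial_{u_i,N}G$, which is only $O(N)$ pathwise because $\curom_i$ is not a gradient. I plan to exploit the identity
\[\curom_i=\bigl(\conf_0^\omega-\conf_{e_i}^\omega\bigr)+\conf_0\conf_{e_i}\bigl(\omega(\theta_{e_i})-\omega(\theta_0)\bigr).\]
The gradient piece admits a further discrete summation by parts which produces the discrete Laplacian of $G$ tested against $\conf_x^\omega$, giving a pathwise bound of order $C(H)(t-s)$. The remaining fluctuation term, whose spatial sum $V^N_i$ has mean zero under every grand-canonical measure $\mesref$ with uniform angular marginal, must be treated in probability. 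My plan is to combine Proposition \ref{prop:comparisonbetazero} (which transfers $\Prob^{\lambda,\beta}_{\mu^N}$ to $\Prob^{\lambda,0}_{\mesref}$ at the cost of an additive $K_0N^2/A$) with the Feynman--Kac bound of Lemma \ref{lem:FeynmanKac}, choosing $A=\gamma N^2$. The variational quantity that appears,
\[\sup_{\varphi}\Bigl\{\Eref\bigl(\varphi\, V^N_i\bigr)-\tfrac12\rdir(\varphi)\Bigr\},\]
is then bounded by a term of order $O(1)$ via a Cauchy--Schwarz argument exploiting the antisymmetry of $V^N_i$ under bond exchanges, which extracts a factor $\sqrt{\rdir(\varphi)}$ to be absorbed by the Dirichlet form. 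Markov's exponential inequality then yields a bound of order $\delta$ for the probability of $\{|\int_s^tV^N_i\,du|>\varepsilon\}$, uniformly over stopping-time pairs $(s,t)$ with $|t-s|\leq\delta$, and Aldous' criterion closes condition $(2)$. The main obstacle will precisely be this last step: the non-gradient fluctuation forces us out of the pathwise regime, and the entropy bound must be chained with the Feynman--Kac variational formula in such a way that the ``wrong'' factor $N$ carried by the current is absorbed by the Dirichlet form of a suitable test density.
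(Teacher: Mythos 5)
Your treatment of condition $(1)$, the reduction to tensor-product test functions, the martingale bound, and the pathwise estimate on the Glauber and weakly-asymmetric contributions all match the paper's argument. The second summation by parts on the gradient piece $\conf_0^\omega-\conf_{e_i}^\omega$ is also a valid computation and gives the claimed pathwise $O(|t-s|)$ bound. The problem lies in your handling of the remaining ``fluctuation'' term.

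Your split
\[
\curom_i=\bigl(\conf_0^\omega-\conf_{e_i}^\omega\bigr)+\conf_0\conf_{e_i}\bigl(\omega(\theta_{e_i})-\omega(\theta_0)\bigr)
\]
is algebraically correct, but it destroys exactly the structural property that makes the Feynman--Kac variational bound usable. The paper's Lemma \ref{lem:compacteness} works \emph{directly} with $\curom_i$ because $\curom_i$ vanishes whenever $\conf_0\conf_{e_i}=1$, so that $(\curom_{x,x+e_i})^2\leq\norm{\omega}_\infty^2\,\1_{\{\conf_x\conf_{x+e_i}=0\}}$; after the antisymmetrizing change of variables $\confhat\mapsto\confhat^{x,x+e_i}$, the resulting squared gradient of $\sqrt{f}$ carries the same indicator and is therefore dominated by a term of the Dirichlet form $\rdir(f)$. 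Your fluctuation term $\conf_0\conf_{e_i}(\omega(\theta_{e_i})-\omega(\theta_0))$ is supported precisely on $\{\conf_0\conf_{e_i}=1\}$. On that event the exchange $\confhat\mapsto\confhat^{x,x+e_i}$ is not a legal exclusion jump, and $\rdir(\varphi)$ contains no term controlling $\bigl(\sqrt{\varphi}(\confhat^{x,x+e_i})-\sqrt{\varphi}\bigr)^2$ there. So the Cauchy--Schwarz step you describe produces a quantity that cannot be absorbed by the Dirichlet form, and the variational supremum in Lemma \ref{lem:FeynmanKac} cannot be bounded this way. (The angle-change Dirichlet form of $\geniz$ would help in principle, but the Feynman--Kac bound of Lemma \ref{lem:FeynmanKac} only retains $\rdir$, discarding the Glauber contribution as merely non-positive, and in any case the Glauber Dirichlet form is not accelerated by $N^2$ and is far too weak to absorb an $O(N)$ current.) In short: do not split $\curom_i$; keep it whole so that the factor $\1_{\{\conf_0\conf_{e_i}=0\}}$ survives the Cauchy--Schwarz step. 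Once that is done, you will also find that controlling a supremum over all pairs $|t-t'|\leq\delta$ requires more than a Markov--Chernoff bound at a single pair: the paper uses the Garsia--Rodemich--Rumsey inequality to pass from the pointwise exponential bound to the uniform modulus of continuity required by condition $(2)$. An Aldous-type criterion could replace that step, but you would then be proving tightness by a different route than the one asserted by Proposition~\ref{prop:relativecompactness}, and the exponential estimate for stopping-time pairs would still need to be justified.
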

\proofthm{Proposition \ref{prop:compactnessQN}}{ The first condition does not require any work since the active exclusion process only allows one particle per site and  we can thus choose $A_0=1$. Regarding the second condition, recall that 
\begin{equation} \label{intpartcomp}<\pi^N_{t'},H>-<\pi^N_t,H>=\int_{t'}^t L_N<\pi^N_s,H> ds+M_t^H-M_{t'}^H,\end{equation}
where $M^H$ is a martingale with quadratic variation of order $N^{-2}$. For more details, we refer the reader to appendix A of \cite{KLB1999}.
First, Doob's inequality yields uniformly in $\delta$ the crude bound
 \begin{align}\label{martcompa}\E_{\mu^N}^{\lambda,\beta}\pa{\sup_{{t'},t\leq \delta}\abs{M_t^H-M_{t'}^H}}\leq 2\E^{\lambda,\beta}_{\mu^N}\pa{\sup_{0\leq t\leq T}\abs{M_t^H}}\leq C(H)N^{-1},\end{align}
where $\E_{\mu^N}^{\lambda,\beta}$ is the expectation w.r.t  the measure $\Prob_{\mu^N}^{\lambda,\beta}$ introduced just after Definition \ref{defi:GCM} of the complete process $\confhat^{[0,T]}$ started from the initial measure $\mu^N$.

Regarding the integral part of \eqref{intpartcomp}, we first assume like earlier that $H$ takes the form \[H(u, \theta)=G(u)\omega(\theta),\] where $G$ and $\omega$ are both $C^2$ functions. When this is not the case, an application of the periodic Weierstrass Theorem will yield the wanted result. Then, following the same justification as in Section \ref{subsec:outlineandcurrents} we can  write
\begin{align*}\int_{t'}^t L_N<\pi_s^{N},H> ds&=\frac{1}{N^2}\int_{t'}^t ds\sum_{x\in \torus} \tau_x\pa{\sum_{i=1}^2\cro{N\curom_i+\curaom_i}(s)\partial_{u_i,N} G(x/N)+ \tau_x\gamma^{\omega}(s) G(x/N) },
\end{align*}
where the instantaneous currents $\curom$, $\curaom$ and $\gamma^{\omega}$ were introduced in Definition \ref{defi:currents}.

The weakly asymmetric and Glauber contributions are easy to control, since both jump rates $\curaom$ and $\gamma^{\omega}$ can be bounded by a same constant $K$, and we can therefore write 
\begin{align*}\int_{t'}^t \pa{N\genwa+\genis}<\pi_s^{N},H> ds&\leq K \int_{t'}^t ds\frac{1}{N^2}\sum_{x\in \torus}\abs{ G(x/N)}+\sum_{i=1}^2 \abs{\partial_{u_i,N} G(x/N)} \\
&\to_{N\to \infty}K(t-t')\int_{\ctorus} \abs{G(u)}+\sum_{i=1}^2\abs{\partial_{u_i} G(u)}du, 
\end{align*}
which vanishes as soon as $\abs{t'-t}\leq \delta$ in the limit $\delta\to 0$. Finally, 
\begin{align*}Q^N\Bigg(\sup_{\substack{\abss{t-t'} \leq \delta\\
0\leq t',t\leq T}}|<\pi_{t'},H>-&<\pi_{t},H>|>\varepsilon \Bigg)\\
\leq&\;\Prob_{\mu^N}^{\lambda,\beta}\cro{\sup_{\substack{\abss{t-t'} \leq \delta\\
0\leq t',t\leq T}}\abs{\int_{t'}^t N^2\gene<\pi_s^{N},H> ds}>\varepsilon /3}\\
& +\Prob_{\mu^N}^{\lambda,\beta}\cro{\sup_{\substack{\abss{t-t'} \leq \delta\\
0\leq t',t\leq T}}\abs{\int_{t'}^t \pa{N\genwa+\genis}<\pi_s^{N},H> ds}>\varepsilon /3}\\
&+\Prob_{\mu^N}^{\lambda,\beta}\cro{\sup_{\substack{\abss{t-t'} \leq \delta\\
0\leq t',t\leq T}}\abs{M_t^H-M_{t'}^H}>\varepsilon /3}.\end{align*}
The second line of the right-hand side vanishes in the limit $N\to\infty$ then $\delta\to 0$ thanks to the computation above, whereas the third line also vanishes thanks to Markov's inequality and equation \eqref{martcompa}. Finally, the first term  vanishes accordingly to Lemma \ref{lem:compacteness} below and the Markov inequality, thus completing the proof in the case where $H(u,\theta)=G(u)\omega(\theta)$. The general case is derived just after the proof of Lemma \ref{lem:compacteness}.

\begin{lemm}\label{lem:compacteness}
For any function $H(u,\theta)=G(u)\omega(\theta)\in C^{2,0}(\ctorus\times \ctoruspi)$, 
\begin{equation}\label{limzero}\lim_{\delta\to 0}\limN \E^{\lambda,\beta}_{\mu^N}\pa{\sup_{\substack{\abss{{t'}-t} \leq \delta\\
0\leq {t'},t\leq T}}\abs{\int_{t'}^tN^2 \gene<\pi_s^{N},H> ds}}=0.\end{equation}
\end{lemm}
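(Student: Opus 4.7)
The plan is to expand $N^2\gene \langle \pi_s^N, H\rangle$ explicitly via Definition \ref{defi:currents} and separate a deterministically-bounded contribution from a fluctuating remainder. Starting from $\langle \pi_s^N, H\rangle = N^{-2}\sum_x G(x/N)\conf_x^\omega$ and performing a first summation by parts yields
\[N^2 \gene \langle \pi_s^N, H\rangle = \frac{1}{N}\sum_{x\in\torus}\sum_{i=1}^2 \partial_{u_i,N} G(x/N)\,\tau_x \cur_i^{\omega}.\]
I then split each current as
\[\tau_x \cur_i^\omega = (\conf_x^\omega - \conf_{x+e_i}^\omega) + \tau_x R_i, \qquad R_i := \conf_0\conf_{e_i}\bigl(\omega(\theta_{e_i}) - \omega(\theta_0)\bigr),\]
isolating a telescoping ``gradient'' piece from a remainder supported on configurations where both endpoints of the edge are occupied.

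For the gradient piece, a second summation by parts extracts a discrete Laplacian and produces, modulo a vanishing $O(1/N)$ correction, the expression
\[\frac{1}{N^2}\sum_{x,i}\partial^2_{u_i,N} G(x/N)\,\conf_x^\omega,\]
which is bounded deterministically by $C\|G\|_{C^2}\|\omega\|_\infty$ uniformly in $\confhat$ and $N$. Its time-integral over $[t',t]$ with $|t-t'|\le\delta$ is therefore at most $C\delta$, which vanishes as $\delta\to 0$.

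The remainder $\Psi_N^{(2)}(s) := \frac{1}{N}\sum_{x,i}\partial_{u_i,N} G(x/N)\,\tau_x R_i$ is deterministically of order $N$, so genuine cancellations must be exploited. I would apply Proposition \ref{prop:comparisonbetazero} to pass from $\Prob^{\lambda,\beta}_{\mu^N}$ to $\Prob^{\lambda,0}_{\mesref}$, then invoke Lemma \ref{lem:FeynmanKac} to control exponential moments of $\int_{t'}^{t} \Psi_N^{(2)}(s)\,ds$ via the variational formula $\sup_\varphi\{\gamma\Eref(\varphi \Psi_N^{(2)}) - \tfrac12\rdir(\varphi)\}$. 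The crucial observation is that $\tau_x R_i$ is antisymmetric under the swap $\confhat\mapsto\confhat^{x,x+e_i}$, and the invariance of $\mesref$ under this swap yields, for any density $\varphi = \psi^2$,
\[\Eref(\psi^2 \tau_x R_i) = \tfrac12\Eref\bigl[(\psi-\psi^{x,x+e_i})(\psi+\psi^{x,x+e_i})\,\tau_x R_i\bigr].\]
The main obstacle is that $R_i$ is supported on $\{\conf_x = \conf_{x+e_i} = 1\}$, on which the angle-exchange is not a valid single exclusion move, so $(\psi-\psi^{x,x+e_i})^2$ cannot be directly bounded by the edge Dirichlet form $\rdir^{x,x+e_i}(\varphi)$. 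I would overcome this by decomposing the angle-swap into a sequence of exclusion jumps through a nearby empty site, whose existence on typical configurations is guaranteed by the full-cluster estimate of Proposition \ref{prop:fullclusters}. After Cauchy-Schwarz with weight $\eta\sim N$ and summation over $(x,i)$, this yields $\sup_\varphi\{\cdot\} = O(\gamma)$ for $\gamma$ small enough. A union bound over a discretization of $[0,T]$ into $O(\delta^{-1})$ overlapping subintervals then handles the supremum over $|t-t'|\le\delta$ and, combined with Markov's inequality, completes the proof.
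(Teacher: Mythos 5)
The plan goes wrong in the very first step. You decompose $\tau_x j_i^\omega = (\conf_x^\omega - \conf_{x+e_i}^\omega) + \tau_x R_i$ with $R_i = \conf_0\conf_{e_i}(\omega(\theta_{e_i})-\omega(\theta_0))$, and you are then forced to deal with a remainder supported on $\{\conf_0\conf_{e_i}=1\}$, where the swap $\confhat\mapsto\confhat^{0,e_i}$ is not a licit exclusion move, so the Dirichlet form does not see it. But this difficulty is entirely self-inflicted: the \emph{undecomposed} current satisfies $(\curom_{x,x+e_i})^2\leq\|\omega\|_\infty^2\,\1_{\{\conf_x\conf_{x+e_i}=0\}}$, i.e.\ $j_i^\omega$ is itself supported on the set where the exchange \emph{is} a licit jump, and is antisymmetric under it: $\curom_{x,x+e_i}(\confhat^{x,x+e_i})=-\tau_x\curom_i$. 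Thus, after Feynman--Kac, the change of variable $\confhat\to\confhat^{x,x+e_i}$ applied directly to $\Eref(\varphi\,\curom_{x,x+e_i})$ produces the edge Dirichlet form with no further work, giving a largest-eigenvalue bound of order $N^2/|t-t'|$. Your decomposition splits a current that lives on $\{\conf_0\conf_{e_i}=0\}$ into two pieces that individually do not, throwing away precisely the structural property the argument needs.

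Your proposed rescue via Proposition \ref{prop:fullclusters} cannot close this gap. That proposition is an estimate for the process measure $\Prob^{\lambda,\beta}_{\mu^N}$ averaged in space and time; inside the Feynman--Kac variational formula $\sup_\varphi\{\gamma\Eref(\varphi X)-\tfrac12\rdir(\varphi)\}$ the density $\varphi$ is arbitrary and need not concentrate on configurations with nearby vacancies. If you instead split $R_i=R_i\1_{E_p}+R_i\1_{E_p^c}$ before applying Feynman--Kac, the $\1_{E_p^c}$ part contributes a quantity of deterministic order $N$ whose process expectation is $N\cdot o_p(1)$ by Proposition \ref{prop:fullclusters}; this does not vanish as $N\to\infty$ for fixed $p$, and letting $p$ grow with $N$ destroys the constant in the rerouting argument. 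A smaller point: the paper handles the supremum over $|t-t'|\le\delta$ through the Garsia--Rodemich--Rumsey inequality with a carefully chosen $a=a(N,\delta)$ so that the factor $R(\delta)\to\infty$ gives the final $\delta\to 0$ limit; your discretization-plus-union-bound sketch would have to reproduce that degree of freedom, and as written leaves the within-subinterval supremum uncontrolled. The fix is to abandon the decomposition entirely, keep $j_i^\omega$ intact, and follow the swap/Dirichlet-form route, where Proposition \ref{prop:fullclusters} plays no role.
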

\proofthm{Lemma \ref{lem:compacteness}}{The proof of this Lemma follows, with minor adjustments to account for the drift, the proof given in \cite{KLB1999}. First, we get rid of the supremum and come back to the reference measure with fixed parameter $\alpha\in ]0,1[$ thanks to Lemma \ref{lem:FeynmanKac} of Section \ref{subsec:FeynmanKac}.  Let us denote 
\begin{equation}g(t)=\int_0^t N^2\gene<\pi_s^{N},H> ds.\end{equation} 
 We now compare the measure of the active exclusion process to that of the process started from equilibrium ($\mu^N=\mesref$), and with no alignment ($\beta=0$), 
 according to Proposition \ref{prop:comparisonbetazero} with $A=RN^2$ and 
\[X\pa{\confhat^{[0,T]}}=\sup_{\substack{\abss{{t'}-t} \leq \delta\\
0\leq {t'},t\leq T}}\abs{\int_{t'}^t N^2\gene<\pi_s^{N},H> ds}=\sup_{\substack{\abss{{t'}-t} \leq \delta\\
0\leq {t'},t\leq T}}\abss{g(t)-g({t'})}.\]
This yields that for some constant $K_0>0$, the expectation in equation \eqref{limzero} is bounded from above for any positive $R$ by 
\begin{equation}\label{entropy}\frac{1}{R N^2 }\cro{K_0N^2+\log\E_{\mesref}^{\lambda, 0}\exp\pa{R N^2\sup_{\substack{\abss{{t'}-t} \leq \delta\\
0\leq {t'},t\leq T}}\abss{g(t)-g({t'})}}}.\end{equation}
We therefore reduce the proof of Lemma \ref{lem:compacteness} to showing that
\begin{equation}\label{entropybeta2}\lim_{ \delta\to 0}\limN\frac{1}{R(\delta) N^2 }\log\E_{\mesref}^{\lambda,0}\exp\pa{R(\delta) N^2\sup_{\substack{\abss{{t'}-t} \leq \delta\\
0\leq {t'},t\leq T}}\abss{g(t)-g({t'})}}=0,\end{equation}
where $R(\delta)$ goes to $\infty$ as $\delta$ goes to $ 0$.

 Let $p$ and $\psi$ be two strictly increasing functions such that $\psi(0)=p(0)=0$ and $\psi(+\infty)=+\infty$, {with $\psi$ continuous}, we denote 
\[I=\int_{[0,T]\times [0,T]}\psi\pa{\frac{\abss{g(t)-g({t'})}}{p(\abss{{t'}-t})}}dt'dt,\]
the Garsia-Rodemich-Rumsey  inequality \cite{GRR1978} yields  that 
\begin{equation}\label{GRR1}\sup_{\substack{\abss{{t'}-t} \leq \delta\\
0\leq {t'},t\leq T}}\abss{g(t)-g({t'})}\leq 8\int_0^{\delta}\psi^{-1}\pa{\frac{4I}{u^2}}p(du).\end{equation}
Given any positive $a$, we choose $p(u)=\sqrt{u}$ and $\psi(u)=\exp(u/a)-1$, hence $\psi^{-1}(u)=a\log(1+u)$. An integration by parts yields {for any $\delta<e^{-2}$} that
\begin{align}\label{GRR2}\int_0^{\delta}\psi^{-1}\pa{\frac{4I}{u^2}}p(du)&=a\int_0^{\delta}\log\pa{1+\frac{4I}{u^2}}\frac{du}{2\sqrt{u}}\nonumber\\
&= a\sqrt{\delta}\log\pa{1+4I\delta^{-2}}+a\int_0^{\delta}\frac{8I}{u^3+4Iu}\sqrt{u}du \nonumber\\
&\leq a\sqrt{\delta}\log\pa{1+4I\delta^{-2}}+a\int_0^{\delta}\frac{2}{\sqrt{u}}du \nonumber\\
&=a\sqrt{\delta}\cro{\log\pa{\delta^{2}+4I}-2\log\delta+4}\nonumber\\
&\leq a\sqrt{\delta}\cro{-\frac{\log\delta}{2}\log\pa{\delta^{2}+4I}-4\log\delta}\nonumber\\
&\leq a\sqrt{\delta}\cro{-4\log\delta\log\pa{\delta^{2}+4I}-4\log\delta},
\end{align}
since {by assumption} $-\log(\delta)>2$. From equations \eqref{GRR1} and \eqref{GRR2} we deduce that 
\begin{align*}\log\E_{\mesref}^{\lambda, 0}\exp\pa{R N^2\sup_{\substack{\abss{{t'}-t} \leq \delta\\
0\leq {t'},t\leq T}}\abss{g(t)-g({t'})}}&\leq \log \E_{\mesref}^{\lambda, 0} \exp \pa{-32 a R N^2\sqrt{\delta}\log\delta\cro{1 +\log\pa{\delta^{2}+4I+1}}}\end{align*}
holds for any $a>0$.
For $ \delta<1$, Let us choose $ a=-(32R N^2\sqrt{\delta}\log\delta)^{-1}>0$, we can write for the second term of \eqref{entropy} the upper bound
\begin{align*}\frac{1}{R N^2}\log\E_{\mesref}^{\lambda, 0}\exp\pa{R N^2\sup_{\substack{\abss{{t'}-t} \leq \delta\\
0\leq {t'},t\leq T}}\abss{g(t)-g({t'})}}\leq \frac{1}{R N^2}\cro{1+ \log\pa{1+\delta^{2} +4\Egcm\pa{I}}}.\end{align*}
By definition, 
\[I=\int_{[0,T]\times [0,T]}\exp\pa{\frac{\abs{\int_{t'}^t N^2 \gene<\pi_u^{N},H> du}}{a\sqrt{\abss{t-t'}}}}dt'dt -T^2. \]
Let us assume, purely for convenience, that $T>1/2$, for $\delta$ sufficiently small, we have $4T^2-1-\delta^2>0$, 
and the quantity inside the limit in equation \eqref{entropybeta2} can be estimated by
\begin{multline} \label{majoration}\frac{1}{R N^2}\log\E_{\mesref}^{\lambda, 0}\exp\pa{R N^2\sup_{\substack{\abss{{t'}-t} \leq \delta\\
0\leq {t'},t\leq T}}\abss{g(t)-g({t'})}}\\
\leq \frac{1}{R N^2}\cro{1+\log4\E_{\mesref}^{\lambda, 0}\cro{\int_{[0,T]\times [0,T]}\exp\pa{\frac{\abs{\int_{t'}^t N^2 \gene<\pi_s^{N},H> ds}}{a\sqrt{\abss{{t'}-t}}}}d{t'}dt}}.
\end{multline}
If $T\leq1/2$, we simply carry out a constant term in the $\log $ above, which does not alter the proof.

Let us  take a look at the two constants $a$ and $R$. Noting the first bound on the entropy mentioned earlier, in order to keep the first term of \eqref{entropy} in check, 
$R=R(\delta)$ must simply grow to $\infty$. Furthermore, we previously obtained that $a=-(R N^232\sqrt{\delta}\log\delta)^{-1}$, we can choose $a=N^{-2}$, 
thus $R=(-1/32\sqrt{\delta}\log\delta)^{-1}$, which is non-negative, and goes to $\infty$ as $\delta\to 0^+$. Therefore, the second term above can be rewritten 
\begin{align*} 
\frac{1}{R N^2}\log\int_{[0,T]\times [0,T]}4\E_{\mesref}^{\lambda, 0}\exp\cro{\abs{\int_{t'}^t { \frac{N}{\abss{{t'}-t}^{1/2} }\sum_{\substack {x\in \torus\\{i=1,2}}} \curom_{x,x+e_i}(s)\partial_{u_i,N} G(x/N) ds}}d{t'}dt}.
\end{align*}
In order to estimate the expectation  above, we can get rid of the absolute value, since $e^{\abss{x}}\leq e^x+e^{-x}$, and since the function $G$ is taken in a symmetric class of functions. Furthermore, Lemma \ref{lem:FeynmanKac}, applied with $\gamma=1$ yields that the second term in the right-hand side of \eqref{majoration} is less than
\begin{equation}
\label{boundComp}
\frac{1}{R N^2}\log\int_{[0,T]\times [0,T]}\exp\cro{\frac{(t-t')}{2}\cro{4\lambda^2N^2+\nu_N{(G,\omega)}}}dt dt',
\end{equation}
where ${\nu_N(G,\omega)}$ is the largest eigenvalue in $L^2(\mesref)$ of the self-adjoint operator  
\[N^2\gene+  \frac{2N}{\abss{{t'}-t}^{1/2} }\sum_{\substack{x\in \torus\\ {i=1,2}}} \curom_{x,x+e_i}\partial_{u_i,N} G(x/N) ,\]
which can be rewritten as the variational formula
\begin{equation}\label{varfor}{\nu_N(G,\omega)}=\sup_{f}\left\{ \frac{2N}{\abss{{t'}-t}^{1/2} }\sum_{\substack {x\in \torus\\{i=1,2}}}\partial_{u_i,N} G(x/N)\Eref\pa{f \curom_{x,x+e_i} }-N^2\rdir(f)\right\},\end{equation}
where the supremum is taken on all densities $f$ w.r.t. $\mesref$.
 In order to prove that the eigenvalue above is of order $N^2$, we now want to transform 
 \[ \frac{N}{\abss{{t'}-t}^{1/2} }\sum_{x\in \torus}\partial_{u_i,N}G(x/N)\Eref\pa{f \curom_{x,x+e_i} }.\]
For any density $f$, and ${i=1,2}$, since $\curom_{x,x+e_i}(\confhat^{x,x+e_i})=-\tau_x{\curom_i}$, we can write 
\begin{align*}\Eref\pa{f \curom_{x,x+e_i} }\partial_{u_i,N}G(x/N)=&-\frac{1}{2}\Eref\cro{(f(\confhat^{x,x+e_i})-f)\curom_{x,x+e_i}}\partial_{u_i,N}G(x/N)\\
\leq&\frac{1}{4C}\Eref\pa{(\curom_{x,x+e_i})^2\pa{\sqrt{f}(\confhat^{x,x+e_i})-\sqrt{f}}^2}\\
&+\frac{C}{4}(\partial_{u_i,N}G(x/N))^2\Eref\pa{\pa{\sqrt{f}(\confhat^{x,x+e_i})+\sqrt{f}}^2}.\end{align*}
Since $(\curom_{x,x+e_i})^2\leq \norm{\omega}_{\infty}^2\1_{\conf_x \conf_{x+e_i}=0}$, and since $\cro{\sqrt{f}(\confhat^{x,x+e_i})+\sqrt{f}}^2\leq 2f(\confhat^{x,x+e_i})+2f$, we obtain the upper bound
\[\frac{N}{\abss{{t'}-t}^{1/2}}\sum_{x\in \torus}\partial_{u_i,N}G(x/N)\Eref\pa{f \curom_{x,x+e_i} }\leq \frac{N\norm{\omega}^2_{\infty}}{2C\abss{{t'}-t}^{1/2} }\rdir(f)+\frac{N^3C}{\abss{{t'}-t}^{1/2}}\norm{\partial_{u_i} G}^2_{\infty},\]
which holds for any positive $C$. We now set $C= \abss{{t'}-t}^{-1/2}\norm{\omega}^2_{\infty}/{N }$ so that the Dirichlet form contributions in the variational formula \eqref{varfor} cancel out. 
 We finally obtain that for some positive constant ${C_1=C_1(G, \omega)}$, independent of $N$, 
\begin{equation*}\nu_N(G,{\omega})\leq \frac{{C_1}N^2}{\abss{t-t'}},\end{equation*}
which yields that \eqref{boundComp} vanishes in the limit $N\to \infty$ and $\delta\to 0$, since $R=R(\delta)$ goes to $\infty$ as $\delta$ goes to $0$. Finally, we have proved thanks to equation \eqref{majoration} that \[\lim_{\delta\to 0}\limN\frac{1}{R N^2}\log\E_{\mesref}^{\lambda, 0}\pa{\exp\cro{R N^2\sup_{\substack{\abss{{t'}-t} \leq \delta\\
0\leq {t'},t\leq T}}\abss{g(t)-g({t'})}}}=0,\] which concludes the proof of Lemma \eqref{lem:compacteness}.}

In order to complete the proof of Proposition \ref{prop:compactnessQN}, we still have to consider the case when $H$ does not take a product form $G(u)\omega(\theta)$. In this case, since $H$ is smooth it can be approximated by a trigonometric polynomial in $u_1$, $u_2$ and $\theta$. Each term of the approximation is then of the form $G(u)\omega(\theta)$, and the previous result can therefore be applied. More precisely, consider a smooth function $H$, and for any $\alpha>0$, there exists a finite family $(p^{\alpha}_{ijk})_{0\leq i,j,k\leq M_{\am}}$ of coefficients such that 
\[\sup_{\substack{u\in \ctorus,\\ 
\theta\in \ctoruspi}}\abs{H(u,\theta)\;-\sum_{i,j,k\in \llbracket 0, M\rrbracket}p^{\alpha}_{ijk} u_1^i u_2^j\theta^k}\leq \alpha.\] 
Let us now fix an $\varepsilon>0$, and let us take $\alpha=\varepsilon/4$. Then, considering the corresponding family  $P_{ijk}(u, \theta)=p^{\alpha}_{ijk} u_1^i u_2^j\theta^k$ we have that  
\[\abs{<\pi^N_{t'},H>-<\pi^N_t,H>}\leq\abs{<\pi^N_{t'}-\pi^N_t,H-\sum_{i,j,k\leq M_{\am}}P_{ijk}>}+\sum_{i,j,j\leq M_{\am}}\abs{<\pi^N_{t'}-\pi^N_t,P_{ijk}>}.\]
Since we allow at most 1 particle per site, and since $H-\sum_{i,j,k\leq M_{\am}}P_{ijk}$ is smaller than $\varepsilon/4$, the first term of the right-hand side above is less than $\varepsilon/2$. From this, we deduce that for the left-hand side to be greater than $\varepsilon$, one of the terms $\abs{<\pi^N_{t'},P_{ijk}>-<\pi^N_t,P_{ijk}>}$ must be larger than $\varepsilon /2M_{\am}^{3}$. This yields that 
\begin{multline*} Q^N\pa{\sup_{\substack{\abs{s-t} \leq \delta\\
0\leq {t'},t\leq T}}\abs{<\pi_{t'},H>-<\pi_t,H>}>\varepsilon}\\
\leq \sum_{i,j,k\leq M_{\am}} Q^N\pa{\sup_{\substack{\abss{{t'}-t} \leq \delta\\
0\leq {t'},t\leq T}}\abs{<\pi_{t'},P_{ijk}>-<\pi_t,P_{ijk}>}>\frac{\varepsilon}{2M_{\am}^{3}}}.\end{multline*}
Since $\alpha$ is fixed, we can now take the limit $N\to \infty$ then $\delta\to 0$, in which the right-hand side vanishes since all functions are decorrelated in $u$ and $\theta$. The result thus holds for any smooth function $H$, thus completing the proof of Proposition \ref{prop:compactnessQN}.
}

We now prove that in the limit, the empirical measure of our process admits at any fixed time a density w.r.t. the Lebesgue measure on $\ctorus$.
\begin{lemm}\label{lem:Lebesguedensity}
\index{$ Q^*$ \dotfill a limit point of the sequence $(Q^N)_{N\in \N}$ }
Any limit point $Q^*$ of the sequence $Q^N$ is concentrated on measures $\pi\in \meas^{{[0,T]}}$ with time marginals absolutely continuous w.r.t the Lebesgue measure on $\ctorus$,
\[Q^*\pa{\pi, \;\pi_t(du, d\theta)=\boldsymbol \dens_t(u,  d\theta)du,  \quad\forall t\in[0,T]}=1.\]
\end{lemm}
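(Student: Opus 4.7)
The plan is to exploit the exclusion rule, which constrains the local density of particles to be at most one per site and therefore makes the $u$-marginal of $\pi^N_t$ pathwise dominated by a discrete approximation of the Lebesgue measure on $\ctorus$. First, for any non-negative continuous $G:\ctorus\to \R_+$, the pathwise bound
\[<\pi^N_t, G\otimes \1> \;=\; \frac{1}{N^2}\sum_{x\in \torus}\conf_x(t)G(x/N) \;\leq\; \frac{1}{N^2}\sum_{x\in \torus}G(x/N)\;\xrightarrow[N\to\infty]{}\; \int_{\ctorus} G(u)du\]
holds uniformly in $t$ (using $G\otimes \1$ to denote the $\theta$-independent extension). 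Integrating in time, $\int_a^b <\pi^N_t, G\otimes \1>dt \leq (b-a)\int_\ctorus G\, du + o_N(1)$ with deterministic error. Since $G$ is continuous, $\pi\mapsto \int_a^b <\pi_t, G\otimes \1>dt$ is Skorohod-continuous on $\Mhat$ (integrating over time removes the time-change ambiguity), so passing to the limit along the subsequence $Q^{N_k}\Rightarrow Q^*$ and applying Portmanteau to the closed half-line $(-\infty, 0]$ yields $\int_a^b <\pi_t, G\otimes \1>dt \leq (b-a)\int_\ctorus G\,du$ $Q^*$-almost surely. Letting $G$ range over a countable dense family $(G_k)_{k\in \N}\subset C(\ctorus, \R_+)$ and $a<b$ over the rationals in $[0,T]$, these inequalities hold simultaneously on a single $Q^*$-full event $\Omega_0$.

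The main step is to upgrade this time-integrated bound to a pointwise-in-$t$ statement. On $\Omega_0$, Lebesgue differentiation gives, for each $k$ and Lebesgue-a.e. $t\in [0,T]$, $<\pi_t, G_k\otimes \1>\leq \int_\ctorus G_k\, du$. The residual null set is removed using the càdlàg regularity of $t\mapsto <\pi_t, G_k\otimes \1>$: any $t\in [0, T)$ can be approached from the right by Lebesgue points, and right-continuity transfers the bound; the endpoint $t=T$ is handled directly since Skorohod convergence in $D([0,T],\meas(\ctorus\times\ctoruspi))$ pins the endpoints, yielding $\pi^{N_k}_T\to \pi_T$ weakly and hence $<\pi_T, G_k\otimes \1>\leq \int G_k\, du$ as well. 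Density of $(G_k)_{k\in \N}$ in $C(\ctorus, \R_+)$ then upgrades the inequality to any non-negative continuous $G$. Consequently, on $\Omega_0$ and for every $t\in [0,T]$, the $u$-marginal of $\pi_t$ is dominated by the Lebesgue measure on $\ctorus$.

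Finally, the disintegration theorem provides, for each $t$, a density $\rho_t(u)\in [0,1]$ of this marginal with respect to Lebesgue and a measurable family of probability measures $\{\nu_{t,u}\}_{u\in \ctorus}$ on $\ctoruspi$ such that
\[\pi_t(du, d\theta) = \nu_{t,u}(d\theta)\,\rho_t(u)\,du.\]
Setting $\boldsymbol \dens_t(u, d\theta):=\rho_t(u)\,\nu_{t,u}(d\theta)$ then produces a density profile on the torus in the sense of Definition \ref{def:densityprofile} (a positive measure on $\ctoruspi$ of total mass $\rho_t(u)\in [0,1]$), yielding the claimed representation. The hard part is really the uniform-in-$t$ upgrade in step two, since the evaluation map $\pi\mapsto \pi_t$ fails to be Skorohod-continuous at jump times, which is why the argument must be routed through time-integrated quantities combined with càdlàg regularity rather than applied at each fixed $t$ in isolation.
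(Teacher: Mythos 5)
Your proof is correct, but it takes a longer route than the paper's. You correctly diagnose that the evaluation $\pi\mapsto\langle\pi_t,H\rangle$ at a fixed time $t$ is not Skorohod-continuous, and you therefore detour through time-integrated functionals $\pi\mapsto\int_a^b\langle\pi_t,G\otimes\1\rangle\,dt$, which are continuous, then upgrade back to pointwise-in-$t$ via Lebesgue differentiation together with right-continuity and the pinned right endpoint. The paper sidesteps the whole upgrade step by a single observation: the map $\pi\mapsto\sup_{0\leq t\leq T}\abs{\langle\pi_t,H\rangle}$ \emph{is} Skorohod-continuous, because the supremum over $t$ is invariant under the time reparametrisations $\kappa$ that appear in the Skorohod metric (a bijection of $[0,T]$ does not change the sup), and the remaining discrepancy is controlled uniformly in $t$ by $\delta(\pi^n_t,\pi_{\kappa_t})$ since all measures involved have total mass at most $1$ and $H$ can be approximated in sup-norm by the dense family defining $\delta$. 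Passing this bound to the limit $Q^*$ by Portmanteau then gives $\sup_{t}\abs{\langle\pi_t,H\rangle}\le\int_{\ctorus}\abss{H}\,du$ on a full event, simultaneously over a countable dense family, and the domination by Lebesgue measure on $\ctorus$ for every $t$ follows at once. Your argument buys nothing extra here but is a perfectly valid fallback if one did not notice the sup-invariance; the paper's argument is shorter and worth internalizing as the standard trick for uniform-in-time statements under Skorohod convergence.
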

\proofthm{Lemma \ref{lem:Lebesguedensity}}{For any smooth function $H\in C(\ctorus)$ configuration $\confhat$ in $\statespace$ and any corresponding empirical measure $\pi^N$, we have \[\abs{<\pi^N,H>}=\abs{\frac{1}{N^2}\sum_{x\in \torus}H(x/N)\conf_x}\leq\frac{1}{N^2}\sum_{x\in \torus}\abss{H(x/N)}.\]
The right-hand side above converges as $N$ goes to $\infty$ towards $\int_{\ctorus}\abss{H(u)}du$. Since for any fixed function $H$, the application \[ \pi\mapsto \sup_{0\leq t\leq T}\abs{<\pi_t,H>}\]
is continuous, any limit point $Q^*$ of $(Q^N)_N$ is concentrated on trajectories $\pi$ such that 
\[\sup_{{0\leq t\leq T}}\abs{<\pi_t,H>}\leq\int_{\ctorus}\abss{H(u)}du,\]
for any smooth function $H$ on $\ctorus$, and  therefore is absolutely continuous w.r.t. the Lebesgue measure on $\ctorus$.
}

\subsection{Regularity of the density and energy estimate}
\label{subsec:regularity}

\intro{In this section we prove {that the macroscopic particle density is regular enough for the weak hydrodynamic limit 
\eqref{equadiff} to be well defined, i.e. that criterion iii) of Definition \ref{defi:weaksol} is satisfied. 
The proof follows the same strategy as in \cite{KLB1999}, we give it for exhaustivity.}}

{

Due to the non-constant diffusion coefficients, the second derivative in equation \eqref{equadiff} cannot be applied to the test function, and we need, according to condition $iii)$ of Definition \ref{defi:weaksol}, 
to prove that the macroscopic profiles of our particle system are such that $\nabla\rho$ is well-defined. We can now state the following result.
\begin{theo}
\label{thm:regularity}Any limit point  $Q^*$  of the measure sequence $(Q^N)_N$ is concentrated on trajectories with 
$\rho_t(u)\in H_1=W^{1,2}([0,T]\times \ctorus)$. In other words, $Q^*$-a.s., there exists functions 
$\partial_{u_i}\rho_t(u)$ in $L^2([0,T]\times \ctorus)$ such that for any smooth function $H\in C^{0,1}([0,T]\times \ctorus)$
\index{$\partial_{u_i} $ \dotfill $i$-th continuous space derivative}
\begin{equation}\label{IPPsobolev}\iint_{[0,T]\times \ctorus}\rho_t(u)\partial_{u_i}H_t(u)dudt=-\iint_{[0,T]\times\ctorus} H_t(u)\partial_{u_i}\rho_t(u)dudt\end{equation}
Furthermore, there exists a constant $K=K(T, \lambda, \beta, {\widehat{\zeta}})$ such that for any limit point $Q^*$ of $ (Q^N)$, and for any $i$,
\begin{equation}\label{Enest}\E_{Q^*}\pa{\iint_{[0,T]\times\ctorus}[\partial_{u_i}\rho_t(u)]^2dudt}<K.\end{equation}
In particular, any such limit point $Q^*$ is concentrated on measures satisfying condition $iii)$ of Definition \ref{defi:weaksol}.
\end{theo}
The proof is postponed to the end of this section. The usual argument to prove this result is Riesz representation theorem, 
that yields that if
\[\iint_{[0,T]\times\ctorus}\rho_t(u)\partial_{u_i}H_t(u)dudt \leq C{\pa{\int_{[0,T]\times \ctorus}H^2}^{1/2}}\] 
for any $H$, there exists a function $\partial_{u_i}\rho\in L^2(  [0,T]\times \ctorus)$ such that \eqref{IPPsobolev} holds. 
For that purpose, we need the estimate given in Lemma \ref{lem:H1lemma1} below. Fix a direction $i\in\{1,2\}$, for any $x\in \torus$,  
shorten  $x_k=x+ke_i$, $k\in \{0,\dots,\varepsilon N\}$. Following the strategy of the energy estimate of \cite{KLB1999}, 
and recalling that $\tau_x\rho_{\delta N}$ is the empirical particle  density in  $B_{\delta N}(x)$, we let 
\begin{align*}
W_{N,i}(\varepsilon,\delta, H,\conf)&=\frac{1}{N^2}\sum_{x\in \torus}H(x/N)\pa{\frac{1}{\varepsilon }\cro{\tau_{x+\varepsilon N e_i}\rho_{\delta N}-\rho_{\delta N}}-H(x/N)}
.\end{align*}
Note that to emphasize that this quantity does not depend on the angles, we denote its third variable as $\conf$ instead of $\confhat$.

\begin{lemm}\label{lem:H1lemma1}Let $\{H^l, l\in \N\}$ be a dense sequence in the separable algebra $C^{0,1}([0,T]\times \ctorus)$ endowed with the norm 
$\norm{H}_{\infty}+\sum_{i=1}^2\norm{\partial_{u_i} H}_{\infty}$. For any ${i}=1,2,$  there exists a positive constant 
 $K=K(T,\lambda, \beta, {\widehat{\zeta}})$ such that for any $k\geq 1$ and $\varepsilon>0$, 
\[\limsup_{\delta\to 0}\limsup_{N\to\infty}\E_{\mu^N}^{\lambda, \beta}\pa{\max_{1\leq l\leq k}\int_0^TW_{N,i}(\varepsilon,\delta, H^l_t,\conf(t) )dt}\leq K_0.\]
\end{lemm}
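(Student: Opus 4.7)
The plan is to follow the classical energy estimate strategy (in the spirit of KLB, Chapter~7): use the entropy inequality to reduce to a convenient reference measure, apply a Feynman--Kac-type bound to convert the expectation of a time integral into a variational problem, and finally bound the variational problem by exploiting the antisymmetry of currents under particle swaps. The cutoff $\1_{E_p}$ that is hidden in $F^{\omega,p}$ and Proposition~\ref{prop:fullclusters} will be crucial to make the non-gradient replacement of discrete gradients by local currents legitimate. I treat the $V_{N,i}$ statement in detail; the case of $W_{N,i}$ is strictly easier since it is angle-blind and the exclusion rate $\conf_x(1-\conf_{x+e_i})$ gives the $\rho(1-\rho)$ penalty directly.

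\textbf{Step 1: Entropy inequality and max reduction.} First I would use the elementary bound $\max_{1\leq l\leq k} a_l \leq \gamma^{-1}\log\sum_l e^{\gamma a_l}$ for any $\gamma>0$ to get rid of the maximum inside the expectation at the cost of an additive $\gamma^{-1}\log k$ term. Applying Proposition~\ref{prop:comparisonbetazero} with $A=N^2$ then reduces the problem to estimating
\[
\frac{1}{N^2}\log \E^{\lambda,0}_{\mesref}\!\left[\exp\!\left(N^2\int_0^T V_{N,i}(A_0,\varepsilon,\delta,H^l_t,\confhat(t))\,dt\right)\right]
\]
uniformly in $l\leq k$, plus the contribution $K_0+ N^{-2}\log k$, which vanishes in the limit for any fixed $k$.

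\textbf{Step 2: Feynman--Kac and variational formula.} Now I would apply Lemma~\ref{lem:FeynmanKac} with $\gamma=1$ and $X_t=V_{N,i}(A_0,\varepsilon,\delta,H^l_t,\cdot)$. This bounds the previous quantity by $2T\lambda^2+\int_0^T \sup_\varphi\bigl\{\Eref(\varphi\, V_{N,i}^l)-\tfrac12\rdir(\varphi)\bigr\}\,dt$, where the supremum runs over densities $\varphi$ with respect to $\mesref$. Everything now reduces to a static estimate: find $A_0=A_0(p)$ so that
\[
\sup_\varphi\Bigl\{\Eref\bigl(\varphi\,V_{N,i}(A_0,\varepsilon,\delta,H,\confhat)\bigr)-\tfrac12\rdir(\varphi)\Bigr\}\;\leq\;C(H)
\]
with $C(H)$ depending only on $\norm{H}_\infty$ and finitely many derivatives of $H$, uniformly in $\varepsilon,\delta,N$.

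\textbf{Step 3: Telescoping, currents, and the cutoff.} This is the heart of the argument and the main obstacle. I would write the discrete gradient as a telescoping sum
\[
\frac{1}{\varepsilon}\bigl[F^{\omega,p}(\tau_{x+\varepsilon N e_i}\dens_{\delta N})-F^{\omega,p}(\tau_x\dens_{\delta N})\bigr]
=\frac{1}{\varepsilon}\sum_{k=0}^{\varepsilon N-1}\bigl[F^{\omega,p}(\tau_{x_{k+1}}\dens_{\delta N})-F^{\omega,p}(\tau_{x_k}\dens_{\delta N})\bigr],
\]
and then, using the one-block estimate (Lemma~\ref{lem:OBE}) applied to the cylinder function $\com_0\1_{E_{p,0}}$, replace each $F^{\omega,p}(\tau_{x_k}\dens_{\delta N})$ by the spatial average of $\tau_y(\com_0\1_{E_{p,0}})$ over $y\in B_{\delta N}(x_k)$. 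Once this replacement is performed, a further discrete summation by parts in the $\delta N$-averaging rewrites the increment between $x_k$ and $x_{k+1}$ as an average of microscopic currents $\tau_y \curom_i\cdot\1_{E_{p,y}}$ along the edges crossing the boundary of $B_{\delta N}(x_k)$. The cutoff $\1_{E_p}$ is precisely what gives us a bounded rate, since Lemma~\ref{lem:twoemptysites} ensures that on $E_p$ the configuration is in the irreducible regime. On each such current I use the antisymmetry identity
\[
\Eref\bigl(\varphi\,\tau_y \curom_i\1_{E_{p,y}}\bigr)
=-\tfrac12\Eref\bigl[(\varphi(\confhat^{y,y+e_i})-\varphi)\,\tau_y \curom_i\1_{E_{p,y}}\bigr],
\]
and apply the elementary inequality $2ab\leq \beta a^2+\beta^{-1}b^2$ with $a=\sqrt\varphi(\confhat^{y,y+e_i})-\sqrt\varphi$ and a well-chosen $\beta$ proportional to $H(x/N)$. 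This yields a sum of two terms: a Dirichlet-form contribution $\tfrac12\rdir(\varphi)$, which cancels the one in the variational formula, and a quadratic penalty of the order of $A_0 H(x/N)^2\cdot\Eref(\varphi\conf_{y_k})$, which, by the one-block estimate again, is exactly matched to the second term in the definition of $V_{N,i}$ once $A_0$ is chosen large enough depending on $p$ and on $\norm{\omega}_\infty$.

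\textbf{Main difficulty.} The delicate point is the combinatorial bookkeeping of the telescoping/averaging procedure on two scales ($\delta N$ mesoscopic and $\varepsilon N$ macroscopic) while keeping the error terms uniform in $N$, and ensuring that the cutoff $\1_{E_p}$ and the one-block estimate interact correctly with the Cauchy--Schwarz step so that the constant $A_0$ only depends on $p$ (and not on $H,\varepsilon,\delta$). For $W_{N,i}$ the same scheme applies verbatim with $\com_0\1_{E_{p,0}}$ replaced by $\conf_0$ and with the weighted current $\curom_i$ replaced by the plain current $\cur_i=\conf_0-\conf_{e_i}$; in this case the squared-current bound $\cur_i^2\leq \conf_y(1-\conf_{y+e_i})+\conf_{y+e_i}(1-\conf_y)$ produces the $\rho_{\delta N}(1-\rho_{\delta N})$ penalty that appears in $W_{N,i}$, and the constant can be taken to be $A_0=2$ as written. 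Once the static estimate is established, choosing $\gamma$ independent of $k$ and letting $N\to\infty$ then $\delta\to 0$ concludes the proof with $K_0=K_0(T,\lambda,\beta,\boldsymbol\dens_0)$.
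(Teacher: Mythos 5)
There is a real gap in the order of operations, and a second one in the change of variable used to handle the cutoff $\1_{E_p}$.

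Your plan puts the two-scale replacement (``replace $F^{\omega,p}(\tau_{x_k}\dens_{\delta N})$ by the spatial average of $\tau_y(\com_0\1_{E_{p,0}})$'') in Step~3, \emph{after} the entropy inequality and the Feynman--Kac reduction. At that point you are no longer estimating an expectation under the trajectory measure $\Prob_{\mu^N}^{\lambda,\beta}$, but a variational quantity $\sup_{\varphi}\{\Eref(\varphi\cdot\;)-\tfrac12\rdir(\varphi)\}$ over arbitrary densities $\varphi$. The one-block/two-block Lemmas~\ref{lem:OBE}, \ref{lem:TBE} require \emph{both} an entropy bound $H(f)\leq K_0N^2$ \emph{and} a Dirichlet form bound, and they are stated for expectations, not for this variational problem, so they do not apply there. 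The paper's proof avoids this by invoking the Replacement Lemma~\ref{lem:replacementlemma} \emph{first}, at the level of $\E_{\mu^N}^{\lambda,\beta}$, to pass from $V_{N,i}$ to its microscopic counterpart $\widetilde V_{N,i}$ (which has $\com_x\1_{E_{p,x}}$ in place of $F^{\omega,p}(\tau_x\dens_{\delta N})$ and drops the $\delta$ limit), and only then applies Proposition~\ref{prop:comparisonbetazero} and Lemma~\ref{lem:FeynmanKac}; the remaining static estimate is the separate Lemma~\ref{lem:H1lemma2}. To make your scheme work you would need to reverse the order of Steps~1--2 and~3.

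The second problem concerns the ``antisymmetry identity.'' Once the replacement is done, the quantity to be handled is the cutoff gradient $\com_{x_{k+1}}\1_{E_{p,x_{k+1}}}-\com_{x_k}\1_{E_{p,x_k}}$ --- this is not a current $\curom_i$, and the naive swap $\confhat\mapsto\confhat^{y,y+e_i}$ does \emph{not} exchange $\com_{x_k}\1_{E_{p,x_k}}$ with $\com_{x_{k+1}}\1_{E_{p,x_{k+1}}}$, because the event $E_{p,x_k}$ lives on the box $B_p(x_k)$ and is not a function of $\confhat_{x_k},\confhat_{x_{k+1}}$ alone. The paper must introduce the more elaborate map $T_{i,p}^{x}$ (which, in addition to swapping $x$ and $x+e_i$, swaps the $2p+1$ boundary pairs $y_m\leftrightarrow z_m$) precisely to realize $\com_x\1_{E_{p,x}}(T_{i,p}^x\confhat)=\com_{x+e_i}\1_{E_{p,x+e_i}}(\confhat)$. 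This multi-jump change of variable is why the resulting Dirichlet-form contribution is not a single edge term but a sum over $O(p^2)$ edges controlled, via Lemma~\ref{lem:twoemptysites}, by the constant $C_p'$; and this is where the $p$-dependence of $A_0$ actually comes from. Your description also has the wrong order of magnitude for the parameter in the $2ab\leq\beta a^2+\beta^{-1}b^2$ step: it must be taken of size $\sim 1/N$ (specifically $\gamma=C_p'/N$ in the paper's notation), not proportional to $H(x/N)$, to make the Dirichlet-form terms cancel against the $\rdir(\varphi)$ in the variational formula.

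The surrounding architecture (entropy inequality, Feynman--Kac, telescoping, Cauchy--Schwarz, $W_{N,i}$ being easier because $\conf_{x+e_i}-\conf_x$ vanishes on full configurations and the plain swap $\confhat^{x,x+e_i}$ suffices) is in accord with the paper's approach, so the two issues above are the parts to fix.
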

\proofthm{Lemma \ref{lem:H1lemma1}}{
By the replacement Lemma  \ref{lem:replacementlemma}, it is sufficient to show the result above without the limit in $\delta$, 
and with $\widetilde{W}_{N,i}(\varepsilon, H,\conf)$ instead of $W_{N,i}$, where
\begin{align*}\widetilde{W}_{N,i}(\varepsilon, H,\conf)&=\frac{1}{N^2}\sum_{x\in \torus}H(x/N)\pa{\frac{1}{\varepsilon }
\cro{\conf_{x+\varepsilon N e_i}-\conf_x}-H(x/N))}\\
&=\frac{1}{N^2}\sum_{x\in \torus}H(x/N)\frac{1}{\varepsilon N }\sum_{k=0}^{\varepsilon N-1}
\Big[N(\conf_{x_{k+1}}-\conf_{x_k})- H(x/N)\Big].\end{align*}

Applying Proposition \ref{prop:comparisonbetazero} to $A=N^2$ and 
\[X\pa{\confhat^{[0,T]}}=\max_{1\leq i\leq k}\int_0^T\widetilde{W}_{N,i}(\varepsilon, {H_t^l},\conf(t))dt,\]
the contribution of the Glauber dynamics and the initial measure can be compared to the case $\beta=0$ started from $\mesref$, 
\begin{multline*}\E_{\mu^N}^{\lambda, \beta}\pa{\max_{1\leq l\leq k}\int_0^T\widetilde{W}_{N,i}(\varepsilon, H^l_t,\conf(t))dt}\\
\leq K_0(T, \beta, {\widehat{\zeta}})+ \frac{1}{N^2}\pa{\log\E_{\mesref}^{\lambda,0}\cro{\exp\pa{N^{2}\max_{1\leq l\leq k}\int_0^T\widetilde{W}_{N,i}(\varepsilon, H^l_t,\conf(t))dt}}}.\end{multline*}
The $\max$ can be taken out of the $ \log$ in the second term because for any finite family $(u_l)$, 
\[\exp\pa{\max_l u_l}\leq \sum \exp u_l \eqand  \limN N^{-2}\log\pa{\sum_l u_{l,N}}\leq \max_l \limN N^{-2}\log u_{N,l}.\]
Furthermore, we apply Lemma \ref{lem:FeynmanKac} to $\gamma=1$, and $X_t=\widetilde{W}_{N,i}(\varepsilon, H_t,\conf)$,
to obtain that
\begin{multline*}
\frac{1}{N^2}\log\E_{\mesref}^{\lambda,0}\cro{\exp\pa{N^{2}\int_0^T\widetilde{W}_{N,i}(\varepsilon, H_t,\conf(t))dt}}\\
\leq 2T\lambda^2+\frac{1}{2} \int_0^Tdt\sup_{\varphi}\left\{2\Eref\pa{\varphi\widetilde{W}_{N,i}(\varepsilon, H_t, \conf)}-\rdir\pa{\varphi}\right\}
,\end{multline*}
where the supremum is taken over all densities w.r.t. $\mesref$. Letting 
\[K(T,\lambda, \beta, {\widehat{\zeta}})=K_0(T,\beta, {\widehat{\zeta}})+2T\lambda^2,\]
to prove  Lemma \ref{lem:H1lemma1} it is therefore sufficient to show that the second term on the right-hand side of the inequality 
above is non-positive in the limit $N\to\infty$. This will be implied by  Lemma \ref{lem:H1lemma2} below, since the time integral is now only applied to $H$.
}

\begin{lemm}\label{lem:H1lemma2}For any $H\in C^{1}( \ctorus)$, and $\varepsilon>0$,  
\[\limsup_{N\to\infty} \sup_{\varphi }\left\{2\Eref\pa{\widetilde{W}_{N,i}(\varepsilon, H,\conf)\varphi}-\rdir(\varphi)\right\}\leq 0,\]
where the supremum is taken over the densities $\varphi$ w.r.t the product measure $\mesref$.
\end{lemm}
\proofthm{Lemma \ref{lem:H1lemma2}}{{The proof of this Lemma follows the exact same steps as the treatment of equation (7.3), p.106 in \cite{KLB1999}, we do not detail it:
since $\conf_{x_{k+1}}-\conf_{x_k}$ appearing in the expression of $\widetilde{W}_{N,i}(\varepsilon, H,\conf)$ can be rewritten $\conf_{x_{k+1}}(1-\conf_{x_{k}})-\conf_{x_k}(1-\conf_{x_{k+1}})$, the proof of the Lemma is just a matter of performing changes of variables $\confhat\mapsto \confhat^{x_k, x_{k+1}}$, and using the elementary inequality 
\[ab(c-d)\leq a^2(c+d)+\frac{b^2}{2}(\sqrt{c}-\sqrt{d})^2,\]
which holds for any positive $c$, $d$, to 
\[a=H(x/N),\quad b=\conf_{x_{k+1}}(1-\conf_{x_{k}}), \;\; \conf_{x_{k}}(1-\conf_{x_{k+1}}) , \quad c=\sqrt{\varphi}(\confhat^{x_k, x_{k+1}}), \eqand d=\sqrt{\varphi}\]
in the first term of $\widetilde{W}_{N,i}(\varepsilon, H,\conf)$.}
}

Lemma \ref{lem:H1lemma1} allows us to complete the proof of Theorem \ref{thm:regularity}. 
\proofthm{Theorem \ref{thm:regularity}}{Recall that we defined in Section \ref{subsec:processdef} $\Prob^{\lambda, \beta}_{\mu^N}$, 
the measure on the space $D([0,T], {\statespace})$ of the active exclusion process $\confhat(s)$ started with the measure $\mu^N$, 
and $Q^N$ is the measure on the corresponding measure space $\Mhat$. 
Let us introduce 
\[\varphi_{\delta}(u)=(2\delta)^{-2}\1_{[-\delta, \delta]^2}(u).\]
Since $\tau_x \rho_{\delta N}=\frac{(2\delta N)^2}{(2\delta N+1)^2}<\pi_t,\varphi_{\delta}(x/N-\cdot)>$ for any weak limit point  
$Q^*$ of $(Q^N)$, 
Lemma \ref{lem:H1lemma1} yields
\begin{equation*}
\limsup_{\delta\to 0}\E_{Q^*}\Bigg(\max_{1\leq l\leq k}\iint_{[0,T]\times\ctorus}\frac{H^l_t(u)}
{\varepsilon}\Big(<\pi_t,\varphi_{\delta}(u+\varepsilon e_i-\cdot)> -<\pi_t,\varphi_{\delta}(u-\cdot)>\Big)-H^l_t(u)^2 dudt\Bigg)\leq K.
\end{equation*}
Since thanks to Lemma \ref{lem:Lebesguedensity} any limit point $Q^*$ of $(Q^N)$ is concentrated on trajectories absolutely continuous w.r.t. the Lebesgue measure on $\ctorus$, letting $\delta$ then $\varepsilon$ go to $0$, by dominated convergence, we obtain that 
\begin{equation*}
\E_{Q^*}\pa{\max_{1\leq l\leq k}\iint_{[0,T]\times\ctorus}\cro{\partial_{u_i}H^l_t(u)\rho_t(u)-H^l_t(u)^2 }dudt}\leq K,
\end{equation*}
where $\rho_t(u)$ is the density of the measure $\int_{\ctoruspi}\pi_t (du, d\theta)$ w.r.t. the Lebesgue measure on $\ctorus$.
By monotone convergence, and since the sequence $(H_l)$ is dense in $C^{0,1}([0,T]\times \ctorus)$, we therefore obtain
\begin{equation}
\label{estimateH1}
\E_{Q^*}\pa{\sup_H\iint_{[0,T]\times\ctorus}\cro{\partial_{u_i}H_t(u)\rho_t(u)- H_t(u)^2 }dudt}\leq K,
\end{equation}
where the supremum is taken over all functions $H\in C^{0,1}([0,T]\times \ctorus)$. Given a limit point $Q^*$, let us denote by $\mathscr{E}$ the event on which the quantity inside parenthesis above is finite~: 
\[\mathscr{E}=\left\{\sup_H\iint_{[0,T]\times\ctorus}\cro{\partial_{u_i}H_t(u)\rho_t(u)- H_t(u)^2 }dudt<\infty\right\},\]
and denote by $\xi$ the elements of $\mathscr{E}$.
Then, thanks to the $L^1$ bound we just obtained, we have that $Q^*(\mathscr{E})=1$.

Define on $C^{0,1}([0,T]\times \ctorus)$ the linear operator
\[f_i(H)=\iint_{[0,T]\times\ctorus}\partial_{u_i}H_t(u)\rho_t(u)dudt, \]
then equation \eqref{estimateH1} yields that for any $\xi\in \mathscr{E}$, there exists a constant $K(\xi)$ 
such that for any positive constant $r$, $rf_i(H)-r^2\iint H^2\leq K(\xi)$, i.e.
\[f_i(H)\leq \frac{1}{r}K(\xi)+r \iint H^2.\]
Letting $r=\sqrt{K(\xi)/\iint H^2}$, and $C_0=2\sqrt{K(\xi)}$, we obtain that for any function $H\in C^{0,1}([0,T]\times \ctorus)$, 
\[f_i(H)\leq C_0(\xi)\pa{\iint_{[0,T]\times\ctorus} H^2}^{1/2}.\]
The functional $f_i$ can then be extended to a bounded linear functional in $L^2([0,T]\times \ctorus)$. 
The conclusion then follows from Riesz's representation theorem.
}}

\section{Non-gradient estimates}
\label{sec:6}

\subsection{Replacement of the symmetric current by a macroscopic gradient}

\intro{In this section, we focus on the complete exclusion process, and replace the current $\curom_i$ by a quantity of the form $\tau_{e_i} h-h+\gene f$, with $f$ a function of the configuration with infinite support. We then show that the perturbation $\gene f$ is of the same order as the weakly asymmetric contribution, and they both contribute to the drift term of equation \eqref{equadiff}. To obtain the non gradient estimates, we use the formalism developed in \cite{KLB1999} rather than that of \cite{Quastel1992}. This changes the proof substantially, with the upside that the orders in $N$, as well as the studied quantities, are clearly identified at any given point of the proof. }

\label{replacement1}
One of the challenges in proving the non-gradient hydrodynamic limit is to replace the local particle currents $\curom_i$ by the gradient of a function of the empirical measure. Recall that we already defined in equation \eqref{empiricalprofile}  the empirical angular density $\densl\in \pset$,
\[\densl=\frac{1}{(2l+1)^2}\sum_{x\in B_l}\conf_x\delta_{\theta_x},\]
and we denote by $\rho_l$ the empirical density 
\[\rho_l=\frac{1}{(2l+1)^2}\sum_{x\in B_l}\conf_x=\densl(\ctoruspi).\]
Let \[\densom_l=\frac{1}{(2l+1)^2}\sum_{x\in B_l}\com_x,\]
be the average of $\com$ over a box of side $2l+1$.
Finally,  for any function $\varphi$ on $\statespace$, recall that $\ddi$ is the discrete derivative  \[\ddi\varphi=\tau_{e_i}\varphi-\varphi\]
(for example, $\ddi\com_0=\com_{e_i}-\com_0$).

\bigskip

The usual strategy in the proof of the non-gradient hydrodynamic limit is to show that for some coefficients $\diffom, \; \diff:[0,1]\times \R\to \R^+$,  
\[\curom_i+\diffom\left(\rho_{\varepsilon N}, \densom_{\varepsilon N}\right)\ddi\densom_{\varepsilon N}+\diff\left(\rho_{\varepsilon N}, \densom_{\varepsilon N}\right)\ddi\rho_{\varepsilon N}\] vanishes as $N\to \infty$.
More precisely, the quantity above is in the range of the generator $\gene$, which is usually sufficient when the functions of the form $\gene f$ are negligible. In our case, however, due to the addition of a weak drift, the usual martingale estimate does not yield that $\gene f$ is negligible, but that $\genex f=(\genas+N^{-1}\genwa) f$ is negligible, therefore this perturbation can be integrated to the drift part, which is done in Section \ref{subsec:drift}. 

 For this replacement, we will need further notations similar to the ones introduced in Section \ref{gradientreplacement}. In our case, the diffusion coefficient $\diffom(\rho,\densom)$ is in fact the self-diffusion coefficient $\sdc(\rho)$, therefore we will from now on simply write $\sdc(\rho)$ for the diffusion coefficient relative to $\densom$. Note that it depends on the configuration only through the empirical density, and not on the particle angles.  For any positive integer $l$, and  any cylinder function $f$, let us thus denote 
\[{\V}_{i}^{f,\varepsilon N}(\confhat)=\curom_i+\sdc\left(\rho_{\varepsilon N}\right)\ddi\densom_{\varepsilon N}+\diff\left(\rho_{\varepsilon N}, \densom_{\varepsilon N}\right)\ddi\rho_{\varepsilon N}-\gene f,\]
where $\diff:[0,1]\times \R\to \R^+$ is the diffusion coefficient given in \eqref{diffconddef}.

We introduce for any smooth function $G\in C^{2}(\ctorus)$
\begin{equation}\label{Xdef}{X}_{i,N}^{f,\varepsilon N}(G,\confhat)=\frac{1}{N}\sum_{x\in \torus}G(x/N)\tau_x {\V}_{i}^{f,\varepsilon N}.\end{equation}
Our goal throughout this section is to prove that under the measure of our process, ${X}_{i,N}^{f,\varepsilon N}(G,\confhat)$ vanishes for any smooth function $G$, i.e. that the microscopic currents can be replaced by a macroscopic average of the gradients up to a perturbation $\gene f$ that will be dealt with later on. 

The sum contains $N^2$ terms, and the normalization is only $1/N$, therefore an order $N$ has to be gained, and this is the major difficulty of the non-gradient dynamics. To prove this statement, we decompose ${X}_{i,N}^{f,\varepsilon N}(G,\confhat)$ into distinct vanishing parts. 
 We already introduced in equation \eqref{epxdef} the set \[\epx=\left\{\sum_{\abss{y-x}\leq p}\conf_y\leq \abss{B_{p}}-2\right\},\]
such that at least two sites are empty in a vicinity of $x$ of size $p$.
 The cutoff functions $\1_{\epx}$ are crucial in order to control the local variations of the measure of the process with the Dirichlet form.
 
We set for any integer $l$ 
\begin{equation}\label{rhorhobar}\rho^{\omega,p}_l=\frac{1}{(2l+1)^2}\sum_{x\in B_l}\com_x\1_{\epx} \eqand  \rb^{\omega,p}_l=\densom_l-\rho^{\omega,p}_l=\frac{1}{(2l+1)^2}\sum_{x\in B_l}\com_x\1_{\epx^c}, \end{equation}
where $\epx^c$ is the complementary event of $\epx$.

We are now ready to split ${X}_{i,N}^{f,\varepsilon N}$ into $4$ vanishing parts. Let us denote by
\[{\W_1}={\W}_{i,1}^{f,l}(\confhat)=\curom_i-\langle\curom_i\rangle_0^{l'}-\pa{\gene f-\langle\gene f\rangle_0^{l-s_f}},\]
the difference between $\curom_i-\gene f$ and their local average, and by
\[{\W_2}={\W}_{i,2}^{\varepsilon N,p}(\confhat)=\sdc\left(\rho_{\varepsilon N}\right)\ddi\rb^{\omega,p}_{\varepsilon N}\]
the mesoscopic contributions of full clusters, where $\rb^{\omega,p}_{\varepsilon N}$ was defined in equation \eqref{rhorhobar} above. Let us also introduce 
\[{\W_3}={\W}_{i,3}^{l,\varepsilon N,p}(\confhat)=\sdc\left(\rho_{\varepsilon N}\right)\ddi\rho^{\omega,p}_{\varepsilon N}-\sdc\left(\rho_l\right)\ddi\rho^{\omega,p}_{l_p}+\diff\left(\rho_{\varepsilon N}, \densom_{\varepsilon N}\right)\ddi\rho_{\varepsilon N}-\diff\left(\rho_{l}, \densom_l\right)\ddi\rho_{l'},\]
where $l_p=l-p-1$ and $l'=l-1$, which is the difference between the cutoff microscopic and macroscopic gradients. Note that the cutoff functions are not needed for the total density $\rho$, because the gradients will vanish on full configurations. Finally, we set 
\begin{equation}\label{w4}{\W_4}={\W}_{i,4}^{f,l,p}(\confhat)=\langle\curom_i\rangle_0^{l'}+\sdc\left(\rho_l\right)\ddi\rho^{\omega,p}_{l_p}+\diff\left(\rho_{l}, \densom_l\right)\ddi\rho_{l'}-\langle\gene f\rangle_0^{l-s_f},\end{equation}
the microscopic difference between currents and gradients, taking into consideration the perturbation $\gene f$. For any smooth function $G\in C^{2}(\ctorus)$, we also introduce 
\[Y_1=Y_{i,1}^{f,l}( G,\confhat)=\frac{1}{N}\sum_{x\in \torus}G(x/N)\tau_x{\W}_{1},\qquad  Y_2=Y_{i,2}^{\varepsilon N,p}( G,\confhat)=\frac{1}{N}\sum_{x\in \torus}G(x/N)\tau_x{\W}_{2},\] 
\[Y_3=Y_{i,3}^{l,\varepsilon N,p}( G,\confhat)=\frac{1}{N}\sum_{x\in \torus}G(x/N)\tau_x{\W}_{3}\eqand   Y_4=Y_{i,4}^{f,l,p}( G,\confhat)=\frac{1}{N}\sum_{x\in \torus}G(x/N)\tau_x{\W}_{4}.\]
By construction,
\[{X}_{i,N}^{f,\varepsilon N}(G,\confhat)=\sum_{k=1}^4Y_k(G,\confhat).\]
We can now state the main result of this section.

\begin{theo}
\label{thm:NGestimates}
Let $G$ be a smooth function in $C^{1,2}([0,T]\times \ctorus)$,  $T>0$, and $i\in \{1,2\}$.
For any cylinder function $f$,
\begin{equation}\label{k1}\limsup_{l\to \infty}\limsup_{N\to\infty} \E^{\lambda,\beta}_{\mu^N} \left(\abs{\int_0^T{Y}_{i,1}^{f,l}(G_t,\confhat(t))dt}\right)=0.\end{equation}
Furthermore,
\begin{equation}\label{k2}\lim_{p\to \infty}\limsup_{\varepsilon\to 0}\limsup_{N\to\infty}\E^{\lambda,\beta}_{\mu^N} \left(\abs{\int_0^T{Y}_{i,2}^{\varepsilon N,p}(G_t,\confhat(t))dt}\right)=0.\end{equation}
For any integer $p>1$,
\begin{equation}\label{k3}\limsup_{l\to \infty}\limsup_{\varepsilon \to 0}\limsup_{N\to\infty} \E^{\lambda,\beta}_{\mu^N}\left(\abs{\int_0^T{Y}_{i,3}^{l,\varepsilon N,p}(G_t,\confhat(t))dt}\right)=0.\end{equation}
Finally,
\begin{equation}\label{k4}\inf_{f}\lim_{p\to \infty}\limsup_{l\to \infty}\limsup_{N\to\infty}\E^{\lambda,\beta}_{\mu^N}\left(\abs{\int_0^T{Y}_{i,4}^{f,l,p}(G_t,\confhat(t))dt}\right)=0,\end{equation}
where the infimum in $f$ is taken over the set ${\mathcal C}$ of cylinder functions.
\end{theo}

The core of this section is dedicated to proving these four estimates. The proof of equation \eqref{k1} is immediate and is sketched in Section \ref{subsec:k1}. 

Equation \eqref{k2} is quite delicate, and requires both the control on full clusters derived in equation \eqref{fullceq} and the energy estimate \eqref{Enest}. It is proved in Section \ref{subsec:k2}, in which the main challenge, as in the control of full clusters, is to carry out the macroscopic estimate  \eqref{Enest} in a microscopic setup. 

The proof of equation \eqref{k3} is given in Section \ref{subsec:k3}. This limit is the non-gradient counterpart of the two-block estimate stated in Lemma \ref{lem:TBE}. It follows closely the replacement of local gradients by their macroscopic counterparts performed in Lemma 3.1, p.156 of \cite{KLB1999}, but needs some technical adaptation due to the presence of the cutoff functions. 

The last limit \eqref{k4} requires the tools developed by Varadhan and Quastel \cite{Varadhan1994b} \cite{Quastel1992} for the hydrodynamic limit for non-gradient systems, and therefore requires more work. It is the non-gradient counterpart of the one-block estimate of Lemma \ref{lem:OBE}. However, if the latter was essentially a consequence of the law of large numbers, \eqref{k4} is analogous to the central limit theorem, where the gradient term plays the role of $-\E(\curom_i)$.  The limit \eqref{k4} is the focus of Sections \ref{subsec:localvariance}-\ref{subsec:halpha}.

Finally, Section \ref{subsec:drift}, and in particular Lemma  \ref{lem:driftexpectation}, is dedicated to the integration of the contribution $\gene f$ to the drift part of the scaling limit.

\bigskip

These four estimates are sufficient to allow the replacement of currents by macroscopic averages of gradients, up to a perturbation $\gene f$.
\begin{coro}
\label{currents3}
Let $G$ be a smooth function in $C^{1,2}([0,T]\times \ctorus)$, and ${T>0}$, and consider ${X}_{i,N}^{f,\varepsilon N}$ introduced in \eqref{Xdef}. Then for $i\in \{1,2\}$  
\begin{equation}\label{curreplacementthm}\inf_{f}\limsup_{\varepsilon\to 0}\limsup_{N\to\infty} \E^{\lambda,\beta}_{\mu^N}\left[\abs{\int_0^T{X}_{i,N}^{f,\varepsilon N}(G_t,\confhat(t))dt}\right]=0.\end{equation}
\end{coro}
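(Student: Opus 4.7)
The corollary is really a bookkeeping step: it follows from Theorem \ref{thm:NGestimates} once we carefully juggle the order of the limits in $N,\varepsilon,l,p$ and the infimum over cylinder functions $f$. The starting observation is that, for any choice of auxiliary integers $l$ and $p$,
\[ X_{i,N}^{f,\varepsilon N}(G_t,\confhat(t))\;=\;\sum_{k=1}^{4} Y_k\bigl(G_t,\confhat(t)\bigr)\, ,\]
so by the triangle inequality (and Fubini) the expectation on the left of \eqref{curreplacementthm} is bounded by the sum of the four expectations $A_k := \E^{\lambda,\beta}_{\mu^N}\bigl[\int_0^T|Y_k(G_t,\confhat(t))|dt\bigr]$. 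Crucially, the left-hand side of \eqref{curreplacementthm} depends only on $f,\varepsilon,N$; it does not depend on $l$ or $p$, which were introduced purely as decomposition parameters. Consequently, any limit in $l$ or $p$ may be applied on the right without affecting the inequality.

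My plan is then to use subadditivity of $\limsup$ repeatedly. After taking $\limsup_N$ and then $\limsup_\varepsilon$, and observing that $A_1$ and $A_4$ do not depend on $\varepsilon$ so the $\limsup_\varepsilon$ is trivial on those terms, I obtain
\[ \limsup_{\varepsilon\to 0}\limsup_{N\to\infty} \E^{\lambda,\beta}_{\mu^N}\Bigl[\bigl|\!\!\int_0^T\!\! X_{i,N}^{f,\varepsilon N}(G_t,\confhat(t))\,dt\bigr|\Bigr]
\;\leq\; \limsup_N A_1 + \limsup_{\varepsilon}\limsup_N A_2 + \limsup_{\varepsilon}\limsup_N A_3 + \limsup_N A_4. \]
Since the left-hand side is independent of $l$, I next take $\limsup_{l\to\infty}$ on the right. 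The first and third terms of the right-hand side then vanish by \eqref{k1} and \eqref{k3} respectively (for every fixed $p$), and $A_2$ is independent of $l$ so that term is left unchanged. This leaves the bound
\[ \limsup_{\varepsilon\to 0}\limsup_{N\to\infty} \E^{\lambda,\beta}_{\mu^N}\Bigl[\bigl|\!\!\int_0^T\!\! X_{i,N}^{f,\varepsilon N}(G_t,\confhat(t))\,dt\bigr|\Bigr]
\;\leq\; \limsup_{\varepsilon}\limsup_N A_2 \;+\;\limsup_{l}\limsup_N A_4. \]

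Since the left-hand side is also independent of $p$, I then take $\lim_{p\to\infty}$ on the right; the first remaining term vanishes by \eqref{k2}, giving
\[ \limsup_{\varepsilon\to 0}\limsup_{N\to\infty} \E^{\lambda,\beta}_{\mu^N}\Bigl[\bigl|\!\!\int_0^T\!\! X_{i,N}^{f,\varepsilon N}(G_t,\confhat(t))\,dt\bigr|\Bigr]
\;\leq\; \lim_{p\to\infty}\limsup_{l\to\infty}\limsup_{N\to\infty} A_4. \]
Finally, taking the infimum over cylinder functions $f\in\mathcal{C}$ on both sides, the right-hand side vanishes by \eqref{k4}, which concludes the proof. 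There is no substantive obstacle in this corollary itself; the real work was packaged into the four estimates \eqref{k1}--\eqref{k4} of Theorem \ref{thm:NGestimates}, the argument here being simply that the order of limits prescribed by those four statements is compatible with the order $\inf_f\limsup_\varepsilon\limsup_N$ required for the corollary.
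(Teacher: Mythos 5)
Your proof is correct and takes essentially the same approach as the paper's one-line argument: decompose $X_{i,N}^{f,\varepsilon N}$ into $\sum_{k=1}^4 Y_k$, apply the triangle inequality, and invoke the four estimates \eqref{k1}--\eqref{k4} of Theorem \ref{thm:NGestimates}. Your version merely spells out the bookkeeping — tracking which terms depend on $\varepsilon$, $l$, $p$, $f$ — that the paper leaves implicit in the phrase ``taking the limits $N\to\infty$, then $\varepsilon\to 0$, then $l\to\infty$, then $p\to\infty$, and finally the infimums over $f$.''
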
 
\proofthm{Corollary \ref{currents3}}{Since \[{X}_{i,N}^{f,\varepsilon N}(G,\confhat)=\sum_{k=1}^4Y_k(G,\confhat),\] this Corollary follows immediately from the triangular inequality, and Theorem \ref{thm:NGestimates} above, taking the limits $N\to\infty$, then $\varepsilon \to 0$ then $l\to\infty$, then $p\to \infty$, and finally the infimums over the local functions $f$.
}

\subsection{Replacement of the currents and $\gene f$ by their local average }
\label{subsec:k1}
In this paragraph, we prove equation \eqref{k1}, i.e. that  for any $i=1,2$, any function $G\in C^{1,2}([0,T]\times \ctorus)$, and any cylinder function $f$,
\begin{equation*}\limsup_{l\to \infty}\limsup_{N\to\infty} \E^{\lambda,\beta}_{\mu^N}\left( \abs{\int_0^TY_{1}(G_t,\confhat(t))dt}\right)=0.\end{equation*}
We set \[G^{l,N}(x/N)=\frac{1}{(2l+1)^2}\sum_{y\in \torus,  \;\abss{y-x}\leq l}G(y/N),\] an integration by parts yields that, shortening $l'=l-1$ 
\begin{multline*}\frac{1}{N}\sum_{x\in \torus}G(x/N)\pa{\curom_{x,x+e_i}-\frac{1}{(2l'+1)^2}\sum_{\abss{y-x}\leq l'}\curom_{y,y+e_i}}\\
=\frac{1}{N}\sum_{x\in \torus}\pa{G(x/N)-G^{l',N}(x/N)}\curom_{x,x+e_i}\leq\frac{ C(G) l^2}{N}.\end{multline*}
since the difference $G(x/N)-G^{l,N}(x/N)$ is a discrete Laplacian, and is therefore of order $l^2/N^2$, and the currents $\curom_{x,x+e_i}$ are bounded. By the same reasoning, letting $l_f=l-s_f$, we obtain a similar bound on the difference 
\begin{equation*}\frac{1}{N}\sum_{x\in \torus}G(x/N)\pa{\tau_x\gene f-\frac{1}{(2l_f+1)^2}\sum_{\abss{y-x}\leq l_f}\tau_y\gene f}\leq \frac{C'(G,f)l^2}{N},\end{equation*}
since $\gene f$ is a bounded function (this last statement comes from the fact that $f$ is,  and depends only on a finite number of sites).
These two bounds finally yield that for some constant $K=C(G)+C'(G,f)$, \[\abs{Y_1(G,\confhat)}\leq \frac{Kl^2}{N},\]
which immediately yields equation \eqref{k1} for any cylinder function $f$.

\subsection{Estimation of the gradients on full clusters}
\label{subsec:k2}
We now prove that equation \eqref{k2} holds. Our goal is to bound $Y_{i,2}^{\varepsilon N,p}(G,\confhat(s))$ thanks to the control of full clusters functions obtained in \eqref{fullceq}, and to the energy estimate \eqref{Enest}. For the sake of clarity, we drop the various dependencies, and simply write 
\[Y_2=Y_{i,2}^{\varepsilon N,p}.\]
By definition of $Y_2$ and $\rb^{\omega,p}_{\varepsilon N}$ \eqref{rhorhobar},
\begin{align*}Y_2(G,\confhat)&=\frac{1}{N}\sum_{x\in\torus}G(x/N)\tau_x\pa{\sdc\left(\rho_{\varepsilon N}\right)\ddi\rb^{\omega,p}_{\varepsilon N}}\\
&=\frac{1}{N}\sum_{x\in\torus}G(x/N)\tau_x\pa{\sdc\left(\rho_{\varepsilon N}\right)\cro{\frac{1}{(2\varepsilon N+1)^2}\sum_{y\in B_{\varepsilon N}(e_i)}\com_y\1_{E_{p,y}^c}-\frac{1}{(2\varepsilon N+1)^2}\sum_{y\in B_{\varepsilon N}}\com_y\1_{E_{p,y}^c}}},\end{align*}  
and we can rewrite it by summation by parts as 
\begin{equation}\label{Y2S}Y_2(G, \confhat)=\frac{1}{N}\sum_{x\in \torus}\com_x\1_{\epx^c}\frac{1}{(2\varepsilon N+1)^2}\pa{\sum_{y\in B_{\varepsilon N}(x-e_i)}G(y/N)\tau_y\sdc(\rho_{\varepsilon N})-\sum_{y\in B_{\varepsilon N}(x)}G(y/N)\tau_y\sdc(\rho_{\varepsilon N})}.\end{equation}
Most of the terms in the parenthesis above cancel out, since the boxes $B_{\varepsilon N}(x-e_i)$ and $B_{\varepsilon N}(x)$ overlap except on the two sides (cf. Figure \ref{yz}).

\begin{figure}
\centering
\input{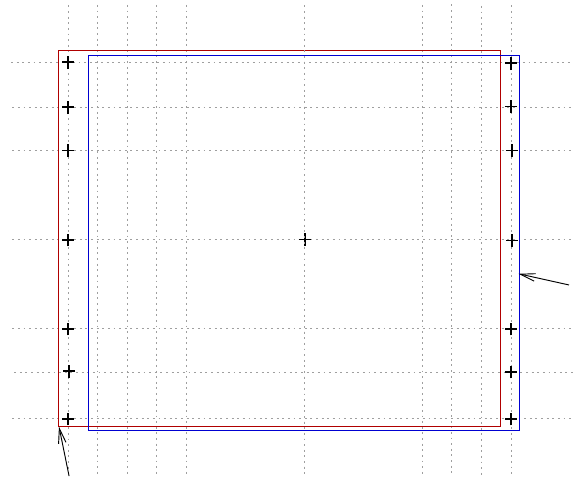tex_t}
\caption{Definition of the $y_k$'s and $z_k$'s.}      
\label{yz}
\end{figure}
For  any $k\in\llbracket-\varepsilon N,\varepsilon N\rrbracket$, we let according to Figure \ref{yz} 
\[y_k=-(\varepsilon N+1)e_i+ke_{i'} \eqand z_k=\varepsilon Ne_i+ke_{i'},\]
where $i'\neq i$ is the second direction on the torus, which are defined so that $B_{\varepsilon N}(-e_i)\backslash B_{\varepsilon N}=\{y_{-\varepsilon N},\ldots ,y_{\varepsilon N}\}$ and $B_{\varepsilon N}\backslash B_{\varepsilon N}(-e_i)=\{z_{-\varepsilon N},\ldots ,z_{\varepsilon N}\}$.

We thus obtain  from \eqref{Y2S}
\begin{multline}\label{Y2decomp}Y_2(G,\confhat(s))\\
=\frac{1}{N}\sum_{x\in \torus}\com_x\1_{\epx^c}\frac{1}{(2\varepsilon N+1)^2}\pa{\sum_{k=-\varepsilon N}^{\varepsilon N}G\pa{\frac{x+y_k}{N}}\sdc(\tau_{x+y_k}\rho_{\varepsilon N})-G\pa{\frac{x+z_k}{N}}\sdc(\tau_{x+z_k}\rho_{\varepsilon N})}.\end{multline}
We can now rewrite the quantity inside the parenthesis as  the sum over $k$ of
\[\cro{G\pa{\frac{x+y_k}{N}}-G\pa{\frac{x+z_k}{N}}}\sdc(\tau_{x+y_k}\rho_{\varepsilon N})-G\pa{\frac{x+z_k}{N}}\cro{\sdc(\tau_{x+z_k}\rho_{\varepsilon N})-\sdc(\tau_{x+y_k}\rho_{\varepsilon N})}.\]
Since $y_k$ and $z_k$ are distant of $2\varepsilon N+1$, the first term in the decomposition above can be bounded in absolute value uniformly in $x$ and $k$ by $(2\varepsilon N+1)\norm{\partial_{u_i}G}_{\infty}/N$. Let $C(G, \omega)=\norm{\partial_{u_i} G}_{\infty}\norm{\omega}_{\infty}\norm{\sdc}_{\infty}$, the corresponding contribution in \eqref{Y2decomp}  is 
\[\frac{1}{N}\sum_{x\in \torus}\underset{\leq \norm{\omega}_{\infty}}{\underbrace{\com_x}}\1_{\epx^c}\frac{1}{(2\varepsilon N+1)^2}\pa{\sum_{k=-\varepsilon N}^{\varepsilon N}\underset{\leq (2\varepsilon N+1)\norm{\partial_{u_i}G}_{\infty}/N}{\underbrace{\cro{G\pa{\frac{x+y_k}{N}}-G\pa{\frac{x+z_k}{N}}}}}\underset{\leq \norm{\sdc}_{\infty}}{\underbrace{\sdc(\tau_{x+y_k}\rho_{\varepsilon N})}}},\]
and can therefore be bounded by
\[\frac{C(G,\omega)}{N^2}\sum_{x\in \torus}\1_{\epx^c}.\]

Furthermore, since $\sdc$ is $C^{\infty}$ on $[0,1]$, it is Lipschitz-continuous on $[0,1]$ with Lipschitz constant $c$, we let $C'(G,\omega)=c\norm{G}_{\infty}\norm{\omega}_{\infty}/2$. We can now write thanks to the previous considerations that 
\[\abss{Y_2}\leq \frac{C(G,\omega)}{N^2}\sum_{x\in \torus}\1_{\epx^c}+\frac{C'(G,\omega)}{N^2}\sum_{x\in \torus}\frac{1}{(2\varepsilon N+1)}\sum_{k=-\varepsilon N}^{\varepsilon N}\1_{\epx^c}\frac{\abs{\tau_{x+y_k}\rho_{\varepsilon N}-\tau_{x+z_k}\rho_{\varepsilon N}}}{\varepsilon}.\]
For any positive $\gamma$, we have the elementary bound
\[\1_{\epx^c}\frac{\abs{\tau_{x+y_k}\rho_{\varepsilon N}-\tau_{x+z_k}\rho_{\varepsilon N}}}{\varepsilon}\leq \gamma \1_{\epx^c}+\frac{1}{\gamma}\frac{\pa{\tau_{x+y_k}\rho_{\varepsilon N}-\tau_{x+z_k}\rho_{\varepsilon N}}^2}{\varepsilon^2},\]
and finally, for any positive $\gamma$, 
\begin{align}\abss{Y_2}&\leq \frac{C+\gamma C'}{N^2}\sum_{x\in \torus}\1_{\epx^c}+\frac{C'}{\gamma N^2}\sum_{x\in \torus}\frac{1}{(2\varepsilon N+1)}\sum_{k=-\varepsilon N}^{\varepsilon N}\frac{\pa{\tau_{x-(\varepsilon N+1)e_i}\rho_{\varepsilon N}-\tau_{x+\varepsilon Ne_i}\rho_{\varepsilon N}}^2}{\varepsilon^2}\nonumber\\
\label{2terms}&= \frac{C+\gamma C'}{N^2}\sum_{x\in \torus}\1_{\epx^c}+\frac{C'}{\gamma N^2}\sum_{x\in \torus}\frac{\pa{\tau_{x-(\varepsilon N+1)e_i}\rho_{\varepsilon N}-\tau_{x+\varepsilon Ne_i}\rho_{\varepsilon N}}^2}{\varepsilon^2}.\end{align}
Recall that we want to prove \eqref{k2}, i.e. 
\begin{equation*}\lim_{p\to \infty}\limsup_{\varepsilon\to 0}\limsup_{N\to\infty}\E^{\lambda,\beta}_{\mu^N} \left(\int_0^T\abs{Y_2(G_t,\confhat(t))}dt\right)=0.\end{equation*}
The contribution of the first term in the bound for $\abss{Y_2}$ in equation \eqref{2terms} vanishes for any $\gamma$ as $N$ then $p$ goes to $\infty$, thanks to Proposition \ref{prop:fullclusters}. 

Furthermore, we can replace  $\tau_{x-(\varepsilon N+1)e_i}\rho_{\varepsilon N}$ by $\tau_{x-\varepsilon N e_i}\rho_{\varepsilon N}$ in \eqref{2terms} since the difference between these two quantities is of order $1/N$ and vanishes in the limit $N\to\infty$. This replacement allows us to work only with quantities that can be expressed in terms of the empirical measure of the process. Equation \eqref{k2} therefore holds according to Lemma \ref{lem:H1estimate} below, letting $\gamma$ go to $\infty$ after $N\to\infty$ then $\varepsilon \to 0$ then $p\to \infty$. $\hfill \blacksquare$

\begin{lemm}
\label{lem:H1estimate}There exists a positive constant $K$ such that 
\begin{equation*}\limsup_{\varepsilon\to 0}\limsup_{N\to\infty}\E^{\lambda,\beta}_{\mu^N} \left({\int_0^T\frac{1}{ N^2}\sum_{x\in \torus}\frac{\pa{\tau_{x-\varepsilon Ne_i}\rho_{\varepsilon N}(t)-\tau_{x+\varepsilon Ne_i}\rho_{\varepsilon N}(t)}^2}{\varepsilon^2}dt}\right)\leq K.\end{equation*}
\end{lemm}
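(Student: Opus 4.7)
The strategy is to pass from the microscopic squared gradient to its macroscopic counterpart using the weak convergence of $(Q^N)$ established in Proposition \ref{prop:compactnessQN}, and then to invoke the energy estimate \eqref{Enest} of Theorem \ref{thm:regularity}. Define $\phi_\varepsilon(v)=(2\varepsilon)^{-2}\1_{[-\varepsilon,\varepsilon]^2}(v)$ and, for $\pi\in\Mhat$, the convolution $\bar\pi_t^\varepsilon(u)=\int_{\ctorus}\phi_\varepsilon(v-u)\pi_t(dv,\ctoruspi)$. A direct computation gives
\[\tau_x\rho_{\varepsilon N}(t)=\frac{(2\varepsilon N+1)^2}{(2\varepsilon N)^2}\bar\pi_t^{N,\varepsilon}(x/N)=\bar\pi_t^{N,\varepsilon}(x/N)+O(1/N)\]
uniformly in $x$, where $\bar\pi_t^{N,\varepsilon}$ is associated with $\pi_t^N$. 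Hence, up to an $o_N(1)$ error, the microscopic quantity
\[A_N(\varepsilon,t):=\frac{1}{N^2}\sum_{x\in\torus}\pa{\frac{\tau_{x-\varepsilon Ne_i}\rho_{\varepsilon N}(t)-\tau_{x+\varepsilon Ne_i}\rho_{\varepsilon N}(t)}{\varepsilon}}^2\]
is a Riemann sum converging to
\[A_\infty(\pi,\varepsilon,t):=\int_{\ctorus}\pa{\frac{\bar\pi_t^\varepsilon(u-\varepsilon e_i)-\bar\pi_t^\varepsilon(u+\varepsilon e_i)}{\varepsilon}}^2 du.\]

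For fixed $\varepsilon>0$, the functional $\pi\mapsto\int_0^T A_\infty(\pi,\varepsilon,t)dt$ is bounded (by $T/\varepsilon^2$, since $\bar\pi_t^\varepsilon\in[0,1]$) and continuous in Skorohod's topology on trajectories with absolutely continuous marginals: indeed, when $\pi_t$ is absolutely continuous, the boundary of any box $B_\varepsilon(u)$ has $\pi_t$-measure zero so the portmanteau theorem yields pointwise convergence $\bar\pi^{n,\varepsilon}_t(u)\to\bar\pi^\varepsilon_t(u)$ for every $u$, and dominated convergence does the rest. Combining this with Lemma \ref{lem:Lebesguedensity}, any weak limit point $Q^*$ of $(Q^N)$ is concentrated on such trajectories, giving
\[\limsup_{N\to\infty}\E^{\lambda,\beta}_{\mu^N}\cro{\int_0^T A_N(\varepsilon,t)dt}=\E_{Q^*}\cro{\int_0^T A_\infty(\pi,\varepsilon,t)dt}.\]

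By Theorem \ref{thm:regularity}, $Q^*$-a.s. the total density $\rho_t=\int_{\ctoruspi}\boldsymbol\dens_t(\cdot,d\theta)$ belongs to $H^1([0,T]\times\ctorus)$, and $\bar\pi_t^\varepsilon=\rho_t*\phi_\varepsilon$ is $C^1$ with $\partial_{u_i}(\rho_t*\phi_\varepsilon)=(\partial_{u_i}\rho_t)*\phi_\varepsilon$. Writing $\bar\pi_t^\varepsilon(u-\varepsilon e_i)-\bar\pi_t^\varepsilon(u+\varepsilon e_i)=-\int_{-\varepsilon}^{\varepsilon}(\partial_{u_i}\rho_t)*\phi_\varepsilon(u+se_i)\,ds$, applying Cauchy-Schwarz in $s$ and then Fubini-Young's inequality $\norm{f*\phi_\varepsilon}_{L^2}\leq\norm{f}_{L^2}$ yields
\[\int_{\ctorus}\pa{\frac{\bar\pi_t^\varepsilon(u-\varepsilon e_i)-\bar\pi_t^\varepsilon(u+\varepsilon e_i)}{\varepsilon}}^2 du\leq 4\norm{\partial_{u_i}\rho_t}_{L^2(\ctorus)}^2.\]
Integrating in $t$, taking $\E_{Q^*}$ and using the energy estimate \eqref{Enest} yields the bound $4K$ uniformly in $\varepsilon$, which concludes the proof.

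The main technical difficulty lies in justifying the continuity of $\pi\mapsto\int_0^T A_\infty(\pi,\varepsilon,t)dt$ in Skorohod's topology: the convolution kernel $\phi_\varepsilon$ is not continuous, so weak convergence alone does not yield pointwise convergence of $\bar\pi^\varepsilon$, and one must exploit the absolute continuity of the limiting trajectories (Lemma \ref{lem:Lebesguedensity}) to apply the portmanteau theorem.
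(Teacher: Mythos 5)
Your proof is correct, and the overall architecture matches the paper's (pass from the microscopic squared gradient to the empirical measure, take a weak limit along $(Q^N)$, then invoke the energy estimate on $\partial_{u_i}\rho$). But you handle the key continuity step differently, and it's worth flagging the tradeoff.

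The paper does not keep the indicator kernel $\varphi_\varepsilon$. Because $\varphi_\varepsilon$ is discontinuous, the map $\pi\mapsto <\pi_t,\varphi_\varepsilon(\cdot+u)>$ is not automatically continuous for Skorohod's topology. The paper's fix is to replace $\varphi_\varepsilon$ by a $C^1$ approximation $\widetilde\varphi_\varepsilon$ (at the cost of an $O(\varepsilon)$ error which is absorbed in the $\limsup_{\varepsilon\to 0}$), and then Lemma \ref{lem:Skorohodcont} proves \emph{uniform} continuity of $I_\varepsilon$ on all of $\Pi$ — a statement that holds regardless of the regularity of the trajectory $\pi$. Your route instead keeps $\phi_\varepsilon$ as an indicator and argues $Q^*$-\emph{a.e.}~continuity: at any absolutely continuous trajectory $\pi$, the boundary of each box $B_\varepsilon(u)\times\ctoruspi$ is a $\pi_t$-null set, so the portmanteau theorem upgrades the weak convergence $\pi^n_t\to\pi_t$ (valid a.e.~in $t$ once one untangles the Skorohod reparametrization, via the split $\pi^n_t-\pi_{\kappa^n_t}+\pi_{\kappa^n_t}-\pi_t$) to pointwise convergence of $\bar\pi^{n,\varepsilon}_t(u)$; dominated convergence in $u$ and then in $t$ finishes, and the continuous mapping theorem combined with Lemma \ref{lem:Lebesguedensity} gives convergence of expectations. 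This is a legitimate alternative: it saves the smoothing and error bookkeeping, at the price of invoking the portmanteau/continuous-mapping machinery more delicately (the boundedness of $I_\varepsilon$ you use must hold on all of $\Pi$, not just $Q^*$-a.s., and the constant is not $T/\varepsilon^2$ for general $\pi\in\Pi$ — only $C(\varepsilon)T$ — though this does not affect the argument). The downstream computation — writing the $\varepsilon$-increment of $\rho_t*\phi_\varepsilon$ as $-\int_{-\varepsilon}^\varepsilon(\partial_{u_i}\rho_t)*\phi_\varepsilon(u+se_i)\,ds$ and applying Cauchy–Schwarz then Young — is essentially the same as the paper's step with the antiderivative $\Phi_{\varepsilon,i}$, just phrased through the differentiability of the convolution rather than an explicit anti-derivative of $h_\varepsilon$. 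One small typo: the ratio relating $\tau_x\rho_{\varepsilon N}$ to $<\pi^N_t,\varphi_\varepsilon(\cdot+x/N)>$ is $(2\varepsilon N)^2/(2\varepsilon N+1)^2$, not its reciprocal.
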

\proofthm{Lemma \ref{lem:H1estimate}}{This Lemma states that the difference of macroscopic densities between two points distant from $2\varepsilon$ is also of order $\varepsilon$, and is a consequence of the energy estimate \eqref{Enest}. We are going to prove this macroscopic estimate in the topological setup of the space of c\`adl\`ag trajectories of measures on $\ctorus\times\ctoruspi$ .
Recall from Section \ref{subsec:compactnessQN} that $\meas(\ctorus\times\ctoruspi)$ is the space of non-negative measures on the continuous configuration space, 
\[\Mhat=D\pa{[0,T],\meas(\ctorus\times\ctoruspi)}\] 
is the space of right-continuous, left-limit trajectories on the set of  measures on $\ctorus\times \ctoruspi$, and that  $Q^N$ is the distribution on $\Mhat$ of the process's empirical measure $\pi^N$. We have proved in Proposition \ref{prop:compactnessQN} that the sequence $(Q^N)_{N\in \N}$ is relatively compact for the weak topology.
Let $\Lambda_{\varepsilon}=[\varepsilon,\varepsilon]^2\subset\ctorus$ be the cube of size $\varepsilon$, and $(\varphi_{\varepsilon})_{\varepsilon>0}$ be a family of localizing functions on $\ctorus$ 
\[\varphi_{\varepsilon}(\cdot)=\frac{1}{(2\varepsilon)^2}\1_{\Lambda_{\varepsilon}}(\cdot),\]
we then have
\[\tau_x\rho_{\varepsilon N}(
t)=\frac{(2\varepsilon N)^2}{(2\varepsilon N+1)^2}<\pi^N_t,\varphi_{\varepsilon}(.+x/N)>.\]

\begin{figure}
        \centering
\hspace{-13em}
	\begin{subfigure}{0.3\textwidth}
		\input{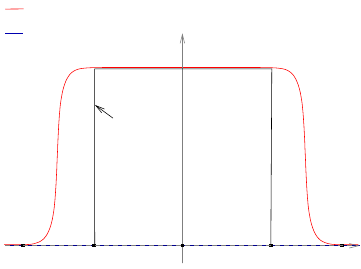tex_t}
         \caption{}
                \label{phiep}
        \end{subfigure}
\hspace{5em}
\begin{subfigure}{0.3\textwidth}
		\input{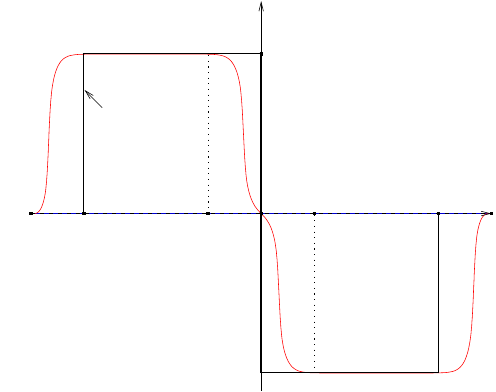tex_t}
         \caption{}
                \label{gradphiep}
        \end{subfigure}
\caption{(\subref{phiep}) Representations of $ \widetilde{\varphi}_{\varepsilon}(\cdot ,v)$ depending on the value of $v$.\\
(\subref{gradphiep}) Representation of $h_{\varepsilon}(\cdot ,v)=\nabla^{\varepsilon} \widetilde{\varphi}_{\varepsilon}(\cdot ,v)$ depending on the value of $v$.}

\label{phitot}\end{figure}

We define the \emph{mesoscopic gradient} \[\boldsymbol\nabla_i^{\varepsilon}\varphi(\cdot)=\varepsilon^{-1}(\varphi(\cdot-\varepsilon e_i)-\varphi(\cdot+\varepsilon e_i)),\] represented in Figure \ref{gradphiep}. Note that $\boldsymbol\nabla_i^{\varepsilon}\varphi_{\varepsilon}$ is at most of order $\varepsilon^{-3}$ since $\varphi_{\varepsilon}$ is of order $\varepsilon ^{-2}$. We can rewrite the left-hand side in Lemma \ref{lem:H1estimate} as 
\begin{align}\label{phinonc}\E_{Q^N}\pa{\int_0^T\frac{1}{ N^2}\sum_{x\in \torus}<\pi_t,\boldsymbol\nabla_i^{\varepsilon}\varphi_{\epsilon}(.+x/N)>^2dt}+o_N(1).
\end{align}
Furthermore, since for any two sites $x,x'\in \ctorus$ distant from less than $1/N$, 
\begin{equation*}\abs{<\pi_t,\boldsymbol\nabla_i^{\varepsilon}\varphi_{\epsilon}(.+x/N)>-<\pi_t,\boldsymbol\nabla_i^{\varepsilon}\varphi_{\epsilon}(.+x'/N)>}\leq C(\varepsilon)\frac{1}{N},\end{equation*}
we can replace the sum above by the integral over the continuous torus. 

However, regarding the weak topology on $\meas(\ctorus\times\ctoruspi)$, it will be convenient later on to consider smooth functions instead of $\varphi_{\varepsilon}$. We therefore introduce for any $\varepsilon$ a function $\widetilde{\varphi}_{\varepsilon}$, represented in Figure  \ref{phiep} verifying 
\begin{itemize}
\item $\widetilde{\varphi}_{\varepsilon}={\varphi}_{\varepsilon}$ on $\Lambda_{\varepsilon}$ and on $\ctorus\backslash \Lambda_{\varepsilon+\varepsilon^3}$.
\item $\norm{\widetilde{\varphi}_{\varepsilon}}_{\infty}=\norm{{\varphi}_{\varepsilon}}_{\infty}$.
\item $\widetilde{\varphi}_{\varepsilon}$ is in $C^1(\ctorus)$.
\end{itemize}
 Since $\widetilde{\varphi}_{\varepsilon}$ and ${\varphi}_{\varepsilon}$ coincide everywhere except on $\Lambda_{\varepsilon +\varepsilon^3}\backslash \Lambda_{\varepsilon}$, and since $\norm{\widetilde{\varphi}_{\varepsilon}}_{\infty}=(2\varepsilon )^{-2}$ we can write for any $x\in \torus$
\begin{align*}\abs{<\pi^N_t,{\varphi}_{\varepsilon}(.+x/N)>-<\pi^N_t,\widetilde{\varphi}_{\varepsilon}(.+x/N)>}&\leq\frac{1}{(2\varepsilon )^2}\underset{\leq 4\varepsilon\times \varepsilon^3}{\underbrace{<\pi^N_t,\1_{\Lambda_{\varepsilon+\varepsilon^3}\backslash\Lambda_{\varepsilon}}(.+x/N)>}}.\nonumber\\
&\leq C \varepsilon^2,\end{align*}
for some positive constant $C$.
This bound immediately yields \[\abs{<\pi^N_t,\boldsymbol\nabla_i^{\varepsilon}{\varphi}_{\varepsilon}(.+x/N)>-<\pi^N_t,\boldsymbol\nabla_i^{\varepsilon}\widetilde{\varphi}_{\varepsilon}(.+x/N)>}\leq C \varepsilon,\]
which allows us to replace in equation \eqref{phinonc}, in the limit $N\to \infty$ then $\varepsilon \to0$,  $\varphi_{\epsilon}$ by $\widetilde{\varphi}_{\epsilon}$.

To prove Lemma \ref{lem:H1estimate} it is therefore sufficient to prove that 
\begin{equation}\label{limsupQN}\limsup_{\varepsilon\to 0}\limsup_{N\to\infty}\E_{Q^N}\pa{\iint_{[0,T]\times\ctorus}<\pi_t,h_{\varepsilon}(.+u)>^2 du dt}\leq K,\end{equation}
where $h_{\varepsilon}=\boldsymbol\nabla_i^{\varepsilon}\widetilde{\varphi}_{\varepsilon}$, is a continuous bounded function, represented in Figure \ref{gradphiep}.
 Let us denote by $\Pi$ the subset of $\Mhat$
 \[\Pi=\left\{\pi\in \Mhat\;\Big|\;\sup_{t\in [0,T]}<\pi_t,1>\leq 1\right\}\]
 of trajectories with mass less than one at all times, which is compact w.r.t Skorohod's topology introduced in Section \ref{subsec:compactnessQN}.
 
 Consider a weakly convergent subsequence $Q_{N_k}\to Q^*$, in order to substitute $Q^*$ to $Q^N$ in the limit above, we want to prove that for any fixed $\varepsilon>0$, the application 
 \[I_{\varepsilon}:\pi\mapsto\iint_{[0,T]\times\ctorus}<\pi_t,h_{\varepsilon}(.+u)>^2 dudt\]
 is bounded, and continuous on $\Pi$ w.r.t. Skorohod's topology.
 
 Note that this application is bounded on $\Pi$ by construction, we now prove the following Lemma.
 \begin{lemm}\label{lem:Skorohodcont}
 Fix $\varepsilon>0$, the application $I_{\varepsilon}$ is continuous on $(\Pi,d)$, where $d$ is the Skorohod metric defined in equation \eqref{metricMhat}.
\end{lemm}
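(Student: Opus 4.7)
The plan is to reduce the continuity statement to pointwise convergence of time slices plus a dominated convergence argument. First, I would observe that the function $h_{\varepsilon}$ constructed above is continuous and bounded on $\ctorus$, with $\|h_{\varepsilon}\|_{\infty} \leq (2\varepsilon)^{-3}$. Since it does not depend on the angle variable, for any $u \in \ctorus$ the evaluation map $\mu \mapsto \langle \mu, h_{\varepsilon}(\cdot + u)\rangle$ is continuous on $\meas(\ctorus \times \ctoruspi)$ endowed with the weak topology. Using the compactness of $\ctorus$ and the uniform continuity of $h_{\varepsilon}$, one shows that the map
\[
F : (\mu, u) \longmapsto \langle \mu, h_{\varepsilon}(\cdot + u) \rangle^{2}
\]
is jointly continuous on $\{\mu \in \meas(\ctorus\times\ctoruspi),\ \mu(\ctorus\times\ctoruspi)\leq 1\} \times \ctorus$, and uniformly bounded by $C_{\varepsilon} := \|h_{\varepsilon}\|_{\infty}^{2}$. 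In particular $F(\pi_{t}, u)$ is a bounded measurable function of $(t,u) \in [0,T]\times\ctorus$ for any $\pi \in \Pi$, so $I_{\varepsilon}$ is well defined and bounded on $\Pi$ by $T C_{\varepsilon}$.

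Next, I would unpack what Skorohod convergence $d(\pi^{n}, \pi) \to 0$ provides. By definition (cf.\ Appendix \ref{subsec:topo}), there exist continuous increasing bijections $\lambda_{n}: [0,T] \to [0,T]$ such that $\sup_{t\in [0,T]} |\lambda_{n}(t) - t| \to 0$ and $\sup_{t\in [0,T]} d_{W}(\pi^{n}_{\lambda_{n}(t)}, \pi_{t}) \to 0$, where $d_{W}$ is any metric compatible with the weak topology on measures with mass bounded by $1$. Fix any $t \in [0,T]$ which is a continuity point of $s \mapsto \pi_{s}$ (in the weak topology), and set $s_{n} := \lambda_{n}^{-1}(t)$, so that $s_{n} \to t$. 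Then
\[
d_{W}(\pi^{n}_{t}, \pi_{s_{n}}) = d_{W}(\pi^{n}_{\lambda_{n}(s_{n})}, \pi_{s_{n}}) \longrightarrow 0,
\]
and by continuity of $\pi$ at $t$, $d_{W}(\pi_{s_{n}}, \pi_{t}) \to 0$. Combining, $\pi^{n}_{t} \to \pi_{t}$ weakly. Since $\pi$ is c\`adl\`ag, its set of discontinuity points is at most countable, hence of zero Lebesgue measure, so this convergence holds for Lebesgue-almost every $t$.

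Using joint continuity of $F$ in its two arguments, we deduce that for Lebesgue-almost every $t$ and every $u \in \ctorus$, $F(\pi^{n}_{t}, u) \to F(\pi_{t}, u)$, and in fact $F(\pi^{n}_{t}, \cdot) \to F(\pi_{t}, \cdot)$ uniformly in $u$ by compactness of $\ctorus$. Since $\pi^{n}, \pi \in \Pi$ gives the uniform bound $|F(\pi^{n}_{t}, u)|, |F(\pi_{t},u)| \leq C_{\varepsilon}$, the dominated convergence theorem applied on $[0,T]\times\ctorus$ yields
\[
I_{\varepsilon}(\pi^{n}) = \iint_{[0,T]\times\ctorus} F(\pi^{n}_{t}, u) \, du \, dt \;\longrightarrow\; \iint_{[0,T]\times\ctorus} F(\pi_{t}, u) \, du \, dt = I_{\varepsilon}(\pi),
\]
which proves the continuity of $I_{\varepsilon}$ on $(\Pi, d)$.

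The only genuine subtlety, which I view as the main obstacle, is that Skorohod convergence does not give $\pi^{n}_{t} \to \pi_{t}$ weakly for \emph{every} $t$: the time reparametrizations $\lambda_{n}$ are needed precisely to absorb the possible jumps of $\pi$. The fix above exploits the fact that a c\`adl\`ag trajectory has at most countably many discontinuities, which is enough to get pointwise weak convergence almost everywhere and then conclude via dominated convergence, using crucially that $\pi^{n}$ lies in the compact set $\Pi$ so that the integrand stays uniformly bounded.
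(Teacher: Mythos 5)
Your proof is correct and uses the same key ingredients as the paper: boundedness of $h_{\varepsilon}$ (hence of the integrand on $\Pi$), weak convergence of the time slices $\pi^n_t \to \pi_t$ for almost every $t$ obtained from the Skorohod reparametrizations together with the fact that a c\`adl\`ag trajectory is weakly continuous outside a countable set, and dominated convergence over $[0,T]\times\ctorus$.

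The only cosmetic difference is in how the comparison is organized. The paper factors $I_{\varepsilon}(\pi)-I_{\varepsilon}(\pi')$ as $(a+b)(a-b)$, bounds the first factor by $2\|h_{\varepsilon}\|_{\infty}$, and then splits the second into $<\pi'_t-\pi_{\kappa_t},\cdot>+<\pi_{\kappa_t}-\pi_t,\cdot>$ so that each piece vanishes for a.e.\ $t$. You instead invoke directly the standard characterization of Skorohod convergence, namely that $\pi^n_t\to\pi_t$ weakly at every continuity point of $t\mapsto\pi_t$, and apply dominated convergence to $F(\pi^n_t,u)-F(\pi_t,u)$. This is a clean reformulation and, if anything, it is slightly more transparent about where the a.e.\ continuity of the limit trajectory is used; it buys nothing essential over the paper's version, and the paper's version buys nothing over yours. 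One minor remark: in the paper this conclusion is asserted ``for any $(t,u)$'' even though it is only established for a.e.\ $t$, exactly the point you identify as the main subtlety and handle carefully.
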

\proofthm{Lemma \ref{lem:Skorohodcont}}{For any two trajectories $\pi$ and $\pi'$ in $\Pi$, and some continuous strictly increasing function $\kappa$ from $[0,T]$ into itself, such that $\kappa_0=0$ and $\kappa_T=T$, we can write 
\[I_{\varepsilon}(\pi)-I_{\varepsilon}(\pi')=\iint_{[0,T]\times\ctorus}du<\pi'_t+\pi_t,h_{\varepsilon}(.+u)><\pi'_t-\pi_{\kappa_t}+\pi_{\kappa_t}-\pi_t,h_{\varepsilon}(.+u)> dt.\]
The first factor $<\pi'_t+\pi_t,h_{\varepsilon}(.+u)>$ can be crudely controlled by $2\norm{h_{\varepsilon}}_{\infty}$, which yields 
\begin{align}\label{cvdie}\abs{I_{\varepsilon}(\pi)-I_{\varepsilon}(\pi')}\leq &2\norm{h_{\varepsilon}}_{\infty}{\iint_{[0,T]\times\ctorus}\abs{<\pi'_t-\pi_{\kappa_t},h_{\varepsilon}(.+u)>+<\pi_{\kappa_t}-\pi_t,h_{\varepsilon}(.+u)>}dudt}.
\end{align}
Note that by definition of $\norm{\kappa}$, one easily gets that for any $t\in [0,T]$, $\abs{t- \kappa_t}\leq T(e^{\norm{\kappa}}-1)$, therefore, $\kappa_t\to t$ uniformly on $[0,T] $ as $\norm{\kappa}\to 0$. Let us fix $\pi\in \Pi$, and assume that $d(\pi,\pi^n)\to0$ for some sequence of trajectories $(\pi^n)_n\in \Pi^{\N}$, there exists a sequence $(\kappa^n)_{n\in\N}$ such that $\norm{\kappa^n}\to 0$ and 
$\lim_{n\to\infty}\sup_{t\in [0,T]}\delta(\pi^n_t,\pi_{\kappa^n_t})=0$. This last statement yields in particular that for any $t\in [0,T]$, $\delta(\pi^n_t,\pi_{\kappa^n_t})\to 0$, therefore for any $t\in [0,T]$, and for any $u\in \ctorus$, 
\[\lim_{n\to\infty}<\pi^n_t-\pi_{\kappa^n_t}, h_{\varepsilon}(.+u)>=0,\]
since $h_{\varepsilon}(.+u)$ is a continuous bounded function, and $\delta$ is a metric of the weak convergence.
Furthermore, since $\kappa^n_t$ converges uniformly towards $t$ on $[0,T]$ and since $t\to\pi_t$ is weakly continuous almost everywhere on $[0,T]$ by definition of $\Mhat$, we also have that {for almost every} $(t,u)\in [0,T]\times \ctorus$, 
\[\lim_{n\to \infty}<\pi_{\kappa^n_t}-\pi_t,h_{\varepsilon}(.+u)>=0.\]
Since $\pi$ and the $\pi^n$'s are in $\Pi$, both of these quantities are crudely bounded in absolute value by ${2\norm{h_{\varepsilon}}_{\infty}}$, which is naturally integrable on $[0,T]\times \ctorus$. One finally obtains  by dominated convergence, from \eqref{cvdie} applied to $\pi'=\pi^n$ and $\kappa=\kappa^n$, that 
\begin{align*}\abs{I_{\varepsilon}(\pi)-I_{\varepsilon}(\pi^n)}\underset{n\to \infty}{\to} 0.
\end{align*}
Lemma \ref{lem:Skorohodcont} is complete.}

We have now proved that the application $I_{\varepsilon}$ is continuous for any fixed $\varepsilon$, therefore the left-hand side of \eqref{limsupQN}
 is less than 
 \begin{equation*}
 \limsup_{\varepsilon\to 0}\sup_{Q^*}\E_{Q^*}\pa{\iint_{[0,T]\times\ctorus}du<\pi_t,h_{\varepsilon}(.+u)>^2 dt},
 \end{equation*}
where the supremum is taken over all limit points $Q^*$ of the sequence $Q^N$.
Since by definition $h_{\varepsilon}=\boldsymbol\nabla_i^{\varepsilon} \widetilde{\varphi}_{\varepsilon}$ does not depend on $\theta$, 
we drop the dependency of $\pi$ on $\theta$ and consider simply for any $u\in \ctorus$, $\rho(t,u)=\int_{\ctoruspi}\boldsymbol \dens_t(u,d\theta)$, 
where $\boldsymbol \dens_t(u,d\theta)$ is the density of $\pi_t(\cdot ,d\theta)$ w.r.t. the Lebesgue measure $\ctorus$, which exists $Q^*$-a.s. 
according to Lemma \ref{lem:Lebesguedensity}. We can write 
\begin{equation}
\label{densQstar}
\E_{Q^*}\pa{\iint_{[0,T]\times\ctorus}du<\pi_t,h_{\varepsilon}(.+u)>^2 dt}=\E_{Q^*}\pa{\iint_{[0,T]\times\ctorus} \pa{\int_{v\in \ctorus}\rho(t,v)\boldsymbol\nabla_i^{\varepsilon} \widetilde{\varphi}_{\varepsilon}(v+u)dv}^2 dudt}.
\end{equation}
We can now express $\boldsymbol\nabla_i^{\varepsilon} \widetilde{\varphi}_{\varepsilon}$ as a gradient, by writing 
\[\boldsymbol\nabla_i^{\varepsilon} \widetilde{\varphi}_{\varepsilon}(u)=\partial_{u_i}\int_{-1/2}^{u_i}\boldsymbol\nabla_i^{\varepsilon} \widetilde{\varphi}_{\varepsilon}(\upsilon e_i+u_{i'}e_{i'})d\upsilon=\partial_{u_i} \Phi_{\varepsilon,i},\]
where $i'\neq i$ still denotes the second direction on the torus.

\begin{figure}
\centering
\input{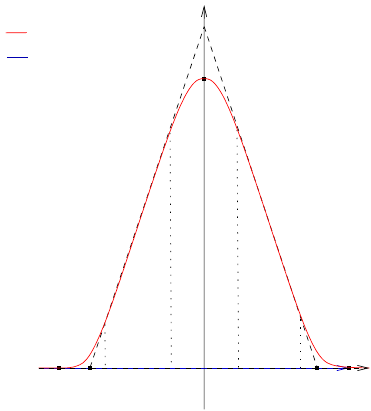tex_t}
\caption{Representation of $\Phi_{\varepsilon,i}(\cdot ,v)$ depending on $v$.}      
\label{phifig}
\end{figure}

Furthermore, $\Phi_{\varepsilon,i}$, represented in Figure \ref{phifig}, is in $C^2({\ctorus})$ because $\widetilde{\varphi}_{\varepsilon}$ is $C^1$, and the various integrals can be freely swapped since all quantities are bounded at any fixed $\varepsilon$. Since $Q^*$-a.s. $\rho\in W^{1,2}([0,T]\times\ctorus )$ according to Theorem \ref{thm:regularity}, the right-hand side in equation \eqref{densQstar} is therefore equal to 
 \begin{equation}\label{densQstar2}\E_{Q^*}\pa{\iint_{[0,T]\times\ctorus} \pa{\int_{v\in \ctorus} \Phi_{\varepsilon,i}(v+u) \partial_{u_i}\rho(t,v) dv}^2 dudt}.\end{equation}
In order to conclude, we adapt the proof of Young's Inequality, and apply Cauchy-Schwarz inequality to $f=\pa{\Phi_{\varepsilon,i}(v+u)}^{1/2}$ and $g=\pa{\Phi_{\varepsilon,i}(v+u)}^{1/2}\partial_{u_i}\rho(t,v)$, to finally obtain that   
 \begin{align*}\E_{Q^*}\bigg(\iint_{[0,T]\times\ctorus}du<\pi_t,h_{\varepsilon}(.+u)&>^2 dt\bigg) \\
 \leq \;&\E_{Q^*}\pa{\iint_{[0,T]\times\ctorus} \norm{\Phi_{\varepsilon,i}}_1 \cro{\int_{v\in \ctorus}\Phi_{\varepsilon,i}(v+u)(\partial_{u_i}\rho(t,v))^2dv } dudt}\\
 = \;&\norm{\Phi_{\varepsilon,i}}^2_1\E_{Q^*}\pa{\iint_{[0,T]\times\ctorus}(\partial_{u_i}\rho(t,u))^2du  dt},\end{align*}
 where the last identity was obtained by integrating first w.r.t. $u$, then w.r.t. $v$.
Since $  \norm{\Phi_{\varepsilon,i}}_1=1+o_{\varepsilon}(1)$, Lemma \ref{lem:H1estimate} follows from equation \eqref{Enest}.
}

\subsection{Replacement of the macroscopic gradients by their local counterparts}
\label{subsec:k3}
We now prove equation \eqref{k3}, i.e. that the macroscopic average of the gradients can be replaced by a local average.  To simplify the notations, throughout this section, we drop the various dependencies of ${Y}_{i,3}^{l,\varepsilon N,p}$ and simply denote it by $ Y_{3}$.

Recall that $\geniz$ stands for the modified Glauber generator without alignment of the angles, where each angle is updated uniformly in $\ctoruspi$,
\[\geniz f(\confhat)=\sum_{x\in \torus}\conf_x\int_{\ctoruspi}\frac{(f(\confhat^{x, \theta})-f(\confhat))}{2\pi}d\theta,\]
and 
\begin{equation*}L_N^{\beta=0}=N^2\genex+\geniz.\end{equation*}
Recall that  $\Prob^{\lambda, 0}_{\mesref}$ is the measure on the trajectories starting from the equilibrium measure $\mesref$ and driven by the generator $L_N^{\beta=0}$, and that the expectation w.r.t the latter is denoted by $\E^{\lambda,0}_{\mesref}$. 
We first apply Proposition \ref{prop:comparisonbetazero} to the positive functional 
\[X\pa{\confhat^{[0,T]}}=\abs{\int_0^T{Y}_3(G_t,\confhat(t))dt},\] letting  $A=\gamma N^2$, and obtain that for some constant $K_0=K_0(T,\beta, {\widehat{\zeta}})$,
\[ \E^{\lambda,\beta}_{\mu^N}\left(\abs{\int_0^T{Y}_3(G_t,\confhat(t))dt}\right)\leq\frac{K_0}{\gamma}+\frac{1}{\gamma N^2}\log \E^{\lambda,0}_{\mesref}\left[\exp \left(\gamma N^2\abs{\int_0^T{Y}_3(G_t,\confhat(t))dt}\right)\right].\]
Letting $\gamma$ go to $\infty$ after $N$, to prove \eqref{k3} it is therefore  enough to show that for any integer $p>1$
\begin{equation}\label{k3old}\lim_{\gamma\to\infty}\limsup_{l\to \infty}\limsup_{\varepsilon \to 0}\limsup_{N\to\infty}\frac{1}{\gamma N^2}\log \E^{\lambda,0}_{\mesref}\left[\exp \left(\gamma N^2\abs{\int_0^T{Y}_3(G_t,\confhat(t))dt}\right)\right]=0.\end{equation}
We now get rid of the absolute value by using both of the elementary inequalities  
\[e^{\abss{x}}\leq e^x+e^{-x}\] and \[ \limN\frac{1}{N^2}\log (a_N+b_N)\leq \max\pa{\limN\frac{1}{N^2}\log a_N,\limN \frac{1}{N^2}\log b_N}.\]
Both of these imply that the limit in equation \eqref{k3} is bounded up by the maximum of the limits of  
\[\frac{1}{\gamma N^2}\log \E^{\lambda, 0}_{\mesref}\left[\exp \left(\gamma N^2 \int_0^TY_3(G_t,\confhat(t))dt\right)\right]\]
and 
\[\frac{1}{\gamma N^2}\log \E^{\lambda, 0}_{\mesref}\left[\exp \left(-\gamma N^2 \int_0^TY_3(G_t,\confhat(t))dt\right)\right].\]
Since $-Y_3(G,\confhat)=Y_3(-G,\confhat)$, and since the identity above must be true for any function $G$, to obtain the wanted result it is sufficient to show that for any $\gamma$ and any $G\in C^{1,2}([0,T]\times \ctorus)$ 
\begin{equation}
\label{limitcurrents}
\lim_{\gamma\to \infty}
{\liml}\limsup_{\varepsilon\to 0}\limsup_{N\to\infty}\frac{1}{\gamma N^2}\log \E^{\lambda, 0}_{\mesref}\left[\exp \left(\gamma N^2\int_0^T Y_3(G_t,\confhat(t))dt\right)\right]\leq 0.
\end{equation} 

We now get back to a variational  problem, since Lemma \ref{lem:FeynmanKac} yields 
\[\frac{1}{\gamma N^2}\log \E^{\lambda,0}_{\mesref}\left[\exp \left(\gamma N^2\int_0^TY_3(G_t,\confhat(t))dt\right)\right]\leq\frac{2 T\lambda^2}{\gamma}+\frac{1}{\gamma }\int_0^T\sup_{\varphi}\left\{\Eref\pa{\varphi \gamma Y_3(G_t,\confhat)}-\frac{1}{2}\rdir(\varphi)\right\}.\]
The first term in the right-hand side above vanishes as $\gamma $ goes to $\infty$. Furthermore, the time integral is now only applied to the function $G_t$, therefore to obtain equation \eqref{k3}, it is sufficient to prove that for any $\gamma$ and any function $G\in C^{2}(\ctorus)$,
\begin{equation}\label{k3var}\limsup_{l\to \infty}\limsup_{\varepsilon \to 0}\limsup_{N\to\infty}\sup_{\varphi}\left\{2\gamma\Eref\pa{\varphi  Y_3(G,\confhat)}-\rdir(\varphi)\right\}\leq 0.\end{equation}
Since this must be true for any $G$ and any $\gamma$, we can safely assume that $\gamma=1/2$, and equation \eqref{k3var} follows from Lemma \ref{lem:NGTBE} below. Thus this completes the proof of \eqref{k3}.

In order to avoid repeating a similar proof twice, we forget for the moment that $\diffom\left(\rho, \rho^{\omega}\right)=d_s(\rho) $ only depends on the total particle density, and present the proof of the following Lemma in the most difficult case where the gradient is on $\rho^{\omega,p}$ and where the diffusion coefficient depends on both $\rho$ and  $\rho^{\omega}$. 
We simply assume throughout this proof that the diffusion coefficient $\diffom$ is a uniformly continuous function of $\rho$ and  $\rho^{\omega}$ on the set 
\[\Big\{(\am, \aom)\in[0,1]\times[-\norm{\omega}_{\infty},\norm{\omega}_{\infty}]\mbox{ such that }  |\aom|\leq \norm{\omega}_{\infty} \alpha\Big\}.\] 
\begin{lemm}
\label{lem:NGTBE}
Let us fix $1\leq i,\;j \leq 2$, we shorten \[\D_k=\diffom\left(\rho_k, \rho_k^{\omega}\right) \mbox{ and } v_k=\ddi\rho^{\omega,p}_k.\]
For any $G\in C^{2}(\ctorus)$ 
  \begin{equation}\label{micromacroeq}\liml\limep\limN\sup_{\varphi}\left\{\sum_{x\in\torus}\cro{\frac{1}{N}G(x/N)\Eref\Big(\varphi\tau_x(\D_{\varepsilon N}v_{\varepsilon N}-\D_lv_{l_p})\Big)}-\rdir(\varphi)\right\} \leq 0,\end{equation} 
where as before $l_p=l-p-1$, and the supremum is taken over all probability densities with respect to $\mesref$. 
The same result is true for the  gradients $v_k=\ddi\rho_k$ instead of $\ddi\rho^{\omega,p}_k$, $\diff$ instead of $\diffom$, and $l'=l-1$ instead of $l_p$.
\end{lemm}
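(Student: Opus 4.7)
The plan is to prove the variational inequality by decomposing
\[\D_{\varepsilon N}v_{\varepsilon N} - \D_l v_{l_p} = (\D_{\varepsilon N} - \D_l)\, v_{\varepsilon N} + \D_l\,(v_{\varepsilon N} - v_{l_p}),\]
where the first summand represents the replacement of the diffusion coefficient between the macroscopic and mesoscopic scales, and the second summand represents the replacement of the empirical gradient itself. The argument is the non-gradient analogue of the two-blocks estimate (Lemma \ref{lem:TBE}) and adapts Lemma 3.1 p.156 of \cite{KLB1999} to account for the full-cluster cutoff indicators $\1_{E_{p,y}}$.

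For the first summand, $\D_k = \diffom(\rho_k, \rho_k^\omega)$ is a uniformly continuous function of the empirical measure $\dens_k \in \pset$ in the topology induced by $\normm{\cdot}$; this is the content of our assumption on $\diffom$ together with Proposition \ref{prop:Lipschitzcontinuity}. A summation by parts in $x$ transfers the discrete derivative $\ddi$ implicit in $v_{\varepsilon N} = \ddi\rho^{\omega,p}_{\varepsilon N}$ onto the smooth factor $G(\cdot/N)\,\tau_\cdot\D_{\varepsilon N}$, splitting the contribution into a term where the derivative falls on $G$ and one where it falls on $\tau_\cdot\D_{\varepsilon N}$. The first term benefits from the $O(1/N)$ regularity of $G$ and, combined with the two-blocks estimate applied to the difference $\tau_x\D_{\varepsilon N}-\tau_x\D_l$, vanishes in expectation uniformly over $\varphi$ with $\rdir(\varphi)\leq K$. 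The second term is handled with the Cauchy--Schwarz bound $ab \leq \gamma a^2 /2 + b^2/(2\gamma)$, where $\gamma$ is tuned of order $1/N$ so that the arising quadratic contribution is absorbed by $-\rdir(\varphi)$ in the variational formula and the remainder depends only on $\|\omega\|_\infty \|\partial_{u_i}G\|_\infty$.

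For the second summand $\D_l(v_{\varepsilon N} - v_{l_p})$, rewrite $v_{\varepsilon N} - v_{l_p} = \ddi(\rho^{\omega,p}_{\varepsilon N} - \rho^{\omega,p}_{l_p})$; modulo elementary normalisation corrections, this empirical difference is a spatial average over the annulus $B_{\varepsilon N}\setminus B_{l_p}$ of translates of $\com_0 \1_{E_p}$. A summation by parts in $x$ again moves the discrete gradient onto $G(\cdot/N)\,\tau_\cdot \D_l$, gaining a factor $1/N$. The resulting expectation is then linearised through the change of variables $\confhat \mapsto T_{i,p}^y\confhat$ of Section \ref{subsec:regularity}, producing expressions of the form $\sqrt{\varphi}(T_{i,p}^y\confhat) - \sqrt{\varphi}(\confhat)$. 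Cauchy--Schwarz with a suitable $\gamma$ absorbs the resulting quadratic gradient term into $\rdir(\varphi)$ via the Dirichlet-form bound for $T_{i,p}^y$ derived in Lemma \ref{lem:H1lemma2}, while the remaining linear contribution is bounded by a constant times $|B_{\varepsilon N}\setminus B_{l_p}|/|B_{\varepsilon N}|$, which vanishes in the iterated limit $N\to\infty$, $\varepsilon\to 0$, $l\to\infty$.

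The main obstacle, as in the energy estimate, is to preserve the cutoff indicators $\1_{E_{p,y}}$ during the change of variables, since the naive swap $\confhat \mapsto \confhat^{y, y+e_i}$ does not commute with $\1_{E_{p,y}}$. This is resolved by the operator $T_{i,p}^y$ of Section \ref{subsec:regularity}, which exchanges $\com_y\1_{E_{p,y}}$ with $\com_{y+e_i}\1_{E_{p,y+e_i}}$ via at most $C_p$ elementary allowed swaps inside a box of radius $p$; the resulting multiplicative cost depends only on $p$, which is harmless because the iterated limit sends $l,N\to\infty$ and $\varepsilon\to 0$ \emph{before} any limit on $p$. The second statement of the lemma, concerning $v_k = \ddi\rho_k$ with coefficient $\diff$ and no cutoff, is strictly simpler: the boundary pairs $\conf_y(1-\conf_{y+e_i})$ vanish on saturated configurations, so the ordinary exchange $\confhat \mapsto \confhat^{y,y+e_i}$ directly produces the Dirichlet form without any boundary flip.
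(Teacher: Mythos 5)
The overall architecture of your proposal mirrors the paper's proof (a decomposition of $\D_{\varepsilon N}v_{\varepsilon N}-\D_lv_{l_p}$ into two summands, the $T_{i,p}^x$ change of variable to handle the cutoff indicators, Cauchy--Schwarz absorbing a quadratic gradient into $\rdir(\varphi)$, and the replacement-lemma estimates for the leftover square), but there are two concrete problems.

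First, the claim for the second summand that ``the remaining linear contribution is bounded by a constant times $|B_{\varepsilon N}\setminus B_{l_p}|/|B_{\varepsilon N}|$, which vanishes in the iterated limit'' is simply false: $|B_{\varepsilon N}\setminus B_{l_p}|/|B_{\varepsilon N}| = 1-(2l_p+1)^2/(2\varepsilon N+1)^2\to 1$ as $N\to\infty$, not $0$. After tuning $\gamma\sim 1/N$ in the Cauchy--Schwarz step, the leftover is $\frac{C_p}{N^2}\sum_x\Eref(\varphi\widetilde{S}_x^2)$ where $\widetilde{S}_x$ is the difference between the macroscopic and mesoscopic averages of $G\,\tau_\cdot\D_l$; this is not small pointwise, and it only vanishes after invoking the one- and two-block estimates (exactly as the paper does via Lemma~\ref{lem:OBE}/\ref{lem:TBE}). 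You mention the two-block estimate for the first summand but replace it with a wrong geometric ratio for the second, which is precisely the place where the Replacement Lemma is indispensable.

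Second, the repeated description of a summation by parts that ``moves the discrete gradient onto $G(\cdot/N)\tau_\cdot\D$, gaining a factor $1/N$'' does not hold for the $\D_l$ factor: $\tau_{x+e_i}\D_l-\tau_x\D_l$ is of order $1/l$, not $1/N$, because shifting $B_l$ by one site alters $\sim l$ boundary occupations, so a naive summation by parts onto $G\D_l$ loses a factor $N/l$ and blows up. The correct manipulation --- which the paper spells out --- keeps the configuration gradient $\com_{x+e_i}\1_{E_{p,x+e_i}}-\com_x\1_{E_{p,x}}$ intact, pairs it with a \emph{local spatial average} $S_x$ of $G\,\tau_\cdot\D$, and then extracts the second factor of $1/N$ from the Dirichlet form via $T_{i,p}^x$, not from a spatial derivative of the coefficients.

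A smaller remark: you decompose as $(\D_{\varepsilon N}-\D_l)v_{\varepsilon N}+\D_l(v_{\varepsilon N}-v_{l_p})$, whereas the paper uses $\D_{\varepsilon N}(v_{\varepsilon N}-v_{l_p})+(\D_{\varepsilon N}-\D_l)v_{l_p}$. The paper's choice is deliberate: with the $\D$-difference attached to the \emph{small-box} gradient $v_{l_p}$, the resulting $S'_x$ is \emph{exactly} invariant under $T_{i,p}^x$ because $l_p=l-p-1$ ensures the exchanged sites sit strictly inside every $B_l(y)$ appearing in the average. Your decomposition can be made to work, but both of your averaged quantities are only \emph{almost} invariant, and you would need to estimate the residual $\widehat{S}_x(T_{i,p}^x\confhat)-\widehat{S}_x$ and $\widetilde{S}_x(T_{i,p}^x\confhat)-\widetilde{S}_x$ (of order $C_p/(\varepsilon N)^2$) explicitly rather than having one term drop out for free.
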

\begin{proof}[Proof of Lemma \ref{lem:NGTBE}]The difficulty of this Lemma comes from the extra factor $N$, which prevents us from using directly the replacement Lemma \ref{lem:replacementlemma}. We hence need to get some precise control over each term to ensure that they are small enough. 
We start by splitting in two parts the quantity in Lemma \ref{lem:NGTBE} by noticing that 
\begin{equation}\label{splitmicromacro}\D_{\varepsilon N}v_{\varepsilon N}-\D_lv_{l_p}=\D_{\varepsilon N}(v_{\varepsilon N}-v_{l_p})+(\D_{\varepsilon N}-\D_{l})v_{l_p}.\end{equation}
Both terms are treated in the same fashion due to the continuity of the diffusion coefficients (which follows directly from their explicit expression). More precisely, we intend to show that the difference between the average over a microscopic and macroscopic box is of order $1/N$, and hence yields the extra factor $N$  needed to use the replacement Lemma. Let us thus consider the first term appearing in the Lemma, namely 
\[\frac{1}{N}\Eref\pa{\varphi\sum_{x\in \torus} G(x/N)\tau_x\D_{\varepsilon N}(v_{\varepsilon N}-v_{l_p})}.\]
Recall that we denoted $B_l=\{x\in \torus, \abss{x}\leq l\}$, and $\abss{B_l}=(2l+1)^2$. 
Since both $v_{\varepsilon N}$ and $v_{l_p}$ are merely spatial averages of the gradients $\ddi(\com_0\1_{E_{p}})$, a first summation by parts yields that the quantity above is equal to 
\begin{multline*}\frac{1}{N}\Eref\Bigg(\varphi\sum_{x\in \torus}(\com_{x+e_i}\1_{E_{p,x+e_i}}-\com_x\1_{\epx}  )\Bigg[\frac{1}{\abss{B_{\varepsilon N}}}\sum_{\abss{y-x}\leq \varepsilon N}G(y/N)\tau_y\D_{\varepsilon N}\\
-\frac{1}{\abss{B_{l_p}}}\sum_{\abss{y-x}\leq l_p}G(y/N)\tau_y\D_{\varepsilon N}\Bigg]\Bigg).\end{multline*}
Now let $S_{x}(\confhat)$ denote the quantity inside braces, i.e 
\[S_{x}(\confhat)=\frac{1}{\abss{B_{\varepsilon N}}}\sum_{\abss{y-x}\leq \varepsilon N}G(y/N)\tau_y\D_{\varepsilon N}-\frac{1}{\abss{B_{l_p}}}\sum_{\abss{y-x}\leq l_p}G(y/N)\tau_y\D_{\varepsilon N}.\]
We are now going to prove that
\begin{equation}
\label{vanishingsubqtty1}\liml\limep\limN\sup_{\varphi}\left\{\frac{1}{N}\Eref\pa{\varphi\sum_{x\in \torus}S_{x}(\com_{x+e_i}\1_{E_{p,x+e_i}}-\com_x\1_{E_{p,x}} )}-\frac{1}{2}\rdir(\varphi)\right\} \leq 0.
\end{equation}
In order to transfer the gradient appearing in the expression above on $\varphi$ and $S_{x}$, 
we need {the specific} change of variable represented in Figure \ref{Tip}.
For any direction $i\in \{1,2\}$, let $i'\neq i$ be the second direction on the torus. Given $x$ in the torus, we denote for any $k\in \llbracket -p,p\rrbracket$ (See Figure \ref{yz})
\[y_k=x-pe_i+k e_{i'}\in B_p(x)\eqand z_k=x+(p+1)e_i+k e_{i'}\in B_p(x+e_i).\]
Given these, {we  denote}, for any configuration $\confhat$, by 
\[T_{i,p}^x(\confhat)=\pa{\pa{(\confhat^{x,x+e_i})^{y_{-p},z_{-p}}}^{\ldots }}^{y_p,z_p}\]
the configuration where the sites $x$ and $x+e_i$ have been swapped, as well as the boundary sites $y_k$ and $z_k$.
\index{$T_{i,p}^x $ \dotfill exchanges $\com_x 1_{\epx}$ and $\tau_{e_i}\com_{x}1_{E_{p,x}}$}
\begin{figure}
     \centering
		\input{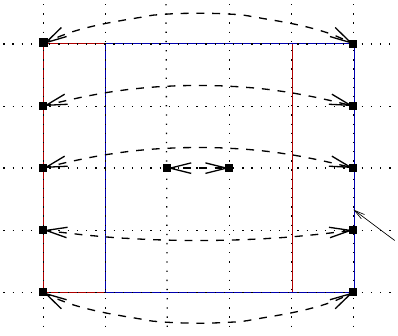tex_t}
         \caption{Change of variable $\confhat\to T^x_{i,p}\confhat$.}
                \label{Tip}
\end{figure}

By {construction}, we have
\[\com_x 1_{\epx}(T_{i,p}^x\confhat)=\com_{x+e_i}\1_{E_{p,x+e_i}}(\confhat)\]
The first term in the left-hand side of \eqref{vanishingsubqtty1} can be rewritten as 
\begin{align}
\label{vanishingqtty2}\frac{1}{N}\Eref\pa{\varphi\sum_{x\in \torus}S_{x}(\com_{x+e_i}\1_{E_{p,x+e_i}}-\com_x\1_{E_{p,x}})}
=&\frac{1}{N}\Eref\pa{\sum_{x\in \torus}\com_x\1_{E_{p,x}}\pa{(\varphi S_{x})(T_{i,p}^x\confhat)-\varphi S_{x}}}\nonumber\\
=&\frac{1}{N}\sum_{x\in \torus}\Eref\left(\com_x\1_{E_{p,x}}\left[\varphi(T_{i,p}^x\confhat)\pa{S_{x}(T_{i,p}^x\confhat)-S_{x}}\right.\right.\nonumber\\
&\hfill\left.\left.+\pa{\varphi(T_{i,p}^x\confhat)-\varphi }S_{x}\right]\right)
.\end{align}

We are going to show that the contribution of the first term of the right-hand side in \eqref{vanishingqtty2} vanishes in the limit $N\to \infty$, whereas the second term can be controlled with the Dirichlet form $\rdir(\varphi)$.
Recall that $S_{x}$ is defined as 
\[S_{x}(\confhat)=\frac{1}{\abss{B_{\varepsilon N}}}\sum_{\abss{y-x}\leq \varepsilon N}G(y/N)\tau_y\D_{\varepsilon N}-\frac{1}{\abss{B_{l_p}}}\sum_{\abss{y-x}\leq l_p}G(y/N)\tau_y\D_{\varepsilon N}.\]
Since the only dependency of $S_{x}$ in $\confhat$ lies in $\D_{\varepsilon N}$, which is the diffusion coefficient evaluated in the macroscopic empirical density $\densep$, in order to control the first term in the right-hand side of \eqref{vanishingqtty2}, we can write
\begin{multline}\label{decompSx}S_{x}(T_{i,p}^x\confhat)-S_{x}=\\
\frac{1}{\abss{B_{\varepsilon N}}}\sum_{\abss{y-x}\leq \varepsilon N}G(y/N)\tau_y\cro{\D_{\varepsilon N}(T_{i,p}^x\confhat)-\D_{\varepsilon N}(\confhat)}-\frac{1}{\abss{B_{l_p}}}\sum_{\abss{y-x}\leq l_p}G(y/N)\tau_y\cro{\D_{\varepsilon N}(T_{i,p}^x\confhat)-\D_{\varepsilon N}(\confhat)}.\end{multline}
Recall that $\tau_y\D_{\varepsilon N}(\confhat)=\diffom(\tau_y\rhoep, \tau_y\rhoep^{\omega})$.  Since it depends on the configuration through an average over $B_{\varepsilon N}(y)$, $\tau_y\D_{\varepsilon N}(\confhat)$ is invariant under any exchange of a pair of sites with both ends in $B_{\varepsilon N}(y)$. We deduce from this remark that for any $\abss{y-x}\leq l_p$, the quantity  \[\tau_y\cro{\D_{\varepsilon N}(T_{i,p}^x\confhat)-\D_{\varepsilon N}(\confhat)}\] vanishes, since all the exchanges happen between sites at a distance at most $p$ of $x$, and therefore at a distance at most $p+l_p$ of $y$. This yields that the second term in the right-hand side of \eqref{decompSx} vanishes.

We now consider the first term in the right-hand side of \eqref{decompSx}. For the same reason as before, for any $y$ in $B_{\varepsilon N-p-1}(x)$, all the exchanges in $T_{i,p}^x$ have both ends in $B_{\varepsilon N}(y)$, and  $\tau_y\cro{\D_{\varepsilon N}(T_{i,p}^x\confhat)-\D_{\varepsilon N}(\confhat)}$ vanishes.  We can finally rewrite \eqref{decompSx} as 
\begin{equation}\label{decompSx2}S_{x}(T_{i,p}^x\confhat)-S_{x}=\frac{1}{\abss{B_{\varepsilon N}}}\sum_{y\in B_{\varepsilon N}(x)\setminus B_{\varepsilon N-p-1}(x)}G(y/N)\tau_y\cro{\D_{\varepsilon N}(T_{i,p}^x\confhat)-\D_{\varepsilon N}(\confhat)}.\end{equation}
We now take a closer look at each of the remaining term. By definition, the configuration $T_{i,p}^x\confhat$ can be obtained from $\confhat$ by inverting $2p+2$ pair of sites in $\confhat$. Furthermore, fix a $y$ in the sum above, and consider any inversion $\confhat^{z_1,z_2}$ with $z_1\in B_{\varepsilon N}(y)$ and $z_2\notin B_{\varepsilon N(y)}$, we wan write by definition of $\rhoep$ and $\rhoep^{\omega}$
\[\abs{\tau_y\rhoep(\confhat^{z_1,z_2})-\tau_y\rhoep(\confhat)}\leq\frac{1}{\abss{B_{\varepsilon N}}}\eqand \abs{\tau_y\rhoep^{\omega}(\confhat^{z_1,z_2})-\tau_y\rhoep^{\omega}(\confhat)}\leq\frac{2\norm{\omega}_{\infty}}{\abss{B_{\varepsilon N}}}.\]
By assumption, $\diffom(\am,\aom)$ is uniformly continuous on the set
\[\Big\{(\am, \aom)\in[0,1]\times[-\norm{\omega}_{\infty},\norm{\omega}_{\infty}] \mbox{ such that } \abss{\aom}\leq \norm{\omega}_{\infty} \alpha\Big\}.\]
We deduce from this that 
\[\tau_y\pa{\D_{\varepsilon N}(\confhat^{z_1,z_2})-\D_{\varepsilon N}(\confhat)}=o_N(1),\]
therefore 
\[\abs{\tau_y\pa{\D_{\varepsilon N}(T_{i,p}^x\confhat)-\D_{\varepsilon N}(\confhat)}}\leq o_N(1),\]
where this time $o_N(1)$ stands for a constant depending on $p$ which vanishes as $N\to \infty$.
We inject the latter identity in equation \eqref{decompSx2}, to obtain that 
\[S_{x}(T_{i,p}^x\confhat)-S_{x}=\frac{\abs{B_{\varepsilon N}(x)\setminus B_{\varepsilon N-p-1}(x)}}{\abss{B_{\varepsilon N}}}o_N(1)=\frac{1}{N}o_N(1),\]
where the last $o_N(1)$ depends on $p$ and $\varepsilon$, but vanishes as $N\to \infty$.
This allows us to get back to equation \eqref{vanishingqtty2}, in which the first term in the right-hand side can be rewritten 
\[\abs{\frac{1}{N}\sum_{x\in \torus}\Eref\pa{\com_x\1_{E_{p,x}}\varphi(T_{i,p}^x\confhat)\pa{S_{x}(T_{i,p}^x\confhat)-S_{x}}}}\leq \frac{\norm{\omega}_{\infty}}{N^2}\sum_{x\in \torus}\Eref\pa{\varphi(T_{i,p}^x\confhat)}o_N(1)=o_N(1),\]
since $\mesref$ is invariant under the change of variable $T_{i,p}^x\confhat$, and therefore $\Eref\pa{\varphi(T_{i,p}^x\confhat)}=\Eref(\varphi)=1$.
\bigskip

We now work on the contribution of the second part of  \eqref{vanishingqtty2}, namely 
\begin{equation}\label{contrib2}\Eref\pa{N^{-1}\sum_{x\in \torus} \com_x\1_{E_{p,x}}S_{x}(\confhat)\cro{\varphi\pa{T_{i,p}^x\confhat}-\varphi}},\end{equation} 
that we wish to estimate  by the Dirichlet form $\rdir(\varphi)$.
The elementary bound 
\begin{equation*}cd\pa{a-b}\leq\frac{A c^2}{2}\pa{\sqrt{a}+\sqrt{b}}^2+\frac{d^2}{2A}\pa{\sqrt{a}-\sqrt{b}}^2,\end{equation*}
which holds for any positive constant $A$, applied to \[a=\varphi \pa{T_{i,p}^x\confhat}, \quad b=\varphi,\quad  c=\com_x S_{x}\mbox{ and } d=\1_{E_{p,x}} \] yields that the quantity above \eqref{contrib2} can be bounded from above for any positive $A$ by 
\begin{equation}\label{decomp4}\frac{1}{N}\sum_{x\in \torus} \Eref\pa{\frac{A}{2}\pa{\com_xS_{x}}^2\pa{\sqrt{\varphi}\pa{T_{i,p}^x\confhat}+\sqrt{\varphi}}^2+\frac{1}{2A}\1_{E_{p,x}}\pa{\sqrt{\varphi}\pa{T_{i,p}^x\confhat}-\sqrt{\varphi}}^2}.\end{equation}
Since we already established that $S_{x}\pa{T_{i,p}^x\confhat}=S_{x} +(\varepsilon N )^{-1}o_N(1)$, since $\com_x$ can be  bounded by $C(\omega)>0$, and since $\1_{ \epx}\leq \1_{E_{p+1,x}}$ the sum above is less than
\begin{equation}\label{on1}\frac{AC^2}{N}\sum_{x\in \torus}\Eref(\varphi S_{x}^2)+\frac{1}{2AN}\sum_{x\in \torus}\Eref\pa{\1_{E_{p+1,x}}\pa{\sqrt{\varphi}\pa{T_{i,p}^x\confhat}-\sqrt{\varphi}}^2}+o_N(1).\end{equation}

According to Section \ref{subsec:irreducibility}, on the event $E_{p+1,x}$ on which there are two empty sites in $B_{p+1}$, there exists a sequence of allowed jumps permitting to reach $T_{i,p}^x\confhat$ from $\confhat$. However, this sequence is random, which we avoid by crudely bounding 
\[\1_{E_{p+1,x}}\leq \sum_{z_1, z_2\in B_{p+1}}(1-\conf_{z_1})(1-\conf_{z_2}),\]
since the right-hand side only vanishes when there are less than one empty site in $B_{p+1}$. 
Given two fixed empty sites  $z_1$ and $z_2$  there exists an integer $n_p(z_1,z_2)$ bounded by a constant $C_p$, and a sequence of edges $((a_m, b_m))_{m\in \llbracket 0,n_p\rrbracket}$  such that   \[\confhat=\confhat(0), \qquad T_{i,p}^x\confhat=\confhat(n_p), \qquad \mbox{ and } \confhat(m+1)=\confhat(m)^{a_m,b_m}\;\forall m\in \llbracket 0,n_p-1\rrbracket,\]
where $a_m$ and $b_m$ are neighboring sites in $B_{p+1}(x)$ and $\conf_{a_m}(\confhat(m))=1-\conf_{b_m}(\confhat(m))=1$.
We can therefore write
\begin{align*}\Eref\pa{\1_{E_{p,x}}\pa{\sqrt{\varphi}\pa{T_{i,p}^x\confhat}-\sqrt{\varphi}}^2}\leq&\sum_{z_1,z_2\in B_{p+1}}\Eref\pa{ n_p\sum_{m=0}^{n_p-1}\1_{E_{p,x}}\pa{\sqrt{\varphi}\pa{\confhat(m+1)}-\sqrt{\varphi}(\confhat(m))}^2}\\
\leq & \;K_p D_{N,p+1}(\varphi), \end{align*}
since $\confhat(m+1)$ is reached from $\confhat(m)$ by an allowed particle jump, where $D_{N,p+1}(\varphi)$ is the contribution of edges in $B_{p+1}$ in $\rdir(\varphi)$.

The sum in the second term of \eqref{on1} can therefore be bounded by  $C^*_p\rdir\pa{\varphi}$, where $C^*_p=(2p+1)^2K_p$. Finally, \eqref{contrib2} can be bounded, for any positive $A$ by 
\[\frac{AC^2}{N}\sum_{x\in \torus}\Eref(\varphi S_{x}^2)+\frac{C^*_p}{2AN}\rdir\pa{\varphi}+o_N(1).\]
We can now set $A=C_p^* / N$, to obtain that
\begin{equation*}\Eref\pa{N^{-1}\sum_{x\in \torus} \com_x\1_{E_{p,x}}S_{x}(\confhat)\cro{\varphi\pa{T_{i,p}^x\confhat}-\varphi}}\leq \frac{C(p,\omega)}{N^2}\sum_{x\in \torus}\Eref(\varphi S_{x}^2)+\frac{1}{2}\rdir\pa{\varphi}+o_N(1).\end{equation*} 
The first term in the right-hand side above vanishes as a consequence of the two-block estimate stated in Lemma \ref{lem:OBE}, since the diffusion coefficients are continuous according to their explicit expression. This concludes the proof of equation \eqref{vanishingsubqtty1}. 

\bigskip

The contribution of the second part of equation \eqref{splitmicromacro} is treated in a similar fashion. Denoting by 
\[S'_{x}(\confhat)=\frac{1}{\abss{B_{l_p}}}\sum_{\abss{y-x}\leq l_p}G(y/N)(\tau_y\D_{\varepsilon N}-\tau_y\D_{l}).\]
As before, the corresponding contribution in the left-hand side of \eqref{micromacroeq} can be written as 
\begin{align*}
-\frac{1}{N}\sum_{x\in \torus}\Eref\left(\com_x\1_{E_{p,x}}\pa{\varphi(T_{i,p}^x\confhat)-\varphi }S'_{x}\right)
,\end{align*}
since this time, $S'_{x}$  is invariant under the action of $T_{i,p}^x$ by definition of $l_p$, whereas the second term can be controlled in the limit $N\to \infty$ as well by $\rdir(\varphi)/2$.
This completes the proof of Lemma \ref{lem:NGTBE} in the case where $\D_k=\diffom\left(\rho_k, \rho_k^{\omega}\right) \mbox{ and } v_k=\ddi\rho^{\omega,p}_k$.

\bigskip

In the case where $\D_k=\diff\left(\rho_k, \rho_k^{\omega}\right) \mbox{ and } v_k=\ddi\rho_k$, the proof is easier and no longer requires indicator functions, since unlike $\ddi \com_x$, $\ddi\conf_x$ vanishes when there is no empty site. We do not give a detailed proof, which would be an easier version of the previous case. We will instead just give a brief outline and the equivalent quantities to the previous ones. The same summation by parts allows us to rewrite 
\[\frac{1}{N}G(x/N)\Eref\Big(\varphi\tau_x(\D_{\varepsilon N}v_{\varepsilon N}-\D_lv_{l_p})\Big)=\frac{1}{N}\Eref\pa{\varphi\sum_{x\in \torus}(S_{x}+S'_{x})(\conf_{x+e_i}-\conf_x)},\]
where 
\[S_{x}=\frac{1}{\abss{B_{\varepsilon N}}}\sum_{\abss{y-x}\leq \varepsilon N}G(y/N)\tau_y\D_{\varepsilon N}-\frac{1}{\abss{B_{l'}}}\sum_{\abss{y-x}\leq l'}G(y/N)\tau_y\D_{\varepsilon N},\]
and
\[S'_{x}(\confhat)=\frac{1}{\abss{B_{l'}}}\sum_{\abss{y-x}\leq l'}G(y/N)(\tau_y\D_{\varepsilon N}-\tau_y\D_{l}).\]
We can now rewrite $\conf_{x+e_i}-\conf_x=\conf_{x+e_i}(1-\conf_x)-\conf_x(1-\conf_{x+e_i})$, to obtain that the quantity above is 
\[\frac{1}{N}\sum_{x\in \torus}\Eref\pa{\conf_x(1-\conf_{x+e_i})\pa{(S_{x}+S'_{x})\varphi}(\confhat^{x,x+e_i})-(S_{x}+S'_{x})\varphi}.\]
The gradients of $S_{x}$ and $S'_{x}$ still vanish, whereas the average of the gradients $\varphi(\confhat^{x,x+e_i})-\varphi$ can be controlled by the sum of  a vanishing term and the Dirichlet form of $\varphi$, since this time the jump rates $\conf_x(1-\conf_{x+e_i})$ are already present.
This concludes the proof of Lemma \ref{lem:NGTBE}.
\end{proof}

\subsection{Projection on non-full sets and reduction to a variance problem}
\label{subsec:localvariance}

We now prove the limit \eqref{k4}, which states that in a local average,  the current can be replaced by gradients, up to a perturbation $\gene f$. Following the exact same steps as in Section \ref{subsec:k3}, up until the statement of Lemma \ref{lem:NGTBE}, where we reduced the proof of equation \eqref{k3} to \eqref{k3var}, we reduce the proof of equation \eqref{k4} to the variational formula
\begin{equation}\label{eqloc}\inf_f\lim_{p\to \infty}\liml\limN\sup_{\varphi}\left\{\Eref\pa{\varphi  Y_4(G,\confhat)}-\rdir(\varphi)\right\}\leq 0,\end{equation}
where we shortened \[Y_4(G,\confhat)=Y_{i,4}^{f,l,p}(G,\confhat)=\frac{1}{N}\sum_{x\in \torus}G(x/N)\tau_x{\W}_{i,4}^{f,l,p},\]
and ${\W}_{i,4}^{f,l,p}$ was introduced in equation \eqref{w4}.
Since this step is performed in the exact same way as in the beginning of Section \ref{subsec:k3}, we do not detail them here and refer the reader to the latter.
To simplify notations, we shorten
\[{\W}_{i}^{l}={\W}_{i,4}^{f,l,p}\]
for the local average of the difference between gradients and currents in the direction $i$.

We will now work to get an estimate of the largest eigenvalue of the small perturbation $\gene + Y_{4}$ of $\gene$. The strategy is close to the one used in the one-block estimate of Section \ref{subsec:OBE}. To do so, we break down the process on finite boxes with a fixed number of particles, where the generator $\gene$ has a positive spectral gap.
In order to introduce this restriction, we adopt once again the notations introduced in Section \ref{subsec:OBE}, which we briefly recall here. Let $B_l=\llbracket-l, l\rrbracket^2$ be the box of size $l$,  $\K=(K,  \{\theta_1,\ldots ,\theta_K\})$ be some particle number and angles. Recall that $\Kset_l$ is the set of $\K$'s such that $K\leq (2l+1)^2$, and denote by $\param_{\K}$ the grand-canonical parameter 
\[\param_{\K}=\frac{1}{(2l+1)^2}\sum_{k=1}^K\delta_{\theta_k}\in \pset.\]
Recall that we already defined in \eqref{sousespace}
\begin{equation*}\subspace=\left\{\confhat\in\statespace \left| \quad \dens_l=\param_{\K}\right. \right\}\end{equation*}
the set of configurations with $K$ particles in $B_l$ with angles $\theta_k$'s. Also recall that  $\cmlk$ is the canonical measure $\mesref(\;.\;\mid\subspace)$ conditioned to particle configurations of the form $\K$ in $B_l$. 

We denote for any site $x$ $\varphi^x=\tau_{-x}\varphi$, and by $\varphi^x_{l,\K}$ the density induced by $\varphi^x$ on $\subspace$. It can be defined for any configuration $\zetahat$ on $B_l$ by
\[\varphi^x_{l,\K}(\zetahat)=\frac{\Eref(\varphi^x\mid\confhat_{\mid B_l}=\zetahat)}{\Eref(\varphi^x\mid \subspace)}.\]
Let us now get back to the quantity of interest, 
\begin{equation}
\label{decompnbrepart}
\Eref\pa{\varphi {Y}_4(G,\confhat)}=\frac{1}{N}\sum_{x\in \torus}G(x/N)\Eref\pa{\varphi\tau_x{\W}_i^l}=\frac{1}{N}\sum_{x\in \torus}G(x/N)\Eref\pa{{\W}_i^l\varphi^x}.
\end{equation}
Because ${\W}_i^l$ only depends on the vertices in $B_l$, we can replace the expectation under $\mesref$ by the integral over  $\Kset_l$ of the expectation under $\cmlk$. More precisely, let us denote 
\[m_x(d\K)=\Eref\pa{\varphi^x\1_{\Sigma_l^{d\K}}},\]
the infinitesimal probability of being on the set $\subspace$ under the measure with density $\varphi^x$ w.r.t $\mesref$.
Thanks to \eqref{decompnbrepart}, letting $\E^*_{l,\alpha}$ be the conditional expectation of $\Eref$ w.r.t the sites inside of $B_l$,  we can  write
\begin{align}\label{local1}\Eref\pa{\varphi Y_4(G,\confhat)}&=\frac{1}{N}\sum_{x\in \torus}G(x/N)\E^*_{l,\alpha}\pa{{\W}_i^l\varphi^x}\nonumber\\
&=\frac{1}{N}\sum_{x\in \torus}G(x/N)\int_{\K\in\Kset_l}\Ecmlk\pa{{\W}_i^l\varphi^x_{l,\K}}m_x(d\K).\end{align}

Let us now decompose in a similar fashion the Dirichlet form. For $\varphi$ some density with respect to $\gcm$, let $D_{l,\K}$ be the Dirichlet form on $\subspace$ 
\[D_{l,\K}(\varphi)=\frac{1}{2}\sum_{\substack{x,y\in B_l\\ \abss{x-y}=1}}\Ecmlk\cro{\conf_x(1-\conf_y)\pa{\sqrt{\varphi\pa{\confhat^{x, y}}}-\sqrt{\varphi}}^2}.\]
We have with the same tools as in the proof of Lemma \ref{lem:OBE} \begin{equation}\label{local2}\sum_{x\in \torus}\int_{\K\in \Kset_l}D_{l,\K}\pa{\varphi^x_{l,\K}}m_x(d\K)\leq  (2l+1)^2 \rdir(\varphi).\end{equation}

From the previous considerations, we can localize the quantity inside braces in equation \eqref{eqloc}, which is bounded above thanks to \eqref{local1} and \eqref{local2} by 
\begin{align}\label{varformula1}\Eref\pa{\varphi  Y_4(G,\confhat)}-\rdir(\varphi)=&\sum_{x\in \torus}\int_{\K\in \Kset_l}m_x(d\K) \Bigg(\frac{1}{N}G(x/N)\Ecmlk\pa{{\W}_i^l \varphi^x_{l,\K}}-(2l+1)^{-2}D_{l,\K}\pa{\varphi^x_{l,\K}}\Bigg)\nonumber \\
\leq&\kappa_1\sum_{x\in \torus}\sup_{\K\in \Kset_l} \cro{\frac{\kappa_2}{N}\Ecmlk\pa{{\W}_i^l\varphi^x_{l,\K}}-D_{l,\K}\pa{\varphi^x_{l,\K}}} \nonumber\\ 
\leq&\kappa_1\sum_{x\in \torus}\sup_{\K\in \Kset_l} \sup_{\psi}\cro{\frac{\kappa_2}{N}\Ecmlk\pa{{\W}_i^l\psi}-D_{l,\K}\pa{\psi}},
\end{align}
since $\int_{\K\in\Kset_l}m_x(d\K)=1$, where \[\kappa_1=(2l+1)^{-2} \quad \mbox{ and }\quad \kappa_2=G(x/N)(2l+1)^2,\]
and the supremum is taken over all densities $\psi$ with respect to $\cmlk$.

\bigskip

We now wish to exclude in the supremum over $\K$ above the configurations with one or less empty sites since on the corresponding sets, the exclusion process is not irreducible as investigated in Section \ref{subsec:irreducibility}. First note that for any $\K$ such that $K= \abss{B_l}$, ${\W}_i^l$ vanishes. Indeed, thanks to our cutoff functions $\1_{E_p}$, and since $l$ goes to $\infty$ before $p$, in that case, the currents, the gradients as well as the $\gene f$'s in ${\W}_i^l$ all vanish as well as $D_{l,\K}\pa{\psi}$.

We now consider the case where $K=\abss{B_l}-1$, i.e. when there is one empty site in $B_l$. We state the corresponding estimate as a separate lemma for the sake of clarity. 

\begin{lemm}\label{lem:controle1trou}
There exists a constant $C=C(G,\omega, f)$ such that for any $\K$ such that $K=\abss{B_l}-1$, 
\[\frac{\kappa_2}{N}\Ecmlk\pa{{\W}_i^l\psi}\leq D_{l,\K}\pa{\psi}+\frac{C}{N^2}.\]
\end{lemm}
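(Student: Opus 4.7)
The case $K=\abss{B_l}-1$ corresponds to configurations in $\subspace$ with a unique empty site $z$, a constraint that makes $\W_i^l$ uniformly small. The plan is to dispatch the cutoff gradient term via the indicator functions, then estimate each of the three remaining pieces by combining the sparsity of their support with the classical integration-by-parts/Cauchy-Schwarz technique that generates a controlled fraction of the Dirichlet form $D_{l,\K}(\psi)$.

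First I would observe that $\sdc(\rho_l)\ddi\rho^{\omega,p}_{l_p}$ vanishes identically on $\subspace$. Indeed, since $B_l$ contains only one empty site, for every $x$ with $B_p(x)\subset B_l$ (in particular for all $x\in B_{l_p}\cup(B_{l_p}+e_i)$) the event $E_{p,x}$, which requires at least two empty sites in $B_p(x)$, is empty, so $\com_x\1_{E_{p,x}}\equiv 0$. Hence both $\rho^{\omega,p}_{l_p}$ and $\tau_{e_i}\rho^{\omega,p}_{l_p}$ are zero and this term contributes nothing.

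For the current contribution I would use the antisymmetry $\curom_{y,y+e_i}(\confhat^{y,y+e_i})=-\curom_{y,y+e_i}(\confhat)$ to write
\[\Ecmlk(\curom_{y,y+e_i}\psi)=-\tfrac{1}{2}\Ecmlk\pa{(\psi(\confhat^{y,y+e_i})-\psi)\curom_{y,y+e_i}},\]
and apply $|ab|\leq Aa^2/2+b^2/(2A)$ with a parameter $A$ of order $N/\kappa_2$. The second factor becomes a fraction of $D^{y,y+e_i}_{l,\K}(\psi)$; the first factor is bounded by $\|\omega\|_\infty^2\,\Ecmlk\pa{(\sqrt{\psi^{y,y+e_i}}+\sqrt{\psi})^2\,\1_{z\in\{y,y+e_i\}}}$, which after the standard change of variable is at most $4\|\omega\|_\infty^2\,\mathbb{P}_\psi(z\in\{y,y+e_i\})$. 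Summing over $y\in B_{l'}$ and dividing by $\abss{B_{l'}}$, the key point is that $\sum_{y\in B_{l'}}\mathbb{P}_\psi(z\in\{y,y+e_i\})\leq 2$ since the empty site occupies exactly one location. This replaces the loss of a factor $\abss{B_{l'}}$ that the naive sup-norm bound $\|\langle\curom_i\rangle^{l'}\|_\infty\leq C/\abss{B_{l'}}$ would incur, and gives a contribution bounded by $\tfrac{1}{3}D_{l,\K}(\psi)+C(G,\omega)/N^2$.

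The Glauber term $\langle\gene f\rangle_0^{l-s_f}$ is handled in an analogous fashion using the self-adjointness of $\gene$ with respect to $\cmlk$, which yields
\[\Ecmlk((\tau_y\gene f)\psi)=-\tfrac{1}{2}\sum_{(x,x+z)}\Ecmlk\pa{\conf_x(1-\conf_{x+z})\nabla^{x,x+z}(\tau_y f)\,\nabla^{x,x+z}\psi},\]
a sum restricted to edges with $\{x,x+z\}\cap B_{s_f}(y)\neq\emptyset$. The same quadratic trick, combined with the observation that each edge is visited by only $O(s_f^2)$ values of $y$ and that $\mathbb{P}_\psi(\conf_x=1,\,\conf_{x+z}=0)$ summed over $z$ is again bounded by the total probability of having exactly one empty site in $B_l$, gives a contribution bounded by $\tfrac{1}{3}D_{l,\K}(\psi)+C(f)/N^2$. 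Finally $\diff(\rho_l,\densom_l)$ is a constant on $\subspace$, so only $\Ecmlk(\ddi\rho_{l'}\psi)$ must be estimated; since $\ddi\rho_{l'}$ takes only the values $0,\pm 1/\abss{B_{l'}}$ according to whether $z$ lies in the boundary strip $B_{l'}\triangle(B_{l'}+e_i)$, the same antisymmetric/quadratic argument applied to the $O(l)$ non-vanishing indicators produces the last piece $\tfrac{1}{3}D_{l,\K}(\psi)+C/N^2$. Summing the three contributions yields the claimed bound.

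The main obstacle is the current term: the naive sup-norm estimate $\|\W_i^l\|_\infty=O(1/l^2)$ only delivers an $O(1/N)$ control on $\kappa_2\Ecmlk(\W_i^l\psi)/N$, which is too weak. Gaining the extra factor $1/N$ requires exploiting simultaneously the antisymmetry of the current under the swap $\confhat\to\confhat^{y,y+e_i}$ and the fact that the support $\{\curom_{y,y+e_i}\neq 0\}$ is controlled by the position of the single empty site, so that summing over $y$ the natural ``cost'' telescopes to a universal constant rather than growing with $\abss{B_{l'}}$.
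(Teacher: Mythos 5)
Your proposal is correct and takes essentially the same approach as the paper: note the cutoff gradients $\sdc(\rho_l)\ddi\rho^{\omega,p}_{l_p}$ vanish (one empty site cannot trigger $E_{p,x}$), then handle the remaining current-like and $\gene f$ contributions by the antisymmetric change of variable together with the quadratic inequality $ab\leq \gamma a^2/2 + b^2/(2\gamma)$ to extract a fraction of $D_{l,\K}(\psi)$, and finally exploit the one-empty-site constraint (in the paper, $\sum_{y}(2-\conf_y-\conf_{y+e_i})\leq 2$; in your notation, $\sum_y \mathbb{P}_\psi(z\in\{y,y+e_i\})\leq 2$) to bound the residual cost by $O(1)$ rather than $O(l^2)$. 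Your organization into three separate pieces rather than the paper's two (the paper groups $\curom$ and $\diff_\K \cur$ together) is only cosmetic.
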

\proofthm{Lemma \ref{lem:controle1trou}}{
First note that all the gradients $\ddi\comp$ vanish in the expression of ${\W}_i^l$ due to the cutoff functions. We can therefore write, for any configuration with one or less empty site, that \[{\W}_i^l=\frac{1}{(2l'+1)^2}\sum_{x\in B_{l'}}\pa{\curom_{x,x+e_i}+\diff_{\K}\cur_{x,x+e_i}}-\frac{1}{(2l_f+1)^2}\gene_l \overline{f},\]
where we denoted by $\diff_{\K}$ the value on $\Sigma_l^{\K}$ of $\diff\left(\rho_{l}, \densom_l\right)$, which does not depend on the configuration, and $\overline{f}=\sum_{x\in B_{l_f}}\tau_x f$. The quantity we want to estimate can therefore be rewritten
 \[\frac{\kappa_2}{N}\Ecmlk\pa{{\W}_i^l\psi}=\frac{\kappa_2}{N(2l'+1)^2}\Ecmlk\pa{\psi \sum_{x\in B_{l'}}\pa{\curom_{x,x+e_i}+\diff_{\K}\cur_{x,x+e_i}}}-\frac{\kappa_2}{N(2l_f+1)^2}\Ecmlk\pa{\psi \gene_l \overline{f}},\]
 {where $\gene_l$ is the generator of the symmetric exclusion process restricted to jumps with both ends in $B_l$}. 
Since $\kappa_2$, $(2l'+1)^2$, and $(2l_f+1)^2$ are of order $(2l+1)^2$, and since the sign of $f$ is arbitrary, to prove Lemma \ref{lem:controle1trou} it is sufficient to prove that for any $A>0$, we have both
\begin{multline}\label{duo1trou}\frac{1}{N}\Ecmlk\pa{\psi \sum_{x\in B_{l'}}\pa{\curom_{x,x+e_i}+\diff_{\K}\cur_{x,x+e_i}}}\leq\frac{D_{l,\K}\pa{\psi}}{2A}+\frac{AC(\omega)}{N^2} \\
\eqand  \frac{1}{N}\Ecmlk\pa{\psi \gene_l \overline{f}}\leq \frac{D_{l,\K}\pa{\psi}}{2A}+\frac{AC(f)}{N^2}.
\end{multline}

\bigskip

The two inequalities above are proved in the same way. We treat in detail the second, which is the most delicate, and simply sketch the adaptations to obtain the first. Using the elementary inequality \begin{equation}\label{elemprod}ab\leq\frac{ \gamma a^2}{2}+\frac{b^2}{2 \gamma },\end{equation} which holds for any positive $\gamma$, we first write
\begin{align*}\Ecmlk\pa{\psi \gene_l \overline{f}}&=\sum_{x, x+z \in B_l}\Ecmlk\pa{\psi\nabla_{x,x+z}\overline{f}}\\
&=-\frac{1}{2}\sum_{x, x+z\in B_l}\Ecmlk\pa{\nabla_{x,x+z}\psi\nabla_{x,x+z}\overline{f}}\\
&\leq \sum_{x, x+z\in B_l}\frac{\gamma}{4}\Ecmlk\pa{(\nabla_{x,x+z}\sqrt{\psi})^2}+\frac{1}{4\gamma}\Ecmlk\pa{(\nabla_{x,x+z}\overline{f})^2(\sqrt{\psi}+\sqrt{\psi}(\confhat^{x,x+z}))^2}\\
&=\frac{\gamma}{2}\rdir_{l,\K}\pa{\psi}+\frac{1}{4\gamma}\Ecmlk\pa{\sum_{x, x+z\in B_l}\conf_x(1-\conf_{x+z})(\overline{f}-\overline{f}(\confhat^{x,x+z}))^2(\sqrt{\psi}+\sqrt{\psi}(\confhat^{x,x+z}))^2}.\end{align*}
One only has now to carefully account for the order of the different quantities in the second term. Since $f$ is a bounded local function, by definition of  $\overline{f}$, it is invariant under particle jumps with both ends outside of its domain. There hence exists a constant $C(f)$ such that for any $x$ and $x+z$, $\overline{f}-\overline{f}(\confhat^{x,x+z})\leq C(f)$. In particular, the constant $C(f)$ does not depend on $l$. We can also crudely bound $\conf_x$ by $1$ and $(\sqrt{\psi}+\sqrt{\psi}(\confhat^{x,x+z}))^2$ by $2\psi+\psi(\confhat^{x,x+z})$. These bounds and a change of variable $\confhat\to \confhat^{x,x+z}$  finally yield that for any positive $\gamma$,
\[\Ecmlk\pa{\psi \gene_l \overline{f}}\leq\frac{\gamma}{2}\rdir_{l,\K}\pa{\psi}+\frac{C(f)}{2\gamma}\Ecmlk\pa{\sum_{x, x+z\in B_l}(2-\conf_x-\conf_{x+z})\psi}.\]
Furthermore, since there is only one empty site in $B_l$, 
\[\sum_{\abss{y}\leq l-1}(2-\conf_y-\conf_{y+e_i})=\underset{\leq 1}{\underbrace{\abss{B_{l-1}}-\sum_{y\in B_{l-1}}\conf_y}}+\underset{\leq 1}{\underbrace{\abss{\tau_{e_i} B_{l-1}}-\sum_{y\in \tau_{e_i}B_{l-1}}}}\conf_y\leq 2,\] 
therefore,  since $\psi$ is a probability density, and setting $\gamma=N/A$ proves the second identity of \eqref{duo1trou}.

\bigskip

The second identity is obtained in the same way, since 
\[\frac{1}{N}\Ecmlk\pa{\psi \sum_{x\in B_{l'}}\pa{\curom_{x,x+e_i}+\diff_{\K}\cur_{x,x+e_i}}}=\frac{1}{N}\sum_{\abss{y}\leq l-1}\Ecmlk\pa{(\omega(\theta_y)+\diff_{\K})\nabla_{y,y+e_i}\psi},\]
we also obtain \begin{multline*}\frac{1}{N}\Ecmlk\pa{\psi \sum_{x\in B_{l'}}\pa{\curom_{x,x+e_i}+\diff_{\K}\cur_{x,x+e_i}}}\\\leq \frac{\gamma}{2}\rdir_{l,\K}\pa{\psi}+\frac{\pa{\norm{\omega}_{\infty}+\norm{\diff}_{\infty}}^2}{2\gamma}\Ecmlk\pa{\sum_{x, x+e_i\in B_l}(2-\conf_x-\conf_{x+e_i})\psi}.\end{multline*}
The last estimate, in turn, yields the first inequality in \eqref{duo1trou}, which concludes the proof of Lemma \ref{lem:controle1trou}.
}

In the limit $N\to \infty$ then $l\to \infty$, Lemma \ref{lem:controle1trou} yields, since $\kappa_1$ vanishes as $l\to \infty$, and since all quantities vanish when $K=\abss{B_l}$, that 
\[\kappa_1\sum_{x\in \torus}\sup_{\substack{\K\in \Kset_l\\
K\geq \abss{B_l}-1}} \sup_{\psi}\cro{\frac{\kappa_2}{N}\Ecmlk\pa{{\W}_i^l\psi}-D_{l,\K}\pa{\psi}}\to 0. \]

We can therefore restrict the supremum over $\K$ to those satisfying $K\leq\abss{B_l}-2$. Recall that we denoted in equation \eqref{Ksettdef} by $\Ksett_l$ the set of such $\K$, the left-hand side of \eqref{eqloc} is bounded by 
\begin{equation}\label{varformulakmoins1}\inf_{f}\lim_{p\to \infty}\liml\limsup_{N\to\infty}\kappa_1\sum_{x\in \torus}\sup_{\K\in \Ksett_l} \sup_{\psi}\cro{\frac{\kappa_2}{N}\Ecmlk\pa{{\W}_i^l\psi}-D_{l,\K}\pa{\psi}}, \end{equation}
where the supremum is taken over all densities $\psi$ w.r.t. $\mu_{l,\K}$. On all the sets $\subspace$ considered, $\gene_l$ is invertible and the supremum over $\psi$ is a variational formula for the largest eigenvalue of the operator $\gene_l+\kappa_2{\W}_i^l/N$.  Proposition \ref{prop:markoveigenvalue} then allows us to bound  the quantity whose limit  is taken in \eqref{varformulakmoins1} by
\begin{equation*}\limsup_{N\to\infty}\sup_{\K\in \Ksett_l} \frac{\kappa_1\kappa_2^2}{1-2\gamma_l\norm{{\W}_i^l}_{\infty}\kappa_2 N^{-1}}\Ecmlk\pa{{\W}_i^l(-\gene_{l})^{-1}{\W}_i^l}\leq  \norm{G}_{\infty}^2 (2l+1)^2\sup_{\K\in \Ksett_l}\Ecmlk\pa{{\W}_i^l(-\gene_{l})^{-1}{\W}_i^l}.\end{equation*}
To obtain the last inequality, we denoted by $\gamma_l$ the spectral gap of the local generator $\gene_l$, which is positive, and used that $\norm{{\W}_i^l}_{\infty}$ is finite, and $\kappa_1 \kappa_2^2=  \norm{G}_{\infty}^2(2l+1)^2$. In order to obtain inequality \eqref{eqloc}, and conclude the proof of equation \eqref{k4}, it is therefore sufficient to prove the following result.

\begin{prop}[Estimate of the local covariance]
\label{prop:TCL}
Recall that ${\W}_i^l$ is the local average of the difference between currents and gradients up to $\gene f$, namely  
\begin{equation*}{\W}_i^l=\langle\curom_{i}\rangle_0^{l'}+d_s\left(\rho_l\right)\ddi\rho^{\omega,p}_{l_p}+\diff\left(\rho_l,\rho_l^{\omega}\right)\ddi\rho_{l'}-\langle\gene f\rangle_0^{l_f},\end{equation*}
where $\diff$ is given by equation \eqref{diffconddef}. Recall that $\Ksett_l$ only takes into account configurations with two empty sites in $B_l$. Then, 
\begin{equation}\label{STcov}\inf_{f}\lim_{p\to \infty}\liml\sup_{\K\in \Ksett_l}{(2l+1)^2}\Ecmlk\pa{{\W}_i^l(-\gene_{l})^{-1}{\W}_i^l}=0.\end{equation}
\end{prop}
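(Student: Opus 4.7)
The plan is to recognize the left-hand side of \eqref{STcov} as a squared $H_{-1}$-type seminorm of the centered current $\W_i^l$ with respect to the canonical dynamics, and to show, via a Varadhan-type decomposition theorem for germs of closed forms, that this seminorm can be driven to zero by a suitable choice of $f$. The starting point is the variational identity
\begin{equation*}
(2l+1)^2\Ecmlk\pa{\W_i^l(-\gene_l)^{-1}\W_i^l}=\sup_{\psi}\left\{2(2l+1)^2\Ecmlk\pa{\W_i^l\psi}-(2l+1)^2\Ecmlk\pa{\psi(-\gene_l)\psi}\right\},
\end{equation*}
which exhibits the covariance as the squared $H_{-1}(\mu_{l,\K})$ norm of $\W_i^l$. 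Along a sequence $\K=\K_l\in\Ksett_l$ with $\param_{\K_l}\to\param$ in $(\pset,\normm{\cdot})$, the equivalence of ensembles of Appendix \ref{sec:B} together with the spectral gap estimate of Subsection \ref{subsec:spectralgap} reduce this canonical covariance to an infinite-volume covariance $\scal{\W_i^{\infty,p}}$ computed under the grand-canonical measure $\gcm$ and the associated Dirichlet form on $\statespaceinf$.

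Next, following Varadhan's non-gradient formalism, I would realize cylinder functions modulo exact forms as germs of closed forms on $\statespaceinf$, and invoke the decomposition theorem of Subsection \ref{subsec:differentialforms}: the subspace of closed forms is the orthogonal direct sum of the closure of the exact forms $\{\tau_{e_i}f-f:f\in\mathcal C\}$ and a subspace spanned by the fundamental gradients $\ddi\bigl(\com_0\1_{E_p}\bigr)$ (indexed by the angular test function $\omega$) and $\ddi\conf_0$. The symmetric current $\curom_i$ is itself a closed form, and a Green--Kubo-type computation using the duality with a tagged particle in the SSEP (Appendix \ref{subsec:sdc}) identifies its orthogonal projection on the finite-dimensional subspace as exactly $-\sdc(\alpha)\ddi\bigl(\com_0\1_{E_p}\bigr)-\diff(\alpha,\Egcm(\com_0))\ddi\conf_0$, with coefficients given by \eqref{diffconddef}. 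By construction the residual $\W_i^{\infty,p}$ then lies in the closure of the exact forms, so that choosing $f\in\mathcal C$ approximating it in the Hilbert norm yields $\inf_f \scal{\W_i^{\infty,p}}=0$ for every $p$, whence \eqref{STcov} after letting $p\to\infty$.

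The main obstacle is the non-uniformity of the spectral gap as the density approaches one: the spectral gap $\gamma_l(\K)$ of $-\gene_l$ on $\subspace$ decays polynomially in $\abss{B_l}-K$, which prevents a direct control of \eqref{varformulakmoins1} uniformly in $\K\in\Ksett_l$. This is exactly the role of the cutoff functions $\1_{E_p}$ built into $\W_i^l$: on the complement of $E_p$ the contribution is absorbed via the full-cluster estimate of Proposition \ref{prop:fullclusters}, while on $E_p$ one recovers a $p$-dependent but $K$-uniform spectral gap which will be established in Subsection \ref{subsec:spectralgap} via the density cutoff \eqref{def:phin} and Lemma \ref{lem:bulkconvergence}. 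A second specific difficulty, absent from \cite{Quastel1992}, is that the grand-canonical parameter $\param$ is measure-valued: the decomposition of closed forms must therefore hold uniformly in $\param\in\pset$ in the topology of Definition \ref{defi:convparam}, for which compactness of $\pset$ (Appendix \ref{sec:B}) and the Lipschitz-continuity of $\param\mapsto\Egcm(\psi)$ (Proposition \ref{prop:Lipschitzcontinuity}) are crucial. The full argument is carried out in Section \ref{sec8}, culminating in Theorem \ref{thm:limcovariance}.
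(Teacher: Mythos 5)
Your overall strategy mirrors the paper's: recast the covariance as an $H_{-1}$ seminorm via the variational formula, pass from the canonical to the grand-canonical setting through the spectral gap and equivalence of ensembles, apply the decomposition of germs of closed forms, and project the current on the finite-dimensional complement. You also correctly identify the two structural difficulties (non-uniform spectral gap, measure-valued parameter). But there is a concrete error in the key projection step.

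You claim that the orthogonal projection of $\curohat_i$ onto the subspace spanned by $\ddi\bigl(\com_0\1_{E_p}\bigr)$ and $\ddi\conf_0$ in $\halpha$ has \emph{exactly} the coefficients of \eqref{diffconddef}, so that the residual $\W_i^{\infty,p}$ lies in the closure of the exact forms and $\inf_f\scal{\W_i^{\infty,p}}=0$ for \emph{every} $p$. This is false for finite $p$, and incidentally makes your own final step (``whence \eqref{STcov} after letting $p\to\infty$'') logically superfluous. What Proposition \ref{prop:currentsdecomposition} actually shows is that the projection equals $c_p(\am)h_i^p+d_p(\param)\cur_i$ with
\[
c_p(\am)=\gcm(E_p\mid\conf_0=1)^{-1}\neq 1,\qquad d_p(\param)=-\frac{r_p(\param)\,c_p(\am)}{\am(1-\am)}\neq 0,
\]
so that for fixed $p$ the centered quantity $\curohat_i+h_i^p$ does \emph{not} lie in $\overline{\gene\tcal}/\mathcal{N}_{\param}$, and $\inf_f\scal{\curohat_i+h_i^p+\gene f}>0$. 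The limit $p\to\infty$ is thus not a tidy afterthought but the crux: $c_p(\am)\to1$ and $d_p(\param)\to0$ pointwise, yet \emph{not} uniformly in $\param$ as $\am\to1$, because $\gcm(E_p\mid\conf_0=1)\to 0$ there. The paper closes this by noting that the troublesome terms all carry a factor $d_s(\am)$, which by Proposition \ref{prop:regularitySDC} vanishes linearly as $\am\to1$, and by splitting $\pset$ into $\{\am\in[\epsilon,1-\epsilon]\}$ and its complement (beginning of the proof of Proposition \ref{prop:NGEuniforme}). Without this quantitative compensation the outer $\sup_{\K\in\Ksett_l}$ in \eqref{STcov} cannot be exchanged with the limit. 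A secondary imprecision: you invoke Proposition \ref{prop:fullclusters} to control the complement of $E_p$, but in this proposition the cutoff is already built into $\W_i^l$ and its effect is entirely analytic (through $c_p$, $d_p$, and the term $\rb^{\omega,p}_{\varepsilon N}$ handled in \eqref{k2}, not here); the full-cluster estimate plays no role in the proof of Proposition \ref{prop:TCL} itself.
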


\subsection{Limiting variance and diffusion coefficients}
\label{subsec:halpha}
\intro{In Section \ref{subsec:localvariance}, we reduced the proof of \eqref{k4}, and that of Theorem \ref{thm:NGestimates}, to estimating a local variance. In this section, we introduce the limiting variance $\scal{\cdot}$ and investigate its properties and the structure of a set of functions with mean-$0$ w.r.t. any canonical measures, equipped with $\scal{\cdot}$. The presence of indicator functions in $\ddi\comp_{0}$ and the necessity for a uniform estimate in the canonical state $\K\in \Ksett_l$ makes this section fairly technical, however, most of the results come from elementary linear algebra.
The main results of this section is Proposition \ref{prop:currentsdecomposition}, which is the main ingredient to prove Proposition \ref{prop:TCL}, and therefore concludes the proof of Theorem \ref{thm:NGestimates}. 
}

%

To prove Proposition \ref{prop:TCL}, we are now going to investigate the limit as $l\to \infty$ and $\param_{\K_l}\to \param$ (cf Definition \ref{defi:convparam}) of
\begin{equation}\label{covlim}\frac{1}{(2l+1)^2}\Ecml\pa{(-\gene_{l})^{-1} \sum_{x\in B_{l_{\psi}}}\tau_x \psi\; .\sum_{x\in B_{l_{\psi}}}\tau_x \psi}:=\;\;\scal{\psi},\end{equation}
where $\psi$ is supported by $B_{s_\psi}$ and $ l_{\psi}=l-s_\psi-1$ is chosen such that $\sum_{x\in B_{l_{\psi}}}\tau_x \psi$ is measurable w.r.t. sites in $B_l$. There are therefore two important steps to prove \eqref{STcov}~:
\begin{itemize}
\item prove that the limit \eqref{covlim} is well-defined for any function $\psi$ in a convenient class of functions containing at least the currents, the gradients and $\gene \mathcal{C}$. This is done in Definitions \ref{defi:limitingcovariance1}, \ref{defi:limitingcovariance2}, and Theorem \ref{thm:limcovariance} below.
\item Prove that, {shortening $\Egcm(\omega)=\Egcm(\omega(\theta_0)|\eta_0=1)$ and letting}
\begin{equation}
\label{diffomdef}\diff(\param)=\Egcm(\omega)(1-d_s(\am)),\end{equation} we have
\begin{equation}\label{decompheu}\inf_{f\in \mathcal{C}}\lim_{p\to\infty} \sup_{\param} \scal{\curom_{i}+d_s(\am)\ddi(\com_0\1_{E_p})+\diff(\param)\ddi\conf_0-\gene f}=0.\end{equation}
which is done below in Proposition \ref{prop:NGEuniforme}.
\end{itemize}

\bigskip

We introduce a class of local functions with mean $0$ w.r.t. any $\cm$. When there are less than one empty site in the domain $B$, we  require these functions to vanish in order to avoid classifying the irreducible subsets of $\statespace$ when there is only one empty site.
Recall that we already introduced in Definition \ref{defi:CM} the  sets $\Kset_l$ and $\Ksett_l$. 
\index{$ s_\psi$\dotfill  smallest $l$,  $\psi$ depends only on sites in $B_{l}$}
We now define  
\begin{equation}
\label{CzeroDef}
\czero=\left\{\psi\in {\mathcal C}\; \;\Big| \;\; \E_{s_\psi,\K}(\psi)=0 \;\; \forall\K\in\Ksett_{s_\psi}\quad \mbox{ and  } \quad \psi_{\vert\Sigma^{\K}_{s_{\psi}}}\equiv 0\; \;\forall  \K\in\Kset_{s_\psi}\smallsetminus \Ksett_{s_\psi}\right\}.
\end{equation}
In particular, any function  $\psi\in \czero$ has mean zero w.r.t any  canonical measure.  Note that $\psi\in \czero$, and any $\param\in\pset$, conditioning w.r.t. the canonical state of the configuration in $B_{s_\psi}$, we obtain in particular that  $\Egcm(\psi)=0$.
\index{$ \czero$\dotfill space of mean $0$ func. w.r.t. any $\mu_{l,\K}$ }
Further define 
\begin{equation}
\label{eq:DefTomega}
{T^\omega} =\left\{f{\in \mathcal{C}}\; \;\Big| \;\; f(\confhat)=\varphi(\conf)+\sum_{x\in \Z^2}\com_x\psi_x({\conf}), \quad \varphi,\psi_x\in \Sp,\; \forall x \in \Z^2\right\},
\end{equation}
of functions whose only dependency in the $ \theta_x$'s is a linear combination of the $\omega(\theta_x)$. {Note that since we only consider local functions, this set is well-defined.}

Denote
\begin{equation}
\label{eq:DefT0}
\tzero=\czero\cap T^\omega.
\end{equation}
Note that $\tzero$ and $\czero$ are stable by the symmetric exclusion generator $\gene$. Further note that by construction, $\ddi(\com_0\1_{E_p})\in \tzero$.

Recall that for any function $\Phi$ on $\ctoruspi$, $\cur^\Phi_i=\Phi(\theta_0)\conf_0(1-\conf_{e_i})-\Phi(\theta_{e_i})\conf_{e_i}(1-\conf_0)$ denotes the symmetric current associated with $\Phi$ (we also shortened $\cur_i=\cur_i^1=\conf_0-\conf_{e_i}$).
We define $J^*$ the set of {linear} combinations of currents spanning any smooth angular functions, 
\begin{equation}
\label{eq:DefJstar}
J^*=\Big\{ \cur_1^{\Phi_1}+\cur_2^{\Phi_2},\quad \mbox{for }  \Phi_1, \;\Phi_2\in C^1(\ctoruspi)\Big\},
\end{equation}
and let 
\begin{equation}
\label{eq:DefJomega}
J^\omega=J^*\cap T^\omega=\Big\{\cur^{a,b}:=\sum_{i=1,2}a_i\curom_i+b_i\cur_i,\quad a, b\in \R^2\Big\}. 
\end{equation}

We now have all the notations needed to introduce the limiting variance $\scal{\cdot}$. In order to be able to estimate concisely the drift term later on, and to solve a technical issue, we need a rather general result. In particular, we give two distinct constructions for $\scal{f}$ depending on the nature of the function $f$. Fix $\param\in \pset$. Although it is not clear at this point that those two definitions actually coincide, this difficutly is adressed by Theorem \ref{thm:limcovariance} below, which states that the object $\scal{\cdot}^{1/2}$ is a semi norm, and  that for any function to which both Definitions \ref{defi:limitingcovariance1} and \ref{defi:limitingcovariance1} apply, the two definitions actually coincide.
\index{$\scal{\cdot}$ \dotfill limit of the space time covariance}
\begin{defi}[Definition of  $\scal{\cdot}$ on $J^*+\gene\mathcal{C}$]
\label{defi:limitingcovariance1}
For any  $\Phi_1, \;\Phi_2\in C^1(\ctoruspi)$ and for any local function $g\in\mathcal{C}$, we define
\begin{equation}
\label{scal0}
\scal{ \cur_1^{\Phi_1}+\cur_2^{\Phi_2}+\gene g}=\sum_{i=1,2}\Egcm\pa{\conf_0(1-\conf_{e_i})\Big[\Phi_i(\theta_0)+\Sigma_g(\confhat^{0,e_i})-\Sigma_g)\Big]^2},
\end{equation}
where $\Sigma_g=\sum_{x\in \Z^2}\tau_x g$, which is not a priori well-defined, but whose gradient $\Sigma_g(\confhat^{0,e_i})-\Sigma_g$ is, because $g$ is a local function. For any function  $\psi\in \tzero+ J^*+\gene\mathcal{C}$, define
\begin{equation}
\label{scal}
\scal{\psi\;,\;\gene g+\cur_1^{\Phi_1}+\cur_2^{\Phi_2}}=-\Egcm\Bigg(\psi\bigg[ \Sigma_g+\sum_{x\in\Z^2} \pa{x_1\conf_x^{\Phi_1}+x_2\conf_x^{\Phi_2}} \bigg]\Bigg) 
\end{equation}
which once again is well-defined because any $\psi\in \tzero+ J^*+\gene\mathcal{C}$ is a local function with mean-$0$ w.r.t. any $\mesinv$, 
therefore the expectation above only involves a finite number of non-$0$ contributions. 
{In particular, an elementary computation yields that for any $g\in \mathcal{C}$, and $j\in J^*$
\[\scal{\gene g+j\;,\;\gene g+j}=\scal{\gene g+j}\]
where the left hand-side is given by \eqref{scal} and the right-hand side by \eqref{scal0}.}
\end{defi}
\begin{defi}[Definition of  $\scal{\cdot}$ on $\tzero$]
\label{defi:limitingcovariance2}
For any $\psi\in \tzero$, define
{\begin{equation}
\label{varformulascal}
\scal{\psi}=\sup_{\substack{g\in \tcal\\
\cur\in J^\omega}}\bigg\{2\scal{\psi\;,\;\gene g+\cur}-\scal{\gene g+\cur}\bigg\},\end{equation}
where $\tcal, \tzero $ and $J^\omega$ were defined in \eqref{eq:DefT0} and \eqref{eq:DefJomega}, and the two terms inside braces are respectively given by 
\eqref{scal0} and \eqref{scal}.}

For $\psi\in \tzero$ and  $\cur_1^{\Phi_1}+\cur_2^{\Phi_2}+\gene g\in J^*+\gene\mathcal{C}$, we also define
\[\scal{\psi+\gene g+ \cur_1^{\Phi_1}+\cur_2^{\Phi_2}}\egal \scal{\gene g+\cur_1^{\Phi_1}+\cur_2^{\Phi_2}}+\scal{\psi}+2\scal{\psi,\gene g+\cur_1^{\Phi_1}+\cur_2^{\Phi_2}},\]
where the three terms in the right-hand side are respectively given by \eqref{scal0}, \eqref{varformulascal} and \eqref{scal}.

These definitions allow us to finally define on $\tzero+J^*+\gene\mathcal{C}$ a bilinear form  $\scal{\cdot, \cdot}$ by letting $\scal{\psi, \psi}=\scal{\psi}$ for any $\psi\in\tzero+J^*+\gene\mathcal{C}$, by polarization identity on ${\tzero}^2$ and $(J^*+\gene\mathcal{C})^2$, and by \eqref{scal} on $\tzero\times (J^*+\gene\mathcal{C})$.
\end{defi}

\begin{rema}
We will see in the proof of Theorem \ref{thm:limcovariance} below that this definition coincides with Definition \ref{defi:limitingcovariance1} for any $\psi\in\tzero\cap \{J^*+\gene\mathcal{C}\}\subset J^\omega+\gene\tcal$, since in this case the supremum in \eqref{varformulascal} is reached for $f=\gene g+j^{a,b}$ itself. 
\end{rema}

For any cylinder function $\psi$,  recall that  $s_\psi$ is the smallest fixed integer such that $\psi$ is measurable with respect to $\mathcal{F}_{s_\psi}$, and let $l_{\psi}=l-s_\psi-1$ for any integer $l$ large enough. 
The following result justifies the definitions above, and states that $\scal{\psi}$ 
defined for any $\psi\in\tzero+J^*+\gene\mathcal{C}$ is the limit of \eqref{covlim}.

\begin{theo}\label{thm:limcovariance}
Fix $\param \in \pset$, and a sequence  $(\K_l)_{l\in \N}$ such that $\K_l\in \Ksett_l$ and  $\normm{\param_{\K_l}-\param}\to 0$, where $\param_{\K_l}\in \pset $ is the grand-canonical parameter defined in \eqref{defi:paramK}. 

The bilinear form $\scal{\cdot,\cdot}$ introduced in  Definition \ref{defi:limitingcovariance2} is a semi-inner product on $\tzero+J^*+\gene\mathcal{C},$ and, for any functions $\psi, \varphi \in  \tzero+J^*+\gene\mathcal{C}$,
\begin{equation}
\label{limitcovQtty}
\lim_{l\to \infty}\frac{1}{(2l+1)^2}\Ecml\pa{(-\gene_{l}^{-1}) \sum_{x\in B_{l_{\psi}}}\tau_x \psi\; .\sum_{x\in B_{l_{\varphi}}}\tau_x \varphi}\egal \scal{\psi, \varphi}.
\end{equation}
Furthermore, for any $\psi, \varphi \in  \tzero+J^*+\gene\mathcal{C}$, the application $\param\to \scal{\psi, \varphi}$ is continuous in $\param$, and the convergence above is uniform in $\param$. In particular, for any $\psi\in \tzero+J^*+\gene\mathcal{C}$, 
\begin{equation}
\label{covunif}
\lim_{l\to \infty}\sup_{\K\in \Ksett_l}\frac{1}{(2l+1)^2}\Ecmlk\pa{(-\gene_{l}^{-1}) \sum_{x\in B_{l_{\psi}}}\tau_x \psi\; .\sum_{x\in B_{l_{\psi}}}\tau_x \psi}\egal\sup_{\param\in \pset}\scal{\psi}.
\end{equation}
\end{theo}

The proof of Theorem \ref{thm:limcovariance} is the purpose of Section \ref{sec8}, and is postponed for now. It requires many adaptations because of the angles, but follows the global strategy presented in \cite{KLB1999}.
Let us explicitly write the dependency in $p$ and $f$ of ${\W}_i^l={\W}_{i,p}^{f,l}$ appearing in Proposition \ref{prop:TCL}, and define  for any $\param\in \pset$
\begin{equation}\label{vipdiff}\V_{i,p}^f(\param)=\curom_i+d_s(\am)\ddi\comp_{0}+\diff(\param)\ddi\conf_{0}+\gene f\in \tzero+J^*+\gene\mathcal{C}.\end{equation}
Recall that $l_f=l-s_{f}-1$, where  $s_f$ is also the size of the support of $\V_{i,p}^f$ (since we can safely increase $s_f$, in order to have $s_f=s_{\V_{i,p}^f}$) and define 
\[Q_1=(2l+1)^2{\W}_{i,p}^{f,l}-\sum_{x\in B_{l_f}}(\tau_x\V_{i,p}^f)(\densl)\eqand Q_2=\sum_{x\in B_{l_f}}\cro{(\tau_x\V_{i,p}^f)(\densl)-(\tau_x\V_{i,p}^f)(\param)}.\] 
For $h$ a cylinder function measurable w.r.t. sites in $B_l$, define $\dir_{l, \K}(h)=\Ecmlk(h(-\gene_l) h)$. For $\param_{\K_l}\to \param$, the variational formula for the variance yields

\begin{align*}\Ecml\Big({\W}_{i,p}^{f,l}&(-\gene_{l}^{-1}){\W}_{i,p}^{f,l}\Big)= \sup_{h}\left\{\Ecml\pa{h{\W}_{i,p}^{f,l}}-\dir_{l, \K_l}(h)\right\}\\
\leq& \sup_{h}\left\{\frac{1}{(2l+1)^2}\Ecml\pa{h \sum_{x\in B_{l_f}}(\tau_x\V_{i,p}^f)(\param)}-\frac{1}{3}\dir_{l, \K_l}(h)\right\}\\
&+ \sup_{h}\left\{ \frac{1}{(2l+1)^2}\Ecml\pa{hQ_1}-\frac{1}{3}\dir_{l, \K_l}(h)\right\}+ \sup_{h}\left\{ \frac{1}{(2l+1)^2}\Ecml\pa{hQ_2}-\frac{1}{3}\dir_{l, \K_l}(h)\right\}\\
\leq& \frac{3}{(2l+1)^4}\Ecml\pa{(-\gene_{l}^{-1}) \sum_{x\in B_{l_{f}}}(\tau_x \V_{i,p}^{f})(\param)\; .\sum_{x\in B_{l_{f}}}(\tau_x\V_{i,p}^{f})(\param)}\\
&+\sup_{h}\left\{ \frac{1}{(2l+1)^2}\Ecml\pa{hQ_1}-\frac{1}{3}\dir_{l, \K_l}(h)\right\}+\sup_{h}\left\{ \frac{1}{(2l+1)^2}\Ecml\pa{hQ_2}-\frac{1}{3}\dir_{l, \K_l}(h)\right\}.
\end{align*}
Since the discrepancies in ${Q_1}=(2l+1)^2{\W}_{i,p}^{f,l}-\sum_{x\in B_{l_f}}\V_{i,p}^f(\densl)$ occur only in $B_{l-1}\setminus B_{l_f}$, letting $\gamma=1/(2l+1)^2$, Lemma \ref{lem:techps} below yields that the second term above is less than 
\[C_f\abs{B_{l-1}\setminus B_{l_f}}(2l+1)^{-4}=O(l^{-3}).\] 
The last term multiplied by $(2l+1)^2$ vanishes as well thanks to Lemma \ref{lem:techps} and because the diffusion coefficients $d_s$ and $\diff$ are continuous in $\param$. Furthermore, as in Lemma \ref{lem:techps}, both of these convergences are uniform in $\K_l$ and $\param$. We can therefore apply Theorem \ref{thm:limcovariance} to the first term to obtain that for any $f\in \mathcal{C}$,
\begin{equation*}\lim_{l\to \infty}\sup_{\K}{(2l+1)^2}\Ecmlk\pa{{\W}_{i,p}^{f,l}(-\gene_{l})^{-1}{\W}_{i,p}^{f,l}}\leq  3\sup_{\param\in \pset} \scal{\V_{i,p}^{f}(\param)},\end{equation*}
therefore to prove Proposition \ref{prop:TCL}, and thus Equation \eqref{k4}, it is sufficient to prove
\begin{equation}\label{decompunif}\inf_{f\in\mathcal{C}}\lim_{p\to\infty}\sup_{\param\in \pset} \scal{\V_{i,p}^{f}(\param)}=0.
\end{equation} 
This estimate is proved later on in Proposition \ref{prop:NGEuniforme}, and requires to understand the structure of the space $\tzero+J^*+\gene\mathcal{C}$ equipped with $\scal{\cdot}$. It is the main result of this section.
\bigskip

For any $\Phi\in C^1(\ctoruspi)$ and any $\param\in \pset$, {we shorten}
\[\Egcm(\Phi):=\Egcm(\Phi(\theta_0)\mid \conf_0=1)  \eqand V_{\param}(\Phi):=Var_{\param}(\Phi(\theta_0)\mid \conf_0=1),\] 
\[Cov_{\param}(\omega, \Phi)=\Egcm(\omega \Phi)-\Egcm(\omega)\Egcm(\Phi),\eqand \widehat{\Phi}( \theta)=\Phi(\theta)-\Egcm(\Phi).\]
In particular, we denote by $\cur^{\widehat{\Phi}}_i=\cur^{\Phi}_i-\Egcm(\Phi)\cur_i=\cur^{\Phi}_i+\Egcm(\Phi)\ddi \conf$ the associated current. Note that any element $\cur_1^{\Phi_1}+\cur_2^{\Phi_1}$ of $J^*$ can be written as a linear combination of the $\cur_i^{\widehat{\Phi}_i}$ and $\cur_i$'s, $i=1,2$.
For any fixed $\param$, we finally define the function $h_i^p$ by
\begin{align*}h_i^p(\confhat)=d_s(\alpha)(\ddi \comp_0+\Egcm(\omega)\cur_i)&=\ddi \cro{ d_s(\alpha)(\com_0\1_{E_p}-\Egcm(\omega)\conf_0)} \\
&=d_s(\am)(\comz_{e_i}-\comz_0)-d_s(\am)\cro{\com_{e_i}\1_{\tau_{e_i}E_p^c}-\com_0\1_{E_p^c}},
\end{align*}
where as before $E_p=\Big\{\sum_{x\in B_p}\conf_x\leq\abss{B_{p}}-2\Big\}$. 
\medskip

We can now rewrite \eqref{vipdiff} as
\begin{equation}\label{vipdef2}\V_{i,p}^f(\param)=\curohat_i+h_i^p+\gene f.\end{equation}
Note that both $\curohat_i$ and $h_i^p$ depend on $\param$ as well as $\omega$, but to simplify notations, we do not write it explicitly. Throughout this section, we will not indicate the dependencies in $\omega$ which is a fixed smooth function.
We now compute the inner product $\scal{\cdot, \cdot}$ of $h_i^p$ with elements of $J^*+\gene \mathcal{C}$.
\begin{coro} \label{actioncurp}
For any $\param\in \pset$, $g\in \mathcal{C}$, $\Phi\in C^1(\ctoruspi)$ and $i,k=1,2$,
\begin{equation}
\label{eq:scalhi}
\scal{ h_i^p,\gene g} \egal 0, \quad  \scal{h_i^p,\cur^{\widehat{\Phi}}_k}\;=\1_{\{i=k\}}q^\Phi_p(\param) \eqand \scal{h_i^p,\cur_k}\;=\1_{\{i=k\}}r_p(\param),
\end{equation}
where we shortened 
\[q^\Phi_p(\param)=-\am d_s(\alpha)Cov_{\param}(\omega, \Phi)\gcm(E_p|\conf_0=1)\]
and
\[r_p(\param)=d_s(\alpha)\Egcm(\omega)\Egcm\Big(\conf_0\1_{E_p^c}\big[1-\conf_{e_1}-(2p+1)^2(\am-\conf_{e_1})\big]\Big).\]
Furthermore, shortening $q_p(\param):=q_p^\omega(\param)$,
\begin{equation}
\label{eq:limqprp}
\lim_{p\to\infty}\sup_{\param\in\pset}|q_p(\param)\gcm(E_p|\conf_0=1)+\am d_s(\alpha)V_{\param}(\omega)|=0\eqand \lim_{p\to\infty}\sup_{\param\in\pset}\frac{r^2_p(\param)}{\alpha(1-\alpha)}=0.
\end{equation}
In particular, $q_p(\param)\to -\am d_s(\alpha)V_{\param}(\omega)$ and $r_p(\param)\to 0$ as $p\to\infty$ uniformly in $\param\in \pset$.
\end{coro}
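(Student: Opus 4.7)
The three inner-product identities in \eqref{eq:scalhi} are direct computations from Definition \ref{defi:limitingcovariance1}, based on the gradient structure $h_i^p = \ddi F^p$ with $F^p := d_s(\am)(\com_0\1_{E_p} - \Egcm(\omega)\conf_0)$ a local function independent of $i$, together with the identity $\Egcm(h_i^p) = 0$ (a consequence of translation invariance of $\gcm$). For $\scal{h_i^p, \gene g}$, applying \eqref{scal} yields $-\sum_{x\in\Z^2}\Egcm(h_i^p\tau_x g)$; this sum is finite since $h_i^p$ has mean zero (one may harmlessly recenter $g$ by $\Egcm(g)$, after which only $x$ in an $O(1)$-neighborhood of the support of $F^p$ contribute). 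The identity $h_i^p = \tau_{e_i}F^p - F^p$ combined with translation invariance $\Egcm(\tau_{e_i}F^p\cdot\tau_x g) = \Egcm(F^p\cdot\tau_{x-e_i}g)$ then allows a reindexation which telescopes the finite sum to zero.

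The computations of $\scal{h_i^p, \cur_k^{\widehat\Phi}}$ and $\scal{h_i^p, \cur_k}$ follow the same scheme starting from $-\sum_x x_k\Egcm(h_i^p\conf_x^{\bullet})$, the centering of $\conf_x^{\bullet}$ again rendering the sum finite. The telescoping now introduces the shift $x_k \mapsto x_k + \delta_{ik}$, so that after cancellation only the boundary contribution $\delta_{ik}\sum_{y\in B_p}\Egcm(F^p\eta'_y)$ survives, with $\eta'_y$ the centered version of $\conf_y^\Phi$ (respectively $\conf_y$). For $\cur_k^{\widehat\Phi}$, the mean-zero property of $\widehat\Phi$ conditional on $\conf_0=1$ forces the $y \neq 0$ terms to vanish, and the single $y=0$ contribution is computed in closed form as $-\am d_s(\am)\mathrm{Cov}_\param(\omega,\Phi)\gcm(E_p|\conf_0=1)$, exactly $q_p^\Phi(\param)$. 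For $\cur_k$, the permutation symmetry of the product measure $\gcm$ on $B_p\setminus\{0\}$ converts $\sum_{y\in B_p\setminus\{0\}}\Egcm(F^p(\conf_y - \am))$ into $(|B_p|-1)$ times a single representative term, which combined with the $y=0$ term recovers the stated expression for $r_p(\param)$.

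For the asymptotic identities \eqref{eq:limqprp}, substituting the formula for $q_p$ rewrites the target error as
\[q_p(\param)\gcm(E_p|\conf_0=1) + \am d_s(\am)V_\param(\omega) = \am d_s(\am)V_\param(\omega)\bigl(1 - \gcm(E_p|\conf_0=1)^2\bigr).\]
I would split $\pset$ along $\am \leq 1 - \delta$ and $\am > 1 - \delta$: on the first piece a direct binomial estimate (since conditional on $\conf_0=1$ the remaining sites in $B_p$ are i.i.d. Bernoulli$(\am)$) gives $\gcm(E_p|\conf_0=1)\to 1$ uniformly in $p$ at a rate $O(|B_p|(1-\delta)^{|B_p|-2})$, while the other factors are bounded on $\pset$; on the second piece, $d_s(\am)$ is uniformly small by continuity of $d_s$ and the identity $d_s(1) = 0$ recalled from \cite{LOV2001}. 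A standard $\varepsilon/2$ argument --- choose $\delta$ small to handle the boundary, then $p$ large to handle the bulk --- yields uniform convergence on $\pset$. The treatment of $r_p^2/\am(1-\am)$ is analogous: the prefactor $\Egcm(\conf_0\1_{E_p^c}) = O(\am)$ near $\am = 0$ absorbs the $\am$ in the denominator, while $d_s(\am)$ continues to control the degeneration near $\am = 1$. The main obstacle is precisely this uniformity near $\am = 1$, where the exclusion process loses its mixing properties and one must rely on the vanishing of $d_s$ at the saturated density to compensate.
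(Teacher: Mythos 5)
Your telescoping computation of the three inner products is correct and provides a cleaner algebraic route than the paper's. Writing $h_i^p=\ddi F^p$ with $F^p=d_s(\am)(\com_0\1_{E_p}-\Egcm(\omega)\conf_0)$, reindexing $x\mapsto x-e_i$ in the sums coming from \eqref{scal} reduces all three identities to $\1_{\{i=k\}}$ times a single sum over $y\in B_p$; the paper instead splits $h_i^p$ into its $\widehat{\omega}$-centered and cutoff parts and decides term by term which $x$ contribute, landing for $\cur_k$ on a sum over the boundary layer $\{x_i=-p\}\cup\{x_i=p+1\}$. The two decompositions yield the same closed forms. (You dropped a sign in your intermediate expression for the surviving boundary contribution, but your stated formulas for $q_p^\Phi$ and $r_p$ are correct.)

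The argument you sketch for the second limit in \eqref{eq:limqprp}, however, has a real gap. On $\{\am>1-\delta\}$, the bound $d_s(\am)\leq C(1-\am)$ together with the crude majorization $|1-\conf_{e_1}-(2p+1)^2(\am-\conf_{e_1})|\leq 1+(2p+1)^2$ leaves the factor $(2p+1)^2$ uncontrolled as $p\to\infty$, so a naive ``$\delta$ small, then $p$ large'' split cannot close. The missing ingredient, which the paper flags explicitly (``the extra factor $(2p+1)^2$ is balanced out as $\alpha\to 1$ by the factor $\am-\conf_1$''), is the explicit double cancellation
\[
\Egcm\big(\conf_0\1_{E_p^c}(\am-\conf_{e_1})\big)=-\big((2p+1)^2-2\big)\,(1-\am)^2\,\am^{(2p+1)^2-1}.
\]
Consequently $(2p+1)^2\Egcm(\conf_0\1_{E_p^c}(\am-\conf_{e_1}))$ stays bounded \emph{but does not vanish} uniformly in $\am$: its supremum in $\am$ is of order $1$, attained near $\am\approx 1-2(2p+1)^{-2}$, a density that moves with $p$ and escapes every fixed threshold $1-\delta$. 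At that density $d_s(\am)=O((2p+1)^{-2})$, and it is only the combination of this with the additional $(1-\am)^2$ factor that gives $r_p^2/(\am(1-\am))=O((2p+1)^{-2})$; discarding the $(1-\am)^2$ as your bound does loses a factor $(2p+1)^4$. Your handling of the first limit (for $q_p$) is sound, since no such polynomially growing coefficient appears there.
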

\proofthm{Corollary \ref{actioncurp}}{The three identities in \eqref{eq:scalhi} are consequences of \eqref{scal}. Regarding the first one, 
\[\scal{ h_i^p,\gene g}=-\Egcm(h_i^p\Sigma_g)=-d_s(\alpha)\Egcm\pa{\Sigma_g\cro{\com_{e_i}\1_{\tau_{e_i}E_p}-\com_0\1_{E_p}-\Egcm(\omega)\conf_{e_i} +\Egcm(\omega)\conf_0)}}=0\]
by translation invariance of $\gcm$.

\medskip

For the second, we write
\begin{align*}
\scal{ h_i^p,\cur_k^{\widehat{\Phi}}}&=-\sum_{x\in \Z^2}x_k\Egcm(h_i^p \conf_x^{\widehat{\Phi}})\\
&=-d_s(\am)\sum_{x\in \Z^2}x_k\Egcm\big((\conf_{e_i}^{\widehat{\omega}}-\conf_0^{\widehat{\omega}}) \conf_x^{\widehat{\Phi}}\big)+d_s(\am)\sum_{x\in \Z^2}x_k\Egcm\big((\conf_{e_i}^{\omega}\1_{\tau_{e_i}E^c_p}-\conf_0^{\omega}\1_{E^c_p}) \conf_x^{\widehat{\Phi}}\big). 
\end{align*}
Since by construction $\widehat{\Phi}$ has mean $0$ w.r.t. the product measure $\mesinv$, for any function $\psi$ which does not depend on $\theta_x$, $\Egcm(\psi\conf_x^{\widehat{\Phi}})=0$. In particular, in both sums, any term $x\neq 0,e_i$ vanishes. The terms for $x=0$ also vanishes because of the factor $x_k$, and so does the term for $x=e_i$ if $i\neq k$. This yields
\begin{multline*}
\scal{ h_i^p,\cur_k^{\widehat{\Phi}}}=-\1_{\{i=k\}}d_s(\am)\left\{\Egcm\big(\conf_{e_i}^{\widehat{\omega}}\conf_{e_i}^{\widehat{\Phi}}\big)-\Egcm\big(\conf_{e_i}^{\omega}\conf_{e_i}^{\widehat{\Phi}}\1_{\tau_{e_i}E^c_p}\big)\right\}\\
=-\1_{\{i=k\}}\am d_s(\am) Cov_{\param}(\omega, \Phi)\gcm(E_p|\conf_0=1)
\end{multline*}
as wanted. 

\medskip

We now turn to the third identity, for which we can write, applying the same steps as before
\begin{align*}
\scal{ h_i^p,\cur_k}=-d_s(\am)\sum_{x\in \Z^2}x_k\Egcm\big((\conf^{\widehat{\omega}}_{e_i}-\conf^{\widehat{\omega}}_0) \conf_x\big)+d_s(\am)\sum_{x\in \Z^2}x_k\Egcm\Big((\com_{e_i}\1_{\tau_{e_i}E^c_p}-\com_0\1_{E^c_p}) \conf_x\Big). 
\end{align*}
By definition of $\widehat{\omega}$, each term in the first sum vanishes. Regarding the second term, recall that $B_p(x)=x+B_p$, for any $x\in (B_p\cup B_p(e_i))^c$ and any $x\in B_p\cap B_p(e_i)\setminus\{0,e_i\}$, the corresponding contribution vanishes, because $\com_{e_i}\conf_x\1_{\tau_{e_i}E^c_p}$ and $ \com_0\conf_x\1_{E^c_p}$ have the same distribution. The term for $x=0$ vanishes once again because of the factor $x_k$. We can therefore write
\begin{multline*}
\scal{ h_i^p,\cur_k}=\1_{\{i=k\}}d_s(\am)\Egcm\Big((\com_{e_i}\1_{\tau_{e_i}E^c_p}-\com_0\1_{E^c_p}) \conf_{e_i}\Big)\\
+  d_s(\am)\sum_{\substack{x\in B_p,\; x_i=-p\\\mbox{ \small{or} }x\in B_p(e_i),\;x_i=p+1}}x_k\Egcm\Big((\com_{e_i}\1_{\tau_{e_i}E^c_p}-\com_0\1_{E^c_p}) \conf_x\Big). 
\end{multline*}
If $i\neq k$, the sum in the second line vanishes because the contributions for $x_k=q$ cancel out the contributions for $ x_k=-q$. 
If $i=k$, all the contributions for $x_i=-p$ (i.e. $x\in B_p\setminus B_p(e_i)$) are identical and equal to 
$-pd_s(\am)\Egcm(\omega)\Egcm\Big(\am \conf_{e_i}\1_{\tau_{e_i}E^c_p}-\conf_x\conf_0\1_{E^c_p}\Big)=-pd_s(\am)\Egcm(\omega)\Egcm\Big((\am-\conf_{e_1})\conf_0\1_{E^c_p}\Big)$
and the contributions for $x_i=p+1$ (i.e. $x\in B_p(e_i)\setminus B_p$) are each equal to $-(p+1)d_s(\am)\Egcm(\omega)\Egcm\Big((\am-\conf_{e_1})\conf_0\1_{E^c_p}\Big)$.
Since each of those contributions appear $2p+1$ times, we finally obtain as wanted that
\begin{equation*}
\scal{ h_i^p,\cur_k}=\1_{\{i=k\}}d_s(\am)\Egcm(\omega)\cro{\Egcm\Big((1-\conf_{e_1})\conf_0\1_{E^c_p}\Big)-(2p+1)^2\Egcm\Big((\am-\conf_{e_1})\conf_0\1_{E^c_p}\Big)}. 
\end{equation*}

\medskip

According to Proposition \ref{prop:regularitySDC}, $c(1-\rho)\leq d_s(\rho)\leq C(1-\rho)$ for some positive constants $c, $ $C$. Using this fact, the uniform estimates \eqref{eq:limqprp} follow from {elementary computations~: for high densities,  the factor} $\gcm(E^c_p|\conf_0)$ fail to converge uniformly in $\param$, but then $d_s(\am)$ provides the needed control. Regarding $r_p$ the principle is the same, and the extra factor $(2p+1)^2$ is balanced out as $\alpha\to 1$ by the factor $\am-\conf_1$.  
{We start with the first estimate. To prove that $q_p(\param)\gcm(E_p|\conf_0=1)+\am d_s(\alpha)V_{\param}(\omega)$ vanishes uniformly in $\param$, by definition of $q_p$ and since $\omega$ is bounded, it is enough to prove that $|1-\gcm(E_p|\conf_0=1)^2|\am d_s(\alpha)$ also does.
The probability $\gcm(E_p|\conf_0=1)$ is explicit, and given by  
\[\gcm(E_p|\conf_0=1)=1-\am^{P}-P(1-\am)\am^{P-1}\]
where we shortened $ P=(2p+1)^2-1=|B_p\setminus\{0\}|$. In particular, since $d_s(\am)\leq C(1-\am)$, 
\[|1-\gcm(E_p|\conf_0=1)^2|\am d_s(\alpha)\leq C\am(1-\am)\cro{2\am^{P}+2P(1-\am)\am^{P-1}-[\am^{P}+P(1-\am)\am^{P-1}]^2}.\]
Thanks to the prefactor $1-\am$, Each of the terms above is bounded by $P^a(1-\am)^{a+1}\am^{C_1P}$ for some different constants $a\in \{0,1,2\}$ and $C_1>0$ independent of $P$. The previous expression  is maximal in $\am_P=C_1P/(a+1+C_1P)$, and is therefore, uniformly in $\param\in \pset$, less than
\[P^a\pa{\frac{a+1}{a+1+C_1P}}^{a+1},\]
which vanishes as wanted as $P\to\infty$.

\medskip

We now turn to the second estimate. Once again, since  $d_s(\am)\leq C(1-\am)$, we obtain immediately
\[\frac{r_p(\param)^2}{\am(1-\am)}\leq C' \frac{1-\alpha}{\am}\Egcm\Big(\conf_0\1_{E_p^c}\big[1-\conf_{e_1}-(2p+1)^2(\am-\conf_{e_1})\big]\Big)^2.\]
The expectation above can be split in two terms, resp. $\pa{1-(2p+1)^2}\Egcm\Big(\conf_0(1-\conf_{e_1})\1_{E_p^c}\Big)$ and $(1-\am)(2p+1)^2\Egcm\Big(\conf_0\1_{E_p^c}\Big)$. We still shorten $ P=(2p+1)^2-1=|B_p\setminus\{0\}|$, to obtain the bound
\[\abs{\Egcm\Big(\conf_0\1_{E_p^c}\big[1-\conf_{e_1}-(2p+1)^2(\am-\conf_{e_1})\big]\Big)}\leq P(1-\am)\am^P+\am(1-\am)(P+1)\gcm(E_p^c|\conf_0=1).\]
the last probability $\gcm(E_p^c|\conf_0=1)$ has already been computed for the previous estimate, and one obtains straightforwardly that $r_p(\param)^2/\am(1-\am)$ is also bounded from above by a (finite) sum of terms of the form $C_1 P^a(1-\am)^{a+1}\am^{C_2p}$ for $a\in \{2,3,4\}$ and $C_1$, $C_2$ positive constants. As before, each of those vanishes uniformly in $\param\in \pset$, which concludes the proof.
}
}

We are ready to investigate the structure of $\tzero$ with respect to the semi-norm $\scal{\cdot}$ on $\tzero+J^*+\gene\mathcal{C}$. 
Denote by $\mathcal{N}_{\param}=Ker\scal{\cdot}$ and define $\halpha$ the completion of ($\tzero+J^*+\gene\mathcal{C})/\mathcal{N}_{\param}$ 
with respect to $\scal{\cdot}^{1/2}$.  {We need to define $\scal{\cdot}$ on a rather general space, including in particular $J^*+\gene\mathcal{C}$, 
in order to be able later on to estimate the drift contribution to the hydrodynamic limit. However for now, we focus on the symmetic current, and further define  $H^\omega$ the closure in $\halpha$ of $(\tzero+J^{\omega}+\gene\tcal)/\mathcal{N}_{\param}$}.
\index{$ \halpha$\dotfill quotient of $\tzero$ by $\mathcal{N}_{\param}$}
{\begin{prop}[Structure of $H^\omega$]\label{prop:structureHalpha}
For any $\param\in \pset$,  $(\halpha, \scal{\cdot}^{1/2})$ is a Hilbert space, and 
\[H^\omega=\frac{\overline{\gene \tcal}}{\mathcal{N}_{\param}} \oplus J^{\omega},\]
where $\overline{\gene \tcal}/\mathcal{N}_{\param} $ is the closure of $\gene \tcal/\mathcal{N}_{\param}$ w.r.t. $\scal{\cdot}$ in $\halphaz$.
\end{prop}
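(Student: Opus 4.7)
\medskip

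The plan is to split the statement into its two components and treat them separately. First, the Hilbert space assertion is the routine part: one checks that $\scal{\cdot,\cdot}$, as defined on $\tzero+J^*+\gene\mathcal{C}$ in Definitions \ref{defi:limitingcovariance1}--\ref{defi:limitingcovariance2}, is bilinear, symmetric, and non-negative (non-negativity of $\scal{\psi}$ for $\psi\in\tzero$ is built into the variational formula \eqref{varformulascal}, and on $J^*+\gene\mathcal{C}$ the quadratic form \eqref{scal0} is manifestly a sum of squares). The Cauchy--Schwarz inequality for $\scal{\cdot,\cdot}$ then ensures that $\mathcal{N}_{\param}$ is a linear subspace, so $(\tzero+J^*+\gene\mathcal{C})/\mathcal{N}_{\param}$ carries a genuine inner product, and $\halpha$ is defined as its completion. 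One point to double-check is that $\scal{\psi}$ coincides with the expression given by Definition \ref{defi:limitingcovariance1} whenever $\psi\in\tzero\cap\{J^*+\gene\mathcal{C}\}$, so that the two definitions glue consistently; this is a direct computation showing that the supremum in \eqref{varformulascal} is attained at $f=\psi$ itself.

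Once $\halpha$ is a Hilbert space, the subspace $\overline{\gene\tcal}/\mathcal{N}_{\param}$ is closed by construction, so the orthogonal projection theorem yields
\[
\halpha=\overline{\gene\tcal}/\mathcal{N}_{\param}\;\oplus\;\big(\overline{\gene\tcal}/\mathcal{N}_{\param}\big)^{\perp}.
\]
Given $\psi\in\tzero$, write $\psi=\psi_1+\psi_2$ in $\halpha$ according to this decomposition. The proposition reduces to proving the following key claim: any $\widetilde{\psi}\in\tzero$ whose class in $\halpha$ is orthogonal to $\overline{\gene\tcal}$ lies in $J^{\omega}$ modulo $\mathcal{N}_{\param}$. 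Once this is established, $\psi_2$ is represented by an element of $J^\omega$, and we obtain the inclusion $\tzero/\mathcal{N}_\param\subset\overline{\gene\tcal}/\mathcal{N}_\param\oplus J^\omega$ as required.

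The main step, and the real obstacle, is the key claim just above. The strategy is the classical Varadhan--Quastel identification of the orthogonal complement of $\gene\tcal$ with the space of conserved quantities of the symmetric exclusion dynamics, which here has dimension $4$: there are two conserved macroscopic observables ($\sum_x\conf_x$ and $\sum_x\com_x$) and two spatial directions, yielding precisely the four generators $\curom_1,\curom_2,\cur_1,\cur_2$ of $J^\omega$. Concretely, I would translate orthogonality to $\gene\tcal$ in $\scal{\cdot,\cdot}$ into a closedness condition on the associated germ of microscopic differential form, introduce the space of germs of closed forms as in Subsection \ref{subsec:differentialforms}, and apply the decomposition theorem proved there: every closed germ is, modulo exact germs ($\gene\tcal$-coboundaries), a linear combination of the four elementary currents. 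The map from coefficients $(a,b)\in\R^4$ to $\scal{\cdot,\curohat_i}$, $\scal{\cdot,\cur_i}$ is then inverted through the spectral-gap-based estimates of Subsection \ref{subsec:spectralgap}.

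The genuine difficulty, and the reason this material is postponed to Section \ref{sec8}, is that the spectral gap of the symmetric exclusion generator degenerates as the local density approaches $1$, so the usual Varadhan argument does not apply uniformly in $\param$. This is precisely why the cutoff functions $\1_{E_p}$ and the auxiliary functions $h_i^p$ appear: they isolate the non-degenerate regime, and Corollary \ref{actioncurp} together with the limits \eqref{eq:limqprp} will show that the inner products of $h_i^p$ with $J^\omega$ converge to a non-degenerate Gram matrix as $p\to\infty$, securing the four-dimensional identification in the limit. Proving that the closed-form decomposition survives this cutoff procedure uniformly in $\param\in\pset$ is the delicate technical point; once it is in hand, the key claim, and hence Proposition \ref{prop:structureHalpha}, follow.
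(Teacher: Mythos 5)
Your proposal is structurally sound for the routine part (the Hilbert-space check, the gluing of Definitions~\ref{defi:limitingcovariance1} and~\ref{defi:limitingcovariance2}, and the identification of the direct sum), but for the key inclusion you are reaching for much heavier machinery than the paper uses, and as written the argument has a gap.

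The crucial simplification you miss: the variational formula~\eqref{varformulascal} \emph{is} the formula $\scal{\psi}=\sup_{v}\{2\scal{\psi,v}-\scal{v}\}$ over $v=\gene g+\cur^{a,b}$ ranging in $\gene\tcal+J^\omega$, which a reader familiar with Hilbert-space geometry will immediately recognize as $\|P_V\psi\|^2$ with $V=\overline{\gene\tcal}/\mathcal{N}_\param\oplus J^\omega$. Since $\scal{\psi}$ is also $\|\psi\|^2$, Pythagoras forces $(I-P_V)\psi=0$, i.e., $\psi\in V$. This is essentially tautological once one accepts Definition~\ref{defi:limitingcovariance2}; no appeal to germs of closed forms, to Proposition~\ref{prop:GCF}, or to the spectral-gap estimates of Subsection~\ref{subsec:spectralgap} is needed. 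The paper phrases it slightly differently — if $h\in\tzero$ satisfies $\scal{h,\gene g+\cur^{a,b}}=0$ for all $g,a,b$ then \eqref{varformulascal} gives $\scal{h}=0$ — but the content is the same, and the whole inclusion is a one-liner. Your detour through the closed-form decomposition of Section~\ref{sec8} would not be circular (Section~\ref{sec8} does not cite Proposition~\ref{prop:structureHalpha}), but it would be a long way around an argument the variational definition makes immediate. In essence, Definition~\ref{defi:limitingcovariance2} is tailored precisely so that Proposition~\ref{prop:structureHalpha} is easy, and the hard work is postponed to Theorem~\ref{thm:limcovariance}, whose proof is where the germ-of-closed-form machinery actually earns its keep.

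There is also a genuine logical gap in your reduction. You decompose $\psi=\psi_1+\psi_2$ in $\halpha$ with $\psi_1\in\overline{\gene\tcal}/\mathcal{N}_\param$ and $\psi_2$ orthogonal, and then state the ``key claim'' for elements $\widetilde{\psi}\in\tzero$. But $\psi_2$ is only an element of the abstract completion $\halpha$; it has no reason to lie in $\tzero$, so a claim about $\tzero\cap(\overline{\gene\tcal})^\perp$ does not settle what you need. The paper avoids this entirely by never splitting $\psi$ and instead arguing directly from the variational formula. Finally, you gesture at, but do not carry out, the verification that the sum $\overline{\gene\tcal}/\mathcal{N}_\param\oplus J^\omega$ is direct; the paper does this explicitly by testing a supposed linear relation against $h_i^p$ and then against $\cur_i$, using the inner-product formulas of Corollary~\ref{actioncurp} and taking $p\to\infty$. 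That step cannot be skipped.
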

\proofthm{Proposition \ref{prop:structureHalpha}}{First note that if $\am=0$ or $1$, $\scal{\cdot}\equiv0$ and therefore $\halpha=\{0\}$ is trivial. We now assume that $\param$ is such that $\am\in]0,1[$. 
Since we took the quotient by $\mathcal{N}_{\param}$, the fact that $(\halpha, \scal{\cdot}^{1/2})$ is a Hilbert space is immediate. 
By construction $H^{\omega}$  is a closed linear subspace of $\halpha$,  and the inclusion 
\[\frac{\overline{\gene \tcal}}{\mathcal{N}_{\param}}+ J^{\omega}\subset H^\omega\] 
is immediate, because $J^{\omega}=J^{\omega}/\mathcal{N}_{\param}$. Since both sets are closed subspaces of $\halpha$, we have 
\[H^\omega=\pa{\frac{\overline{\gene \tcal}}{\mathcal{N}_{\param}}+ J^{\omega}}\oplus\pa{\frac{\overline{\gene \tcal}}{\mathcal{N}_{\param}} + J^{\omega}}^{\perp, H^\omega},\]
where the second set on the right-hand side denotes the orthogonal complement of $\frac{\overline{\gene \tcal}}{\mathcal{N}_{\param}} + J^{\omega} $ in $H^\omega$. 
To prove the converse inclusion, it is therefore sufficient to prove that this orthogonal complement is reduced to $\{0\}$. 
This is rather straightforward, although a bit technical because of the different definitions for $\scal{\cdot}$. 
For that purpose, and to give a proof as clear as possible, let us shorten $M=\gene\tcal/\mathcal{N}_{\param}+J^{\omega}$, 
and denote by $m=j^{a,b}+ \gene h$ its elements. Since $\overline{M}^{\perp, H^\omega}\subset H^\omega$, and since $ H^\omega$ is by definition the closure of $(\tzero +M)/\mathcal{N}_{\param}$  any of its element can be written either as $g+m$, 
where $g\in \tzero$ and $m\in M$, or as the limit of elements of this type.
In order to avoid taking convergent sequences, fix 
\[g_0+m_0\in \overline{M}^{\perp, H^\omega},\] 
where $g_0\in \tzero$ and $m_0\in M$, we want to prove that $g_0+m_0=0$. By construction, for any $m\in M$
\[\scal{g_0+m_0,m}=0.\]
and since $g_0\in \tzero$, we can rewrite by the definition  of $\scal{\cdot}$ on $\tzero$ (cf. \eqref{varformulascal})
\begin{align}
\scal{g_0}&=\sup_{m\in M}\{2\scal{g_0,m}-\scal{m}\},
\end{align}
therefore there exists a sequence $(m^k)_{k\to\infty}$ of elements of $M$ such that $\scal{g_0+m^k}\to0$ as $k\to\infty$. We can thus write 
\[\scal{g_0+m_0}=\scal{g_0+m^k, g_0+m_0}+\scal{m_0-m^k, g_0+m_0}.\]
The second term vanishes because $m_0-m^k\in M$, whereas the first term in the right-hand side vanishes as $k\to\infty$, therefore $\scal{g_0+m_0}=0$ as wanted. The same proof holds if $g_0+m_0$ is replaced by a convergent sequence of elements of $\tzero +M$, which proves the reverse inclusion. 

Only remains to prove that the sum $\frac{\overline{\gene \tcal}}{\mathcal{N}_{\param}} + J^{\omega}$ is direct. Assume that for some coefficients $a_i$, $b_i$, and for some cylinder function $g\in \tzero$ 
\[\scal{\sum_{i=1,2}a_i\curohat_i+b_i\cur_i-\gene g}=0.\]
(We should really write this identity for a sequence $g_n$ instead of $g$, with the identity above holding only as $n\to\infty$, but this is purely cosmetic and the proof below holds in this case as well).
Thanks to equation \eqref{eq:scalhi}, we can take the inner product of the identity above w.r.t. $h_i^p$ and since we assumed that $0<\am<1$ let $p\to\infty$ to obtain that for $i=1,2$, $a_id_s(\alpha)V_{\param}(\omega)\am(1-\am)=0$,
therefore $ a_1V_{\param}(\omega)=a_2V_{\param}(\omega)=0$. In both cases, we therefore have $\scal{a_1\curohat_1}=\scal{a_2\curohat_2}=0$. This yields 
\[\scal{b_1\cur_1+b_2\cur_2-\gene g}=0,\]
so that we can now take the inner product with $\ddi\conf_0=-j_i$ (which is orthogonal to $\gene g$), to obtain that $b_1\am(1-\am)=b_2\am(1-\am)=0$, therefore $b_1=b_2=0$ as wanted. This proves that the sum is direct, and concludes the proof of Proposition \ref{prop:structureHalpha}.
}}

The next Proposition states that in $\halpha$, $\curom_i$ can be written as a combination of $h_i^p$ and $j_i$, up to a function which takes the form $\gene g$, and that the coefficients converge as $p\to \infty$ to those given in \eqref{vipdef2}.
\begin{prop}[Decomposition of the currents]
\label{prop:currentsdecomposition}
For any positive integer $p$, define 
\[c_p(\am)=\begin{cases}\gcm(E_p|\conf_0=1)^{-1}&\mbox{ if }\am<1\\
1&\mbox{else} \end{cases},
\eqand d_p(\param)=
\begin{cases}-r_p(\param)c_p(\am)/\am(1-\am)& \mbox{ if }0<\am<1\\
0&\mbox{else}                    
\end{cases},\] 
where $r_p$ was defined in Corollary \ref{actioncurp}. Then, for any $i\in 1,2$ and $\param \in \pset$. 
\begin{equation}
\label{scal3p}
\inf_{g\in \tcal}\scal{\curohat_i+c_p(\am)h_i^p+d_p(\param)\cur_i+\gene g}=0.
\end{equation}
Furthermore, any sequence $(g_m)_m$ ultimately realizing \eqref{scal3p} can be chosen independently of $p$, and also ultimately realizes 
\begin{equation}\label{arginf}\inf_{g\in \tcal}\scal{\curohat_i+\gene g}.\end{equation}
\end{prop}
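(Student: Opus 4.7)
The plan is to exploit the Hilbert space decomposition of Proposition~\ref{prop:structureHalpha}: since $h_i^p\in\tzero$, it admits a unique representation $h_i^p=\gamma_p+A_p\curohat_i+B_p\cur_i$ in $\halpha$, with $\gamma_p\in\overline{\gene\tcal}/\mathcal{N}_{\param}$, the components along $\curohat_{i'},\cur_{i'}$ for $i'\neq i$ being absent by spatial symmetry of $h_i^p$ (which only involves the direction $e_i$ and the reflection-symmetric box $B_p$, matching the symmetries of $\curohat_i,\cur_i$). Setting $c_p=-1/A_p$ and $d_p=-c_p B_p$ will make the $J^\omega$-component of $\curohat_i+c_p h_i^p+d_p\cur_i$ vanish in $\halpha$, leaving only $-c_p\gamma_p\in\overline{\gene\tcal}$, which by definition of closure is approximable in $\scal{\cdot}^{1/2}$ by $-\gene g_m$ for $g_m\in\tcal$, yielding \eqref{scal3p}. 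The boundary cases $\am\in\{0,1\}$ or $V_{\param}(\omega)=0$ I would handle separately, since then the relevant subspace of $\halpha$ degenerates.

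The main algebraic step will be to identify $A_p,B_p$. Because $\overline{\gene\tcal}\oplus J^\omega$ is a direct but \emph{not} orthogonal sum, I cannot simply divide the inner products $\scal{h_i^p,\curohat_i}=q_p(\param)$ and $\scal{h_i^p,\cur_i}=r_p(\param)$ from Corollary~\ref{actioncurp} by $\scal{\curohat_i}$ and $\scal{\cur_i}$. Instead, I would first verify the explicit algebraic identity
\[\ddi\com_0+\curom_i=\conf_0\conf_{e_i}(\omega(\theta_{e_i})-\omega(\theta_0)),\]
whose right-hand side lies in $\gene\tcal$ (it is a carr\'e-du-champ--type expression produced by $\gene$ acting on a suitable cylinder function of $\com_0$ and $\conf_{e_i}$). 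This gives $\ddi\com_0\equiv-\curom_i\pmod{\gene\tcal}$ in $\halpha$, and substituting into $h_i^p=d_s(\am)[\ddi\com_0-\ddi(\com_0\1_{E_p^c})+\Egcm(\omega)\cur_i]$ would yield $h_i^p\equiv-d_s(\am)\curohat_i-d_s(\am)\ddi(\com_0\1_{E_p^c})\pmod{\gene\tcal}$. The remaining cutoff term $\ddi(\com_0\1_{E_p^c})$ is small, and its $J^\omega$-projection can be computed by the same indicator manipulation used in Corollary~\ref{actioncurp}: this will produce the factor that renormalizes $A_p$ to the stated $c_p(\am)=\gcm(E_p\mid\conf_0=1)^{-1}$, together with a contribution proportional to $r_p(\param)$ fixing $d_p(\param)$.

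For the final assertion, the uniform limits \eqref{eq:limqprp} give $c_p(\am)\to1$ and $d_p(\param)\to0$ uniformly on $\pset$, and I would show $\scal{h_i^p-h_i^\infty}\to 0$ uniformly in $\param$, where $h_i^\infty=d_s(\am)(\ddi\com_0+\Egcm(\omega)\cur_i)$, using the bound $d_s(\am)\leq C(1-\am)$ from Proposition~\ref{prop:regularitySDC} to control the high-density indicator $\1_{E_p^c}$. Consequently $\curohat_i+c_ph_i^p+d_p\cur_i$ converges in $\halpha$ to $\curohat_i+h_i^\infty$, and a diagonal argument produces a single sequence $(g_m)\subset\tcal$ simultaneously realizing \eqref{scal3p} for every $p$ and $\inf_{g\in\tcal}\scal{\curohat_i+\gene g}$. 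The hard part will be the explicit identification of the $\gene\tcal$-representative of $\conf_0\conf_{e_i}(\omega(\theta_{e_i})-\omega(\theta_0))$ and the careful tracking of the cutoff corrections that produce the non-trivial normalization $\gcm(E_p\mid\conf_0=1)^{-1}$ in $c_p$.
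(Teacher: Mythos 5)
Your proposed main algebraic step fails. The identity
\[
\ddi\com_0+\curom_i=\conf_0\conf_{e_i}\big(\omega(\theta_{e_i})-\omega(\theta_0)\big)
\]
is correct, but the right-hand side is \emph{not} in $\gene\tcal$, nor in the domain $\tzero+J^*+\gene\mathcal{C}$ on which $\scal{\cdot}$ is defined. Indeed, for any cylinder function $g$, $\gene g$ vanishes whenever every site in the support of $g$ is occupied (all jumps are blocked), whereas $\conf_0\conf_{e_i}(\omega(\theta_{e_i})-\omega(\theta_0))$ is supported precisely on $\{\conf_0=\conf_{e_i}=1\}$ and is generically nonzero on full configurations; so no exact identity $\gene g=\conf_0\conf_{e_i}(\cdots)$ can hold. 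Nor can you salvage this modulo $\mathcal{N}_\param$: applying the bilinear form \eqref{scal} to $\psi=\conf_0\conf_{e_i}(\omega(\theta_{e_i})-\omega(\theta_0))$ yields $\scal{\psi,\cur_i^\omega}=-\Egcm(\psi\,\com_{e_i})=-\alpha^2 V_{\param}(\omega)\neq 0$, while $\gene\tcal$ is orthogonal to the currents; hence $\psi\notin\overline{\gene\tcal}/\mathcal{N}_\param$. This is not a technicality — it is the whole reason the cutoff $\1_{E_p}$ is built into $h_i^p$ from the start. The uncut gradient $\ddi\com_0$ simply does not live in $\halpha$, so you cannot decompose it and then "correct by $\ddi(\com_0\1_{E_p^c})$"; the cutoff is structural, not a perturbation of an uncut decomposition.

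A second, smaller issue is that you have no mechanism to produce the precise normalizations $c_p(\am)=\gcm(E_p\mid\conf_0=1)^{-1}$ and $d_p(\param)$; the paper obtains them by setting up a $2\times2$ matrix system from the four inner products of Corollary~\ref{actioncurp}, solving for the coefficient matrix $A_p$, and identifying it explicitly by invoking the variance identity $Q=\alpha V_{\param}(\omega)d_s(\alpha)I$ proved in Appendix~\ref{subsec:sdc} (Proposition~\ref{prop:scalvar}). That identity, which links the limiting variance of the current to the self-diffusion coefficient, is indispensable here, and your proposal never invokes it. Similarly, the claim that the components of $h_i^p$ along $\curohat_{i'},\cur_{i'}$ ($i'\neq i$) "vanish by spatial symmetry" needs justification because the decomposition $\overline{\gene\tcal}\oplus J^\omega$ is a direct sum, not an orthogonal one; the paper instead derives this from the matrix computation (the matrices $A_p$, $B_p$ turn out to be scalar multiples of $I$). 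Your final-paragraph argument for the $p$-independence of $(g_m)_m$ and the limit \eqref{arginf} is sound in spirit and is essentially how the paper closes the proof, but it hinges on the earlier steps being repaired.
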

\proofthm{Proposition \ref{prop:currentsdecomposition}}{
We start by clearing out the trivial cases when $\am=0$ and $\am=1$. In those, all quantities vanish and \eqref{scal3p} is trivially true for any coefficients. 
Another trivial case is when $V_{\param}(\omega)=0.$ In this case, $\curohat_i=0$ in $\halpha$, therefore, the $h_i^p$ and $\cur_i$ being orthogonal (as local gradients) to $\gene \tcal$, and $h_i^p$ being orthogonal to $\cur_k$ for $k\neq i$, as a consequence of Proposition \ref{prop:structureHalpha} we can then write 
$\scal{h_i^p+a_p\cur_i}=0$ for some constant $a_p$. 
This constant can be determined using Lemma \ref{actioncurp} and taking the inner product of the previous quantity with $\cur_i$, which yields  
$a_p=-r_p(\param)/\scal{\cur_i}=-r_p(\param)/\alpha(1-\am)$. In this case, $\scal{c_p(\am)h_i^p+d_p(\param)\cur_i}=0$ for any $p$, as wanted.

We now fix $\param\in \pset$ satisfying $\am\in ]0,1[$ and $V_{\param}(\omega)>0$.
Fix $p\in\N$, and define $c_p$, $d_p$ as in Proposition \ref{prop:currentsdecomposition}, we now prove that \eqref{scal3p} holds. According to Proposition \ref{prop:structureHalpha},  there exists coefficients $a_{i,k}^p$ and $b_{i,k}^p$ such that, 
\begin{equation}\label{projgrad}\inf_{ g \in \tcal}\scal{h_i^p+\sum_{k=1,2}a^p_{i,k}\curohat_k+b^p_{i,k}\cur_k+\gene g}=0.\end{equation}
In order not to burden the proof, we will assume that the infimum in $g$ is reached, i.e. that there exists a function $g_i^p\in \tcal$ such that
\begin{equation}\label{projass}\scal{h_i^p+\Big[\sum_{k=1,2}a^p_{i,k}\curohat_k+b^p_{i,k}\cur_k\Big]+\gene g_i^p}=0.\end{equation}
This assumption is purely for convenience, and we can substitute at any point to $g_i^p$ a sequence of functions $(g^p_{i,m})_{m\in \N}$ such that the previous identity holds in the limit $m\to \infty$. 

\bigskip

Using \eqref{scal}, one obtains immediately that $\scal{\curohat_i, \curohat_k}=\1_{\{i=k\}}V_{\param}(\omega)\am(1-\am)$, $\scal{\curohat_i, \cur_k}=0$ and $\scal{\cur_i,\cur_k}=\1_{\{i=k\}}\am(1-\am)$.
Using these formulas and Corollary \ref{actioncurp}, we take the inner product of the function in \eqref{projass} with $\curohat_l$, $\cur_l$, $\gene g_l^p$, and $h_l^p$, to obtain the four identities
\[\1_{\{i=l\}}q_p(\param)  +a_{i,l}^p V_{\param}(\omega)\alpha(1-\alpha)+\scal{\gene g_i^p,\curohat_l}=0\, ,\quad  \1_{\{i=l\}} r_p(\param)+ b_{i,l}^p \am(1-\am)=0,\]
\begin{equation}
\label{eq:2ndmatrix}
\sum_{k=1,2}a^p_{i,k}\scal{\curohat_k,\gene g_l^p}+\scal{\gene g_i^p,\gene g_l^p}=0\eqand \scal{h_i^p, h_l^p}+a^p_{i,l}q_p(\param) +b_{i,l}^pr_p(\param)=0.
\end{equation}
Note that since we assumed $\alpha\in]0,1[$, $V_{\param}(\omega)>0$ and $p>0$, we have $q_p(\param)<0$. Define  $A_p$, $B_p$, $H_p$, $G_p$ and $J_p$ the matrices whose respective elements are given for $ i,k=1,2$ by $a_{i,k}^p$, $b_{i,k}^p$, $\scal{h_i^p, h_k^p}$, $\scal{\gene g_i^p,\gene g_k^p}$ and $\scal{\gene g_i^p,\curohat_k}$. Note in particular that $H_p$ and $G_p$ are symmetric with non-negative eigenvalues. Further denote by $I$ the two-dimensional identity matrix. The four identities above then rewrite in matrix form as 
\[J_p=-q_p(\param)I  - V_{\param}(\omega)\alpha(1-\alpha)A_p,\quad B_p=-\frac{r_p(\param)}{\alpha(1-\alpha)}I\]
\[-A_pJ_p^{\dagger}=G_p\eqand -q_p(\param)A_p-r_p(\param)B_p=H_p,\]
where $J_p^{\dagger}$ is the transposed matrix of $J_p$.
The second and last identities show that $B_p$ and $A_p$ are symmetric, therefore so is $ J^p$, and that
\[A_p=-\frac{1}{q_p(\param)}\cro{H_p-\frac{r_p(\param)^2}{\alpha(1-\alpha)}I}.\] 
In particular, since $H_p$ is positive in the matrix sense, it is diagonalizable, and thus so is $A_p$. Finally, the first and third identities then yields
\[A_p[q_p(\param)I  + V_{\param}(\omega)\alpha(1-\alpha)A_p]=G_p.\]
therefore, since $G_p$ is positive in the matrix sense, any eigenvalue $\lambda$ of $A_p$ must satisfy
\[\lambda[q_p(\param) + V_{\param}(\omega)\alpha(1-\alpha)\lambda]\geq 0,\]
and therefore $\lambda> -q_p(\param)/V_{\param}(\omega)\alpha(1-\alpha)>0$. Let $C_p$ denote the inverse of $A_p$, which is a positive matrix with eigenvalues bounded from above by $-V_{\param}(\omega)\alpha(1-\alpha)/q_p(\param)$. Since $A_p$ is invertible, we can therefore rewrite \eqref{projass} as 
\begin{equation}\label{projass2}\scal{\curohat_i+\Big[\sum_{k=1,2}c^p_{i,k} h_k^p+ d^p_{i,k}\cur_k\Big]+\gene \widetilde{g}_i^k}=0.\end{equation}
which holds for $i=1,2$, where $\widetilde{g}_i^k=\sum_{k=1,2} c^p_{i,k} g_k^p$, and the $c_{i,k}^p$  (resp. $d^p_{i,k}$) are the matrix elements of $C_p$ (resp. $D_p:=C_pB_p$). 
For $x,y\in \R^2$, shorten $x\cdot y=x_1y_1+x_2y_2$ their usual inner product. Let $\curohat=(\curohat_1, \curohat_2)$, and define the quadratic form $Q$ as 
\begin{equation*}
x^\dagger Q x=\inf_{g\in \tcal}\scal{x\cdot\curohat+\gene g }.
\end{equation*}
Then,  \eqref{projass2} yields for any $x\in \R^2$
\begin{equation}
\label{projass3}
\inf_{g\in \tcal}\scal{x\cdot\curohat +\Big[\sum_{i,k=1,2}x_i c^p_{i,k} h_k^p+ x_i d^p_{i,k}\cur_k\Big]+\gene g}=0.
\end{equation}
Taking the inner product of the expression above with $x\cdot\curohat +\gene g$, and since the terms in the sum are orthogonal to any $\gene g$, we obtain
\begin{align*}
 x^\dagger Q x=\inf_{g\in \tcal}\scal{x\cdot\curohat +\gene g}=&-\scal{x\cdot\curohat, \sum_{i,k=1,2}x_i c^p_{i,k} h_k^p+ x_i d^p_{i,k}\cur_k}\\
=&-\sum_{i,k=1,2}x_ix_k c^p_{i,k} \scal{h_k^p,\curohat_k}+ x_ix_k d^p_{i,k}\scal{\cur_k, \curohat_k}\\
=&-q_p(\param) x^\dagger C_p x,
\end{align*}
thanks to Corollary \ref{actioncurp} and because $\cur_k$ and  $\curohat_k$ are orthogonal. We prove in Appendix \ref{subsec:sdc}, equation \eqref{eq:Qid}, that $Q=\alpha V(\param) d_s(\alpha)I$, therefore
\[C_p=-\frac{\alpha V(\param) d_s(\alpha)}{q_p}I=\gcm(E_p\mid \conf_0=1)^{-1}I=c_p(\am)I,\]
and $D_p=[-c_p(\am)r_p(\param)/\am(1-\am)]I=d_p(\param)I$, where $c_p$, $d_p$ were defined in Proposition \ref{prop:currentsdecomposition}.
We can now rewrite \eqref{projass3} as wanted as
\begin{equation}
\label{projass4}
\inf_{g\in \tcal}\scal{\curohat_i +c_p(\alpha) h_i^p+d_p(\param)\cur_i+\gene g}=0.
\end{equation}
Since $h_i^p$ and $\cur_i$ are both orthogonal to any $\gene g$, taking the inner product of the identity above with $\curohat_i+ \gene g $, one obtains that any sequence of functions realizing the infimum above also realizes $\inf_{g\in \tcal}\scal{\curohat_i +\gene g}$, which proves the last statement and concludes the proof of  Proposition \eqref{prop:currentsdecomposition}.
} 

\begin{rema}[Bound on $\scal{h_i^p}$]
We already obtained in \eqref{eq:2ndmatrix}  $\scal{h_i^p, h_l^p}+a^p_{i,l}q_p(\param) +b_{i,l}^pr_p(\param)=0$. Since we now have an explicit expression for the matrix $A_p=C_p^{-1}=c_p^{-1}(\alpha)I$, and $B_p=-r_p(\param)/\am(1-\am)I$, we obtain
$\scal{h_i^p}=-q_p(\param)c_p^{-1}(\am)+\frac{r_p(\param)^2}{\am(1-\am)}.$
Equation \eqref{eq:limqprp} then yields the uniform bound
\begin{equation}
\label{eq:normhi}
\lim_{p\to\infty}\sup_{\param\in \pset}|\scal{h_i^p}-\alpha d_s(\alpha)V_{\param}(\omega)|=0.
\end{equation}
\end{rema}

We now prove equation \eqref{decompunif},  and thus concludes the proof of Theorem \ref{thm:NGestimates}. Up until now, we have only used $\scal{\cdot}$ for functions in  ${T^\omega}$, but in \eqref{decompunif} the function $f$ is a priori no longer in $\tcal$ bur rather in $\mathcal{C}$, we therefore need the extension of $\scal{\cdot}$ to $\gene \mathcal{C}$ introduced in Definitions \ref{defi:limitingcovariance1} and \ref{defi:limitingcovariance2}. Thanks to \eqref{vipdef2}, the result can be stated as follows.
\begin{prop}
[Uniform bound on $\scal{\mathcal{V}_{i,p}^f}$]
\label{prop:NGEuniforme}
Identity \eqref{decompunif} holds, in the sense that there exists a sequence of local functions $f_n\in \mathcal{C}$ such that
\begin{equation}
\label{eq:unifscal}
\limsup_{n\to\infty}\limsup_{p\to\infty}\;\sup_{\param\in \pset}\;\scalstar{\curohat_i+h_i^p+\gene f_n}\;=0.
\end{equation}
Furthermore, for any $\param\in \pset$, $\lim_{n\to\infty}\scalstar{\curohat_i+\gene f_n}=\inf_{g\in \tcal}\scal{\curohat_i+\gene g}$
\end{prop}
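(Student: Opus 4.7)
My plan is to reduce the main estimate to a question purely about $\scal{\curohat_i+\gene f_n}$ by exploiting the orthogonality structure on $\halpha$, and then to construct $(f_n)$ with the required uniformity over $\pset$. Bilinearity of $\scal{\cdot,\cdot}$ together with the orthogonality relations $\scal{\gene f,h_i^p}=0$ and $\scal{\curohat_i,h_i^p}=q_p(\param)$ of Corollary \ref{actioncurp} immediately yields the decomposition
\[
\scal{\curohat_i+h_i^p+\gene f_n}\egal\scal{\curohat_i+\gene f_n}+2q_p(\param)+\scal{h_i^p}.
\]

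The next step is the uniform $p\to\infty$ limit of the sum $2q_p(\param)+\scal{h_i^p}$. Using the explicit formula $q_p(\param)=-\am d_s(\am)V_\param(\omega)\gcm(E_p|\conf_0=1)$ (Corollary \ref{actioncurp}), the identity $\scal{h_i^p}=-q_p(\param)c_p^{-1}(\am)+r_p(\param)^2/\am(1-\am)$ from the remark following Proposition \ref{prop:currentsdecomposition}, the uniform estimates \eqref{eq:limqprp} and \eqref{eq:normhi}, and the bound $d_s(\am)\leq C(1-\am)$ from Proposition \ref{prop:regularitySDC}, I bound $|q_p(\param)+\am d_s(\am)V_\param(\omega)|=\am d_s(\am)V_\param(\omega)\gcm(E_p^c|\conf_0=1)$ uniformly by $C(1-\am)\cro{\am^{\abss{B_p}}+\abss{B_p}(1-\am)\am^{\abss{B_p}-1}}$. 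Since $(1-\am)^j\am^{\abss{B_p}-c}$ attains its maximum over $\am\in[0,1]$ with value $O(\abss{B_p}^{-j})$, this produces $q_p\to -\am d_s(\am)V_\param(\omega)$ and $\scal{h_i^p}\to\am d_s(\am)V_\param(\omega)$ uniformly on $\pset$, and hence
\[
\lim_{p\to\infty}\sup_{\param\in\pset}\Big|\scal{\curohat_i+h_i^p+\gene f_n}-\big[\scal{\curohat_i+\gene f_n}-\am d_s(\am)V_\param(\omega)\big]\Big|\egal 0.
\]
Identifying $\am d_s(\am)V_\param(\omega)$ with $\inf_{g\in\tcal}\scal{\curohat_i+\gene g}$, which follows from reading off the $(i,i)$-entry of $Q=\am V_\param(\omega)d_s(\am)I$ established in the proof of Proposition \ref{prop:currentsdecomposition}, reduces the main claim and immediately yields the second assertion of the proposition once I exhibit $(f_n)\subset\mathcal{C}$ satisfying
\[
\limsup_{n\to\infty}\sup_{\param\in\pset}\big[\scal{\curohat_i+\gene f_n}-\am d_s(\am)V_\param(\omega)\big]\egal 0,
\]
the bracketed quantity being automatically non-negative by the variational characterization.

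The uniformity of this last minimization over $\pset$ is the main technical difficulty. For each fixed $\param$, Proposition \ref{prop:currentsdecomposition} already furnishes a minimizing sequence in $\tcal$; however, the sequence a priori depends on $\param$. To produce a single universal sequence, I plan to combine the compactness of $(\pset,\normm{\cdot})$ (Appendix \ref{sec:B}), the continuity of $\param\mapsto\scal{\curohat_i+\gene f}$ for each fixed local function $f$ (a consequence of Theorem \ref{thm:limcovariance} together with Proposition \ref{prop:Lipschitzcontinuity}), and the separability of $\mathcal{C}$. Fixing a countable dense family $(\psi_k)_{k\in\N}\subset\mathcal{C}$, the functions $G_n(\param):=\min_{k\leq n}\cro{\scal{\curohat_i+\gene\psi_k}-\am d_s(\am)V_\param(\omega)}$ are continuous, non-negative, and decrease pointwise to $0$, so Dini's theorem on the compact $\pset$ promotes the convergence to a uniform one. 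Converting this $\param$-dependent selection into a single deterministic sequence $(f_n)$, by covering $\pset$ at each scale $n$ by finitely many neighborhoods on which a common $\psi_{k(\param_0)}$ serves as a witness and then patching the local witnesses via a convex-combination argument tailored to the quadratic form $\scal{\cdot}$, is the delicate step I expect to absorb most of the technical effort: the obstacle is that $\scal{\cdot}^{1/2}$ is only a semi-norm, so convex combinations of local witnesses control averages of squared norms rather than their maxima across distinct values of $\param$, and a careful handling of this asymmetry is what makes the uniform construction subtle.
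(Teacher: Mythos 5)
Your reduction is correct and in fact slightly cleaner than the paper's. Where the paper first cuts off extreme densities $\am\notin[\epsilon,1-\epsilon]$ by a triangle inequality plus the bound $d_s(\am)\leq C(1-\am)$, then passes to the decomposition with $c_p(\am)h_i^p+d_p(\param)\cur_i$, you use bilinearity and the orthogonality relations $\scal{h_i^p,\gene g}=0$, $\scal{h_i^p,\curohat_i}=q_p(\param)$ of Corollary \ref{actioncurp} to write
\[
\scal{\curohat_i+h_i^p+\gene f_n}\egal\scal{\curohat_i+\gene f_n}+2q_p(\param)+\scal{h_i^p},
\]
and then invoke the uniform limits \eqref{eq:limqprp} and \eqref{eq:normhi} directly. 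One small clarification: the non-negativity of $\scal{\curohat_i+\gene f_n}-\am d_s(\am)V_\param(\omega)$ is not quite ``automatic from the variational characterization,'' since $f_n\in\mathcal{C}$ rather than $\tcal$ and it is not obvious a priori that the infimum over $\gene\mathcal{C}$ cannot dip below the infimum over $\gene\tcal$; the correct argument is that the left-hand side above is $\geq 0$ for every $p$ (it is a semi-norm squared), and the right-hand side converges as $p\to\infty$ to the bracketed quantity, which is therefore $\geq 0$.

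The real gap is that you do not actually construct the sequence $(f_n)$. You correctly identify the obstruction: Dini's theorem on the compact $(\pset,\normm{\cdot})$ shows that $G_n(\param)=\min_{k\leq n}[\scal{\curohat_i+\gene\psi_k}-\am d_s(\am)V_\param(\omega)]$ decreases uniformly to zero, but each $G_n(\param)$ uses a $\param$-dependent witness $\psi_{k(\param)}$, and one still needs a \emph{single} $f_n\in\mathcal{C}$ working uniformly. Your proposal to ``patch the local witnesses via a convex-combination argument'' will not work in the form described: a deterministic convex combination $\sum_j\lambda_j g_{\param_j}$ cannot simultaneously be close in $\scal{\cdot}_{\param}$ to each $g_{\param_j}$ on its own neighborhood, because for $\param$ near $\param_j$ the wrong terms $g_{\param_{j'}}$, $j'\neq j$, are not small in $\scal{\cdot}_\param$. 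The missing idea, which is the key device of Lemma \ref{lem:1stborneunif}, is to let the function \emph{select its own witness from the configuration}: one builds a $C^2$ interpolation $\psi(\param,\confhat)$ of the finitely many local witnesses $g_{\param_j}$ (using the compactness of $\pset$ and continuity of $\param\mapsto\scal{\cdot}$), and then sets $f_r(\confhat)=\psi(\dens_r,\confhat)$ where $\dens_r$ is the local empirical angular density. For $r$ large, $\dens_r\approx\param$ under $\mesinv$ with high probability, so $\gene f_r\approx\gene g_{\param}$ in $\scal{\cdot}_\param$ up to boundary terms of order $r^{-1}$, simultaneously for all $\param$. Note that this $f_r$ is a cylinder function in $\mathcal{C}$ but typically \emph{not} in $\tcal$ (the dependence on $\dens_r$ is nonlinear in the angles), which is precisely why the proposition is stated with $f_n\in\mathcal{C}$ rather than $f_n\in\tcal$. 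Without this empirical-density self-selection, the uniform construction does not go through.
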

\proofthm{Proposition \ref{prop:NGEuniforme}}{ In order not to burden with technical estimates, we start by cutting off the extreme densities for which the convergences as $p\to\infty$ can be problematic. For any $\param$, we can write by triangular inequality and using \eqref{eq:normhi},
\begin{align*}
\scalstar{\curohat_i+h_i^p+\gene f}\leq& \scal{\curohat_i}+\scal{h_i^p}+\scalstar{\gene f}\\
\leq &V_{\param}(\omega)\alpha(1-\alpha)+\am d_s(\am)V_{\param}(\omega)(1+o_p(1))+\sum_{i=1,2}\Egcm(\conf_0(1-\conf_{e_i})[\Sigma_f(\confhat^{0,e_i})-\Sigma_f]^2) 
,\end{align*}
where the $o_p(1)$ does not depend on $\param$. As stated in Proposition \ref{prop:regularitySDC}, $d_s(\am)\leq C(1-\am)$, $\omega$ is bounded, and $f$ is a cylinder function and therefore $\Sigma_f(\confhat^{0,e_i})-\Sigma_f$ is bounded as well. Fix $\epsilon>0$, in particular, the estimate above yields, for some constant $C_{\omega,f }$, and for any $\param$ such that $\am\notin[\epsilon,1-\epsilon]$
\begin{equation*}
\scalstar{\curohat_i+h_i^p+\gene f}\; \leq \;C_{\omega,f}  (1+o_p(1))\epsilon.\end{equation*}

We now fix $\param$ such that $\epsilon\leq \am\leq 1-\epsilon$, by triangular inequality, 
\[\scalstar{\curohat_i+h_i^p+\gene f}\;\leq \; \scalstar{\curohat_i+c_p(\am)h_i^p+d_p(\param)\cur_i+\gene f}+\scal{(c_p(\am)-1)h_i^p+d_p(\param)\cur_i}.\]
Since $\param$ is bounded away from the extreme densities, the second term in the right-hand side is $C_\epsilon o_p(1)$, and we can therefore write
\[\sup_{\param}\scalstar{\curohat_i+h_i^p+\gene f}\;\leq \;\sup_{\param}\scalstar{\curohat_i+c_p(\am)h_i^p+d_p(\param)\cur_i+\gene f}+ C_{\omega, f}\epsilon +C_{\epsilon, \omega, f} o_p(1).\]
We then let $p\to\infty$ and then $\epsilon\to0$ to obtain that 
\[\limsup_{p\to\infty}\sup_{\param}\scalstar{\curohat_i+h_i^p+\gene f}\;\leq \;\limsup_{p\to\infty}\sup_{\param}\scalstar{\curohat_i+c_p(\am)h_i^p+d_p(\param)\cur_i+\gene f}.\]
Proposition \eqref{prop:NGEuniforme} is therefore a consequence of Lemma \eqref{lem:1stborneunif} below.}
\begin{lemm}
\label{lem:1stborneunif}
There exists a sequence of local functions $f_n\in \mathcal{C}$ such that 
\begin{equation*}\limsup_{p\to\infty}\sup_{\param} \scalstar{\curohat_i+c_p(\am)h_i^p+d_p(\param)\cur_i+\gene f_n}\;\leq \; \frac{3}{n},\end{equation*}
and for any $\param\in  \pset$,  $\lim_{n\to\infty}\scalstar{\curohat_i+\gene f_n}=\inf_{g\in \tcal}\scal{\curohat_i+\gene g}$
\end{lemm}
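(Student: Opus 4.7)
My plan is to first show that the $\limsup_p$ in the claim is redundant by an exact algebraic identity, then construct $f_n$ uniformly in $\param$ via a compactness and partition-of-unity argument in $\pset$.

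\smallskip

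\noindent\emph{Reduction via an exact identity.} Expanding the quadratic form $\scal{\cdot}$ using bilinearity, the ten cross terms in $\scal{\curohat_i+c_p(\am) h_i^p + d_p(\param)\cur_i + \gene f}_\param$ can be computed explicitly. From Corollary \ref{actioncurp} we have $\scal{h_i^p,\gene g}_\param = 0$; the identity $\sum_{x}x_i\gene \conf_x = 0$ together with the self-adjointness of $\gene$ with respect to $\gcm$ gives $\scal{\cur_i,\gene g}_\param = 0$; and a direct application of \eqref{scal} shows $\scal{\curohat_i,\cur_i}_\param = 0$. Substituting $\scal{h_i^p}_\param = -c_p(\am)^{-1}q_p(\param)+r_p(\param)^2/[\am(1-\am)]$ (derived in the proof of Proposition \ref{prop:currentsdecomposition}) and the definitions $c_p(\am) = \gcm(E_p|\conf_0=1)^{-1}$, $d_p(\param) = -r_p(\param)c_p(\am)/[\am(1-\am)]$, a bookkeeping computation shows $c_p^2\scal{h_i^p}_\param + d_p^2\,\am(1-\am) + 2c_pd_pr_p = -c_pq_p$, which combined with $c_p(\am)q_p(\param) = -\am d_s(\am)V_\param(\omega)$ yields the exact identity
\[\scal{\curohat_i + c_p(\am)h_i^p + d_p(\param)\cur_i+\gene f}_\param = \scal{\curohat_i + \gene f}_\param - I(\param),\]
where $I(\param):=\am d_s(\am)V_\param(\omega)$, valid for all $\param\in\pset$ and all $p$. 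By Proposition \ref{prop:currentsdecomposition}, $I(\param) = \inf_{g\in\tcal}\scal{\curohat_i+\gene g}_\param$. Thus the left-hand side is independent of $p$, the $\limsup_p$ is redundant, and the lemma reduces to finding $f_n\in \mathcal{C}$ with $\sup_{\param\in\pset}\bigl[\scal{\curohat_i+\gene f_n}_\param - I(\param)\bigr]\leq 3/n$.

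\smallskip

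\noindent\emph{Compactness and finite cover.} Fix $n\geq 1$. By Proposition \ref{prop:currentsdecomposition}, for each $\param\in \pset$ there exists $g_\param \in \tcal$ with $\scal{\curohat_i + \gene g_\param}_\param < I(\param)+1/n$. The map $\param' \mapsto \scal{\curohat_i + \gene g_\param}_{\param'}$ is continuous on $(\pset,\normm{\cdot})$ by Theorem \ref{thm:limcovariance}, and $I$ is continuous on $\pset$ (using the smoothness of $d_s$ from Proposition \ref{prop:regularitySDC} and the Lipschitz-continuity of $\param\mapsto \Egcm(\omega^k)$ in Proposition \ref{prop:Lipschitzcontinuity}). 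Hence there is a $\normm{\cdot}$-open neighborhood $U_\param$ of $\param$ with $\scal{\curohat_i+\gene g_\param}_{\param'} < I(\param')+2/n$ for all $\param'\in U_\param$. The compactness of $(\pset,\normm{\cdot})$ proved in Appendix B yields a finite subcover $\pset = \bigcup_{k=1}^K U_{\param^k}$, and we set $g_k := g_{\param^k}$, so $\scal{\curohat_i + \gene g_k}_\param < I(\param)+2/n$ for every $\param\in U_{\param^k}$.

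\smallskip

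\noindent\emph{Construction via a mesoscopic partition of unity.} Let $\{\chi_k\}_{k=1}^K$ be a continuous partition of unity on $(\pset,\normm{\cdot})$ subordinate to $\{U_{\param^k}\}$, let $L=L(n)$ be a large integer to be fixed, and define
\[f_n(\confhat) = \sum_{k=1}^K \chi_k\bigl(\densl(\confhat)\bigr)\,g_k(\confhat),\]
a cylinder function with support in $B_{L+s}$ where $s=\max_k s_{g_k}$. Decomposing
\[\gene f_n = \sum_{k=1}^K\chi_k(\densl)\,\gene g_k + R_L,\]
where $R_L$ gathers the terms involving $\nabla_{x,y}\chi_k(\densl)$, the key facts are: (i) any allowed exchange $\confhat\mapsto \confhat^{x,y}$ changes $\densl$ by $O(L^{-2})$ in $\normm{\cdot}$, so Lipschitz continuity of the $\chi_k$ gives $|\nabla_{x,y}\chi_k(\densl)|=O(L^{-2})$, whence $\scal{R_L}_\param \to 0$ uniformly in $\param$ as $L\to\infty$; (ii) under $\gcm$, $\densl\to \param$ in $\normm{\cdot}$-probability by the law of large numbers and Definition \ref{defi:convparam}, so one may replace $\chi_k(\densl)$ by $\chi_k(\param)$ inside $\scal{\cdot}_\param$ at a uniformly vanishing cost. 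Combining these with the convexity estimate
\[\scal{\sum_k \chi_k(\param)(\curohat_i+\gene g_k)}_\param \leq \sum_k \chi_k(\param)\scal{\curohat_i + \gene g_k}_\param \leq I(\param)+\frac{2}{n},\]
which follows from the triangle inequality for the seminorm $\scal{\cdot}^{1/2}$ and $\sum_k\chi_k(\param)=1$, we conclude that choosing $L=L_n$ large enough gives $\scal{\curohat_i+\gene f_n}_\param \leq I(\param)+3/n$ uniformly in $\param\in \pset$. The second statement of the lemma follows at once, since then $I(\param)\leq\scal{\curohat_i+\gene f_n}_\param\leq I(\param)+3/n$ and $I(\param)=\inf_{g\in \tcal}\scal{\curohat_i+\gene g}_\param$.

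\smallskip

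\noindent The main obstacle is the uniform control in the last step: the approximation errors $\scal{R_L}_\param$ and $\scal{\sum_k[\chi_k(\densl)-\chi_k(\param)]\gene g_k}_\param$ must vanish uniformly in $\param\in\pset$. This relies crucially on the uniform convergence clause of Theorem \ref{thm:limcovariance} and on the fact that the finitely many $g_k$ share a common bounded support.
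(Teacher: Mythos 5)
Your proof is correct in outline and arrives at the same place as the paper, but the first step is a genuinely different and cleaner route. The paper proves the lemma by, for each $\param_0$, choosing $g_{\param_0}$ and a neighborhood on which $\scalstar{\curohat_i+c_p(\am_0)h_i^p(\param_0)+d_p(\param_0)\cur_i+\gene g_{\param_0}}\leq n^{-1}$, then taking a finite subcover and building a $C^2$ interpolation $\psi(\param,\cdot)$ of the finitely many $g_{\param_j}$'s, plugging in $f_r=\psi(\dens_r,\cdot)$, and controlling $\sup_\param\scal{\gene(f_r-\psi(\param,\cdot))}$ by a law-of-large-numbers argument. Your reduction via the exact algebraic identity
\[\scal{\curohat_i+c_p(\am)h_i^p+d_p(\param)\cur_i+\gene f}\;=\;\scal{\curohat_i+\gene f}\;-\;\am\,d_s(\am)\,V_\param(\omega)\]
is a genuine improvement in clarity: it shows the quantity inside the $\limsup_p$ is \emph{exactly} $p$-independent (which I verified by expanding $\scal{A}$ with $A=c_ph_i^p+d_p\cur_i$ and using $c_pq_p=-\am d_s(\am)V_\param(\omega)$ from Proposition \ref{prop:currentsdecomposition}), so the $\limsup_p$ in the statement is redundant rather than merely dominated. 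The paper, by contrast, carries $p$ throughout and appeals to the last clause of Proposition \ref{prop:currentsdecomposition} to assert that $g_{\param_0}$ can be chosen $p$-independently. Your partition-of-unity construction of $f_n=\sum_k\chi_k(\densl)g_k$ is essentially the explicit form of the paper's abstract ``$C^2$ interpolation'' $\psi(\param,\cdot)$, and the convexity bound plus LLN replacement you invoke is the same mechanism as the paper's equation \eqref{majorunif}.

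Two small points to fix. First, your claimed identity $\sum_x x_i\gene\conf_x=0$ is false: summing $\gene\conf_x=\sum_j(\tau_{x-e_j}\cur_j-\tau_x\cur_j)$ against $x_i$ telescopes to $\Sigma_{\cur_i}$, not zero. The orthogonality $\scal{\cur_i,\gene g}=0$ that you need is nonetheless true; it follows from \eqref{scal} and the telescoping identity $\sum_y\Egcm[(\conf_{y+e_i}-\conf_y)g]=0$ (only a compactly-supported set of $y$'s contributes, and re-indexing $y\mapsto y+e_i$ cancels the two halves). Second, an allowed exchange $\confhat\mapsto\confhat^{x,y}$ with both $x,y\in B_L$ leaves $\densl$ \emph{exactly} unchanged, not $O(L^{-2})$-changed (the multiset of occupied angles in $B_L$ is preserved); only the $O(L)$ boundary edges perturb $\densl$, by $O(L^{-2})$. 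This only strengthens your estimate on $\scal{R_L}$, so the conclusion survives, but the statement as written is misleading.
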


\proofthm{Lemma \ref{lem:1stborneunif}}{The proof of this Lemma is analogous to that of Theorem 5.6, p.176 of \cite{KLB1999}. We now write explicitly the dependency of $h_i^p$ in $\param$. According to Theorem \ref{thm:limcovariance} the application $\param \mapsto\scalstar{\psi}$ is continuous on $\pset$, and thanks to equation \eqref{decompunif}, for any $\param_0\in \pset$, there exists a function $g_{\param_0}\in \tcal$ and  a neighborhood ${\mathcal N}_{\param_0}$ of $\param_0$ such that  for any $\param\in {\mathcal N}_{\param_0}$, 
\[\scal{\curohat_i+c_p(\am_0)h_i^p(\param_0)+d_p(\param_0)\cur_i+\gene g_{\param_0}}\;\leq\; n^{-1}.\]
Furthermore, thanks to the last statement in Proposition  \ref{prop:currentsdecomposition}, this function is an approximation of the one realizing  $\inf_{g\in \tcal}\ll \curohat_i+\gene g\gg_{\widehat{\alpha_0}}$, and can be chosen  independently of $p$.
\medskip

We prove in Proposition \ref{prop:psetcompact} that $\pset$ is compact,  it therefore admits a finite covering $\pset\subset\cup_{j=1}^m {\mathcal N}_{\param_j}$. We can build a $C^2$ interpolation of the $g_{\param_j}$'s, and therefore obtain a function $(\param,\conf)\mapsto \psi(\param, \conf)$ which coincides in $\param=\param_j$ with $g_{\param_j}$, with the two following properties~:
\begin{itemize}
  \item let $B$ be a finite set of edges in $\Z^2$ containing the support of all the $g_{\param_j}$'s, $\psi(\param,\;.\;)$ is a cylinder function in $\tcal$ with support included in $B$ for any $\param\in \pset$. 
  \item For any fixed configuration $\confhat$, $\psi(\;.\;,\confhat)$ is in $C^2(\pset)$.
\item  for any $\param\in \pset$
\begin{equation}\label{Phiscal}\scalstar{\curohat_i+c_p(\am)h_i^p(\param)+d_p(\param)\cur_i+\gene \psi(\param,\cdot)}\;\leq\; 2n^{-1}.\end{equation}
  \end{itemize}
Recall that we  introduced in \eqref{empiricalprofile} $\dens_r= \abss{B_r}^{-1}\sum_{x\in B_r}\conf_x \delta_{\theta_x}$ the empirical angular density in the box of side $(2r+1)$ around the origin. Define 
\[f_r(\confhat)=\psi(\dens_r,\confhat),\] 
for any $r$ large enough for the support $B$ of the $\psi(\param,\conf)$'s to be contained in $B_r$. Note that $f_r$ is not necessarily in  ${T^\omega}$, but it is a local function for $r$ fixed. 

By triangle inequality, 
\begin{equation}\label{supphia}\sup_{\param} \scalstar{\curohat_i+c_p(\am)h_i^p(\param)+d_p(\param)\cur_i+\gene f_r}\;\leq\; 2n^{-1}+\sup_{\param}\scalstar{\gene(f_r-\psi(\param,\cdot))}.\end{equation}
The second term in the right-hand side is
\[\sum_i\Egcm\pa{\pa{\nabla_{0,e_i}\sum_{x\in \Z^2}\tau_x\cro{f_r-\psi(\param,\cdot)}}^2}=\sum_i\Egcm\pa{\pa{\sum_{x\in \Z^2}\tau_{-x}\nabla_{x,x+e_i}\cro{f_r-\psi(\param,\cdot)}}^2}.\]
Note once again that $\sum_{x\in \Z^2}\tau_xf$ is merely a notation, and is not a well-defined function as such, but instead, is meant to either be integrated against a mean-$0$ local function, or taken a gradient of, as is the case here. We extend B by $1$ in such a way that for any edge $a$ outside of $B$, $\grad\psi(\param, .)$ vanishes. Therefore, the only contributions outside of $B$ in the sums above are at the boundary of $B_r$, where $f_r$ has a variation in its first argument of order $(2r+1)^{-2}$. Thanks to the regularity of $\psi$ in $\param$, and since the number of corresponding edges is roughly $4(2r+1)$, the contribution of all these jumps is of order $r^{-1}$ in the whole sum.

Then, since the number of edges in $B$ depends only on $\psi$, and since $\Egcm\pa{(\grad f)^2} \leq 4\Egcm(f^2)$, we obtain by definition of $f_r$ that
\begin{equation}\label{majorunif}\sup_{\param}\scalstar{\gene(f_r-\psi(\param,\cdot))}\;\leq \sup_{\param}C(\psi)\Egcm\cro{\pa{\psi(\dens_r,\;.)-\psi(\param,\cdot)}^2}+O(r^{-2}),\end{equation}
whose right-hand side vanishes as $r$ goes to infinity  by the law of large numbers. 

\bigskip

Let us fix $r_{n}$ such that the right-hand side of \eqref{majorunif} is less than $1/n$, and let $f_n=f_{r_n}$, \eqref{supphia} finally yields 
\begin{equation}
\label{steppunif}
\sup_{\param} \scalstar{\curohat_i+c_p(\am)h_i^p(\param)+d_p(\param)\cur_i+\gene f_n}\;\leq\; 3n^{-1},
\end{equation}
as wanted. The last statement of the Lemma is a direct consequence of the construction of $f_n$ and of Proposition \ref{prop:currentsdecomposition}. This concludes the proof of Lemma \ref{lem:1stborneunif}.
}

\subsection{Drift part of the hydrodynamic limit}
\label{subsec:drift}
\intro{Recall that $L_N=N^2\gene+N\genwa+\genis$ is the complete generator of our process introduced in \eqref{defgenecomplet}. In the previous section, we proved that the symmetric currents can be replaced by a gradient, up to a perturbation $\gene f$. In our case, this perturbation is not negligible, and must be added to the asymmetric currents induced by the asymmetric generator $\genwa$ to complete the drift term in equation \eqref{equadiff}. This is the purpose of this section.}

To achieve that goal, we   need notations similar to the ones introduced in Section \ref{gradientreplacement}. For any positive integer $l$, and any smooth function $G\in C([0,T]\times \ctorus)$, let us introduce 
\[{\mathcal R}_i^{f,l}(\confhat)=\curaom_i+\genwa f-\E_{\densl}(\curaom_i+\genwa f),\]
and
\[Y_{i,N}^{f,l}(G,\confhat)=\frac{1}{N^2}\sum_{x\in \torus}G(x/N)\tau_x{\mathcal R}_i^{f,l},\]
where $\curaom_i$ is the asymmetric current introduced in \eqref{currentsasym}.
According to Theorem \ref{thm:NGestimates}, for any $i$, there exists a family of cylinder functions  $(f_{i,n}^\omega)_{n\in \N}$ introduced in Proposition \ref{prop:NGEuniforme} such that
\begin{equation*}\lim_{\gamma\to \infty}\lim_{n\to\infty}\limsup_{\varepsilon\to 0}\limsup_{N\to\infty}\frac{1}{\gamma N^2}\log \E^{\lambda,\beta}_{\mesref}\left[\exp \left(\gamma N^2\abs{\int_0^T{X}_{i,N}^{f_{i,n}^\omega,\varepsilon N}(G_t,\confhat(t))dt}\right)\right]=0,\end{equation*}
where ${X}_{i,N}^{f,\varepsilon N}$ was defined in equation \eqref{Xdef}. Furthermore, we also established in Proposition \ref{prop:NGEuniforme} that this sequence satisfies for any $\param\in \pset$
\begin{equation} \label{infr}\lim_{n\to \infty}\scal{\curom_i+\gene f_{i,n}^\omega}=\inf_{f\in \tcal}\scal{\curom_i+\gene f}.\end{equation}
The replacement Lemma \ref{lem:replacementlemma} applied to $g(\confhat)=\curaom_i+\genwa f$ yields the following result.
\begin{lemm}
\label{lem:k4total}
Let $G$ be some smooth function in $C^{1,2}([0,T]\times \ctorus)$, and $T\in \R^*_+$, then for $i\in \{1,2\}$ we have 
\[\lim_{n\to \infty}\limsup_{\varepsilon\to 0}\limsup_{N\to\infty} \E^{\lambda,\beta}_{\mu^N}\left[\abs{\int_0^T Y_{i,N}^{f_{i,n}^\omega,\varepsilon N}(G,\confhat)ds}\right]=0.\]
\end{lemm}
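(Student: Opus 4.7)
The plan is to view this lemma as a straightforward corollary of the replacement Lemma \ref{lem:replacementlemma}, or more precisely of its consequence Corollary \ref{LargeNumbersCor}, applied at each fixed $n$ to the cylinder function $g:=\curaom_i+\genwa f_{i,n}^\omega$. Indeed, by construction in Proposition \ref{prop:NGEuniforme} each $f_{i,n}^\omega$ is a cylinder function, and the asymmetric current $\curaom_i$ depends only on the sites $0$ and $e_i$; thus $g$ is itself a cylinder function. With this choice of $g$, the integrand $\mathcal{R}_i^{f_{i,n}^\omega,l}=g-\E_{\densl}(g)$ coincides with the quantity $\mathcal{W}^l$ of Section \ref{gradientreplacement}, so that $Y_{i,N}^{f_{i,n}^\omega,\varepsilon N}(G,\confhat)$ is exactly the object $X^{\varepsilon N,N}(G,\confhat)$ to which Corollary \ref{LargeNumbersCor} applies.

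First I would invoke Corollary \ref{LargeNumbersCor} directly. For each fixed $n$ and every $\delta>0$, this yields the convergence in probability
\[\limsup_{\varepsilon\to 0}\limsup_{N\to\infty}\Prob^{\lambda,\beta}_{\mu^N}\!\left[\,\Big|\!\int_0^T Y_{i,N}^{f_{i,n}^\omega,\varepsilon N}(G_t,\confhat(t))\,dt\Big|>\delta\right]=0.\]
Next I would upgrade this to convergence in $L^1$ by the usual bounded/probability trick. Since $g$ is a bounded cylinder function with bound $C_n$ depending only on $n$, $\lambda$, $\omega$ and $f_{i,n}^\omega$, the conditional expectation $\E_{\densl}(g)$ is bounded by the same constant, and hence $|Y_{i,N}^{f_{i,n}^\omega,\varepsilon N}(G_t,\confhat)|\leq 2C_n\|G\|_\infty$ uniformly in $\varepsilon$, $N$, $t$ and $\confhat$. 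Combining this with the elementary inequality $\E|X|\leq \delta+\|X\|_\infty\Prob(|X|>\delta)$, and letting $\delta\to 0$ after $\varepsilon\to 0$ and $N\to\infty$, gives, for every fixed $n$,
\[\limsup_{\varepsilon\to 0}\limsup_{N\to\infty}\E^{\lambda,\beta}_{\mu^N}\!\left[\,\Big|\!\int_0^T Y_{i,N}^{f_{i,n}^\omega,\varepsilon N}(G_t,\confhat(t))\,dt\Big|\right]=0.\]

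The outer limit $n\to\infty$ in the statement of the lemma is then immediate, since the inner $\limsup$ already vanishes for every $n$. There is no genuine analytic obstacle here; the only point requiring care is simply to check that $(f_{i,n}^\omega)_n$, as produced by Proposition \ref{prop:NGEuniforme} (and more concretely by the construction $f_n=\psi(\dens_{r_n},\cdot)$ in the proof of Lemma \ref{lem:1stborneunif}), consists of honest cylinder functions for each fixed $n$, so that the replacement Lemma may be invoked before passing to the limit in $n$.
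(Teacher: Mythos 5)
Your proposal is correct and takes precisely the same route as the paper: the paper's entire proof of Lemma~\ref{lem:k4total} is the one sentence "The replacement Lemma~\ref{lem:replacementlemma} applied to $g(\confhat)=\curaom_i+\genwa f$ yields the following result," and your argument is a faithful unpacking of that sentence, correctly identifying $\mathcal{R}_i^{f,l}$ with $\mathcal{W}^l$ for the cylinder function $g=\curaom_i+\genwa f_{i,n}^\omega$, and using boundedness of $g$ to pass from the probability statement of Corollary~\ref{LargeNumbersCor} to the $L^1$ statement (one could also invoke the $L^1$ bound \eqref{RLexp} established inside the proof of the Replacement Lemma directly, saving the Markov/boundedness step, but the result is the same). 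The checks you flag—that each $f_{i,n}^\omega$ is a genuine cylinder function, hence so is $\genwa f_{i,n}^\omega$, hence so is $g$—are exactly the points one must verify before invoking the Replacement Lemma at fixed $n$.
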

Furthermore, we now prove the following result, which states that any function of the form $N\genex f$ vanishes in the hydrodynamic limit, where $\genex=\gene+N^{-1}\genwa$ is the generator of whole exclusion  process.

\begin{lemm}\label{lem:Lfnegligeable}
For any function $G:[0,T]\times \ctorus\to \R$ in $C^{1,2}$, and any cylinder function $f$,
\begin{equation*}\limsup_{N\to\infty}\E_{\mu^N}\left[\abs{\int_0^T \frac{1}{N} \sum_{x\in \torus}G\left(s,x/N\right)\tau_x\genex f(\confhat(s)) ds}\right]=0.\end{equation*}
\end{lemm}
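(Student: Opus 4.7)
My plan is to turn the quantity $\frac{1}{N}\sum_x G(s,x/N)\tau_x\genex f$ into a time derivative of a quantity of order $1/N$, through a martingale argument, so that the contribution of $\genex f$ appears as a total variation that vanishes macroscopically. The starting observation is that the generator $L_N=N^2\genex+\genis$ is translation-invariant on $\torus$, hence $L_N\tau_x f=\tau_x L_Nf$, and therefore
\[\frac{1}{N}\sum_{x\in\torus}G(s,x/N)\tau_x\genex f \;=\; \frac{1}{N^3}\sum_{x\in\torus}G(s,x/N)L_N\tau_xf\;-\;\frac{1}{N^3}\sum_{x\in\torus}G(s,x/N)\tau_x\genis f.\]
Since $\genis f$ is bounded (the jump rates $c_{x,\beta}$ are uniformly bounded and $f$ is cylinder), the second term in the right-hand side is pointwise $O(1/N)$ uniformly in $\confhat$, and its time integral vanishes as $N\to\infty$.

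For the first term, I would introduce the auxiliary functional $F_s(\confhat)=\frac{1}{N^3}\sum_{x\in\torus}G(s,x/N)\tau_xf(\confhat)$ and use the associated Dynkin martingale
\[M_t^{f,N}\;=\;F_t(\confhat(t))-F_0(\confhat(0))-\int_0^t\bigl(\partial_sF_s+L_NF_s\bigr)(\confhat(s))\,ds.\]
Rearranging yields
\[\int_0^TL_NF_s(\confhat(s))\,ds\;=\;F_T(\confhat(T))-F_0(\confhat(0))-\int_0^T\partial_sF_s(\confhat(s))\,ds-M_T^{f,N}.\]
The boundary terms and the $\partial_sG$ contribution are bounded in absolute value by $N^{-3}\cdot|\torus|\cdot\norm{G}_{C^1}\norm{f}_\infty=O(1/N)$, so they are negligible. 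It remains to show $\E_{\mu^N}|M_T^{f,N}|\to 0$.

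For the martingale, I would estimate the quadratic variation $[M^{f,N}]_T$ via the carré du champ $\Gamma(F_s)=L_N(F_s^2)-2F_sL_NF_s$. Each elementary transition (nearest-neighbor jump or angle update at a site $y$) only affects those $O(|B_{s_f}|)$ terms of $F_s$ for which $y$ lies in the support of $\tau_xf$, producing a pointwise change of order $N^{-3}\norm{G}_\infty\norm{f}_\infty$. Summing the squares over the $O(N^2)$ sites and multiplying by the total jump rate (of order $N^2$ coming from the $N^2\genex$ part, plus lower-order contributions from $\genis$) gives
\[\E_{\mu^N}\bigl[[M^{f,N}]_T\bigr]\;\leq\; C(f,G)\,T\cdot N^2\cdot N^2\cdot N^{-6}\;=\;O(N^{-2}).\]
By Doob's $L^2$ inequality this yields $\E_{\mu^N}\sup_{t\leq T}|M_t^{f,N}|=O(1/N)$, completing the proof. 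I do not expect any serious obstacle here: the argument is essentially a one-shot integration by parts in time, and the only point demanding care is the bookkeeping of the $N$-scalings (the $N^{-3}$ from the prefactor $1/N$ combined with the two extra factors of $N^{-1}$ from rewriting $\genex=N^{-2}L_N-N^{-2}\genis$), which uses in an essential way that $f$ has a fixed finite support.
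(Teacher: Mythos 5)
Your proposal is correct and follows essentially the same route as the paper: introduce the Dynkin martingale for $F_s(\confhat)=\frac{1}{N^3}\sum_x G(s,x/N)\tau_x f(\confhat)$ (the paper uses $F_G = N F_s$, a cosmetic difference in where the extra $N^{-1}$ sits), observe that the boundary and $\partial_s G$ terms are $O(1/N)$, estimate the quadratic variation via the carré du champ using locality of $f$, and conclude with Doob. The only minor stylistic difference is that you peel off the $\genis$ contribution at the outset via the decomposition $\genex = N^{-2}L_N - N^{-2}\genis$, while the paper runs the martingale argument for the full $L_N$ and then separately bounds the Glauber contribution at the end; these are equivalent bookkeeping choices, and your scaling arithmetic for the quadratic variation is consistent.
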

\proofthm{Lemma \ref{lem:Lfnegligeable}}{For any such smooth function $H$ and cylinder function $f$, let us denote 
\[F_{G}(s,\confhat(s))=N^{-2}\sum_{x\in \torus}G(s,x/N)\tau_xf(\confhat(s)).\] 
The process 
\[M_{G}(t)= F_{G}(t, \confhat(t))-F_{G}(0,\confhat(0))-\int_0^T \partial_sF_{ G}(s,\confhat(s))ds-\int_0^T L_NF_{ G }(s,\confhat(s))ds\]
is a martingale, where $L_N$ is the complete generator of our process, introduced in \eqref{defgenecomplet}. Since $f$ is bounded, the first three terms are of order $1$, it remains to control $\int_0^T L_NF_{ G }ds$. The quadratic variation of this martingale is given (cf. Appendix 1.5, Lemma 5.1 in \cite{KLB1999}) by 
\begin{align*}[M_G(\cdot ,\confhat(\cdot))]_t=&\int_0^T L_NF_{ G }(s,\confhat(s))^2-2F_{ G}(s,\confhat(s))L_NF_{ G}(s,\confhat(s)) ds\\
=&\int_0^T ds N^2\sum_{\substack{x\in \torus\\
\delta=\pm 1, i\in \{1,2\}}}\hspace{-1em}\tau^{\lambda}_{x,z,i, \delta}\cro{F_{ G }(s,\confhat^{x,x+\delta e_i}(s))-F_{ G }(s,\confhat(s))}^2\\
&+\int_0^T ds\sum_{x\in \torus}\conf_x\int_{\ctoruspi}c_{x,\beta}(\theta, \confhat)\cro{F_{ G }(s,\confhat^{x, \theta}(s))-F_{ G}(s,\confhat(s))}^2d\theta\\
=&\frac{1}{N^2}\int_0^T ds \sum_{\substack{x\in \torus\\
\delta=\pm 1, i\in \{1,2\}}}\tau^{\lambda}_{x,z,i, \delta}(\confhat(s))\cro{\sum_{y\in \torus}G(s,y/N)\pa{\tau_yf(\confhat^{x,x+z}(s))-\tau_yf(\confhat(s))}}^2\\
&+\frac{1}{N^4}\int_0^T ds\sum_{x\in \torus}\conf_x\int_{\ctoruspi}c_{x,\beta}(\theta, \confhat)\cro{\sum_{y\in \torus}G(s,y/N)\pa{\tau_yf(\confhat^{x,x+z}(s))-\tau_yf(\confhat(s))}}^2d\theta,
\end{align*}
where 
\[\tau^{\lambda}_{x,z,i, \delta}(\confhat)=\pa{1+\frac{\delta\lambda_i(\theta_x)}{N}}\conf_x(1-\conf_{x+z})\]
is the total displacement jump rate.

Since $f$ is a local function, all but a finite number of terms in the $y$ sums vanish, and the quadratic variation is hence of order $N^{-2}$.
We deduce from the estimate of the quadratic variation of $M_G$ and the order of the three first terms in the expression of $M_G$ that
\begin{align*}\E_{\mu^N}\pa{\abs{\int_0^T N^{-1}L_N F_{ G }(s,\confhat(s))ds}}\leq N^{-1}\cro{\underset{O(N^{-1})}{\underbrace{\E_{\mu^N}\pa{[M_G(t,\confhat(t))]}^{1/2}}}+O_N(1)}\underset{N\to \infty}{\to} 0.\end{align*}
The previous martingale estimate  shows that $\E_{\mu^N}\pa{\abs{\int_0^T N^{-1}L_N F_{ G }(s,\confhat(s))ds}}$ vanishes in the limit $N\to \infty$. 
Furthermore, elementary computations yield a crude bound on the contribution of the Glauber generator of order $N^{-1}$.  Finally,  since $L_N=N^2\genex+\genis$, we obtain
\begin{align*}\E_{\mu^N}\pa{\abs{\int_0^T N\genex F_{ G }(s,\confhat(s))ds}}\underset{N\to \infty}{\to} 0,\end{align*}
which completes the proof of Lemma \ref{lem:Lfnegligeable}.}

We now use these two Lemmas to prove that the total displacement current can be replaced by the wanted averages. More precisely, let 
\[{\mathcal U}_i^{f,l}(\confhat)=\curom_i+\frac{1}{N}\curaom_i+\sdc\left(\rho_l\right)\ddi\densom_l+\diff\left(\rho_l, \densom_l\right)\ddi\rho_l-\frac{1}{N}\E_{\densl}(\curaom_i+\genwa f),\]
we can state the following result.
\begin{coro}
\label{currentstot}
For any $G\in C^{1,2}([0,T]\times \ctorus)$, $T\in \R^*_+$, and $i\in \{1,2\}$,
\[\lim_{n\to\infty}\limsup_{\varepsilon\to 0}\limsup_{N\to\infty} \E^{\lambda,\beta}_{\mu^N}\left[\abs{\int_0^T \frac{1}{N}\sum_{x\in \torus}G(x/N){\mathcal U}_i^{f_{i,n}^\omega,\varepsilon N}(G,\confhat)ds}\right]=0.\]
\end{coro}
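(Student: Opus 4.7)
The plan is to derive this corollary from the three preceding results via an algebraic decomposition of $\mathcal{U}_i^{f,l}$. Direct expansion of the definitions of $\V_i^{f,l}$ (from Section~\ref{replacement1}) and of $\mathcal{R}_i^{f,l}$ (introduced at the beginning of this subsection), combined with $\genex=\gene+N^{-1}\genwa$, yields the bookkeeping identity
\[\mathcal{U}_i^{f,l}\;=\;\V_i^{f,l}+\frac{1}{N}\mathcal{R}_i^{f,l}+\genex f-\frac{2}{N}\genwa f,\]
in which the $-\gene f$ buried in $\V_i^{f,l}$, the $+\genwa f$ buried in $\mathcal{R}_i^{f,l}/N$, and the $-\E_{\densl}(\curaom_i+\genwa f)/N$ in $\mathcal{U}_i^{f,l}$ all line up. Testing against $N^{-1}\sum_{x\in\torus}G(x/N)\tau_x\cdot$ turns this into
\[\frac{1}{N}\sum_{x\in\torus}G(x/N)\tau_x\mathcal{U}_i^{f,l}\;=\;X_{i,N}^{f,l}(G,\confhat)+Y_{i,N}^{f,l}(G,\confhat)+\frac{1}{N}\sum_{x\in\torus}G(x/N)\tau_x\genex f+\mathcal{E}_N(f),\]
where $\mathcal{E}_N(f):=-\frac{2}{N^2}\sum_{x}G(x/N)\tau_x\genwa f$ and the first two terms on the right are precisely the quantities controlled by Corollary~\ref{currents3} and Lemma~\ref{lem:k4total}.

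Specialising to $f=f_{i,n}^\omega$ from Proposition~\ref{prop:NGEuniforme} and applying the triangle inequality, the proof reduces to showing that each of the four integrated summands vanishes in $L^1(\Prob_{\mu^N}^{\lambda,\beta})$ in the iterated limit $N\to\infty$, $\varepsilon\to 0$, $n\to\infty$. The first vanishes by the stronger exponential estimate satisfied by $X_{i,N}^{f_{i,n}^\omega,\varepsilon N}$, recalled just above Lemma~\ref{lem:k4total}, which implies the $L^1$ bound via Jensen's inequality. The second vanishes directly by Lemma~\ref{lem:k4total}. The third vanishes by Lemma~\ref{lem:Lfnegligeable} applied to $f_{i,n}^\omega$ at each $n$ fixed.

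The main obstacle is the remainder $\mathcal{E}_N(f)$: although $\genwa f_{i,n}^\omega$ is a bounded cylinder function, it does not have vanishing mean under the grand-canonical measures $\gcm$, so the Riemann-type sum defining $\mathcal{E}_N(f)$ is \emph{a priori} of order one rather than vanishing. The plan for this piece is to invoke the Replacement Lemma (Lemma~\ref{lem:replacementlemma}) on the bounded local function $\psi=\genwa f_{i,n}^\omega$, which replaces $\tau_x\genwa f$ by the canonical expectation $\E_{\tau_x\densep}(\genwa f)$ in $L^1$ up to an error vanishing with $N$ and then $\varepsilon$, and then to reabsorb the resulting averaged contribution into the $-\E_{\densl}(\curaom_i+\genwa f)/N$ term already hidden inside $\mathcal{U}_i^{f,l}$ via a rearrangement of the decomposition—effectively shifting the mean-field part of $\mathcal{E}_N(f)$ into the second summand before applying Lemma~\ref{lem:k4total}, so that the contribution cancels in the limit. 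No analytical ingredient beyond the Replacement Lemma and the three results already proved will then be needed.
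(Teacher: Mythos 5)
Your identity $\mathcal{U}_i^{f,\varepsilon N}=\V_i^{f,\varepsilon N}+\frac{1}{N}\mathcal{R}_i^{f,\varepsilon N}+\genex f-\frac{2}{N}\genwa f$ is correct arithmetic, but the grouping is the wrong one and the residual it produces does not vanish. The term $\mathcal{E}_N(f)=-\frac{2}{N^2}\sum_{x}G(x/N)\tau_x\genwa f$ has $N^2$ uniformly bounded summands and only a $N^{-2}$ prefactor, so it is $O(1)$. The Replacement Lemma merely replaces $\tau_x\genwa f$ by the mesoscopic average $\E_{\tau_x\densep}(\genwa f)$, and since $\genwa f$ does \emph{not} have mean zero under the grand-canonical measures (the change of variables $\confhat\mapsto\confhat^{x,x+\delta e_i}$ does not balance the $\lambda_i(\theta_x)$ factor), this leaves a Riemann sum converging to a nonzero $O(1)$ quantity — a genuine contribution, not an error term. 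Nor can it be ``reabsorbed'': Lemma \ref{lem:k4total} is precisely the statement that the \emph{centered} function $\curaom_i+\genwa f-\E_{\densl}(\curaom_i+\genwa f)$ averages out, and shifting an extra $-2\genwa f/N$ into that piece breaks the centering; while the $-\frac{1}{N}\E_{\densl}(\curaom_i+\genwa f)$ term inside $\mathcal{U}_i^{f,l}$ cannot be tampered with, since Lemma \ref{lem:driftexpectation} identifies its limit as the drift coefficient of \eqref{equadiff}.

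The stray factor $2$ comes from the sign of $\gene f$ inside $\V_i^{f,\varepsilon N}$, which carries $-\gene f$, so that the leftover $\mathcal{U}-\V-\frac1N\mathcal{R}$ is $\gene f-\frac{1}{N}\genwa f$, which you then rewrite as $\genex f-\frac{2}{N}\genwa f$. The paper instead adds and subtracts $\genex f_{i,n}^\omega$ itself and groups into exactly three pieces: $\curom_i+d_s(\rho_{\varepsilon N})\ddi\densom_{\varepsilon N}+\diff(\rho_{\varepsilon N},\densom_{\varepsilon N})\ddi\rho_{\varepsilon N}+\gene f_{i,n}^\omega$, then $\frac{1}{N}\mathcal{R}_i^{f_{i,n}^\omega,\varepsilon N}$, then $-\genex f_{i,n}^\omega$. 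Now the $+\gene f$ from the first piece pairs with the $+\frac{1}{N}\genwa f$ from the second and the $-\genex f$ from the third \emph{exactly}, with nothing left over at the wrong scale, and the three pieces vanish term by term via Corollary \ref{currents3}, Lemma \ref{lem:k4total}, and Lemma \ref{lem:Lfnegligeable}. (That the first piece carries $+\gene f_{i,n}^\omega$ while $\V_i^{f,\varepsilon N}$ is written with $-\gene f$ is harmless: Corollary \ref{currents3} is an infimum over cylinder functions and so is invariant under $f\mapsto -f$.) No further Replacement-Lemma step beyond the one already inside Lemma \ref{lem:k4total} is needed.
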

\proofthm{Corollary \ref{currentstot}}{Adding and subtracting $\genex f_{i,n}^\omega$ to ${\mathcal U}_i^{f_{i,n}^\omega,\varepsilon N}$, we can split it into three parts, 
\[\curom_i+d_s\left(\rho_{\varepsilon N}\right)\ddi\densom_{\varepsilon N}+\diff\left(\rho_{\varepsilon N}, \densom_{\varepsilon N}\right)\ddi\rho_{\varepsilon N}+\gene f_{i,n}^\omega ,\]
\[\frac{1}{N}(\curaom_i+\genwa f_{i,n}^\omega)-\frac{1}{N}\E_{\densep}(\curaom_i+\genwa f_{i,n}^\omega),\quad \mbox{ and }\quad -\genex f_{i,n}^\omega.\]

The contribution of the first quantity vanishes in the limit of Corollary \ref{currentstot}, according to Corollary \ref{currents3}. The second contribution also does thanks to Lemma \ref{lem:k4total}, as well as the third due to Lemma \ref{lem:Lfnegligeable}, thus completing the proof of the Corollary.}

We now derive an explicit expression for the limit of $\E_{\densep}(\curaom_i+\genwa f_{i,n}^\omega)$, appearing in ${\mathcal U}_i^{f_n,l}$, as $n$ goes to $\infty$.
\begin{lemm}\label{lem:driftexpectation}For any $\param\in \pset$, 
\begin{equation}\label{idcoeff}\lim_{n\to\infty}\E_{\param}\pa{\curaom_i+\genwa f^\omega_{i,n}}=2 \sdc(\am)\am_{\omega  \lambda_i}+2\frac{\aom\am_{\lambda_i}}{\am}(1-\am-\sdc(\am)),\end{equation}
where for any function $\Phi\in C^1(\ctoruspi)$, we defined $\am_{\Phi}=\Egcm(\Phi(\theta_0)\conf_0)$.
\end{lemm}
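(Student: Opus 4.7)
The plan is to convert both terms $\E_{\param}(\curaom_i)$ and $\E_{\param}(\genwa f^\omega_{i,n})$ into inner products in $(\halpha,\scal{\cdot}^{1/2})$, and then to pass to the limit $n\to\infty$ using the $\halpha$-convergence $\gene f^\omega_{i,n}\to -\curohat_i-\sdc(\am)\ddi\comz_0$. This convergence is extracted from Proposition~\ref{prop:NGEuniforme} by letting $p\to\infty$ first, using $c_p(\am)\to 1$, $d_p(\param)\to 0$, and $h_i^p=\sdc(\am)(\ddi\comp_0+\Egcm(\omega)\cur_i)\to \sdc(\am)\ddi\comz_0$ in $\halpha$ (the error $\ddi(\com_0\1_{E_p^c})$ vanishes in $\scal{\cdot}^{1/2}$ when $\am<1$, and $\sdc(1)=0$ trivializes the boundary case). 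The product form of $\gcm$ gives at once $\E_{\param}(\curaom_i)=2\am_{\omega\lambda_i}(1-\am)$.

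The central step is to establish the duality
\[\E_{\param}(\genwa f)\;=\;2\sum_{k=1,2}\scal{\gene f,\cur_k^{\lambda_k}}\qquad\text{for every cylinder function }f.\]
For the left-hand side, I would use the $\gcm$-invariance of each swap $\confhat\mapsto\confhat^{x,x+\delta e_i}$ to remove the term $f(\confhat^{x,x+\delta e_i})$; reindexing over $\delta=\pm 1$ then collapses the whole expression to $\E_{\param}(\genwa f)=-2\sum_{x,k}\E_{\param}(f\,\tau_x\cur_k^{\lambda_k})$. For the right-hand side, I would compute $\gene\conf_x^{\Phi}$ directly (only the four jumps adjacent to $x$ contribute) and obtain, after a discrete summation by parts in $x$, the key identity $\sum_{x\in\Z^2} x_k\,\gene\conf_x^{\Phi}=\sum_{y\in\Z^2}\tau_y\cur_k^{\Phi}$. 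Plugged into Definition~\ref{defi:limitingcovariance1} together with the self-adjointness of $\gene$ in $L^2(\gcm)$, this yields $\scal{\gene f,\cur_k^{\lambda_k}}=-\sum_x\E_{\param}(f\,\tau_x\cur_k^{\lambda_k})$, matching the other side up to the factor $2$.

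Once the duality is proved, continuity of $\scal{\cdot,\cdot}$ gives
\[\lim_{n\to\infty}\E_{\param}(\genwa f^\omega_{i,n})=-2\sum_{k=1,2}\bigl[\scal{\curohat_i,\cur_k^{\lambda_k}}+\sdc(\am)\scal{\ddi\comz_0,\cur_k^{\lambda_k}}\bigr].\]
Polarization of \eqref{scal0} evaluates the first bracket as $\1_{\{i=k\}}(1-\am)\bigl(\am_{\omega\lambda_i}-\aom\am_{\lambda_i}/\am\bigr)$. For the second, since $\ddi\comz_0\in\tzero$ and $\E_{\param}(\comz_0)=0$, formula \eqref{scal} restricts $\sum_x x_k\conf_x^{\lambda_k}$ effectively to $x=e_i$ (under $\gcm$ all the other terms decouple and vanish), giving $-\1_{\{i=k\}}\bigl(\am_{\omega\lambda_i}-\aom\am_{\lambda_i}/\am\bigr)$. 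Adding $2\am_{\omega\lambda_i}(1-\am)$ to $-2\bigl(1-\am-\sdc(\am)\bigr)\bigl(\am_{\omega\lambda_i}-\aom\am_{\lambda_i}/\am\bigr)$ and collecting terms recovers exactly the announced right-hand side $2\sdc(\am)\am_{\omega\lambda_i}+2\tfrac{\aom\am_{\lambda_i}}{\am}(1-\am-\sdc(\am))$.

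The main obstacle will be the clean derivation of the duality $\E_{\param}(\genwa f)=2\sum_k\scal{\gene f,\cur_k^{\lambda_k}}$: although it is morally the weakly-asymmetric analogue of the fluctuation--dissipation identity used for the symmetric current, the non-self-adjointness of $\genwa$ and the particular structure of \eqref{scal} require that both sides be computed from scratch rather than quoted from \eqref{asymad}, and the sign must be tracked with particular care in the swaps of $\confhat\mapsto \confhat^{x,x+\delta e_i}$. Once this identity is in place, everything else is linear-algebraic book-keeping in $\halpha$, made possible precisely by the approximation property of $(f^\omega_{i,n})_n$ built into Proposition~\ref{prop:NGEuniforme}.
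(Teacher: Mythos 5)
Your proposal follows essentially the same route as the paper: both $\E_{\param}(\curaom_i)$ and $\E_{\param}(\genwa f^\omega_{i,n})$ are converted into inner products of currents in $\halpha$, the decomposition from Proposition~\ref{prop:currentsdecomposition} is substituted, and the limiting inner products are evaluated via Corollary~\ref{actioncurp}. The duality $\E_{\param}(\genwa f)=2\scal{\gene f,\cur_1^{\lambda_1}+\cur_2^{\lambda_2}}$ that you single out as the key lemma is exactly the paper's \eqref{Ecuraom2} (stated there as an ``elementary computation''), and your change-of-variable derivation of it is sound — the sign bookkeeping through the $\delta=\pm1$ swaps is indeed the delicate point.

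One technical slip deserves a flag: the assertion ``$\ddi\comz_0\in\tzero$'' is false. $\ddi\comz_0=\comz_{e_i}-\comz_0$ does \emph{not} vanish on full configurations (on $\{\conf_0=\conf_{e_i}=1\}$ it equals $\widehat\omega(\theta_{e_i})-\widehat\omega(\theta_0)$), so it fails the second defining condition of $\czero$ in \eqref{CzeroDef} and is therefore outside the stated domain of formula \eqref{scal}. This is precisely why the paper works at finite $p$ — $h_i^p$, unlike $\ddi\comz_0$, legitimately lies in $\tzero$ because of the cutoff $\1_{E_p}$ — computes $\scal{h_i^p,\cur_k^{\lambda_k}}$ via Corollary~\ref{actioncurp}, and only sends $p\to\infty$ in the very last line. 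Your answer still comes out right because the ``bad'' piece $\conf_0\conf_{e_i}\bigl(\omega(\theta_{e_i})-\omega(\theta_0)\bigr)$ of $\ddi\comz_0$ happens to be $\gcm$-orthogonal to $\sum_x x_k\conf_x^{\lambda_k}$ under the product measure, but this needs to be checked rather than invoked by quoting \eqref{scal}. The cleanest fix is simply to interchange the order of the $n$- and $p$-limits, as the paper does, so that every inner product one writes down is taken between bona fide elements of $\tzero+J^*+\gene\mathcal{C}$; the element your proposal denotes ``$\sdc(\am)\ddi\comz_0$'' is then rigorously just the ($p$-independent, by Proposition~\ref{prop:currentsdecomposition}) class of $c_p(\am)h_i^p+d_p(\param)\cur_i$ in the completion $\halpha$.
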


\proofthm{Lemma \ref{lem:driftexpectation}}{
By definition of $\curaom_i=\lambda_i(\theta_0)\omega(\theta_0)\conf_0(1-\conf_{e_1})+\lambda_i(\theta_{e_i})\omega(\theta_{e_i})\conf_{e_i}(1-\conf_{0})$, 
we can write, shortening as before $\Egcm(\Phi)=\Egcm(\Phi(\theta_0)|\conf_0=1)$,
\begin{equation}\label{Ecuraom}\Egcm(\curaom_i)=2\Egcm(\lambda_i\omega)\alpha(1-\alpha)=2
\scalstar{\cur_i^{\lambda_i},\cur_i^{\omega}}.\end{equation}
For any cylinder function $f$, by translation invariance of $\mesinv$ and Definition \ref{defi:limitingcovariance1}, one also obtains by elementary computations that
\begin{equation}\label{Ecuraom2}\Egcm(\genwa f)=2\scalstar{\cur_1^{\lambda_1}+\cur_2^{\lambda_2}, \gene f}.\end{equation}

Recalling Corollary \ref{actioncurp}, we can then write  
\begin{align*}
\scalstar{\cur_k^{\lambda_k},h_i^{p,\omega}}&=\scalstar{\cur_k^{\widehat{\lambda}_k},h_i^{p,\omega}}+\Egcm(\lambda_k)\scalstar{\cur_k,h_i^{p,\omega}}\\
&=-\1_{\{i=k\}}[\am d_s(\am)Cov_{\param}(\omega, \lambda_i)(1-o_p(1))-\Egcm(\lambda_i)o_p(1)]
\end{align*}
where as before $\widehat\lambda_k=\lambda_k-\Egcm(\lambda_k)$.  
We can also write by Definition \ref{defi:limitingcovariance1}
\[\scalstar{\cur_k^{\lambda_k},\cur_i^{\omega}}=\1_{\{i=k\}}\Egcm(\lambda_k\omega)\am(1-\am).\]

Once again, in order to avoid taking everywhere limits $n\to\infty$, we assume for the convenience of notations, that there exists a local function 
$f_i^{\omega}$ realizing the infimum \eqref{infr}. Recall then from equation \eqref{scal3p} that in $\halpha$, we have the identity $\curohat_i+\gene f_i^{\omega} =-c_p(\am)h_i^p-d_p(\param)\cur_i$.
Then, using \eqref{Ecuraom}, \eqref{Ecuraom2}, and the explicit formulas for the inner products which prove orthogonality of directions $ i\neq k$, 
\begin{align}
\Egcm(\curaom_i+\genwa f_i^{\omega})&=2\scalstar{\cur_1^{\lambda_1}+\cur_2^{\lambda_2}, \gene f_i^{\omega}}+2\scalstar{\cur_i^{\lambda_i},\cur_i^{\omega}}\nonumber\\
&=2\scalstar{\cur_1^{\lambda_1}+\cur_2^{\lambda_2}, \curohat_i+\gene f_i^{\omega}}-2\scalstar{\cur_1^{\lambda_1}+\cur_2^{\lambda_2}, \curohat_i}+2\scalstar{\cur_i^{\lambda_i},\cur_i^{\omega}}\nonumber\\
&=-2\scalstar{\cur_1^{\lambda_1}+\cur_2^{\lambda_2}, c_p(\am)h_i^{p,\omega}+d_p(\param)\cur_i}-2\scalstar{\cur_i^{\lambda_i}, \curohat_i}+2\scalstar{\cur_i^{\lambda_i},\cur_i^{\omega}}\nonumber\\
&=-2c_p(\am)\scalstar{\cur_i^{\lambda_i},h_i^{p,\omega}}-2d_p(\param)\scalstar{\cur_i^{\lambda_i},\cur_i}+2\Egcm(\omega)\scalstar{\cur_i^{\lambda_i}, \cur_i}.
\end{align}
We now let $p\to\infty$, so that $d_p$ vanishes, $c_p$ goes to $1$, to obtain as wanted, by Definition \ref{defi:limitingcovariance1} and Corollary \ref{actioncurp}, 
\[\Egcm(\curaom_i+\genwa f_i^{\omega})=2\am d_s(\am)Cov_{\param}(\omega, \lambda_i)+2\Egcm(\omega)\Egcm(\lambda_i)\am(1-\am).\]
Reorganizing the terms yield Lemma \ref{lem:driftexpectation}.
}

\section{Proof of the hydrodynamic limit}
\label{sec:7}

 \intro{We now have all the pieces to prove Theorem \ref{thm:mainthm}. The last remaining difficulty is to perform the second integration by parts, since even the gradients obtained in Section \ref{sec:6} are not exactly microscopic gradients due to the non-constant diffusion coefficient. This is not a problem when the variations only depend on one quantity, the density for example, since we can then simply consider a primitive of the diffusion coefficient and obtain at the highest order in $N$ a discrete gradient. This is not the case here, and we need some more work to obtain the wanted gradient.
 }

Let us recall from Section \ref{subsec:outlineandcurrents} that for any smooth function $H\in C^{1,2,1}([0,T]\times\ctorus\times \ctoruspi)$, that we denoted by $M_t^{H,N}$ the martingale 
\begin{equation}M_t^{H,N}=<\pi_t^{N},H_t>-<\pi_0^{N},H_0>-\int_0^t  \cro{<\pi_s^{N},\partial_sH_s>+L_N<\pi_s^{N},H_s> }ds,\end{equation}
where 
\[\pi_s^{N}=\frac{1}{N^2}\sum_{x\in \torus}\conf_x(t)\delta_{x/N,\theta_x(s)}\]
is the empirical measure of the process on $\ctorus\times\ctoruspi$.

\proofthm{Theorem \ref{thm:mainthm}}{ The quadratic variation $ [M^{H,N}]_t$ of $M_t^{H,N}$ (cf.  A1.5. Lemma 5.1 in \cite{KLB1999})  is 
\begin{align*}[M^{H,N}]_t=&\int_0^t  L_N <\pi_s^{N},H_s>^2-2<\pi_s^{N},H_s>L_N <\pi_s^{N},H_s>ds \\
=&\int_0^t\frac{1}{N^4}\sum_{x\in \torus}\left[\sum_{|z|=1}A_1(\confhat,x,z)H_s(x/N)H_s((x+z)/N)+A_2(\confhat,x)H_s(x/N)^2\right] ds\\
\leq&\int_0^t \frac{1}{N^4}\sum_{x\in \torus} C\norm{H}_{\infty}^2 ds\leq \frac{1}{N^2}tC\norm{H}_{\infty}^2,
\end{align*}
where $C$, $A_1(\confhat,x,z)$ and $A_2(\confhat,x)$ are bounded uniformly in $N$. The quadratic variation $[M^{H,N}]_t$  is therefore  of order $N^{-2}$, and  vanishes as $N$ goes to infinity. 
Doob's inequality hence gives us for any $T>0$, $\delta>0$ 
\[\lim_{N\to \infty}\Prob_{\mu^N}^{\lambda,\beta}\left(\sup_{0\leq t\leq T}\abs{M_t^{H,N}}\geq \delta\right)=0,\]
and in particular 
\begin{equation}\label{martingale}\lim_{N\to \infty}\Prob_{\mu^N}^{\lambda,\beta}\left(\abs{M_T^{H,N}}\geq \delta\right)=0.\end{equation}

We first consider the case of a function $H$ such that 
\[H_t(u,\theta)=G_t(u)\omega(\theta),\]
the general case will be a simple consequence of a periodic version of the Weierstrass approximation Theorem. For any such $H$, we can write 
\begin{align}\label{intLN}\int_0^T L_N<\pi_t^{N},H_t> dt=\frac{1}{N^2}\int_0^Tdt  \sum_{x\in \torus} \tau_x \cro{\sum_{i=1}^2[N\curom_i+\curaom_i](t)\partial_{u_i,N} G_t(x/N) + G_t(x/N)\gamma^{\omega}(t)} ,
\end{align}
where $\curom_i$, $\curaom_i$ and $\gamma^{\omega}$ were introduced in Definition \ref{defi:currents}, and 
\[\partial_{u_i,N} G(x/N)=N( G(x+e_i/N)- G(x/N))\] is a microscopic approximation of the spatial  derivative $\partial_{u_i}G$.

Thanks to Sections \ref{sec:4} and \ref{sec:6}, we can perform the following replacements, in the expectation of the  expression above, and in the limit $N\to \infty$ then $\varepsilon \to 0$: 
\begin{itemize}
\item Thanks to Corollary \ref{currentstot}, we can replace $\curom_i$ by \begin{equation}\label{curomremp}-\cro{\sdc(\rhoep)\ddi\densom_{\varepsilon N} +\diff(\rho_{\varepsilon N},\densom_{\varepsilon N})\ddi\rho_{\varepsilon N}}, \end{equation}
where $\diff$ is given by equation \eqref{diffomdef}, \[\diff(\rho, \densom)=\densom(1-\sdc(\rho))/\rho,\]
\item Thanks to Corollary \ref{currentstot} and Lemma \ref{lem:driftexpectation}, $\curaom_i$ can be replaced   by
\[R_i^{\omega}(\densep):=2\cro{  \sdc(\rhoep)\E_{\densep}(\conf^{\omega\lambda_i}_0)+\frac{\E_{\densep}(\com_{0}) \E_{\densep}(\conf_0^{\lambda_i})}{\rhoep}\pa{1-\rhoep-\sdc(\rhoep)}}.\]
\item Finally, the Replacement Lemma \ref{lem:replacementlemma} yields that $\gamma^{\omega}$ can be replaced by $\E_{\densep}(\gamma^{\omega})$.
\end{itemize}

\bigskip

In other words, thanks to equation \eqref{martingale}, for any $H_s(u,\theta)=G_s(u)\omega(\theta)$, we can write  
\begin{equation}\label{martingalet}\limep\lim_{N\to \infty}\Prob_{\mu^N}^{\lambda,\beta}\left(\abs{\widetilde{M}_T^{H,N, \varepsilon}}\geq \delta\right)=0,\end{equation}
where 
\begin{multline}\label{martdeft}\widetilde{M}_T^{H,N, \varepsilon}=<\pi_T^{N},H_T>-<\pi_0^{N},H_0>-\int_0^T <\pi_t^{N},\partial_tH_t>dt\\
+\int_0^Tdt  \Bigg[\frac{1}{N^2}\sum_{x\in \torus} \tau_x \sum_{i=1}^2\cro{N \pa{\sdc(\rhoep)\ddi\densom_{\varepsilon N} +\diff(\rho_{\varepsilon N},\densom_{\varepsilon N})\ddi\rho_{\varepsilon N}}-R_i^{\omega}(\densep)}\partial_{u_i,N} G_t(x/N) \\
- G_t(x/N)\E_{\densep}(\gamma^{\omega})\Bigg](t)  ,\end{multline}

In order to give a clear scheme, we divide the end of the proof in a series of steps.

\paragraph{Performing the second integration by parts}Due to the presence of the diffusion coefficients, one cannot switch directly the last discrete derivatives $\ddi \rho_{\varepsilon N}$ and $\ddi \densom_{\varepsilon N}$ onto the smooth function $G$. In one dimension, one would consider a primitive $d(\rho)$ of the diffusion coefficient $D(\rho)$, and write that \[D(\rho_{\varepsilon N})\ddi \rho_{\varepsilon N}=\ddi d(\rho_{\varepsilon N})+o_N(\ddi \rho_{\varepsilon N}).\] However, our case cannot be solved that way because the differential form 
\[(\rho, \rho^{\omega})\mapsto \sdc(\rho)d\densom +\diff(\rho,\densom)d\rho,\]
 is not closed, and therefore not exact either, which means that we cannot express \eqref{curomremp} as 
\[\ddi F(\rho_{\varepsilon N},\densom_{\varepsilon N})+o_N(1/N).\]
We thus need another argument to obtain the differential equation \eqref{equadiff}.

First, we get rid of the part with $\ddi\densom$. To do so, notice that 
\begin{align*}\ddi \cro{\sdc(\rhoep)\densom_{\varepsilon N}}&=\sdc(\rhoep)\ddi\densom_{\varepsilon N}+\densom_{\varepsilon N}\ddi \sdc(\rhoep)+o_N(1/N)\nonumber\\
&=\sdc(\rhoep)\ddi\densom_{\varepsilon N}+\densom_{\varepsilon N}\sdc'(\rhoep)\ddi \rhoep+o_N(1/N).\end{align*}
We can therefore write 
\begin{equation}\label{IPPend}\sdc(\rhoep)\ddi\densom_{\varepsilon N}=\ddi \cro{\sdc(\rhoep)\densom_{\varepsilon N}}-\densom_{\varepsilon N}\sdc'(\rhoep)\ddi \rhoep+o_N(1/N).\end{equation}
Let us denote for any $x\in \torus$  
\[D^{\varepsilon N}_x=\tau_x\pa{\diff(\rhoep, \densom_{\varepsilon N})-\densom_{\varepsilon N} \sdc'(\rhoep)}.\]
We perform a second integration by parts in the contribution of the first term in the right-hand side of \eqref{IPPend}, whereas the left-hand side is added to the existing contribution of $\ddi\rhoep$, with the modified diffusion coefficient $D^{\varepsilon N}_x$ defined above. We can now rewrite $\widetilde{M}_T^{H,N, \varepsilon}$ as 
\begin{equation}\label{LNI1}<\pi_T^{N},H_T>-<\pi_0^{N},H_0>-\int_0^T <\pi_t^{N},\partial_tH_t>dt-\int_0^T I_1(t,\confhat_t)-I_2(t,\confhat_t) dt +o_N(1),\end{equation}
 where \[I_1(t,\confhat)=\frac{1}{N^2}  \sum_{x\in \torus} \tau_x\cro{ \sum_{i=1}^2\sdc(\rhoep)\densom_{\varepsilon N}\partial_{u_i,N}^2 G_t(x/N)+R_i^{\omega}(\densep)\partial_{u_i,N} G_t(x/N) + G_t(x/N)\E_{\densep}(\gamma^{\omega}))}\]
and 
\begin{align*}I_2(t,\confhat)&=\frac{1}{N^2}\sum_{x\in \torus} \tau_x \sum_{i=1}^2ND_0^{\varepsilon N}\ddi\rhoep \partial_{u_i,N} G_t(x/N)\\
&=\frac{1}{N^2}\sum_{x\in \torus}  \sum_{i=1}^2ND_x^{\varepsilon N}(\tau_{x+e_i}\rhoep-\tau_x\rhoep) \partial_{u_i,N} G_t(x/N).\end{align*}
In $I_1$, we regrouped all the terms for which taking the limit $N\to \infty$ is not a problem, whereas $I_2$ is the term where the extra factor $N$ still has to be absorbed in a spatial derivative.

\paragraph{Replacement of the microscopic gradient by a mesoscopic gradient}Since we cannot switch the derivative on the smooth function $G$ due to the diffusion coefficient, we need to obtain the gradient of $\rho$ in another way. For this purpose, we need to replace the microscopic gradient $\tau_{x+e_i}\rhoep-\tau_x\rhoep$ by a mesoscopic gradient, and make the derivative (in a weak sense) of $\rho$ appear directly. More precisely, let us define 
\[\widetilde{I}_2(t,\confhat)=\frac{1}{N^2}\sum_{x\in \torus}  \sum_{i=1}^2 D_x^{\varepsilon N}\frac{\tau_{x+\varepsilon^3Ne_i}\rhoep-\tau_{x-\varepsilon^3Ne_i}\rhoep}{2\varepsilon ^3} \partial_{u_i,N} G_t(x/N).\]
We are going to prove that for any configuration $\confhat$, \begin{equation}\label{I2tilde}\abs{I_2(t,\confhat)-\widetilde{I}_2(t,\confhat)}\leq o_N(1)+o_{\varepsilon}(1),\end{equation}
uniformly in $\confhat$.
To prove the latter, for any $k\in \llbracket-\varepsilon^3N,\varepsilon^3N\rrbracket$, let us denote by $x_k=x+ke_i$,
\[\tau_{x+\varepsilon^3Ne_i}\rhoep-\tau_{x-\varepsilon^3Ne_i}\rhoep=\sum_{k=-\varepsilon^3N}^{k=\varepsilon^3N-1}\tau_{x_{k+1}}\rhoep-\tau_{x_k}\rhoep.\]
A summation by parts therefore allows us to rewrite $\widetilde{I}_2$ as 
\[\widetilde{I}_2(t,\confhat)=\frac{1}{N^2}\sum_{x\in \torus}  \sum_{i=1}^2\cro{\frac{1}{2{\varepsilon ^3 N}}\sum_{k=-\varepsilon^3N}^{k=\varepsilon^3N-1}D_{x_k}^{\varepsilon N} \partial_{u_i,N} G_t(x_k/N)}N(\tau_{x+e_i}\rhoep-\tau_x\rhoep).\]
Furthermore, we can write for any $x\in \torus$
\begin{multline*}\abs{D_{x}^{\varepsilon N} \partial_{u_i,N} G_t(x/N)-\frac{1}{2\varepsilon^3N}\sum_{k=-\varepsilon^3N}^{k=\varepsilon^3N-1}D_{x_k}^{\varepsilon N} \partial_{u_i,N} G_t(x_k/N)}\\
\leq\frac{1}{2\varepsilon^3N}\sum_{k=-\varepsilon^3N}^{k=\varepsilon^3N-1}\abs{D_{x}^{\varepsilon N}( \partial_{u_i,N} G_t(x/N)- \partial_{u_i,N} G_t(x_k/N))}+\abs{\partial_{u_i,N} G_t(x_k/N)(D_{x}^{\varepsilon N}-D_{x_k}^{\varepsilon N})}.\end{multline*}
Since the diffusion coefficients are bounded and $G_s$ is $C^2$, and since $x$ and the $x_k$'s are distant of $\varepsilon^3N$, we can write 
\[\abs{D_{x}^{\varepsilon N}( \partial_{u_i,N} G_t(x/N)- \partial_{u_i,N} G_t(x_k/N))}\leq C(G_t)\varepsilon^3.\]
Since $D_{x_k}^{\varepsilon N}$ depends on the macroscopic density $\dens_{\varepsilon N}$, and since the diffusion coefficients can be extended as $C^1$ functions due to their explicit expression, we also have 
\begin{align*}\abs{\partial_{u_i,N} G_t(x_k/N)(D_{x}^{\varepsilon N}-D_{x_k}^{\varepsilon N})}&\leq {C'(G_t)}\pa{\abs{\tau_x \rhoep-\tau_{x_k} \rhoep}+\abs{\tau_x \densom_{\varepsilon N}-\tau_{x_k} \densom_{\varepsilon N}}}\\
&\leq C''(G_t, \omega)\frac{\varepsilon^3 N}{\varepsilon N}.\end{align*}
These two bounds finally yield that 
\begin{equation}\label{boundabsD}\abs{D_{x}^{\varepsilon N} \partial_{u_i,N} G_t(x/N)-\frac{1}{2\varepsilon^3N}\sum_{k=-\varepsilon^3N}^{k=\varepsilon^3N-1}D_{x_k}^{\varepsilon N} \partial_{u_i,N} G_t(x_k/N)}\leq C(G_t)\varepsilon^3 +C''(G_t, \omega)\varepsilon^2=o_{\varepsilon }(\varepsilon).\end{equation}
By definition of  $I_2$ and $\widetilde{I}_2$, the triangular inequality yields 
 \begin{multline*}\abss{I_2-\widetilde{I}_2}\leq\\
 \frac{1}{N^2}\sum_{x\in \torus}  \sum_{i=1}^2\abs{D_{x}^{\varepsilon N} \partial_{u_i,N} G_t(x/N)-\frac{1}{2\varepsilon^3N}\sum_{k=-\varepsilon^3N}^{k=\varepsilon^3N-1}D_{x_k}^{\varepsilon N} \partial_{u_i,N} G_t(x_k/N)}N(\tau_{x+e_i}\rhoep-\tau_x\rhoep) .\end{multline*}
The quantity inside the absolute values in the right-hand side above is $o_N(1)+o_{\varepsilon}({\varepsilon})$, thanks to \eqref{boundabsD}, whereas $N(\tau_{x+e_i}\rhoep-\tau_x\rhoep) $ is of order at most $1/\varepsilon$, whereas the quantity inside absolute values is $o_{\varepsilon}(\varepsilon)$, therefore their product vanishes as $\varepsilon\to 0$, which proves equation \eqref{I2tilde}.
We therefore have obtained as wanted that 
\begin{equation}\label{itilde}\limep\limN I_2(t,\confhat)-\widetilde{I}_2(t,\confhat)=0,\end{equation}
uniformly in $\confhat$. We can now replace in equation \eqref{LNI1} $I_2$ by $\widetilde{I_2}$.

\paragraph{Embedding in the space of trajectories of measures $\Mhat$}Recall that $Q^N$ is the distribution of the empirical measure of our process. We now wish to express the martingale $\widetilde{M}^{H,N,\varepsilon}_t$ introduced after equation \eqref{martingalet} as an explicit function of the empirical measure $\pi^N$ in order to characterize the limit points $Q^*$ of the compact sequence $Q^N$. For that purpose, let $(\varphi_{\varepsilon})_{\varepsilon \to 0}$ be a family of localizing functions on $\ctorus$,  \[\varphi_{\varepsilon}(\cdot)=(2\varepsilon)^{-2}\1_{[-\varepsilon,\varepsilon]^2}(\cdot),\]
and recall that we defined the empirical measure as 
\[\pi^N_t=\frac{1}{N^2}\sum_{x\in\torus}\conf_x(t)\delta_{x/N,\theta_x(t)}.\]
Then, for any function $\Phi:\ctoruspi\to \R$, and any $u\in \ctorus$ we denote by $\varphi_{\varepsilon,u}^{\Phi}$ the function 
\[\func{\varphi_{\varepsilon,u}^{\Phi}}{\ctorus\times\ctoruspi}{\R}{(v, \theta)}{\varphi_{\varepsilon}(v-u)\Phi(\theta)}.\] 
With this notation, we can therefore write
 \[\E_{\tau_x\densep}(\conf_0^{\Phi})=\frac{1}{(2\varepsilon N+1)^2}\sum_{\norm{y-x}_{\infty}\leq \varepsilon N}\conf^{\Phi}_y=\frac{(2\varepsilon N)^2}{(2\varepsilon N+1)^2}<\pi^{N},\varphi_{\varepsilon,x/N}^{\Phi}>.\]
In the particular case where $\Phi\equiv1$, (resp. $\Phi=\omega$), this rewrites
\[\tau_x\rho_{\varepsilon N}=\frac{(2\varepsilon N)^2}{(2\varepsilon N+1)^2}<\pi^{N},\varphi_{\varepsilon,x/N}^{1}>\quad \pa{\mbox{resp.}\tau_x\densom_{\varepsilon N}=\frac{(2\varepsilon N)^2}{(2\varepsilon N+1)^2}<\pi^{N},\varphi_{\varepsilon,x/N}^{\omega}>}.\]
Since  $(2\varepsilon N)^2/(2\varepsilon N+1)^2=1+o_N(1)$, we can replace in the limit $N\to \infty$ the quantity $\E_{\tau_x\densep}(\conf_0^{\Phi})$ (resp. $\tau_x{\rhoep}$, $ \tau_x\densom$) by the function of the empirical measure $<\pi^{N},\varphi_{\varepsilon,x/N}^{\Phi}>$ (resp. $<\pi^{N},\varphi_{\varepsilon,x/N}^{1}>$, $<\pi^{N},\varphi_{\varepsilon,x/N}^{\omega}>$).

We  deduce from equations  \eqref{martingalet}, \eqref{LNI1} and \eqref{itilde} and what precedes that  for any positive $\delta$,
\begin{equation}\label{delta}\limep\limN Q^N\left(\abs{N_T^{H,N}\pa{\pi^{[0,T]}}}\geq \delta\right)=0.\end{equation}
where $ N_T^{H,N}$ is defined as 
\begin{align}\label{mart2}
N&^{H,N}_T\pa{\pi^{[0,T]}}=<\pi_T,H_T>-<\pi_0,H_0>-\int_0^T <\pi_t,\partial_tH_t>dt\\
&-\int_0^T \cro{\frac{1}{N^2}\sum_{x\in \torus}\sum_{i=1}^2\widetilde{d}_{x/N, \varepsilon}(\pi_t)\partial^2_{u_i,N}G_t(x/N)+\widetilde{R}_{x/N, \varepsilon, i}(\pi_t)\partial_{u_i,N}G_t(x/N)+ \Gamma^{\omega}_{x/N, \varepsilon}\pa{\pi_t}G_t(x/N)}dt\nonumber\\
&+\int_0^T\cro{\frac{1}{N^2} \sum_{x\in \torus}\sum_{i=1}^2\widetilde{D}_{x/N, \varepsilon}(\pi_t)<\pi_t,\frac{\varphi_{\varepsilon,x/N+\varepsilon^3 e_i}^{1}-\varphi_{\varepsilon,x/N-\varepsilon^3 e_i}^{1}}{2\varepsilon^3}>\partial_{u_i,N}G_t(x/N)}dt.\nonumber
\end{align}
In the identity above, we denoted
\[ \widetilde{d}_{x/N, \varepsilon}(\pi)=\sdc(<\pi,\varphi_{\varepsilon,x/N}^{1}>)<\pi,\varphi_{\varepsilon,x/N}^{\omega}>\]
\[ \widetilde{D}_{x/N, \varepsilon}(\pi)=\diff(<\pi,\varphi_{\varepsilon,x/N}^{1}>,<\pi,\varphi_{\varepsilon,x/N}^{\omega}>)-<\pi,\varphi_{\varepsilon,x/N}^{\omega}> \sdc'(<\pi,\varphi_{\varepsilon,x/N}^{1}>)\]
\begin{multline*}\widetilde{R}_{x/N, \varepsilon, i}(\pi)= \sdc\pa{<\pi,\varphi_{\varepsilon,x/N}^{1}>}<\pi,\varphi_{\varepsilon,x/N}^{\omega \lambda_i}>\\
+\frac{<\pi,\varphi_{\varepsilon,x/N}^{\omega}><\pi,\varphi_{\varepsilon,x/N}^{\lambda_i}>}{<\pi,\varphi_{\varepsilon,x/N}^{1}>}\cro{1-<\pi,\varphi_{\varepsilon,x/N}^{1}>-\sdc\pa{<\pi,\varphi_{\varepsilon,x/N}^{1}>}},\end{multline*}
and $\Gamma^{\omega}_{u, \varepsilon}\pa{\pi}=\E_{\param_{x/N,\varepsilon}(\pi)}(\gamma^{\omega})$, where $\param_{x/N,\varepsilon}(\pi)\in\pset$ is the measure on $\ctoruspi$  
\[\param_{x/N,\varepsilon}(\pi)(d\theta)=\int_{\ctorus}\varphi_{\varepsilon}(.-x/N)\pi(du, d\theta).\]

 \paragraph{Limit $N\to\infty$} We have now successfully balanced out all the factors $N$, and can thus let $N$ go to $\infty$ in \eqref{delta}. Since $G$ is a smooth function, one can replace in \eqref{mart2} the discrete space derivatives $\partial_{u_i,N}$ by the continuous derivative $\partial_{u_i}$, the sums $N^{-2} \sum_{x\in \torus} $ by the integral $\int_{\ctorus}du$, and the variables $x/N$ by $u$. We proved in Proposition \ref{prop:compactnessQN} that the sequence of distributions $(Q^N)_N$ is relatively compact. Since the quantity inside the absolute values is a continuous function (for Skorohod's topology defined in Appendix \ref{subsec:topo}) of $\pi^{[0,T]}$, the whole event is an open set,  we obtain that for any weak limit point  $Q^*$ of $(Q^N)$, and any positive $\delta$,
\begin{align}\label{mart2*}
\limep Q^*\Bigg(\Bigg\vert&<\pi_T,H_T>-<\pi_0,H_0>-\int_0^T <\pi_t,\partial_tH_t>dt\nonumber\\
&-\int_0^T \int_{\ctorus}\sum_{i=1}^2\cro{\widetilde{d}_{u, \varepsilon}(\pi_t)\partial^2_{u_i}G_t(u)+\widetilde{R}_{u, \varepsilon, i}(\pi_t)\partial_{u_i}G_t(u)+ \Gamma^{\omega}_{u, \varepsilon}\pa{\pi_t}G_t(u)}dudt\nonumber\\
&+\int_0^T\int_{\ctorus}\sum_{i=1}^2\cro{\widetilde{D}_{u, \varepsilon}(\pi_t)<\pi_t,\frac{\varphi_{\varepsilon,u+\varepsilon^3 e_i}^{1}-\varphi_{\varepsilon,u-\varepsilon^3 e_i}^{1}}{2\varepsilon^3}>\partial_{u_i}G_t(u)}dudt.\Bigg\vert>\delta\Bigg)=0
\end{align}

 \paragraph{Limit $\varepsilon\to0$}In order to consider the limit $\varepsilon \to 0$, we need to express \[<\pi_t,\frac{\varphi_{\varepsilon,u+\varepsilon^3 e_i}^{1}-\varphi_{\varepsilon,u-\varepsilon^3 e_i}^{1}}{2\varepsilon^3}>\] in the third line above as an approximation of the gradient of the density $\partial_{u_i}\rho_t(u)$. As in the proof of Lemma \ref{lem:H1estimate}, consider a smooth function $h_{\varepsilon,i,u}$ such that 
\begin{equation}\label{approxphi}\int_{\ctorus}\abs{\frac{\varphi_{\varepsilon,u+\varepsilon^3 e_i}^{1}-\varphi_{\varepsilon,u-\varepsilon^3 e_i}^{1}}{2\varepsilon^3}(v)-h_{\varepsilon,i,u}}dv=o_{\varepsilon}(1).\end{equation}
Since such a function is very similar to the one already presented in  Lemma \ref{lem:H1estimate}, we do not give a detailed construction here. Then, we can build a smooth anti-derivative  $H_{\varepsilon,u}$ of $h_{\varepsilon, i, u}$, and we can write for any  $u\in \ctorus$, and any density $\rho$ in $H^1$, 
\[\int_{\ctorus}\rho(v)h_{\varepsilon,i,u}(v) dv=\int_{\ctorus}\partial_{u_i}\rho(v)H_{\varepsilon,u}(v) dv.\]
Regarding the third line of \eqref{mart2*}, this yields
\[<\pi_t,\frac{\varphi_{\varepsilon,u+\varepsilon^3 e_i}^{1}-\varphi_{\varepsilon,u-\varepsilon^3 e_i}^{1}}{2\varepsilon^3}>=\int_{\ctorus}\partial_{u_i}\rho(v)H_{\varepsilon,u}(v) dv+o_{\varepsilon}(1),\]
 where $H_{\varepsilon,u}$ is a smooth approximation of a Dirac in $u$ and $o_{\varepsilon}(1)$ is uniform in $u$.  According to \eqref{Enest}, $\partial_{u_i}\rho$ is in $L^2([0,T]\times \ctorus)$ $Q^*$-a.s, therefore
\begin{equation}\label{convL2}\int_{\ctorus}\partial_{u_i}\rho_t(v)H_{\varepsilon,u}(v) dv\underset{\varepsilon\to 0}{\xrightarrow{\makebox[3cm]{$L^2([0,T]\times \ctorus)$}}}\partial_{u_i}\rho_t(u),\end{equation}
$Q^*$-a.s. (see, for example, Theorem 4.22, p.109 in \cite{BrezisB2010}).

By Lemma \ref{lem:Lebesguedensity} any limit point $Q^*$ of $(Q^N)$ is concentrated on measures absolutely continuous w.r.t. the Lebesgue measure on $\ctorus$. For any such measure $\pi^{[0,T]}$, we denote by $\boldsymbol \dens_t(u,d\theta)$ its corresponding density profile on the torus at time $t$, and let
\[\rho_t^{\omega}(u)=\int_{\ctoruspi}\omega(\theta)\boldsymbol\dens_t(u,d\theta).\] 
We also shorten $\rho(u)=\rho^1(u)$. Thanks to this last remark and using both \eqref{convL2} and the dominated convergence theorem for the second line of \eqref{mart2*}, we can now let $\varepsilon $ go to $0$ in equation \eqref{mart2*}, to obtain that for any limit point $Q^*$ of $(Q^N)$ and any $\delta>0$,
\begin{multline}\label{idabo}Q^*\Bigg(\Bigg\vert<\pi_T,H_T>-<\pi_0,H_0>-\int_0^T<\pi_t,\partial_tH_t>dt\\
-\int_0^T\int_{ \ctorus} \sum_{i=1}^2\sdc(\rho_t)\rho_t^{\omega}\partial^2_{u_i}G_t(u)+2\cro{\sdc(\rho_t)\rho_t^{\lambda_i\omega}+\frac{\rho_t^{\omega}}{\rho_t}(1-\rho_t-\sdc(\rho_t))\rho_t^{\lambda_i}}\partial_{u_i}G_t(u)+ \E_{\dens_t}(\gamma^{ \omega})G_t(u)\bigg)dudt\\
+\int_0^T\int_{ \ctorus} \sum_{i=1}^2\Big[\diff(\rho_t, \rho_t^{\omega})-\sdc'(\rho_t) \rho^{\omega}_t\Big](\partial_{u_i}\rho_t) \partial_{u_i}G_t(u)dudt \Bigg\vert>\delta\Bigg)=0.
\end{multline}

\paragraph{Conclusion}As expected, all the quantities above are linear in $\omega$, and elementary computations yield that 
\[\E_{\boldsymbol\dens_t(u,\cdot)}(\gamma^{ \omega})=\int_{\ctoruspi}\omega(\theta)\Big[\rho_t(u)\E_{\boldsymbol \dens_t(u,\cdot)}(c_{u,\beta}(\theta,\confhat))d\theta-\boldsymbol\dens_t(u,d\theta)\Big].\]
Furthermore, since $H_t(u,\theta)=G_t(u)\omega(\theta)$, we can write for $k=1$, $2$
\[\rho_t^{\omega}\partial^k_{u_i}G_t(u)=\int_{\ctoruspi}\omega(\theta)\partial^k_{u_i}G_t(u)\boldsymbol\dens_t(u,d\theta)=\int_{\ctoruspi}\partial^k_{u_i}H_t(u, \theta)\boldsymbol\dens_t(u,d\theta).\]
analogous identities can be obtained when $\omega$ is replaced by another function $\Phi\in C^1(\ctoruspi)$. Using in equation \eqref{idabo}  the identities above finally yield, as wanted, that for any $\delta>0$
\begin{multline*}Q^*\Bigg(\Bigg\vert<\pi_T,H_T>-<\pi_0,H_0>-\int_0^T<\pi_t,\partial_tH_t>dt\\
-\int_0^T\int_{\ctorus\times \ctoruspi}\Bigg[\sum_{i=1}^2\bigg(-\partial_{u_i}H_t(u, \theta)\big[\diffh( \rho_t,\boldsymbol \dens_t) -\sdc'(\rho_t)\boldsymbol \dens_t\big](u, d\theta)\partial_{u_i}\rho_t(u)+\partial_{u_i}^2 H_t(u, \theta)\sdc(\rho_t) \boldsymbol \dens_t(u, d\theta)\\
+ \partial_{u_i}H_t(u, \theta) \cro{2\lambda\drifth(\rho_t,\boldsymbol \dens_t)\overset{\rightarrow}{\Omega}(\boldsymbol \dens_t)+2\lambda_i(\theta) \sdc(\rho_t) \boldsymbol\dens_t}(u,d\theta)\bigg)+H_t(u,\theta)\Gamma_t(\boldsymbol \dens)(u,d\theta)\Bigg]du dt\Bigg\vert>\delta\Bigg)=0.
\end{multline*}

As in the proof of Proposition \ref{prop:compactnessQN}, this last identity can be extended in the case where $H_t(u,\theta)$ does not take the form $G_t(u)\omega(\theta)$ by using a periodic version of the Weierstrass Theorem, thus letting $\delta \to 0$ completes the proof of Theorem \ref{thm:mainthm}.
}

\section{Limiting  space-time covariance}
\label{sec8}

\intro{
This section is entirely dedicated to the proof Theorem \ref{thm:limcovariance}, that was postponed. 
The strategy of the proof, follows the same scheme as in Section 7.4 of  \cite{KLB1999}. 
One of its core ingredients is a decomposition theorem (cf. Proposition \eqref{prop:GCF}) for translation-invariant closed differential forms. To prove this decomposition, one requires a sharp estimate on the spectral gap of the symmetric exclusion 
generator, which is not uniform w.r.t. the density in our case, and some adaptations w.r.t. the classical scheme are necessary to account for the angles. The non-uniformity of the spectral gap comes from the slow mixing occurring at high densities, and requires some minor adaptation w.r.t. \cite{Quastel1992} where this issue was not dealt with.  It is solved by cutting off large densities (cf. equation \eqref{def:phin} and Lemma \ref{lem:bulkconvergence}). }

\subsection{Spectral gap for the symmetric exclusion process with angles}
\label{subsec:spectralgap}
\intro{As investigated in Section \ref{subsec:irreducibility}, the mixing time for the exclusion dynamics on configurations of size $n$ with angles is not of order $n^2$. 
We therefore cannot consider a general class of functions as dependent on the $\theta_x$'s as wanted, and need to restrict to a subclass of functions with low levels of correlations between particle angles, but large enough for the non-gradient method to apply. In this section, we prove that the spectral gap of the symmetric exclusion process on this class of functions is of order $C(\rho)n^{-2}$ if the density in the box is less than $\rho<1$.
The core estimate was first derived  by Quastel in \cite{Quastel1992}. We present here a modified version to take into account the continuous angles.}

Throughout this section, we consider the square domain 
\[B_n=\llbracket-n,n\rrbracket^2\] 
with \emph{closed boundaries}. Recall that $\Sp$ was introduced in Definition \ref{def:conf} as the set of angle-blind functions, and that $\omega$ is the angular dependency of our test function $H$ (cf. equation \eqref{Hdecomp}). 
We already defined 
\begin{equation*}
T^\omega=\left\{{f\in \mathcal{C}} \; \;\Big| \;\;  f(\confhat)=\varphi(\conf)+\sum_{x\in \Z^2}\com_x\psi_x({\conf}), \quad \varphi,\psi_x\in \Sp,\; \forall x \in \Z^2\right\},
\end{equation*}
and now denote by $\mathcal{C}_n$ (resp.  $\Sp_n$) the set of cylinder functions (resp. angle-blind functions) depending only on sites in $B_n.$  {Finally, we define $T_n^\omega=\mathcal{C}_n\cap T^\omega$.}

\begin{rema}
The purpose of the non-gradient method is to replace the instantaneous current $\curom_i$ introduced in equation \eqref{currentssym} by a gradient quantity $D(\conf_0-\conf_{e_i})+d(\com_0-\com_{e_i})$, and the class $T^\omega$ above is the simplest set of functions, stable by $\gene_n$ and containing both the currents and the gradients.

We expect that it is not the biggest class of functions on which a spectral gap estimate of order $n^{-2}$ holds. Indeed, we believe that introducing some finite numbered correlations between angles might not alter too much the order of the spectral gap. It is not, however, the purpose of this section, and this remark is therefore left as a conjecture at this point.  
\end{rema}

Recall from Definition \ref{defi:CM} that  we encoded in the canonical state $\K\in \Kset_n$ the number and angles of the particles in $B_n$, and that we denote by $\cmnk=\gcm\pa{\;\cdot\;\mid \confhat\in\Sigma_n^{\K}}$ the canonical measure with $\K$ particles inside $B_n$. Finally, define 
\[\Dnk(f)=\Ecmnk(f\gene_n f),\]
where $\gene_n$ is the symmetric exclusion generator restricted to jumps with both extremities in $B_n$. We are now ready to state the main result of this section.

\index{$ \mathcal{T}_0^{\omega,n}$ \dotfill set of mean $0$ functions linear in the angles}

\begin{prop}[Estimate on the spectral gap for the SSEP with angles]
\label{prop:spectralgap}
For any $0\leq \alpha<1$, there exists a constant $C(\alpha)$ such that for any $ \K\in \Kset_n$ such that $K\leq \alpha|B_n|$, and any $f\in  T_n^\omega$ such that $\E_{n,\K}(f)=0$, 
\[\Ecmnk(f^2)\leq C(\alpha)n^2\Dnk(f).\]
\end{prop}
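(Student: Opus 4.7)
The plan is to decompose any $f\in \mathcal{C}_n\cap T^\omega$ with $\E_{n,\K}(f)=0$ into an angle-blind component and a component of zero conditional mean given the occupation variables, and to bound each piece separately.  More precisely, set
\[ g(\conf)=\Ecmnk[f\mid \conf], \qquad h(\confhat)=f(\confhat)-g(\conf). \]
Since the multiset of angles on occupied sites is, conditionally on $\conf$, a uniform random permutation of $\Theta_K$, and since $f$ is linear in the angles (through the $\com_x$), one obtains an explicit formula
\[ g(\conf)=\varphi(\conf)+\bar\omega\sum_{x\in B_n}\conf_x\psi_x(\conf),\qquad \bar\omega=\frac{1}{K}\sum_{k=1}^K\omega(\theta_k), \]
so that $g\in\Sp_n$ while $h=\sum_{x\in B_n}\bigl(\omega(\theta_x)-\bar\omega\bigr)\conf_x\psi_x(\conf)$ has $\Ecmnk[h\mid\conf]=0$ and $\E_{n,\K}(g)=0$.

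The next step is the orthogonal decomposition of both $\E_{n,\K}(f^2)$ and the Dirichlet form.  The $L^2$ orthogonality $\Ecmnk(gh)=\Ecmnk(g\,\Ecmnk[h\mid\conf])=0$ is immediate.  For the Dirichlet form, the key observation is that $\gene_n$ preserves the subspace of angle-blind functions (since $\gene_n g$ depends only on $\conf$ whenever $g$ does) and, crucially, also preserves the subspace of functions with zero conditional mean given $\conf$~: indeed, for $x\sim y$ with $\conf_x=1,\,\conf_y=0$, the distribution of $\confhat^{x,y}$ conditional on the original $\conf$ coincides with the $\mu_{n,\K}$-conditional distribution of $\confhat$ given the occupation vector $\conf^{x,y}$, so $\Ecmnk[\gene_n h\mid \conf]=0$.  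Using the self-adjointness of $\gene_n$ in $L^2(\mu_{n,\K})$, this yields
\[ \Ecmnk(g\,\gene_n h)=\Ecmnk(h\,\gene_n g)=0, \]
hence $\Ecmnk(f^2)=\Ecmnk(g^2)+\Ecmnk(h^2)$ and $\Dnk(f)=\Dnk(g)+\Dnk(h)$.

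It therefore suffices to establish the two spectral gap bounds
\[ \Ecmnk(g^2)\leq C_1(\alpha)n^2\,\Dnk(g)\qquad\text{and}\qquad \Ecmnk(h^2)\leq C_2(\alpha)n^2\,\Dnk(h). \]
The first is the classical spectral gap for the symmetric simple exclusion process on $B_n$ with $K$ particles~: since $g$ is angle-blind and $\Ecmnk(g)=0$, its $L^2$ norm and Dirichlet form under $\mu_{n,\K}$ reduce to the corresponding quantities under the uniform measure on $\Sigma_n^K=\{\conf:\sum_{B_n}\conf_x=K\}$, and the Lu--Yau--Quastel estimate provides a spectral gap of order $n^{-2}$ uniformly in $K/|B_n|\leq \alpha<1$.

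The main obstacle is the bound on $\Ecmnk(h^2)$, which is the core technical content of Quastel's argument adapted here to the setting of a continuous angle variable.  Labelling the particles by their angles $\theta_1,\ldots,\theta_K$ and writing $X_k(\confhat)$ for the position of the particle carrying angle $\theta_k$, the function $h$ takes the form
\[ h(\confhat)=\sum_{k=1}^K\bigl(\omega(\theta_k)-\bar\omega\bigr)\,\psi_{X_k(\confhat)}(\conf), \]
i.e.\ it is linear in the centered colour weights.  The symmetric exclusion dynamics acts on the labelled positions as a stirring process, and one estimates the local variance by a moving-particle (labelled) spectral gap argument~: this is precisely where the restriction $K\leq \alpha|B_n|$ enters, since mixing degenerates as $K\to|B_n|$ and the constant $C_2(\alpha)$ blows up in that limit.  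Combining the two bounds with the orthogonal decomposition gives the announced estimate with $C(\alpha)=\max(C_1(\alpha),C_2(\alpha))$.
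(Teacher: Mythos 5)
Your decomposition is exactly the one the paper uses. Since under $\mu_{n,\K}$ the angle at an occupied site is a uniform draw from $\Theta_K=\{\theta_1,\dots,\theta_K\}$, one has $\Ecmnk(\omega(\theta_0)\mid\eta_0=1)=\frac1K\sum_k\omega(\theta_k)=\bar\omega$, so your $g=\Ecmnk[f\mid\conf]$ is precisely the paper's $f_b=\varphi+\Ecmnk(\omega)\sum_x\eta_x\psi_x$ and your $h$ is precisely the paper's $f_1=\sum_x\conftilde_x\psi_x$ with $\conftilde_x=(\omega(\theta_x)-\Ecmnk(\omega))\conf_x$. Your orthogonality argument via ``$\gene_n$ preserves the space of zero conditional mean given $\conf$, hence $\Ecmnk(g\,\gene_n h)=0$'' is a correct (and slightly cleaner to state) version of what the paper proves through Lemma \ref{lem:conftilde}. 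The reduction of the $g$-bound to the colour-blind canonical spectral gap (Lemma \ref{lem:blindspectralgap}) is also identical.

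The genuine gap is that the bound $\Ecmnk(h^2)\leq C_2(\alpha)n^2\Dnk(h)$, which you correctly identify as the core of the statement, is not proved but only gestured at (``a moving-particle (labelled) spectral gap argument''). This is not a minor omission: it is where essentially all the work and all the $\alpha$-dependence live. In the paper, after re-centering so that $\sum_x\psi_x=0$ and $\psi_x=\conf_x\psi_x$, one computes explicitly
\[
\Ecmnk(h^2)=\tfrac{K}{K-1}V_{n,\K}(\omega)\sum_x\Ecmnkt(\psi_x^2),\qquad
\Dnk(h)=\tfrac{K}{K-1}V_{n,\K}(\omega)\sum_x\Big[\Dnkxt(\psi_x)+\tfrac12\sum_{|z|=1}\Ecmnkt\big((1-\conf_{x+z})[\psi_{x+z}(\conf^{x,x+z})-\psi_x]^2\big)\Big],
\]
and the target inequality \eqref{newgap} reduces neither to a tagged-particle estimate nor to a straightforward ``labelled spectral gap'': it requires (i) a frozen-site spectral gap (Lemma \ref{lem:frozensitespectralgap}) to control $\sum_x[\Ecmnkt(\psi_x^2)-\tfrac{|B_n|}{K}\Ecmnkt(\psi_x)^2]$, (ii) a path argument on the expectation profile to control $\sum_x\Ecmnkt(\psi_x)^2$ by $n^2\sum_{x,|z|=1}[\Ecmnkt(\psi_{x+z})-\Ecmnkt(\psi_x)]^2$, and (iii) a second, more delicate ``empty-site fetching'' path argument (constructing a path of allowed jumps that transports a hole to $x+z$) whose combinatorics produce the crucial factor $\sum_{e\in B_n\setminus\{x\}}\sqrt{(2p(e)+1)\widetilde\mu_{n,K}(e_{x+z}=e,\eta_x=1)}\leq\sqrt{\frac{K}{|B_n|}C(\alpha)}$ — this is the only place the hypothesis $K\leq\alpha|B_n|$ is actually used, and the only source of the degeneracy $C(\alpha)\sim(1-\alpha)^{-1}$. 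So while your plan is on the right road, the step you leave as a black box is not a known off-the-shelf lemma; it is the nontrivial content of the proposition, and as written the proposal does not establish the estimate.
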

\begin{rema}[Non-uniformity of the spectral gap]
Note that this estimate is not uniform in the density. Actually, the constant $C(\alpha)$ behaves as $1/(1-\am)$, and therefore even on the set $T^\omega$, the spectral gap of the exclusion process when there are only a finite number of empty sites in $B_n$ is or order $n^{-4}$. This high density estimate is sharp~: define $\K_n$ by $K_n=(2n+1)^2-1$, and for $k=1,\dots,K_n$, $\theta_k=2k\pi/K_n$, then for  
\[f_n(\confhat)=\sum_{x\in B_n}(\theta_x-\pi)\conf_x\cos\pa{\frac{2\pi x_1}{2n+1}},\]
one easily checks that there exists a positive constant $C$ such that
\[n^4\frac{\dir_{n, \K_n}(f_n)}{Var_{n, \K_n}(f_n)}\xrightarrow[n\to\infty]{} C.\]
This non-uniformity is not an issue here, however, because when we later on classify the germs of closed forms for our model, we are able to cutoff the large densities (cf. equation \eqref{def:phin}). 
\end{rema}
In order to prove Proposition \ref{prop:spectralgap}, we need the following lemma, which states that the angle-blind process has a uniform  spectral gap of order $n^{-2}$. 
For any angle-blind function $\psi\in \Sp_n$, we will write $\psi(\conf)$ instead of $\psi(\confhat)$ to emphasize that it does not depend on the angles.
\begin{lemm}[Spectral gap for the angle-blind exclusion process]
\label{lem:blindspectralgap} Denote by $\Ecmnkt$ the expectation w.r.t. the angle-blind canonical measure with $K$ particles inside $B_n$, defined for any angle-blind function $\psi\in \Sp_n$ by
\[\Ecmnkt(\psi)=\Egcm\pa{\psi\;\;\bigg|\; \sum_{x\in B_n}\eta_x=K},\]
which holds for any $\param$ with density $\am\in (0,1)$.
There exists a universal constant $C_1>0$ such that for any $n\geq 1$, any $0\leq K\leq (2n+1)^2$ and any  $\psi\in \Sp_n$ satisfying $\Ecmnkt(\psi)=0$,
\[\Ecmnkt(\psi^2)\leq C_1n^2\Dnkt(\psi),\]
where $\Dnkt(\psi)=\Ecmnkt(\psi(-\gene_n)\psi)$.
\end{lemm}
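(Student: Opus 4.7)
The plan is to observe that the measure $\widetilde{\mu}_{n,K}$ is simply the uniform probability on configurations $\conf \in \{0,1\}^{B_n}$ with $\sum_{x \in B_n} \conf_x = K$, and that $\Dnkt$ is the Dirichlet form of the standard nearest-neighbor SSEP on $B_n$ with closed boundaries. The stated estimate is therefore the classical uniform-in-density spectral gap bound for SSEP on a square of side $2n+1$, first proved by Quastel in \cite{Quastel1992}; a sharper version can also be extracted from the Caputo--Liggett--Richthammer resolution of Aldous' conjecture for the interchange process. I would cite either of these results to close the proof, but for completeness I sketch Quastel's induction below.

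Denote by $\widetilde{\gamma}_n$ the supremum, over $0 \leq K \leq (2n+1)^2$ and mean-zero $\psi \in \Sp_n$, of the Poincar\'e ratio $\Ecmnkt(\psi^2)/\Dnkt(\psi)$; the goal is to prove $\widetilde{\gamma}_n \leq C_1 n^2$ by induction on $n$. First, I would split $B_n = A^- \cup L \cup A^+$ with $A^{\pm}$ two overlapping half-boxes meeting at the central column $L = \{x_1 = 0\}$, and apply the conditional variance decomposition
\begin{equation*}
\Ecmnkt(\psi^2) = \Ecmnkt\bigl[\mathrm{Var}(\psi \mid \mathcal{F}_L, K^+)\bigr] + \Ecmnkt\Bigl[\bigl(\Ecmnkt(\psi \mid \mathcal{F}_L, K^+) - \Ecmnkt(\psi)\bigr)^2\Bigr],
\end{equation*}
where $K^+$ is the number of particles in $A^+ \setminus L$ and $\mathcal{F}_L$ records the configuration on $L$. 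Conditionally on $(K^+,\mathcal{F}_L)$, the marginals on $A^{\pm} \setminus L$ factorize into two independent canonical SSEP measures on boxes of side roughly $n$, so the first term is bounded, by the inductive hypothesis, by $\widetilde{\gamma}_{n/2}\, \Dnkt(\psi) \leq C_1 (n/2)^2 \Dnkt(\psi)$.

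The delicate term is the second one, which is a variance depending only on $(K^+, \mathcal{F}_L)$. To estimate it by $\Dnkt(\psi)$, I would invoke the moving-particle (or ``path'') lemma: varying $K^+$ by $\pm 1$ amounts to transferring a single particle across the column $L$, and such a transfer can always be realized as a concatenation of $O(n)$ nearest-neighbor exclusion jumps. A telescoping Cauchy--Schwarz along these paths then produces the bound $C n^2 \Dnkt(\psi)$ for the second term. Combining the two contributions yields the recursion $\widetilde{\gamma}_n \leq C_1 n^2/4 + C n^2$, which for $C_1$ large enough closes the induction uniformly in $K$. The hard step in the execution is this moving-particle estimate: one must choose the exchange paths so that each nearest-neighbor edge is used at most $O(n)$ times, so that the Cauchy--Schwarz loss stays of order $n^2$ rather than $n^3$. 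This combinatorial input is well-known, and the details are already worked out in \cite{Quastel1992}; the uniformity in $K$ is automatic once it is observed that $\widetilde{\mu}_{n,K}$ is invariant under every permutation of sites preserving the particle count.
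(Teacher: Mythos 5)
The paper does not prove Lemma~\ref{lem:blindspectralgap}: immediately after stating it, the text says the result is ``fairly classical,'' points to \cite{KLB1999}, and explicitly does not repeat the argument, noting only that for the angle-blind process the constant is uniform in the density. Your proposal therefore does not take a ``different route'' so much as it supplies the content that the paper outsources. Your key observation --- that $\widetilde\mu_{n,K}$ is just the uniform measure on $K$-particle configurations of $B_n$ and $\Dnkt$ is the ordinary SSEP Dirichlet form, so the statement reduces to the classical uniform-in-density $O(n^2)$ gap for SSEP on a box with closed boundaries --- is exactly right, and the martingale/conditional-variance decomposition plus moving-particle path estimate you outline is indeed the standard mechanism used in \cite{Quastel1992} and \cite{KLB1999}.

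Two points in the sketch would need to be tightened if you were writing it out rather than citing. First, bisecting the $(2n+1)\times(2n+1)$ square at the central column produces two $\approx n\times(2n+1)$ rectangles, not boxes of side $n/2$, so the claimed recursion $\widetilde\gamma_n \leq \widetilde\gamma_{n/2} + Cn^2$ does not follow from the halving as stated --- the maximal side length, which governs the Poincar\'e constant, has not decreased. The standard fix is to alternate horizontal and vertical bisections (or to run the induction over rectangles of bounded aspect ratio). Second, the projection term $\Ecmnkt\bigl[\bigl(\Ecmnkt(\psi\mid\mathcal F_L, K^+)-\Ecmnkt(\psi)\bigr)^2\bigr]$ is not controlled by a single telescoping Cauchy--Schwarz along one particle path: it is a variance over the whole attainable range of $(K^+,\mathcal F_L)$, and bounding it by $Cn^2\Dnkt(\psi)$ uniformly in $K$ requires a local CLT-type concentration of $K^+$ to sum the path costs over all values, not just adjacent ones. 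Both issues are standard and are handled in the references you cite, so your sketch is a sound roadmap but not self-contained as written.
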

\noindent This result is fairly classical, its proof can be found for instance in \cite{KLB1999}, we do not repeat it here. 
Note in particular that for the angle-blind process, the constant can be chosen independently of the cap on the density $\alpha$.
Before proving Proposition \ref{prop:spectralgap}, we need one more definition. 
Fix $\alpha\in [0,1)$, and a canonical state $\K\in \Kset_n$ such that $K\leq \alpha|B_n|$. We then define for any site $x\in \Z^2$, 
\index{$ \conftilde_x$ \dotfill modification of $\com_x$ with mean $0$ w.r.t. $\theta_x$}
\begin{equation}
\label{eq:Defetat}
\widehat{\omega}=\omega-\E_{n,\K}(\omega)\eqand \conftilde_x=\cro{\omega(\theta_x)-\E_{n,\K}(\omega)}\conf_x,
\end{equation}
where $\Ecmnk(\omega)$ stands for $\Ecmnk(\omega(\theta_0)\mid \eta_0=1)$. In particular, for any configuration $\confhat$, $\sum_{x\in B_n}\conftilde_x=0$ under $\mu_{n,\K}$.
This centered occupation variable plays a particular role in the proof of the spectral gap, and we state in the following Lemma  two identities regarding $\conftilde$, which will be used later on.  

\begin{lemm}[Properties of $\conftilde$]
\label{lem:conftilde}Define $V_{n,\K}(\omega)= Var_{n, \K}(\omega(\theta_0)\mid \eta_0=1)$. For any $x\neq y\in B_n$, $\K\in \Kset_n$, and any angle-blind function $\psi\in \Sp_n$, we have $\Ecmnk\pa{\conftilde_x \psi}=0$,
\[\Ecmnk\pa{(\conftilde_x)^2\psi}=V_{n,\K}(\omega)\Ecmnkt(\eta_x\psi) \quad\mbox{ and } \quad \Ecmnk\pa{\conftilde_x\conftilde_y \psi}=\begin{cases}
                                                                                                                                      -\frac{V_{n,\K}(\omega)}{K-1}\Ecmnkt(\conf_x\conf_y\psi)& \mbox{if }K> 1\\
                                                                                                                                      0 &\mbox{else}
                                                                                                                                     \end{cases}.\]
\end{lemm}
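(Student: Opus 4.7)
The plan is to exploit the two-layer structure of the canonical measure $\mu_{n,\K}$: under $\mu_{n,\K}$, the occupation variable $\conf = (\conf_z)_{z\in B_n}$ is uniformly distributed over the $\binom{|B_n|}{K}$ configurations with exactly $K$ particles in $B_n$, and, conditionally on $\conf$, the angles $(\theta_z)_{z : \conf_z = 1}$ are obtained by assigning the $K$ values of $\Theta_K = \{\theta_1,\ldots,\theta_K\}$ to the occupied sites according to a uniformly chosen permutation. Since every $\psi \in \Sp_n$ is $\sigma(\conf)$-measurable, each identity reduces via the tower property to computing the conditional expectation of a function of the angles given $\conf$, which in turn reduces to a symmetry argument on uniform permutations.

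\smallskip

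For the first identity, I would condition on $\conf$ and observe that $\conftilde_x$ vanishes on $\{\conf_x = 0\}$, whereas on $\{\conf_x = 1\}$ the variable $\theta_x$ is uniformly distributed on the multiset $\Theta_K$, so that $\E\bigl(\omega(\theta_x)\mid \conf,\,\conf_x=1\bigr) = \frac{1}{K}\sum_{k=1}^{K}\omega(\theta_k) = \Ecmnk(\omega)$, i.e.\ $\E(\conftilde_x\mid \conf) = 0$. The identity $\Ecmnk(\conftilde_x \psi) = 0$ then follows from the tower property.

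\smallskip

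For the second identity, the same conditioning yields $\E((\conftilde_x)^2\mid\conf) = \conf_x \cdot \tfrac{1}{K}\sum_{k=1}^K (\omega(\theta_k) - \Ecmnk(\omega))^2 = V_{n,\K}(\omega)\,\conf_x$, and since the $\conf$-marginal of $\mu_{n,\K}$ coincides with $\widetilde{\mu}_{n,\K}$, we get $\Ecmnk((\conftilde_x)^2 \psi) = V_{n,\K}(\omega)\Ecmnkt(\conf_x \psi)$ as claimed.

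\smallskip

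For the third identity, the same conditioning gives a vanishing contribution unless $\conf_x = \conf_y = 1$. On that event, since the $K$ angles are assigned by a uniform permutation, the pair $(\theta_x,\theta_y)$ is uniformly distributed on the ordered pairs of distinct elements of $\Theta_K$ (this requires $K \geq 2$; otherwise the event is empty and both sides vanish). Expanding $\widehat{\omega}(\theta_x)\widehat{\omega}(\theta_y)$ and using the elementary identity
\[
\sum_{k\neq l}\omega(\theta_k)\omega(\theta_l) = \Big(\sum_k \omega(\theta_k)\Big)^2 - \sum_k \omega(\theta_k)^2 = K(K-1)\Ecmnk(\omega)^2 - K V_{n,\K}(\omega),
\]
one obtains $\E(\conftilde_x \conftilde_y \mid \conf) = -\tfrac{V_{n,\K}(\omega)}{K-1}\,\conf_x\conf_y$, and the tower property concludes.

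\smallskip

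There is no real obstacle in this lemma; it is purely an application of the exchangeability of the angles under the uniform permutation. The only point requiring a little care is the separate treatment of the degenerate cases $K=0,1$ in the third identity, where the product $\conf_x\conf_y$ is $\mu_{n,\K}$-a.s.\ zero.
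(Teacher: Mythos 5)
Your proof is correct and takes essentially the same route as the paper: both arguments rest on the factorization of $\mu_{n,\K}$ into a uniform angle-blind occupation law and an exchangeable assignment of the $K$ fixed angles to the occupied sites, and both reduce each identity to a computation of $\Ecmnk(\Phi(\theta_x)\mid\conf)$ followed by the tower property. The paper packages this as the single factorization identity $\Ecmnk(\conf^{\Phi}_x\psi)=\Ecmnk(\Phi(\theta_0)\mid\conf_0=1)\Ecmnkt(\conf_x\psi)$ and its two-site analogue, whereas you spell out the conditional sampling more explicitly; the content is the same, and your arithmetic for the cross term (including the degenerate $K\le 1$ case) checks out.
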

\proofthm{Proof of Lemma \ref{lem:conftilde}}
{
This Lemma follows from elementary computations. Under $\cmnk$,  for any angle-blind function $\psi\in \Sp_n$ and any function $\Phi$ on $\ctoruspi$, we have \[\Ecmnk(\conf^{\Phi}_x \psi)=\Ecmnk(\Phi(\theta_0)\mid \eta_0=1)\Ecmnkt(\conf_x\psi).\]
For the first (resp. second) identity, we set $\Phi=\omega-\E_{n,\K}(\omega)$ (resp. $\Phi=(\omega-\E_{n,\K}(\omega))^2$), which by construction has mean $0$ (resp. $V_{n,\K}(\omega)$) w.r.t. $\cmnk(\cdot\mid \eta_0=1 )$.
Regarding the last identity, we obtain similarly
\[\Ecmnk\pa{\conftilde_x\conftilde_y \psi}=\Big[\Ecmnk(\omega(\theta_x)\omega(\theta_y)\mid \conf_x=\conf_y=1)-\Ecmnk(\omega)^2\Big]\Ecmnkt(\conf_x\conf_y\psi)=-\frac{V_{n,\K}(\omega)}{K-1}\Ecmnkt(\conf_x\conf_y\psi)\]
if $K>1$, and trivially vanishes if $K=0,1$.}
\bigskip

We now estimate the spectral gap of the angle process on $ { T_n^\omega}$. 

\proofthm{Proposition \ref{prop:spectralgap}}{Fix $\alpha\in [0,1)$, $\K\in \Kset_n$ such that $K\leq \alpha|B_n|$, and consider a function $f=\varphi(\eta)+\sum_{x\in B_n}\com_x\psi_x(\eta)$ in $ {T_n^\omega}$, where $\varphi, \psi_x\in \Sp_n$, such that $\E_{n,\K}(f)=0$. Recall the notation introduced in \eqref{eq:Defetat}, and denote
\[f_1=\sum_{x\in B_n}\conftilde_x\psi_x, \quad \quad f_{b}=\varphi+ \Ecmnk(\omega)\sum_{x\in B_n}\eta_x\psi_x\in \Sp_n .\]
By construction, $f=f_1+f_{b}$. Since for any $\psi\in \Sp_n$, $\Ecmnk\pa{\conftilde_x \psi}=0$, it is straightforward to obtain that
\[\Ecmnk\pa{f^2}= \Ecmnk\pa{f_1^2}+ \Ecmnkt\pa{f_{b}^2}\eqand \Ecmnk\pa{f\gene_n f}= \Ecmnk\pa{f_1\gene_n f_1}+ \Ecmnkt\pa{f_{b}\gene_n f_{b}},\]
(i.e $\Dnk(f)=\Dnk(f_1)+\Dnkt(f_{b})$). By assumption $\Ecmnk(f)=0$, therefore, since by construction $\Ecmnk(f_1)=0$, we also have $\Ecmnk(f^{b})=0$. 
Lemma \ref{lem:blindspectralgap} can therefore be applied to $f_{b}$. 
To prove Proposition \ref{prop:spectralgap}, it is thus sufficient to prove it for any function of the form $f=\sum_{x\in B_n}\conftilde_x\psi_x(\conf)$. 
We can further assume, without loss of generality, that $ \sum\psi_x=0$ and that each $\psi_x$ vanishes if $\eta_x=0$ since we can rewrite
\[f(\confhat)=\sum_{x\in B_n}\conftilde_x \ptilde_x(\eta)\]
where
\[\ptilde_x=\eta_x(\psi_x-\pbar)\quad \mbox{  and } \quad\pbar=\frac{\sum_{x\in B_n}\eta_x\psi_x}{\sum_{x\in B_n}\eta_x}=\frac{\sum_{x\in B_n}\eta_x\psi_x}{K(\confhat)}.\]
Note that we only consider $ K>0$, since if $K=0$, Proposition \ref{prop:spectralgap} is immediate.

\bigskip

To prove Proposition \ref{prop:spectralgap}, it is therefore sufficient to prove it for  any function
\[f=\sum_{x\in B_n}\conftilde_x\psi_x,\]
where $\psi_x=\conf_x\psi_x$, and satisfy $\sum_{x\in B_n}\psi_x=0$. For any such $f$, if $K= 1$, there is  only one particle in $B_n$ and $\conftilde_x=0$ for any $x$, therefore $f=0$.
We now assume that $1<K\leq \alpha |B_n|$. By Lemma \ref{lem:conftilde}, since by assumption $\sum_x \psi_x=0$,
\begin{equation}\label{normf}\Ecmnk\pa{f^2}=\sum_{x,y\in B_n}\Ecmnk\pa{\conftilde_x\conftilde_y\psi_x\psi_y}= \frac{K}{K-1}V_{n,\K}(\omega)\sum_{x\in B_n}\Ecmnk\pa{\psi_x^2}.\end{equation}

We now turn our attention to $\Ecmnk(f\gene_n f)$. For any site $x$ and any angle-blind function $\psi\in \Sp_n$, we can write 
\[\gene_n (\conftilde_x\psi_x)=\conftilde_x \gene_n \psi_x +\sum_{|z|=1}\1_{\{\conf_x\conf_{x+z}=0\}}\psi_x(\conf^{x,x+z})((\conf^{x,x+z})^{\widehat{\omega}}_x-\conftilde_x).\]
Since we assumed that $\psi_x$ vanishes when the site $x$ is empty, the quantity above can be rewritten
\[\gene_n (\conftilde_x\psi_x)=\conftilde_x \gene_n \psi_x +\sum_{|z|=1}\conftilde_{x+z}(1-\conf_x)\psi_x(\conf^{x,x+z}).\]
It follows that
\[\Dnk(f)=\sum_{x,y\in B_n}\cro{\Ecmnk(\conftilde_x\conftilde_y\psi_x(-\gene_n)\psi_y)-\Ecmnk\pa{\conftilde_x\psi_x\sum_{|z|=1}\conftilde_{y+z}(1-\conf_y)\psi_y(\conf^{y,y+z})}}.\]
Using once again that $\sum_{x\in B_n}\psi_x =0$, and Lemma \ref{lem:conftilde} the identity above rewrites 
\begin{equation}\label{dirf}\Dnk(f)=\frac{K}{K-1}V_{n,\K}(\omega)\sum_{x\in B_n}\cro{\Dnkt(\psi_x)-\sum_{|z|=1}\Ecmnkt\pa{(1-\conf_{x+z})\psi_x\psi_{x+z}\pa{\conf^{x,x+z}}}}.\end{equation}
Let us introduce the Dirichlet form locally cropped  in $x$ 
\begin{equation}
\label{dircrop}
\Dnkxt(\psi)=\frac{1}{2}\Ecmnkt\pa{\sum_{\substack{ y, y+z\in B_n\setminus\{x\}\\
|z|=1}}\conf_y(1-\conf_{y+z})(\psi(\conf^{y,y+z})-\psi(\conf))^2},\end{equation}
which forbids jumps to and from the site $x$. 
Since $\psi_x$ vanishes whenever the site $x$ is empty, the quantity 
$\conf_x(1-\conf_{x+z})(\psi_x(\conf^{x,x+z})-\psi_x(\conf))^2$ is also equal to $(1-\conf_{x+z})\psi_x(\conf)^2$, and a similar argument with $\psi_{x+z}$ allows us to rewrite equation \eqref{dirf}
\[{\Dnk(f)=\frac{K}{K-1}V_{n,\K}(\omega)\sum_{x\in B_n}\cro{\Dnkxt(\psi_x)+\frac{1}{2}\sum_{|z|=1}\Ecmnkt\pa{(1-\conf_{x+z})\cro{\psi_{x+z}\pa{\conf^{x,x+z}}-\psi_x(\conf)}^2}}}.\]
To obtain Proposition \ref{prop:spectralgap}, thanks to the identity above together with \eqref{normf} it is enough to prove that for some constant $C(\alpha)$,
\begin{equation}\label{newgap}\sum_{x\in B_n}\Ecmnkt\pa{\psi_x^2}\leq C(\alpha) n^2 \sum_{x\in B_n}\cro{\Dnkxt(\psi_x)+\frac{1}{2}\sum_{|z|=1}\Ecmnkt\pa{(1-\conf_{x+z})\cro{\psi_{x+z}\pa{\conf^{x,x+z}}-\psi_x}^2}}.\end{equation}

We now state a technical Lemma, which gives a spectral gap estimate when one site remains frozen.
\begin{lemm}[Spectral gap for the exclusion process with a frozen site]
\label{lem:frozensitespectralgap}
Fix $x\in B_n$. There exists a universal constant $C_2$ such that for any angle-blind function $\psi\in \Sp_n$ satisfying $\Ecmnkt(\psi\mid\conf_x=1)=0 $,
\[\Ecmnkt(\psi^2\mid\conf_x=1)\leq C_2n^2\Dnkxt(\psi\mid\conf_x=1),\]
where the conditioned Dirichlet form is defined by the conditional expectation $\Ecmnkt(.\mid\conf_x=1)$ instead of $\Ecmnkt$,\[\Dnkxt(\psi\mid\conf_x=1)=-\Ecmnkt(\psi\gene_n\psi\mid\conf_x=1).\] 
\index{$ \Dnkxt$\dotfill Dirichlet form with $x$ frozen}
\end{lemm}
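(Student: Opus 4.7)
The plan is as follows. First, I would observe that the conditioning $\{\conf_x=1\}$ reduces the angle-blind canonical measure, via $\conf \mapsto \conf_{|B_n\setminus\{x\}}$, to the uniform measure on configurations of $K-1$ particles placed freely on the punctured box $B_n\setminus\{x\}$, and that the cropped Dirichlet form $\Dnkxt$ coincides exactly with the Dirichlet form of the angle-blind SSEP on the graph $B_n\setminus\{x\}$ (with nearest-neighbor edges inherited from $\Z^2$). Hence the statement reduces to proving a uniform spectral gap of order $n^{-2}$ for SSEP on this punctured box, uniformly in $x\in B_n$ and in $K$.

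The cleanest route would be to invoke the Caputo--Liggett--Richthammer resolution of Aldous' conjecture, which asserts that the spectral gap of SSEP on any connected graph $G$ equals that of the continuous-time simple random walk on $G$: since $B_n\setminus\{x\}$ is connected for $n\geq 1$ and has diameter bounded by $4n$ uniformly in $x$, its random walk gap is of order $n^{-2}$ and the estimate follows. A proof staying within the framework of the paper would instead cover $B_n\setminus\{x\}$ by two overlapping rectangles $R_1, R_2\subset B_n\setminus\{x\}$, each of dimensions $O(n)\times O(n)$ (for example, if $x$ lies in the lower-left quadrant, one takes $R_1$ to be the union of the top half and the right half of $B_n$, and $R_2$ its mirror image), apply Lemma \ref{lem:blindspectralgap} on each to get a uniform gap of order $n^{-2}$, and glue the two estimates via a standard two-scale variance decomposition in which the conditional variance inside $R_1$ is controlled by the spectral gap on $R_1$, while the residual dependence on the $R_2$-marginal is controlled by a Dirichlet form on $R_2$ using the thick overlap $R_1\cap R_2$.

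The main obstacle is the uniformity of $C_2$ with respect to the position of $x$: when $x$ is close to a corner of $B_n$, the available rectangles have less freedom to overlap. Nevertheless, for any $x\in B_n$ one can exhibit a decomposition into at most three rectangles of side lengths at least $n/2$ with pairwise overlaps of cardinality at least $n^2/4$, which is enough for the gluing argument to produce a universal $C_2$. The degenerate cases $K=0$ (where the event $\{\conf_x=1\}$ has probability zero) and $K=1$ (where $\psi$ takes a single value under the conditioning and the inequality is trivial) need only a one-line separate remark.
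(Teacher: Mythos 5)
Your reduction at the start is exactly right: conditioning on $\conf_x=1$ turns $\widetilde{\mu}_{n,K}$ into the uniform measure on $K-1$ indistinguishable particles in $B_n\setminus\{x\}$, and $\widetilde{\mathscr D}^x_{n,K}$ is precisely the Dirichlet form of SSEP on that punctured box, so the lemma is a spectral gap estimate for SSEP on $B_n\setminus\{x\}$ uniform in $x$ and $K$. Where you diverge from the paper is in how you then get that gap. The paper does neither of your two routes: it simply re-runs the canonical-path proof underlying Lemma~\ref{lem:blindspectralgap} and, whenever the deterministic path used to move a particle from one site to another would pass through $x$, it deflects that path around $x$ inside the grid; since each deflection adds $O(1)$ to path lengths and only constant-factor congestion to the neighbours of $x$, the $n^{2}$ order is unchanged and the constant remains universal. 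This is considerably lighter than both of your routes. Your Route A (CLR/Aldous) is valid, but the step ``diameter $\le 4n$ hence random-walk gap of order $n^{-2}$'' is not a correct implication on its own (bounded diameter does not rule out bottlenecks — think of a barbell graph); the honest justification of the single-particle gap on $B_n\setminus\{x\}$ is itself a path/flow argument around $x$, at which point invoking CLR gains you nothing over just running the multi-particle path argument directly as the paper does. Your Route B (covering by rectangles and gluing via a two-scale variance decomposition) is also viable in principle, but heavier than you allot: the sketched example ``$R_1$ = top half $\cup$ right half'' is an L-shape rather than a rectangle, so Lemma~\ref{lem:blindspectralgap} does not apply to it directly, and producing a genuinely rectangular cover of $B_n\setminus\{x\}$ with uniformly thick pairwise overlaps for \emph{every} $x$, including $x$ adjacent to a side or corner, takes more case analysis than the one-sentence remedy you propose. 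The degenerate-case remarks ($K=0,1$) are correct and worth including. In short: your proposal is sound modulo these two imprecisions, but the paper's path-deflection argument is both different and more economical.
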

\proofthm{Lemma \ref{lem:frozensitespectralgap}}{We do not give the detail of this proof. It is quite similar to the proof without the frozen site for an angle-blind function, the only difference being that whenever a path should go through the site $x$, the path is bypassed around it, which results in a larger constant $C$ but does not affect the order $n^2$. }

We now take a look at the left-hand side of equation \eqref{newgap}. 
Since $\psi_x$ vanishes whenever $\conf_x=0$ we have $\Ecmnkt(\psi_x\mid\conf_x=1)=\frac{|B_n|}{K}\Ecmnkt(\psi_x)$, the previous lemma applied to 
$\psi_x-\Ecmnk(\psi_x\mid\conf_x=1)$ yields
\begin{equation}\label{variancebound} \sum_{x\in B_n}\Ecmnkt\pa{\psi_x^2}-\frac{|B_n|}{K}\Ecmnkt\pa{\psi_x}^2\leq  C_2n^2\sum_{x\in B_n}\Dnkxt(\psi_x).\end{equation}
Furthermore, 
\begin{align*}\sum_{x,y\in B_n}[\Ecmnkt\pa{\psi_x}-\Ecmnkt(\psi_y)]^2&=\sum_{x,y\in B_n}[\Ecmnkt(\psi_x)^2+\Ecmnkt(\psi_y)^2]-2\sum_{x,y\in B_n}\Ecmnkt(\psi_x)\Ecmnkt(\psi_y)\\
&=2n^2\sum_{x\in B_n}\Ecmnkt(\psi_x)^2,\end{align*}
because the last term of the first line vanishes by the assumption $\sum_{x\in B_n}\psi_x=0$. Furthermore, consider the family of paths 
$(\gamma_{x,y})_{x,y\in B_n}$ going from $x$ to $y$, defined as follows~: starting from $x$, the path  $\gamma_{x,y}$ starts straight in the first direction, until reaching the first coordinate of $y$. then, it goes in the second direction until reaching $y$. With this construction, each edge $a$ is used at most  a number of times $p_a\leq Cn^3$ in the $\gamma_{x,y}$'s, for some universal constant $C$. 
Furthermore, each path $\gamma_{x,y}$ has length at most $4n$. With this construction, we therefore write, since 
\[\psi_x-\psi_y=\sum_{a=(a_1,a_2) \in \gamma_{x,y}}(\psi_{a_1}-\psi_{a_2}),\]
and $(\sum_{k=1}^p x_k)^2\leq p\sum_{k=1}^p x_k^2$ that 
\begin{align*}\sum_{x,y\in B_n}[\Ecmnkt\pa{\psi_x}-\Ecmnkt(\psi_y)]^2\leq& \sum_{x,y\in B_n} 4n\sum_{(a_1,a_2)\in\gamma_{x,y}}[\Ecmnkt\pa{\psi_{a_1}}-\Ecmnkt(\psi_{a_2})]^2\\
=&4n\sum_{(a_1,a_2)\subset B_n}p_{a}[\Ecmnkt\pa{\psi_{a_1}}-\Ecmnkt(\psi_{a_2})]^2\\
\leq&4Cn^4 \sum_{(a_1,a_2)\subset B_n}[\Ecmnkt\pa{\psi_{a_1}}-\Ecmnkt(\psi_{a_2})]^2\\
=&4Cn^4 \sum_{\substack{x, x+z\in B_n,\\ |z|=1}}[\Ecmnkt\pa{\psi_{x+z}}-\Ecmnkt(\psi_{x})]^2.\end{align*}
Using the two previous identities, we obtain that 
\begin{equation}\label{expectationbound}\sum_{x\in B_n}\Ecmnkt(\psi_x)^2\leq Cn^2 \sum_{x\in B_n, |z|=1}[\Ecmnkt\pa{\psi_{x+z}}-\Ecmnkt(\psi_{x})]^2,\end{equation}
so that using equations \eqref{newgap}, \eqref{variancebound}, and \eqref{expectationbound}, to prove Proposition \ref{prop:spectralgap} it is enough to show that for some constant $C(\alpha)$,
\begin{multline}
\label{newgap2} 
\sum_{x\in B_n, |z|=1}[\Ecmnkt\pa{\psi_{x+z}}-\Ecmnkt(\psi_{x})]^2\\
\leq \frac{K}{|B_n|} C(\alpha) \sum_{x\in B_n}\cro{\Dnkxt(\psi_x)+\sum_{|z|=1}\Ecmnkt\pa{(1-\conf_{x+z})\cro{\psi_{x+z}\pa{\conf^{x,x+z}}-\psi_x}^2}}.
\end{multline}

\bigskip

\newcommand{\nt}{\widetilde{\nabla}}

Let us denote by $e_{x+z}$ the empty site nearest to $x+z$ other than $x$, chosen arbitrarily if there are multiple candidates. We want to reach from $\conf$ a configuration with an empty site in $x+z$, where the successive jumps will be controlled by the Dirichlet form of the $\psi_x's$, and the resulting difference will be controlled by the second term above.  
To do so, we merely have to "move" the empty site from $e_{x+z}$ to $x+z$, using a path of minimal length.
We denote by $a_1,\ldots ,a_p$ the sequence of edges along which the empty site travels. For any integer $r\leq p$ let $\conf^{(r-1)}=\conf^{a_1\ldots a_r}$ be the configuration where the empty site has traveled along $r$ edges. 
In particular,  $\conf^{(0)}=\conf$, and  $\conf^{(p)}_{x+z}=0$.  Furthermore, for each edge $a_r$ in this sequence, we denote by $a_{r,1}$ the position throughout this construction of the displaced particle at the $r-th$ stage, 
and $a_{r,2}$ the position of the empty site, therefore, $a_r=(a_{r,1},a_{r,2})$. 
One easily sees that if $e_{x+z}\neq x$, we can perform this construction with the following conditions satisfied.
\begin{enumerate}[1)] 
 \item The path followed by the empty site contains at most $p(e_{x+z})\leq 2\abs{e_{x+z}-x}$ jumps. 
\item None of the edges $a_r$ connects $x$ and one of its neighbors.
\item The only edge linking $x+z$ to one of its neighbor is the last edge $a_p$, and it is of the form $a_p=(x+z, x+z+z')$, with $z $ and $z'$ orthogonal.  In other words, we assume that the empty site comes from the direction orthogonal to the direction of the edge $(x,x+z)$.
\end{enumerate}

\medskip

With this construction, for any function $h$, since every successive jump is allowed (each initial site is occupied, each end site is empty) we have 
\begin{multline*}
\pa{1-{\conf^{(p)}_{x+z}}}h\pa{\conf^{(p)}}=h\pa{\conf^{(p)}}=h(\conf)+\sum_{r=1}^p\pa{h\pa{\conf^{(r-1)}}-h\pa{\conf^{(r-1)}}}\\
=h(\conf)+\sum_{r=1}^p\conf^{(r-1)}_{a_{r,1}}(1-\conf^{(r-1)}_{a_{r,2}})\nt_{a_r}h\pa{\conf^{(r-1)}},
\end{multline*}
where $\nt_{a}f=f(\eta^{a_1,a_2})-f(\eta)$. We can rewrite this identity 
\[h(\conf)=\pa{1-{\conf^{(p)}_{x+z}}}h\pa{\conf^{(p)}}-\sum_{r=1}^p\conf^{(r-1)}_{a_{r,1}}(1-\conf^{(r-1)}_{a_{r,2}})\nt_{a_r}h\pa{\conf^{(r-1)}}.\]
Note that in the formula above, both $p$ and the path $\conf^{(r-1)}$ depends on the position of $e_{x+z}$. 

\medskip

We not let $h(\eta)=\psi_{x+z}(\conf^{x,x+z})-\psi_x$. This function vanishes if there is an empty site in $x$, which is the only case for which the construction above 
does not hold (because in particular the empty site cannot avoid the edges surrounding $x$). Using the construction above, we obtain
\begin{align*}\Ecmnkt\pa{\psi_{x+z}}-\Ecmnkt(\psi_{x})=&\Ecmnkt\pa{\psi_{x+z}(\conf^{x,x+z})}-\Ecmnkt(\psi_{x})\\
=&-\Ecmnkt\pa{\sum_{r=1}^p\conf^{(r-1)}_{a_{r,1}}(1-\conf^{(r-1)}_{a_{r,2}})\nt_{a_r}\cro{\psi_{x+z}((\conf^{(r-1)})^{x,x+z})-\psi_x(\conf^{(r-1)})}}\\
&+\Ecmnkt\pa{\pa{1-\conf^{(p)}_{x+z}}\cro{\psi_{x+z}((\conf^{(p)})^{x,x+z})-\psi_x(\conf^{(p)})}}.
\end{align*}
We now project on the possible positions for $e_{x+z}$, by Cauchy-Schwarz inequality, and since $(\sum_{i=1}^p a_i)^2\leq p\sum_{i=1}^p a_i^2$, we obtain
\begin{multline}\label{eq:quadts}
\Big|\Ecmnkt\pa{\psi_{x+z}}-\Ecmnkt(\psi_{x})\Big|\leq \sum_{e\in B_n\setminus\{x\}}\sqrt{(2p(e)+1)\widetilde{\mu}_{n,K}\big(e_{x+z}=e, \eta_x=1\big)}\\
\times \Bigg[\Ecmnkt\pa{\1_{\{e_{x+z}=e, \eta_x=1\}}\pa{1-\conf^{(p(e))}_{x+z}}\cro{\psi_{x+z}((\conf^{(p(e))})^{x,x+z})-\psi_x(\conf^{(p(e))})}^2}\\
+\sum_{r=1}^{p(e)}\Ecmnkt\pa{\1_{\{e_{x+z}=e, \eta_x=1\}}\conf^{(r-1)}_{a_{r,1}}(1-\conf^{(r-1)}_{a_{r,2}})\cro{\nt_{a_r}\psi_{x+z}((\conf^{(r-1)})^{x,x+z})}^2}\\
+\sum_{r=1}^{p(e)}\Ecmnkt\pa{\1_{\{e_{x+z}=e, \eta_x=1\}}\conf^{(r-1)}_{a_{r,1}}(1-\conf^{(r-1)}_{a_{r,2}})\cro{\nt_{a_r}\psi_x(\conf^{(r-1)})}^2}\Bigg]^{1/2}.
\end{multline}
We now estimate each of the three terms in the bracket.

\medskip

The empty site $e$ being fixed, the sequence of edges $(a_r)$ and its length $p$ are also fixed. The first term in the bracket can therefore be rewritten, thanks the one-to-one change of variables $\conf^{(p-1)}\mapsfrom\conf$
\begin{multline*}
\Ecmnkt\pa{\1_{\{e_{x+z}=e, \eta_x=1\}}(\eta')\pa{1-\conf_{x+z}}\cro{\psi_{x+z}(\conf^{x,x+z})-\psi_x(\conf)}^2}\\
\leq \Ecmnkt\pa{\pa{1-\conf_{x+z}}\cro{\psi_{x+z}(\conf^{x,x+z})-\psi_x(\conf)}^2}, 
\end{multline*}
where $\eta'$ denotes the invert change of variable $\conf\mapsfrom\conf^{(p-1)}$.
Since none of the edges $a_r$ connects $x$ to one of its neighbors, and since each edge is used at most once, one-to-one changes of variable $\conf^{(r-1)}\mapsfrom\conf$ also allow us to crudely estimate
\begin{multline*}
\sum_{r=1}^{p}\Ecmnkt\pa{\1_{\{e_{x+z}=e, \eta_x=1\}}\conf^{(r-1)}_{a_{r,1}}(1-\conf^{(r-1)}_{a_{r,2}})\cro{\nt_{a_r}\psi_x(\conf^{(r-1)})}^2}\\
=\sum_{r=1}^{p}\Ecmnkt\pa{\1_{\{e_{x+z}=e, \eta_x=1\}}(\eta'^{(r)})\conf_{a_{r,1}}(1-\conf_{a_{r,2}})\cro{\nt_{a_r}\psi_x(\conf)}^2}\leq \Dnkxt(\psi_x). 
\end{multline*}

\medskip

Finally, for the third contribution, we can write the same estimate, except for the last gradient which is over an edge $(a_{p,1},a_{p,2})=(x+z,x+z+z')$, with $|z'|=|z|=1$. 
We therefore write
\begin{multline*}
\sum_{r=1}^{p}\Ecmnkt\pa{\1_{\{e_{x+z}=e, \eta_x=1\}}\conf^{(r-1)}_{a_{r,1}}(1-\conf^{(r-1)}_{a_{r,2}})\cro{\nt_{a_r}\psi_{x+z}((\conf^{(r-1)})^{x,x+z})}^2}\\
\leq\Dnkxzt(\psi_{x+z})+\Ecmnkt\pa{\1_{\{e_{x+z}=e, \eta_x=1\}}\conf^{(p-1)}_{a_{p,1}}(1-\conf^{(p-1)}_{a_{p,2}})\cro{\nt_{a_p}\psi_{x+z}((\conf^{(p-1)})^{x,x+z})}^2}\\
\leq\Dnkxzt(\psi_{x+z})+\Ecmnkt\pa{\conf_{x+z}(1-\conf_{x+z+z'})\cro{\psi_{x+z}\pa{\pa{\conf^{x+z, x+z+z'}}^{x,x+z}}-\psi_{x+z}\pa{\conf^{x,x+z}}}^2}.
\end{multline*}
One easily obtains that $\conf^{x,x+z+z'}=\pa{\pa{\conf^{x,x+z}}^{x+z, x+z+z'}}^{x,x+z}$, therefore performing the change of variable $ \conf^{x,x+z}\mapsfrom \conf$ in the bound above yields
\begin{multline*}
\sum_{r=1}^{p}\Ecmnkt\pa{\1_{\{e_{x+z}=e, \eta_x=1\}}\conf^{(r-1)}_{a_{r,1}}(1-\conf^{(r-1)}_{a_{r,2}})\cro{\nt_{a_r}\psi_{x+z}((\conf^{(r-1)})^{x,x+z})}^2}\\
\leq \Dnkxzt(\psi_{x+z})+\underset{\leq 2\Ecmnkt((\nabla_{x,x+z'}\psi_{x+z})^2)+2\Ecmnkt((\nabla_{x+z',x+z+z'}\psi_{x+z})^2)}{\underbrace{\Ecmnkt\pa{\conf_{x}(1-\conf_{x+z+z'})\cro{\psi_{x+z}\pa{\conf^{x, x+z+z'}}-\psi_{x+z}\pa{\conf}}^2}}}\leq 3\Dnkxzt(\psi_{x+z}),
\end{multline*}
where we used that $z'$ and $z$ are orthogonal by assumption, which means that the gradients in the last term are not of the form $(x+z, x+z+z'')$.
We now use these three bounds in \eqref{eq:quadts}, to obtain that for some universal constant $C_3$
\begin{multline*}
\Big(\Ecmnkt\pa{\psi_{x+z}}-\Ecmnkt(\psi_{x})\Big)^2\leq C_3 \pa{\sum_{e\in B_n\setminus\{x\}}\sqrt{(1+2p(e))\widetilde{\mu}_{n,K}(e_{x+z}=e, \eta_x=1)}}^2\\
\times\Bigg[\Ecmnkt\pa{\pa{1-\conf_{x+z}}\cro{\psi_{x+z}(\conf^{x,x+z})-\psi_x(\conf)}^2}+\Dnkxt(\psi_{x})+\Dnkxzt(\psi_{x+z})\Bigg].
\end{multline*}
Since we assumed $K\leq \alpha|B_n|$, for $\alpha<1$ one straightforwardly obtains by elementary computations that 
\begin{align*}
\sum_{e\in B_n\setminus\{x\}}\sqrt{(1+2p(e))\widetilde{\mu}_{n,K}(e_{x+z}=e, \eta_x=1)}&\leq  \sqrt{\frac{K}{|B_n|} C(\alpha)},
\end{align*}
therefore \eqref{newgap2} holds as desired. This concludes the proof of Proposition \ref{prop:spectralgap}.
}

\subsection{Discrete differential forms in the context of particles systems}
\label{subsec:differentialforms}
\intro{We  introduce in this section the concept of discrete differential forms in the context of particle systems. 
The key point of the non-gradient method is that any translation-invariant closed form can be decomposed as the sum of a 
gradient of a translation-invariant function and the currents. This result is stated in Proposition \ref{prop:GCF}, and  
directly rewrites as an approximation (in the sense of equation \eqref{STcov}) of any function in $\tzero$ by a linear combination 
of the currents up to an element of $\gene \mathcal{C}$. 
}

\bigskip

\index{$\statespaceinf $\dotfill set of configurations on $\Z^2$}
\index{$ {\mathcal G}$\dotfill the graph $(\statespaceinf,E)$}
\index{$ E$\dotfill set of edges $(\confhat,\confhat^{x,x+z})$, $\conf_x=1$, $\conf_{x+z}=0$}
Let us denote by $\statespaceinf$ the set of configurations on $\Z^2$ 
 \[\statespaceinf=\left\{(\conf_x,\theta_x)_{x\in \Z^2}\in (\{0,1\}\times \ctoruspi)^{\Z^2} \; \;\big| \;\; \theta_x=0\mbox{ if }\conf_x=0\right\}.\]
 We consider here the graph ${\mathcal G}=(\statespaceinf,E)$ with oriented edge set 
\begin{equation}\label{defedge}E=\left\{(\confhat,\confhat')\in\statespaceinf^2\;  \mid\; \;\confhat'=\confhat^{x, x+z}\mbox{ for some }x\in \Z^2, |z|=1\mbox{ and }\conf_x(1-\conf_{x+z})=1\right\}.\end{equation}
In other words, there is an edge from $\confhat$ to $\confhat'$ if and only if the latter can be reached from the former with 
exactly one licit particle jump (i.e. the jump of a particle to an \emph{empty} site).
We endow ${\mathcal G}$ with the usual distance $d$ on graphs, i.e. $d(\confhat, \confhat')$ is the minimal number of particle jumps 
necessary to go from one configuration to the other. Note that this graph is not connected, since for example the configuration 
$\confhat$ with no particles is not accessible from any configuration $\confhat'$ with any number of particles. 
This is also the case for two configurations with different angle distributions. In such a case where there is no path between 
$\confhat'$ and  $\confhat$, we will adopt the usual convention $d(\confhat, \confhat')=\infty$.
By abuse of notation, we also denote by $\mu_{\param}$ (cf. Definition \ref{defi:GCM})  the grand-canonical measure measure on 
$\Z^2$ with parameter $\param$, and write $\Egcm(\cdot)$ for the expectation w.r.t $\mu_{\param}$.

\medskip

We call \emph{differential form} on $({\mathcal G},d)$ a collection of $L^2(\mesinv)$ variables associated with each edge in $E$. More precisely, it is a collection 
$\ufbar=(\ufbar_{x,x+z})_{x\in \Z^2, |z|=1}$, satisfying
\[\ufbar_{x,x+z}(\confhat)=\conf_x(1-\conf_{x+z})\ufbar_{x,x+z}(\confhat)\in L^2(\mesinv).\]
This definition arbitrarily attributes to $\ufbar_{x,x+z}(\confhat)$ the value $0$ if $\conf_x(1-\conf_{x+z})$ vanishes (i.e. if the jump from $x$ to $x+z$ cannot be performed in $\confhat$), which is just a notation shortcut to define $\ufbar$ on all configurations rather than only on those such that $\conf_x(1-\conf_{x+z})=1$. 
Another way to look at these objects is that with each possible particle jump in a configuration $\confhat$ is associated a weight. 
In this section, we will only consider \emph{closed forms}, 
i.e. differential forms for which the added weight of any finite-length path (composed only of licit jumps, i.e. jumps from $x$ to $x+z$ with $x$ occupied and $x+z$ empty) 
between two configuration does not depend on the path chosen but only on the two endpoints. {Equivalently, closed forms are those for which the integral over a closed loop of licit jumps vanishes.} 

{We call \emph{path} a finite sequence of jumps coordinates $\gamma=(x_i, x_i+z_i)_{0\leq i\leq q_\gamma}$, where the $x_i$'s are in $\Z^2$, and $|z_i|=1$. Given a configuration $\confhat$, we denote $\Gamma(\confhat)$ (resp. $\Gamma_c(\confhat)$)
the set of \emph{licit paths} (resp.  \emph{licit loops}, i.e. licit closed paths) such that all successive jumps in the path are licit starting from $\confhat$, (resp. and such that the configuration reached at the end of the sequence of jumps is $\confhat$)
\[\Gamma(\confhat)=\{\gamma=(x_i, x_i+z_i)_{0\leq i\leq q_\gamma} \;\big|\;\; \confhat^{(i, \gamma)}_{x_i}(1-\confhat^{(i,\gamma)}_{x_i+z_i})=1,\;0\leq i\leq q_\gamma\},\]
(resp. $\Gamma_c(\confhat)=\{\gamma=(x_i, x_i+z_i)_{0\leq i\leq q_\gamma}\in \Gamma(\confhat) \; \mid\;\; \confhat^{(q_\gamma+1, \gamma)}=\confhat\;\}$,)
where for any path $\gamma$, and any configuration $\confhat$, we denote $\confhat^{(0,\gamma)}=\confhat$, and $\confhat^{(i+1, \gamma)}=\pa{\confhat^{(i, \gamma)}}^{x_i, x_i+z_i}$ for $0\leq i\leq q_\gamma $.
For any differential form $\ufbar=(\ufbar_{x,x+z})_{x\in \Z^2, |z|=1}$, and any finite path $\gamma$, we denote by
\[I_{\gamma, \ufbar}(\confhat)=\1_{\{\gamma\in \Gamma(\confhat)\}}\sum_{0\leq i\leq q_\gamma }\ufbar_{x_i, x_i+z_i}(\confhat^{(i,\gamma)}),\]
the random variable representing the integral of $\ufbar$ along the path $\gamma$. We assign for convenience the value $0$ to the integral if one of the jumps in the path was not licit.
}


\begin{defi}[Closed and exact forms on $({\mathcal G},d)$]
{A differential form $\ufbar=(\ufbar_{x,x+z})_{x\in \Z^2, |z|=1}$ is \emph{closed} if for any finite path $\gamma$, 
\[\1_{\{\gamma\in \Gamma_c(\confhat)\}}I_{\gamma, \ufbar}(\confhat)=0 \; \;\mesinv-a.s.,\]
i.e. if its integral along any \emph{closed} loop vanishes a.s.. Note that we require the above to hold for any finite path, but for non-closed path the indicator function vanishes. The reason for defining closed forms this way is that closedness of a finite path is a random property that also depends on the configuration, not only on the jump succession.
}

For any cylinder function $f\in \mathcal{C}$, we {say that $\ufbar^f$ is \emph{an exact differential form associated with $f$} if}
\[\ufbar^f_{x,x+z}(\confhat)={\conf_{x}}(1-{\conf_{x+z}})(f(\confhat^{x,x+z})-f(\confhat))\]
{a.s.}. It is easily checked that for any $f\in \mathcal{C}$, $\ufbar^f$ is a \emph{closed form}, {since then
\begin{equation}
\label{eq:Igamma}
I_{\gamma, \ufbar^f}(\confhat)=\1_{\{\gamma\in \Gamma(\confhat)\}}\cro{f(\confhat^{(q_\gamma+1, \gamma)})-f(\confhat)},
\end{equation}
which vanishes a.s. if the loop is closed.}
\end{defi}


We now consider the case of translation invariant closed forms.
\begin{defi}[Germs of a closed form]
\label{defi:GCF}
A pair $\uf=(\uf_{1},\uf_{2}):\statespaceinf\to\R^2$  in $L^2(\mesinv)$ is a \emph{germ of a closed form} if $\ufbar$ defined by 
\begin{equation}
\label{eq:Defubar}
\ufbar_{x,x+ e_i}(\confhat)=\tau_x \uf_i(\confhat) \quad \mbox{ and } \quad \ufbar_{x+e_i,x}(\confhat)=-\tau_x \uf_i(\confhat^{x,x+e_i})=-\ufbar_{x,x+ e_i}(\confhat^{x,x+e_i}) 
\end{equation}
is a closed form.
We endow the set of germs of closed forms with its $L^2(\mesinv)$ norm 
\begin{equation}
\label{L2FF}
\norm{\uf}_{\param,2}=\cro{\Egcm(\uf_{1}^2+\uf_{2}^2)}^{1/2}.
\end{equation}
{Denote by $\tzn$ the closure in $L^2(\mesinv)$ of $T^\omega$ (the set of cylinder functions, defined in \eqref{eq:DefTomega}, depending on the angles through a linear combination of the $\omega(\theta_x)$), and let $\bff{T}^{\omega}=\overline{\bff{T}_0^{\omega}}$ denote the closure in $L^2(\mesinv)$ of the set $\bff{T}_0^{\omega}$ of germs of closed forms with components in $\tzn$, namely
\begin{equation}
\label{frakt}
\bff{T}_0^{\omega}=\Big\{\uf=(\uf_1,\uf_2) \; \;\big| \;\; \uf\mbox{ is a $ L^2(\mesinv) $ germ of a closed form},\quad \uf_i\in \tzn,\quad\forall i\in\{1,2\}\Big\}.
\end{equation}
}

\end{defi}

\begin{defi}[Germs of an exact form]
\label{defi:GEF}
A pair $\uf=(\uf_{1},\uf_{2})$ will be called \emph{germ of an exact form {associated with a cylinder function $h\in \mathcal{C}$}} if we can write 
\[(\uf_{1},\uf_{2})=\boldsymbol{\nabla}\Sigma_h:=(\nabla_{0,e_1}\Sigma_h,\nabla_{0,e_2}\Sigma_h)\]
{pointwise}, where $\Sigma_h$ is the formal sum $\Sigma_h=\sum_{x\in\Z^2}\tau_x h$. Note that although the formal sum 
$\Sigma_h$ is ill-defined a priori, its gradient $\boldsymbol{\nabla}\Sigma_h$ is not, because $h$ is assumed to be a cylinder function, and therefore only depends on a finite number of sites.

One easily verifies that any \emph{germ of an exact form} is also the \emph{germ of a closed form}.
In particular, for any function $h\in  { T^\omega}$, (cf. \eqref{eq:DefTomega}), we have $\boldsymbol{\nabla}\Sigma_h\in \bff{T}^{\omega}$.
We denote by $\mathfrak E^{\omega}=\overline{\mathfrak E^{\omega}_0}$ the closure in $L^2(\mesinv)$ of the set $\mathfrak E^{\omega}_0$ of germs of exact forms {associated with functions in} ${T^\omega}$,
\[\mathfrak E^{\omega}_0=\{\boldsymbol{\nabla}\Sigma_h,\;{\color{white}\big(}\;  h\in {T^\omega}\}\subset \bff{T}_0^{\omega}.\]
\end{defi}

\begin{defi}[Germs of a closed form associated with the currents] 
\label{defi:GCFc} 
Define $ \curg^1$, $\curg^2$, $\curg^{1,\omega}$, and $\curg^{2,\omega}$ as
\begin{equation}
\label{eq:Defcurg}
\curg^k_{i}(\confhat)=\ind{i=k}\conf_0(1-\conf_{e_i})\quad \mbox{ and } \quad \curg^{k,\omega}_{i}(\confhat)=\ind{i=k}\com_0(1-\conf_{e_i})\quad \mbox{ for }k,i=1,2. 
\end{equation}
These four functions are germs of closed forms, and {can be seen as germs of  "almost" exact forms associated with the formal functions
\[f^k=\sum_{x\in\Z^2}x_k\conf_x\quad \mbox{ and }\quad f^{k, \omega}=\sum_{x\in\Z^2}x_k\com_x,\]
which are not well defined, but for which the gradient along any licit jumps is. Of course, since the functions $f^k$, $f^{k,\omega}$ above are merely formal sums, the $ \curg^k$, $\curg^{k,\omega}$'s are \emph{not} germs of exact forms}.  In other words, the closed form $\bar\curg^k$ associated with the germ $\curg^k$ is equal to $\pm1$ on any edge representing a particle jump in the direction $\pm e_k$, and  the closed form $\bar\curg^{k,\omega}$  associated with $\curg^{k,\omega}$ is equal to $\pm \omega(\theta)$ 
on any edge representing a jump in the direction $\pm e_k$ of a particle with angle $\theta$. We denote by $\boldsymbol{\mathfrak{J}}^{\omega}$ the linear span of the $\curg^k$, $\curg^{k,\omega}$
\[\boldsymbol{\mathfrak{J}}^{\omega}=\left\{\curg^{a,b}:=a_1\curg^1+a_2\curg^2+b_1\curg^{1,\omega}+b_2\curg^{2,\omega}, \quad a\in \R^2, b\in \R^2\right\}\subset \bff{T}_0^{\omega}.\]
\end{defi}

We are now ready to state the main result of this section.
\begin{prop}[Structure of $\bff{T}^{\omega}$]
\label{prop:GCF}
We have the decomposition
\[\bff{T}^{\omega}=\boldsymbol{\mathfrak{J}}^{\omega}\oplus\mathfrak E^{\omega}.\]
\end{prop}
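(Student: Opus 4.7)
The plan is to argue that the sum $\boldsymbol{\mathfrak{J}}^{\omega}+\mathfrak E^{\omega}$ is direct and closed in $\bff{T}^{\omega}$, and then establish density. The inclusion $\boldsymbol{\mathfrak{J}}^{\omega}+\mathfrak E^{\omega}\subset\bff{T}^{\omega}$ is already built into Definitions \ref{defi:GCF}--\ref{defi:GCFc}. Since $\boldsymbol{\mathfrak{J}}^{\omega}$ is at most four-dimensional, it is closed in the Hilbert space $(\bff{T}^{\omega},\norm{\cdot}_{\param,2})$; combined with the closedness of $\mathfrak E^{\omega}$ by construction, the sum of a closed subspace and a finite-dimensional one is closed. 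Directness is obtained by a computation analogous to Proposition \ref{prop:structureHalpha}: if some $\curg^{a,b}$ is the $L^2(\mesinv)$-limit of a sequence $\boldsymbol\nabla\Sigma_{h_n}$ with $h_n\in\mathcal{C}\cap T^{\omega}$, then using translation invariance one checks by integration by parts that $\Egcm(\curg^{k}\cdot\boldsymbol\nabla\Sigma_{h_n})=\Egcm(\curg^{k,\omega}\cdot\boldsymbol\nabla\Sigma_{h_n})=0$, so taking inner products of $\curg^{a,b}$ against each of $\curg^1,\curg^2,\curg^{1,\omega},\curg^{2,\omega}$ extracts the coefficients $a_i,b_i$ via explicit formulas of the type $\Egcm((\curg^{k,\omega}_i)^2)=\Egcm(\omega^2\mid\eta_0=1)\alpha(1-\alpha)$, forcing $a=b=0$ for any $\param\in\pset$ with $\alpha\in(0,1)$ and $V_{\param}(\omega)>0$. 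The degenerate cases are trivial since $\bff{T}^{\omega}=\{0\}$ there.

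The core of the proposition is therefore the density of $\boldsymbol{\mathfrak{J}}^{\omega}+\mathfrak E^{\omega}$ in $\bff{T}^{\omega}$. Fix $\uf=(\uf_1,\uf_2)\in\bff{T}^{\omega}$ orthogonal for the inner product $\scalstar{\cdot,\cdot}_{\param,2}$ to both $\boldsymbol{\mathfrak{J}}^{\omega}$ and $\mathfrak E^{\omega}$; the goal is $\uf=0$. For each $n\geq1$ and each $\K\in\Ksett_n$ with $K\leq\alpha|B_n|$ for some fixed $\alpha<1$, the closedness of $\ufbar$ (cf.~\eqref{eq:Defubar}) allows one to integrate $\bar{\uf}$ along any path of licit jumps between two configurations of $\Sigma_n^{\K}$, yielding, up to a choice of reference configuration, a function $H^{n,\K}_{\uf}:\Sigma_n^{\K}\to\R$ of mean zero w.r.t.\ $\mu_{n,\K}$ and such that $\nabla_{x,x+e_i}H^{n,\K}_{\uf}=\tau_x\uf_i$ for every edge $(x,x+e_i)$ in $B_n$. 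Since each $\uf_i$ lies in the closure of $T^{\omega}$, the potential $H^{n,\K}_{\uf}$ inherits the $T^{\omega}$-structure, and Proposition \ref{prop:spectralgap} applies, providing
\[
\Ecmnk\big((H^{n,\K}_{\uf})^2\big)\leq C(\alpha)n^2\Dnk(H^{n,\K}_{\uf}).
\]

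Orthogonality of $\uf$ to $\mathfrak E^{\omega}_0$ translates, via summation by parts and the explicit expression of $\boldsymbol\nabla\Sigma_h$, into the statement that for every cylinder $h\in\mathcal{C}\cap T^{\omega}$ the $L^2(\mu_{n,\K})$-projection of $H^{n,\K}_{\uf}$ onto the subspace spanned by $\sum_{x\in B_n}\tau_x h$ vanishes in the limit $n\to\infty$; orthogonality to the four current germs $\curg^k,\curg^{k,\omega}$ similarly kills the projections of $H^{n,\K}_{\uf}$ onto the drift coordinates $\sum_{x\in B_n}x_k\conf_x$ and $\sum_{x\in B_n}x_k\com_x$. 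Approximating $H^{n,\K}_{\uf}$ in $L^2(\mu_{n,\K})$ by combinations of such exact sums and drifts, using the spectral gap estimate to control the residual Dirichlet form, and then averaging over $\K$ via the equivalence of ensembles (Appendix \ref{subsec:equivalenceensembles}), forces $\Ecmnk((H^{n,\K}_{\uf})^2)\to0$, so that each $\tau_x\uf_i$ is the $L^2(\mesinv)$-limit of discrete gradients of these approximants; the assumed orthogonality of $\uf$ then gives $\uf=0$.

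The main obstacle is that the constant $C(\alpha)$ in the spectral gap blows up as $\alpha\nearrow 1$, so the preceding argument cannot reach all canonical states directly. To bypass this, one introduces a smooth cutoff $\phi_n$ on the local density (cf.~\eqref{def:phin}), writing $\uf=\uf\phi_n+\uf(1-\phi_n)$ and running the spectral-gap argument only on the truncated part $\uf\phi_n$, whose supporting canonical states satisfy the uniform bound $K\leq(1-\delta)|B_n|$. The complementary contribution $\uf(1-\phi_n)$ is controlled in $L^2(\mesinv)$ by the a priori density estimate of Proposition \ref{prop:fullclusters} together with Lemma \ref{lem:bulkconvergence}, and vanishes as $n\to\infty$ and $\delta\to0$. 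Ensuring that the truncation preserves the $T^{\omega}$-structure and the closed-form property, and that the integration by parts producing the currents survives it uniformly in the truncation parameter, is the most delicate point of the proof.
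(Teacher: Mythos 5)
Your proposal has the right raw ingredients in mind (integrating the closed form to a potential on $\Sigma_n^\K$, the spectral gap estimate, the density cutoff), but the central mechanism it relies on — the orthogonality argument forcing $\Ecmnk\big((H^{n,\K}_{\uf})^2\big)\to 0$ — does not work, and the proposal as written does not identify where the current germs $\curg^k,\curg^{k,\omega}$ actually come from.

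The paper's proof is constructive, not via orthogonal complement. It builds $h_n$ from the conditional expectation $\ufbar^n=\Egcm(\ufbar\mid\F_n)$, truncates the density via $\varphi_n=\1_{\rho_n\le\alpha'}h_n$, smooths once more via $\modphi_n=\Egcm(\varphi_{3n}\mid\F_n)$, and then splits $\frac{1}{(2n)^2}\nabla_{0,e_i}\Sigma_{\modphi_n}$ into a \emph{bulk} term and a \emph{boundary} term. Lemma \ref{lem:bulkconvergence} shows the bulk converges to $\uf_i$ in $L^2(\mesinv)$, and Lemma \ref{lem:boundaryconvergence} shows the boundary term stays bounded in $L^2$ and that its weak limit points lie in $\boldsymbol{\mathfrak{J}}^\omega$. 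The current germs therefore arise from boundary effects that appear because $\modphi_n$ has finite support, and this is precisely the structure your proposal omits: if you only argue "orthogonal complement trivial" you never see the currents, and you must explain why the only obstruction to an exact form is a current, which is exactly the content of the boundary analysis. Note also that the spectral gap (applied to $\varphi_n$ in Lemma \ref{lem:normephiu}) gives $\Egcm(\varphi_n^2)\le Kn^4$, an upper bound that \emph{grows} in $n$; there is no mechanism in your sketch for concluding $\Ecmnk\big((H^{n,\K}_{\uf})^2\big)\to 0$, and indeed this is not what the proof does — the $n^4$ bound is used to show the boundary terms are $O(1)$, not to make anything vanish.

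Two specific errors should be flagged: you cite Proposition \ref{prop:fullclusters} to control the tail $\uf(1-\phi_n)$, but that proposition is a dynamic estimate under $\Prob^{\lambda,\beta}_{\mu^N}$; what is actually used in the proof of Lemma \ref{lem:bulkconvergence} is a static large-deviation bound $\mesinv(\rho_{3n}>\alpha')=O(e^{-Cn^2})$. And you cite Lemma \ref{lem:bulkconvergence} itself as an available tool, which is circular since it is an internal step of this very proposition's proof. Finally, the directness part of your argument asserts orthogonality $\boldsymbol{\mathfrak{J}}^{\omega}\perp\mathfrak E^{\omega}$ and plans to read off $a_i,b_i$ from inner products against $\curg^k,\curg^{k,\omega}$; the paper instead pairs $\curg^{a,b}=\boldsymbol\nabla\Sigma_h$ against $\conf_{e_i}-\conf_0$ and $\com_{e_i}-\com_0$ and derives the linear system $a_i+b_i\Egcm(\omega)=0$, $a_i\Egcm(\omega)+b_i\Egcm(\omega^2)=0$, which either forces $a=b=0$ or, when $V_\param(\omega)=0$, makes $a_ij_i+b_ij^{\omega}_i$ vanish in $L^2(\mesinv)$ — a subtler conclusion than the one your sketch targets.
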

\begin{rema}Note that we can safely assume that the total density $\am$ is in $]0,1[$. If not, the graph $\G$ is trivial since its edge set is empty. This assumption will be made throughout the rest of this section.
\end{rema}

Before turning to the proof of the last proposition, we investigate the case of a finite domain.  {We start by a technical Lemma. Recall that $\mathcal{C}_n$ is the set of functions depending only on sites in $B_n$, and $C^1$ with respect to each $\theta_x$ for $x$ in $B_n$, we denote $T_n^\omega=T^\omega\cap\mathcal{C}_n$, the set of functions depending only on sites in $B_n$, and depending on the angles through a linear combination of the $\omega(\theta_x)$. In order to be as clear as possible, recall that $\param$ is fixed, we denote by $\widehat{T}_n^\omega$ the set of functions a.s. equal to a function in $T_n^\omega$. Note that we need to be cautious because the various forms considered in this section are not explicit and are merely $L^2(\mesinv)$ functions of the infinite configuration. However, once their conditional expectation w.r.t. the sigma-algebra generated by sites in $B_n$, all those forms are, up to modification on a negligible set, in $T^\omega_n$. Since $\omega$ is a smooth function, and was fixed once and for all at the very begining of the proof (cf. \eqref{Hdecomp}), $T_n^\omega$ is actually a finite dimensional vector space, and all the results below are therefore analogous to the ones one would obtain with a finite number of particle types.
\begin{lemm}
\label{lem:CnTom}
For any $n\geq 0, $ $ \widehat{T}_n^\omega$ is closed in $ L^2(\mesinv)$, where $\mesinv$, here, stands for the product measure on $B_n$.
\end{lemm}
\proofthm{Lemma \ref{lem:CnTom}}{Since $\widehat{T}_n^\omega$ is roughly a finite-dimensional subspace of $L^2(\mesinv)$, this result is quite natural, but we detail the proof for the sake of exhaustivity. We need to show that if a sequence of functions $\big(\varphi_k(\eta)+\sum_{x\in B_n}\com_x\psi_{k,x}(\eta)\big)_{k\in \N}$ converges as $k\to\infty$ in $L^2(\mesinv)$ to $f$, then there exists angle-blind functions $\varphi^*$, $\psi_x^*$ such that $f=\varphi^*(\eta)+\sum_{x\in B_n}\com_x\psi^*_{x}(\eta)$ a.s.. Here, the $\varphi_k$, $\psi_{k,x}$, $\varphi^*$ and $\psi^*_{x}$ are angle-blind functions depending only on sites in $B_n$. Denote $\sigma_x\confhat$ the configuration equal to $\confhat$ everywhere in $B_n$ except in $x$ where it is distributed as an independent copy $\confhat_x'=(\eta'_x, \theta'_x)$ with distribution $\param$. Then, we abuse our notation, and also denote $\Egcm$ the expectation taken w.r.t. both $\confhat$ and $\confhat_x'$. 

We can now write 
\[\Egcm\cro{\pa{f(\confhat)-f(\sigma_x \confhat)}^2\1_{\{\eta_x=\eta'_x=1\}}}=
\lim_{k\to\infty}\Egcm\cro{\pa{\omega(\theta_x)-\omega(\theta'_x)^2\psi^2_{k,x}(\eta)\1_{\{\eta_x=\eta'_x=1\}}}}.\]
Now assume that the variance of $\omega(\theta_x)$ w.r.t. $\mesinv$ does not vanish (else, the result obviously holds, because in $L^2(\mesinv)$, $ T_n^\omega$ is the set of angle blind functions), we can write for some constant $C:=C(\omega, \param)$
\[\lim_{k\to\infty}\Egcm\cro{\psi^2_{k,x}(\eta)\mid \eta_x=1}\leq C\Egcm(f^2).\]
In particular, since the set of angle blind configurations in $B_n$ is finite, and since we can assume without loss of generality that $\psi_{k,x}(\eta)$ vanishes if $\eta_x=0$, all the $\psi_{k,x}$ must be bounded, uniformly in $x$, $k$, and $\eta$ by some constant $M$, and therefore remain in a compact set. Up to successive extractions, we can as a consequence assume that each sequences $(\psi_{k,x })_k$ converges uniformly in $\eta$ as $k\to\infty$ to $\psi^*_x$. In particular, the sequence $\varphi_k$ also converges to a function $\varphi^*$, and we can thus write as wanted 
$f=\varphi^*(\eta)+\sum_{x\in B_n}\com_x\psi^*_{x}(\eta)$ a.s..
}
}
We now consider closed differential forms in a finite box. Considering the graph ${\mathcal G}_n$ with vertices the {configurations} $\confhat$ on the box $B_n$, and connected, as on the infinite graph, if one configuration can be reached from another with one licit jump {along an edge of $B_n$}.
\begin{prop}
\label{prop:GCFn}
Fix a  parameter $\param$, $n\geq 0$, and a closed form $\ufbar=(\ufbar_{x,x+z})_{x, x+z\in B_n}$ on  ${\mathcal G}_n$  {satisfying for any $x, x+z\in B_n$
\begin{enumerate}[i)]
\item $\ufbar_{x,x+z}$ identically vanishes when there are $1$ or less empty sites in $B_n$,
\item $\ufbar_{x,x+z}\in T_n^\omega$ and is therefore smooth. 
\end{enumerate}
Then,  there exists a cylinder function $h\in T_n^\omega$} such that  
\[\ufbar_{x,x+z}=\nabla_{x,x+z} h \quad \forall x, x+z\in B_n, \mbox{ {pointwise}},\]
i.e. on a finite set, all closed forms are exact forms. Furthermore, one can assume without loss of generality that for any $\K\in\Kset_n$, $\E_{n, \K}(h)=0$. 
\end{prop}
\proofthm{Proposition \ref{prop:GCFn}}{{
Since $\ufbar$ is a closed form with each element in $T_n^\omega$ (therefore in particular smooth in the angle variables), we have that $1_{\{\gamma\in \Gamma_c(\confhat)\}}I_{\gamma, \ufbar'}$ vanishes  pointwise for any finite path $\gamma$.
Recall that $\ufbar_{x,x+z}$ vanishes if there is one or less empty site in $B_n$, we split the set of configurations on $B_n$ into components $(\Sigma^{\K}_n)_{\K\in \Kset_n}$ each connected on the graph ${\mathcal G}_n$. In particular, for any two configurations $\confhat$, $\confhat'$ in the same $\Sigma^{\K}_n$, we must have by construction $d(\confhat, \confhat')<\infty$. 

For any $\K$ with at least two empty sites, let us denote $\confhat^{\K}$ the configuration where the particles are inserted from the bottom left, row by row, and in the order of increasing angles from $0$ to $2\pi$. In other words, we insert the particle with the angle closest to $0$ at site $(-n, -n)$, the second closest at $(-n, -n+1)$, and so on until all  particles have been placed. The choice of  this reference configuration is arbitrary, but depends continuously in the angles in $\K\in \Ksett_n$. We then set $h(\confhat^{\K})=0$  for each $ \K\in \Ksett_n$, and for any other configuration $\confhat\in \Sigma^{\K}_n$, we fix a path $\gamma_{\confhat}$ of licit jumps from $\confhat^{\K}$ to $\confhat$, and  let
\[h(\confhat)= I_{\gamma_{\confhat},\ufbar'}(\confhat^{\K}).\]
Since $\ufbar$ is a \emph{pointwise} closed form, this expression does not depend on the choice of $\gamma_{\confhat}$  and pointwise, we have for any $x, x+z$, $\ufbar_{x,x+z}=\nabla_{x,x+z}h$. Furthermore, by construction, because both $\ufbar$ and $\confhat^{\K}$ depend smoothly on the particle's angles, so does $h$, and therefore $h\in \mathcal{C}_n$. We now show that $h\in T_n^\omega$.

To do so, we now consider the space $L^2(\mesinv)$, recall that  $\widehat{T}_n^\omega$ is the trace of $T_n^\omega$ in $L^2(\mesinv)$. Since, according to Lemma \ref{lem:CnTom}, $\widehat{T}_n^\omega$ is a closed linear subspace of $L^2(\mesinv)$, we can write on $B_n$ that $L^2(\mesinv)=\widehat{T}_n^\omega\oplus \pa{\widehat{T}_n^\omega}^{\perp}$. Straightforwardly, one can show that both $\widehat{T}_n^\omega$ and $\pa{\widehat{T}_n^\omega}^{\perp}$  are stable under any symmetric gradient $\widetilde{\nabla}_{x,x+z}f:=\1_{\{\conf_x \conf_{x+z}=0\}}(f(\confhat^{x,x+z})-f(\confhat))$, for $x, x+z\in B_n$.  In particular, since $\ufbar_{x,x+z}\in \widehat{T}_n^\omega$, we also have 
$\widetilde\nabla_{x,x+z}h= \ufbar_{x,x+z}(\confhat)+ \ufbar_{x,x+z}(\confhat^{x,x+z})\in \widehat{T}_n^\omega$ for any $x,x+z\in B_n$. Let now write $h$ as $h_1+h_2$, where $h_1\in \widehat{T}_n^\omega$ and $h_2\in (\widehat{T}_n^\omega)^{\perp}$, we must have $\widetilde\nabla_{x,x+z}h=\widetilde\nabla_{x,x+z}h_1$. All gradients of $h_2$ therefore vanish a.s., we conclude that $h_2$ is a.s. constant on each connected component, therefore we can choose it to be $0$ without changing $\widetilde\nabla_{x,x+z}h$. We thus have as wanted $\ufbar'_{x,x+z}=\nabla_{x,x+z}h_1$, we can therefore choose $h=h_1\in \widehat{T}_n^\omega$. Since $h$ is smooth in the angle coordinates, it implies as wanted $h\in T_n^\omega$ in a pointwise sense.

\medskip

Regarding the second claim of the Proposition, given} a configuration $\confhat$ on $B_n$, let us denote by $\K_n(\confhat):=(K(\confhat), \Theta_{K(\confhat)}(\confhat))$ the parameter giving the number and angles of particles 
in $\confhat$, i.e. 
\[K(\confhat)=\sum_{x\in B_n}\eta_x \eqand \Theta_{K(\confhat)}(\confhat)=\left\{\theta_{x_1},\dots,\theta_{x_{K(\confhat)}}\right\},\]
where $x_1,\dots,x_K$ are the positions of the $K$ particles in $\confhat$. Since the function $K(\confhat)$ is unchanged under any gradient inside $B_n$, 
we can replace $h$ by $h_0=h-\E_{n,\K_n(\confhat)}(h)$  (where $\E_{n,\K}$ is the expectation w.r.t. the canonical measure corresponding to 
having $\K$ particles in $B_n$) and still satisfy $ \ufbar_{x,x+z}(\confhat)=\nabla_{x,x+z} h_0(\confhat)$.

}

We now turn to the proof of the decomposition of germs of closed forms on the infinite graph.
\proofthm{Proposition \ref{prop:GCF}}{We first prove that the sum is direct~: assume that for $a,b\in \R^2$, there exists a cylinder function $h$ such that 
$\curg^{a,b}=a_1\curg^1+a_2\curg^2+b_1\curg^{1,\omega}+b_2\curg^{2,\omega}=\boldsymbol{\nabla} \Sigma_h$. In particular fix $i=1,2$, one easily obtains that
\[a_ij_i+b_ij_i^{\omega}=\nabla_{0,e_i}\Sigma_h-\nabla_{0,e_i}\Sigma_h(\confhat^{0,e_i})=\1_{\{\conf_0\conf_{e_i}=0\}}(\Sigma_h(\conf)-\Sigma_h(\conf^{0,e_i})),\]
where the $j_i$'s are the currents defined in \eqref{defi:currents}.
Multiplying by $\conf_{e_i}-\conf_0$ (resp. $\com_{e_i}-\com_0 $) and taking the expectation w.r.t. $\mesinv$, the identity above rewrites 
\[2(a_i+b_i\Egcm(\omega))\alpha(1-\alpha)=0 \quad (\;\mbox{resp.} \quad 2(a_i\Egcm(\omega)+b_i\Egcm(\omega^2))\alpha(1-\alpha)=0),\]
where, as in Section \ref{subsec:spectralgap}, $\Egcm(\omega^k)$ stands for $\Egcm(\omega^k(\theta_0)|\conf_0=1)$.
In particular, since $\alpha\in (0,1)$ this yields that $a_i+b_i\Egcm(\omega)=0$ and that $\Egcm(\omega^2)=\Egcm(\omega)^2$, therefore $\omega(\theta_0)$ is constant under $\mesinv$. In particular, $a_ij_i+b_ij_i^{\omega}$ vanishes in $L^2(\mesinv)$ as wanted.
The inclusion $\bff{T}^{\omega}\supset\boldsymbol{\mathfrak{J}}^{\omega}+\mathfrak E^{\omega}$ is immediate. 

\bigskip

We now prove the reverse inclusion. The set of germs of an exact form being a linear (therefore convex) subset of $L^2(\mesinv)$, its weak and strong closure in $L^2(\mesinv)$ coincide. In order to prove Proposition \ref{prop:GCF}, it is therefore sufficient to prove that for any $\uf\in \bff{T}^\omega$, there exists a sequence of cylinder functions ${h_n\in T^\omega}$ such that the sequence $(\boldsymbol{\nabla}\Sigma_{h_n})_{n\in \N}$ is weakly relatively compact in $L^2(\mesinv)$, and for any of its weak limit points $\bff{h}$, there exists $a$ and $b$ in $\R^2$ such that 
\begin{equation*}\bff{h}=\uf+\curg^{a,b}.\end{equation*} 

Fix $\uf\in\bff{T}^{\omega}$, and $(\ufbar_{x,x+z})_{x, x+z}$ the associated closed form defined by \eqref{eq:Defubar}. For any fixed integer $n$, let ${\mathcal F}_{n}$
be the $\sigma$-algebra generated by the sites inside $B_n$ 
\[{\mathcal F}_n=\sigma\pa{\;\confhat_x, \;x\in B_n\;},\]
and let $\ufbar^n_{x,x+z}$ denote the conditional expectation
\[\ufbar^n_{x,x+z}=\Egcm(\ufbar_{x,x+z}\mid{ \mathcal F}_n).\]
Note in particular that  since  $\uf$ is in $\bff{T}^{\omega}$, $\ufbar^n$ is a closed form on ${\mathcal G}_n$, and each of its coordinate is in $ {\widehat{T}_n^\omega}${, according to Lemma \ref{lem:CnTom}, and because each of the $\ufbar_{x,x+z}$ is the limit in $L^2(\mesinv)$ of a sequence of functions in $T^\omega$. In particular, up to simultaneous modification of all the $\ufbar^n_{x,x+z}$, $x,x+z\in B_n$ on a $\mesinv$-negligible set of configurations, we can safely assume that $\ufbar^n_{x,x+z}\in T^\omega_n$ in a pointwise sense.}

{Fix once and for all a density $\alpha<\alpha'<1$, and define $\rho_n=\frac{1}{|B_n|}\sum_{x\in B_n}\eta_x$ the density in $B_n$, according to Proposition \ref{prop:GCFn},  there exists a  family of ${\mathcal F}_n$-measurable functions $\varphi_n\in T_n^\omega$ with mean $0$ w.r.t. any canonical measure on $B_n$ such that 
\begin{equation*}
\label{def:phin} 
\ind{\rho_n\leq \alpha'}\ufbar^n_{x,x+z}=\nabla_{x, x+z}\varphi_n \quad \forall x,x+z\in B_n\;\; \mesinv-a.s.,
\end{equation*}
where the identity holds only a.s. and not pointwise because we may have modified the $\ufbar^n_{x,x+z}$ on a negligible set. Note that we would need a weaker indicator function to respect the conditions of Proposition \ref{prop:GCFn} (namely, that there are two empty sites in $B_n$) however in order to estimate the $L^2(\mesinv)$-norm of the $\varphi_n$, we will need the stronger indicator function above.}

Let us fix $n\in \N$, and consider the germ of an exact form $\frac{1}{(2n)^2}\boldsymbol{\nabla}\Sigma_{\varphi_n}$, whose coordinates can be rewritten for $i=1,2$
\[\frac{1}{(2n)^2}\nabla_{0,e_i}\Sigma_{\varphi_n}=\frac{1}{(2n)^2}\sum_{x\in \Z^2}\tau_{-x}\nabla_{x,x+e_i}\varphi_n.\]
Since $\varphi_n$ is ${\mathcal F}_n$-measurable, $\nabla_{x,x+e_i}\varphi_n$ vanishes as soon as neither $x$ nor $x+e_i$ is in $B_n$. Hence, the previous quantity is equal to 
\begin{equation}\label{gradientdecomp}\frac{1}{(2n)^2}\nabla_{0,e_i}\Sigma_{\varphi_n}=\frac{1}{(2n)^2}\sum_{\substack{-n-1\leq x_i\leq n\\
x\in B_n}}\tau_{-x}\nabla_{x,x+e_i}\varphi_n=R_{n,i}+\frac{1}{(2n)^2}\sum_{\substack{-n\leq x_i\leq n-1\\
x\in B_n}}\tau_{-x}\nabla_{x,x+e_i}\varphi_n, \end{equation}
where the boundary term $R_{n,i}$ is 
\[R_{n,i}=\frac{1}{(2n)^2}\cro{\sum_{\substack{x_i=-n-1\\
x\in B_n(-e_i)}}\tau_{-x}\nabla_{x,x+e_i}\varphi_n+\sum_{\substack{x_i=n\\
x\in B_n}}\tau_{-x}\nabla_{x,x+e_i}\varphi_n}.\]

For any $n$, the left-hand side in \eqref{gradientdecomp} the germ of an exact form as introduced in Definition \ref{defi:GEF}. We will see that the second term of the right-hand side converges in 
$L^2(\mu_{\param})$ as $n$ goes to infinity towards $\uf_{i}$. Hence to prove Proposition \ref{prop:GCF} it will be sufficient to show that the boundary 
 term $R_{n,i}$ is weakly relatively compact in $L^2(\mesinv)$, and that any of its weak limit points is in $\bff{J}^\omega$. 
 Since $\varphi_n$ is supported in $B_n$, the exchanges at the boundary act as reservoirs with creation (first term in $R_{n,i}$) at the sites $x+e_i$ with $x_i=-n-1$, and annihilation of particles  (second term in $R_{n,i}$) at the sites $x$ such that $x_i=n$, and cannot be expressed as such as particle transfers. 
To prove that the sequence of boundary terms is weakly relatively compact, we therefore need to smooth out the $\varphi_n$'s, by letting 
\begin{equation}
\label{def:phint}
\modphi_n=\Egcm(\varphi_{3n}\mid {\mathcal F}_n).
\end{equation}
{Not in particular that we still have $\modphi_n\in T_n^{\omega}$.}

Rewrite \eqref{gradientdecomp} with $\modphi_n$ instead of $\varphi_n$
\begin{equation}
\label{decompGFFlemme}
\frac{1}{(2n)^2}\nabla_{0,e_i}\Sigma_{\modphi_n}=\frac{1}{(2n)^2}\sum_{\substack{-n\leq x_i\leq n-1\\
x\in B_n}}\tau_{-x}\nabla_{x,x+e_i}\modphi_n+\widetilde{R}_{n,i}, \end{equation} 
where this time 
\begin{equation}
\label{rnitilde}
\widetilde{R}_{n,i}=\frac{1}{(2n)^2}\cro{\sum_{\substack{x_i=-n-1\\
x\in B_n(-e_i)}}\tau_{-x}\nabla_{x,x+e_i}\widetilde{\varphi}_n+\sum_{\substack{x_i=n\\
x\in B_n}}\tau_{-x}\nabla_{x,x+e_i}\widetilde{\varphi}_n}.\end{equation}
We are going to show that 
\begin{itemize}
\item the bulk term converges in $L^2(\mesinv)$ to $\uf_{i}$,
\item the sequence of boundary term is bounded in $L^2(\mesinv)$, and any of its weak limit points is an element of $\bff{J}^\omega$.
\end{itemize}
For the sake of clarity, we state both of these results as separate lemmas, and we will prove them afterwards.

%

\begin{lemm}[Convergence of the bulk term towards $\uf_{i}$]
\label{lem:bulkconvergence}
For any $i\in \{1,2\}$,
\begin{equation}
\label{eq:blukL2}
\limsup_{n\to\infty}\Egcm\cro{\pa{\frac{1}{(2n)^2}\sum_{\substack{-n\leq x_i\leq n-1\\
x\in B_n}}\Big[\tau_{-x}\nabla_{x,x+e_i}\modphi_n-\uf_{i}\Big]}^2}=0.
\end{equation}
\end{lemm}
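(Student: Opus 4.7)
The plan is to combine the definition $\widetilde{\varphi}_n=\Egcm(\varphi_{3n}\mid{\mathcal F}_n)$ with the fact that $\ind{\rho_{3n}\leq \alpha'}$ is invariant under particle swaps inside $B_{3n}$. For any edge $(x,x+e_i)\subset B_n$, since ${\mathcal F}_n$ is also stable under such swaps, the discrete gradient commutes with the conditional expectation, and the tower property yields
\[\nabla_{x,x+e_i}\widetilde{\varphi}_n\egal\Egcm\big(\ind{\rho_{3n}\leq \alpha'}\,\ufbar^{3n}_{x,x+e_i}\;\big|\;{\mathcal F}_n\big)\egal\ufbar^{n}_{x,x+e_i}-\Egcm\big(\ind{\rho_{3n}>\alpha'}\,\ufbar^{3n}_{x,x+e_i}\;\big|\;{\mathcal F}_n\big).\]
The bulk term therefore splits into a \emph{main part} $A_n:=(2n)^{-2}\sum_{x}\tau_{-x}\ufbar^n_{x,x+e_i}$ and a \emph{cutoff correction} $C_n$, and my goal is to prove separately that $A_n\to \uf_{i}$ and $C_n\to 0$ in $L^2(\mesinv)$.

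For the main part, translation invariance of $\mesinv$ together with $\tau_{-x}\ufbar_{x,x+e_i}=\uf_i$ gives $\tau_{-x}\ufbar^n_{x,x+e_i}=\Egcm(\uf_{i}\mid {\mathcal F}_n^{(x)})$, where ${\mathcal F}_n^{(x)}:=\sigma(\confhat_y,\,y\in B_n-x)$. Jensen's inequality then furnishes
\[\Egcm\big[(A_n-\uf_{i})^2\big]\infe \frac{1}{(2n)^2}\sum_{x\in B_n}\Egcm\Big[\big(\Egcm(\uf_{i}|{\mathcal F}_n^{(x)})-\uf_{i}\big)^2\Big],\]
and I would split the sum according to whether $\norm{x}_{\infty}\leq n-\sqrt{n}$ (the \emph{interior}) or not (the \emph{boundary}, of cardinality $O(n^{3/2})$). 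For interior $x$, one has ${\mathcal F}_n^{(x)}\supset {\mathcal F}_{\sqrt{n}}$, so each term is bounded by $\Egcm\big[(\Egcm(\uf_{i}|{\mathcal F}_{\sqrt{n}})-\uf_{i})^2\big]$ which vanishes by $L^2$-martingale convergence; the boundary contribution is crudely bounded by $O(n^{-1/2})\Egcm(\uf_{i}^2)$ and vanishes as well.

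For the cutoff correction, applying Jensen first to the outer square and then conditionally via $(\ufbar^{3n}_{x,x+e_i})^2\leq \Egcm((\tau_x\uf_{i})^2\mid{\mathcal F}_{3n})$, together with the translation invariance of $\mesinv$ and the fact that $\ind{\rho_{3n}>\alpha'}$ is $\mathcal{F}_{3n}$-measurable, reduces the estimate to a single expectation
\[\Egcm[C_n^2]\infe \Egcm\bigg[\ind{\rho_{3n}>\alpha'}\cdot\frac{1}{(2n)^2}\sum_{x\in B_n}(\tau_x\uf_{i})^2\bigg].\]
The ergodic theorem applied to the i.i.d. (in each marginal) measure $\mesinv$ ensures that the spatial average on the right converges in $L^1(\mesinv)$ to $\Egcm(\uf_{i}^2)$, and is therefore uniformly integrable. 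Meanwhile $\alpha<\alpha'$ forces $\Prob_\param(\rho_{3n}>\alpha')\to 0$ exponentially by the law of large numbers, so Vitali's convergence theorem yields $\Egcm[C_n^2]\to 0$.

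The main obstacle is the cutoff correction~: since $\uf_{i}$ is only $L^2(\mesinv)$ and generally unbounded, a naive Cauchy--Schwarz against $\Prob_\param(\rho_{3n}>\alpha')^{1/2}$ is not available, and one must also ensure uniformity in $x\in B_n$. The argument above circumvents both difficulties at once by using the conditional Jensen step to collapse the $x$-dependence of the bound, reducing everything to the interaction between a uniformly integrable ergodic average and a vanishing indicator.
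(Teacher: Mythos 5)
Your proof is correct and follows the same overall structure as the paper's: split the bulk term into a martingale approximation error and a cutoff correction, handle the former by the $L^2$-martingale convergence theorem together with an interior/boundary decomposition, and make the latter vanish using that $\Prob_{\param}(\rho_{3n}>\alpha')\to0$. The treatment of the main part is essentially identical (your $\sqrt{n}$-boundary plays the role of the paper's $\varepsilon n$-boundary, with the minor cosmetic simplification that you avoid a separate $\varepsilon\to0$ limit by letting the boundary width scale with $n$).

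Where you genuinely diverge is the cutoff correction. The paper estimates each individual term $\Egcm[\ind{\rho_{3n}>\alpha'}\ufbar_{x,x+e_i}^2]=\Egcm[\uf_i^2\,\tau_{-x}\ind{\rho_{3n}>\alpha'}]$ by a quantitative H\"older inequality with a carefully chosen exponent $p(n)=2-1/n$ tending to $2$, together with a large-deviation estimate on $\Prob_{\param}(\rho_{3n}>\alpha')$; this produces a bound independent of $x$, which is exactly the uniformity the paper flags as the reason dominated convergence alone is insufficient. You instead apply conditional Jensen before any translation, which collapses the $x$-dependence into a single quantity $\Egcm\big[\ind{\rho_{3n}>\alpha'}\cdot\frac{1}{(2n)^2}\sum_x(\tau_x\uf_i)^2\big]$, and then invoke the $L^1$-ergodic theorem for the $\Z^2$-shift under the product measure $\mesinv$ (which is ergodic) to conclude that the spatial average is uniformly integrable, so that Vitali's theorem applies. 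This is a valid alternative: it trades the elementary-but-fiddly H\"older manipulation for a single clean invocation of the ergodic theorem, and arguably makes the logic more transparent by eliminating the per-$x$ estimate altogether. The key non-trivial ingredient you need to justify explicitly is that $\mesinv$ is ergodic under translations, which holds since it is a product measure; this is implicit in your phrase ``i.i.d. (in each marginal)'' but deserves to be stated.
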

\begin{lemm}[Limit of the boundary term]
\label{lem:boundaryconvergence}
For any $i\in \{1,2\}$, we split the boundary term as
\[\widetilde{R}_{n,i}=\widetilde{R}^-_{n,i}+\widetilde{R}^+_{n,i},\]
where
\begin{equation}\label{BoundaryDef}\widetilde{R}^-_{n,i}=\frac{1}{(2n)^2}\sum_{\substack{x_i=-n-1\\
x\in B_n(-e_i)}}\tau_{-x}\nabla_{x,x+e_i}\modphi_n,\eqand  \widetilde{R}^+_{n,i}=\frac{1}{(2n)^2}\sum_{\substack{x_i=n\\
x\in B_n}}\tau_{-x}\nabla_{x,x+e_i}\modphi_n,\end{equation}
which will be referred to respectively as negative and positive boundary terms. 
With the previous notations, both sequences $(\widetilde{R}^-_{n,i})_{n\in \N}$ and $(\widetilde{R}^+_{n,i})_{n\in \N}$ are bounded in $L^2(\mesinv)$. 
Furthermore, for any weakly convergent subsequence $\widetilde{R}^-_{k_n,i}\to \mathfrak{R}_i^-$, there exists $a_i,b_i\in \R$ such that 
\[ \mathfrak{R}_i^-=a_i\com_0(1-{\conf_{e_i}})+b_i\conf_0(1-{\conf_{e_i}}).\]
The same is true for the positive boundary term.
\end{lemm}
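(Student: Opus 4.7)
The plan proceeds in three stages: establishing $L^2(\mesinv)$-boundedness of the sequences $(\widetilde{R}^{\pm}_{n,i})_n$, identifying translation-invariance constraints on any weak limit, and finally reducing these constraints to the claimed linear form.

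For the $L^2$-bound, I would expand $\|\widetilde{R}^{\pm}_{n,i}\|^2_{L^2(\mesinv)}$ as a double sum of covariances over the face and exploit the conditional-expectation structure $\modphi_n = \Egcm(\varphi_{3n}\mid \mathcal{F}_n)$. The key input is the spectral gap of Proposition \ref{prop:spectralgap}: because $\varphi_{3n} = \1_{\{\rho_{3n}\leq \alpha'\}} h_{3n}$ is supported on canonical states $\K \in \Kset_{3n}$ with density at most $\alpha' < 1$, one has the estimate $\E_{3n,\K}(\varphi_{3n}^2) \leq C(\alpha')n^2 D_{3n,\K}(\varphi_{3n})$ with $C(\alpha')$ finite, and the density cutoff is precisely what keeps this constant from blowing up. The Dirichlet form on the right-hand side equals, by the relation $\nabla_{x,x+z}\varphi_{3n} = \1_{\{\rho_{3n}\leq \alpha'\}}\ufbar^{3n}_{x,x+z}$ valid on canonical sectors preserving $\rho_{3n}$, a sum of squared closed-form coordinates each dominated by $\|\uf\|_{\param,2}^2$. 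Combining these estimates with a careful covariance bound on the face sum, using the tower property to control cross-terms, should yield $\|\widetilde{R}^{\pm}_{n,i}\|_{L^2(\mesinv)} \leq C(\alpha')\|\uf\|_{\param,2}$ uniformly in $n$.

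For the structure of weak limit points, two symmetry properties are decisive. First, each summand $\tau_{-x}\nabla_{x,x+e_i}\modphi_n$ factors as $\conf_0(1-\conf_{e_i})$ times a function of the remaining sites, so any weak limit $\mathfrak{R}^{\pm}_i$ carries this same licit-jump indicator. Second, since the sum runs over $2n+1$ consecutive values of $x_{i'}$ at fixed $x_i$, the telescoping difference $\tau_{e_{i'}}\widetilde{R}^{\pm}_{n,i} - \widetilde{R}^{\pm}_{n,i}$ reduces to two single boundary summands whose $L^2$-norm is $O(1/n)$ by the uniform bound above. Therefore any weak limit must be invariant under $\tau_{e_{i'}}$. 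Moreover, since every summand lies in $T^\omega$, the weak limit belongs to the weak closure of $T^\omega \cap L^2(\mesinv)$.

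The main obstacle will be the last step: combining the three constraints to pin down the limit. I plan to use the $T^\omega$-decomposition on the event $\{\conf_0(1-\conf_{e_i})=1\}$ into a $\theta_0$-independent part $\varphi(\conf)$, an $\omega(\theta_0)$-linear part $\omega(\theta_0)G(\conf)$, and terms of the form $\com_y\psi_y(\conf)$ for sites $y\neq 0$. The ergodicity of the $e_{i'}$-shift under the product measure $\mesinv$ will force the $\theta_0$-independent components $\varphi$ and $G$ to be almost-sure constants, yielding the coefficients $b_i$ and $a_i$ of the claimed form. The remaining terms $\com_y\psi_y(\conf)$ with $y\neq 0$ will be ruled out by a summability argument: $\tau_{e_{i'}}$-invariance forces $\psi_y$ to be replicated along the $e_{i'}$-line through $y$, and any nonzero coefficient of this type would contribute an $L^2$-divergent sum along that line, contradicting $\mathfrak{R}^{\pm}_i \in L^2(\mesinv)$. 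The case of $\widetilde{R}^+_{n,i}$ is obtained identically up to obvious sign changes.
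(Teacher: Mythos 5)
Your proposal has genuine gaps in all three stages, but the most serious is in the second stage, where the claimed $\tau_{e_{i'}}$-invariance of the weak limit is false, and the argument offered for it does not work.

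Concretely, write $\widetilde{R}^-_{n,i} = \nabla_{0,e_i} S_n$ where $S_n = \frac{1}{(2n)^2}\sum_{x} \tau_{-x}\modphi_n$ and the sum runs over the negative face $x_i=-n-1$. Your telescoping observation is valid for $S_n$: the difference $\tau_{e_{i'}}S_n - S_n$ is indeed two boundary summands. However, $\nabla_{0,e_i}$ does not commute with $\tau_{e_{i'}}$; rather $\tau_{e_{i'}}\nabla_{0,e_i}S_n = \nabla_{e_{i'},e_{i'}+e_i}\tau_{e_{i'}}S_n$, so the gradient edge moves along with the translation. Consequently $\tau_{e_{i'}}\widetilde{R}^-_{n,i} - \widetilde{R}^-_{n,i}$ does not reduce to two boundary summands, and no conclusion of $\tau_{e_{i'}}$-invariance follows. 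Indeed the claimed limit $a_i\com_0(1-\conf_{e_i}) + b_i\conf_0(1-\conf_{e_i})$ is manifestly \emph{not} $\tau_{e_{i'}}$-invariant, which already shows the intermediate claim is false. Since your third stage (ergodicity of the $e_{i'}$-shift plus a summability argument) rests entirely on this invariance, it collapses with it. The mechanism that is actually needed, and that you never introduce, is to show that $\nabla_{y,y+z}\mathfrak{R}_i^-$ vanishes in $L^2(\mesinv)$ for every edge with both ends in $\{x_i>0\}\setminus\{e_i\}$, by estimating $\Egcm[(\nabla_{y,y+z}\widetilde{R}^-_{n,i})^2]$ directly and exploiting that commuting the two gradients hits either the $\mathcal F_n$-measurability (giving zero), the $L^2$ bound on the closed form $\uf$, or the $O(n^2)$ boundary estimate for only $O(1)$ indices; this, not translation invariance, is what localizes the limit to $\confhat_0,\confhat_{e_i}$.

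Your first stage is also underspecified in a way that hides the real difficulty. For $x$ on the negative face, the gradient $\nabla_{x,x+e_i}\modphi_n$ is not a particle transfer interior to $B_n$: it is a creation event at $x+e_i$, since $\modphi_n$ ignores the state at $x\notin B_n$. A spectral-gap bound gives you $\Egcm(\varphi_n^2)\le Kn^4$, but a naive expansion of $\Egcm[(\widetilde{R}^-_{n,i})^2]$ then yields $O(n)$ terms of size $O(n^4)$ divided by $(2n)^4$, which diverges like $n$. The entire point of passing from $\varphi_n$ to the smoothed $\modphi_n = \Egcm(\varphi_{3n}\mid\mathcal F_n)$ is to convert this creation into an average of genuine particle transfers from well-separated interior sites (using the partition $1 = \frac{1-\xi_y}{1-\alpha}+\frac{\xi_y-\alpha}{1-\alpha}$ averaged over a box of size $n$, and then a change of variables to make the Dirichlet form appear, together with a separate handling of the angle mismatch via the $T^\omega$-structure). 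Your phrase ``exploit the conditional-expectation structure'' gestures at the relevant object but does not supply the combination of averaging and change of variables that actually produces the bound $\Egcm[(\nabla_{x,x+e_i}\modphi_n)^2]\le Cn^2$.
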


Thanks to \eqref{decompGFFlemme}, these two lemmas prove Proposition \ref{prop:GCF}.
}

The proof of Lemma \ref{lem:bulkconvergence} is simple, we treat it right now before turning to the proof of Lemma \ref{lem:boundaryconvergence}, which is the main difficulty of this section.
\proofthm{Lemma \ref{lem:bulkconvergence}}{By construction, for any $x, x+e_i\in B_n$,  
\begin{align}
\label{eq:estBn}
\nabla_{x,x+e_i}\modphi_n&=\nabla_{x,x+e_i}\Egcm(\varphi_{3n}\mid{ \mathcal F}_n)\nonumber\\
&=\Egcm(\nabla_{x,x+e_i}\varphi_{3n}\mid{ \mathcal F}_n)\nonumber\\
&=\Egcm(\ind{\rho_{3n}\leq \alpha'}\ufbar^{3n}_{x,x+e_i}\mid{ \mathcal F}_n)\nonumber\\
&=\Egcm(\ind{\rho_{3n}\leq \alpha'}\Egcm(\ufbar_{x,x+e_i}\mid{ \mathcal F}_{3n})\mid{ \mathcal F}_n)\nonumber\\
&=\Egcm(\ind{\rho_{3n}\leq \alpha'}\ufbar_{x,x+e_i}\mid{ \mathcal F}_n).
\end{align} 
By triangular inequality, translation invariance of $\mesinv$, and using $(\sum_{i=1}^k a_i)^2\leq k\sum_{i=1}^ka_i^2$, we can bound the expectation in \eqref{eq:blukL2} by
\begin{equation}
\label{eq:splitbulk}
\frac{1}{2n^2}\sum_{\substack{-n\leq x_i\leq n-1\\
x\in B_n}}\bigg(\Egcm\cro{\big(\Egcm(\ufbar_{x,x+e_i}\mid{ \mathcal F}_n)- \ufbar_{x,x+e_i}\big)^2}+ \Egcm\cro{\ind{\rho_{3n}> \alpha'}\ufbar_{x,x+e_i}^2}\bigg).
\end{equation}
We start by estimating the contribution of the first expectation in the sum. To do so, split it for any positive $ \varepsilon$ as
\begin{equation*}\frac{1}{2n^2}\sum_{x\in B_{n(1-\varepsilon)}}\Egcm\cro{\big(\Egcm(\ufbar_{x,x+e_i}\mid{ \mathcal F}_n)- \ufbar_{x,x+e_i}\big)^2}
+\frac{1}{2n^2}\sum_{\substack{-n\leq x_i\leq n-1\\
x\in B_n\setminus B_{n(1-\varepsilon)}}}\Egcm\cro{\big(\Egcm(\ufbar_{x,x+e_i}\mid{ \mathcal F}_n)- \ufbar_{x,x+e_i}\big)^2}
\end{equation*}
By definition of $\ufbar$, $\tau_x\uf_i=\ufbar_{x,x+e_i}$, thus for any $\varepsilon >0$, the expectations in the first term vanish uniformly in $ x\in B_{n(1-\varepsilon)}$ as $n\to\infty$ by martingale convergence theorem, whereas the second sum can be crudely estimated by Jensen inequality and is less than
\[C\varepsilon\max_{\substack{-n\leq x_i\leq n-1\\
x\in B_n\setminus B_{n(1-\varepsilon)}}}\Egcm\cro{\pa{\Egcm(\ufbar_{x,x+e_i}\mid{ \mathcal F}_n)- \ufbar_{x,x+e_i}}^2}\leq 4C\varepsilon\Egcm(\uf_{i}^2)
\]
which vanishes as $\varepsilon\to 0$ regardless of $n$. 

We now consider the contributions of the second part in \eqref{eq:splitbulk}. That each term vanishes is a direct consequence of the dominated convergence theorem, however since we need a convergence that is uniform in $x$, we give a more detailed and quantitative argument. We can rewrite by translation invariance of $\mesinv$, for any $x, x+e_i\in B_n$, and for any $p<2$
\begin{align*}
\Egcm\cro{\ind{\rho_{3n}> \alpha'}\ufbar_{x,x+e_i}^2}&=\Egcm\cro{\uf_i^2(\tau_{-x}\ind{\rho_{3n}> \alpha'})}\\ 
&\leq \Egcm\cro{\abs{\uf_i^2-|\uf_i|^p}}+\Egcm\cro{|\uf_i|^p(\tau_{-x}\ind{\rho_{3n}> \alpha'})}\\
&\leq \Egcm\cro{\abs{\uf_i^2-|\uf_i|^p}}+\Egcm\pa{\uf_i^2}^{p/2}\mesinv\pa{\rho_{3n}> \alpha'}^{1-p/2}
\end{align*}
by Holder inequality. By a standard large deviation estimate, $\mesinv\pa{\rho_{3n}> \alpha'}=O(e^{-Cn^2})$. We then choose $p=p(n)=2-1/n$, to obtain that second term in the right-hand side above is less than $C(\uf_i)e^{-Cn}$. The function inside the expectation in the first term is pointwise less than $\max(2\uf^2_i,1)$ which is integrable and the first term therefore vanishes by dominated convergence as $p(n)\to 2$. Since the bound above does not depend on $x$, we finally obtain 
\begin{equation}
\lim_{n\to\infty}\frac{1}{2n^2}\sum_{\substack{-n\leq x_i\leq n-1\\
x\in B_n}}\Egcm\cro{\ind{\rho_{3n}> \alpha'}\ufbar_{x,x+e_i}^2}=0
\end{equation}
as wanted, which proves Lemma \ref{lem:bulkconvergence}.
}

\proofthm{Lemma \ref{lem:boundaryconvergence}}{The proof of this Lemma being long, we split it into three steps. 
\begin{itemize}
\item We first control the $L^2(\mesinv)$ norm of the $\widetilde{\varphi}_n$'s.
\item Thanks to this control, we prove that the sequence of boundary terms $\widetilde{R}^{\pm}_{n,i}$ is bounded in $L^2(\mesinv)$.
\item Finally, we prove that any weak limit point $\mathfrak{R}_i^{\pm}$ of the boundary term can only depend on the configuration through $\confhat_0$ and $\confhat_{e_i}$, and that they can be written as a combination of the $\curg^i$ and $\curg^{i,\omega}$.
\end{itemize}
The scheme follows closely that of Theorem 4.14 in Appendix 3 of \cite{KLB1999} however adjustments are needed in the second and third step to take into account the presence of the angles.

\step{First}{Control on the $L^2$ norm of the $\varphi_n$'s. }

We proved in Section  \ref{subsec:spectralgap} that, even though we do not have a general spectral gap of order $n^{-2}$, 
we could circumvent this difficulty by staying in a  convenient class of functions linear in the angles and by cutting off 
the large densities. This spectral gap estimate is needed to control the norm of the $\varphi_n'$s. 
This is the reason for limiting the result to closed forms in $\bff{T}^{\omega}$ defined in \eqref{frakt}, 
and for introducing the indicator functions $\ind{\rho_n\leq \alpha'}$. We state this step as a separate lemma for the sake of clarity.
\begin{lemm}\label{lem:normephiu}
There exists a constant $K:=K(\param,  \alpha',\uf)$ such that for any $n\in \N$, 
\[\Egcm(\varphi_n^2)\leq Kn^4,\]
where $\varphi_n$ was introduced in \eqref{def:phin}.
 \end{lemm}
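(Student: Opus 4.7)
\medskip

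\textbf{Proof plan.} The natural strategy is to project onto canonical states, apply the spectral gap estimate of Proposition \ref{prop:spectralgap}, and then control the resulting Dirichlet form by the $L^2(\mesinv)$-norm of $\uf$. The point of introducing the cutoff $\ind{\rho_n\leq\alpha'}$ in the definition of $\varphi_n$ is precisely so that the spectral gap of Proposition \ref{prop:spectralgap} is usable with a constant $C(\alpha')$ that does not depend on $n$.

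Concretely, I would write
\[
\Egcm(\varphi_n^2)\egal \int_{\K\in\Kset_n,\, K\leq \alpha'|B_n|}\!\!\Ecmnk(h_n^2)\,d\mesinv(\confhat\in\Sigma_n^{\K}),
\]
and apply Proposition \ref{prop:spectralgap} on each $\Sigma_n^{\K}$ appearing in the integral. This is legitimate because $h_n\in\mathcal{C}_n\cap T^{\omega}$ (by the last claim in Proposition \ref{prop:GCFn}), $\E_{n,\K}(h_n)=0$ for every $\K\in\Kset_n$, and the integration is restricted to canonical states with $K\leq\alpha'|B_n|<|B_n|$. Therefore, for some constant $C(\alpha')$,
\[
\Egcm(\varphi_n^2)\;\leq\; C(\alpha')\,n^2\!\int_{K\leq \alpha'|B_n|}\!\!\Dnk(h_n)\,d\mesinv(\confhat\in\Sigma_n^{\K})\egal \frac{C(\alpha')n^2}{2}\,\Egcm\!\left(\ind{\rho_n\leq \alpha'}\!\!\sum_{\substack{x,x+z\in B_n\\|z|=1}}\!\!\eta_x(1-\eta_{x+z})(\nabla_{x,x+z}h_n)^2\right).
\]

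Next I would control each term in the sum. By construction $\nabla_{x,x+z}h_n=\ufbar^n_{x,x+z}=\Egcm(\ufbar_{x,x+z}\mid\mathcal{F}_n)$, so by Jensen's inequality $(\nabla_{x,x+z}h_n)^2\leq \Egcm(\ufbar_{x,x+z}^2\mid\mathcal{F}_n)$. Since $\ufbar_{x,x+z}(\confhat)=\conf_x(1-\conf_{x+z})\ufbar_{x,x+z}(\confhat)$ and $\ufbar_{x,x+e_i}=\tau_x\uf_i$ (see \eqref{eq:Defubar}), the translation invariance of $\mesinv$ gives $\Egcm(\ufbar_{x,x+e_i}^2)=\Egcm(\uf_i^2)\leq\norm{\uf}_{\param,2}^2$, and similarly for the reverse edges. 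Dropping the indicator $\ind{\rho_n\leq\alpha'}$ and using the tower property yields
\[
\Egcm\!\left(\ind{\rho_n\leq \alpha'}\eta_x(1-\eta_{x+z})(\nabla_{x,x+z}h_n)^2\right)\;\leq\;\Egcm\!\left(\ufbar_{x,x+z}^2\right)\;\leq\;\norm{\uf}_{\param,2}^2.
\]

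Finally, the number of oriented nearest-neighbor pairs $(x,x+z)$ with both endpoints in $B_n$ is of order $|B_n|=O(n^2)$, hence
\[
\Egcm(\varphi_n^2)\;\leq\;\frac{C(\alpha')n^2}{2}\cdot O(n^2)\cdot\norm{\uf}_{\param,2}^2\;\leq\; K(\param,\alpha',\uf)\,n^4,
\]
which is the desired estimate. The only delicate point is justifying the use of the spectral gap: one must verify that $h_n$ genuinely lies in $\mathcal{C}_n\cap T^{\omega}$ with zero canonical mean (guaranteed by the last sentence of Proposition \ref{prop:GCFn}) and that the support of the canonical-state integration is restricted to $K\leq\alpha'|B_n|$ (guaranteed by the cutoff indicator defining $\varphi_n$). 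Everything else is a routine application of Jensen's inequality and counting of edges.
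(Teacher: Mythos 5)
Your proof is correct and takes essentially the same route as the paper: apply the spectral gap estimate (using the cutoff $\ind{\rho_n\leq\alpha'}$ to make Proposition \ref{prop:spectralgap} applicable with a constant $C(\alpha')$ independent of $n$), then control the Dirichlet form by $\norm{\uf}_{\param,2}^2$ via Jensen's inequality and an edge count of order $n^2$. The only difference is that you make the projection onto canonical states explicit before invoking the spectral gap, whereas the paper states the resulting grand-canonical inequality $\Egcm(\varphi_n^2)\leq C(\param,\alpha')n^2\mathscr{D}_n(\varphi_n)$ directly and leaves that standard disintegration step implicit; the remaining Jensen-plus-counting argument is identical.
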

\proofthm{Lemma \ref{lem:normephiu}}{For any $\K\in \Kset_n$, we proved in Proposition \ref{prop:GCFn} that we could assume 
$\E_{n,\K}(\varphi_n)=0$, and thanks to the indicator function $\ind{\rho_n\leq \alpha'}$, $\varphi_n$ vanishes when the density in 
$B_n$ is larger than $\alpha'$, therefore the spectral gap estimate given in Proposition  \ref{prop:spectralgap}, {since $\varphi_n\in T_n^\omega$,} yields
\[\Egcm(\varphi_n^2)=\Egcm(\varphi_n^2\ind{\rho_n\leq \alpha'})\leq C(\param, \alpha')n^2{\mathscr D}_n(\varphi_n),\]
where ${\mathscr D}_n(f)=-\Egcm(f\gene_n f)$ is the Dirichlet  form relative to the symmetric exclusion process restricted to $B_n$,
\[{\mathscr D}_n(\varphi_n)=\frac{1}{2}\sum_{i=1}^2\sum_{\delta\in \{-1,1\}}\sum_{x, x+\delta e_i\in B_n}\Egcm\cro{(\nabla_{x,x+\delta e_i}\varphi_n)^2}.\]

By construction (cf. \eqref{def:phin}), $\nabla_{x,x+e_i}\varphi_n=\ind{\rho_n\leq \alpha'}\ufbar^n_{x,x+e_i}$ and $\nabla_{x+e_i,x}\varphi_n=-\ind{\rho_n\leq \alpha'}\ufbar^n_{x,x+e_i}(\confhat^{x,x+e_i})$. Thus, since $\uf$ is in $L^2(\mesinv)$, and since $\mesinv$ is invariant under the change of variables $\confhat\mapsto\confhat^{x,x+e_i}$, Jensen's inequality yields
\begin{equation}\label{Dirichletbound}{\mathscr D}_n(\varphi_n)\leq \sum_{i=1}^2\sum_{x, x+e_i\in B_n}\Egcm\cro{(\ufbar^n_{x,x+e_i})^2}\leq \sum_{i=1}^2\sum_{x, x+e_i\in B_n}\Egcm\cro{(\uf_{i})^2}\leq  C'(\uf) n^2.\end{equation}
We obtain as wanted, thanks to the spectral gap estimate above,  
\begin{equation}\label{l2bound}\Egcm(\varphi_n^2)\leq K n^4,\end{equation}
where $K=CC'$ depends only on $ \param$, $\alpha'$, and $\uf$.
}

\step{Second}{Control on the $L^2$ norm of the boundary terms. }

We now prove thanks to Lemma \ref{lem:normephiu} that the boundary terms are bounded in $L^2(\mesinv)$.

\begin{lemm}
\label{lem:L2boundaryterm}
There exists a constant $C=C(\param, \alpha', \uf)$ such that for any $n$,
\begin{equation}
\label{eq:boundboundary}
\Egcm\cro{\big(\widetilde{R}^-_{n,i}\big)^2}\leq C,
\end{equation}
The statement remains true if $\widetilde{R}^-_{n,i}$ is replaced by $\widetilde{R}^+_{n,i}$.
\end{lemm}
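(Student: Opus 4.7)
\textbf{Plan of proof for Lemma \ref{lem:L2boundaryterm}.} The strategy is to represent each boundary gradient $\nabla_{x,x+e_i}\widetilde\varphi_n$ as a path integral of the closed form $\ufbar$ through the annulus $B_{3n}\setminus B_n$, and then sum using the triangle inequality in $L^2(\mu_\param)$. The key input is the $L^2$-bound $\Egcm(\ufbar_a^2)\leq \|\uf\|_{\param,2}^2$ which is uniform in the edge $a$, inherited from the definition of $\ufbar$ and the finiteness of $\|\uf\|_{\param,2}$.

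\emph{Step 1 (Path-integral representation of boundary gradients).} Fix a boundary site $x$ with $x_i=-n-1$ and set $y:=x+e_i\in B_n$. Because $\widetilde\varphi_n=\Egcm(\varphi_{3n}|\mathcal F_n)$ and the $\sigma$-algebra $\mathcal F_n$ is \emph{not} invariant under the swap $\confhat\mapsto\confhat^{x,y}$ (which crosses $\partial B_n$), one cannot simply push the gradient inside the conditional expectation. Instead, choose a straight ``reference path'' $\gamma_x$ inside $B_{3n}$, of length $p_x\leq 2n$, starting from $y$ and going in the direction $-e_i$ until reaching a reference site $z_x$ on the face $\{v\in B_{3n}: v_i=-3n+1\}$. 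Using (i) the product structure of $\mu_\param$ restricted to $B_{3n}\setminus B_n$, (ii) a change of variable that transports a particle between $z_x$ and $y$ along $\gamma_x$, and (iii) the closed-form identity $\nabla_{u,v}\varphi_{3n}=\ind{\rho_{3n}\leq\alpha'}\ufbar_{u,v}$ valid for every edge $(u,v)\subset B_{3n}$, one obtains the representation
\[\nabla_{x,y}\widetilde\varphi_n=\Egcm\!\left(\ind{\rho_{3n}\leq\alpha'}\sum_{k=1}^{p_x}\epsilon_k\ufbar_{a^x_k}\,\Big|\,\mathcal F_n\right),\]
where $a^x_1,\dots,a^x_{p_x}$ are the edges of $\gamma_x$ (all contained in $B_{3n}$) and $\epsilon_k\in\{\pm 1\}$ keeps track of the orientation of the path traversal.

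\emph{Step 2 ($L^2$-bound on each boundary gradient).} Applying Cauchy-Schwarz to the $p_x$ terms in the path sum and using the uniform bound $\Egcm(\ufbar_a^2)\leq \|\uf\|_{\param,2}^2$,
\[\Egcm\!\left[\Big(\sum_{k=1}^{p_x}\epsilon_k\ufbar_{a^x_k}\Big)^2\right]\leq p_x\sum_{k=1}^{p_x}\Egcm\!\left[\ufbar_{a^x_k}^2\right]\leq p_x^{2}\|\uf\|_{\param,2}^2.\]
Jensen's inequality for conditional expectations then yields $\|\nabla_{x,y}\widetilde\varphi_n\|_2\leq p_x\|\uf\|_{\param,2}\leq 2n\|\uf\|_{\param,2}$.

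\emph{Step 3 (Triangle inequality in $L^2$).} By translation invariance of $\mu_\param$ and the triangle inequality,
\[\|\widetilde R^-_{n,i}\|_2\leq \frac{1}{(2n)^2}\sum_{x:\,x_i=-n-1}\|\tau_{-x}\nabla_{x,y}\widetilde\varphi_n\|_2\leq \frac{(2n+1)\cdot 2n\|\uf\|_{\param,2}}{(2n)^2}\leq C\|\uf\|_{\param,2},\]
which is the claimed uniform bound. The argument for $\widetilde R^+_{n,i}$ is identical, with paths going in the $+e_i$ direction.

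\emph{Main obstacle.} The technical heart of the proof is Step 1, namely the justification of the path-integral representation for boundary gradients. The change of variable transporting a particle between $z_x$ and $y$ has to be compatible with the density cutoff $\ind{\rho_{3n}\leq\alpha'}$; fortunately this cutoff depends only on the total number of particles in $B_{3n}$, which is preserved under any swap, so the indicator can be carried through the change of variable. A secondary concern is the cutoff producing boundary effects at the level $\rho_{3n}=\alpha'$, but these can be absorbed into the closed-form structure since $\varphi_{3n}$ is identically $0$ above the cutoff, and so its gradient vanishes there as well. Once Step 1 is established, Steps 2 and 3 are straightforward.
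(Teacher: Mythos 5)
Your Step~1 has a genuine gap. The claimed identity
$\nabla_{x,x+e_i}\modphi_n=\Egcm\big(\ind{\rho_{3n}\leq\alpha'}\sum_{k}\epsilon_k\ufbar_{a^x_k}\mid\F_n\big)$
cannot hold as stated: the left-hand side depends on $\confhat_x$ (through $\conf_x$ and the incoming angle $\theta_x$), so it is $\sigma(\confhat_x)\vee\F_n$-measurable, whereas the right-hand side is $\F_n$-measurable. More substantively, even after repairing measurability, a single deterministic reference site $z_x$ does not yield a valid path representation of the gradient. The gradient compares $\modphi_n=\Egcm(\varphi_{3n}\mid\F_n)$ at two interior configurations differing by one particle at $x+e_i$, and to traverse this in the particle-jump graph $\mathcal G$ a particle must be fetched from the annulus $B_{3n}\setminus B_n$; but $z_x$ is occupied only with probability $\alpha<1$ under $\mesinv$. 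Writing $1=\frac{1-\confe_{z_x}}{1-\alpha}+\frac{\confe_{z_x}-\alpha}{1-\alpha}$ and changing variables to transport the particle from $z_x$ to $x+e_i$ produces, in addition to the path-sum piece you bound in Step~2, a residual of the form $\Egcm\big(\tfrac{\confe_{z_x}-\alpha}{\alpha}\,\varphi_{3n}\mid\F_n\big)$ (and a symmetric one with $\tfrac{\confe_{z_x}-\alpha}{1-\alpha}$). By Jensen and Lemma~\ref{lem:normephiu}, this residual has $L^2(\mesinv)$-norm of order $n^2$, hence contributes $O(n^4)$ to $\Egcm[(\nabla_{x,x+e_i}\modphi_n)^2]$ --- two orders in $n$ larger than what \eqref{eq:boundboundary} requires.

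This is precisely why the paper's Lemma~\ref{lem:L2boundphi} averages the reference site over a whole box $B^x_{n-1,i}$ of $(2n-1)^2$ sites inside the annulus: the averaged weight $\tfrac{1}{(2n-1)^2}\sum_{y'}\tfrac{\confe_{y'}-\alpha}{\alpha}$ has variance of order $n^{-2}$, cancelling two powers of $n$ and restoring the per-gradient bound $O(n^2)$. A secondary obstacle you do not address is the angle discrepancy: your path transports the particle with whatever angle it carries at $z_x$, while the actual gradient $\nabla_{x,x+e_i}\modphi_n$ inserts a particle with the angle $\theta_x$, and the edges of $\mathcal G$ are particle jumps only, not angle updates, so the closed form $\ufbar$ cannot represent this step. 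The paper controls this separately using $\varphi_n\in T^\omega$ (equation \eqref{GFFchangement} inside Lemma~\ref{lem:L2boundphi}). Steps~2 and~3 of your proposal are fine once an $O(n^2)$ bound per boundary gradient is established, but Step~1 as written does not deliver it.
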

\proofthm{Lemma \ref{lem:L2boundaryterm}}{We will treat in full detail only the case of the negative boundary term 
\[\widetilde{R}^-_{n,i}=\frac{1}{(2n)^2}\sum_{\substack{x_i=-n-1\\
x\in B_n(-e_i)}}\tau_{-x}\nabla_{x,x+e_i}\modphi_n,\]
analogous arguments yield the bound for $\widetilde{R}^+_{n,i}$.
Using $(\sum_{i=1}^k a_i)^2\leq k\sum_{i=1}^k a_i^2$, we obtain
\[\Egcm\cro{\big(\widetilde{R}^-_{n,i}\big)^2}\leq\frac{(2n+1)}{(2n)^{4}}\sum_{\substack{x_i=-n-1\\
x\in B_n(-e_i)}}\Egcm\cro{(\tau_{-x}\nabla_{x,x+e_i}\modphi_n)^2}\leq Cn^{-3}\sum_{\substack{x_i=-n-1\\
x\in B_n(-e_i)}}\Egcm\cro{(\nabla_{x,x+e_i}\modphi_n)^2},\]
for some universal constant $C$, by translation invariance of $\mesinv$. For $x$ in the negative boundary, under $\mesinv$, we can rewrite
\begin{equation}\label{gradphicf}\nabla_{x,x+e_i}\modphi_n(\confhat)=\conf_x(1-{\conf_{x+e_i}})\pa{\modphi_n(\confhat+\delta^{\theta}_{x+e_i})-\modphi_n(\confhat)},\end{equation}
where $\confhat+\delta^{\theta}_{x+e_i}$ is the configuration equal to $\confhat$ everywhere except in $x+e_i$,  where the site contains a particle with angle $\theta$ 
distributed as $\param/\alpha$ independently of $\confhat$. Note that in the expectation $\Egcm$, we will also take the expectation w.r.t. $\theta$, but still denote it  
$\Egcm$ not to burden the notations. Since $\varphi_n$ is independent of $\confhat_{x}$ for any $x$ in the negative boundary term, 
\begin{equation}\label{eqbound}\Egcm\cro{\big(\widetilde{R}^-_{n,i}\big)^2}\leq \alpha Cn^{-3}\sum_{\substack{x_i=-n-1\\
x\in B_n(-e_i)}}\Egcm\Big[(1-{\conf_{x+e_i}})\big(\modphi_n(\confhat+\delta^{\theta}_{x+e_i})-\modphi_n(\confhat)\big)^2\Big],\end{equation}
where the expectation w.r.t $\theta$ is also taken, under the distribution $\param/\alpha$. Recall that 
$\modphi_n=\Egcm(\left.\varphi_{3n}\;\right\vert\;{\mathcal F}_n)$, since the number of terms in the sum is $O(n)$, Lemma \ref{lem:L2boundaryterm} is a consequence of Lemma \ref{lem:L2boundphi} below.
}
\begin{lemm}
\label{lem:L2boundphi}
There exists a constant  $C=C(\param, \alpha', \uf)$ such that for any $x\in B_n(-e_i)$ such that $x_i=-n-1$, 
\[\Egcm\cro{(1-{\conf_{x+e_i}})\pa{\Egcm(\left.\varphi_{3n}\right\vert{\mathcal F}_n)(\confhat+\delta^{\theta}_{x+e_i})-\Egcm(\left.\varphi_{3n}\right\vert{\mathcal F}_n)(\confhat)}^2}\leq C n^2,\]
where the expectation above is taken w.r.t. $\mesinv$ on $B_{3n}$ and w.r.t. $\theta$ distributed under $\param/\alpha$.

\end{lemm}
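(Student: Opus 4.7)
The plan is to transfer the bound from $\modphi_n$ to $\varphi_{3n}$ via Jensen's inequality, and then use a transport argument on the larger box $B_{3n}$: the creation gradient at $x+e_i$ is rewritten as a telescoping sum of genuine exclusion gradients of $\varphi_{3n}$ along a path of length of order $n$ reaching a site well inside $B_{3n}\setminus B_n$, after which Cauchy--Schwarz combined with the $L^2$-control of $\ufbar^{3n}$ inherited from $\uf$ yields the $O(n^2)$ bound.

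First I reduce the problem to $\varphi_{3n}$. Since $\modphi_n=\Egcm(\varphi_{3n}\mid \mathcal F_n)$ and the creation $\confhat\mapsto \confhat+\delta^\theta_{x+e_i}$ only modifies sites inside $B_n$, Jensen's inequality applied to the conditional expectation yields, for every configuration $\confhat$ with $\conf_{x+e_i}=0$ and every $\theta$,
\[\pa{\modphi_n(\confhat+\delta^\theta_{x+e_i})-\modphi_n(\confhat)}^2\leq \Egcm\cro{\pa{\varphi_{3n}(\confhat+\delta^\theta_{x+e_i})-\varphi_{3n}(\confhat)}^2\mid \mathcal F_n}.\]
After integration against $\mesinv$, it is therefore enough to establish
\[\Egcm\cro{(1-\conf_{x+e_i})\pa{\varphi_{3n}(\confhat+\delta^\theta_{x+e_i})-\varphi_{3n}(\confhat)}^2}\leq Cn^2,\]
where $\confhat$ now ranges over configurations on the larger box $B_{3n}$. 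The advantage is that the gradients of $\varphi_{3n}$ are directly given by the germ $\ufbar^{3n}$: from the construction \eqref{eq:estBn}, $\nabla_{y,y+e_j}\varphi_{3n}=\ind{\rho_{3n}\leq \alpha'}\ufbar^{3n}_{y,y+e_j}$ for every edge $(y,y+e_j)$ inside $B_{3n}$, and each $\Egcm[(\ufbar^{3n}_{y,y+e_j})^2]$ is bounded uniformly in $y$ by $\Egcm(\uf_1^2+\uf_2^2)$ via a conditional Jensen inequality.

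Second, I fix an auxiliary site $w\in B_{3n}\setminus B_n$ at lattice distance $p$ of order $n$ from $x+e_i$ (for instance $w=x-(n+1)e_i$ gives $p=n+2$), together with a nearest-neighbor path $\gamma=(y_0=x+e_i,y_1,\dots,y_p=w)$ contained in $B_{3n}$. On the event $A_\gamma=\{\conf_{y_1}=\cdots=\conf_{y_{p-1}}=\conf_w=0\}$, the configuration $\confhat+\delta^\theta_{x+e_i}$ is connected in the graph $\mathcal G_{3n}$ to $\confhat+\delta^\theta_w$ by the sequence of $p$ allowed jumps that transport the $\theta$-particle along $\gamma$. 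Since $\boldsymbol\nabla\varphi_{3n}$ is a closed form, this produces the telescopic identity
\[\varphi_{3n}(\confhat+\delta^\theta_{x+e_i})-\varphi_{3n}(\confhat+\delta^\theta_w)=\sum_{k=0}^{p-1}\nabla_{y_k,y_{k+1}}\varphi_{3n}(\confhat^{(k)}),\]
with $\confhat^{(k)}$ the intermediate configurations obtained by performing the first $k$ transport jumps. The remaining piece, $\varphi_{3n}(\confhat+\delta^\theta_w)-\varphi_{3n}(\confhat)$, is the creation of a particle at a site $w$ lying outside $B_n$; using the product structure of $\mu_\param$ on $B_{3n}\setminus B_n$ and the identity $\int \Egcm[\varphi_{3n}(\confhat+\delta^\theta_w)\mid \conf_w=0]\param(d\theta)/\alpha=\Egcm[\varphi_{3n}(\confhat)\mid \conf_w=1]$, this creation term can be absorbed into $\modphi_n$ at the price of controllable boundary corrections.

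Finally, the discrete Cauchy--Schwarz inequality $(\sum_{k=0}^{p-1}a_k)^2\leq p\sum_{k=0}^{p-1}a_k^2$ applied to the telescopic sum, combined with translation-invariance of $\mesinv$, gives
\[\Egcm\cro{\ind{A_\gamma}\pa{\sum_{k=0}^{p-1}\nabla_{y_k,y_{k+1}}\varphi_{3n}(\confhat^{(k)})}^2}\leq p\sum_{k=0}^{p-1}\Egcm\cro{\pa{\ufbar^{3n}_{y_k,y_{k+1}}}^2}\leq p^2\cdot \Egcm(\uf_1^2+\uf_2^2)=O(n^2),\]
which is precisely the target bound. The main obstacle is the complementary event $A_\gamma^c$ on which the path is blocked and the transport identity fails: I expect to handle this either by averaging over a whole family of parallel paths (the density of empty sites being bounded below by $1-\alpha'>0$ on the relevant event guarantees that enough paths remain unblocked), or by combining the crude $L^2$ bound $\Egcm(\varphi_{3n}^2)\leq K n^4$ from Lemma \ref{lem:normephiu} with a large-deviation estimate on $\ind{A_\gamma^c}$ to reduce the residual contribution to a term of order $o(n^2)$, which is absorbed in the same constant $C$.
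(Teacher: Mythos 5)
The initial Jensen reduction is the step that breaks the argument. It is algebraically valid — since the creation at $x+e_i$ only touches sites inside $B_n$, one does have
\[\pa{\modphi_n(\confhat+\delta^\theta_{x+e_i})-\modphi_n(\confhat)}^2\leq \Egcm\cro{\pa{\varphi_{3n}(\confhat+\delta^\theta_{x+e_i})-\varphi_{3n}(\confhat)}^2\;\Big\vert\;\mathcal F_n}\]
— but it discards exactly the structure the lemma needs. The conditional expectation $\Egcm(\cdot\mid\mathcal F_n)$ integrates over the outer configuration $\confehat$ on $B_{3n}\setminus B_n$, and the paper's proof leans on this in two crucial ways: first, it allows a change of variables on $\confehat$ (inserting $(1-\alpha)^{-1}[(1-\confe_y)+(\confe_y-\alpha)]$ and, on the event $\confe_y=1$ after the flip, changing $\confehat\to\confehat-\delta_y$) that converts the illegal creation at $x+e_i$ into a legal particle transport from $y$ to $x+e_i$ plus a remainder; second, the remainders involve $(\confe_y-\alpha)$, which are centered, $\mathcal F_n$-independent variables whose spatial average over the $(2n-1)^2$ sites $y$ has variance $O(n^{-2})$ — this is what offsets the $\Egcm(\varphi_{3n}^2)=O(n^4)$ bound and produces $O(n^2)$. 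Once you push the square inside via Jensen, the $(\confe_y-\alpha)$ variables become correlated with the $\varphi_{3n}^2$ factor and with each other through $\varphi_{3n}$, so the $1/n^2$ variance gain is no longer accessible. The target you are left with after Jensen, $\Egcm\cro{(1-\conf_{x+e_i})(\varphi_{3n}(\confhat+\delta^\theta_{x+e_i})-\varphi_{3n}(\confhat))^2}\leq Cn^2$, is not something the paper establishes and is not obviously true.

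The subsequent path-transport strategy has an independent, fatal flaw. You want to write the creation gradient at $x+e_i$ as a telescoping sum of exclusion gradients of $\varphi_{3n}$ along a path $\gamma$ of length $p\sim n$ reaching a site $w$ outside $B_n$, and this identity holds only on the event $A_\gamma$ that the entire path is empty. Under a product Bernoulli($\alpha$) measure, $\mu_\param(A_\gamma)\sim(1-\alpha)^n$ is exponentially small, so $\mu_\param(A_\gamma^c)\to 1$; your large-deviation remark has the roles reversed. On $A_\gamma^c$ you only have the crude bound $\Egcm(\varphi_{3n}^2)\leq Kn^4$ from Lemma \ref{lem:normephiu}, which contributes $O(n^4)$ to the expectation, not $o(n^2)$. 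Averaging over $O(\mathrm{poly}(n))$ parallel paths cannot help, since $\mathrm{poly}(n)\cdot(1-\alpha)^n$ is still exponentially small. This is very different from the paper's transport term: there one moves an \emph{existing} particle from the empty-site position $y$ to $x+e_i$ using the standard moving-particle (two-block) decomposition of a long jump into nearest-neighbor gradients, which does \emph{not} require the intermediate sites to be empty. The distinction between ``create a new particle and push it along an empty corridor'' (your version) and ``decompose a long swap into nearest-neighbor swaps'' (the paper's version) is precisely what makes one approach exponentially costly and the other polynomial.
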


\proofthm{Lemma \ref{lem:L2boundphi}}{Let us fix $x$, such that $x_i=-n-1$ in the negative boundary. To make the Dirichlet form appear, we are going to force an occupied site  in a neighborhood of $x$, and  transform the particle creation into a particle transfer. This is the reason for smoothing out $\varphi_n$ and taking $\modphi_n$ instead. For the sake of clarity, any configuration $\confhat$ on $B_{3n}$ will be considered as the pair of an interior configuration $\confihat$ on $B_n$ (which is hence ${\mathcal F}_n$-measurable), and an exterior configuration $\confehat$ on $ B_{3n}\setminus B_n$. 

For any $y\in B_{3n}\setminus B_n$, we rewrite using the identity $(1-\alpha)^{-1}[1-{\confe}+{\confe}-\alpha]=1$
\[\Egcm(\left.\varphi_{3n}\right\vert{\mathcal F}_n)\pa{\confihat+\delta^{\theta}_{x+e_i}}=\frac{1}{1-\alpha}\Big(\Egcm\big(\left.(1-{\confe_y})\varphi_{3n}\;\right\vert\;{\mathcal F}_n\big)+\Egcm\big(\left.({\confe_y}-\alpha)\varphi_{3n}\;\right\vert\;{\mathcal F}_n\big)\Big)\pa{\confihat+\delta^{\theta}_{x+e_i}},\]
where $\confe_y$ is the occupation variable in $y$, and is either $1$ or $0$ depending on whether the site $y$ is empty or not.

The first part of this decomposition will be controlled by the Dirichlet form, as the existence of an empty site in $y$ (thanks to $1-\confe_y$) will allow us to reconstruct a particle transfer from $y$ to $x+e_i$. The second term will be estimated after a spatial averaging over a large microscopic box. This box must be measurable with respect to the sites in $B_{3n}\setminus B_n$, in order to be able to introduce it inside the expectation. For any $x$ in the negative boundary, consider the set \[B_{n-1,i}^x=x-ne_i+B_{n-1},\]
which is the box of radius $n-1$ centered in $x-ne_i$. Remark that the cardinal of $B_{n-1,i}^x$ is $(2n-1)^2$, so that averaging the previous identity over the $y$'s in  $B_{n-1,i}^x$ yields
\begin{multline}
\label{decompocond}\Egcm(\left.\varphi_{3n}\right\vert{\mathcal F}_n)\pa{\confihat+\delta^{\theta}_{x+e_i}}\\
=\frac{1}{(2n-1)^2}\sum_{y\in B_{n-1,i}^x}\pa{\Egcm\pa{\left.\frac{1-{\confe_y}}{1-\alpha}\varphi_{3n}\;\right\vert\;{\mathcal F}_n}+\Egcm\pa{\left.\frac{{\confe_y}-\alpha}{1-\alpha}\varphi_{3n}\;\right\vert\;{\mathcal F}_n}}\pa{\confihat+\delta^{\theta}_{x+e_i}}. 
\end{multline}

Let us consider the first term of the previous equality. For any $y$ in the boundary, thanks to the factor $1-\confe_y$ the site $y$ is empty. Performing the change of variable $\confehat\rightarrow \confehat-\delta_y$ where $\confehat-\delta_y$ is the configuration identical to $\confehat$ everywhere except in $y$ where the site is now empty, we obtain 
\begin{align*}\Egcm\Big(&\left.\frac{1-{\confe_y}}{1-\alpha}\;\varphi_{3n}\;\right\vert{\mathcal F}_n\Big)\pa{\confihat+\delta^{\theta}_{x+e_i}}\\
=&\Egcm\pa{\left. \frac{\confe_y}{\alpha}\;\varphi_{3n}\pa{\confehat-\delta_y}\;\right\vert\;{\mathcal F}_n}\pa{\confihat+\delta^{\theta}_{x+e_i}}\\
=&\Egcm\pa{\left.\frac{\confe_y}{\alpha}\cro{\varphi_{3n}\pa{\confihat+\delta^{\theta}_{x+e_i},\confehat-\delta_y}-\varphi_{3n}\pa{\confihat, \confehat}}\;\right\vert{\mathcal F}_n}+\Egcm\pa{\left.\frac{{\confe_y}}{\alpha}\;\varphi_{3n}\pa{\confihat,\confehat}\;\right\vert\;{\mathcal F}_n}.
\end{align*}
We deduce from the last identity and equation \eqref{decompocond} that we can write $\Egcm(\left.\varphi_{3n}\right\vert{\mathcal F}_n)\pa{\confihat+\delta^{\theta}_{x+e_i}}$ as 
\begin{multline*}
\frac{1}{(2n-1)^2}\sum_{y\in B_{n-1,i}^x}\Bigg[\Egcm\pa{\left.\frac{{\confe_y}}{\alpha}\cro{\varphi_{3n}\pa{\confihat+\delta^{\theta}_{x+e_i},\confehat-\delta_y}-\varphi_{3n}\pa{\confihat, \confehat}}\;\right\vert\;{\mathcal F}_n}\\
+\Egcm\pa{\left.\;\frac{\xi_y}{\am}\varphi_{3n}\pa{\confihat,\confehat}\;\right\vert\;{\mathcal F}_n}+\Egcm\pa{\left.\frac{{\confe_y}-\alpha}{1-\alpha}\varphi_{3n}\pa{\confihat+\delta^{\theta}_{x+e_i},\confehat}\;\right\vert\;{\mathcal F}_n}\Bigg],
\end{multline*}
and therefore
\begin{multline}
\label{secondeligne}
\Egcm(\left.\varphi_{3n}\right\vert{\mathcal F}_n)\pa{\confihat+\delta^{\theta}_{x+e_i}}-\Egcm(\left.\varphi_{3n}\right\vert{\mathcal F}_n)(\confihat)\\
=\frac{1}{(2n-1)^2}\sum_{y\in B_{n-1,i}^x}\Bigg[\Egcm\pa{\left.\frac{{\confe_y}}{\alpha}\cro{\varphi_{3n}\pa{\confihat+\delta^{\theta}_{x+e_i},\confehat-\delta_y}-\varphi_{3n}\pa{\confihat, \confehat}}\;\right\vert\;{\mathcal F}_n}\\
+\Egcm\pa{\left.\frac{{\confe_y}-\alpha}{\alpha}\;\varphi_{3n}\pa{\confihat,\confehat}\;\right\vert\;{\mathcal F}_n}+\Egcm\pa{\left.\frac{{\confe_y}-\alpha}{1-\alpha}\varphi_{3n}\pa{\confihat+\delta^{\theta}_{x+e_i},\confehat}\;\right\vert\;{\mathcal F}_n}\Bigg].
\end{multline}
Using $(\sum_{i=1}^k a_i)^2\leq k\sum_{i=1}^k a_i^2$ as well as Jensen's inequality yields 
\begin{align}\label{majospec}
\Egcm\Big((1-{\conf_{x+e_i}})&\big(\Egcm(\left.\varphi_{3n}\right\vert{\mathcal F}_n)(\confhat+\delta^{\theta}_{x+e_i})-\Egcm(\left.\varphi_{3n}\right\vert{\mathcal F}_n)(\confhat)\big)^2\Big)\nonumber\\
\leq&\frac{3}{(2n-1)^2}\cro{\sum_{y\in B_{n-1,i}^x}\Egcm\pa{\frac{{\conf_y}(1-\conf_{x+e_i})}{\alpha^2}\cro{\varphi_{3n}\pa{\confhat+\delta^{\theta}_{x+e_i}-\delta_y}-\varphi_{3n}\pa{\confhat}}^2}}\nonumber\\
&+3\Egcm\pa{\Egcm\pa{\left.\pa{\frac{1}{(2n-1)^2}\sum_{y\in B_{n-1,i}^x}\frac{{\conf_y}-\alpha}{\alpha}}\;\varphi_{3n}\;\right\vert\;{\mathcal F}_n }^2}\nonumber\\
&+3\Egcm\pa{\Egcm\pa{\left.\pa{\frac{(1-{\conf_{x+e_i}})}{(2n-1)^2}\sum_{y\in B_{n-1,i}^x}\frac{{\conf_y}-\alpha}{1-\alpha}}\varphi_{3n}\pa{\confhat+\delta^{\theta}_{x+e_i}}\;\right\vert\;{\mathcal F}_n}^2}.
\end{align}
From now on, the strategy to prove Lemma \ref{lem:L2boundphi} is straightforward. We are going to prove that each of the three terms in the right-hand side above is of order $n^2$~:
\begin{itemize}
\item The second and third line above are controlled thanks to the spatial averaging by the $L^2$ norm of the $\varphi_n$'s.
\item In the first line, the angle of the particle  deleted in $y$ is not necessarily the same as the one of the particle created in $x+e_i$, because the angle $\theta$ above is distributed according to $\param/\alpha$ and independent of the configuration. However, since the $\varphi_n$ are in ${T_n^{\omega}}$ their dependency in the angles can be sharply estimated. Once this difficulty is dealt with, the remaining quantity will be controlled by the Dirichlet form.
\end{itemize}
We first treat the first step above. Thanks to the Cauchy-Schwarz inequality, we can estimate the second line 
\begin{multline*}
\Egcm\pa{\Egcm\pa{\left.\pa{\frac{1}{(2n-1)^2}\sum_{y\in B_{n-1,i}^x}\frac{{\conf_y}-\alpha}{\alpha}}\;\varphi_{3n}\;\right\vert\;{\mathcal F}_n }^2}\\
\leq\;\frac{1}{\alpha^2}\Egcm\pa{\pa{\frac{1}{(2n-1)^2}\sum_{y\in B_{n-1,i}^x}\conf_y-\alpha}^2}\Egcm\pa{\varphi_{3n}^2}=\;\frac{(1-\alpha)}{\alpha (2n-1)^2} \Egcm\pa{\varphi_{3n}^2},
\end{multline*}
since under $\mesinv$, the $\conf_y$'s  are i.i.d. variables. We can now use the bound obtained in Lemma \ref{lem:normephiu}, which yields that for some constant $C_1=C_1(\param, \alpha', \uf)$, 
\begin{equation}\label{majo1}\Egcm\pa{\Egcm\pa{\left.\pa{\frac{1}{(2n-1)^2}\sum_{y\in B_{n-1,i}^x}{\conf_y}-\alpha}\;\varphi_{3n}\;\right\vert\;{\mathcal F}_n }^2}\leq C_1 n^2. \end{equation}
Similarly, since
\[\Egcm\pa{(1-{\conf_{x+e_i}})\varphi_{3n}\pa{\confhat+\delta^{\theta}_{x+e_i}}^2}=\frac{1-\alpha}{\alpha}\Egcm(\conf_{x+e_i}\varphi_{3n}^2)\leq Cn^2,\] 
we also have for some constant $C_2=C_2(\param, \alpha', \uf)$
\begin{equation}
\label{majo2}
\Egcm\pa{\Egcm\pa{\left.\pa{\frac{1}{(2n-1)^2}\sum_{y\in B_{n-1,i}^x}\frac{\conf_y-\alpha}{1-\alpha}}(1-{\conf_{x+e_i}})\varphi_{3n}\pa{\confhat+\delta^{\theta}_{x+e_i}}\;\right\vert\;{\mathcal F}_n}^2}\leq C_2 n^2. 
\end{equation}

\bigskip

We now estimate the first line of the right-hand side of \eqref{majospec}, namely 
\begin{equation}\label{firstlineGFF}\frac{1}{(2n-1)^2}\sum_{y\in B_{n-1,i}^x}\Egcm\pa{\frac{{\conf_y}(1-\conf_{x+e_i})}{\alpha^2}\cro{\varphi_{3n}\pa{\confhat+\delta^{\theta}_{x+e_i}-\delta_y}-\varphi_{3n}\pa{\confhat}}^2}.\end{equation}
We first deal with the fact that the deleted and created particles do not have the same angle. Recall that $\confhat^{y,\theta}$ is the configuration where the angle of the particle at the site $y$ has been set to $\theta$, we can thus write \[\confhat+\delta^{\theta}_{x+e_i}-\delta_y=\pa{\confhat^{y,\theta}}^{y,x+e_i},\]
therefore  
 \begin{align*}\pa{\varphi_{3n}\pa{\confhat+\delta^{\theta}_{x+e_i}-\delta_y}-\varphi_{3n}\pa{\confhat}}^2\leq2\cro{\varphi_{3n}\pa{\pa{\confhat^{y,\theta}}^{y,x+e_i}}-\varphi_{3n}\pa{\confhat^{y,\theta}}}^2+2\cro{\varphi_{3n}\pa{\confhat^{y,\theta}}-\varphi_{3n}\pa{\confhat}}^2.
 \end{align*}
 Since $\theta$ is distributed according to $\param/\alpha$, conditionally to $\conf_y=1$, $\confhat^{y,\theta}$ has the same distribution as $\confhat$ under $\mesinv$, and we can therefore control \eqref{firstlineGFF} by 
 \begin{equation}\label{MLGFF}\frac{2}{\alpha^2(2n-1)^2}\sum_{y\in B_{n-1,i}^x}\cro{\Egcm\pa{\conf_y(1-\conf_{x+e_i})\cro{\varphi_{3n}\pa{\confhat^{y,x+e_i}}-\varphi_{3n}\pa{\confhat}}^2}+\Egcm\pa{\conf_y\cro{\varphi_{3n}\pa{\confhat^{y,\theta}}-\varphi_{3n}\pa{\confhat}}^2}}.\end{equation}
Once again, we are going to prove that the contributions of both terms in the right-hand side above are of order $n^2$.

We first need to decompose, as in the proof of the  two-block estimate of Lemma \ref{lem:TBE}, the particle jumps appearing in the first term into nearest neighbor jumps. More precisely, there exists a finite family $x_0,\ldots ,x_p$  such that $x_0=y$, $x_p=x$ and for any $k\in\llbracket 0,p-1\rrbracket$, $\abss{x_k-x_{k+1}}=1$. Furthermore, we can safely assume that $p=\abss{y-x}$. With this construction, for any $y\in B_{n-1,i}^x$, we can write  
\begin{align}\label{dirboundphiy}\Egcm\bigg[\conf_y(1-{\conf_{x+e_i}})\Big(\varphi_{3n}(\confhat^{y,x+e_i})-&\varphi_{3n}(\confhat)\Big)^2\bigg]\nonumber \\
&\leq \abss{y-x}\sum_{k=1}^{\abss{y-x}}\Egcm\cro{\conf_{x_k}(1-{\conf_{x_{k+1}}})\Big(\varphi_{3n}(\confhat^{x_k,x_{k+1}})-\varphi_{3n}(\confhat)\Big)^2}\nonumber\\
&\leq \abss{y-x}\sum_{k=1}^{\abss{y-x}}\Egcm\pa{\cro{\nabla_{x_k, x_{k+1}}\varphi_{3n}}^2},
\end{align}
since $(\sum_{k=1}^p a_k)^2\leq p\sum_{k=1}^pa_k^2$. As in the proof of Lemma \ref{lem:normephiu}, one easily checks that, $x_k$ and $x_{k+1}$ being neighbors,    
\[\Egcm\pa{\cro{\nabla_{x_k, x_{k+1}}\varphi_{3n}(\confhat)}^2}\leq C(\uf).\]
therefore \eqref{dirboundphiy} yields  
\[\Egcm\cro{\conf_y(1-{\conf_{x+e_i}})\Big(\varphi_{3n}(\confhat^{y,x+e_i})-\varphi_{3n}(\confhat)\Big)^2}\leq \abss{y-x}^2C(\uf).\]
We now get back to the first term in \eqref{MLGFF}. It is not hard to see that $\sum_{y\in B_{n-1,i}^x}\abss{y-x}^2$ is of order $n^4$, and we obtain as wanted that for some constant $C_3=C_3(\param, \uf)$, 
\begin{equation}\label{majodir}\frac{2}{\alpha^2(2n-1)^2}\sum_{y\in B_{n-1,i}^x}\Egcm\pa{\conf_y(1-\conf_{x+e_i})\cro{\varphi_{3n}\pa{\confhat^{y,x+e_i}}-\varphi_{3n}\pa{\confhat}}^2}\leq C_3 n^2.\end{equation}

We now estimate the second contribution in \eqref{MLGFF}. The only difference between $\varphi_{3n}\pa{\confhat^{y,\theta}}$ and $\varphi_{3n}\pa{\confhat}$ is the angle of 
the particle at site $y$. Recall that for any $n$, $\varphi_n\in {T^\omega}$, therefore the variation of $\varphi_n$ when an angle is changed can be precisely estimated.  
{Fix $n\geq 0$, and recall that $\varphi_{3n}{\in T_{3n}^\omega}$}. Then, there exists angle-blind functions $ (\psi_{n,x})_{x\in B_{3n}}$, and $\psi_n$ in $\Sp$, such that 
\[\varphi_{3n}=\psi_n+\sum_{x\in B_{3n}}\com_x\psi_{n,x}.\]
Since the only difference between $\confhat^{y,\theta}$ and $\confhat$ is in the angle present at the site $y$, we can write
\[\varphi_{3n}\pa{\confhat^{y,\theta}}-\varphi_{3n}\pa{\confhat}=(\omega(\theta)-\omega(\theta_y))\conf_y\psi_{n,y}(\conf),\]
therefore the second contribution in \eqref{MLGFF} can be rewritten 
\begin{equation}\label{GFFchangement}
\frac{2}{\alpha^2(2n-1)^2}\sum_{y\in B_{n-1,i}^x}\Egcm\pa{\conf_y(\omega(\theta)-\omega(\theta_y))^2\psi_{n,y}^2 }= \frac{4V_{\param}(\omega)}{\alpha^2(2n-1)^2}\sum_{y\in B_{n-1,i}^x}\Egcm\pa{\conf_y\psi_{n,y}^2},
\end{equation}
where we shortened $V_{\param}( \omega)=Var_{\param}( \omega(\theta_0)\mid \conf_0=1)$, since the angles are independent of the configuration conditionally to the presence of a particle. 
Similarly to what we did in Section \ref{subsec:spectralgap} rewrite 
\[\varphi_{3n}=\varphi_n^{1}+\varphi_n^{b},\] where 
\[ \varphi_n^{1}= \sum_{x\in B_{3n}}(\omega(\theta_x)-\Egcm(\omega))\conf_x\psi_{n,x}\eqand \varphi_n^{b}=\psi_n+\Egcm(\omega)\sum_{x\in B_{3n}}\conf_x\psi_{n,x},\]
where $\Egcm(\omega)$ stands for $\Egcm(\omega(\theta_0)\mid \conf_0=1)$.
As in Section \ref{subsec:spectralgap}, 
\[\Egcm(\varphi_{3n}^2)=\Egcm((\varphi_n^1)^2)+\Egcm((\varphi^{b}_n)^2),\]
and
\[\Egcm((\varphi^{1}_n)^2)=V_{\param}(\omega)\sum_{x\in B_{3n}}\Egcm(\conf_x\psi_{n,x}^2).\]
The two previous identities finally yield that 
\[V_{\param}( \omega)\sum_{x\in B_{3n}}\Egcm(\conf_x\psi_{n,x}^2)\leq \Egcm(\varphi_{3n}^2).\]
We now use this bound as well as \eqref{GFFchangement} and Lemma \ref{lem:normephiu}  to obtain that for some constant $C_4=C_4(\confhat, \alpha', \uf)$
\begin{equation}
\label{majo3}
\frac{2}{\alpha^2(2n-1)^2}\sum_{y\in B_{n-1,i}^x}\Egcm\pa{\conf_y\cro{\varphi_{3n}\pa{\confhat^{y,\theta}}-\varphi_{3n}\pa{\confhat}}^2}\leq C_4n^2.
\end{equation}
This is the estimate we wanted for the second line of \eqref{MLGFF}.

 Letting $C=3(C_1+C_2+C_3+C_4)$, we now use the four bounds \eqref{majo1}, \eqref{majo2},  \eqref{majodir} and \eqref{majo3} in equation \eqref{majospec}, to obtain that 
\[\Egcm\Big((1-{\conf_{x+e_i}})\big(\Egcm(\left.\varphi_{3n}\right\vert{\mathcal F}_n)(\confhat+\delta^{\theta}_{x+e_i})-\Egcm(\left.\varphi_{3n}\right\vert{\mathcal F}_n)(\confhat)\big)^2\Big)\leq C n^2\]
as wanted, which concludes the proof of Lemma \ref{lem:L2boundphi}.}

We have now finished the second step, and proved that the sequences of boundary terms $(\widetilde{R}^+_{n,i})_{n\in \N} $ and $(\widetilde{R}^-_{n,i})_{n\in \N} $ are bounded in $L^2(\mesinv)$. To conclude the proof of Lemma \ref{lem:boundaryconvergence} we now prove that any weak limit point $\mathfrak{R}_i^-$  of $(\widetilde{R}^-_{n,i}) $ is in the linear span of the currents $\bff{J}^\omega$. The main difficulty is to prove that any limit point only depends on $\confhat_0$ and $\confhat_{e_i}$, which we state as a separate lemma. We will once again only consider the negative boundary terms, the positive boundary terms being treated in the same way.

\bigskip

\step{Third}{Proof that $\mathfrak{R}_i^-$ only depends on $\confhat$ through $\confhat_0$ and $\confhat_{e_i}$ }

Let us introduce \[\Z^2_{+,i}=\{x_i> 0\}\cap\Z^2\setminus\{e_i\}.\]
We first prove the following intermediate result.
\begin{lemm}\label{lem:weaklimitpointsGCF0}Any weak limit point  $\mathfrak{R}_i^-$ of the sequence $(\widetilde{R}^-_{n,i})$ is measurable w.r.t. the sites in \

\noindent$\Z^2\cap\{x_i> 0\}\cup \{0\}$. Furthermore, for any edge $(y, y+z)$ with both ends in the set $\Z^2_{+,i}$, the gradient  $\nabla_{y, y+z }\mathfrak{R}_i^-$ vanishes in $L^2(\mesinv)$.
\end{lemm}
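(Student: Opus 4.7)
The plan is to prove the two claims separately. Part 1 is a direct measurability argument. For Part 2, the strategy is to show that $\Egcm\big[\big(\nabla_{y,y+z}\widetilde{R}_{n,i}^-\big)^2\big]\to 0$ by exploiting the closed-form structure of $\modphi_n$; the conclusion for $\mathfrak{R}_i^-$ then follows from the fact that $\nabla_{y,y+z}$ is a bounded linear operator on $L^2(\mu_{\param})$ and hence weakly continuous, together with weak lower semi-continuity of the $L^2$-norm.

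For Part 1, I would use the identity $\tau_{-x}\nabla_{x,x+e_i}\modphi_n=\nabla_{0,e_i}[\tau_{-x}\modphi_n]$, which follows directly from the definitions and is implicit in Step 2 of the proof of Lemma \ref{lem:boundaryconvergence}. Each summand of $\widetilde{R}_{n,i}^-$ is therefore measurable with respect to $\{\confhat_0\}\cup\{\confhat_y : y\in -x+B_n\}$; when $x_i=-n-1$ we have $-x+B_n\subset\{y : y_i\geq 1\}$, so each summand lies in the closed linear subspace $\mathcal{H}\subset L^2(\mu_{\param})$ of functions measurable with respect to $\{\confhat_0\}\cup\{\confhat_y : y_i>0\}$. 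Being weakly closed, $\mathcal{H}$ contains every weak limit point $\mathfrak{R}_i^-$.

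For Part 2, fix $y,y+z\in\Z^2_{+,i}$. By the commutation $\nabla_{y,y+z}\tau_{-x}=\tau_{-x}\nabla_{y+x,y+z+x}$, together with the fact that $\{y,y+z\}$ and $\{0,e_i\}$ are disjoint (so the two gradients at these edges commute), one rewrites
\[\nabla_{y,y+z}\widetilde{R}_{n,i}^-=\frac{1}{(2n)^2}\sum_{\substack{x_i=-n-1\\x\in B_n(-e_i)}}\tau_{-x}\,\nabla_{x,x+e_i}\nabla_{y+x,y+z+x}\modphi_n.\]
For $n$ large both $y+x$ and $y+z+x$ belong to $B_n$, so the inner gradient equals $\Egcm[\ind{\rho_{3n}\leq\alpha'}\,\ufbar^{3n}_{y+x,y+z+x}\mid\mathcal{F}_n]$, a conditional cutoff version of a translate $\tau_{y+x}\uf_k$ of one coordinate of the germ $\uf\in\bff{T}^\omega$. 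Substituting this expression and applying the outer creation-type gradient $\nabla_{x,x+e_i}$ (which is of creation type since $x\notin B_n$) would be carried out exactly as in Step 2 of the proof of Lemma \ref{lem:boundaryconvergence}: one introduces the mesoscopic reservoir decomposition \eqref{decompocond}, then bounds each of the three resulting terms via the spectral-gap--based Lemma \ref{lem:normephiu} and the large-deviation/Cauchy--Schwarz estimates already used to control the single-gradient analogue in Lemma \ref{lem:L2boundphi}.

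The main obstacle will be the bookkeeping needed to propagate the double-gradient structure through the decomposition \eqref{secondeligne}: one must simultaneously track the position of each of the four sites $x,x+e_i,y+x,y+z+x$ relative to $B_n$, retain the angle-dependence coming from $\modphi_n\in T^\omega$, and verify that the cutoff error from $\ind{\rho_{3n}\leq\alpha'}$, the conditioning error from $\Egcm(\cdot\mid\mathcal{F}_n)$, and the angle-variance error are each of order $o(1)$ in $L^2(\mu_{\param})$ uniformly in $x$ on the boundary. Summing the resulting estimates over the $O(n)$ boundary sites with the prefactor $(2n)^{-2}$ then yields $\Egcm[(\nabla_{y,y+z}\widetilde{R}_{n,i}^-)^2]\to 0$, completing the proof.
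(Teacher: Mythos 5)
Your Part 1 is correct and matches the paper's argument. For Part 2, you correctly set up the commutation $\nabla_{y,y+z}\tau_{-x}\nabla_{x,x+e_i}\modphi_n = \nabla_{0,e_i}\tau_{-x}\nabla_{x+y,x+y+z}\modphi_n$ (valid because $\{y,y+z\}$ is disjoint from $\{0,e_i\}$), but you then take a wrong turn. There is no need to reintroduce the mesoscopic reservoir decomposition \eqref{decompocond} or re-run the machinery of Lemma \ref{lem:L2boundphi}. The paper dispenses with the outer creation-type gradient in one line via the crude bound $\Egcm\big[(\nabla_{0,e_i} f)^2\big]\leq 4\Egcm[f^2]$, valid for any $f\in L^2(\mesinv)$ by expanding the square and applying the $\mesinv$-preserving change of variable $\confhat\mapsto\confhat^{0,e_i}$. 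After this, one is reduced to bounding $\Egcm\big[(\nabla_{x+y,x+y+z}\modphi_n)^2\big]$, which is handled by a simple case analysis on where $x+y$ and $x+y+z$ sit relative to $B_n$: if both are outside, the gradient vanishes since $\modphi_n$ is $\F_n$-measurable; if both are inside, it is bounded by $C(\uf)$ via \eqref{eq:estBn} and Jensen; if they straddle $\partial B_n$, Lemma \ref{lem:L2boundaryterm}'s argument gives $Cn^2$, but (for $z\neq \pm e_i$) this occurs for at most two values of $x$ on the boundary. Summing $O(n)$ terms with prefactor $(2n)^{-3}$ then gives $O(n^{-2})+O(n^{-1})\to 0$.

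There are two concrete flaws in your route. First, the claim ``For $n$ large both $y+x$ and $y+z+x$ belong to $B_n$'' is false: for $x$ on the negative boundary with $x_i=-n-1$, the $i'$-coordinate $(y+x)_{i'}=y_{i'}+x_{i'}$ ranges over $\llbracket y_{i'}-n, y_{i'}+n\rrbracket$ as $x_{i'}$ ranges over $\llbracket -n,n\rrbracket$, so some summands do have $y+x\notin B_n$. (Those contribute zero, but one must say so.) Second, and more seriously, re-running the reservoir decomposition as you propose would \emph{not} produce the required $o(1)$. The Dirichlet-form contribution in the proof of Lemma \ref{lem:L2boundphi} (the analogue of \eqref{majodir}) scales like $n^{-2}\sum_{y}|y-x|^2\cdot\Egcm[(\nabla_a(\cdot))^2]$; replacing $\varphi_{3n}$ by $\nabla_{x+y,x+y+z}\varphi_{3n}$ reduces the $L^2$ size of the function being moved but not the path-length factor $|y-x|^2$, so this term remains of order $n^2$. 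After summing with the prefactor $n^{-3}$ over $O(n)$ boundary sites, you would get $O(1)$ rather than $o(1)$, and the argument would not close. The crude bound on $\nabla_{0,e_i}$ is not an inessential shortcut — it is what makes the estimate work.
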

\proofthm{Lemma \ref{lem:weaklimitpointsGCF0}}{
In order to avoid taking subsequences, let us also assume that $(\widetilde{R}^-_{n,i})$ weakly converges towards $\mathfrak{R}_i^-$. 
We first prove the first statement, which is elementary. For any $x$ in the negative boundary, $x_i=-n-1$, $\tau_{-x}\modphi_n$ is measurable with respect 
to the half plane $\{x_i >0\}$, therefore $\nabla_{0,e_i}\tau_{-x}\modphi$ is measurable with respect to the sites in $\{x_i>0\}\cup \{0\}$. We deduce from the last remark 
that for any $n$, $\widetilde{R}^-_{n,i}$ is measurable for any $n$ w.r.t. the sites in $\{x_i> 0\}\cup \{0\}$, therefore $\mathfrak{R}_i^-$ also is.

\bigskip

We now show that for any edge $\{y,y+z\}\subset\Z^2_{+,i}$, the gradient $\nabla_{y, y+z}\mathfrak{R}_i^-$ vanishes in $L^2(\mesinv)$. Fix an edge $(y,y+z)$ with both ends in $\Z^2_{+,i}$.
By definition, 
\begin{align*}\nabla_{y, y+z}\widetilde{R}^-_{n,i}&=\frac{1}{(2n)^2}\sum_{x_i=-n-1}\nabla_{y, y+z}\tau_{-x}\nabla_{x,x+e_i}\modphi_n\\
&=\frac{1}{(2n)^2}\sum_{x_i=-n-1}\nabla_{y, y+z}\nabla_{0,e_i}\tau_{-x}\modphi_n.
\end{align*}
Because $y,y+z$ are different from $0$ and $e_i$, the two gradients in the formula above commute, therefore using once again $(\sum_{i=1 }^na_i)^2\leq n\sum_{i=1}^n a_i^2$, as well as the crude bound $\Egcm((\grad f)^2)\leq 4\Egcm(f^2)$, yields
\begin{align}\label{normeboundarygrad}\Egcm\cro{\big|\nabla_{y, y+z}\widetilde{R}^-_{n,i}\big|^2}&\leq\frac{1}{(2n)^{3}}\sum_{x_i=-n-1}\Egcm\cro{\big(\nabla_{0,e_i}\nabla_{y, y+z}\tau_{-x}\modphi_n\big)^2}\nonumber\\
&= \frac{1}{(2n)^{3}}\sum_{x_i=-n-1}\Egcm\cro{\big(\nabla_{0,e_i}\tau_{-x}\nabla_{x+y, x+y+z}\modphi_n\big)^2}\nonumber\\
&\leq \frac{4}{(2n)^{3}}\sum_{x_i=-n-1}\Egcm\cro{\big(\nabla_{x+y, x+y+z}\modphi_n\big)^2}
.\end{align}
There are three cases to consider to estimate $\Egcm\cro{\big(\nabla_{x+y, x+y+z}\modphi_n\big)^2}$.
\begin{enumerate}
\item{
The first one is the case where both $x+y$ and $x+y+z$ are in $B_n^c$, the complementary set of $B_n$. In this case, 
\[\Egcm\cro{\big(\nabla_{x+y, x+y+z}\modphi_n\big)^2}=0,\] because $\modphi_n$ is $\F_n$- measurable.
}
\item{
The second case when both $x+y$ and $x+y+z$ are in $B_n$. In this case, using \eqref{eq:estBn} and Jensen's inequality we can write
\begin{equation}\label{majorg}\Egcm\pa{\big(\nabla_{x+y, x+y+z}\modphi_n\big)^2}\leq\Egcm\pa{\ind{\rho_{3n}\leq \alpha'}\big(\ufbar_{x+y, x+y+z}\big)^2}\leq C(\uf).\end{equation}
}
\item{The last case to consider is if  $x+y$ and $x+y+z$ link $B_n$ and $B_n^c$. Then, as in the proof of Lemma \ref{lem:L2boundaryterm} we obtain
\[\Egcm\cro{\big(\nabla_{x+y, x+y+z}\modphi_n\big)^2}\leq C(\param, \alpha', \uf)n^2.\]
}
\end{enumerate}

Fix an edge $(y,y+z)$ with both ends in $\Z^2_{+,i}$ and write $z$ as $\pm e_j$, we treat separately the two cases for $j$.
If $j=i$, for any $n$ large enough (more precisely as soon as $2n+2\geq y_i$), for any $x$ such that $x_1=-n-1$, either $x+y$ and $ x+y\pm e_i$ are both in $B_n$ or both are in its complementary set $B_n^c$. We are therefore either in the first or in the second case above, and since the number of terms in the sum is $O(n)$, equation \eqref{normeboundarygrad} yields 
\[\Egcm\cro{\big(\nabla_{y, y+z}\widetilde{R}^-_{n,i}\big)^2}\leq C'n^{-2} \underset{n\to\infty}{\to} 0,\]
for some constant $C'=C'(\param, \uf)$.

If now $j\neq i$, there can be only two terms in the sum over $x$ for which $x+y$ and $ x+y\pm e_i$ link $B_n$ and $B_n^c$ (third case above), whereas all the others are either in the first or the second case. In this case, equation \eqref{normeboundarygrad} yields
\[\Egcm\cro{\big(\nabla_{y, y+z}\widetilde{R}^-_{n,i}\big)^2}\leq  C'(\param, \uf)n^{-2}+C''(\param, \alpha', \uf)n^{-1} \norm{\uf}_{2,\param}^2\underset{n\to\infty}{\to} 0.\]
This proves that the sequence $\nabla_{y, y+z}\widetilde{R}^-_{n,i}$ vanishes as $n\to\infty $ in $L^2(\mesinv)$ for any edge $(y,y+z)$ with both ends in $\Z^2_{+,i}$. Since the gradient $\nabla_{y, y+z }$ is a (Lipschitz, and therefore) continuous functional in $L^2(\mesinv)$, $\nabla_{y, y+z }\mathfrak{R}_i^-$ vanishes for any edge $(y,y+z)$ with both ends in $\Z^2_{+,i}$.
This concludes the proof of Lemma \ref{lem:weaklimitpointsGCF0}.
}

\begin{lemm}\label{lem:weaklimitpointsGCF}Any weak limit point $\mathfrak{R}_i^-$ of the sequence $(\widetilde{R}^-_{n,i})_{n\in \N}$ only depends on the configuration through $\confhat_0$ and $\confhat_{e_i}$.
The same is true for the limit points of the positive boundary terms  $(\widetilde{R}^+_{n,i})_{n\in \N} $.
\end{lemm}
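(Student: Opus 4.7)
Lemma \ref{lem:weaklimitpointsGCF0} already places $\mathfrak{R}_i^-$ in $L^2(\mesinv)$, makes it measurable with respect to the sites in $\{x_i>0\}\cup\{0\}$, and shows that for every nearest-neighbor edge $(y,y+z)$ with both endpoints in $\Z^2_{+,i}$ the gradient $\nabla_{y,y+z}\mathfrak{R}_i^-$ vanishes in $L^2(\mesinv)$. The first task is to turn this $L^2$ vanishing into a pointwise statement: since $\nabla_{y,y+z}f=\conf_y(1-\conf_{y+z})(f(\confhat^{y,y+z})-f(\confhat))$, the identity $\mathfrak{R}_i^-(\confhat^{y,y+z})=\mathfrak{R}_i^-(\confhat)$ holds $\mesinv$-a.s.\ on the event $\{\conf_y=1,\conf_{y+z}=0\}$, and the change of variables $\confhat\mapsto\confhat^{y,y+z}$ yields the same identity on $\{\conf_y=0,\conf_{y+z}=1\}$. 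Since there are only countably many such edges, all these identities hold simultaneously on a single set of full measure, so $\mathfrak{R}_i^-$ is $\mesinv$-a.s.\ invariant under every licit exclusion swap inside $\Z^2_{+,i}$.

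The plan is then to upgrade this local invariance to invariance of $\mathfrak{R}_i^-$ under arbitrary finite permutations of the family $(\confhat_y)_{y\in\Z^2_{+,i}}$. One first checks that $\Z^2_{+,i}=\{x_i>0\}\setminus\{e_i\}$ is connected as a subgraph of $\Z^2$: a two-dimensional half-plane minus one boundary vertex remains connected, so any two sites of $\Z^2_{+,i}$ are linked by a nearest-neighbor path entirely contained in $\Z^2_{+,i}$. Since we may assume $\am<1$ (otherwise the grand-canonical measure is trivial and the result is immediate), $\mesinv$-almost every configuration contains infinitely many empty sites in $\Z^2_{+,i}$, which can be used as intermediaries in the same spirit as the irreducibility argument of Section \ref{subsec:irreducibility}: to exchange the contents of two sites $y,y'$ in $\Z^2_{+,i}$, one parks the particle at $y$ at a nearby empty site $y_1$ along a path of licit jumps, transports the particle originally at $y'$ to $y$, and finally brings back the particle from $y_1$ to $y'$. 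Every intermediate step is a licit swap in $\Z^2_{+,i}$, hence leaves $\mathfrak{R}_i^-$ invariant, and composing them gives invariance under the transposition of $\confhat_y$ and $\confhat_{y'}$, therefore under any finite permutation.

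Now condition on $(\confhat_0,\confhat_{e_i})$: under $\mesinv$ the family $(\confhat_y)_{y\in\Z^2_{+,i}}$ is i.i.d.\ and independent of $(\confhat_0,\confhat_{e_i})$, so the conditional law is a countable product measure. Viewed as an $L^2$ functional of this sequence, $\mathfrak{R}_i^-$ is exchangeable by the previous step, and the Hewitt-Savage zero-one law forces any exchangeable $L^2$ function of an i.i.d.\ sequence to be almost surely constant. Therefore $\mathfrak{R}_i^-=\E_{\param}[\mathfrak{R}_i^-\mid\confhat_0,\confhat_{e_i}]$ $\mesinv$-a.s., i.e.\ $\mathfrak{R}_i^-$ depends on $\confhat$ only through $\confhat_0$ and $\confhat_{e_i}$. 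The positive boundary terms are treated identically, after noticing that $\widetilde{R}^+_{n,i}$ is measurable w.r.t.\ $\{y_i\leq 0\}\cup\{e_i\}$ and that a symmetric version of Lemma \ref{lem:weaklimitpointsGCF0} provides the vanishing of gradients on the mirror set of $\Z^2_{+,i}$. The main obstacle is precisely the middle step: turning $L^2$-vanishing of gradients into genuine permutation invariance requires producing the empty intermediaries in $\Z^2_{+,i}$ uniformly enough to generate every transposition by chains of licit jumps, which is where the assumption $\am<1$ and the product structure of $\mesinv$ are crucially used.
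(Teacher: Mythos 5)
Your argument is correct, and it genuinely differs from the paper's in the step that turns Lemma \ref{lem:weaklimitpointsGCF0} into the conclusion. The paper localizes first, passing to the conditional expectation $\mathfrak{R}^-_{i,n}=\Egcm(\mathfrak{R}^-_i\mid\F_n)$, and then observes that on the finite box $B^+_{i,n}$, on the event of having at least two empty sites, the vanishing of the gradients (plus the irreducibility results of Section \ref{subsec:irreducibility}) forces $\mathfrak{R}^-_{i,n}$ to depend on the interior sites only through the empirical angular density $\dens_{B^+_{i,n}}$; the conclusion is then obtained by testing against a cylinder function $f$, conditioning, and letting the empirical density converge to the deterministic parameter $\param$ by the law of large numbers. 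You instead work directly on the infinite lattice: you upgrade the $L^2$-vanishing of the gradients to $\mesinv$-a.s.\ invariance under every licit swap, chain these to show $\mathfrak{R}^-_i$ is a.s.\ invariant under all finite permutations of $(\confhat_y)_{y\in\Z^2_{+,i}}$ (using connectivity of the punctured half-plane and the a.s.\ abundance of empty sites when $\am<1$), and then invoke the Hewitt–Savage zero-one law to conclude that, conditionally on $(\confhat_0,\confhat_{e_i})$, the exchangeable function $\mathfrak{R}^-_i$ of the i.i.d.\ family $(\confhat_y)_{y\in\Z^2_{+,i}}$ is a.s.\ constant. The two arguments are morally the same — ``depends only on the empirical measure'' versus ``exchangeable'', and ``empirical measure converges to $\param$'' versus ``exchangeable $\sigma$-algebra is trivial'' — but your route is shorter and avoids the cutoff events $E^*_n$ and the $o_n(1)$ bookkeeping, at the price of a more abstract appeal to Hewitt–Savage. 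One technical point you pass over lightly but which is standard: to chain swaps along a configuration-dependent path, you must intersect the full-measure set of Lemma \ref{lem:weaklimitpointsGCF0} with the (null) union of its preimages under all finite compositions of swaps, using that each swap preserves $\mesinv$; otherwise an intermediate configuration might fall outside the set where the gradient identity holds. With that sentence added the proof is complete.
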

\proofthm{Lemma \ref{lem:weaklimitpointsGCF}}{This Lemma is a consequence of Lemma \ref{lem:weaklimitpointsGCF0}. Consider the localization $\mathfrak{R}_{i,n}^-=\Egcm(\mathfrak{R}_i^-\mid \F_n)$, then $\mathfrak{R}_{i,n}^-$ is measurable with respect to the sites in $\{x_i>0\}\cup \{0\}$ and for any edge $(y,y+z)$ with both ends in $\Z^2_{+,i}$ its gradient vanishes in $L^2(\mesinv)$. 
These two properties are immediate consequences of the properties of $\mathfrak{R}_i^-$ and Jensen's inequality.

Let 
\[B_{i,n}^+=B_n\cap\Z^2_{+,i},\]
since the gradients of $\mathfrak{R}^-_{i}$ vanish for any edge in $B_{i,n}^+$, on the event on which there are at least two empty sites in $B_{i,n}^+$, $\mathfrak{R}^-_{i}$ only depends on the  $\confhat_x,\; x\in B_{i,n}^+$ through the empirical measure on $B_{i,n}^+$
\[\dens_{B_{i,n}^+}:=\frac{1}{\abss{ B_{i,n}^+}}\sum_{B_{i,n}^+}\conf_x\delta_{\theta_x}.\]
 Indeed, for two configurations $\confhat$ and  $\confhat'$ with the same number of particles, and with the same angles in $B_{i,n}^+$, we can reach one from the other with a combination of the previous gradients, hence the difference $\mathfrak{R}^-_{i,n}(\confhat)-\mathfrak{R}^-_{i,n}(\confhat')$ vanishes. This is not true whenever there is one or less empty site in $B_{i,n}^+$, but since we are under the product measure, this happens with exponentially small probability and will not be an issue.
 
 \bigskip 
 
Let us denote by $E^*_{n}$ the event ''there are at least two empty sites in $B_{i,n}^+$'', the previous statement rewrites as 
\[\mathfrak{R}^-_{i,n}\1_{E^*_{n}}=\Egcm\pa{\mathfrak{R}^-_{i,n}\1_{E^*_{n}}\;\Bigg\vert \; \confhat_0, \confhat_{e_i}, \dens_{B_{i,n}^+}}.\]
For any cylinder function $f$, we are going to prove that $\Egcm(f.\mathfrak{R}_i^-)=\Egcm\cro{f.\;\E(\mathfrak{R}_i^-\mid \confhat_0, \confhat_{e_i})}$. 
Let \[f^+=\E\pa{f\mid\confhat_x,\;x\in\{x_i>0\}\cup \{0\}}\]
be the conditional expectation with respect to the sites in $\{x_i>0\}\cup \{0\}$. Since $f$ is a cylinder function, so is $f^+$, therefore for any sufficiently large integer $n$, we can write 
\begin{align}
\label{conditionnement}\Egcm(f.\mathfrak{R}^-_{i}\1_{E^*_{n}})&=\Egcm(f.\mathfrak{R}^-_{i,n}\1_{E^*_{n}})\nonumber\\
&=\Egcm\pa{\Egcm\pa{f.\mathfrak{R}^-_{i,n}\1_{E^*_{n} }\;\Bigg\vert\; \confhat_0, \confhat_{e_i}, \dens_{B_{i,n}^+}}}\nonumber\\
&=\Egcm\pa{\mathfrak{R}^-_{i,n}\1_{E^*_{n}} \Egcm\pa{f \;\Bigg\vert\;  \confhat_0, \confhat_{e_i},\dens_{B_{i,n}^+}}}\nonumber\\
&=\Egcm\pa{\mathfrak{R}^-_{i,n}\1_{E^*_{n}} \Egcm\pa{f^+ \;\Bigg\vert\;  \confhat_0, \confhat_{e_i}, \dens_{B_{i,n}^+}}}\nonumber\\
&=\Egcm\pa{\mathfrak{R}^-_{i,n}\Egcm\pa{f^+ \;\Bigg\vert\;  \confhat_0, \confhat_{e_i}, \dens_{B_{i,n}^+}}}+\Egcm\pa{\mathfrak{R}^-_{i,n}\1_{E^{*c}_{n}} \Egcm\pa{f^+ \;\Bigg\vert\;  \confhat_0, \confhat_{e_i}, \dens_{B_{i,n}^+}}}\nonumber\\
\nonumber\\
&{=}\Egcm\pa{\mathfrak{R}^-_{i} \Egcm\pa{f^+\;\Big\vert\; \confhat_0, \confhat_{e_i}}}+o_n(1),
\end{align}
since 
\[\Egcm\pa{f^+ \;\Bigg\vert\; \confhat_0, \confhat_{e_i},\dens_{B_{i,n}^+}}\xrightarrow[n\to \infty]{L^2(\mesinv)}\Egcm\pa{f^+ \;\Big\vert\; \confhat_0, \confhat_{e_i}},\]
because $\dens_{B_{i,n}^+}$ converges $\mesinv$ a.s. as $n\to \infty$ towards $\param$, and 
\[\Egcm\pa{\mathfrak{R}^-_{i,n}\1_{E^{*c}_{n}} \Egcm\pa{f^+ \;\Bigg\vert\;  \confhat_0, \confhat_{e_i}, \dens_{B_{i,n}^+}}}\xrightarrow[n\to \infty]{} 0,\]
because $f^+$ is a bounded function, and $\mathfrak{R}^-_{i,n}$ is in $L^2(\mesinv)$.
For the same reason, the left-hand side in  \eqref{conditionnement} converges as $n$ goes to $\infty$ towards $\Egcm(f.\mathfrak{R}^-_{i})$, and therefore for any cylinder function $f$
\[\Egcm\pa{\mathfrak{R}^-_{i} \Egcm\pa{f^+\;\Big\vert\; \confhat_0, \confhat_{e_i}}}=\Egcm(f.\mathfrak{R}^-_{i}),\]
so that 
\[\mathfrak{R}^-_{i}=\Egcm\pa{\mathfrak{R}^-_{i} \;\Big\vert\; \confhat_0, \confhat_{e_i}}.\]
This concludes the proof of Lemma \ref{lem:weaklimitpointsGCF}.
}

To complete the proof of Lemma \ref{lem:boundaryconvergence}, now that we have proved that all limit points of the boundary terms are function of $\confhat_0$ and $\confhat_{e_i}$, we still have to show that such limit points are in $\bff{J}^\omega$. First notice  that any limit point of the negative boundary $\mathfrak{R}^-_{i}$ verifies
\begin{equation}
\label{egalite31}
{\conf_{e_i}}\mathfrak{R}^-_{i}=(1-{\conf_{0}})\mathfrak{R}^-_{i}=0.
\end{equation}
Indeed,
\[{\conf_{e_i}}\mathfrak{R}^-_{i}=\lim_{n\to \infty}\frac{1}{(2n)^2}\sum_{\substack{x_i=-n-1\\
x\in B_n}}{\conf_{e_i}}\tau_{-x}\nabla_{x,x+e_i}\modphi_n=\lim_{n\to \infty}\frac{1}{(2n)^2}\sum_{\substack{x_i=-n-1\\
x\in B_n}}{\conf_{e_i}}\nabla_{0,e_i}\tau_{-x}\modphi_n,\]
since $\tau_x\grad f=\nabla_{\tau_x a}\tau_x f$. Now the latter obviously vanishes since ${\conf_{e_i}}\nabla_{0,e_i}=0$. The second identity is proved in the same way.

 Since the $\widetilde{\varphi}_n$'s are in ${T^\omega}$, so is  $\mathfrak{R}^-_{i}$. Since $\mathfrak{R}^-_{i}$ depends only on $\confhat_0$ and $\confhat_{e_i}$, using \eqref{egalite31} it can therefore be expressed as  
\[\mathfrak{R}^-_{i}(\confhat)=\conf_0(1-\conf_{e_i})\mathfrak{R}^-_{i}(\confhat_0, \confhat_{e_i})=\conf_0(1-\conf_{e_i})\cro{\psi(\conf_0, \conf_{e_i}) + \com_0\psi_0(\conf_0, \conf_{e_i})},\]
for some angle blind functions $\psi$, $\psi_0.$ In particular, letting $c_1=\psi_0(1,0), c_2=\psi(1,0)$,
\[\mathfrak{R}^-_{i}(\confhat)=(c_1\com_0+c_2\conf_0)(1- \conf_{e_i}).\]
Finally, any weak limit point of the boundary term is an element of  $\boldsymbol{\mathfrak{J}}^{\omega}$, which is what we wanted to show. 
The proof of Lemma \ref{lem:boundaryconvergence} is thus complete.
}

\subsection{An integration by parts formula}
\label{subsec:IPP}
\intro{Considering the symmetric exclusion generator $\gene$ as a discrete Laplacian, to prove  Theorem \ref{thm:limcovariance}, we are going to need an integration by parts formula in order to express the expectation of $\psi.h$ in terms of the gradient of $h$ and the ''integral'' $\nabla \gene^{-1}\psi$ of $\psi$. }

We first extend the definition of the canonical measures given in Definition  \ref{defi:CM} to any domain $B\subset \torus$. For that purpose, consider an integer $K\leq |B|$, and an orderless family $ \{\theta_1,\ldots ,\theta_K\}\in \ctoruspi^K$. 
Recall that we denote by $\K$ the pair $(K, \{\theta_1,\ldots ,\theta_K\} )$, and we let $\cm$ be the measure such that the $K$  particles with fixed angles $\theta_1,\ldots , \theta_K$ are uniformly distributed in the domain $B$. 
 If $B=B_l$ is the ball of radius $l$, this notation is shortened as $\mu_{l,\K}$ in accord with Definition \ref{defi:CM}. The expectation w.r.t both of these measures is respectively denoted $\Ecm$ and $\E_{l,\K}$. We will, in a similar fashion, write
\begin{equation*}\gene_B f(\confhat)=\sum_{\substack{x, x+z\in B\\
|z|=1}}\conf_x\left(1-\conf_{x+z}\right)\left(f(\confhat^{x,x+z})-f(\confhat)\right),\end{equation*}
\index{$ \gene_B$\dotfill part of $\gene $ with jumps inside $B$}
\index{$ \gene_l$\dotfill part of $\gene $ with jumps inside $B_l$}
for the generator of the symmetric exclusion process restricted to $B$, shortened as $\gene_l$ if $B=B_l$.

\index{$ I_a$\dotfill the ''inverse'' of $\grad$  }

Recall that we defined 
\begin{equation*}
\czero=\left\{\psi\in {\mathcal C} \; \;\Big| \;\; \; \E_{s_\psi,\K}(\psi)=0 \;\; \forall\K\in\Ksett_{s_\psi}\quad \mbox{ and  } \quad \psi_{\vert\Sigma^{\K}_{s_{\psi}}}\equiv 0\; \;\forall  \K\in\Kset_{s_\psi}\smallsetminus \Ksett_{s_\psi}\right\},
 \end{equation*}
and that $\grad$ is the gradient representing a particle jump along $a$. 
 \begin{lemm}[Integration by parts formula]
\label{lem:IPP} Let $\psi\in\czero$ be a cylinder function, and $a\subset{B_{s_{\psi}}}$ an oriented edge in its domain. Then, $\psi$ is in the range of the generator $\gene_{s_{\psi}}$, and we can define the "primitive" $I_{a}(\psi)$  of $\psi$ with respect to the gradient along the oriented edge $a$ as 
\[I_{a}(\psi)=\frac{1}{2}\grad (-\gene_{{s_{\psi}}})^{-1}\psi.\] 
Furthermore, for any $B\subset \torus$ containing $B_{s_{\psi}}$, any $\K=(K, (\theta_1,\ldots ,\theta_K))$ such that $K\leq |B|$ and $h\in \mathcal{C}$ measurable w.r.t. sites in $B$, we have 
\begin{equation}\label{IPPeq}\Ecm\pa{\psi.h  }=\sum_{a\subset B_{s_\psi}}\Ecm\pa{I_{a}(\psi).\grad h }.\end{equation}
This result is also true if $\cm$ is replaced by a grand-canonical measure $\gcm$. 
Note that if $K=|B|-1$ or $K=|B|$ the result is trivial because $\psi$ vanishes.
%
\end{lemm}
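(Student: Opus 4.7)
The plan is to handle the two assertions of the lemma separately: first construct the inverse $(-\gene_{s_{\psi}})^{-1}\psi$ pointwise on $B_{s_\psi}$-configurations, then derive the integration by parts identity via the standard self-adjoint change of variable, exploiting that $\cm$ projects to uniform conditional measures on irreducible components of $B_{s_\psi}$.

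For the first step, I would decompose the configuration space on $B_{s_\psi}$ along the sets $\Sigma^{\K}_{s_\psi}$, $\K\in \Kset_{s_\psi}$. The generator $\gene_{s_\psi}$ leaves each such set stable, so it can be diagonalized blockwise. For $\K\in \Ksett_{s_\psi}$, Lemma \ref{lem:twoemptysites} gives irreducibility of the exclusion dynamics on $\Sigma^{\K}_{s_\psi}$, so the kernel of $\gene_{s_\psi}$ restricted to this set is exactly the constants, and the uniform measure $\mu_{s_\psi,\K}$ is invariant. By the definition \eqref{CzeroDef} of $\czero$, $\psi$ has mean zero under each such $\mu_{s_\psi,\K}$, hence is orthogonal to the kernel in $L^2(\mu_{s_\psi,\K})$ and therefore in the range of the restricted $\gene_{s_\psi}$. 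For $\K\in\Kset_{s_\psi}\setminus \Ksett_{s_\psi}$, $\psi$ vanishes identically and the equation $-\gene_{s_\psi}g=\psi$ is solved trivially by $g\equiv 0$. Piecing these local solutions together yields a function $g=(-\gene_{s_\psi})^{-1}\psi$, uniquely determined up to a constant on each $\Sigma^{\K}_{s_\psi}$; since $I_a(\psi)=\tfrac12\grad g$ only involves the difference $g(\confhat^a)-g(\confhat)$ between two configurations within the same component, these ambiguities drop out and $I_a(\psi)$ is well defined.

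For the integration by parts identity \eqref{IPPeq}, I would write $\psi=-\gene_{s_\psi}g$ and expand
\[
\Ecm(\psi\cdot h)\egal-\sum_{a\subset B_{s_\psi}}\Ecm\pa{\grad g\cdot h},
\]
where the sum runs over oriented nearest-neighbor edges in $B_{s_\psi}$. The key observation is that for every such $a=(a_1,a_2)\subset B_{s_\psi}\subset B$, the exchange map $\confhat\mapsto\confhat^{a}$ preserves $\cm$ (it fixes the canonical state on $B$). Applying this change of variable and using $(\confhat^a)^a=\confhat$ together with $\grad f(\confhat^a)=-\nabla_{-a}f(\confhat)$ yields, after summing both orientations of each undirected edge and using the identity $[\conf_{a_1}(1-\conf_{a_2})]^2=\conf_{a_1}(1-\conf_{a_2})$, the symmetric formula
\[
\sum_{a\subset B_{s_\psi}}\Ecm\pa{\grad g\cdot h}\egal -\tfrac12 \sum_{a\subset B_{s_\psi}}\Ecm\pa{\grad g\cdot \grad h}.
\]
Combining the two displays and substituting $I_a(\psi)=\tfrac12\grad g$ gives the claimed identity. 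The extension to a grand-canonical measure $\gcm$ in place of $\cm$ is immediate by conditioning on the canonical state in $B$, since the right-hand side is then just the expectation over $\K$ of the canonical identity.

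I do not anticipate a genuine obstacle, but the step requiring the most care is the construction of the inverse: one must verify that $\psi\in\czero$ is \emph{exactly} what is needed, namely mean-zero on the irreducible components and vanishing on the non-irreducible ones where $\gene_{s_\psi}$ has additional frozen directions. Once this is settled, the IPP reduces to the classical self-adjointness computation for exclusion generators, and the invariance of $\cm$ under exchanges inside $B$ makes the proof go through with no modification compared to the grand-canonical case.
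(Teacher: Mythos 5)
Your proof is correct and follows essentially the same route as the paper's: blockwise decomposition over $\Sigma^{\K}_{s_\psi}$, irreducibility (two empty sites) to identify the kernel of $\gene_{s_\psi}$ with constants, the mean-zero condition from $\czero$ to guarantee solvability, and then the standard self-adjointness computation to produce \eqref{IPPeq}. The only difference is that you spell out the Dirichlet-form identity via the explicit change of variable $\confhat\mapsto\confhat^a$ and the invariance of $\cm$ under exchanges inside $B$, whereas the paper simply invokes it in one line; your added remark that the gradient $\grad g$ is insensitive to the per-component additive constants in $(-\gene_{s_\psi})^{-1}\psi$ is a worthwhile clarification that the paper leaves implicit.
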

\begin{proof}[Proof of Lemma \ref{lem:IPP}]The proof of the previous result is quite elementary. Fix a function $\psi\in\czero$, to prove the integration by parts formula, we first show that $\psi $ is in the range of $\gene_{{s_{\psi}}}$, by building for any $\K$ a function $\varphi_{\K}$ on $\Sigma^{s_{\psi}}_{\K}$, verifying $\gene_{s_{\psi}}\varphi_{\K}=\psi_{\vert\Sigma^{s_{\psi}}_{\K}}$. This result is well-known for the color-blind exclusion process, but in our case where each particle has an angle, the canonical measures take an unusual form, and we prove it for the sake of exhaustivity.

For any $\varphi:\Sigma^{s_{\psi}}_{\K}\to\R$ such that $\gene_{s_{\psi}}\varphi=0$,  
\[\E_{B_{s_\psi},\K}(\varphi \gene_{s_{ \psi}}\varphi)=-\frac{1}{2}\E_{B_{s_\psi}, \K}\pa{\sum_{x,x+z\in B_{s_{\psi}}}\conf_x(1-\conf_{z})(\varphi(\confhat^{x,z})-\varphi(\confhat))^2}=0,\] 
therefore $\varphi$ is invariant under the allowed jump of a particle along any edge in $B_{s_\psi}$. For any $\K\in \Ksett_{s_\psi}$, the function $\varphi$ is constant on $\Sigma^{s_{\psi}}_{\K}$, because $\Sigma^{s_{\psi}}_{\K}$ is then irreducible w.r.t. the exclusion dynamics in $B_{s_{\psi}}$, according to Section \ref{subsec:irreducibility}. In particular $Ker_{\Sigma^{s_{\psi}}_{\K}}\gene_{s_{\psi}}$ is the set of constant functions, and 
\[\left\{\varphi:\Sigma^{s_{\psi}}_{\K}\to\R \; \;\big| \;\; \E_{B_{s_\psi},\K}(\varphi )=0 \right\}=\left\{\gene_{s_{\psi}}\psi,\quad  \psi:\Sigma^{s_{\psi}}_{\K}\to\R\right\}.\]
For any $\psi\in\czero$, any $\K\in \Ksett_{s_\psi}$, there exists a $\varphi_{\K}:\Sigma^{s_\psi}_{\K}\to \R$, such that
\[\gene_{s_{\psi}}\varphi_{\K}=\psi_{\vert\Sigma^{s_\psi}_{\K}}.\]
Since $\psi$ vanishes when $B_{s_\psi}$ has one or less empty site, we also let $\varphi_{\K}=0$ for any $ \K\in\Kset_{s_\psi}\setminus\Ksett_{s_\psi}$. We now define the local function $\varphi^*\in\mathcal{C}$ by $\varphi^*_{|\Sigma^{s_\psi}_{\K}}=\varphi_{\K}(\confhat)$, which verifies by construction 
\[\psi=\gene_{s_\psi}\varphi^*,\]
therefore $\psi\in \gene_{s_\psi}\mathcal{C}$.

Proving the integration by parts formula is now elementary~: since $\psi=\gene_{s_{\psi}}\gene_{s_{\psi}}^{-1}\psi$, 
 \begin{align*}
\Ecm(h.\psi )&=\Ecm\pa{h.\gene_{s_{\psi}}\gene^{-1}_{s_{\psi}}\psi}\\
&=-\frac{1}{2}\sum_{a\subset B_{\psi}}\Ecm\pa{\grad\gene_{s_{\psi}}^{-1}\psi.\grad h}\\
&=\sum_{a\subset B_{\psi}}\Ecm\pa{I_{a}(\psi).\grad h}
\end{align*}
which proves identity \eqref{IPPeq}. By conditioning to the canonical state in $B$, one easily obtains that the same is true when the canonical measure is replaced by a grand-canonical measure $\gcm$.
\end{proof}

We finish this section with a technical Lemma. Recall that for any cylinder function $\psi$, we denote by $s_\psi$ the size of its support and for any integer $l$, $l_\psi=l-s_\psi-1$.
\begin{lemm}
\label{lem:techps}
For any $\psi\in \czero+ J^*+\gene \mathcal{C}$, there exists a constant $C(\psi)$ such that for any $l$, $\K\in \Ksett_l$,  $h\in\mathcal{C}$ only depending on sites in $B_l$, $\gamma>0$, and $A\subset B_{l_\psi}$ 
\begin{equation*}
\Ecmlk\pa{h\sum_{x\in A}\tau_x\psi}\leq \gamma C(\psi)|A|+\frac{1}{2\gamma}\dir^{A_\psi}_{l, \K}(h),
\end{equation*}
where we shortened $A_\psi=\{x\in B_{l}, \; d(x,A)\leq s_\psi\}$,  $\dir^{A}_{l, \K}(h)=\Ecmlk(h(-\gene_{A}) h)$ and $\gene_{A}$ is the SSEP generator restricted to jumps with both ends in $A$.
\end{lemm}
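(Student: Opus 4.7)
The plan is to decompose $\psi$ along the direct sum $\czero + J^* + \gene\mathcal{C}$ and prove the bound separately for each type of summand. By bilinearity and by the elementary inequality $ab \leq \gamma a^2 + b^2/(4\gamma)$, it will be enough to control each piece separately at the cost of multiplying the final constant by a factor depending only on $\psi$. Write $\psi = \psi_0 + \cur_1^{\Phi_1} + \cur_2^{\Phi_2} + \gene g$ with $\psi_0\in \czero$, $\Phi_i\in C^1(\ctoruspi)$ and $g\in \mathcal{C}$, all having support of size bounded by $s_\psi$.

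For the $\czero$-part, I would apply the integration by parts formula of Lemma \ref{lem:IPP} to each translate $\tau_x\psi_0$ with $x\in A$, yielding
\[
\Ecmlk\!\left(h\,\tau_x\psi_0\right) = \sum_{a\subset \tau_x B_{s_\psi}} \Ecmlk\!\left(\tau_x I_a(\psi_0)\,\nabla_a h\right),
\]
where $I_a(\psi_0) = \tfrac{1}{2}\nabla_a(-\gene_{s_\psi})^{-1}\psi_0$ is a bounded cylinder function since $s_\psi$ is fixed. Applying $ab\leq \gamma a^2 + b^2/(4\gamma)$ with $a=\tau_x I_a(\psi_0)$ and $b=\nabla_a h$ and summing over $x\in A$, the $a^2$-contribution sums to $\gamma C(\psi_0)|A|$ because $\|I_a(\psi_0)\|_\infty\leq C(\psi_0)$, while each edge $a\subset A_\psi$ appears a bounded number of times in the $b^2$-sum and so contributes at most a constant multiple of $\dir^{A_\psi}_{l,\K}(h)$. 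Adjusting $\gamma$ by a constant factor yields the desired bound for this piece.

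For the current part $\cur_i^{\Phi_i}$, the standard trick is to rewrite it as a discrete gradient after a change of variable: using the identity $\cur_i^{\Phi_i}=\Phi_i(\theta_0)\conf_0(1-\conf_{e_i})-\Phi_i(\theta_{e_i})\conf_{e_i}(1-\conf_0)$ and the invariance of $\mu_{l,\K}$ under the swap $\confhat\mapsto\confhat^{0,e_i}$, one obtains
\[
\Ecmlk\!\left(h\,\cur_i^{\Phi_i}\right) = -\Ecmlk\!\left(\Phi_i(\theta_0)\conf_0(1-\conf_{e_i})\,\nabla_{0,e_i}h\right).
\]
Translating, summing over $x\in A$, and applying again $ab\leq \gamma a^2 + b^2/(4\gamma)$ gives a bound of the form $\gamma C(\Phi_i)|A| + (4\gamma)^{-1}\dir^{A_\psi}_{l,\K}(h)$, using that $\|\Phi_i\|_\infty<\infty$ and that the edges $(x,x+e_i)$ for $x\in A$ all lie in $A_\psi$ and are pairwise distinct.

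For the $\gene g$ part, by the self-adjointness of $\gene$ under $\mu_{l,\K}$,
\[
\Ecmlk\!\left(h\sum_{x\in A}\tau_x\gene g\right) = -\tfrac{1}{2}\sum_{x\in A}\sum_{a\subset \tau_x B_{s_g}} \Ecmlk\!\left(\nabla_a h\,\nabla_a(\tau_x g)\right),
\]
and a Cauchy--Schwarz type split $ab\leq \gamma a^2 + b^2/(4\gamma)$ with $a=\nabla_a(\tau_x g)$ and $b=\nabla_a h$ yields the same shape of bound, since $\|\nabla_a g\|_\infty$ is finite and each edge in $A_\psi$ is used a bounded number of times when summing over $x\in A$. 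Combining the three contributions and absorbing universal constants into $C(\psi)$ gives the claimed inequality. The only mildly delicate point, which I do not expect to be a real obstacle, is the bookkeeping needed to control the edge-multiplicity in the sum over $x\in A$ uniformly in $l$ and $A$; this is handled simply by observing that the depth of interaction is $s_\psi$, so each edge of $A_\psi$ receives contributions from at most $O(s_\psi^2)$ translates.
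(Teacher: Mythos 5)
Your proposal is correct and follows essentially the same route as the paper's proof: decompose $\psi$ into its $\czero$, $J^*$, and $\gene\mathcal{C}$ components, and for each piece pair up the translates $\tau_x\psi$ with a Dirichlet-form-controlled gradient of $h$ via the integration-by-parts formula of Lemma \ref{lem:IPP} (for the $\czero$ part) or via a change of variable $\confhat\mapsto\confhat^{x,x+e_i}$ (for the current and $\gene g$ parts), then use $ab\leq \gamma a^2/2 + b^2/(2\gamma)$ and the bounded edge-multiplicity of $A_\psi$. The only cosmetic difference is that you invoke a uniform $L^\infty$ bound $\|I_a(\psi_0)\|_\infty\leq C(\psi_0)$ where the paper uses the variance identity $\sum_a\Ecmlk(I_a(\tau_x\psi)^2)=\tfrac12\Ecmlk(\psi(-\gene_{B_{s_\psi}}^{-1})\psi)$; both are valid since $\psi_0$ is a fixed cylinder function in the range of $\gene_{s_\psi}$ on a finite domain. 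A minor notational slip: in your integration-by-parts display for $\tau_x\psi_0$, the primitive should read $I_a(\tau_x\psi_0)$ (or equivalently $\tau_x I_{\tau_{-x}a}(\psi_0)$) rather than $\tau_x I_a(\psi_0)$, since the edge $a$ ranges over $B_{s_\psi}(x)$; the intent is clear and this does not affect the argument.
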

\proofthm{Lemma \ref{lem:techps}}{
Since for some constant $C(s_\psi)$, $\sum_{x\in A} \dir^{B_{s_\psi}(x)}_{l, \K}(h) \leq C(s_\psi)\dir^{A_\psi}_{l, \K}(h)$ to establish this result, it is sufficient to prove that for any $x\in A$ and for any positive $\gamma'$,
\begin{equation}
\label{eq:claim1} 
\Ecmlk\pa{h\tau_x\psi}\leq \gamma' C'(\psi)+\frac{1}{2\gamma'}\dir^{B_{s_\psi}(x)}_{l, \K}(h).
\end{equation}
We now establish this last bound for any $\psi\in \czero\cup J^*\cup\gene \mathcal{C}$, which proves the Lemma.

\medskip

Assume first that $\psi=\cur_k^\Phi$ for $k\in\{1,2\}$, and $\Phi\in C^1(\ctoruspi)$. Then, $\Ecmlk\pa{h\tau_x\psi}=\Ecmlk\pa{h\cur_{x,x+e_k}^\Phi},$
where as before $\cur_{x,x+e_k}^\Phi=\Phi(\theta_x)\conf_x(1-\conf_{x+e_k})-\Phi(\theta_{x+e_k})\conf_{x+e_k}(1-\conf_x)$. Thanks to changes of variable $\confhat\mapsto\confhat^{x,x+e_k}$, in the second term, we obtain, using the elementary bound $ab\leq \gamma a^2/2+b^2/2\gamma$ which holds for any $\gamma$,
\begin{equation*}
\Ecmlk\pa{h\tau_x\psi}=-\Ecmlk\pa{\Phi(\theta_x)\nabla_{x,x+e_k}h}\leq \frac{\gamma \norm{\Phi}^2_{\infty}}{2}+\frac{1}{2\gamma}\Ecmlk\pa{(\nabla_{x,x+e_k}h)^2}
\end{equation*}
which proves \eqref{eq:claim1}.
\medskip

We now consider $\psi=\gene f\in \gene \mathcal{C}$. Since $f$ is a local function, fix $s_\psi$ such that $\gene f=\gene_{s_{\psi}}f$. We rewrite 
\begin{multline*}
\Ecmlk\pa{h\tau_x\psi}=\Ecmlk\pa{h\gene_{B_{s_\psi}(x)}(\tau_x f)}=\Ecmlk\pa{(\tau_x f)\gene_{B_{s_\psi}(x)}h}\\
=\sum_{y,y+z\in B_{s_\psi}(x)}\Ecmlk((\tau_xf)\nabla_{x,x+z}h)\leq\frac{\gamma C(s_\psi)\norm{f}_\infty^2}{2}+ \frac{1}{2\gamma}\dir_{l,\K}^{B_{s_\psi}(x)}(h), 
\end{multline*}
as wanted.

\medskip

Only remains the case $\psi\in \czero$, for which \eqref{eq:claim1} is a consequence of the integration by parts formula and is proved similarly 
to the  case $\psi = \gene f$. By definition of $I_a(\psi)$,
\[\sum_{y,y+z\in B_{s_\psi}(x)}\Ecmlk(I_{x,x+z}(\tau_x\psi)^2)=\frac{1}{2}\Ecmlk((\tau_x\psi)(-\gene_{B_{s_\psi}(x)}^{-1})(\tau_x\psi))=\frac{1}{2}\Ecmlk\pa{\psi(-\gene_{B_{s_\psi}}^{-1})\psi}\leq C(\psi),\]
where $C(\psi)$ can be chosen independently of $\K$. Using \eqref{IPPeq}, and this last bound, we obtain
\[\Ecmlk\pa{h\tau_x\psi}=\sum_{y,y+z\in B_{s_\psi}(x)}\Ecmlk\pa{I_{y,y+z}(\tau_x\psi).\nabla_{y,y+z} h }\leq \frac{\gamma C(\psi)}{2} +\frac{1}{2\gamma}\dir_{l,\K}^{B_{s_\psi}(x)}(h),\]
which proves \eqref{eq:claim1} and Lemma \ref{lem:techps}.}

\subsection{Heuristics on $\scal{\cdot}$ and Theorem  \ref{thm:limcovariance}}
\label{heuristicsNG}
The purpose of this section is to explain the variational formula for the limiting covariance $\scal{\psi}$ introduced in Definition \ref{defi:limitingcovariance2}. 
Given the generator $\gene$ of the SSEP on $\Z^2$, for any function $f$ with mean $0$ w.r.t. any canonical measure, consider the  linear application 
\index{$\cdf$\dotfill natural application from $\czero$ to $\textswab{C}_{\param}$}
\begin{equation}\label{cdfdef}\M~: f  \mapsto  \nabla \gene^{-1} \Sigma_f=\pa{\begin{matrix}\nabla_{0,e_1} \gene^{-1} \Sigma_f \\
\nabla_{0,e_2} \gene^{-1} \Sigma_f\end{matrix}}.\end{equation}
A priori, even if $f$ is a local function, $\gene^{-1}f$ is no longer local, and $\nabla \gene^{-1} \Sigma_f$ can therefore involve a infinite number of non-zero contribution, so that $\M$ is not a priori well defined.
However, assuming that $f$ is such that  $\nabla \gene^{-1} \Sigma_f$ is well-defined, the definition above indicates thanks to the translation invariance of $\Sigma_f$ and $\gene^{-1}$, that $\M(f)$ is the germ of a closed form as introduced in Section \ref{subsec:differentialforms}. To illustrate this last remark, we describe the effect  of this application on $\gene \mathcal{C}$ and $J^*$.

\bigskip 

Recall that for $\Phi\in C^1(\ctoruspi)$,  $\cur^{\Phi}_i=\conf^\Phi_0\left(1-\conf_{e_i}\right)-\conf^\Phi_{e_i}\left(1-\conf_0\right)$.
We first investigate the action of $\M$ on the currents $j_i^\Phi$. Consider an infinite configuration $\confhat$ with no particles outside of some large compact set $K$. For the sake of concision, we will call such a configuration \emph{bounded}.  Then, we can write
\[\gene \cro{\sum_{x\in \Z^2}x_i \conf^{\Phi}_x}=\sum_{x\in \Z^2}x_i\gene \conf^{\Phi}_x=\sum_{x\in \Z^2}\tau_x \cur^\Phi_{i}=\Sigma_{\cur^\Phi_i}.\]
Since the configuration was assumed bounded,  both of the sums above are finite, and the identity above is well posed. Coming back to our application $\M$, the previous identity yields 
\[\M(\cur^\Phi_i)=\pa{\begin{matrix}\nabla_{0,e_1} \gene^{-1} \Sigma_{\cur^\Phi_i} \\
\nabla_{0,e_2} \gene^{-1} \Sigma_{\cur^\Phi_i}\end{matrix}}=\pa{\begin{matrix}\nabla_{0,e_1}\sum_{x\in \Z^2}x_i \conf^\Phi_x \\
\nabla_{0,e_2} \sum_{x\in \Z^2}x_i \conf^\Phi_x\end{matrix}}.\]
Since the only positive contribution in the right-hand side above is for $x=e_i$, elementary calculations yield 
\[\M(\cur^\Phi_i)=\curg^{i,\Phi},\]
where the $\curg^{i,\Phi}$'s are the germs of closed forms introduced in equation \eqref{eq:Defcurg}. 
The application $\M$ therefore maps $J^*$ (cf. \eqref{eq:DefJstar}) into 
\[\boldsymbol{\mathfrak{J}}^*:=\left\{\curg^{1,\Phi_1}+\curg^{2,\Phi_2}, \quad \Phi_1, \Phi_2\in C^1(\ctoruspi)\right\}.\]
Since one can also write  $\M(f)=\nabla  \Sigma_{\gene^{-1}f}$,  we can define $\M$ on $ \gene \mathcal{C}$ as
\begin{equation*}\M(\gene f)=\nabla  \sum_{x\in \Z^2}\tau_x \gene^{-1}\gene  f=\boldsymbol{\nabla}\Sigma_f,\end{equation*}
which is the germ of an exact form associated with $f$.

\bigskip

Denote by $\mathfrak{E}^*$ the set of germs of exact forms associated with functions in $\mathcal{C}$, the construction above allow us to define  the bijective application 
\begin{equation*}\func{\cdf}{J^*+\gene \mathcal{C}}{\boldsymbol{\mathfrak{J}}^*+ \mathfrak{E}^*}{\cur_1^{\Phi_1}+\cur_2^{\Phi_2}+\gene f}{\curg^{1,\Phi_1}+\curg^{2,\Phi_2}+\boldsymbol{\nabla}\Sigma_f}.\end{equation*}
Recall that we defined the $L^2$-norm of any closed form $\uf$ as 
\[\norm{\uf}_{2,\param}=\cro{\Egcm\pa{\uf_{1}^2+\uf_{2}^2}}^{1/2}.\] 
According to Proposition \ref{prop:GCF}, we can rewrite for any $\uf\in \bff{T}^{\omega}$,
\begin{equation}\label{norme}\norm{\uf}_{2,\param}^2= \sup_{\substack{g\in \tcal\\
a, b\in \R^2}}\left\{2\Egcm\pa{\uf\cdot(\boldsymbol{\nabla}\Sigma_g+\curg^{a,b})}-\norm{\boldsymbol{\nabla}\Sigma_g+\curg^{a,b}}_{2,\param}^2\right\}.\end{equation}
Define $Ker_{\param}(\cdf)$ the kernel of $\M$ w.r.t $\norm{\;.\;}_{2,\param}$, we can equip $\tzero/Ker_{\param}(\cdf)$  with the norm $\scal{\cdot}^{1/2}$ induced by the mapping $\M$, defined as 
\[\scal{f}=\norm{\cdf(f)}^2_{2,\param}= \sup_{\substack{g\in \tcal\\
a, b\in \R^2}}\left\{2\Egcm\pa{\cdf(f)\cdot(\boldsymbol{\nabla}\Sigma_g+\curg^{a,b})}-\norm{\boldsymbol{\nabla}\Sigma_g+\curg^{a,b}}_{2,\param}^2\right\}.\]
By generalizing the integration by parts formula in the previous section, this formula is strictly analogous to Definition \ref{defi:limitingcovariance1}, and $\cdf$ is therefore an isomorphism
\begin{equation*}\cdf:\pa{\tzero/Ker_{\param}(\cdf)\;,\; \scal{\cdot}}\longrightarrow \pa{\bff{T}^{\omega}=\boldsymbol{\mathfrak{J}}^\omega+ \mathfrak{E}^\omega\;,\;\norm{\cdot}_{2,\param}^2},\end{equation*}
which gives $\tzero/Ker_{\param}(\cdf)$, as stated in Proposition \ref{prop:structureHalpha}, the same structure as $J^{\omega}+ \overline{\gene \tcal}/Ker_{\param}(\cdf) .$

\bigskip

We now  briefly carry on with our heuristics and explain why Theorem \ref{thm:limcovariance} holds, which is rigorously proved in Section \ref{sec:C}. The proof is based on the integration by parts obtained in Subsection \ref{subsec:IPP}. Applying it to $ \sum_{x\in B_{l_{\psi}}}\tau_x\psi$ yields that the quantity in the right-hand side of \eqref{limitcovQtty} can be rewritten  
\begin{equation*}
\lim_{l\to \infty}\frac{1}{(2l+1)^2}\Ecml\pa{\frac{1}{2}\sum_{x, x+z\in B_l}\cro{\nabla_{x,x+z}\gene_{l}^{-1} \sum_{x\in B_{l_{\psi}}}\tau_x\psi}^2}.
\end{equation*}
Assuming that one is able to replace $\mu_{l, \K_l}$ by the translation invariant grand-canonical measure $\mesinv$, and all quantities being ultimately translation invariant, this limit should be the same as
\begin{align*}
\lim_{l\to \infty}\frac{1}{(2l+1)^2}\Egcm\pa{\frac{1}{2}\sum_{x, x+z\in B_l}\cro{\nabla_{x,x+z}\gene_{l}^{-1} \sum_{x\in B_{l_{\psi}}}\tau_x\psi}^2}&=\lim_{l\to \infty}\Egcm\pa{\sum_{i=1,2}\cro{\nabla_{0,e_i}\gene_{l}^{-1} \sum_{x\in B_{l_{\psi}}}\tau_x\psi}^2}\\
&=\norm{\M(\psi)}^2_{2,\param}\\
&=\scal{\psi}.\end{align*}
The rigorous proof of this result, given in the next section, is technical due to the delicate nature of $\gene^{-1}$. 

\subsection{Proof of Theorem \ref{thm:limcovariance}}
\label{sec:C}
In order to prove Theorem \ref{thm:limcovariance}, we need to prove that 
\begin{equation}
\label{eq:aprouvers8}
\lim_{l\to \infty}\frac{1}{(2l+1)^2}\Ecml\pa{(-\gene_{l})^{-1} \sum_{x\in B_{l_{\psi}}}\tau_x \psi\; .\sum_{x\in B_{l_{\varphi}}}\tau_x \varphi}=\;\scal{\psi, \varphi}
\end{equation}
in three cases~:
\begin{enumerate}
\item $\varphi=\psi$ and $\psi\in \gene\mathcal{C}+J^*$,
\item $\varphi\in \tzero$ and $\psi\in \gene\mathcal{C}+J^*$,
\item $\varphi=\psi$ and $\psi\in \tzero$.
\end{enumerate}
The first two cases correspond to Definition \ref{defi:limitingcovariance1}, whereas the last one corresponds to Definition \ref{defi:limitingcovariance2}. The first two cases are easier, we treat them first as a separate Lemma. The uniformity of the convergence will be proved at the end of the section as in \cite{KLB1999}.
\begin{lemm}
\label{lemsup}
Fix $\varphi\in \tzero$ and  $\psi=\gene g+j_1^{\Phi_1}+j_2^{\Phi_2}\in \gene\mathcal{C}+J^*$. For any sequence $(\K_l)$ such that $\param_{\K_l}\to\param$, 
\begin{equation*}
\lim_{l\to\infty}\frac{1}{(2l+1)^2}\Ecml\pa{(-\gene_l^{-1})\sum_{x\in B_{l_\psi}}\tau_x\psi  \cdot\sum_{x\in B_{l_\psi}}\tau_x\psi }=\sum_{i=1}^2\Egcm\pa{\conf_0(1-\conf_{e_i})\cro{\Phi_i(\theta_0)+\Sigma_g(\confhat^{0,e_i})-\Sigma_g }^2},
\end{equation*}
and 
\begin{equation}
\label{eq:lemsup2}
\lim_{l\to\infty}\frac{1}{(2l+1)^2}\Ecml\pa{(-\gene_l^{-1})\sum_{x\in B_{l_\psi}}\tau_x\psi  \cdot\sum_{x\in B_{l_\varphi}}\tau_x \varphi }=-\Egcm\pa{\varphi\cro{\Sigma_g+\sum_{x\in \Z^2}\pa{x_1\conf^{\Phi_1}_x+x_1\conf^{\Phi_1}_x}}}.
\end{equation}	
\end{lemm}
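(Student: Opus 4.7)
The proof relies on introducing an explicit ``primitive'' of $\psi$ under the restricted generator $\gene_l$. For $l$ large enough, define
\[H_l := \Sigma_g^l + \sum_{x \in B_l}\bigl(x_1\conf_x^{\Phi_1} + x_2 \conf_x^{\Phi_2}\bigr), \qquad \Sigma_g^l := \sum_{y:\, B_{s_g+1}(y) \subset B_l} \tau_y g.\]
A direct summation by parts, combined with the case analysis $(1-\conf_{x+z})\conf_{x+z}^\Phi = 0$, yields the key algebraic identity
\[\gene_l \sum_{x \in B_l} x_i \conf_x^{\Phi_i} = \sum_{y,\, y+e_i \in B_l} \tau_y \cur_i^{\Phi_i},\]
with \emph{no} extra boundary contribution (the coefficient increment $(x+e_i)_i - x_i = 1$ telescopes across the current summation). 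Combined with the straightforward identification of $\gene_l \Sigma_g^l$ up to a local boundary correction, one obtains $\gene_l H_l = F_l + R_l$, where $F_l := \sum_{x \in B_{l_\psi}} \tau_x \psi$ and $R_l$ is a sum of $O(l)$ translates of uniformly bounded functions in $\gene\mathcal{C} + J^*$, supported in an $O(1)$-thick shell along $\partial B_l$.

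Setting $h_l := -\gene_l^{-1} F_l$ (with zero mean under $\mu_{l, \K_l}$, which is possible since $\E_{\mu_{l,\K_l}}(F_l) = 0$ by exchangeability and the canonical structure) and, for the second identity, $k_l := -\gene_l^{-1} G_l$ with $G_l := \sum_{y \in B_{l_\varphi}} \tau_y \varphi$, the self-adjointness of $\gene_l$ in $L^2(\mu_{l,\K_l})$ yields
\[\E_{\mu_{l,\K_l}}(h_l F_l) = -\E_{\mu_{l,\K_l}}(H_l F_l) - \E_{\mu_{l,\K_l}}(h_l R_l), \qquad \E_{\mu_{l,\K_l}}(h_l G_l) = -\E_{\mu_{l,\K_l}}(H_l G_l) - \E_{\mu_{l,\K_l}}(k_l R_l).\]
The boundary contributions are $o(l^2)$: by Cauchy--Schwarz in the $(-\gene_l)$-inner product,
\[[\E_{\mu_{l,\K_l}}(h_l R_l)]^2 \leq \E_{\mu_{l,\K_l}}\bigl(R_l(-\gene_l^{-1}) R_l\bigr) \cdot \dir_{l,\K_l}(h_l),\]
and two applications of Lemma \ref{lem:techps} give $\E_{\mu_{l,\K_l}}(R_l(-\gene_l^{-1}) R_l) = O(l)$ (via its variational characterization, using the $O(l)$-sized support of $R_l$) and $\dir_{l,\K_l}(h_l) = \E_{\mu_{l,\K_l}}(h_l F_l) = O(l^2)$, whence $|\E_{\mu_{l,\K_l}}(h_l R_l)| = O(l^{3/2})$; the estimate for $\E_{\mu_{l,\K_l}}(k_l R_l)$ is identical.

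For the first identity, rewrite $-\E_{\mu_{l,\K_l}}(H_l F_l) = \dir_{l,\K_l}(H_l) + \E_{\mu_{l,\K_l}}(H_l R_l)$, the latter being $o(l^2)$ by the same Cauchy--Schwarz bound applied to $H_l$. Expanding $\dir_{l,\K_l}(H_l)$ edge by edge and using the computation $\nabla_{x,x+e_i}\sum_{y\in B_l} y_i\conf_y^{\Phi_i} = \Phi_i(\theta_x)$ on the event $\{\conf_x(1-\conf_{x+e_i}) = 1\}$, each interior edge contributes
\[\E_{\mu_{l,\K_l}}\!\bigl(\conf_x(1-\conf_{x+e_i})\bigl[\Phi_i(\theta_x) + \nabla_{x,x+e_i}\Sigma_g^l\bigr]^2\bigr).\]
The invariance of $\mu_{l,\K_l}$ under permutations of sites in $B_l$ together with the equivalence of ensembles (applicable since $\param_{\K_l} \to \param$ and the integrand is local) replaces this by $\E_{\mu_\param}\!\bigl(\conf_0(1-\conf_{e_i})[\Phi_i(\theta_0) + \Sigma_g(\confhat^{0,e_i}) - \Sigma_g]^2\bigr) + o(1)$, uniformly over interior edges; summing over the $(2l+1)^2(1+o(1))$ such edges and dividing by $(2l+1)^2$ produces the first limit.

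For the second identity, evaluate $-\E_{\mu_{l,\K_l}}(H_l G_l)/(2l+1)^2$ directly. Two structural observations enter. First, $\sum_{x \in B_l}\conf_x^{\Phi_i}$ is constant on each $\Sigma_l^{\K_l}$; second, $\E_{\mu_{l,\K_l}}(\tau_y \varphi) = 0$ for interior $y$ (this follows from $\varphi \in \tzero \subset \czero$, site exchangeability of $\mu_{l,\K_l}$, and the property $\E_{\mu_{s_\varphi, \K'}}(\varphi) = 0$ for all $\K' \in \Kset_{s_\varphi}$, decomposing over the canonical state of $B_{s_\varphi}$). Together these allow the replacement $x_i \mapsto (x-y)_i$ inside $\E_{\mu_{l,\K_l}}(\tau_y\varphi \cdot \sum_{x\in B_l} x_i \conf_x^{\Phi_i})$ at no cost; after the change of variable $x = y+w$, exchangeability and the equivalence of ensembles yield $\E_{\mu_{l,\K_l}}(\tau_y\varphi \cdot \conf_{y+w}^{\Phi_i}) = \E_{\mu_\param}(\varphi \cdot \conf_w^{\Phi_i}) + o(1)$ for interior $y$ and bounded $|w|$. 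Since $\varphi$ is local and $\E_{\mu_\param}(\varphi) = 0$, only finitely many $w$ contribute non-trivially, and the sum equals $\E_{\mu_\param}(\varphi \cdot \sum_w w_i \conf_w^{\Phi_i})$; the $\Sigma_g^l$-term is handled analogously. The principal obstacle throughout is the careful coordination between the non-translation-invariant canonical measure $\mu_{l,\K_l}$ and the translation-invariant $\mu_\param$, mediated by exchangeability and equivalence of ensembles, together with verifying that each of the several boundary effects (the shell $\partial B_l$, the truncation defining $\Sigma_g^l$, and the finite-volume cutoff of the formal sum $\sum_w w_i \conf_w^{\Phi_i}$) contributes only $o(l^2)$.
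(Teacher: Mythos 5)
Your proof is correct and follows essentially the same strategy as the paper's: both exploit the exact identity $\gene_l\sum_{x\in B_l}x_i\conf_x^{\Phi_i}=\sum_{x,\,x+e_i\in B_l}\tau_x j_i^{\Phi_i}$, decompose $\sum_{x\in B_{l_\psi}}\tau_x\psi$ as $\gene_l(\text{primitive})$ minus an $O(l)$-thick boundary remainder, discard the boundary via the $H^{-1}$-variance bound from Lemma~\ref{lem:techps}, and evaluate the bulk by the equivalence of ensembles. The organizational differences are minor: the paper keeps $\sum_{x\in B_{l_\psi}}\tau_x g$ exactly inside its primitive so its boundary correction $G$ contains only currents, whereas your $R_l$ also absorbs the $\Sigma_g^l$ truncation mismatch; and you replace the paper's forward reference to Lemma~\ref{lem:bornesup} by a direct application of Lemma~\ref{lem:techps} to $\dir_{l,\K_l}(k_l)$, which is slightly cleaner. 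One small point deserves a sentence more care: in the second identity, after the shift $x\mapsto y+w$, the sum over $w$ has $O(l^2)$ terms with weights of size $O(l)$, so it is not enough that the equivalence of ensembles gives $o(1)$ per term for bounded $|w|$; you need the contributions for $|w|>s_\varphi$ to vanish \emph{exactly} under $\mu_{l,\K_l}$, which follows from $\varphi\in\czero$ (condition on the canonical state of $B_{s_\varphi}(y)$, not from $\E_{\mu_\param}(\varphi)=0$, which only controls the limit measure) rather than from the weaker grand-canonical mean-zero property you invoke.
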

\proofthm{Lemma \ref{lemsup}}{
Fix $\psi=\gene g+j_1^{\Phi_1}+j_2^{\Phi_2}\in \gene\mathcal{C}+J^*$, and shorten $\widetilde{B}^i_l=\{x\in B_l, x_i\leq l-1\}$ one easily obtains the identity
\[\sum_{x\in \widetilde{B}^i_l}\tau_xj_i^{\Phi_i}=\gene_{l}\sum_{x\in B_l}x_i\conf_x^{\Phi_i}.\]
Shorten 
\[F=F^{g,\Phi_1, \Phi_2}_l:=\sum_{x\in B_{l_\psi}}\tau_x g +\sum_{\substack{i=1,2,\\x\in B_l}}x_i\conf_x^{\Phi_i} \eqand G=-\sum_{\substack{i=1,2,\\x\in \widetilde{B}^i_l\setminus B_{l_\psi}}}\tau_{x}j_i^{\Phi_i},\]
we can then rewrite $\sum_{x\in B_{l_{\psi}}}\tau_x \psi=\gene_lF+G,$ and therefore 
\begin{equation}
\label{eq1sec8}
\Ecmlk\pa{(-\gene_{l}^{-1}) \sum_{x\in B_{l_{\psi}}}\tau_x \psi\; .\sum_{x\in B_{l_{\psi}}}\tau_x \psi}=\Ecmlk\pa{F (-\gene_l) F}-2\Ecmlk\pa{FG}+\Ecmlk\pa{G(-\gene_{l})^{-1}G}. 
\end{equation}
Writing \[\Ecmlk\pa{G(-\gene_{l})^{-1}G}=\sup_h\{\Ecmlk(Gh)-\dir_{l,\K}(h)\},\]
and using Lemma \ref{lem:techps}, we obtain that the last term in \eqref{eq1sec8} is less than $C(\Phi_1, \Phi_2)|\widetilde{B}^i_l\setminus B_{l_\psi}|=O(l)$, 
and therefore the corresponding contribution vanishes in the limit \eqref{eq:aprouvers8}. 
Regarding the second term, elementary computations yield
\[\Ecml(\conf_y^{\Phi_i}\tau_{x}j_k^{\Phi_k})=C(\1_{\{y=x\}}-\1_{\{y=x+e_k\}}),\]
where we shortened $C=\Ecml(\Phi_i\Phi_k(\theta_0)\conf_0(1-\conf_{e_k}))$, which yields after elementary computations that 
\[\Ecmlk\pa{\sum_{\substack{i=1,2,\\y\in B_l}}y_i\conf_y^{\Phi_i}\sum_{\substack{k=1,2,\\x\in \widetilde{B}^k_l\setminus B_{l_\psi}}}\tau_{x}j_k^{\Phi_k}}=O(l).\]
Similarly, for any $y$ such that $\{x, x+e_k\} \cap B_{s_g}(y)=\emptyset$, we have $\Ecml(\tau_y g\tau_{x}j_k^{\Phi_k})=0$, so that 
\[\Ecml\pa{FG}\leq C(g,\Phi_1,\Phi_2)|\widetilde{B}^i_l\setminus B_{l_\psi}|=O(l)\] 
and thus vanishes as well in the limit \eqref{eq:aprouvers8}. 

\medskip

Finally, the last two contributions in \eqref{eq1sec8} vanish in the limit, and we now only need to compute $\Ecml\pa{F (-\gene_l) F}$, that we split into three parts. We rewrite the first one 
\[\Ecmlk\pa{(-\gene_l)\sum_{x\in B_{l_\psi}}\tau_x g \cdot\sum_{x\in B_{l_\psi}}\tau_x g }=\frac{1}{2}\sum_{y,y+z\in B_l}\Ecmlk\pa{\cro{\nabla_{y,y+z}\sum_{x\in B_{l_\psi}}\tau_x g }^2}.\]
Since $f$ only depends on sites in $ B_{s_g}$, for any $y\in B_{l-2s_g-2}$, we can write $\nabla_{y,y+z}\sum_{x\in B_{l_\psi}}\tau_x g=\nabla_{y,y+z}\Sigma_g,$ where as before $\Sigma_g$ is the formal sum $\sum_{x\in \Z^2}\tau_xg$. Furthermore, for any $y\notin B_{l-2s_g-2}$ 
\[\cro{\nabla_{y,y+z}\sum_{x\in B_{l_\psi}}\tau_x g}^2=\cro{\nabla_{y,y+z}\sum_{\abs{x-y}\leq s_g+2}\tau_x g}^2\leq C(s_g)\norm{g}_\infty^2.\]
Since all the $\nabla_{y,y+z}\Sigma_g$ have the same distribution under $\mu_{l,\K}$ for $y\in B_{l-2s_g-2}$, we can therefore write using the two bounds above
\begin{multline}
\frac{1}{(2l+1)^2}\Ecmlk\pa{(-\gene_l)\sum_{x\in B_{l_\psi}}\tau_x g \cdot\sum_{x\in B_{l_\psi}}\tau_x g }\\
=\frac{|B_{l-2s_g-2}|}{2(2l+1)^2}\sum_{|z|=1}\Ecmlk\pa{\cro{\nabla_{0,z}\Sigma_g }^2}+C(f)O\pa{\frac{|B_l\setminus B_{l-2s_g-2}|}{(2l+1)^2}}=\sum_{i=1}^2\Ecmlk\pa{\cro{\nabla_{0,e_i}\Sigma_g }^2}+C(f)O(1/l). 
\end{multline}
Since $\nabla_{0,e_i}\Sigma_g$ is a local function, the equivalence of ensembles (cf. Proposition \eqref{prop:equivalenceofensembles}) finally yields for any sequence $\K_l$ such that $\param_{\K_l}\to\param$
\[\lim_{l\to\infty}\frac{1}{(2l+1)^2}\Ecml\pa{(-\gene_l)\sum_{x\in B_{l_\psi}}\tau_x g \cdot\sum_{x\in B_{l_\psi}}\tau_x g }
=\sum_{i=1}^2\Egcm\pa{\cro{\nabla_{0,e_i}\Sigma_g }^2} \]
as wanted.

\medskip

Similarly, one obtains straightforwardly after elementary computations
\[\Ecmlk\pa{(-\gene_l)\sum_{\substack{i=1,2,\\x\in B_l}}x_i\conf_x^{\Phi_i} \cdot\sum_{\substack{i=1,2,\\x\in B_l}}x_i\conf_x^{\Phi_i} }=\frac{1}{2}\sum_{y,y+z\in B_l}\Ecmlk\pa{\cro{\nabla_{y,y+z}\conf_y^{\Phi_{i_z}} }^2},\]
where $i_z=k$ iff $z=\pm e_k$. Once again, under $\mu_{l,\K}$, all the terms have the same distribution, and we can rewrite 
\[\frac{1}{(2l+1)^2}\Ecmlk\pa{(-\gene_l)\sum_{\substack{i=1,2,\\x\in B_l}}x_i\conf_x^{\Phi_i} \cdot\sum_{\substack{i=1,2,\\x\in B_l}}x_i\conf_x^{\Phi_i} }=\sum_{i=1}^2\Ecmlk\pa{\cro{\Phi_i(\theta_0)\conf_0(1-\conf_{e_i}) }^2}+C(\Phi_1, \Phi_2)O(1/l),\]
therefore using once again the equivalence of ensembles also yields 
\[\lim_{l\to\infty}\frac{1}{(2l+1)^2}\Ecml\pa{(-\gene_l)\sum_{\substack{i=1,2,\\x\in B_l}}x_i\conf_x^{\Phi_i} \cdot\sum_{\substack{i=1,2,\\x\in B_l}}x_i\conf_x^{\Phi_i} }=\sum_{i=1}^2\Egcm\pa{\cro{\Phi_i(\theta_0)\conf_0(1-\conf_{e_i}) }^2}.\]
Using the fact that $\Ecmlk(f\gene_l g)=-\sum_{y,y+z\in B_l}\Ecmlk([\nabla_{y,y+z}f][\nabla_{y,y+z}g]),$ is is straightforward to adapt the previous estimates to the cross term, and obtain
\[\lim_{l\to\infty}\frac{1}{(2l+1)^2}\Ecml\pa{(-\gene_l)\sum_{\substack{i=1,2,\\x\in B_l}}x_i\conf_x^{\Phi_i} \cdot \sum_{x\in B_{l_\psi}}\tau_x g}=\sum_{i=1}^2\Egcm\pa{\Phi_i(\theta_0)\nabla_{0,e_i}\Sigma_g }.\]
These three estimates finally yield as wanted 
\begin{equation}
\label{eq:FLF}
\lim_{l\to\infty}\frac{1}{(2l+1)^2}\Ecmlk\pa{F (-\gene_l) F}=\sum_{i=1}^2\Egcm\pa{\conf_0(1-\conf_{e_i})[\Phi_i(\theta_0)+\Sigma_g(\confhat^{0,e_i})-\Sigma_g]^2 },
\end{equation}
which proves the first statement of the Lemma.

\bigskip

The second identity in Lemma \ref{lemsup} is proved in a similar way. Using the same notations as for the first identity, we have $\sum_{x\in B_{l_\psi}}\tau_x\psi=\gene_lF +G$, and given $ f\in \tzero$, we rewrite the left-hand side in \eqref{eq:lemsup2}
\[\Ecml\pa{(F+(-\gene_l^{-1})G) \cdot\sum_{x\in B_{l_f}}\tau_x f }.\]
Using once again the equivalence of ensembles, it is easy to prove that 
\begin{equation}
\label{eq:scalFabg}
\lim_{l\to\infty}\frac{1}{(2l+1)^2}\Ecml\pa{F\sum_{x\in B_{l_f}}\tau_xf }=-\Egcm\pa{f\cro{\Sigma_g+\sum_{x\in \Z^2}\pa{x_1\conf^{\Phi_1}_x+x_1\conf^{\Phi_1}_x}}}, 
\end{equation}
therefore we only need to prove that the contribution of $G$ vanishes. This is straightforward, since the contribution of $G$ can be rewritten
\begin{multline*}\frac{1}{(2l+1)^2}\Ecml\pa{(-\gene_l^{-1})G \cdot (-\gene_l)(-\gene_l^{-1})\sum_{x\in B_{l_f}}\tau_xf }\\
=\frac{1}{(2l+1)^2}\cro{\frac{1}{2}\sum_{x,x+z\in B_l}\Ecml\pa{\nabla_{x,x+z}(-\gene_l^{-1})G \cdot \nabla_{x,x+z}(-\gene_l^{-1})\sum_{x\in B_{l_f}}\tau_xf }}.
\end{multline*}
We now use Holder's inequality, and that for any positive $\gamma$, $|ab|\leq \gamma a^2/2+b^2/2\gamma$, to obtain that the absolute value of the left-hand side above is less than
\begin{multline*}
\abs{\frac{1}{(2l+1)^2}\Ecml\pa{(-\gene_l^{-1})G \cdot \sum_{x\in B_{l_f}}\tau_xf }}\\
\leq \frac{\gamma}{2(2l+1)^2}\Ecml\pa{G(-\gene_l^{-1})G} +\frac{1}{2\gamma(2l+1)^2}\Ecml\pa{(-\gene_l^{-1})\sum_{x\in B_{l_f}}\tau_xf \cdot\sum_{x\in B_{l_f}}\tau_xf}. 
\end{multline*}
We already proved that the first term in the right-hand side is $O(\gamma l^{-1})$, whereas  in the limit $l\to\infty$ the second is bounded by $\scal{f}/\gamma$ according to Lemma \ref{lem:bornesup} below. We can therefore choose $\gamma=\sqrt{l}$, to obtain that both terms vanish as $l\to\infty$,  thus concluding the proof of Lemma \ref{lemsup}.
}

We now consider the case $\psi\in \tzero$, which is the main result of this section, and conclude by proving that the convergence is uniform 
and that \eqref{covunif} holds. Thanks to the decomposition of the germs of closed forms obtained in Proposition \ref{prop:GCF} and Lemma \ref{lemsup} above, 
these two steps follow closely Section 7.4 of \cite{KLB1999}, we repeat the proof here for the sake of exhaustivity.
Recall that we denoted for any $\psi\in \tzero$
\begin{equation*}\scal{\psi}=\sup_{\substack{g\in \tcal\\
a, b\in \R^2}}\left\{2\Egcm\pa{\psi.\cro{ \Sigma_g+\sum_{y\in\Z^2} (y.a)\com_y+(y.b)\conf_y}}-\scal{\gene g+\cur^{a,b}}\right\}.\end{equation*}
We split the proof of the third  case $\psi\in\tzero$ in two Lemmas, namely an upper and a lower bound. 
Using the identities obtained in Lemma \ref{lem:techps}, the lower bound is easy to prove.
\begin{lemm}\label{lem:borneinf}Under the assumption of Theorem \ref{thm:limcovariance}, 
\begin{equation}\label{secondemaj}\liml\frac{1}{(2l+1)^2}\Ecml\pa{(-\gene_{l}^{-1}) \sum_{x\in B_{l_{\psi}}}\tau_x \psi\; .\sum_{x\in B_{l_{\psi}}}\tau_x \psi}\geq\; \scal{\psi}.\end{equation}
\end{lemm}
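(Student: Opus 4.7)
My plan is to apply the Rayleigh--Ritz variational formula for $(-\gene_l^{-1})$ combined with the two limit identities already established in Lemma \ref{lemsup}. Writing $\Psi := \sum_{x\in B_{l_\psi}}\tau_x \psi$, the hypothesis $\psi\in\tzero$ ensures $\Ecmlk(\Psi)=0$ for every $\K\in\Ksett_l$, so $\Psi$ lies in the range of $\gene_l$ on each irreducible component of the symmetric exclusion inside $B_l$ (cf. Section \ref{subsec:irreducibility}). The variational principle then reads, for any test function $h$ measurable w.r.t.\ the sites in $B_l$,
\[\Ecml\!\pa{\Psi(-\gene_l^{-1})\Psi}\;\geq\; 2\Ecml(h\Psi)-\Ecml(h(-\gene_l)h).\]

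For any $g\in\tcal$ and $a,b\in\R^2$, set $\Phi_i:=b_i+a_i\omega$, so that $\cur_1^{\Phi_1}+\cur_2^{\Phi_2}=\cur^{a,b}$, and apply the above bound with $h:=-F_l$, where
\[F_l\;:=\;\sum_{y\in B_{l_g}}\tau_y g\;+\;\sum_{\substack{i=1,2\\ x\in B_l}}x_i\,\conf_x^{\Phi_i}\]
is the ``primitive'' used in the proof of Lemma \ref{lemsup}. Equation \eqref{eq:scalFabg} of that lemma, applied with the choice $f=\psi$, gives
\[\lim_{l\to\infty}\frac{-2\,\Ecml(F_l\Psi)}{(2l+1)^2}\;=\;2\Egcm\!\left(\psi\left[\Sigma_g+\sum_{y\in\Z^2}(y\cdot a)\com_y+(y\cdot b)\conf_y\right]\right),\]
while \eqref{eq:FLF}, read through \eqref{scal0}, gives
\[\lim_{l\to\infty}\frac{\Ecml(F_l(-\gene_l)F_l)}{(2l+1)^2}\;=\;\scal{\gene g+\cur^{a,b}}.\]
Inserting these into the variational inequality yields
\[\liminf_{l\to\infty}\frac{\Ecml\!\pa{\Psi(-\gene_l^{-1})\Psi}}{(2l+1)^2}\;\geq\;2\Egcm\!\left(\psi\left[\Sigma_g+\sum_{y\in\Z^2}(y\cdot a)\com_y+(y\cdot b)\conf_y\right]\right)-\scal{\gene g+\cur^{a,b}}.\]

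To conclude, I take the supremum of the right-hand side over all $g\in\tcal$ and $a,b\in\R^2$: by \eqref{varformulascal} this supremum is exactly $\scal{\psi}$, which establishes \eqref{secondemaj}. I do not anticipate any substantive obstacle here, since the lower bound is the ``easy'' direction precisely because the variational formula furnishes it from \emph{any} concrete test function, and both limit identities used above are already contained in Lemma \ref{lemsup}. The genuine difficulty of Theorem \ref{thm:limcovariance} will lie in the matching upper bound (the to-be-stated Lemma \ref{lem:bornesup}), which requires showing that the family $(F_l^{g,a,b})_{g,a,b}$ is asymptotically optimal among all admissible test functions, and which will rely crucially on the decomposition of translation-invariant closed forms proved in Proposition \ref{prop:GCF}.
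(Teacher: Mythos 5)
Your proof is correct and takes essentially the same approach as the paper: both start from the variational formula for $(-\gene_l^{-1})$, restrict the test functions to the family $F_l^{g,a,b}$, invoke equations \eqref{eq:scalFabg} and \eqref{eq:FLF} from Lemma \ref{lemsup}, and take the supremum over $g\in\tcal$ and $a,b\in\R^2$ to recognize the right-hand side of \eqref{varformulascal}. Your choice $h=-F_l$ handles the sign explicitly, whereas the paper lets the symmetry of the test family absorb it — a purely cosmetic difference.
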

\proofthm{Lemma \ref{lem:borneinf}}{{Denote by $\mathcal{C}_l$ the set of local functions measurable w.r.t. sites in $B_l$.} We start by writing the variational formula
\begin{align}\label{FormuleVar}\Ecml\pa{(-\gene_{l}^{-1}) \sum_{x\in B_{l_{\psi}}}\tau_x \psi\; .\sum_{x\in B_{l_{\psi}}}\tau_x \psi}&=\sup_{h\in{\mathcal{C}_l}}\left\{2\Ecml\pa{ h \sum_{x\in B_{l_{\psi}}}\tau_x \psi\; }-\dir_{l,\K_l}(h)\right\}\nonumber\\
&\geq \sup_{h\in \widetilde{\mathcal{T}}^\omega_l}\left\{2\Ecml\pa{h \sum_{x\in B_{l_{\psi}}}\tau_x \psi\; }-\dir_{l,\K_l}(h)\right\},\end{align}
where $\widetilde{\mathcal{T}}^\omega_l$ is the subspace of ${\mathcal{C}_l}$ 
\[\widetilde{\mathcal{T}}^\omega_l=\left\{F_l^{g,a,b}=\sum_{x\in B_{l_g}}\tau_x g+\sum_{x\in B_l} ((a.x)\com_x+(b.x)\conf_x),\quad g\in \tcal, a, b\in \R^2\right\}.\]
As stated in \eqref{eq:scalFabg} the contribution of the first term in \eqref{FormuleVar} is
\[\lim_{l\to \infty}\frac{1}{(2l+1)^2}\Ecml\pa{ \sum_{x\in B_{l_{\psi}}}\tau_x \psi\; .F_l^{g,a,b}}=-\Egcm\pa{\psi \sum_{y\in \Z^2}\cro{\tau_y g+\sum_{i=1}^2 ((a.x)\com_y+(b.y)\conf_y)}}.\]
and we proved in \eqref{eq:FLF} that 
\[\lim_{l\to\infty}\frac{1}{(2l+1)^2}\dir_{l,\K_l}(F_l^{g,a,b})=\scal{\gene g +j^{a,b}}.\]
These two identities prove \eqref{FormuleVar}, and concludes the proof of the Lemma.
}

We now state and prove the upper bound, which is more difficult.
\begin{lemm}\label{lem:bornesup}Under the assumptions of Theorem \ref{thm:limcovariance}, for any $\psi\in \tzero$,
\begin{equation}
\label{eq:bornesupscal}
\liml \frac{1}{(2l+1)^2}\Ecml\pa{(-\gene_{l})^{-1} \sum_{x\in B_{l_{\psi}}}\tau_x \psi\; .\sum_{x\in B_{l_{\psi}}}\tau_x \psi}\leq\;\scal{\psi}. 
\end{equation}
\end{lemm}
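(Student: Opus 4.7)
The plan is to combine the variational formula
\[\frac{1}{(2l+1)^2}\Ecml\pa{(-\gene_{l}^{-1}) F_l \cdot F_l} = \sup_{h}\left\{\frac{1}{(2l+1)^2}\cro{2\Ecml(hF_l) - \dir_{l,\K_l}(h)}\right\},\]
where $F_l=\sum_{x\in B_{l_\psi}}\tau_x\psi$ and the supremum runs over local functions $h$ measurable with respect to $B_l$, with the decomposition of germs of closed forms obtained in Proposition \ref{prop:GCF}. The argument follows the lines of Section 7.4 of \cite{KLB1999}. First, I would argue that the supremum can be restricted without loss to test functions $h\in \tcal_l$ (i.e.\ functions in $T^\omega$ measurable w.r.t.\ $B_l$), exploiting the fact that $\psi\in \tzero\subset T^\omega$: any component of $h$ orthogonal to $T^\omega$ in $L^2(\mu_{l,\K_l})$ contributes nothing to $\Ecml(hF_l)$ but only increases $\dir_{l,\K_l}(h)$, so that projecting $h$ on $T^\omega\cap L^2(\mu_{l,\K_l})$ can only improve the quotient.

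Next, for a sequence $(h_l)_{l\in\N}\subset\tcal_l$ which is $o(1)$-optimal in the restricted variational formula, I would associate a germ of closed form on the infinite lattice. The construction is obtained by translating the gradient field $((\nabla_{x,x+e_i}h_l))_{x,x+e_i\in B_l}$ deep inside $B_l$ and averaging: for $i=1,2$, set formally
\[\uf_l^i(\confhat)\;\simeq\;\frac{1}{|B_{l/2}|}\sum_{x\in B_{l/2}}\tau_{-x}(\nabla_{x,x+e_i}h_l)(\confhat),\]
so that, up to boundary terms, $(\uf_l^1,\uf_l^2)$ is a germ of a closed form in the sense of Definition \ref{defi:GCF}, lying in $\bff{T}^\omega$ since $h_l\in\tcal_l$. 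By translation invariance of $\mesinv$ and an elementary Cauchy--Schwarz estimate, the $L^2(\mesinv)$ norm $\norm{\uf_l}^2_{\param,2}$ is controlled by the normalized Dirichlet form $(2l+1)^{-2}\dir_{l,\K_l}(h_l)$; this last quantity is bounded by $2(2l+1)^{-2}\Ecml(hF_l)$ along the optimizing sequence, which is itself controlled via Lemma \ref{lem:techps} applied to $\psi\in\tzero\subset\czero+J^*+\gene\mathcal{C}$.

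Having a bounded sequence in $L^2(\mesinv)$ of (approximate) germs of closed forms in $\bff{T}^\omega$, I would extract a weakly convergent subsequence $\uf_{l_k}\rightharpoonup\uf\in\bff{T}^\omega$. By Proposition \ref{prop:GCF}, $\bff{T}^\omega=\boldsymbol{\mathfrak{J}}^\omega\oplus\mathfrak E^{\omega}$, so $\uf$ is the strong $L^2(\mesinv)$ limit of a sequence $\boldsymbol{\nabla}\Sigma_{g_n}+\curg^{a_n,b_n}$ with $g_n\in\tcal$ and $a_n,b_n\in\R^2$. Using the convergences established in Lemma \ref{lemsup} (for the cross term $\Ecml(h_lF_l)$) together with the identity \eqref{eq:FLF} (for $\dir_{l,\K_l}(F_l^{g,a,b})$), a diagonal argument combining $l_k\to\infty$ and $n\to\infty$ yields
\begin{multline*}
\limsup_{l\to\infty}\frac{1}{(2l+1)^2}\cro{2\Ecml(h_lF_l)-\dir_{l,\K_l}(h_l)}\\
\leq\;\sup_{\substack{g\in\tcal\\ a,b\in\R^2}}\left\{2\Egcm\pa{\psi\cro{\Sigma_g+\sum_{y\in\Z^2}((y\cdot a)\com_y+(y\cdot b)\conf_y)}}-\scal{\gene g+\cur^{a,b}}\right\}=\scal{\psi},
\end{multline*}
the last equality being Definition \ref{defi:limitingcovariance2}. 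This establishes \eqref{eq:bornesupscal} and, combined with Lemma \ref{lem:borneinf}, proves Theorem \ref{thm:limcovariance} when $\psi=\varphi\in\tzero$; the remaining cases follow by polarization and Lemma \ref{lemsup}.

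The main obstacle will be the rigorous construction of the germ $\uf_l\in\bff{T}^\omega$ from $h_l$ and the passage to the limit through the closure $\bff{T}^\omega=\overline{\bff{T}_0^\omega}$: one needs to justify that weak limits of the gradient field remain in $\bff{T}^\omega$ despite the boundary corrections in $B_l$, and to combine the weak convergence $\uf_{l_k}\rightharpoonup\uf$ with the strong approximation of $\uf$ by exact forms and currents via a careful diagonal extraction. The uniformity in $\K_l$ stated in the last part of Theorem \ref{thm:limcovariance}, and hence identity \eqref{covunif}, then follows from the continuity of $\param\mapsto\scal{\psi,\varphi}$ (which can be read off from the explicit formulas of Definition \ref{defi:limitingcovariance1} and Corollary \ref{actioncurp}) together with the compactness of $\pset$ proved in Appendix \ref{sec:B}: if the convergence failed uniformly, one could extract along subsequences $\K_l\to\param^*\in\pset$ a contradiction with the already-proved pointwise convergence at $\param^*$.
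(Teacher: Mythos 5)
Your proposal shares the paper's overall architecture — variational formula for the variance, construction of germs of closed forms from an optimizing sequence, reduction via Proposition~\ref{prop:GCF} — and you also correctly note that the supremum may be restricted to $h\in\tcal_l$. However, there is a genuine gap in the central step of your plan.

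You propose to construct the germ directly from the canonical optimizers $h_l$ (which optimize $2\Ecml(hF_l)-\dir_{l,\K_l}(h)$) and claim that ``by translation invariance of $\mesinv$ and Cauchy--Schwarz, the $L^2(\mesinv)$ norm $\norm{\uf_l}^2_{\param,2}$ is controlled by $(2l+1)^{-2}\dir_{l,\K_l}(h_l)$.'' This step does not go through: $\dir_{l,\K_l}(h_l)$ is an expectation under the \emph{canonical} measure $\mu_{l,\K_l}$, while the germ estimate and the decomposition $\bff{T}^\omega=\boldsymbol{\mathfrak{J}}^\omega\oplus\mathfrak{E}^\omega$ live in $L^2(\mesinv)$. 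Translation invariance of $\mesinv$ gives $\Egcm((\uf_l^i)^2)\leq \frac{1}{|B_{l/2}|}\sum_{x}\Egcm((\nabla_{x,x+e_i}h_l)^2)$, but you then have grand-canonical expectations of gradients on the left and canonical expectations on the right; the equivalence of ensembles does not close this gap because $h_l$ depends on all sites of $B_l$, so its support grows with $l$ and the error in Proposition~\ref{prop:equivalenceofensembles} is not controlled. The paper addresses precisely this obstacle with an explicit intermediate phase: it partitions $B_l$ into disjoint boxes $\widetilde\Lambda_1,\dots,\widetilde\Lambda_p$ of fixed size $(2k+1)^2$, uses Lemma~\ref{lem:techps} to discard the boundary region $\Lambda_0$, exploits convexity of the Dirichlet form to localize $h$ to functions measurable w.r.t.\ each $\widetilde\Lambda_j$, and only \emph{then} invokes the equivalence of ensembles on a fixed-size box to pass from $\mu_{l,\K_l}$ to $\mesinv$ (cf.~\eqref{eq:bornesupscal2}). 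After this reduction, the optimizers are Dirichlet-bounded w.r.t.\ $\mesinv$ and the germ/IBP argument proceeds as you sketch. Without this two-scale reduction your germ construction does not produce a bounded sequence in $L^2(\mesinv)$, and the rest of the argument cannot be launched. Your appeal to Lemma~\ref{lemsup} is also slightly misapplied: that lemma computes limits for test functions of the specific form $F_l^{g,a,b}$, not for general optimizers $h_l$, so it cannot be used to evaluate $\Ecml(h_lF_l)$ directly.
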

\proofthm{Lemma \ref{lem:bornesup}}{We start by replacing the canonical measure $\mu_{\K_l,l}$ by the grand-canonical measure $\mu_{\param}$ thanks to the equivalence of ensembles stated in Proposition \ref{prop:equivalenceofensembles}. The main obstacle in doing so is that the support of the function whose expectation we want to estimate grows with $l$. 

By the variational formula for the variance, we can write for any $\K\in \Ksett_l$
\[\Ecmlk\pa{(-\gene_{l})^{-1} \sum_{x\in B_{l_{\psi}}}\tau_x \psi\; .\sum_{x\in B_{l_{\psi}}}\tau_x \psi}=\sup_{h\in \tcall}\left\{2\Ecmlk\pa{ \sum_{x\in B_{l_{\psi}}}\tau_x \psi\; .h}-\dir_{l, \K}(h)\right\}\]
where as before, {$\tcall=\mathcal{C}_l\cap T^\omega$  and} $\dir_{l,\K}(h)=\Ecml\pa{h. (-\gene_l h)}$.
As in the proof of the one-block-estimate, let $k$ be an integer that will go to $\infty$ after $l$, and let us partition $B_l$ into disjoint boxes $\widetilde{\Lambda}_0,\ldots ,\widetilde{\Lambda}_p$, where $p=\lfloor\frac{2l_\psi+1}{2k+1}\rfloor^2$, $\widetilde{\Lambda}_j=B_{2k+1}(x_j)$ for any $1\leq j\leq p$ and some family of sites $x_1,\dots,x_p$, and where we let $\widetilde{\Lambda}_0=B_{l_\psi}\setminus(\cup_{j=1}^p \widetilde{\Lambda}_j)$.
Recall that $s_{\psi}$ is the smallest integer such that $\psi$ is  measurable with respect to the sites in $B_{s_{\psi}}$, we now define 
\[\Lambda_j=\{x\in \widetilde{\Lambda}_j,\; d(x,\widetilde{\Lambda}_j^c)> s_{\psi}\} \eqand \Lambda_0=B_{l_\psi}\setminus(\cup_{j=1}^p \Lambda_j).\]
One easily obtains that for some universal constant $C$, $|\Lambda_0|\leq Cs_{\psi}(l^2/k+lk)$.

Let $h$ be a function in  $\tcall$, we can split
\begin{equation}\label{decompVar}\sum_{x\in B_{l_{\psi}}}\Ecml\pa{ \tau_x \psi\; .h}=\sum_{\substack{j=1,\dots,p\\
x\in \Lambda_j}}\Ecml\pa{ \tau_x \psi\; .h}+\sum_{x\in \Lambda_0}\Ecml\pa{ \tau_x \psi\; .h}.\end{equation}

Letting $\gamma=\sqrt{k}/2$ in Lemma $\ref{lem:techps}$,  for any $l\geq k^2$, the second term is less   than $k^{-1/2}\big[C(\psi)l^2+\dir_{l,\K}(h)\big]$. 
Letting $c_k=1-k^{-1/2}$, for some constant $C(\psi)$, and for any $l\geq k^2,$ the left-hand side of \eqref{eq:bornesupscal} is therefore less than
\[\frac{c_k}{(2l+1)^2}\sup_{h\in \tcall}\left\{\sum_{\substack{j=1,\dots,p\\
x\in \Lambda_j}} \frac{2}{c_k}\Ecmlk\pa{\tau_x \psi\; .h}-\dir_{l, \K}(h)\right\}+\frac{C(\psi)}{\sqrt{k}}.\]
For any $h\in \tcall$, $1\leq j\leq p$ define $h_j=\Ecmlk(h\mid \confhat_y, y\in \widetilde{\Lambda}_j)$, by convexity of the Dirichlet form, we have 
\[\dir_{l,\K}(h)\geq \sum_{j=1}^p\dir^{\widetilde{\Lambda}_j}_{l,\K}(h)\geq \sum_{j=1}^p\dir_{l,\K}^{\widetilde{\Lambda}_j}(h_j),\]
where as before $\dir^{A}_{l,\K}(h)$ is the contribution to the Dirichlet form of edges in $A$. Denoting ${T^\omega_{k,j}}$ the set of functions in ${T^\omega}$ measurable w.r.t. sites in 
$\widetilde{\Lambda}_j$, we can therefore finally bound from above the left-hand side of \eqref{eq:bornesupscal} by
\[\frac{c_k}{(2l+1)^2}\sum_{j=1}^p\sup_{h\in {T^\omega_{k,j}}}\left\{\sum_{x\in \Lambda_j} \frac{2}{c_k}\Ecml\pa{\tau_x \psi\; .h}-\dir^{\widetilde{\Lambda}_j}_{l, \K_l}(h)\right\}+\frac{C(\psi)}{\sqrt{k}}.\]
All the terms in the sum over $j$ are identically distributed, the quantity above is thus less than
\begin{multline*}
\frac{c_k}{(2k+1)^2}\sup_{h\in {T^\omega_k}}\left\{\sum_{x\in B_{k_\psi}} \frac{2}{c_k}\Ecml\pa{\tau_x \psi\; .h}-\dir^{B_{k}}_{l, \K_l}(h)\right\}+\frac{C(\psi)}{\sqrt{k}}\\
=\frac{1}{c_k(2k+1)^2}\Ecml\pa{(-\gene_k^{-1})\sum_{x\in B_{k_\psi}}\tau_x\psi\cdot\sum_{x\in B_{k_\psi}}\tau_x\psi}+\frac{C(\psi)}{\sqrt{k}}. 
\end{multline*}
The quantity inside the expectation is now a local function w.r.t. $l$, we can now let $l\to\infty$ and as $\param_{\K_l}\to \param$, replace $\cml$ by $\gcm$ by the equivalence of ensembles stated in Proposition \ref{prop:equivalenceofensembles}. Letting then $k\to\infty$, we finally obtain
\begin{multline}
\label{eq:bornesupscal2}
\liml \frac{1}{(2l+1)^2}\Ecml\pa{(-\gene_{l})^{-1} \sum_{x\in B_{l_{\psi}}}\tau_x \psi\; .\sum_{x\in B_{l_{\psi}}}\tau_x \psi}\\
\leq\;\limsup_{k\to\infty}\frac{1}{(2k+1)^2}\Egcm\pa{(-\gene_{k})^{-1} \sum_{x\in B_{k_{\psi}}}\tau_x \psi\; .\sum_{x\in B_{k_{\psi}}}\tau_x \psi}. 
\end{multline}

\medskip

By the variational formula for the variance, to prove the Lemma it is enough to show
\begin{equation}
\label{ConvGCM}
\limsup_{k\to\infty}\frac{1}{(2k+1)^2}\sup_{h\in {T^\omega_k}}\left\{2\Egcm\pa{h\sum_{x\in B_{k_{\psi}}}\tau_x \psi}-\dir_{\param,k}(h)\right\}\leq \scal{\psi},
\end{equation}
where we shortened $\dir_{\param,k}(h)=\Egcm(h(-\gene_k) h)$.  According to Lemma \ref{lem:techps}, there exists a constant $C(\psi)$ such that the first term $2\Egcm\pa{ \sum_{x\in B_{k_{\psi}}}\tau_x \psi\; .h}$ is less than $C(\psi)(2k+1)^2+\dir_{\param,k}(h)/2$. For any $h$ such that $\dir_{\param,k}(h)\geq 2C(\psi)(2k+1)^2$, the right-hand side above is therefore negative, and since it vanishes for $h=0$, we can therefore safely assume that the supremum is taken w.r.t. functions  $h\in {T^\omega_k}$ satisfying $\dir_{\param,k}(h)\leq 2C(\psi)(2k+1)^2$.
Using the integration by parts formula of Lemma \ref{lem:IPP} yields 
\[\Egcm\pa{ \tau_x \psi\; .h}=\sum_{x\in B_{\psi}(x)}\Egcm(I_{a}(\tau_x\psi)\grad h),\] 
where $I_{a}(\psi)=(1/2)\grad (-\gene_{s_{\psi}})^{-1}\psi$.
For any edge $a$,  let us denote by $B^{\psi}(a)$ the set of sites $x\in \Z^2$ such that $a $ is in $B_{\psi}(x)$, and $\widetilde{B}_k^{\psi}(a)=B^{\psi}(a)\cap B_{k_{\psi}}$. Note that for any edge $a\in B_{k_{\psi}-s_{\psi}}$, these two sets coincide. The integration by parts formula then yields
\begin{align*}\sum_{x\in B_{k_{\psi}}}\Egcm\pa{ h \tau_x \psi}=&\sum_{a\in B_k}\sum_{x\in \widetilde{B}_k^{\psi}(a)} \Egcm(I_{a}(\tau_x\psi)\grad h)\\
=&\sum_{a\in B_{k}}\sum_{x\in B^{\psi}(a)} \Egcm(I_{a}(\tau_x\psi)\grad h)-\sum_{a\in B_k}\sum_{x\in B^{\psi}\setminus \widetilde{B}_k^{\psi}(a)} \Egcm(I_{a}(\tau_x\psi)\grad h)\\
=&\sum_{a\in B_{k}}\sum_{x\in B^{\psi}(a)} \Egcm(I_{a}(\tau_x\psi)\grad h)-\sum_{a\in B_k\setminus B_{k_{\psi}-s_{\psi}}}\sum_{x\in B^{\psi}\setminus \widetilde{B}_k^{\psi}(a)} \Egcm(I_{a}(\tau_x\psi)\grad h).\end{align*}
For any positive $\gamma$,   
\[\Egcm(I_{a}(\tau_x\psi)\grad h)\leq \frac{1}{2\gamma}\Egcm(I_{a}(\tau_x\psi)^2)+\frac{\gamma }{2}\Egcm((\grad h)^2),\]
since $|B_k\setminus B_{k_{\psi}-s_{\psi}}|\leq C(\psi)k$,  and thanks to the bound on $\dir_{\param,k}(h)$, letting $\gamma=1/\sqrt{k}$, it is then straightforward to obtain 
\[\sum_{a\in B_k\setminus B_{k_{\psi}-s_{\psi}}}\sum_{x\in B^{\psi}\setminus \widetilde{B}_k^{\psi}(a)} \Egcm(I_{a}(\tau_x\psi)\grad h)\leq C(\psi)k^{3/2}.\]
therefore its contribution to the left-hand side of \eqref{ConvGCM} vanishes in the limit $k\to\infty$. Letting $\overline{I}_{a}(\psi)=\sum_{x\in B^{\psi}(a)}I_{a}(\tau_x\psi)$, the left-hand side of equation \eqref{ConvGCM} is therefore less than 
\begin{multline}\label{limite}\limsup_{k\to \infty}\frac{1}{(2k+1)^2} \sup_{h\in {T^\omega_k}} \left\{2\sum_{a\in B_{k}} \Egcm(\overline{I}_{a}(\psi)\grad h) -\dir_{\param,k}(h) \right\}\\
=\lim_{k\to \infty}\frac{1}{(2k+1)^2} \left\{2\sum_{a\in B_{k}} \Egcm(\overline{I}_{a}(\psi)\grad h_k) -\dir_{\param,k}(h_k) \right\}.\end{multline}
for some sequence of functions $h_k\in {T^\omega_k}$ ultimately realizing the limit $k\to\infty$ of the left-hand side.

\bigskip

Thanks to the translation invariance of $\mu_{\param}$, and since $\tau_y \overline{I}_{a }(\psi)= \overline{I}_{\tau_ya }(\psi)$, letting $y=a_1$ be the first site of the edge $a=(a_1,a_2)$, we have
\[\Egcm(\overline{I}_{a}(\psi)\grad h_k)=\Egcm\pa{\overline{I}_{(0, a_2-a_1)}(\psi)\nabla_{(0, a_2-a_1)}\tau_{-a_1} h_k}.\]
A seen before, a simple change of variable yields that $\Egcm\pa{\grad f.\grad g}=\Egcm\pa{\nabla_{-a} f.\nabla_{-a} g}$, from which we deduce 
\[2\sum_{a\in B_{k}} \Egcm(\overline{I}_{a}(\psi)\grad h_k)=4\sum_{i=1,2} \Egcm\pa{\overline{I}_{(0,e_i)}(\psi).\nabla_{(0,e_i)}\sum_{\substack{x \\
x, x+e_i\in B_k}}\tau_{-x} h_k}.\]
Define
\[\uf_{i}^k=\frac{1}{(2k+1)^2}\nabla_{(0,e_i)}\sum_{\substack{x \\
x, x+e_i\in B_k}}\tau_{-x} h_k\in {T^\omega}. \]  
The elementary bound $(\sum_{i=1}^n a_i)^2\leq n\sum_{i=1}^n a_i^2$ yields
\begin{align*}\sum_{i=1,2}\Egcm((\uf_{i}^k)^2)\leq &\frac{2k(2k+1)}{(2k+1)^4}\sum_{\substack{x \\
x, x+e_i\in B_k}}\Egcm\pa{\pa{\nabla_{(x,x+e_i)} h_k}^2}\\
\leq &\frac{1}{(2k+1)^2}\dir_{\param,k}(h_k)
\end{align*}
Thanks to this bound, equation \eqref{limite} yields 
\[\frac{1}{(2k+1)^2}\Egcm\pa{(-\gene_{k})^{-1} \sum_{x\in B_{k_{\psi}}}\tau_x \psi\; .\sum_{x\in B_{k_{\psi}}}\tau_x \psi}\leq \lim_{k\to \infty} \left\{4\sum_{i=1,2} \Egcm(\overline{I}_{(0,e_i)}(\psi).\uf_{i}^k) -\sum_{i=1,2} \Egcm((\uf_{i}^k)^2) \right\},\]
and since we already assumed that for some constant $C(\psi)$, $\dir_{\param,k}(h_k)\leq C(\psi)(2k+1)^2$, the sequence of differential forms $(\uf^{k})_{k\in \N}$ 
is bounded in $L^2(\mu_{\param})$. It is straightforward to check that any of its limit point $\uf=(\uf_1, \uf_2)$ is the germ of a closed form in $\bff{T}^{\omega}$ {in the sense of  Definition \ref{defi:GCF}).

Indeed, given a limit point $\uf$ and a finite path $\gamma$ defined by jumps $x_i,$ $x_i+z_i$, $0\leq i\leq q_\gamma -1$, we can write for the closed form $\ufbar$ associated with $\uf$
\[\Egcm(\1_{\gamma\in \Gamma_c(\confhat)}|I_{\gamma, \ufbar}(\confhat)|)=\lim_{k\to\infty}\Egcm(\1_{\gamma\in \Gamma_c(\confhat)}|I_{\gamma, \ufbar^k}(\confhat)|),\]
where $\ufbar^k$ is the (non-closed) differential form 
\[\ufbar^k_{x,x+z}=\frac{1}{(2k+1)^2}\nabla_{(x,x+z)}\sum_{\substack{y \\
y, y+z\in B_k(x)}}\tau_{-y} h_k.\]
Since $\gamma$ is a finite path, it depends on edges in a finite box $B_n$, with $n$ fixed. In particular, for any $y\in B_{k-n}$, when computing $I_{\gamma, \ufbar^k}(\confhat)$, the contribution of $\tau_{-y}h_k$ vanishes since it involves the complete path. We can therefore write for some constant $C_\gamma$ and any $k>n$, 
\[\Egcm(\1_{\gamma\in \Gamma_c(\confhat)}|I_{\gamma, \ufbar^k}(\confhat)|)\leq \frac{q_\gamma}{(2k+1)^2}\sum_{\substack{y, \; y+e_i\in B_k\\
y\; or \; y+e_i\notin B_{k-n}}}\Egcm\pa{|\nabla_{0,e_i}\tau_{-y}h_k|}\leq \frac{q_\gamma}{(2k+1)^2} \pa{C_{n,k}\dir_{\param,k}(h_k)}^{1/2},\]
where $C_{n,k}\leq c nk$ is the cardinal of the $y$'s such that $y$ and  $y+e_i$ are in $B_k$ and either $y$ or $ y+e_i$ are not in $B_{k-n}$. Since $\dir_{\param,k}(h_k)\leq C(\psi)(2k+1)^2$, the right-hand side above vanishes as $k\to\infty$ for any path $\gamma$. This proves that $ \Egcm(\1_{\gamma\in \Gamma_c(\confhat)}|I_{\gamma, \ufbar}(\confhat)|)=0$ for any path $\gamma$, and any limit point $\uf$ of $(\uf^k)_k$, and in particular $\1_{\gamma\in \Gamma_c(\confhat)}|I_{\gamma, \ufbar}(\confhat)|$ vanishes $\mesinv$-a.s. for any finite path $\gamma$.

}We can therefore write 
\[\frac{1}{(2k+1)^2}\Egcm\pa{(-\gene_{k})^{-1} \sum_{x\in B_{k_{\psi}}}\tau_x \psi\; .\sum_{x\in B_{k_{\psi}}}\tau_x \psi}\leq \sup_{\uf\in \bff{T}^{\omega}} \left\{4\sum_{i=1,2} \Egcm(\overline{I}_{(0,e_i)}(\psi).\uf_{i}) -\sum_{i=1,2} \Egcm(\uf_{i}^2) \right\},\]
{where $\bff{T}^{\omega}$ is the set of germs of closed forms introduced in Definition \ref{defi:GCF}.}

According to Proposition \ref{prop:GCF}, the estimate above becomes
\begin{align*}\frac{1}{(2k+1)^2}\Egcm\Bigg((-\gene_{k})^{-1} &\sum_{x\in B_{k_{\psi}}}\tau_x \psi\; .\sum_{x\in B_{k_{\psi}}}\tau_x \psi\Bigg)\\
&\leq\sup_{\substack{g\in \tcal\\
a,b\in \R^2}}\left\{4\sum_{i=1,2} \Egcm(\overline{I}_{(0,e_i)}(\psi).(\curg^{a,b}_i+\nabla_{(0,e_i)}\Sigma_g)) -\sum_{i=1,2} \Egcm((\curg^{a,b}+\boldsymbol{\nabla}\Sigma_g)^2) \right\}\\
&=\sup_{\substack{g\in \tcal\\
a, b\in \R^2}}\left\{2\Egcm\pa{\psi.\cro{ \Sigma_g+\sum_{y\in\Z^2} (y.a)\com_y+(y.b)\conf_y}}-\scal{\gene g+\cur^{a,b}}\right\}.\end{align*}
The last identity is derived as in the proof of Lemma \ref{lemsup}. The right-hand-side above is $\scal{\cdot}$ as defined in Definition \ref{defi:limitingcovariance2}, which concludes the proof of the upper bound.
}

\bigskip

In order to complete the proof of Theorem \ref{thm:limcovariance}, we still need to prove that the convergence is uniform in $\param$, 
to prove \eqref{covunif}. Let us denote 
\[V_{l,\psi, \varphi}(\param_{\K_l})=\frac{1}{(2l+1)^2}\Ecml\pa{-\gene_{l}^{-1} \sum_{x\in B_{l_{\psi}}}\tau_x \psi\; .\sum_{x\in B_{l_{\varphi}}}\tau_x \varphi},\] 
and let us extend smoothly the domain of definition of $V_{l,\psi, \varphi}$ to $ \pset$. The three previous Lemmas yield that
$V_{l,\psi, \varphi}(\K_l(2l+1)^{-2}))$ converges as $l$ goes to $\infty$ to $\scal{\psi, \varphi}$ as soon as $\K_l$ converges towards the profile $\param$, 
hence in particular, $V_{l,\psi, \varphi}(\param_l)$ converges as $l$ goes to $\infty$ towards $\scal{\psi, \varphi}$ as soon as $\param_l$ goes to 
$\param$. For that reason, $ \scal{\cdot}$ is continuous, and $V_{l,\psi, \varphi}(\param)$ converges uniformly in $\param$ towards $\scal{\psi, \varphi}$ 
as $l$ goes to $\infty$.
This, combined with the three lemmas \ref{lemsup}, \ref{lem:bornesup} and \ref{lem:borneinf}, completes the proof of Theorem \ref{thm:limcovariance}.

\appendix

\section{Possible application~: Coarsening and global order in active Matter}
\label{sec:contexte}
\intro{We give some context on the modeling of collective dynamics and  the rich phenomenology of active matter.}
\subsection{Collective motion among biological organisms}

Collective motion is a widespread phenomenon in nature, and has motivated in the last decades a fruitful and interdisciplinary field of study \cite{PEK1999}.  Such behavior can be observed among many animal species, across many scales of the living spectrum, and in a broad range of environments. Animal swarming usually needs to balance out the benefits of collective behavior (defense against predation, protection of the young ones, increased vigilance) against the drawback of large groups (food hardships, predator multiplication, etc.).

Despite the numerous forms of interaction between individuals, all of these self-organization phenomenons present spontaneous emergence of density fluctuations and long range correlations. This similarity suggests some universality of collective dynamics models \cite{GC2004}, \cite{VZ2012}.
Even though the biological reasons for collective behavior are now well known, the underlying microscopic and macroscopic mechanisms are not yet fully understood. To unveil these mechanisms, numerous aggregation models have been put forward.

These models can be built on two distinct principles. The first approach specifies the macroscopic partial differential equation which rules the evolution of the local density of individuals. The main upside is that one can use the numerous tools developed for solving PDE's. Several examples of such models are presented in Okubo and Levin's book, \cite{OLB2001}. Since it represents an average behavior, this approach to collective dynamics is, however, mainly fitted to describe systems with large number of individuals, and does not take into account the fluctuations to which smaller systems are subject.

The second approach, called \emph{Individual-Based Models} (IBM),  specifies the motion of each individual organism. If the motion of each individual was described realistically (from a biological standpoint), the theoretical study of these models with large number of degrees of freedom would be extremely difficult. For this reason, it is usually preferred to simplify the rules for the motion of each individual, as well as its interaction with the group. A classical simplification is to consider that the interaction of each individual with the group is averaged out   over a large number of its neighbors. This so-called \emph{local field} simplification often allows to obtain explicit results, at the expense however of their biological accuracy (cf. below). 


\subsection{Microscopic active matter models}

In order to represent the direction of the motion of each individual, as well as spatial constraints (e.g. volume of each organism), collective dynamics are often modeled by individual-based \emph{active matter} models. Active matter is characterized by an energy dissipation taking place at the level of each individual particle, which allows it to self-propel, thus yielding an extra degree of freedom representing the direction of its motion. One can therefore obtain a phase transition towards collective motion when these directions align on lengths large with respect to the size of the particles. Active matter models exhibit various behaviors, and in the context of collective motion, two phenomena are particularly important~:

\begin{itemize}
\item when each particle tends to align the direction of its motion to that of its neighbors, one can observe a phase transition between order and disorder depending on the strength of the alignment. This \emph{alignment phase transition} was first observed in an influential model for collective dynamics introduced by Vicsek et al. \cite{Vicsek1995}
\item When the particle's velocity decreases with the local density, congestion effects appear~: particles spend more time where their speed is lower, and therefore tend to accumulate there. This phenomenon, called Motility-Induced Phase Separation (MIPS), was extensively studied in the recent years \cite{CT2008}, \cite{FM2012}, \cite{CT2015}.
\end{itemize}

\subsubsection*{Vicsek model and phase transition in alignment models}

Interest for self-organization phenomenons have grown significantly in statistical physics, where the diversity of such behaviors opens numerous modeling perspectives, and raises new questions regarding out-of-equilibrium systems. Many stochastic models have been introduced to represent specific biological behavior using statistical physics methods and have revealed a phase transition between high density collective motion, and disordered behavior with short range correlations at low densities. 

\medskip 

A pioneering model was proposed in 1995 by Vicsek et al.  They introduce in \cite{Vicsek1995} a general IBM (cf. previous paragraph) to model collective dynamics. In the latter, a large number of particles move in discrete time, and update the direction of their motion to the average direction of the particles in a small neighborhood. The direction of their motion is also submitted to a small noise, which makes the dynamics stochastic.

Despite its relative simplicity, the original model described in \cite{Vicsek1995} is extremely rich, and has given rise to a considerable literature (cf. the review by Viczek and Zafeiris, \cite{VZ2012}). The first article on this model unveiled a phase transition between a high-noise, low-density disordered phase and a low-noise high-density ordered phase. Initially thought to be critical, this transition was later shown to be discontinuous \cite{Chate2004}, with an intermediate region in which an ordered band cruises in a disordered background. It was recently shown that this transition can be understood as a liquid-gas phase separation in which the coexistence phase is organized in a smectic arrangement of finite-width bands traveling collectively \cite{SChT2015}. Numerous extensions and variations on Vicsek's model have been put forward, usually by considering a continuous time dynamics, more pertinent to represent biological organisms.

\medskip

\begin{figure}
        \centering
	\begin{subfigure}{0.40\textwidth}
		\includegraphics[width=6cm]{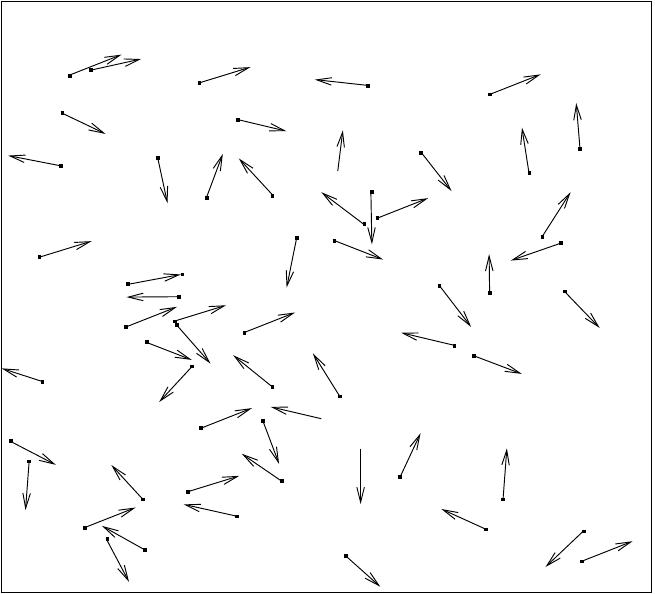}
       		\caption{}
                	\label{Vicsekrholow}
        \end{subfigure}
	\begin{subfigure}{0.4\textwidth}
		\includegraphics[width=6.4cm]{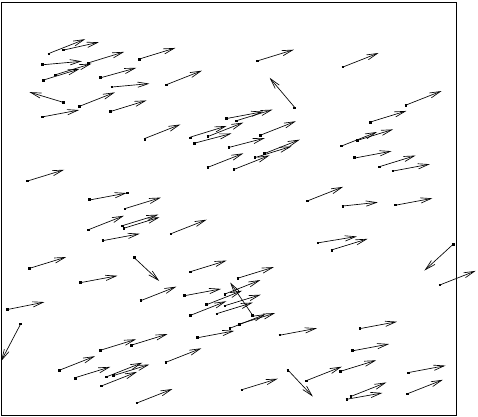}
         		\caption{}
                	\label{Vicsekrhobig}
        \end{subfigure}
\caption{Schematic representation of the phase transition in Vicsek's model. \\
({\sc a}) low density and high noise intensity, \\
({\sc b}) high density and low noise intensity.}
\label{VicsekFig}\end{figure}

Phase transitions are central to the study of collective dynamics, where coherent behavior arise when the alignment becomes strong enough. This notion of phase transition for alignment dynamics is reminiscent of the Ising  and $XY$ models, two classical statistical physics  models. The Ising model is known to have a symmetry breaking phase transition leading to the emergence of a spontaneous magnetization. Unlike the Ising model,  the \emph{$XY$ model} (for which the spins are two-dimensional unit vectors parametrized by angles $\theta\in [0,2\pi[$) does not present in two dimensions this type of symmetry  breaking phase transition, according to the Mermin-Wagner Theorem. This is one of the reasons for the popularity of the  Vicsek model \cite{Vicsek1995}, whose alignment dynamics is reminiscent of the $XY$ model, but unlike the latter presents a phase transition of the magnetization due to the particle motility \cite{TT1995}. Both the Ising and $XY$ models are now well understood. These are \emph{equilibrium models} and they  fall within the formalism of Gibbs measures, which relates to the thermodynamical parameters of the system.

\medskip

Active matter models like Vicsek's are out of equilibrium, and in the case of Vicsek's model, the phase transition is a dynamical phenomenon. The concepts developed for equilibrium models, namely Gibbs measures and free energy, can therefore no longer be used, and despite ample numerical evidence of spontaneous magnetization, (cf. \cite{SCB2015}) mathematically proving a phase transition becomes significantly harder. 

Despite these issues, several exact results have been obtained for systems closely related to Vicsek's model. In 2007, Degond and Motsch notably introduced a continuous time version of Vicsek's model, and derived the macroscopic scaling limit of the system \cite{DM2007}, as well as its microscopic corrections \cite{DY2010}. Their model, which was directly inspired by that of Vicsek et al., is a \emph{locally mean-field model}, where particles interact with all other particles present in a small macroscopic neighborhood. This approximation  simplifies a number of difficulties of out-of-equilibrium systems. In their initial article \cite{DM2007}, Degond and Motsch assume that a law of large number holds for the microscopic system. This was later rigorously proved in \cite{BCC2011}. The phase transition as a function of the noise level, between disordered system and global alignment, was shown in \cite{DFL2014} for this model. Similar results have since been extended to more general forms of alignment, (e.g. \cite{BCC2010}, \cite{CDW2011}, \cite{DLM2011}) and to density dependent parameters \cite{Frouvelle2011}. The evolution of the macroscopic density was also obtained in the particular case where the interaction between individuals is driven by a Morse potential, \cite{CHM2014}, where previously the shape of animal aggregates (e.g. fish schools mills) was only known empirically.

The \emph{Active Ising Model} (AIM) is another alignment model, phenomenologically close to Vicsek's model \cite{SCB2015}, put forward to better understand  collective dynamics. It is less demanding from a computational standpoint, and is extensively studied both numerically and theoretically by Solon and Tailleur in \cite{ST2015}. This model does not rely on the mean-field approximation of the Vicsek's model. The particles (with either ''+'' or ''-'' spins) move independently in a discrete space domain, performing an asymmetric random walk with drift directed according to the particle's spin. In addition to the displacement dynamics, the particles align their spins with the other particles on the same site as in a fully connected Ising model.

It was numerically shown in \cite{ST2015} that the AIM presents, as does Vicsek's, a phase transition depending both on the temperature and the particle density. At low temperature and density, one observes a magnetically neutral gas, whereas at strong temperature and densities, one obtains a strongly polarized liquid. In an intermediary domain, these two phases coexist. The AIM being an out-of-equilibrium model as well, its mathematical study is difficult, mainly because of the lack of mean-field approximation present in Vicsek's model. To our knowledge, there exists to this day no mathematical proof of the phase transition of the AIM. The model considered in this paper is closely related to both the Vicsek and the active Ising models.

\subsubsection*{Motility-Induced Phase Transition (MIPS)} As previously emphasized, a second interesting phenomenon can occur in active matter~: when the motility of the particles decreases as the local particle density increases, one can observe a phase separation between a low density gaseous phase, and condensed clusters. This separation is a direct consequence of particles slowing down in dense areas~: since they spend more time there, they tend to accumulate. This creates the congestion phenomenon called \emph{Motility Induced Phase Transition}, or \emph{MIPS}, which was thoroughly studied in recent years (cf. the review by Cates and Tailleur, \cite{CT2015}). 

This congestion phenomenon can be observed across several types of dynamics, under the condition that the particle's velocities and diffusion constants depend on the local density. One of the most studied  is the \emph{run-and-tumble dynamics} \cite{Berg2004}, which models the behavior of bacteria~: each individual goes in a straight line for a while, and then reorients in another random direction. However, MIPS is not specific to run and tumble dynamics~: it is shown numerically in \cite{CT2013}, \cite{SCT2015} that MIPS also occurs for active Brownian particles, for which each particles motion's direction diffuses, instead of updating at discrete times like in the run-and-tumble dynamics. MIPS can also be observed in lattice models \cite{TTC2011}, or in models with repulsive forces \cite{FM2012}, for which the kinetic slowdown is a consequence of repulsive forces.

\medskip

As already pointed out, one can expect that the active exclusion process investigated in this article may exhibit both MIPS and alignment phase transition. However, mathematically proving this statement is a difficult task, and this claim is left as a conjecture at this point.

\section{General tools}
This appendix regroups a general definitions and results that have been used throughout the proof.

\subsection{Topological setup}
\label{subsec:topo}
This paragraph defines the topological setup we endow the trajectories space for our process with. Denoting by $\meas(\ctorus\times\ctoruspi)$ the space of {non-negative} measures on the continuous configuration space, and 
\[\Mhat=D\pa{[0,T],\meas(\ctorus\times\ctoruspi)}\] 
the space of right-continuous and left-limited trajectories of measures on $\ctorus\times \ctoruspi$. Each trajectory $\confhat^{[0,T]}$ of our process admits a natural image in $\Mhat$ through its empirical measure 
\begin{equation}\label{mesemp}\pi_t^N\pa{\confhat^{[0,T]}}=\frac{1}{N^2}\sum_{x\in \torus}\conf_x(t)\delta_{x/N, \theta_x(t)}.\end{equation}
\index{$\meas(\ctorus\times \ctoruspi)$ \dotfill space of measures on $\ctorus\times\ctoruspi$}
\index{$\meas^{[0,T]}$ \dotfill space of c\`adl\`ag traj. on $\meas(\ctorus\times \ctoruspi)$}
\index{$\pi^N_t$ \dotfill empirical measure at time $s$}
Let $(f_k)_{k\in \N}$ be a dense family of functions in $C^{\infty}(\ctorus\times\ctoruspi)$, and assume that $f_0\equiv1$. The weak topology on $\meas(\ctorus\times\ctoruspi)$ is metrizable, by letting 
\begin{equation}\label{metricM}\delta(\pi_0, \pi'_0)=\sum_{k=0}^{\infty}\frac{1}{2^k}\frac{\abs{<\pi_0,f_k>-<\pi'_0,f_k>}}{1+\abs{<\pi_0,f_k>-<\pi'_0,f_k>}}.\end{equation}
Given this metric, $\Mhat$ is endowed with Skorohod's metric, defined as 
\begin{equation}\label{metricMhat}d(\pi,\pi')=\inf_{\kappa\in F}\max\left\{\norm{\kappa},\sup_{[0,T]}\delta(\pi_t,\pi'_{\kappa_t})\right\},\end{equation}
where $F$ is the set of strictly increasing continuous functions from $[0,T]$ into itself, such that $\kappa_0=0$ and $\kappa_T=T$, equipped with the norm
\[\norm{\kappa}=\sup_{s,t\in [0,T]}\left\{\log\cro{\frac{\kappa_s-\kappa_t}{s-t}}\right\}.\]
Now,  $(\Mhat,d)$ is a metric space, and we endow the set $\mathcal{P}(\Mhat)$ of probability measures on $\Mhat$ with the weak topology.

Given the empirical measure  $\pi^N_t$ of the process at time $t$, defined in equation \eqref{mesemp}, define the application \[\func{\pi^N}{\statespace^{[0,T]}}{\Mhat}{\confhat^{[0,T]}}{\pa{\pi^N_t\pa{\confhat^{[0,T]}}}_{t\in[0,T]}},\]  we define 
\begin{equation}\label{QNdef}Q^N=\Prob^{\lambda,\beta}_{\mu^N}\circ \pa{\pi^N}^{-1}\in \mathcal{P}(\Mhat)\end{equation} the pushforward of $\Prob^{\lambda,\beta}_{\mu^N}$ by $\pi^N$. 
\index{$Q^N$ \dotfill distribution of $(\pi^N_t)_{t\in[0,T]}$ for the active exclusion process}

\subsection{Self-diffusion coefficient}
\label{subsec:sdc}
We regroup in this paragraph some useful results regarding the self-diffusion coefficient. Consider on $ \Z^2$, an initial configuration where each site is initially occupied w.p. $\rho\in[0,1]$, and with a tagged particle at the origin. Each particle then follows a symmetric exclusion process with finite range transition matrix $p(\cdot)$, verifying $\sum_z zp(z)=0$, and $p(z)=0$ outside of a finite set of vertices $B$. 
\begin{defi}[Self-Diffusion Coefficient]
Given $\boldsymbol X_t=(X_t^1,\ldots ,X_t^d)$ the position at time $t$ of the tagged particle, the $d$-dimensional \emph{self-diffusion matrix} $D_s=D_s(\rho)$ is defined as 
\begin{equation}\label{matsdApp}x^\dagger D_s x=\lim_{t\to\infty} \frac{\E((x\cdot\boldsymbol X_t)^2)}{t}\quad \forall y\in \R^d,\end{equation}
where $x^\dagger$ is the transposed vector of $x$ and $(\;.\;)$ is the usual inner product in $\R^d$.
\end{defi}
This result follows from \cite{KV1986}. 
The following Lemma gives a variational formula for $D_s$ and was obtained in Spohn \cite{Spohn1990}.
\begin{prop}[Variational formula for the self-diffusion coefficient]
\label{prop:VariationnalSDCh}
The self-diffusion matrix $D_s=D_s(\rho)$ is characterized by the variational formula 
\begin{equation*}x^\dagger D_s x=\inf_{f\in \Sp}\Bigg\{\sum_{i=1,2}\cro{\Egcm\pa{(1-{\conf_{e_i}})\cro{x_i+\tau_{e_i}f\pa{{\conf^{0,e_i}}}-f}^2}+\sum_{y\neq 0, e_i}\Egcm\pa{\cro{\nabla_{0,e_i}\tau_y f}^2}}\Bigg\}.
\end{equation*}
\end{prop}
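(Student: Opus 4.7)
The strategy is to combine the Kipnis--Varadhan martingale decomposition for the tagged particle with the duality between the $H_1$ and $H_{-1}$ norms for the environment process seen from the tagged particle. Let $\xi_t = \tau_{\boldsymbol{X}_t}\conf(t)$ be the configuration seen from the tagged particle (with the convention $\xi_t(0)=1$). Then $(\xi_t)_{t\geq 0}$ is a Markov process on the state space $\Sigma^* = \{\xi\in\{0,1\}^{\Z^2}\,:\,\xi(0)=1\}$, for which the uniform Bernoulli measure $\mesinv$ (conditioned on having a particle at the origin) is reversible and ergodic. Its generator $\mathcal{L}$ can be decomposed as $\mathcal{L} = \mathcal{L}^{env} + \mathcal{L}^{tag}$, where $\mathcal{L}^{env}$ is the generator of jumps of particles other than the tagged one (acting as a symmetric exclusion), and $\mathcal{L}^{tag}$ encodes the jumps of the tagged particle itself.

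The first step is to write the coordinate of the tagged particle as
\[
x\cdot\boldsymbol{X}_t = M_t^x + \int_0^t \Psi_x(\xi_s)\, ds,
\]
where $\Psi_x(\xi) = \sum_{|z|=1}(x\cdot z)(1-\xi(z))p(z)$ is a cylinder function of the environment and $M_t^x$ is a martingale with explicit quadratic variation. Because $\Psi_x$ has mean zero under $\mesinv$ but is not in the image of $\mathcal{L}$, one must apply the Kipnis--Varadhan theorem to control $\int_0^t \Psi_x(\xi_s)\,ds$ by showing that $\Psi_x$ belongs to the Sobolev space $H_{-1}$ associated with $\mathcal{L}$. This yields the existence of the limit in \eqref{matsdApp} and the identity
\[
x^\dagger D_s x = \lim_{t\to\infty}\frac{1}{t}\E\big((x\cdot \boldsymbol{X}_t)^2\big) = \langle M^x\rangle_{\mathrm{asymp}} + 2\,\|\Psi_x\|_{-1}^2 + \text{cross term},
\]
where the norms and the cross term refer to the environment process. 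Regrouping gives a single variational formula
\[
x^\dagger D_s x = \inf_{f\in\Sp}\E_{\mu_{\param}}\!\left[\sum_{|z|=1}p(z)\,(1-\xi(z))\big(x\cdot z + \tau_z f(\xi^{0,z}) - f(\xi)\big)^2\right] + (\text{off-site terms}),
\]
in which the infimum is attained by the $H_1$-representative of $(-\mathcal{L})^{-1}\Psi_x$.

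To identify this expression with the one in the statement, I would split the action of $\mathcal{L}$ on a test function $f\in\Sp$ into (i) the direct contribution of jumps across the origin (giving the boundary term $\E_{\mu_{\param}}((1-\conf_{e_i})[x_i+\tau_{e_i}f(\conf^{0,e_i})-f]^2)$ appearing in the variational formula), and (ii) the contribution of jumps entirely inside the bulk of $\Z^2\setminus\{0\}$, which produces the usual Dirichlet form $\sum_{y\neq 0,e_i}\E_{\mu_{\param}}((\nabla_{0,e_i}\tau_y f)^2)$. Summation by parts and translation invariance of $\mu_{\param}$, together with the fact that $p(\pm e_i) = 1$ (since we consider the nearest-neighbor symmetric exclusion) and $\sum_z z\,p(z)=0$, gather the two contributions in the precise form stated.

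The main obstacle is to justify rigorously that $\Psi_x\in H_{-1}$ and that the infimum over $\Sp$ (rather than over a larger Sobolev completion) already captures the correct value $x^\dagger D_s x$. For the first point, one can use the sector condition together with the symmetric nature of the exclusion dynamics; for the second, the density of cylinder functions in the Hilbert space associated with the Dirichlet form of the environment process allows to approximate the optimal corrector by elements of $\Sp$. These two facts, combined with the Kipnis--Varadhan theorem, close the proof.
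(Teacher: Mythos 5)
The paper gives no proof of this proposition: it is recorded as a known result with a citation to Spohn \cite{Spohn1990}, and Section~2.3 explicitly refers the reader to Section~6 of \cite{KLOB2012} for the full treatment of the tagged particle. Your plan is the route taken in those references --- environment seen from the tagged particle, Kipnis--Varadhan martingale decomposition, $H_{-1}$/$H_1$ duality, then separating the bonds incident to the origin from the bulk bonds in the Dirichlet form --- so the overall strategy is right.

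Two points need sharpening. First, your claim that $\Psi_x\in H_{-1}$ ``can be obtained from the sector condition together with the symmetric nature of the exclusion dynamics'' understates the difficulty: for a reversible generator the sector condition is automatic, yet the $H_{-1}$ estimate $\langle\Psi_x,g\rangle^2\leq C\,\mathcal{D}(g)$ is precisely the hard content of Kipnis--Varadhan \cite{KV1986}, obtained there by a duality argument that reduces the question to the displacement of a finite number of stirred particles; this step cannot be waved away. Second, the decomposition $x^\dagger D_sx=\langle M^x\rangle_{\mathrm{asymp}}+2\|\Psi_x\|_{-1}^2+\text{cross term}$ is misleading, since the cross term does not vanish and keeping the three pieces separate obscures where the infimum originates. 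The cleaner derivation (Spohn's) introduces the resolvent corrector $f_\lambda=(\lambda-\mathcal{L})^{-1}\Psi_x$, observes that $x\cdot\boldsymbol{X}_t-f_\lambda(\xi_t)+f_\lambda(\xi_0)$ is a martingale up to a vanishing error term, identifies $x^\dagger D_sx$ as the limit as $\lambda\to 0$ of its quadratic variation density, and then expands that quadratic variation bond by bond: bonds incident to the origin give $\Egcm\big[(1-\conf_{e_i})(x_i+\tau_{e_i}f(\conf^{0,e_i})-f)^2\big]$ and the remaining bonds give $\sum_{y\neq 0,e_i}\Egcm\big[(\nabla_{0,e_i}\tau_yf)^2\big]$. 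Optimizing over $f$, using the density of cylinder functions as you note, yields the infimum over $\Sp$ in the stated form.
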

 Our system being invariant through coordinates inversions, it is shown in \cite{DMFG1989} that the matrix $D_s$ is diagonal, and can therefore be written \[D_s(\rho)=\sdc(\rho)I.\]
Finally, the regularity of the self-diffusion coefficient follows from  \cite{LOV2001}, and a lower and upper bound was derived by Varadhan in all dimensions by Varadhan in  \cite{Varadhan1994}.
\begin{prop}[Regularity of the self-diffusion coefficient]
\label{prop:regularitySDC}
In any dimension $d\geq 1$, the self-diffusion coefficient $\sdc$ is $C^{\infty}([0,1])$, and for some constant $C>0$, we can write 
\[\frac{1}{C}(1-\rho)\leq \sdc(\rho)\leq C(1-\rho).\]
\end{prop}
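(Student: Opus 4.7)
Both statements are classical results whose detailed proofs are contained in the cited references; my plan is to indicate the architecture of the arguments and cite the original sources rather than reproduce them in full, since Proposition \ref{prop:regularitySDC} is used only as a black box in the body of the paper.

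\emph{Upper bound.} Taking $f \equiv 0$ as a test function in the variational formula of Proposition \ref{prop:VariationnalSDCh} gives immediately
\[ x^\dagger D_s x \;\leq\; \sum_{i=1,2}\Egcm\big((1-\conf_{e_i})\, x_i^{\,2}\big) \;=\; (1-\rho)\,|x|^2, \]
so $\sdc(\rho) \leq 1-\rho$ and the upper half of the inequality holds with constant $1$.

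\emph{Lower bound.} This is the main difficulty and follows Varadhan \cite{Varadhan1994}. One first passes to a dual variational formula expressing $\sdc(\rho)$ as a supremum over a suitable class of germs of closed forms associated with the tagged particle (in the spirit of Section \ref{subsec:differentialforms} of this paper). The elementary mechanism behind the linear lower bound is that the tagged particle can jump only if a nearest-neighbor site is vacant, which has probability $1-\rho$ at equilibrium; producing a lower bound of the form $\sdc(\rho) \geq c(1-\rho)$ is therefore a matter of showing that, conditionally on this event, the corrected mobility remains bounded away from zero uniformly in $\rho$. The main obstacle is the backflow effect: each jump of the tagged particle displaces the vacancy through the surrounding equilibrium configuration, and this perturbation must be controlled uniformly as $\rho\to 1$ (where the environment becomes nearly frozen). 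Varadhan handles this by projecting onto the subspace of closed forms supported on single-particle displacements and estimating the orthogonal complement via a Dirichlet-form inequality, which yields the desired bound $\sdc(\rho) \geq c(1-\rho)$ in any dimension $d\geq 1$.

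\emph{Smoothness.} The $C^\infty$ regularity of $\sdc$ on $[0,1]$ is the main theorem of Landim--Olla--Varadhan \cite{LOV2001}. The strategy is to differentiate the variational formula repeatedly in $\rho$ and control the successive derivatives of the optimizer via sharp resolvent estimates for the symmetric exclusion generator in the frame of the tagged particle. Smoothness in the interior $\rho\in(0,1)$ follows from these resolvent bounds together with an implicit-function-type argument applied to the Euler--Lagrange equation of the variational problem. Smoothness at the endpoints requires a separate rescaling: near $\rho=1$ one factors out the $(1-\rho)$ prefactor already established by the linear bounds and shows that the ratio $\sdc(\rho)/(1-\rho)$ extends smoothly, while near $\rho=0$ a perturbative expansion around a single isolated tagged particle provides absolute convergence of all derivatives. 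Both endpoints present technical subtleties (loss of coercivity at $\rho=1$, degeneracy of the reference measure at $\rho=0$), and this is where the hardest work of \cite{LOV2001} lies; I would not reproduce this analysis and would simply quote the theorem.
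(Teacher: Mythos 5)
Your proposal matches the paper's treatment: the paper offers no proof of Proposition~\ref{prop:regularitySDC} and simply cites \cite{Varadhan1994} for the two-sided linear bounds and \cite{LOV2001} for the $C^{\infty}$ regularity, exactly the references you invoke. Your observation that the upper bound $\sdc(\rho)\leq 1-\rho$ drops out immediately by taking $f\equiv 0$ in the variational formula of Proposition~\ref{prop:VariationnalSDCh} is a correct small addition beyond what the paper writes, but the overall strategy — defer the hard content (lower bound, smoothness, endpoint behavior) to the cited literature — is identical.
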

Finally, we prove a result that we postponed in during the proof of Proposition \ref{prop:currentsdecomposition}.
\begin{prop}[Conductivity matrix]
\label{prop:scalvar}
Fix $\param\in \pset$, let $\curohat=(\curohat_1, \curohat_2)$, where as before
\[\curohat_i=[\omega(\theta_0)-\Egcm(\omega)]\conf_0(1-\conf_{e_i})-[\omega(\theta_{e_i})-\Egcm(\omega)]\conf_{e_i}(1-\conf_0).\] 
Recall that we defined the conductivity matrix $Q=Q^\omega$ as
\begin{equation*}
x^\dagger Q x=\inf_{g\in \tcal}\scal{x\cdot\curohat+\gene g },
\end{equation*}
then, we have the identity
\begin{equation}
\label{eq:Qid}
Q=\alpha V_{\param}(\omega)D_s(\alpha)=\alpha V_{\param}(\omega) d_s(\alpha)I.
\end{equation}
\end{prop}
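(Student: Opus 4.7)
The plan is to identify the infimum defining $Q$ with the variational formula of Proposition \ref{prop:VariationnalSDCh} for the self-diffusion matrix, times the prefactors $\alpha$ and $V_\param(\omega)$. The argument proceeds in three movements, starting from the explicit formula for $\scal{\cdot}$ given in Definition \ref{defi:limitingcovariance1}.

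\emph{Computing $\scal{x\cdot\curohat+\gene g}$ explicitly.}  For any $g\in\tcal$, Definition \ref{defi:limitingcovariance1} gives
\[
\scal{x\cdot\curohat+\gene g}=\sum_{i=1,2}\Egcm\Big[\conf_0(1-\conf_{e_i})\big(x_i\widehat\omega(\theta_0)+\Sigma_g(\confhat^{0,e_i})-\Sigma_g(\confhat)\big)^2\Big].
\]
I would write $g=\varphi+\sum_y\com_y\psi_y$ with $\varphi,\psi_y\in\Sp$, and use the decomposition $\com_y=\widehat\omega(\theta_y)\conf_y+\Egcm(\omega)\conf_y$ to split $g=g^{(b)}+g^{(\omega)}$ with $g^{(b)}\in\Sp$ angle-blind and $g^{(\omega)}=\sum_y\conf_y\widehat\omega(\theta_y)\psi_y$ carrying the angular fluctuations. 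The first step is to condition on the occupation field $\conf$ and integrate over the angles of the occupied sites, which under $\gcm$ are i.i.d.\ with $\Egcm(\widehat\omega(\theta_y)\mid\conf_y=1)=0$ and $\Egcm(\widehat\omega(\theta_y)^2\mid\conf_y=1)=V_\param(\omega)$. After this integration, every cross term $\widehat\omega(\theta_y)\widehat\omega(\theta_{y'})$ with $y\neq y'$ vanishes, and every mixed term involving $g^{(b)}$ and either $x_i\widehat\omega(\theta_0)$ or $g^{(\omega)}$ vanishes as well. The formula then decouples into a nonnegative angle-blind piece $\sum_i\Egcm[\conf_0(1-\conf_{e_i})(\nabla_{0,e_i}\Sigma_{g^{(b)}})^2]$ (which vanishes for $g^{(b)}\equiv 0$ and so drops out of the infimum) and a $V_\param(\omega)$-weighted angular piece.

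\emph{Reduction to the self-diffusion variational formula.}  For the angular piece, I would use that on $\{\conf_0=1,\conf_{e_i}=0\}$, the swap $\confhat\mapsto\confhat^{0,e_i}$ relocates the angular factor $\widehat\omega(\theta_0)$ from site $0$ to site $e_i$ while leaving angles at other sites unchanged. Expanding $\Sigma_{g^{(\omega)}}(\confhat^{0,e_i})-\Sigma_{g^{(\omega)}}(\confhat)$ and taking the angular expectation, only the ``diagonal'' squared contributions of each $\widehat\omega(\theta_y)^2$ survive. Using translation invariance of $\gcm$, the formal primitive $f=\sum_y\tau_{-y}\psi_y$ (well defined inside gradients) and the change of variable $\conf\mapsto\tau_{-y}\conf$, one checks that the resulting expression equals $\alpha V_\param(\omega)$ times exactly the Dirichlet-type functional appearing in Proposition \ref{prop:VariationnalSDCh}, evaluated at $f$. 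The prefactor $\alpha$ arises from $\Egcm(\conf_0)=\am$ once the conditioning on $\conf_0=1$ that is implicit in the tagged-particle formula is made explicit.

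\emph{Matching the infima.} Taking $g=\conf_0\widehat\omega(\theta_0)f(\conf)$ (i.e.\ $\psi_0=f$, all other $\psi_y=0$, $\varphi=0$) immediately gives $x^\dagger Q x\leq \am V_\param(\omega)\,x^\dagger D_s x$. For the opposite inequality, the computation above shows that for any $g\in\tcal$, $\scal{x\cdot\curohat+\gene g}$ equals $\am V_\param(\omega)$ times the self-diffusion functional evaluated at $f=\sum_y\tau_{-y}\psi_y$ plus nonnegative leftovers (the contribution of $g^{(b)}$ and the $y\neq 0,e_i$ terms that reproduce the second sum in Proposition \ref{prop:VariationnalSDCh}); taking the infimum over $g\in\tcal$---equivalently, over $f\in\Sp$ up to a density argument analogous to the one used in Proposition \ref{prop:NGEuniforme}---yields $x^\dagger Q x\geq\am V_\param(\omega)\,x^\dagger D_s x$. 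Combined with the isotropy $D_s=d_s(\alpha) I$ recalled just after Proposition \ref{prop:VariationnalSDCh}, this gives $Q=\am V_\param(\omega) d_s(\alpha)I$.

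The main obstacle is the bookkeeping in the second step: carefully tracking how the formal sum $\Sigma_{g^{(\omega)}}$ and its edge-gradient interact with the tagged-particle swap after the angular expectation, and verifying that a general family $(\psi_y)_y$ can indeed be repackaged into a single test function $f$ of the SSEP self-diffusion problem. This repackaging relies crucially on translation invariance of $\mu_\param$ and on a cylinder-approximation argument to pass from $f$ in the abstract space $\Sp$ to cylinder $f$'s that yield genuine elements $g\in\tcal$.
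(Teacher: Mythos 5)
Your proposal reproduces the paper's proof in all essentials: you split $g$ into an angle-blind part and a part carrying $\widehat\omega$, use the angular orthogonality $\Egcm(\widehat\omega(\theta_y)\widehat\omega(\theta_{y'})\mid \conf_y=\conf_{y'}=1)=\1_{\{y=y'\}}V_{\param}(\omega)$ after conditioning on $\conf$, repackage the translates of the $\psi_y$'s into a single angle-blind function $f$, and identify the resulting expression with $\am V_{\param}(\omega)$ times the variational formula of Proposition \ref{prop:VariationnalSDCh}. The paper's proof is just this computation carried out term by term, together with the trivial dispatches of $\am\in\{0,1\}$ and $V_{\param}(\omega)=0$ that you elide; the only cosmetic difference is that you describe the angular piece as an inequality with ``nonnegative leftovers'' where the paper derives an exact identity after setting the angle-blind part to zero.
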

\proofthm{Proposition {\color{red}B.4}}{The proof is analogous to that of Theorem 3.2 in \cite{Quastel1992}. We first consider the trivial case $\am=0,1$. Since $d_s(1)=0$, if $\alpha=0,1,$ Proposition \ref{prop:scalvar}  is trivially true, because both sides of the identity vanish. 
Furthermore, assuming that $V_{\param}(\omega)=0$, we then have $\curohat=0$, therefore both sides vanish as well.
We now assume that $\am\in]0,1[$ and $V_{\param}(\omega)>0$. 
By definition \ref{defi:limitingcovariance1},  
\begin{equation*}\scal{x\cdot\curohat+\gene g }\egal \Egcm\pa{\sum_{i=1,2}\cro{x_i\comz_0(1-{\conf_{e_i}})+\nabla_{0,e_i}\sum_{y\in\Z^2}\tau_y g}^2}.\end{equation*}
Since $g\in {T^\omega}$, it can be rewritten $g= \varphi(\conf)+\sum_y\comz_y\psi_y(\conf)$ for some angle-blind functions $\varphi, \psi_y\in\Sp$. As we saw in the proof of the spectral gap, any angle-blind function is orthogonal to any function $\comz_y\psi(\conf)$, therefore
\begin{equation*}\scal{x\cdot\curohat+\gene g }={\sum_{i=1,2}}\Egcm\bigg(\Big[x_i\comz_0(1-{\conf_{e_i}})+\nabla_{0,e_i}\sum_{y,y'\in\Z^2}\tau_{y'} [ \comz_{y}\psi_y]\Big]^2\bigg)+\Egcm\bigg(\Big[\nabla_{0,e_i}\sum_{y\in \Z^2}\varphi\Big]^2\bigg).
\end{equation*}
To minimize the left-hand side, we can choose $\varphi=0$, so that $g$ must take the form $g=\sum_y\comz_y\psi_y$.
Since $g$ is a local function, $\psi'=\sum_y\tau_{-y}\psi_y$ is well defined, and satisfies
$\sum_{y,y'\in\Z^2}\tau_{y'}[ \comz_y\psi_y]=\sum_{y\in \Z^2}\comz_y\tau_y\psi'$, therefore 
\[\scal{x\cdot\curomz+\gene g}={\sum_{i=1,2}}\Egcm\pa{\cro{x_i\comz_0(1-{\conf_{e_i}})+\nabla_{0,e_i}\sum_{y\in\Z^2}\comz_y\tau_y \psi'}^2}.\]

\bigskip

Elementary computations yield $\nabla_{0,e_i}\comz_0 \psi'=-\comz_0(1-\conf_{e_i})\psi'$, $\nabla_{0,e_i}\comz_{e_i} \tau_{e_i}\psi'=\comz_{0}(1-\conf_{e_i}) \tau_{e_i}\psi'\pa{\conf^{0,e_i}}$, and for any $y\neq 0,e_i$, $\comz_y\nabla_{0,e_i}\tau_y \psi'$, therefore
\[\scal{x\cdot\curomz+\gene g}={\sum_{i=1,2}}\Egcm\pa{\cro{\comz_0(1-{\conf_{e_i}})\cro{x_i+\tau_{e_i}\psi'\pa{\conf^{0,e_i}}-\psi'}+\sum_{y\neq 0,e_i}\comz_y\nabla_{0,e_i}\tau_y \psi'}^2}.\]
For any angle-blind function $\psi\in \Sp$, we have already established in Section \ref{subsec:spectralgap} that
\begin{equation*}\Egcm(\comz_y\comz_{y'}\psi({\conf}))=\1_{\{y=y'\}V_{\param}(\omega)\Egcm(\conf_y\psi({\conf}))}.\end{equation*}
The previous quantity now rewrites
\begin{multline*}\scal{x\cdot\curomz+\gene g}\\
=V_{\param}(\omega)\sum_{i=1,2}\cro{\Egcm\pa{\conf_0(1-{\conf_{e_i}})\cro{x_i+\tau_{e_i}\psi'\pa{{\conf^{0,e_i}}}-\psi'}^2}+\sum_{y\neq 0, e_i}\Egcm\pa{\conf_y\cro{\nabla_{0,e_i}\tau_y \psi'}^2}}.\end{multline*}
Denote by $f=\Egcm(\psi'|\conf_0=1){=\am^{-1}\Egcm(\conf_0\psi')}$, where the expectation is taken only w.r.t. $\conf_0$ ($f$ is therefore a function of the configuration $(\eta_x)_{x\neq 0}$), we have 
\[\Egcm(\tau_{e_i}\psi'\pa{{\conf^{0,e_i}}}|\conf_0=1)=\tau_{e_i}f\pa{{\conf^{0,e_i}}}\eqand \Egcm(\nabla_{0,e_i}\tau_y \psi'|\conf_y=1)=\nabla_{0,e_i}\tau_y f,\]
so that 
\begin{multline*}
\scal{x\cdot\curohat+\gene g }={\am} V_{\param}(\omega)\sum_{i=1,2}\cro{\Egcm\pa{(1-{\conf_{e_i}})\cro{x_i+\tau_{e_i}f\pa{{\conf^{0,e_i}}}-f}^2}+\sum_{y\neq 0, e_i}\Egcm\pa{\cro{\nabla_{0,e_i}\tau_y f}^2}}.
\end{multline*}
Taking the infimum over $g\in \tcal$, $f$ spans $\Sp$ which yields as wanted, according to Proposition \ref{prop:VariationnalSDCh}
\[x^\dagger Qx=\scal{x\cdot\curohat+\gene g }=\am V_{\param}(\omega)x^\dagger D_sx,\]
thus concluding the proof.
}

\subsection{Entropy}
Given two measures on a space $E$, let us denote \[H(\mu\mid \nu)=\E_{\nu}\left(\frac{d\mu}{d\nu}\log\frac{d\mu}{d\nu}\right)\] the relative entropy of $\mu$ w.r.t $\nu$.
\begin{prop}[Entropy inequality]
\label{prop:entropyinequality}
 Let $\pi$ be a reference measure on some probability space $E$. Let $f$ be a function $E\to \R$, and $\gamma\in \R^+$. Then, for any {non-negative} measure $\mu$ on $E$, we have 
\[\int f d\mu\leq \frac{1}{\gamma}\left[\log\left(\int e^{\gamma f}d\pi\right) + H(\mu \vert \pi)\right],\]
where $H(\mu\vert\pi)$ is the relative entropy of $\mu$ with respect to $\pi$.  
\end{prop}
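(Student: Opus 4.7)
This is a classical Legendre-duality (or Gibbs variational) statement. My plan is to deduce it from the non-negativity of relative entropy applied to a conveniently tilted reference measure. First I would reduce to the case where $\mu$ is absolutely continuous with respect to $\pi$: if this fails, then $H(\mu\mid\pi)=+\infty$ and the right-hand side is $+\infty$, making the inequality trivial. Similarly, if $\int e^{\gamma f}d\pi=+\infty$ the bound is vacuous, so I may assume $Z:=\int e^{\gamma f}d\pi<\infty$. I may also dispose of the degenerate case $\mu(E)=0$ immediately, and otherwise rescale $\mu$ to a probability measure, since both sides of the inequality are one-homogeneous in $\mu$ modulo a harmless additive constant (this is the only small bookkeeping point, because the statement allows $\mu$ to be a general positive measure rather than a probability measure).

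Second, I would introduce the tilted probability measure $d\nu=Z^{-1}e^{\gamma f}d\pi$ on $E$. Writing $g=d\mu/d\pi$ (assumed to be a probability density, after the reduction above), a direct computation gives
\[
H(\mu\mid\nu)\;=\;\int \log\frac{d\mu}{d\nu}\,d\mu\;=\;\int g\log g\,d\pi\;-\;\gamma\int f\,d\mu\;+\;\log Z\;=\;H(\mu\mid\pi)\;-\;\gamma\int f\,d\mu\;+\;\log\!\int e^{\gamma f}d\pi.
\]
The crucial ingredient is then the non-negativity of the relative entropy, $H(\mu\mid\nu)\geq 0$, which is nothing more than Jensen's inequality applied to the convex function $x\mapsto x\log x$ (or equivalently the strict concavity of $\log$ applied to the density $d\nu/d\mu$). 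Rearranging the above identity immediately yields
\[
\gamma\int f\,d\mu\;\leq\;H(\mu\mid\pi)\;+\;\log\!\int e^{\gamma f}d\pi,
\]
which is the desired bound after dividing by $\gamma>0$.

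\textbf{Main obstacle.} There is no real obstacle: this is a textbook duality inequality and the proof is two lines once the tilted measure is introduced. The only minor point requiring care is the passage from a general positive measure $\mu$ to a probability measure, but this is a routine normalization; alternatively, one may prove the bound directly from the elementary Young-type inequality $ab\leq a\log a - a + e^b$ for $a\geq 0$, $b\in\R$, applied pointwise to $a=g$ and $b=\gamma f-\log Z$ and then integrated against $\pi$, which avoids the normalization step entirely at the price of a slightly less transparent argument.
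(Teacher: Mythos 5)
The paper omits the proof entirely, deferring to Appendix~1.8 of \cite{KLB1999}, so there is no in-paper argument to compare against. Your tilted-measure proof (and the alternate route via $ab\leq a\log a-a+e^b$) is the standard one and is correct when $\mu$ is a probability measure.

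Your normalization step, however, does not go through, and in fact this is because the inequality as stated in the paper is false for non-probability positive measures. Writing $\mu=c\tilde\mu$ with $\tilde\mu$ a probability measure, the definition $H(\mu\mid\pi)=\int\frac{d\mu}{d\pi}\log\frac{d\mu}{d\pi}\,d\pi$ gives $H(c\tilde\mu\mid\pi)=c\,H(\tilde\mu\mid\pi)+c\log c$, while the left-hand side scales by $c$ and the term $\log\int e^{\gamma f}\,d\pi$ does not scale at all; the two statements therefore agree only when $(c-1)\log\int e^{\gamma f}\,d\pi\leq c\log c$, which can fail. Concretely, take $E=\{0,1\}$, $\pi$ uniform, $\gamma=1$, $f(0)=0$, $f(1)=M$, $\mu=2\delta_1$: the claimed bound reads $2M\leq\log\frac{1+e^M}{2}+4\log 2$, which fails already for $M\geq 3$. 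Your Young-inequality route runs into the same obstruction: integrating $gb\leq g\log g-g+e^b$ against $\pi$ produces an extra $1-\mu(E)$ on the right-hand side, which vanishes only for probability measures. Since the paper only ever invokes the proposition with $\mu=\mu^N$ or $\mu=\mu^N_s$ (probability measures), the fix is to read ``positive measure'' as ``probability measure,'' after which your two-line argument is complete as written.
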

 \proofthm{Proposition \ref{prop:entropyinequality}}{The proof is omitted, it can be found in Appendix 1.8 of \cite{KLB1999}.
}
\begin{rema}[Utilization throughout the proof]
This inequality is used throughout this proof with $\mu_s^N$ the marginal at time $s$ of the measure of the process started from an initial profile $\mu^N$, and with $\pi=\mesinv$ the equilibrium measure of a symmetric simple exclusion process with grand-canonical parameter $\param$. Then, for any fixed time $s$ and for any function $f$ and any positive $ \gamma$
\[\E_{\mu_s^N}( f) \leq \frac{1}{\gamma}\cro{\log\Egcm\pa{ e^{\gamma f} } + H(\mu_s^N \vert \mesinv)}.\]
This inequality will be our main tool to bound expectation w.r.t the measure of our process of vanishing quantities . 
\end{rema}

\subsection{Bound on the largest eigenvalue of a perturbed Markov generator}
\begin{prop}[Largest eigenvalue for a small perturbation of a Markov generator]
\label{prop:markoveigenvalue}
Let us consider a Markov Generator $L$ with positive spectral gap $\gamma$ and a bounded function $V$ with mean $0$ with respect to the equilibrium measure $\mu_{\param}$ of the Markov process. Then, for any small $\varepsilon >0$,  the Largest eigenvalue of the operator $L+\varepsilon V$ can be bounded from above by 
\[ \sup_{f}\left\{\varepsilon\Egcm(Vf^2)+\Egcm(fLf)\right\}\leq \frac{\varepsilon^2}{A-2\varepsilon \gamma\norm{V}_{\infty}}\Egcm\pa{V(-L)^{-1}V},\]
where the supremum in the variational formula is taken among the probability densities $f$ w.r.t $\mu_{\param}$.
\end{prop}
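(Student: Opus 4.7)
The inequality is a standard Rayleigh--Schr\"odinger perturbative bound for the self-adjoint operator $L+\varepsilon V$ on $L^2(\mu_{\param})$. The supremum is a variational formula for the largest eigenvalue, taken over $f$ with $\Egcm(f^2)=1$. Since $\Egcm(V)=0$ and $L$ has spectral gap $\gamma>0$, the Poisson equation $(-L)g = V$ admits a unique centered solution $g=(-L)^{-1}V$, orthogonal to constants in $L^2(\mu_{\param})$; its energy is exactly $\Egcm\bigl(V(-L)^{-1}V\bigr) = \Egcm\bigl(g(-L)g\bigr)$. The strategy is to split any trial function into its projection onto constants and its orthogonal part, and treat separately the linear-in-$V$ contribution (which will produce the claimed $\varepsilon^{2}$ term by completing the square) and the quadratic-in-$\phi$ contribution (which is absorbed into the Dirichlet form thanks to the spectral gap).

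Concretely, fix $f$ with $\Egcm(f^2)=1$, and decompose $f = a\cdot\mathbf 1+\phi$ with $a=\Egcm(f)$ and $\phi\perp\mathbf 1$, so that $a^2+\Egcm(\phi^2)=1$. Since $L\mathbf 1=0$, one has $\Egcm(fLf)=\Egcm(\phi L\phi) = -D$ with $D:=\Egcm(\phi(-L)\phi)\geq \gamma\,\Egcm(\phi^2)$. Using $\Egcm(V)=0$, expand
\[
\Egcm(Vf^2)\;=\;2a\,\Egcm(V\phi)+\Egcm(V\phi^2).
\]
For the cross term, apply the Cauchy--Schwarz inequality in the spectral form of $-L$ on $\mathbf 1^\perp$ (where $-L$ is invertible),
\[
\bigl|\Egcm(V\phi)\bigr|^{2}\;\leq\;\Egcm\bigl(V(-L)^{-1}V\bigr)\cdot\Egcm\bigl(\phi(-L)\phi\bigr)\;=\;W\cdot D,
\]
where $W:=\Egcm(V(-L)^{-1}V)$. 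For the quadratic term, use the pointwise bound $|V|\leq\norm{V}_{\infty}$ together with the spectral gap:
\[
\bigl|\Egcm(V\phi^{2})\bigr|\;\leq\;\norm{V}_{\infty}\,\Egcm(\phi^{2})\;\leq\;\frac{\norm{V}_{\infty}}{\gamma}\,D.
\]

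Combining these two estimates and setting $\beta:=1-\varepsilon\norm{V}_{\infty}/\gamma$ (positive for $\varepsilon$ small enough), one obtains
\[
\varepsilon\Egcm(Vf^{2})+\Egcm(fLf)\;\leq\;-\beta D+2\varepsilon|a|\sqrt{W}\sqrt{D}.
\]
Completing the square in the variable $\sqrt{D}$ and using $a^{2}\leq 1$ yields
\[
\varepsilon\Egcm(Vf^{2})+\Egcm(fLf)\;\leq\;\frac{\varepsilon^{2}a^{2}W}{\beta}\;\leq\;\frac{\varepsilon^{2}\,\Egcm\!\bigl(V(-L)^{-1}V\bigr)}{1-\varepsilon\norm{V}_{\infty}/\gamma},
\]
uniformly in $f$, which is exactly the claimed inequality (modulo the cosmetic discrepancy in the denominator of the statement, where the constant ``$A$'' is undefined and the factor $\gamma$ vs.\ $1/\gamma$ appears to be a typo; the structure $(\text{const}-c\varepsilon)^{-1}$ coming from the spectral-gap absorption of the $\Egcm(V\phi^{2})$ term is precisely what we produce).

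\noindent\textbf{Main obstacle.} The only delicate point is the quadratic term $\Egcm(V\phi^{2})$, which is \emph{not} linear in $\phi$ and therefore cannot be treated by spectral Cauchy--Schwarz alone; this is where both $\norm{V}_{\infty}$ and the spectral gap $\gamma$ must be used jointly, and it is exactly this step which produces the $\varepsilon$-dependent correction in the denominator and imposes the smallness condition on $\varepsilon$. Everything else (the cross-term estimate and the completion of the square) is a straightforward Rayleigh quotient manipulation.
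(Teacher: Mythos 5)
Your argument is correct, and in fact the paper itself gives no proof here: it simply refers to Theorem~A3.1.1, p.~375 of \cite{KLB1999}, whose proof follows exactly the scheme you use (split $f=a\mathbf 1+\phi$ with $\phi\perp\mathbf 1$, spectral Cauchy--Schwarz on the linear term, $L^\infty\times$spectral-gap on the quadratic term, complete the square in $\sqrt D$). Three remarks.

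First, you are right that the displayed constant in the statement is corrupted: the intended denominator, as in \cite{KLB1999} and as one can check from the way the proposition is invoked in equation preceding Proposition~\ref{prop:TCL}, is $1-2\varepsilon\gamma^{-1}\norm{V}_\infty$; ``$A$'' should be $1$ and the spectral gap enters through its inverse.

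Second, the ``probability density'' phrasing in the statement is not consistent with the expression $\varepsilon\Egcm(Vf^2)+\Egcm(fLf)$; as you implicitly do, one should read the supremum as being over $f$ with $\Egcm(f^2)=1$ (equivalently $f=\sqrt h$ with $h$ a density, since by Perron--Frobenius the top eigenfunction of $L+\varepsilon V$ can be taken nonnegative). Your bound is proved for all $f$ with unit $L^2$-norm, which is \emph{a fortiori} a bound over that smaller class.

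Third, your completion of the square actually yields the sharper denominator $1-\varepsilon\norm{V}_\infty/\gamma$: after using $\abss{\Egcm(V\phi)}\leq\sqrt{W D}$ with $W=\Egcm(V(-L)^{-1}V)$ and $\abss{\Egcm(V\phi^2)}\leq\gamma^{-1}\norm{V}_\infty D$, the maximum of $-\big(1-\varepsilon\gamma^{-1}\norm{V}_\infty\big)D+2\varepsilon\abss{a}\sqrt{W}\sqrt D$ over $D\geq 0$ is $\varepsilon^2 a^2 W\big/\big(1-\varepsilon\gamma^{-1}\norm{V}_\infty\big)$, and $a^2\leq 1$. Since $\pa{1-x}^{-1}\leq\pa{1-2x}^{-1}$ for $0\leq x<1/2$, your estimate implies the stated one; the extra factor $2$ in \cite{KLB1999} is simply a harmless loss, and in the application in Section~\ref{subsec:localvariance} either constant disappears in the $N\to\infty$ limit. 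The one implicit hypothesis you should record is reversibility of $L$ with respect to $\mesinv$, which is needed for the spectral calculus $(-L)^{\pm 1/2}$ on $\mathbf 1^\perp$; it holds in the intended application ($L=\gene_l$).
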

The proof of this result is omitted, it is  given in Theorem A3.1.1, p.375 in \cite{KLB1999}.

\section{Space of grand-canonical parameters}
\label{sec:B}
In this appendix, we prove some useful results regarding the space of parameters $(\pset, \normm{\cdot})$ introduced in Section \ref{subsec:canonicalmeasures}.
\subsection{Equivalence of ensembles}
\label{subsec:equivalenceensembles}
\begin{prop}[Equivalence of ensembles]
\label{prop:equivalenceofensembles}
Let $f$ be a cylinder function (in the sense of Definition \ref{def:conf}),  we have \[\liml \sup_{\K\in \Kset_l}\abs{\E_{l, \K}(f)-\E_{\param_{\K}}(f)}\to 0,\]
where the first measure is the projection along sets with $\K$ particles in $B_l$, whereas the second is the grand-canonical measure with parameter $\param=\param_{\K}$ introduced in Definition \ref{defi:paramK}.
\end{prop}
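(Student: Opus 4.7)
The plan is to reduce the claim to a total variation estimate between the marginal laws of the canonical state in $B_{s_f}$, and then to control that total variation by a classical combinatorial argument. Let $s_f$ be fixed so that $f$ is measurable with respect to the sites in $B_{s_f}$, write $N = (2l+1)^2$ and $s = \abss{B_{s_f}}$, and denote by $\K_f(\confhat)$ the canonical state in $B_{s_f}$ of a configuration $\confhat$. The key observation is that $\cmlk$ coincides with $\mu_{\param_{\K}}(\,\cdot\mid \Sigma_l^{\K})$, and that both $\cmlk$ and $\mu_{\param_{\K}}$ are exchangeable under permutations of sites and of the assignment of angles to particles. Consequently, the conditional distribution of $\confhat\vert_{B_{s_f}}$ given $\K_f$ is identical under both measures: the $K_f$ particles are placed uniformly at random on the sites of $B_{s_f}$, and the angles in $\Theta_{K_f}$ are assigned to them uniformly. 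Setting $g(\K_f) = \Ecmlk(f \mid \K_f) = \E_{\param_{\K}}(f\mid \K_f)$, which is the same bounded function under both measures with $\norm{g}_\infty \leq \norm{f}_\infty$, one obtains
\[\abs{\Ecmlk(f) - \E_{\param_{\K}}(f)} \;\leq\; \norm{f}_{\infty} \sum_{\K_f} \abs{\cmlk(\K_f) - \mu_{\param_{\K}}(\K_f)},\]
so the problem reduces to showing that this total variation vanishes uniformly in $\K \in \Kset_l$ as $l \to \infty$.

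The next step is to describe both marginal laws of $\K_f$ explicitly. Under $\cmlk$, this law is a two-stage sampling \emph{without replacement}: $K_f \sim \mathrm{Hyp}(N,K,s)$, and conditionally on $K_f$, $\Theta_{K_f}$ is a uniformly chosen size-$K_f$ sub-multiset of $\Theta_K$. Under $\mu_{\param_{\K}}$, it is the analogous sampling \emph{with replacement}: $K_f \sim \mathrm{Bin}(s, K/N)$ and, conditionally on $K_f$, $\Theta_{K_f}$ is an i.i.d.\ sample of size $K_f$ from the empirical measure $K^{-1}\sum_k \delta_{\theta_k}$. A standard coupling then splits the total variation into a contribution from the counts, of order $s^2/N$ by the hypergeometric/binomial comparison in the bulk regime, and a contribution from the angles, of order $\binom{s}{2}/K$, namely the probability of a collision in a with-replacement sample of size at most $s$ from a set of size $K$. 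Both contributions vanish as $l\to\infty$ whenever both $K$ and $N-K$ tend to infinity.

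The hard part will be uniformity in $\K\in\Kset_l$ at the extreme densities. When $K$ stays bounded (say $K \leq \log N$), both laws of $K_f$ concentrate on $K_f = 0$ with error $O(sK/N)$, and the angle collision event $\{K_f \geq 2\}$ has probability $O((sK/N)^2)$, so the total variation vanishes uniformly. When $N-K$ stays bounded, the particle-hole duality $\conf_x \mapsto 1-\conf_x$ reduces the count estimate to the previous low-density regime for the holes, while the angle collision bound $O(s^2/K)$ still vanishes because $K \geq N - O(1) \to \infty$. Combining these two extreme regimes with the bulk estimate yields the desired uniform convergence of the total variation to zero, which concludes the proof.
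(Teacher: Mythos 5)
Your proposal is correct in spirit and shares the core mechanism with the paper's proof --- both reduce the equivalence of ensembles to a comparison of sampling \emph{with} versus \emph{without} replacement --- but the execution is genuinely different, and the paper's is considerably more economical. You first condition on the canonical state $\K_f$ in $B_{s_f}$, then split the total variation between the two laws of $\K_f$ into a count contribution (hypergeometric vs.\ binomial) and an angle-collision contribution, and finally patch up the extreme-density regimes by particle--hole duality and by arguing that $K_f$ concentrates at $0$ when $K$ is small. The paper instead applies the with/without-replacement comparison directly to the \emph{joint} distribution of the $M$ site values in $B_{s_f}$, treating them as samples from the $L=(2l+1)^2$ occupation variables in $B_l$: the total variation is bounded by summing over index tuples $\mathbf{i}\in\{1,\dots,L\}^M$, and both the ``repeated-index'' and ``distinct-index'' contributions equal $1-L(L-1)\cdots(L-M+1)/L^M$, which depends only on $L$ and $M$ and so is uniform in $\K$ automatically. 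This sidesteps your entire case analysis: there is no need to separate counts from angles, no Hypergeometric-vs-Binomial estimate to cite, no birthday-collision bound to justify at low $K$, and no duality trick. It also avoids a measure-theoretic subtlety implicit in your ``$\sum_{\K_f}$'', since $\K_f$ lives in a space carrying Dirac masses at the $\theta_k$'s. Your argument can be made rigorous (the TV chain rule under a coupling of the first marginals holds, and your estimates $O(s^2/N)$ and $O\!\left(\binom{s}{2}/K\right)$ do vanish in the respective regimes), but it buys nothing over the paper's one-shot joint comparison and imports the burden of justifying several auxiliary facts.
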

\proofthm{Proposition \ref{prop:equivalenceofensembles}}{The proof of this result is quite elementary, and is a matter of carefully writing expectations for a random sampling with (grand-canonical measures) and without (canonical measures) replacement. 

The proof of this problem can be reduced to the following~: Consider two samplings of $M$ occupation variables, chosen among $L$ fixed  possible values 
\[\{\confhat^1,\dots,\confhat^L\}\in\Sigma_1^L:=\{(\delta,\theta)\in \{0,1\}\times \ctoruspi,\; \theta=0 \mbox{ if }\delta=0\}^L.\]
The first sampling is made without replacement to represent the canonical measure $\mu_{l, \K}$, and the sampled items will be denoted $X_1,\dots,X_M$, where each $X_i$ is of the form $(\delta,\theta)$. The second sampling is made with replacement to represent the grand-canonical measure $\mu_{\param_{\K}}$, and will be denoted $Y_1,\ldots ,Y_M$. let us denote by $\xi^L$ the set 
\[\xi^L=\{\confhat^1,\dots,\confhat^L\},\]
and denote by $\E_{\xi^L}$ the expectation w.r.t. the two samplings $(X_i)$ and $(Y_i)$ given $\xi^L$. Further denote by $I_{L,M}=\{1,\dots,L\}^M$, ${\bf{i}}=(i_1,\dots,i_M)$ the elements of $I_{L,M}$, and $D_{L,M}$ and $C_{L,M}$ its two subsets
\[D_{L,M}=\{(i_1,\dots,i_M)\in I_{L,M} \; \;\big| \;\; i_1\neq \cdots \neq i_M\}, \quad \mbox{ and }\quad C_{L,M}=I_{L,M}\setminus D_{L,M}\]
Then, for any function 
\[g:\Sigma_1^M\to \R,\]
we have 
\begin{multline*}\abs{\E_{\xi^L}(g(X_1,\ldots ,X_M))-\E_{\xi^L}(g(Y_1,\ldots ,Y_M))}\\
\leq \norm{g}_{\infty}\sum_{{\bf i}\in I_{L,M}}\abs{\Prob_{\xi^L}\Big[(X_1,\dots,X_M)=(\confhat^{i_1},\dots,\confhat^{i_M})\Big]-\Prob_{\xi^L}\Big[(Y_1,\dots,Y_M)=(\confhat^{i_1},\dots,\confhat^{i_M})\Big]}\\
= \norm{g}_{\infty}\sum_{{\bf i}\in D_{L,M}}\abs{\Prob_{\xi^L}\Big[(X_1,\dots,X_M)=(\confhat^{i_1},\dots,\confhat^{i_M})\Big]-\Prob_{\xi^L}\Big[(Y_1,\dots,Y_M)=(\confhat^{i_1},\dots,\confhat^{i_M})\Big]}\\
+\norm{g}_{\infty}\sum_{{\bf i}\in C_{L,M}}\Prob_{\xi^L}\Big[(Y_1,\dots,Y_M)=(\confhat^{i_1},\dots,\confhat^{i_M})\Big].\end{multline*}
The sum on the last line is the probability that at least two indexes among the $M$ we chosen uniformly in $\{1,\dots,L\}$ are equal. This probability is 
\[\sum_{{\bf i}\in C_{L,M}}\Prob_{\xi^L}\Big[(Y_1,\dots,Y_M)=(\confhat^{i_1},\dots,\confhat^{i_M})\Big]=1-\frac{L(L-1)\cdots (L-M+1)}{L^M},\]
which for $M$ fixed vanishes uniformly in $\xi^L$ as $L\to\infty$. We now take a look at the other term, for which we write
\begin{align*}
\sum_{{\bf i}\in D_{L,M}}\bigg|\Prob_{\xi^L}\Big[(X_1,\dots,X_M)=(\confhat^{i_1},\dots,\confhat^{i_M})\Big]&-\Prob_{\xi^L}\Big[(Y_1,\dots,Y_M)=(\confhat^{i_1},\dots,\confhat^{i_M})\Big]\bigg|\\
&=\sum_{{\bf i}\in D_{L,M}}\abs{\frac{1}{L(L-1)\cdots (L-M+1)}-\frac{1}{L^M}}\\
&=1-\frac{L(L-1)\cdots (L-M+1)}{L^M},
 \end{align*}
which also vanishes uniformly in $\xi^L$ as $L\to\infty$. We can therefore write for any bounded function $g$ depending on $M$ sites
\[\sup_{\xi^L\in \Sigma_1^L}\abs{\E_{\xi^L}(g(X_1,\ldots ,X_M))-\E_{\xi^L}(g(Y_1,\ldots ,Y_M))}\leq \norm{g}_{\infty}C(M)o_L(1),\]
thus proving Proposition \ref{prop:equivalenceofensembles}.
}

\subsection{Regularity of the grand-canonical measures in their parameter}
\label{subsec:lipshitzcontinuity}
\begin{prop}\label{prop:Lipschitzcontinuity}
Consider the set of local profiles $\pset$ equipped with the norm $\normm{\cdot}$ defined in Definition \ref{defi:convparam}. Then, given a function $g\in {\mathcal C}$, the application
\[\func{\Psi_g}{(\pset, \normm{\cdot})}{\R}{\param}{\Egcm(g)}\]
is Lipschitz-continuous with Lipschitz constant depending on the function $g$.
\end{prop}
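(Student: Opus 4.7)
The plan is to expand the expectation $\Egcm(g)$ as a multilinear functional of $\param$ and then use a telescoping argument together with the defining duality of $\normm{\cdot}$ against $C^1$ test functions on $\ctoruspi$.

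First, since $g\in\mathcal{C}$, fix a finite box $B_{s_g}\subset\Z^2$ on whose sites $g$ depends, and denote $k=|B_{s_g}|$. Using the product structure of the grand-canonical measure $\gcm$, I would expand
\[\Egcm(g)\egal\sum_{A\subset B_{s_g}}(1-\am)^{k-|A|}\int_{\ctoruspi^A}g_A(\theta_A)\prod_{x\in A}\param(d\theta_x),\]
where $g_A(\theta_A)$ denotes the value of $g$ when the sites in $A$ are occupied with angles $\theta_A=(\theta_x)_{x\in A}$ and the sites in $B_{s_g}\setminus A$ are empty. By Definition \ref{def:conf}, $g_A$ is $C^1$ in each coordinate $\theta_x$, with both $\norm{g_A}_\infty$ and $\norm{\partial_{\theta_x}g_A}_\infty$ uniformly bounded by some constant $M_g$ depending only on $g$.

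Next, to compare $\Egcm(g)$ and $\E_{\param'}(g)$, I would argue subset by subset. For a fixed $A$ of size $m\leq k$, I would insert the telescoping decomposition
\[\prod_{x\in A}\param(d\theta_x)-\prod_{x\in A}\param'(d\theta_x)\;=\;\sum_{j=1}^{m}\Big(\prod_{i<j}\param(d\theta_{x_i})\Big)\big(\param(d\theta_{x_j})-\param'(d\theta_{x_j})\big)\Big(\prod_{i>j}\param'(d\theta_{x_i})\Big),\]
which I integrate against $g_A$. At step $j$, all angles except $\theta_{x_j}$ are integrated first. Because $g_A$ is $C^1$ in $\theta_{x_j}$ uniformly in the remaining variables, the partially integrated function
\[h_j(\theta):=\int g_A(\theta_A)\big\vert_{\theta_{x_j}=\theta}\prod_{i\neq j}\mu^{(i)}(d\theta_{x_i})\]
(with $\mu^{(i)}\in\{\param,\param'\}$ chosen according to the telescoping) remains $C^1$ with $\norm{h_j}_\infty\leq M_g$ and $\norm{h'_j}_\infty\leq M_g$, since each measure $\mu^{(i)}$ has total mass at most $1$. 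The prefactor $(1-\am)^{k-m}$ and the analogous replacement $(1-\am)^{k-m}-(1-\am')^{k-m}$ for the empty-site contribution are both controlled by $|\am-\am'|$, which itself is bounded by $\normm{\param-\param'}$ since the constant function $1$ lies in $B^*$.

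Combining these estimates, each telescoping step produces a term of the form $\int h_j\,d(\param-\param')$ with $h_j$ in a bounded subset of $C^1(\ctoruspi)$, hence bounded by $M_g\normm{\param-\param'}$ by Definition \ref{defi:convparam}. Summing over the $j$'s, the subsets $A$, and adding the contribution coming from the factors $(1-\am)^{k-|A|}$ yields a bound of the form $C(g)\normm{\param-\param'}$ with $C(g)$ depending only on $k$ and $M_g$, which is the desired Lipschitz continuity.

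The only mildly delicate step will be checking that the partial integrals $h_j$ genuinely belong to the unit ball $B^*$ up to a constant, i.e.\ that differentiating under the integral sign against $\param$ or $\param'$ produces a derivative bounded by $M_g$; this follows directly from dominated convergence applied to $\partial_{\theta_{x_j}}g_A$, which is uniformly bounded since $g$ is a cylinder function.
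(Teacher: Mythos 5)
Your proposal is correct and follows essentially the same approach as the paper: in both, $\Egcm(g)$ is reduced to integrals against products of copies of $\param$ by conditioning on which sites in the support of $g$ are occupied, and then a one-angle-at-a-time comparison (your explicit telescope, the paper's induction on the number of sites) together with the duality defining $\normm{\cdot}$ yields the Lipschitz bound. The only cosmetic difference is that you carry the factors $(1-\am)^{k-|A|}$ explicitly and control their variation via the test function $1\in B^*$, whereas the paper first reduces to functions vanishing on empty sites and then superposes; both amount to the same bookkeeping.
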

\proofthm{Proposition \ref{prop:Lipschitzcontinuity}}{Let us consider a cylinder function $g$ depending only on vertices $x_1,\ldots ,x_M$, and let us start by assuming that $g$ vanishes as soon as one of the sites $x_1,\ldots ,x_M$ is empty. We can then rewrite $g(\confhat)$ as $\conf_{x_1}\ldots \conf_{x_M}g(\theta_{x_1},\ldots ,\theta_{x_M})$, and 
\[\Egcm(g)=\int_{\theta_1}\ldots \int_{\theta_M}g(\theta_{x_1},\ldots ,\theta_{x_M})d\param(\theta_{x_1})\ldots d\param(\theta_{x_M}).\]
We can now proceed by recurrence on $M$. Given a function $g$ depending only on a site $x_1$, and for any two grand-canonical parameters $\param$ and $\param'$ we can write  
\begin{align*}\Egcm(g)-\E_{\param'}(g)=\norm{g}^*\int_{\theta_{x_1}}\frac{g(\theta_{x_1})}{\norm{g}^*}d(\param-\param')(\theta_{x_1})\leq \norm{g}^*\normm{\param-\param'}\end{align*}
Assuming now that the proposition is true for any function depending on $M-1$ sites, and considering a function $g$ depending on $M$ vertices, we can write 
\begin{align}\label{rec}\Egcm(g)-\E_{\param'}(g)= \Egcm\pa{\Egcm(g\mid \confhat_{x_2},\ldots ,\confhat_{x_M})}-\E_{\param'}\pa{\E_{\param'}(g\mid \confhat_{x_2},\ldots ,\confhat_{x_M})}.\end{align}
Fix any angle $\theta$, and let $g^{\theta}$ be the function $g^{\theta}(\confhat)=g(\theta, \theta_{x_2,\ldots ,\theta_{x_M}})$, we can write thanks to the recurrence hypothesis that 
\[\abs{\Egcm(g^{\theta})-\E_{\param'}(g^{\theta})}\leq C_{\theta}\normm{\param-\param'},\] which, integrated in $\theta$ against $\param'$, yields
\[\abs{\E_{\param'}\pa{\E_{\param'}(g\mid \confhat_{x_2},\ldots ,\confhat_{x_M})}-\E_{\param'}\pa{\E_{\param}(g\mid \confhat_{x_2},\ldots ,\confhat_{x_M})}}\leq C_1\normm{\param-\param'},\] 
On the other hand, we can also write 
\[\abs{\E_{\param}\pa{\E_{\param}(g\mid \confhat_{x_2},\ldots ,\confhat_{x_M})}-\E_{\param'}\pa{\E_{\param}(g\mid \confhat_{x_2},\ldots ,\confhat_{x_M})}}\leq C_2\normm{\param-\param'},\] 
therefore \eqref{rec} yields that 
\[\abs{\Egcm(g)-\E_{\param'}(g)}\leq (C^1+C^2)\normm{\param-\param'} ,\]
which is what we wanted to show.

To complete the proof of Proposition \ref{prop:Lipschitzcontinuity}, we now only need to extend the result to functions $g$ which do not necessarily vanish when one site in their domain is empty. This case is easily derived, since any function $g$ depending on vertices $x_1$,\ldots ,$,x_M$ can be rewritten
\begin{equation}\label{decg}g(\confhat_{x_1},\ldots ,\confhat_{x_M})=\sum_{B\subset\{1,\ldots ,M\}}g_B(\theta_{x_i}, i\in B),\end{equation}
where $g_B(\theta_{x_i}, i\in B)$ is defined in the following fashion~: recall that $\confhat_x=(\conf_x, \theta_x)$, with  $\theta_x=0$ if $\conf_x=0$, and let us assume that $B$ is the set of increasing indexes $i_1,\ldots , i_p$, then $g_B$ is defined as 
\[g_B(\theta_{x_{i_1}},\ldots ,\theta_{x_{i_p}})=\conf_{x_{i_1}}\ldots \conf_{x_{i_p}}g((0,0),\ldots ,(0,0), (1,\theta_{x_{i_1}}),(0,0),\ldots ,(0,0),(1,\theta_{x_{i_p}}),(0,0),\ldots ,(0,0)).\]
These functions all vanish whenever one of their depending sites is empty, therefore  according to the beginning of the proof, there exists a family of constants $C_B$ such that for any $B\subset\{1,\ldots ,M\}$ we have \[\abs{\Egcm(g_B)-\E_{\param'}(g_B)}\leq C_B\normm{\param-\param'}.\]
We now only need to let $C=\sum_{B\subset\{1,\ldots ,M\}}C_B$ to obtain thanks to the decomposition \eqref{decg} that \[\abs{\Egcm(g)-\E_{\param'}(g)}\leq C\normm{\param-\param'}\]
as intended. This completes the proof of Proposition \ref{prop:Lipschitzcontinuity}.
}

\subsection{Compactness of the set of grand-canonical parameters}
\label{subsec:compaciteparam}
\begin{prop}[Compactness of $(\pset,\normm{\cdot})$]
\label{prop:psetcompact}
The metric space $(\pset,\normm{\cdot})$ introduced in Definition \ref{defi:convparam} is totally bounded and Cauchy complete, and is therefore compact.
\end{prop}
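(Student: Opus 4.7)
The plan is to verify the two standard ingredients separately: total boundedness via a discretization of $\ctoruspi$, and Cauchy completeness via a Riesz-type representation theorem for the limit functional.

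\textbf{Total boundedness.} For any integer $n \geq 1$, partition $\ctoruspi = [0, 2\pi[$ into $n$ consecutive arcs $A_1, \ldots, A_n$ of length $2\pi/n$, and fix a base point $\theta_i^* \in A_i$ in each. Given $\param \in \pset$, I associate the discrete measure $\widetilde{\param}_n = \sum_{i=1}^n \param(A_i) \delta_{\theta_i^*}$. For any $g \in B^*$, since $\norm{g'}_\infty \leq 1$, $|g(\theta) - g(\theta_i^*)| \leq 2\pi/n$ for $\theta \in A_i$, so
\[
\left| \int g\, d\param - \int g\, d\widetilde{\param}_n \right| \leq \sum_{i=1}^n \int_{A_i} |g(\theta) - g(\theta_i^*)|\, d\param(\theta) \leq \frac{2\pi}{n}\param(\ctoruspi) \leq \frac{2\pi}{n}.
\]
Taking the supremum over $g \in B^*$ gives $\normm{\param - \widetilde{\param}_n} \leq 2\pi/n$. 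The $\widetilde{\param}_n$ still form an infinite set, so I further round each mass $\param(A_i)$ down to the nearest multiple of $1/n$, producing $\widetilde{\widetilde{\param}}_n$ whose coefficients lie in $\{k/n : k \in \{0, \ldots, n\}\}$ with $\sum_i k_i \leq n$. This second approximation costs at most another $\sum_i (1/n)\norm{g}_\infty \leq 1$ times $1/n$ in $\normm{\cdot}$. The resulting set of admissible rounded measures is finite, and together the two approximations give a finite $(2\pi+1)/n$-net for $\pset$.

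\textbf{Cauchy completeness.} Let $(\param_k)_{k \in \N}$ be Cauchy in $(\pset, \normm{\cdot})$. For every nonzero $g \in C^1(\ctoruspi)$ we have $\left|\int g\, d\param_k - \int g\, d\param_l\right| \leq \norm{g}^* \normm{\param_k - \param_l}$, so $\int g\, d\param_k$ is Cauchy in $\R$; define $L(g) = \lim_k \int g\, d\param_k$. Since each $\param_k$ has mass at most $1$, $|L(g)| \leq \norm{g}_\infty$, so $L$ is continuous for the sup norm on the dense subspace $C^1(\ctoruspi) \subset C(\ctoruspi)$, and extends uniquely to a bounded linear functional on $C(\ctoruspi)$. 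By the Riesz representation theorem there exists a signed Radon measure $\param_\infty$ with $L(g) = \int g\, d\param_\infty$. Positivity is inherited by approximating any non-negative continuous $g$ uniformly by smooth non-negative functions, on which $L \geq 0$; and applying $L$ to $g \equiv 1$ yields $\param_\infty(\ctoruspi) = \lim_k \param_k(\ctoruspi) \in [0,1]$, so $\param_\infty \in \pset$.

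\textbf{Convergence in $\normm{\cdot}$.} Finally I must check that $\normm{\param_k - \param_\infty} \to 0$, not just weak convergence. Given $\varepsilon > 0$, pick $N$ such that $\normm{\param_k - \param_l} < \varepsilon$ for all $k, l \geq N$, meaning $|\int g\, d\param_k - \int g\, d\param_l| < \varepsilon$ uniformly in $g \in B^*$. For each \emph{fixed} $g \in B^*$, letting $l \to \infty$ gives $|\int g\, d(\param_k - \param_\infty)| \leq \varepsilon$, and taking the supremum over $g \in B^*$ yields $\normm{\param_k - \param_\infty} \leq \varepsilon$ for $k \geq N$. The potential subtlety here, namely whether the supremum can be interchanged with the limit, is avoided because we fix $g$ before passing to the limit in $l$ and only afterward take the supremum. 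Combined with total boundedness, Cauchy completeness delivers compactness, concluding the proof.
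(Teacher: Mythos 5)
Your Cauchy-completeness argument is correct and in fact somewhat more careful than the paper's: the paper merely states that the limits $L(g)$ define a measure $\param^*$ without explicitly invoking Riesz representation, and it does not check that $\normm{\param_k-\param^*}\to 0$ (as opposed to mere pointwise convergence of the linear functionals). Your step of fixing $g\in B^*$ before passing $l\to\infty$ and only then taking the supremum is exactly the right way to close that loop.

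However, your total boundedness argument has a genuine gap in the rounding step. After forming $\widetilde{\param}_n=\sum_{i=1}^n\param(A_i)\delta_{\theta_i^*}$, you round each mass down to the nearest multiple of $1/n$. The error per arc is then at most $1/n$, and since there are $n$ arcs and $\norm{g}_\infty\le 1$ for $g\in B^*$, the cumulative error is
\[
\left|\int g\, d\widetilde{\param}_n-\int g\, d\widetilde{\widetilde{\param}}_n\right|\le\sum_{i=1}^n\big|\param(A_i)-k_i/n\big|\le n\cdot\frac{1}{n}=1,
\]
which does not shrink as $n\to\infty$. Your displayed bound $\sum_i(1/n)\norm{g}_\infty\le 1$ is correct as far as it goes, but the extra "times $1/n$" that would bring the cost down to $1/n$ is not justified by anything you wrote; the quantity genuinely is of order $1$. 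So what you have constructed is a finite $(2\pi/n+1)$-net, not a $(2\pi+1)/n$-net, and the argument does not establish total boundedness. The fix is exactly the one the paper uses: round each mass to the nearest multiple of $1/n^2$ rather than $1/n$, so that the per-arc error is at most $1/n^2$ and the total rounding error over $n$ arcs is at most $1/n$. The discrete coefficient set $\{k/n^2:0\le k\le n^2,\ \sum_j k_j\le n^2\}$ is still finite, and together with your first step this gives a genuine $O(1/n)$-net.
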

\proofthm{Proposition \ref{prop:psetcompact}}{The proof of the Cauchy-completeness is almost immediate, we treat it first. Consider a Cauchy sequence $(\param_k)_{k\in\N}\in\pset^{\N}$, then by definition of $\normm{\cdot}$, for any $g\in B^*$, the sequence $(\int_{\ctoruspi}g(\theta)\param_k(d\theta))_{k}$ is a real Cauchy sequence and therefore converges, and we can let \[\int_{\ctoruspi}g(\theta)\param^*(d\theta)=\lim_{k\to\infty}\int_{\ctoruspi}g(\theta)\param_k(d\theta).\]
This definition can be extended to any $C^1(\ctoruspi)$ function $g$ by letting 
\[\int_{\ctoruspi}g(\theta)\param^*(d\theta)=\max(\norm{g}_{\infty},\norm{g'}_{\infty})\lim_{k\to\infty}\int_{\ctoruspi}\frac{g(\theta)}{\max(\norm{g}_{\infty},\norm{g'}_{\infty})}\param_k(d\theta).\]
This defines a measure $\param^*$ on $\ctoruspi$, whose total mass is given by \[\int_{\ctorus}\param^*(d\theta)=\lim_{k\to\infty}\int_{\ctorus}\param_k(d\theta)\in [0,1],\]
which proves the Cauchy completeness of $(\pset, \normm{\cdot})$.

We now prove that $(\pset,\normm{\cdot})$ is totally bounded. For any integer $n$, we are going to construct a \emph{finite set} $\mathcal{M}_{1,n}\subset \pset$ such that \[\sup_{\param\in \pset}\inf_{\param'\in \mathcal{M}_{1,n}}\normm{\param- \param'}\leq \frac{1}{n}.\]
For any $n\in \N$ and any $j\in  \llbracket 0,n-1\rrbracket$, we shorten $\theta_{j,n}=2\pi j/n$, and $\theta_{n,n}=\theta_{0,n}=0$. We can now define 
\[\mathcal{M}_{1,n}=\left\{\sum_{j=0}^{n-1} \frac{k_{j}}{n^2}\delta_{\theta_{j,n}} \quad \left| \quad k_j\in \llbracket 0,n^2\rrbracket, \quad \sum_{j}k_j\leq n^2\right.\right\}.\]
The inclusion $\mathcal{M}_{1,n}\subset \pset$ is trivial thanks to the condition $\sum_{j}k_j\leq n^2$, and $\mathcal{M}_{1,n}$ is finite since the $k_j$'s can each take only a finite number of values. we now prove that any $\param\in\pset$ is at distance at most $1/n$ of an element $\param_n\in \mathcal{M}_{1,n}$. 

Fix $\param\in\pset$, and let \[k_j=\lfloor n^2\param([\theta_{j,n},\theta_{j+1,n}[)\rfloor.\]
Since $\param\in \pset$, its total mass is in $[0,1]$, and the conditions $k_j\in\llbracket 0,n^2\rrbracket $ and $\sum_j k_j\leq n^2$ are trivially verified. We now let \[\param_n=\sum_{j=0}^{n-1} \frac{k_{j}}{n^2}\delta_{\theta_{j,n}},\]
and prove that $\normm{\param-\param_n}\leq 2/n$. Fix a function $g\in C^1(\ctoruspi)$ such that  $\max(\norm{g}_{\infty},\norm{g'}_{\infty})\leq 1$, we can write 
\begin{align*}\int_{\ctoruspi}g(\theta)(\param-\param_n)(d\theta)=&\sum_{j=0}^{n-1} {\int_{[\theta_{j,n}\theta_{j+1,n}[}}g(\theta)\param(d\theta)- \frac{k_{j}}{n^2}g(\theta_{j,n})\\
=&\sum_{j=0}^{n-1} \param([\theta_{j,n},\theta_{j+1,n}[)g(\theta_{j,n})- \frac{k_{j}}{n^2}g(\theta_{j,n})+\sum_{j=0}^{n-1} {\int_{[\theta_{j,n}\theta_{j+1,n}[}}(g(\theta)-g(\theta_{j,n}))\param(d\theta)\\
\leq &\sum_{j=0}^{n-1}\norm{g}_{\infty}\underset{\leq 1/n^2}{\underbrace{\abs{\param([\theta_{j,n},\theta_{j+1,n}[)- \frac{k_j}{n^2}}}}+\sum_{j=0}^{n-1}\norm{g'}_{\infty}\underset{\leq 1/n}{\underbrace{\abs{\theta_{j+1,n}-\theta_{j+1,n}}}} {\int_{[\theta_{j,n}\theta_{j+1,n}[}}\param(d\theta)\\
\leq& \frac{\norm{g}_{\infty}+\norm{g'}_{\infty}}{n}\leq 2/n
.\end{align*}
Finally, we have proved that \[\normm{\param-\param_n}\leq 2/n,\]
which proves that $\pset$ is totally bounded. This, together with the Cauchy completeness, immediately yields the compactness, and concludes the proof of Proposition \ref{prop:psetcompact}.

}

\backmatter
\printindex

\bibliographystyle{plain}

\bibliography{Bibliographie}

\begin{thebibliography}{10}

\bibitem{Amann1993}
H.~Amann.
\newblock Nonhomogeneous linear and quasilinear elliptic and parabolic boundary
  value problems.
\newblock {\em Function Spaces, Differential Operators and Nonlinear Analysis},
  pages 9--126, 1993.

\bibitem{Berg2004}
H.C. Berg.
\newblock {\em E. coli in Motion}.
\newblock Biological and Medical Physics, Biomedical Engineering. Springer,
  2004.

\bibitem{BillingsleyB1999}
P.~Billingsley.
\newblock {\em Convergence of probability measures}.
\newblock Wiley series in Probability and Statistics: Probability and
  Statistics. John Wiley \& Sons Inc., New York, second edition, 1999.

\bibitem{BCC2010}
F.~{Bolley}, J.~A. {Ca{\~n}izo}, and J.~A. {Carrillo}.
\newblock Stochastic mean-field limit: Non-lipschitz forces and swarming.
\newblock {\em Mathematical Models and Methods in Applied Sciences}, 21,
  September 2010.

\bibitem{BCC2011}
F.~{Bolley}, J.~A. {Ca{\~n}izo}, and J.~A. {Carrillo}.
\newblock Mean-field limit for the stochastic vicsek model.
\newblock {\em Applied Mathematics Letters}, 25:339--343, 2011.

\bibitem{BrezisB2010}
H.~Brezis.
\newblock {\em Functional Analysis, Sobolev Spaces and Partial Differential
  Equations}.
\newblock Springer, 1st edition, November 2010.

\bibitem{CDW2011}
E.~{Carlen}, P.~{Degond}, and B.~{Wennberg}.
\newblock Kinetic limits for pair-interaction driven master equations and
  biological swarm models.
\newblock {\em Mathematical Models and Methods in Applied Sciences},
  23:1339--1376, 2013.

\bibitem{CHM2014}
J.~A. {Carrillo}, Y.~{Huang}, and S.~{Martin}.
\newblock Explicit flock solutions for quasi-morse potentials.
\newblock {\em European Journal of Applied Mathematics}, 25:553 -- 578, October
  2014.

\bibitem{CT2008}
M.~E. {Cates} and J.~{Tailleur}.
\newblock Statistical mechanics of interacting run-and-tumble bacteria.
\newblock {\em Physical Review Letters}, 100, May 2008.

\bibitem{CT2013}
M.~E. {Cates} and J.~{Tailleur}.
\newblock When are active brownian particles and run-and-tumble particles
  equivalent? consequences for motility-induced phase separation.
\newblock {\em EPL (Europhysics Letters)}, 101, January 2013.

\bibitem{CT2015}
M.~E. {Cates} and J.~{Tailleur}.
\newblock Motility-induced phase separation.
\newblock {\em Annual Review of Condensed Matter Physics}, 6:219--244, March
  2015.

\bibitem{Chate2004}
H.~{Chat{\'e}}.
\newblock Onset of collective and cohesive motion.
\newblock {\em Physical review letter}, (92), January 2004.

\bibitem{DMFL1986}
A.~De~Masi, P.~A. Ferrari, and J.~L. Lebowitz.
\newblock Reaction-diffusion equations for interacting particle systems.
\newblock {\em Journal of Statistical Physics}, 44, January 1986.

\bibitem{DMF2015}
Anna De~Masi and Pablo~A. Ferrari.
\newblock Separation versus diffusion in a two species system.
\newblock {\em Brazilian Journal of Probability and Statistics},
  29(2):387--412, May 2015.

\bibitem{DFL2011}
P.~{Degond}, A.~{Frouvelle}, and J.-G. {Liu}.
\newblock Macroscopic limits and phase transition in a system of self-propelled
  particles.
\newblock {\em Journal of Nonlinear Science}, 23(3):427--456, 2013.

\bibitem{DFL2014}
P.~{Degond}, A.~{Frouvelle}, and J.-G. {Liu}.
\newblock Phase transitions, hysteresis, and hyperbolicity for self-organized
  alignment dynamics.
\newblock {\em Archive for Rational Mechanics and Analysis}, 2014.

\bibitem{DLM2011}
P.~{Degond}, J.-G. {Liu}, S.~{Motsch}, and V.~{Panferov}.
\newblock Hydrodynamic models of self-organized dynamics: derivation and
  existence theory.
\newblock {\em Methods and Applications of Analysis}, 20:089--114, 2013.

\bibitem{DM2007}
P.~{Degond} and S.~{Motsch}.
\newblock Continuum limit of self-driven particles with orientation
  interaction.
\newblock {\em Mathematical Models and Methods in Applied Sciences}, 18,
  October 2008.

\bibitem{DY2010}
P.~Degond and T.~Yang.
\newblock Diffusion in a continuum model of self-propelled particles with
  alignment interaction.
\newblock {\em Mathematical Models and Methods in Applied Sciences},
  20:1459--1490, February 2010.

\bibitem{DSZ2017}
Nicolas Dirr, Marios~G. Stamatakis, and Johannes Zimmer.
\newblock Hydrodynamic limit of condensing two-species zero range processes
  with sub-critical initial profiles.
\newblock {\em Journal of Statistical Physics}, 168(4):794--825, Aug 2017.

\bibitem{FM2012}
Y.~Fily and C.~Marchetti.
\newblock Athermal phase separation of self-propelled particles with no
  alignment.
\newblock {\em Physical Review Letters}, 108, June 2012.

\bibitem{Frouvelle2011}
A.~{Frouvelle}.
\newblock A continuum model for alignment of self-propelled particles with
  anisotropy and density-dependent parameters.
\newblock {\em Mathematical Models and Methods in Applied Sciences}, 22(7),
  December 2012.

\bibitem{GRR1978}
A.M. {Garcia}, E.~{Rodemich}, and H.~{Rumsey}.
\newblock A real variable lemma and the continuity of paths of some gaussian
  processes.
\newblock {\em Indiana Math. J.20, 565-578.}, 1978.

\bibitem{GLM2000}
Giambattista Giacomin, Joel~L. Lebowitz, and Rossana Marra.
\newblock Macroscopic evolution of particle systems with short and long range
  interactions.
\newblock {\em Nonlinearity}, 13(6), 2000.

\bibitem{GC2004}
G.~{Gr{\'e}goire} and H.~{Chat{\'e}}.
\newblock Onset of collective and cohesive motion.
\newblock {\em Physical Review Letters}, 92(2):025702, January 2004.

\bibitem{JS1879}
Wm.W. Johnson and W.E. Story.
\newblock notes on the "15" puzzle.
\newblock {\em American Journal of Mathematics}, 2(4):pp. 397--404, December
  1879.

\bibitem{KLB1999}
C.~Kipnis and C.~Landim.
\newblock {\em Scaling limits of interacting particle systems}, volume 320 of
  {\em Grundlehren der Mathematischen Wissenschaften [Fundamental Principles of
  Mathematical Sciences]}.
\newblock Springer-Verlag, Berlin, 1999.

\bibitem{KV1986}
C.~Kipnis and S.R.S. Varadhan.
\newblock Central limit theorem for additive functionals of reversible markov
  processes and applications to simple exclusions.
\newblock {\em Communications in Mathematical Physics}, 104(1):1--19, 1986.

\bibitem{KLOB2012}
T.~Komorowski, C.~Landim, and S.~Olla.
\newblock {\em Fluctuations in Markov Processes. Time Symetry and Martingale
  Approximation}.
\newblock Springer, Berlin, 2012.

\bibitem{BEKH2017}
M.~{Kourbane-Houssene}, C.~{Erignoux}, T.~{Bodineau}, and J.~{Tailleur}.
\newblock {Exact Hydrodynamic Description of Active Lattice Gases}.
\newblock {\em ArXiv e-prints}, January 2018.

\bibitem{LOV2001}
C.~Landim, S.~Olla, and S.R.S. Varadhan.
\newblock Symmetric simple exclusion process: Regularity of the self-diffusion
  coefficient.
\newblock {\em Communications in Mathematical Physics}, 224(1):307--321, 2001.

\bibitem{DMFG1989}
A.~Masi, P.A. Ferrari, S.~Goldstein, and W.D. Wick.
\newblock An invariance principle for reversible markov processes. applications
  to random motions in random environments.
\newblock {\em Journal of Statistical Physics}, 55(3-4):787--855, 1989.

\bibitem{OLB2001}
A.~Okubo and S.A. Levin.
\newblock {\em Diffusion and ecological problems : modern perspectives}.
\newblock Interdisciplinary applied mathematics. Springer, New York, 2001.

\bibitem{PEK1999}
J.~K. {Parrish} and L.~{Edelstein-Keshet}.
\newblock Complexity, pattern, and evolutionary trade-offs in animal
  aggregation.
\newblock {\em Science}, 284:99, April 1999.

\bibitem{Quastel1992}
J.~Quastel.
\newblock Diffusion of colour in the simple exclusion process.
\newblock {\em Communications on Pure and Applied Mathematics}, 45:623--679,
  1992.

\bibitem{QRV1999}
J.~Quastel, F.~Rezakhanlou, and S.~R.~S. Varadhan.
\newblock Large deviations for the symmetric simple exclusion process in
  dimensions d.
\newblock {\em Probability Theory and Related Fields}, 113(1):1--84, 1999.

\bibitem{Sasada2010}
M.~{Sasada}.
\newblock Hydrodynamic limit for exclusion processes with velocities.
\newblock {\em Markov Processes and Related Rields}, October 2010.

\bibitem{Sasada2009}
M.~{Sasada}.
\newblock Hydrodynamic limit for two-species exclusion processes.
\newblock {\em Stochastic Processes and their Application}, 120(4):494--521,
  April 2010.

\bibitem{Simas2010}
A.B. {Simas}.
\newblock Hydrodynamic limit for a boundary driven stochastic lattice gas model
  with many conserved quantities.
\newblock {\em Journal of Statistical Physics}, 139(2):219--251, 2010.

\bibitem{SCT2015}
A.~P. {Solon}, M.~E. {Cates}, and J.~{Tailleur}.
\newblock Active brownian particles and run-and-tumble particles: A comparative
  study.
\newblock {\em European Physical Journal Special Topics}, 224:1231, July 2015.

\bibitem{SCB2015}
A.~P. {Solon}, J.-B. {Caussin}, D.~{Bartolo}, H.~{Chat{\'e}}, and
  J.~{Tailleur}.
\newblock Pattern formation in flocking models: A hydrodynamic description.
\newblock {\em Physical Review E, In press}, September 2015.

\bibitem{SChT2015}
A.~P. {Solon}, H.~{Chat\'e}, and J.~{Tailleur}.
\newblock From phase to microphase separation in flocking models : the
  essential role of nonequilibrium fluctuations.
\newblock {\em Physics Review E}, 92, 2015.

\bibitem{ST2015}
A.~P. {Solon} and J.~{Tailleur}.
\newblock {Flocking with discrete symmetry: The two-dimensional active Ising
  model}.
\newblock {\em Physical Review E}, 92(4):042119, October 2015.

\bibitem{Spohn1990}
H.~Spohn.
\newblock Tracer diffusion in lattice gases.
\newblock {\em Journal of Statistical Physics}, 59(5-6):1227--1239, 1990.

\bibitem{SpohnB1991}
H.~Spohn.
\newblock {\em Large scale dynamics of interacting particles}, volume 825.
\newblock Springer Berlin, 1991.

\bibitem{TTC2011}
A.~G. {Thompson}, J.~{Tailleur}, M.~E. {Cates}, and R.~A. {Blythe}.
\newblock Lattice models of nonequilibrium bacterial dynamics.
\newblock {\em Journal of Statistical Mechanics: Theory and Experiment}, 2:29,
  February 2011.

\bibitem{TT1995}
J.~{Toner} and Y.~{Tu}.
\newblock Long-range order in a two-dimensional dynamical {\sc{xy}} model : how
  birds fly together.
\newblock {\em Physical Review Letters}, 75, December 1995.

\bibitem{Varadhan1994b}
S.~R.~S. Varadhan.
\newblock non-linear diffusion limit for a system with nearest-neighbor
  interactions ii.
\newblock In {\em Asymptotic problems in probability theory : stochastic models
  and diffusion on fractals}, number 283 in Pitman Research Notes in
  Mathematics, pages 75--128. Springer-Verlag, 1994.

\bibitem{Varadhan1994}
S.~R.~S. Varadhan.
\newblock Regularity of self-diffusion coefficient.
\newblock In {\em The {D}ynkin {F}estschrift: {M}arkov processes and their
  applications}, number~34 in Progress in Probability, pages 387--397.
  Springer-Verlag, 1994.

\bibitem{Vicsek1995}
T.~{Vicsek}, A.~{Czir{\'o}k}, E.~{Ben-Jacob}, I.~{Cohen}, and O.~{Shochet}.
\newblock Novel type of phase transition in a system of self-driven particles.
\newblock {\em Physical Review Letters}, 75:1226--1229, August 1995.

\bibitem{VZ2012}
T.~Vicsek and A.~Zafeiris.
\newblock Collective motion.
\newblock {\em Physics Reports}, 517(3):71--140, 2012.

\end{thebibliography}

\end{document}